\numberwithin{equation}{section}
\definecolor{mapleinput}{rgb}{0.5,0.0,0.0}
\definecolor{maplemath}{rgb}{0.0,0.0,1.0}
\definecolor{maplewarning}{cmyk}{0.0,1.0,.0.0,0.0}
\ttfamily\color{mapleinput},
\ttfamily\color{mapleinput},
\ttfamily\color{maplewarning},
\ttfamily\color{maplemath},
\newenvironment{MapleMath}{%
\color{maplemath}\upshape\rmfamily%
\setlength{\abovedisplayskip}{0ex}%
\setlength{\abovedisplayshortskip}{\abovedisplayskip}%
\setlength{\belowdisplayskip}{\medskipamount}%
\setlength{\belowdisplayshortskip}{0ex}%
\csname gather*\endcsname}{\csname endgather*\endcsname%
{\hrule height 0pt}%
\ignorespacesafterend}
\newcommand{\Hlog}[2]{\operatorname{Hlog}\left( #1, \left[ #2 \right] \right)}
\newcommand{\Mpl}[2]{\operatorname{Mpl}\left( \left[ #1 \right], \left[ #2 \right] \right)}
\newcommand{\K}{
\mathbbm{K}
}
\newcommand{\Q}{
\mathbbm{Q}
}
\newcommand{\N}{
\mathbbm{N}
}
\newcommand{\OddN}{
\mathbbm{O}
}
\newcommand{\Z}{
\mathbbm{Z}
}
\newcommand{\R}{
\mathbbm{R}
}
\newcommand{\C}{
\mathbbm{C}
}
\newcommand{\F}{
\mathbbm{F}
}
\newcommand{\Halfplane}{
\mathbbm{H}%
}
\newcommand{\RP}{
\mathbbm{RP}%
}
\newcommand{\CP}{
\mathbbm{CP}%
}
\newcommand{\Projective}{
\mathbbm{P}%
}
\newcommand{\prdu}[1]{#1'}
\newcommand{\Kilo}{\,\mathrm{k}}
\newcommand{\Mega}{\,\mathrm{M}}
\newcommand{\homotop}{\simeq}
\newcommand{\SymDiff}{\bigtriangleup}
\newcommand{\Transpose}{\intercal}
\newcommand{\Catalan}{G}
\newcommand{\RSphere}{\widehat{\C}}
\newcommand{\tp}{
\otimes
}
\newcommand{\imag}{i}
\newlength{\tempLength}
\newlength{\wurelwidth}
\newcommand{\urel}[2][=]{\mathrel{\mathop{#1}\limits_{\makebox[2.8ex]{\clap{\scalebox{0.5}{#2}}}}}}
\newcommand{\wurel}[2][=]{\mathrel{\mathop{#1}\limits_{\!\scalebox{0.5}{\makebox[\the\wurelwidth]{#2}}\!}}}
\newcommand{\conjugate}[1]{#1^{\ast}}
\newcommand{\bigo}[1]{\mathcal{O}\left(#1\right)}
\newcommand{\dd}[1][]{\mathrm{d}^{#1}}
\newcommand{\cupdot}{\mathbin{\dot{\cup}}}
\newcommand{\bigcupdot}{\mathbin{\dot{\bigcup}}}
\newcommand{\restrict}[2]{%
{\left. #1 \right|}_{#2}%
}
\DeclareMathOperator{\dist}{dist}
\DeclareMathOperator*{\Res}{Res}
\DeclareMathOperator{\lin}{lin}
\DeclareMathOperator{\Aut}{Aut}
\DeclareMathOperator{\im}{im}
\DeclareMathOperator{\Imaginaerteil}{Im}
\DeclareMathOperator{\Realteil}{Re}
\DeclareMathOperator{\sgn}{sgn}
\DeclareMathOperator{\supp}{supp}
\newcommand{\id}{
\mathrm{id}
}
\newcommand{\1}{
\mathbbm{1}
}
\newcommand{\isomorph}{
\cong
}
\newcommand{\defas}{
\mathrel{\mathop:}=
}
\newcommand{\safed}{
=\mathrel{\mathop:}
}
\newcommand{\gdw}{
\ensuremath{\Leftrightarrow}
}
\newcommand{\set}[1]{
\left\{ #1 \right\}
}
\newcommand{\setexp}[2]{%
\left\{ #1\!:\ #2 \right\}
}
\newcommand{\tsetexp}[2]{%
\big\{ #1\!:\ #2 \big\}
}
\newcommand{\abs}[1]{%
\left\lvert #1 \right\rvert
}
\newcommand{\norm}[1]{%
\left\lVert #1 \right\rVert
}
\newcommand{\gf}[1][]{\hyperref[eq:position-space-graphical-function]{f_{#1}}}
\newcommand{\Gscheme}{G_0}
\newcommand{\source}{\partial^{-}}
\newcommand{\target}{\partial^{+}}
\newcommand{\posprop}[1][]{\Delta\ifthenelse{\equal{#1}{}}{}{^{(#1)}}}
\newcommand{\FR}{\hyperref[eq:feynman-integral-momentum]{\Phi}}
\newcommand{\onemaster}[2]{\hyperref[eq:oneloop-master]{L\left( #1, #2 \right)}}
\newcommand{\FRren}{\Phi_+}
\newcommand{\FRdim}{\Phi_{\varepsilon}}
\newcommand{\scalelog}{\ell}
\newcommand{\period}{\hyperref[def:period]{\mathcal{P}}}
\newcommand{\FeynHopf}{%
\mathcal{H}%
}
\newcommand{\FeynGraphs}{%
\mathcal{G}%
}
\newcommand{\ExMom}{p}
\newcommand{\contract}{\mathop{/}}
\newcommand{\forests}{\mathcal{F}}
\newcommand{\convolution}{\star}
\newcommand{\counit}{\varepsilon}
\newcommand{\antipode}{S}
\newcommand{\cop}[1][]{\Delta\ifthenelse{\equal{#1}{}}{}{^{(#1)}}}
\newcommand{\copred}[1][]{\widetilde{\Delta}\ifthenelse{\equal{#1}{}}{}{^{(#1)}}}
\newcommand{\alg}[1][A]{\mathcal{#1}}
\newcommand{\hide}[1]{}
\newcommand{\dimension}{D}
\newcommand{\fieldphi}{\phi}
\newcommand{\sdd}{\omega}
\newcommand{\loops}[1]{\hyperref[eq:loop-number]{h_1}\!\left( #1 \right)}
\newcommand{\vertices}{V}
\newcommand{\actives}{A} 
\newcommand{\comps}{\pi_0}
\newcommand{\edges}{E}
\newcommand{\Kinematics}{\hyperlink{eqkinematics}{\Theta}}
\newcommand{\ScaleVec}{\varrho}
\newcommand{\Tint}{\text{\upshape{int}}}
\newcommand{\Text}{\text{\upshape{ext}}}
\newcommand{\Graph}[2][1.0]{%
\vcenter{\hbox{\includegraphics[scale=#1]{Graphs/#2}}}%
}
\newcommand{\CloseProp}[1]{\widehat{#1}}
\newcommand{\GComp}[1]{\widehat{#1}}
\DeclareMathOperator{\vw}{\hyperref[def:vw]{vw}}
\newcommand{\vconn}{\kappa}
\newcommand{\LI}{\mu}
\newcommand{\AUX}{\xi}
\newcommand{\SP}{\alpha}
\newcommand{\EP}{a}
	\newcommand{\EPZ}{n}
	\newcommand{\EPE}{\nu}
\newcommand{\IM}{
\newcommand{\LM}{\hyperref[eq:laplace-matrix]{\mathcal{L}}}
\newcommand{\GM}{\hyperref[eq:graph-matrix]{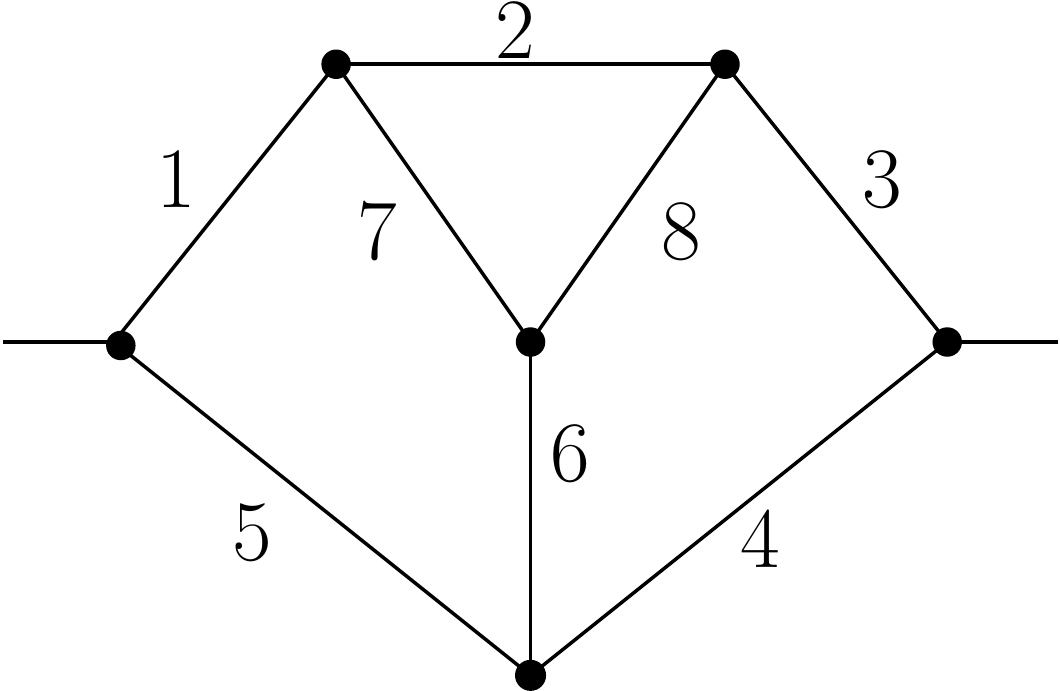}}
\newcommand{\DM}{\hyperref[eq:graph-matrix]{\Lambda}}
\DeclareMathOperator{\diag}{diag}

\newcommand{\ZZ}[1]{\hyperref[fig:zigzags]{\mathord{\text{\upshape{ZZ}}}_{#1}}}
\newcommand{\WS}[1]{\mathord{\text{\upshape{WS}}}_{#1}}

\newcommand{\BBr}[2]{\hyperref[fig:bubble-graphs-def]{B_{#1,#2}}}
\newcommand{\BBt}[2]{\hyperref[fig:bubble-graphs-def]{\hat{B}_{#1,#2}}}
\newcommand{\smallbubble}{\Graph[0.2]{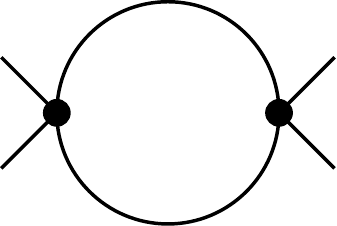}}

\newcommand{\anapartial}[1]{\mathcal{D}_{#1}}
\newcommand{\Pscale}{\lambda}

\newcommand{\phipsi}{\raisebox{1mm}{\scalebox{0.7}{{\raisebox{0.7mm}{$\varphi$}\hspace{-3.15mm}\rotatebox{180}{$\psi$}}}}}
\newcommand{\phipol}{
\newcommand{\psipol}{\psi}
\newcommand{\dodgson}{\hyperref[def:dodgson]{\Psi}}
\newcommand{\fiveinv}[1]{{^{5}\Psi\left(#1\right)}}
\newcommand{\forestpolynom}[2][]{\hyperref[eq:def-forestpolynom]{\Phi_{#1}^{#2}}}
\newcommand{\forestpolynomDual}[2][]{\hyperref[eq:def-forestpolynom]{\widehat{\Phi}_{#1}^{#2}}}

\newcommand{\minoreq}{\preceq}

\newcommand{\subdiv}{\prec}
\newcommand{\subdiveq}{\preceq}
\newcommand{\supdiv}{\succ}
\newcommand{\supdiveq}{\succeq}
\newcommand{\nosubdiv}{\nprec}
\newcommand{\nosubdiveq}{\npreceq}
\newcommand{\nosupdiv}{\nsucc}
\newcommand{\nosupdiveq}{\nsucceq}

\newcommand{\GfunStar}[1]{\hyperref[def:star-triangle-functions]{f^{\StarSymbol}_{#1}}}
\newcommand{\GfunTriangle}[1]{\hyperref[def:star-triangle-functions]{f^{\TriangleSymbol}_{#1}}}
\newcommand{\GfunForest}[1]{\hyperref[eq:def-GfunForest]{f^{\ForestDeltaSymbol}_{#1}}}
\newcommand{\SunrisePsi}{\hyperref[eq:SunrisePsi]{%
\psipol_{\includegraphics[width=1ex]{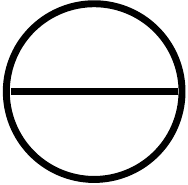}}%
}}
\newcommand{\BoxPoly}{\hyperref[eq:def-BoxPoly]{Q}}
\newcommand{\GfunForestBox}[1]{\hyperref[eq:def-GfunForestBox]{f^{\ForestDeltaBoxSymbol}_{#1}}}
\newcommand{\StarSymbol}{%
\includegraphics[scale=0.28]{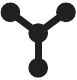}%
}
\newcommand{\TriangleSymbol}{%
\includegraphics[scale=0.28]{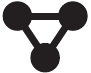}%
}
\newcommand{\ForestDeltaSymbol}{%
\includegraphics[scale=0.28]{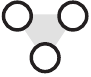}%
}
\newcommand{\ForestDeltaBoxSymbol}{%
\includegraphics[scale=0.28]{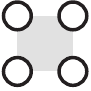}%
}


\newcommand{\Lyndons}{\hyperref[eq:def-lyndon-words]{\operatorname{Lyn}}}

\newcommand{\Hyper}[1]{\hyperref[def:Hlog]{L}_{#1}}
\newcommand{\HyperLappo}[3]{L_{#1}\left( #2 | #3 \right)}
\newcommand{\BlochWigner}{\hyperref[eq:BlochWigner]{D_2}}
\DeclareMathOperator{\Li}{\hyperref[eq:def-MPL-intro]{Li}}
\DeclareMathOperator{\Glaisher}{Gl}

\newcommand{\mzv}[2][]{\hyperref[eq:def-MZV-intro]{\zeta^{#1}_{#2}}}
\newcommand{\MZV}[1][]{\hyperref[def:MZV-N]{\mathcal{Z}\ifthenelse{\equal{#1}{}}{}{^{(#1)}}}}
\newcommand{\Deligne}{\hyperref[theorem:Deligne-N=6]{\mathcal{Z}^{(6)}_D}}

\newcommand{\Moduli}[2]{\hyperlink{eqmodulispace}{\mathfrak{M}_{#1,#2}}}

\newcommand{\WeightDepth}[2]{\mathcal{W}_{#1}^{#2}}

\newcommand{\concat}{\star}
\newcommand{\Shuffles}[2]{S_{#1,#2}}

\newcommand{\letter}[1]{\omega_{#1}}

\newcommand{\AnaReg}[2]{\hyperref[eq:def-reglim]{\Reglim_{#1 \rightarrow #2}}}
\newcommand{\WordReg}[2]{\hyperref[def:shuffle-regularization]{\WordReglim\nolimits_{#1}^{#2}}}
\newcommand{\ReglimWord}[2]{\hyperref[eq:def-reglim-word]{\WordReglim_{#1 \rightarrow #2}}}
\DeclareMathOperator*{\Reglim}{Reg}
\DeclareMathOperator*{\WordReglim}{reg}

\newcommand{\ProjectOn}[1]{P_{#1}}

\newcommand{\WordTransformation}[1]{\hyperref[eq:Moebius-transformation]{\Phi}_{#1}}
\newcommand{\HlogProjection}[1]{#1^{*}}
\newcommand{\emptyWord}{1}
\DeclareMathOperator{\leadCoeff}{\hyperref[def:letter-leading-coefficient]{lead}}
\DeclareMathOperator{\subleadCoeff}{slead}
\DeclareMathOperator{\sdeg}{sdeg}
\renewcommand{\deg}{

\newcommand{\Zsum}[3]{\hyperref[def:zsum]{Z}\left(#1;#2;#3\right)}


\newcommand{\Vanishing}{\mathbf{V}}
\newcommand{\Affine}{\mathbb{A}}
\newcommand{\regulars}{\mathcal{O}}

\newcommand{\BarObjects}{\hyperref[eq:def-bar-objects]{B}}
\newcommand{\BarIntegrals}[1][]{\hyperref[eq:def-barintegrals]{\mathscr{B}_{#1}}}
\newcommand{\BarIntegralsRegulars}[1][]{\hyperlink{inline:barregulars}{\mathscr{B}^{\regulars}_{#1}}}

\newcommand{\HlogAlgebra}{\hyperref[eq:def-HlogAlgebra]{\mathscr{L}}}

\newcommand{\ZerosWrt}[2]{\hyperref[eq:def-zeros-wrt]{\Sigma_{#2}\!\left(#1\right)}}
\newcommand{\resultant}[3]{\left[ #1, #2\right]_{#3}}
\newcommand{\discriminant}[1]{D_{#1}}
\newcommand{\degree}[2][]{\deg_{#1} #2}
\newcommand{\divides}{\ |\ }
\DeclareMathOperator{\rank}{rank}

\newcommand{\PointCount}[2]{%
\big[ X_{#1} \big]_{#2} %
}


\theoremstyle{plain}
\newtheorem{theorem}{Theorem}[section]
\newtheorem{lemma}[theorem]{Lemma}
\newtheorem{corollary}[theorem]{Corollary}
\newtheorem{proposition}[theorem]{Proposition}
\newtheorem{conjecture}[theorem]{Conjecture}

\theoremstyle{definition}
\newtheorem{example}[theorem]{Example}
\newtheorem{definition}[theorem]{Definition}

\theoremstyle{remark}
\newtheorem{remark}[theorem]{Remark}

\newcommand{\Filename}[1]{{\upshape\ttfamily #1}}
\newcommand{\Maple}{\texttt{\textup{Maple}}}
\newcommand{\JaxoDraw}{\texttt{\textup{JaxoDraw}}}
\newcommand{\Axodraw}{\texttt{\textup{Axodraw}}}
\newcommand{\zetaprocedures}{\texttt{\textup{zeta\_procedures}}}
\newcommand{\polylogprocedures}{\texttt{\textup{polylog\_procedures}}}
\newcommand{\HyperInt}{\texttt{\textup{HyperInt}}}

\newcommand{\code}[1]{\mbox{\texttt{\textup{#1}}}}

\newcommand{\MyTitle}{Feynman integrals and hyperlogarithms}
\title{\MyTitle}
\author{Erik Panzer}
\date{\today}

\hypersetup{%
	pdftitle = {\MyTitle},
	pdfauthor = {Erik Panzer}
}

\setcounter{secnumdepth}{2}

\begin{document}

\addtokomafont{chapterprefix}{\raggedleft}
\addtokomafont{chapter}{\rmfamily}
\renewcommand*{\chapterformat}{
	\mbox{\color{gray}\rmfamily\upshape\scalebox{2.0}{\chapappifchapterprefix{\nobreakspace}}%
	\scalebox{3.5}{\thechapter}\enskip}\\[-2pt]
	\noindent{\color{gray}%
\rule{0.3\textwidth}{2.4pt}%
\smash{\rlap{\rule{5pt}{10pt}}}}%
}

\pagenumbering{roman}
\thispagestyle{empty}
{\centering
{\Huge
	\MyTitle
}
\\[\baselineskip]
\rule{0.7\textwidth}{1.4pt}%
\\[\baselineskip]
\href{mailto:erikpanzer@ihes.fr}{Erik Panzer}, \today

}
\vspace{1.0cm}
\noindent This is an updated version of my PhD thesis in mathematics, which I defended at
Humboldt-Universit\"{a}t zu Berlin on the 5th of February, 2015. The referees were
\begin{itemize}
	\item Dr. David Broadhurst (Open University),
	\item Dr. Francis Brown (Institute des Hautes \'{E}tudes Scientifiques) and my supervisor
	\item Prof. Dr. Dirk Kreimer (Humboldt-Universit\"{a}t zu Berlin).
\end{itemize}
Comments are welcome.
\vspace{2mm}
\section*{Abstract}
We study Feynman integrals in the representation with Schwinger parameters and derive recursive integral formulas for massless $3$- and $4$-point functions. Properties of analytic (including dimensional) regularization are summarized and we prove that in the Euclidean region, each Feynman integral can be written as a linear combination of convergent Feynman integrals. This means that one can choose a basis of convergent master integrals and need not evaluate any divergent Feynman graph directly.

Secondly we give a self-contained account of hyperlogarithms and explain in detail the algorithms needed for their application to the evaluation of multivariate integrals. We define a new method to track singularities of such integrals and present a computer program that implements the integration method.

As our main result, we prove the existence of infinite families of massless $3$- and $4$-point graphs (including the ladder box graphs with arbitrary loop number and their minors) whose Feynman integrals can be expressed in terms of multiple polylogarithms, to all orders in the $\varepsilon$-expansion. These integrals can be computed effectively with the presented program.

We include interesting examples of explicit results for Feynman integrals with up to $6$ loops. In particular we present the first exactly computed counterterm in massless $\phi^4$ theory which is not a multiple zeta value, but a linear combination of multiple polylogarithms at primitive sixth roots of unity (and divided by $\sqrt{3}$). To this end we derive a parity result on the reducibility of the real- and imaginary parts of such numbers into products and terms of lower depth.

\raisebox{-1.8cm}[0mm][0mm]{\centering $\smash{\hspace{-6mm}%
	\Graph[0.33]{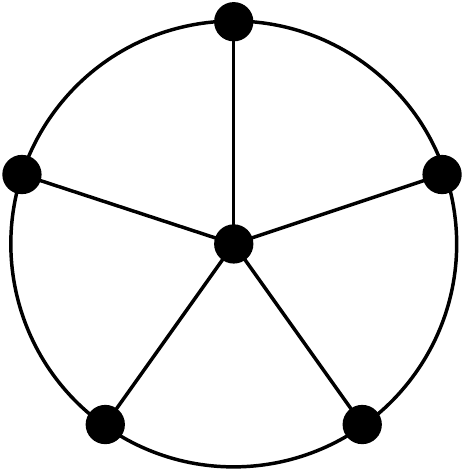}\qquad
	\Graph[0.4]{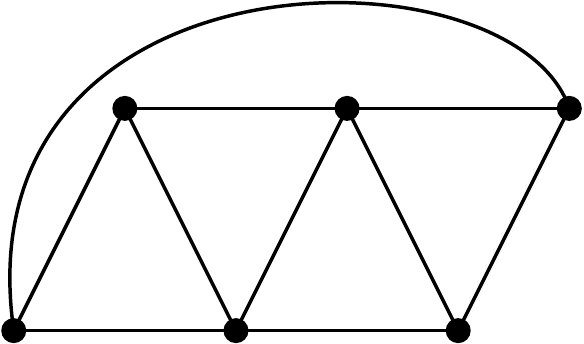}\qquad
	\Graph[0.4]{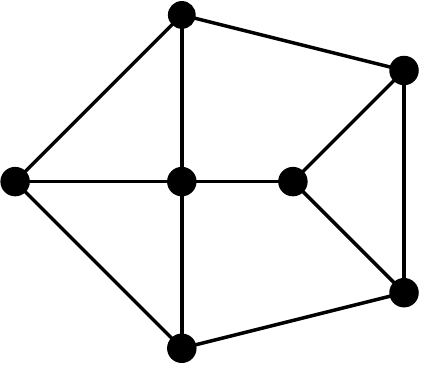}\qquad
	\Graph[0.37]{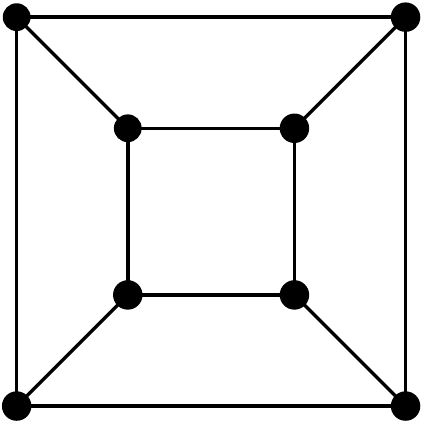}\qquad
	\Graph[0.4]{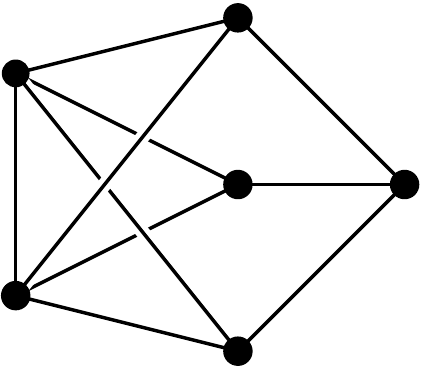}\qquad
	\Graph[0.35]{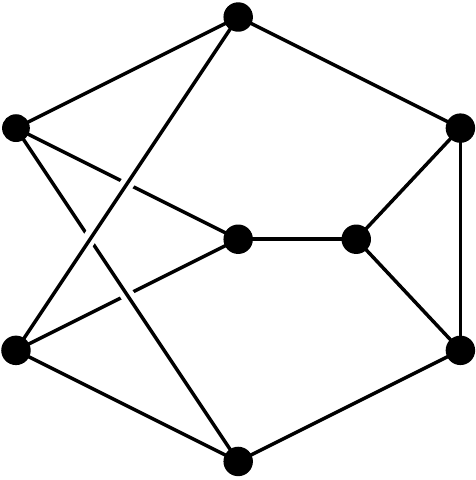}\qquad
	\Graph[0.38]{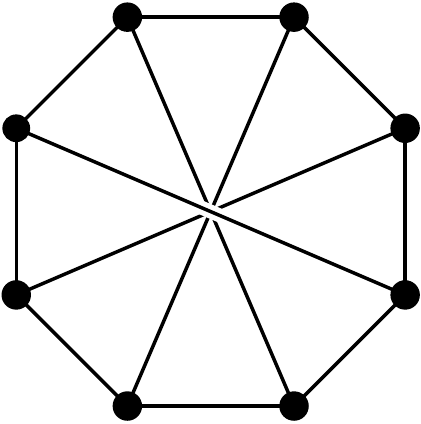}}$}

\clearpage
\thispagestyle{empty}
\begin{center}
\vspace*{\fill}
	\ \clap{$\Graph[2.6]{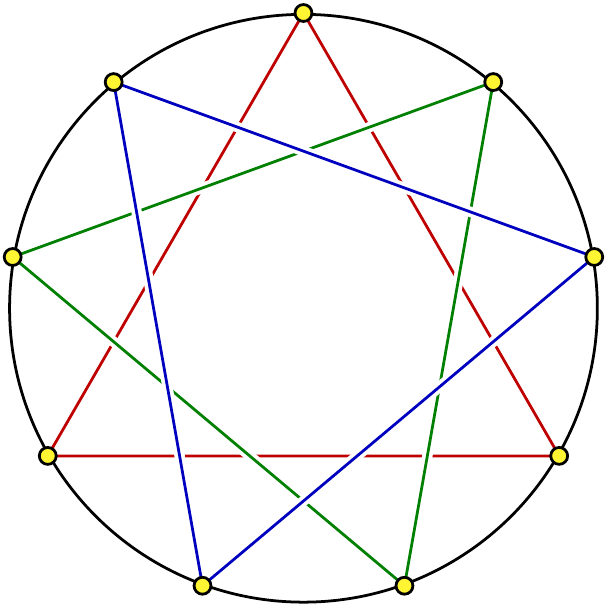}$}
	\clap{$
		\Graph[0.5]{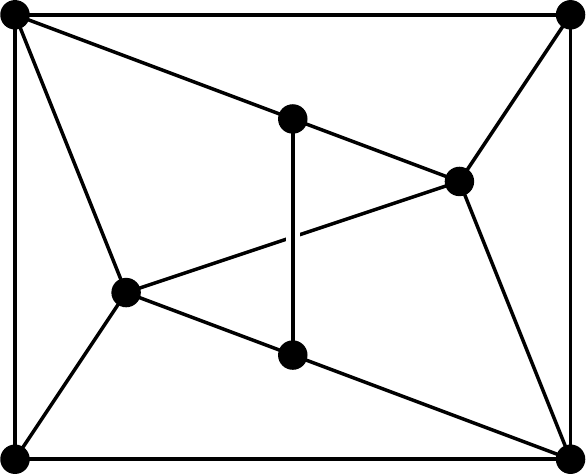}
	$} \ 
\vspace*{\fill}

\raisebox{-1.5cm}[0cm][0cm]{\color{gray}\scalebox{3}{$P_{7,11}$}}
\end{center}

\tableofcontents
\listoffigures

\chapter{Introduction}
\label{chap:intro}%
\pagenumbering{arabic}

\section{Motivation}
This thesis addresses a problem from physics: the computation of Feynman integrals.\footnote{These should not be confused with the completely different \emph{path integrals} that also go back to Feynman.}
These arise in perturbative quantum field theory as contributions to scattering amplitudes, which describe interactions of elementary particles and must be computed in order to predict the cross-sections that can be measured in experiments. Very high accuracies obtained for example at the Large Hadron Collider demand the evaluation of ever more Feynman integrals to assess the validity of the Standard Model.

Such calculations have reached an extreme level of complexity and constantly probe (often exceed) the very edge of knowledge of special functions, analytic methods, algebraic tools, algorithm design and available computational power.
Immense efforts are being invested to overcome these problems and led to impressive progress.
By now, the Feynman integral is appreciated as a rich mathematical structure that interrelates different disciplines such as algebraic geometry, complex analysis and number theory.

A striking feature of all known results for Feynman integrals is the prevalence of multiple polylogarithms and related periods like multiple zeta values, which raises
\begin{itemize}
	\item\textbf{Question 1.} Which Feynman integrals can be expressed in terms of multiple polylogarithms and their special values? How does this property relate to the combinatorial structure of the diagrams?
\end{itemize}
Apart from a huge number of explicit examples, only very little is known for this question on a conceptual level and it seems to be a hard one to answer. In practice however, the pure knowledge of a simple result is not enough, one must actually do the computation.
\begin{itemize}
	\item\textbf{Question 2.} If a Feynman integral does evaluate to multiple polylogarithms, how can it be computed explicitly?
\end{itemize}
In practice, calculations can involve thousands and even more individual Feynman integrals. It is therefore crucial to develop and provide efficient algorithms and programs to compute them in an automatized way.

\begin{figure}
	\centering
	$ \Graph[0.6]{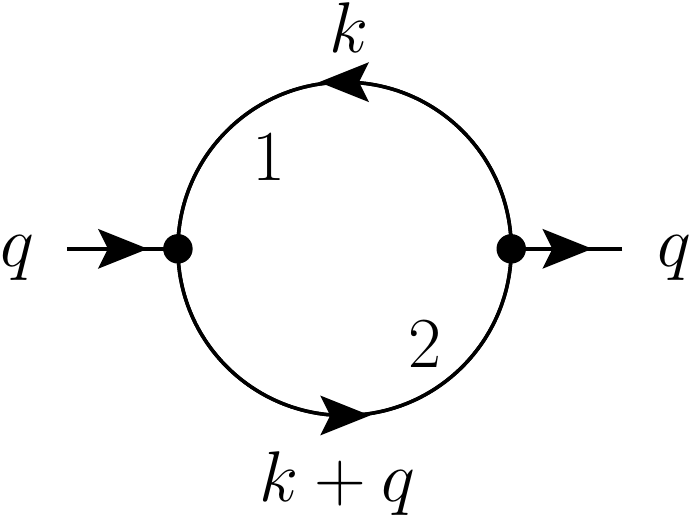} $
	\qquad
	$ F \defas \Graph[0.45]{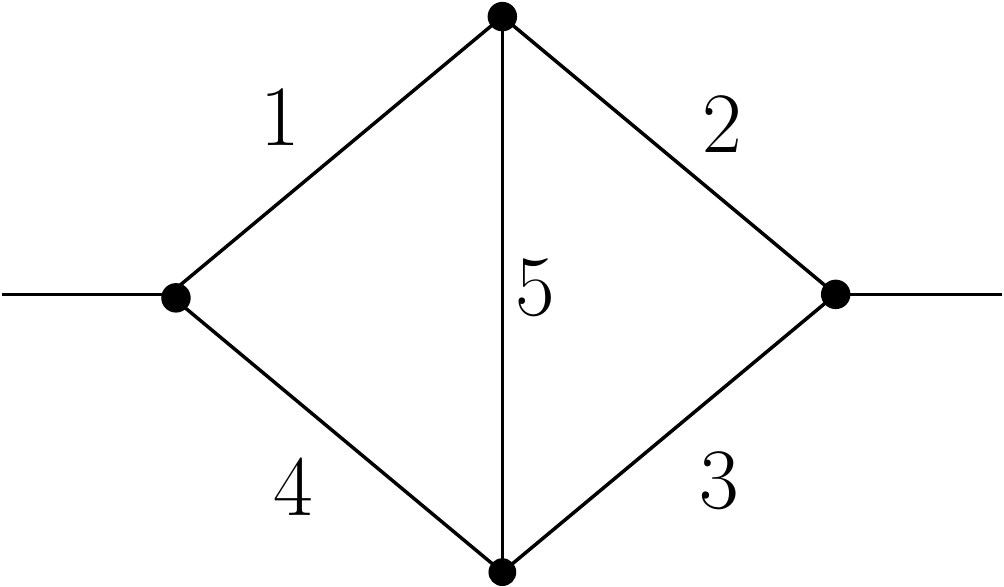} $
	\caption{Propagator diagrams with one and two loops.}%
	\label{fig:low-loop-masters}%
\end{figure}%
\section{Background}
\subsection{Feynman integrals}
Feynman introduced diagrams (graphs) as mnemonics for individual contributions to a perturbation series. Each of them corresponds to an integral determined by the Feynman rules $\FR$. For example, the scalar massless $1$-loop propagator of figure~\ref{fig:low-loop-masters} gives
\begin{equation*}
	\FR\left(\Graph[0.4]{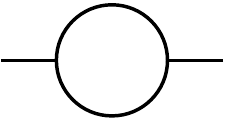}\right)
	= \int_{ \R^{\dimension}} \frac{\dd[\dimension] k}{\pi^{\dimension/2}}
	\frac{1}{k^{2\EP_1}(q+k)^{2\EP_2}}
\end{equation*}
in the \emph{momentum space representation} of $\dimension$-dimensional space-time and depends on exponents $\EP_i$ (called \emph{indices}) and an external momentum $q^2$. This function is just
\begin{equation}
	\onemaster{\EP_1}{\EP_2}
	\defas
	\FR\left(\Graph[0.4]{1master}\right)
	=
		q^{D-2\EP_1 - 2\EP_2}
		\frac{
			\Gamma\left( \frac{\dimension}{2} - \EP_1 \right) 
			\Gamma\left( \frac{\dimension}{2} - \EP_2 \right) 
			\Gamma\left( \EP_1+\EP_2 - \frac{\dimension}{2} \right)
		}{
			\Gamma(\EP_1)\Gamma(\EP_2)\Gamma(\dimension-\EP_1-\EP_2)
		},
	\label{eq:oneloop-master}%
\end{equation}
but for a generic diagram these integrals are exceedingly complicated and cannot be evaluated with elementary methods. They may depend on several external momenta and internal masses and it is not understood what kind of special functions and numbers can arise this way. Certainly this class is very rich, for it includes involved objects like elliptic polylogarithms \cite{BlochKerrVanhove:3loopSunrise,BlochVanhove:Sunset,AdamsBognerWeinzierl:SunriseELI} and special values of $L$-functions of modular forms \cite{Broadhurst:MZVModularQFT}. Also in massless integrals, counterexamples to polylogarithmic results have been identified \cite{BrownDoryn:FramingsForGraphHypersurfaces,BrownSchnetz:K3Phi4}, even in supersymmetric theories \cite{NandanPaulosSpradlinVolovich:StarIntegrals,CaronHuotLarsen:UniquenessTwoLoopMasterContours}.

In the following we devote our attention exclusively to the very simple class of Feynman integrals that evaluate to multiple polylogarithms.

\subsection{Multiple zeta values}
Riemann zeta values $\mzv{n}=\sum_{k=1}^{\infty} k^{-n}$ had already occurred in very early calculations in quantum electrodynamics \cite{Sommerfield:MagneticMoment} and are featured in almost every recent computation. An outstanding example is the series $\ZZ{n}$ of $n$-loop zigzag graphs (figure~\ref{fig:zigzags}), which have recently been proven to evaluate to a rational multiple of $\mzv{2n-3}$ \cite{BrownSchnetz:ZigZag}. This result was conjectured by Broadhurst and Kreimer, who identified \emph{multiple} zeta values (MZV)
\begin{equation}
	\mzv{n_1,\ldots,n_r}
	\defas \sum_{0<k_1<\cdots<k_r}
	\frac{1}{k_1^{n_1}\!\cdots k_r^{n_r}}
	\quad\text{where}\quad
	n_1,\ldots,n_r \in \N
	\quad\text{and}\quad
	n_r>1
	\label{eq:def-MZV-intro}%
\end{equation}%
\begin{figure}
	\centering
	$ \ZZ{3} = \Graph[0.5]{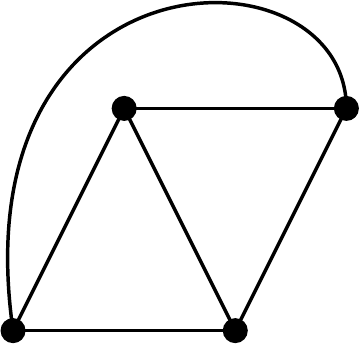} \qquad
	  \ZZ{4} = \Graph[0.5]{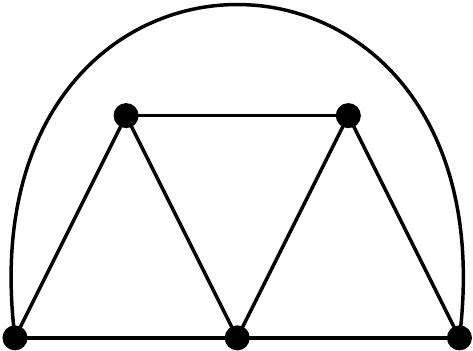} \qquad
	  \ZZ{5} = \Graph[0.5]{zz5} \qquad \cdots $
	  \caption{The zigzag-series of primitive $\phi^4$ graphs in $\dimension=4$ dimensions.}%
	  \label{fig:zigzags}%
\end{figure}%
in vacuum diagrams of $\phi^4$-theory \cite{BroadhurstKreimer:KnotsNumbers,BroadhurstKreimer:MZVPositiveKnots}. Cutting an edge maps such graphs to propagators like \eqref{eq:oneloop-master}, which have a trivial (power-like) dependence on a \emph{single scale} (the momentum). Their value at $q^2 \defas 1$ is often just called \emph{period}.
For many years it was unclear why MZV occur this way, until Francis Brown achieved a breakthrough \cite{Brown:PeriodsFeynmanIntegrals}. 
\begin{theorem}\label{theorem:vw3-MZV}
	If a graph $G$ has vertex-width $\vw(G) \leq 3$ at most three, then its periods are multiple zeta values.
\end{theorem}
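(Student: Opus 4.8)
The plan is to write the period as a projective Schwinger-parametric integral and to evaluate it one parameter at a time. For a primitive logarithmically divergent graph $G$ with $\loops{G}=\ell$ loops and $2\ell$ edges, the period is the convergent integral
\[
	\period(G) = \int_{\Delta} \frac{\Omega}{\dodgson(\alpha)^2},
\]
where $\dodgson$ is the first Symanzik (graph) polynomial, $\Delta=\set{\alpha_e\ge 0}$ is the coordinate simplex in $\Projective^{2\ell-1}$, and $\Omega$ is the canonical projective volume form. I would integrate out the parameters $\alpha_e$ in a carefully chosen order and show that at every stage the integrand stays inside the algebra of hyperlogarithms whose integration theory is developed in this thesis, so that the final output is an iterated integral with rational singularities.

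The algebraic engine is \emph{linear reducibility}. A hyperlogarithm in $\alpha_1,\dots,\alpha_n$ can be integrated over $\alpha_n\in[0,\infty)$ to give another hyperlogarithm in the remaining variables, provided every polynomial in its singular locus is at most \emph{linear} in $\alpha_n$; the integration then creates new singularities located at the leading coefficients, the constant terms, and the pairwise resultants $\resultant{f}{g}{\alpha_n}$ of those linear factors. I would therefore track the set of polynomials generated along the integration. The crucial point is that each such polynomial is a \emph{Dodgson polynomial} $\dodgson^{I,J}_K$, a minor of the graph matrix, and that these obey Dodgson/Pl\"{u}cker factorisation identities. These identities force the resultants to factor again into products of Dodgson polynomials, so the family of denominators stays closed under the procedure and keeps splitting into linear factors instead of degenerating into irreducible high-degree ones.

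The combinatorial heart is to connect this to the vertex-width. I would formalise the bookkeeping as a reduction algorithm that, for each edge ordering, records the polynomials surviving after the first $k$ integrations, and prove that these are exactly the Dodgson polynomials of the \emph{active subgraph} spanned by the processed edges, indexed by its boundary vertices. The number of boundary vertices is precisely what $\vw(G)$ measures: $\vw(G)\le 3$ guarantees an edge ordering for which every active subgraph carries at most three external vertices. For such $3$-point pieces the relevant Dodgson polynomials are genuinely linear in each remaining edge variable — this is the triangle/$3$-point analysis — so the linear-reducibility hypothesis survives every step and the integral collapses to a single iterated integral.

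Finally I would identify the resulting number. Because the denominators are built from the graph matrix, all singularities sit at rational points, and after the M\"{o}bius and coordinate normalisations used in the hyperlogarithm integration they reduce to the punctures $\set{0,1,\infty}$; the iterated integrals are thus periods of $\pi_1\!\left(\Projective^1\setminus\set{0,1,\infty}\right)$, equivalently of $\Moduli{0}{n}$, which are exactly the multiple zeta values. Hence $\period(G)$ lies in the $\Q$-span of multiple zeta values. The main obstacle is the middle step: deducing full linear reducibility from the bound $\vw(G)\le 3$, that is, proving the Dodgson polynomials produced at each integration continue to factor \emph{linearly}. Controlling that the degree in the next variable never grows past one — which is exactly where $\vw(G)\le 3$ is indispensable, since it keeps every active subgraph $3$-point-like — is the delicate combinatorial core on which the whole argument rests.
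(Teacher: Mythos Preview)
Your proposal sketches Brown's original strategy from \cite{Brown:PeriodsFeynmanIntegrals}: integrate the Schwinger parameters directly along a $\vw\le 3$ edge order, track the singular locus as a family of Dodgson polynomials $\dodgson^{I,J}_K$, and use the Dodgson/Jacobi determinant identities to maintain linear factorisation at each step. This is correct in outline and is indeed how the theorem was first proved. Two small imprecisions: the Dodgson polynomials are indexed by \emph{edge} subsets $I,J,K\subseteq\{e_1,\dots,e_k\}$ of the integrated parameters (not by boundary vertices as you write), and the conclusion that the final alphabet collapses to $\{0,1\}$ rather than something larger requires a separate argument about the signs of the monomial coefficients of $\psipol$, which you do not supply.

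The paper, however, gives a genuinely different and simpler proof via \emph{forest functions}. Rather than tracking Dodgson polynomials in all remaining Schwinger parameters, it changes variables to the three ratios $z_i=f_i/\psipol$ of spanning two-forest polynomials and works with a function $\GfunForest{G}(z_1,z_2,z_3)$ of only three variables throughout. A construction of $G$ with $\vw\le 3$ then becomes a sequence of one-variable recursions (lemmata~\ref{lemma:Gfun-Forest-vertex} and~\ref{lemma:Gfun-Forest-edge}), and the entire linear-reducibility check collapses to verifying that a single fixed compatibility graph $(S^{\StarSymbol},C^{\StarSymbol})$ on seven polynomials in $z_1,z_2,z_3$ is stable under these recursions (proposition~\ref{prop:GfunStarTriangle-reduction}). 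The MZV claim then follows because all monomials in $S^{\StarSymbol}$ have coefficient $+1$, forcing the letters to $\{0,-1\}$. This buys two things over the approach you outline: it avoids the intricate Dodgson combinatorics altogether, and it extends without modification to massless $3$-point functions with full off-shell kinematics (theorem~\ref{theorem:vw3-3pt}), whereas the direct Schwinger-parameter argument treats only the vacuum period.
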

Hitherto this statement has been the only supply of an infinite, non-trivial family of Feynman graphs proven to evaluate to MZV, without requiring special relations between $\dimension$ and the indices $\EP_e$.\footnote{For particular choices of these parameters, additional symmetries can become available and some infinite families of graphs have been computed this way, including \cite{DavydychevUssyukina:LadderDiagramsArbitraryRungs,Isaev:OperatorApproach}.}
A graph $G$ has vertex-width at most three if one can select three of its vertices, marked as \emph{external}, and repeatedly
\begin{itemize}
	\item remove edges between external vertices or
	\item delete an external vertex if it has only one neighbour (and select this neighbour as external)
\end{itemize}
such that finally all edges could be removed and only three (external) vertices remain. For example, the zigzags $\ZZ{n}$ belong to this family. So at least for these graphs, the appearance of MZV is understood. 

\subsection{Multiple polylogarithms}
To investigate scattering processes one must consider more complicated integrals that depend on more than one scale and thus develop a non-trivial dependence on these kinematic invariants.
It has been known for long that in four dimensions, all $1$-loop integrals \cite{tHooftVeltman:ScalarOneLoop} may be expressed in terms of logarithms and the dilogarithm of Euler \cite{Euler:MZV}. Since then, a plethora of exact results has been obtained in terms of more general multiple polylogarithms (MPL)
\begin{equation}
	\Li_{n_1,\ldots,n_r}(z_1,\ldots,z_r)
	\defas \sum_{0<k_1<\cdots<k_r}
	\frac{z_1^{k_1}\!\cdots z_r^{k_r}}{k_1^{n_1}\!\cdots k_r^{n_r}},\quad
	n_1,\ldots,n_r \in \N
	\label{eq:def-MPL-intro}%
\end{equation}%
of several variables \cite{Zagier:MZVApplications,Goncharov:MplCyclotomyModularComplexes}. Many different techniques have been developed to tackle such integrals and most of them are reviewed nicely in the book \cite{Smirnov:AnalyticToolsForFeynmanIntegrals}. Recently significant progress in the evaluation of Feynman integrals was achieved in particular using modern summation techniques \cite{AblingerBluemleinRaabSchneider:IteratedBinomialSums,AblingerBluemleinSchneider:GeneralizedHarmonicSumsAndPolylogarithms,MochUwerWeinzierl:NestedSums} and with considerable improvements of the method of differential equations \cite{HennSmirnov:PlanarThreeLoop,Henn:MultiloopSimple,ArgeriDiVitaMastroliaMirabellaSchlenkSchubertTancredi:MagnusDyson,GehrmannManteuffelTancrediWeihs:TwoLoopqqVV}.
In particular it became clear that even for intricate kinematics (depending on up to four scales), there are numerous examples of integrals that can be expressed as multiple polylogarithms.

Unfortunately, these powerful techniques are applied on a case-by-case basis and it is unclear a priori if they will be successful or not. All results obtained this way restrict to relatively low loop orders. For example, the method of differential equations requires a \emph{reduction to master integrals} which is a demanding problem that gets extremely hard to solve at growing loop orders. Only after this reduction one can build up the differential equations and study the system.

No result comparable to theorem~\ref{theorem:vw3-MZV} (applicable to an infinite number of graphs, characterized by a combinatorial criterion) is available for multi-scale Feynman integrals.

\subsection{Hyperlogarithms}
The multiple polylogarithms \eqref{eq:def-MPL-intro} can be represented as iterated integrals in terms of \emph{hyperlogarithms} (occasionally we thus treat hyperlogarithms and MPL as synonyms)
\begin{equation}
	\Hyper{\letter{\sigma_1}\!\cdots\letter{\sigma_r}}(z)
	\defas \int_{z_1}^z \frac{\dd z_1}{z_1-\sigma_1} \int_{0}^{z_1} \frac{\dd z_{2}}{z_{2}-\sigma_{2}} \int \cdots \int_0^{z_{r-1}} \frac{\dd z_r}{z_r-\sigma_r}
	\label{eq:intro-hlog}%
\end{equation}
which where introduced long ago \cite{LappoDanilevsky:CorpsRiemann}. Francis Brown devised an algorithm to compute some vacuum integrals with the help of these hyperlogarithms \cite{Brown:TwoPoint}. This novel method of integration requires a special property of the integrand called \emph{linear reducibility}, which was shown to hold for all graphs with vertex-width at most three. Hence theorem~\ref{theorem:vw3-MZV} is in principle effective in that all corresponding periods can in theory be computed with the algorithm.

An implementation of this program has unfortunately not been published, though it had been applied sporadically \cite{BroedelSchlottererStieberger:PolylogsMZVSuperstringAmplitudes,AnastasiouDuhrDulatHerzogMistlberger:RealVirtualInclusiveHiggs,AnastasiouDuhrDulatMistlberger:SoftTripleRealHiggs,ChavezDuhr:Triangles}. Some of these applications consider multi-scale integrals and in \cite{Lueders:LinearReduction} it was verified that linear reducibility applies to some $4$-point integrals.

\subsection{Goals}
Our aim is to gain a better theoretic understanding of the integrals that can be computed with hyperlogarithms, but at the same time we want to supply efficient tools to actually perform these calculations in practice. In particular we will
\begin{enumerate}
	\item provide an implementation of the hyperlogarithm integration method that is suitable for practical calculations (in particular of Feynman integrals),
	\item study linear reducibility for non-trivial kinematic dependence and
	\item extend the algorithm to divergent integrals that are regularized analytically.
\end{enumerate}

\section{Overview}
For the most part of this thesis we tried to separate our two main topics as far as possible: Properties of Feynman integrals in the parametric representation and algorithms for symbolic integration with hyperlogarithms. This is important because the latter have a much wider range of applications than just Feynman integrals, and conversely our results on the analytic continuation of parametric integrals and recursion formulas are very likely of relevance for other methods of integration than hyperlogarithms.

However, we originally developed both aspects in parallel and worked out in particular those details that are relevant for their combination. In the final chapter we give a series of examples obtained with this marriage.

Knowledge of quantum field theory is not necessary to understand this thesis and we try to keep physical input to a minimum. The reader may find accounts on Feynman integrals in perturbation theory in most introductory textbooks like \cite{ItzyksonZuber}. The thesis \cite{Bogner:PhD} nicely summarizes different steps and the arising complications during Feynman integral calculations.

\subsection{Schwinger parameters}
In chapter~\ref{chap:parametric} we recall the well-known \emph{parametric representation} of Feynman integrals, which for the example of the $1$-loop propagator becomes
\begin{equation*}
	\FR\left(\Graph[0.4]{1master}\right)
	= \frac{\Gamma(\EP_1 + \EP_2-\dimension/2)}{\Gamma(\EP_1)\Gamma(\EP_2)}
	\int_0^{\infty} \restrict{
			\frac{\SP_1^{\dimension/2-\EP_2-1} \SP_2^{\dimension/2-\EP_1 - 1}}{(\SP_1 + \SP_2)^{\dimension-\EP_1 - \EP_2}}
		}{\SP_2 = 1}
	\dd \SP_1.
\end{equation*}
It is ideally suited to study the analytic properties of Feynman integrals as meromorphic functions of the dimension $\dimension$ and indices $\EP_e$ \cite{Speer:GeneralizedAmplitudes}. Often one wants to evaluate a Feynman integral at a point $(\dimension,\EP_e)$ where it is divergent. It has become common practice to use the analytic continuation to regulate these divergences; most frequently one keeps the indices $\EP_e$ fixed and varies only $\dimension = 4-2\varepsilon$ \cite{tHooftVeltman:RegularizationGaugeFields}, which is called \emph{dimensional regularization}.
 
We recall the power-counting to reveal infrared- and ultraviolet divergences \cite{Speer:SingularityStructureGenericFeynmanAmplitudes}. As a result of an integration-by-parts procedure, we derive an algorithm to generate convergent integral representations for any chosen expansion point. In particular we prove
\begin{theorem}
	Let $G$ denote a Feynman graph with Euclidean kinematics (non-negative masses and positive definite metric). Then for any choice of $\dimension$ and $\EP_e$, $\FR(G) = \sum_{i=1}^{N} q_i \FR(G_i)$ can be written as a finite linear combination such that
	\begin{enumerate}
		\item $\FR(G_i)$ is convergent,
		\item $q_i \in \Q(\dimension,\EP_e\colon e \in \edges)$ is a rational prefactor,
		\item $G_i = \restrict{G}{\dimension=\dimension^{(i)}, \EP = \EP^{(i)}}$ differs from $G$ only by integer shifts $\EP_e^{(i)} - \EP_e \in \N$ of the indices and even shifts $\dimension^{(i)} - \dimension \in 2\N_0$ of the dimension.
	\end{enumerate}
\end{theorem}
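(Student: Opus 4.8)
The plan is to pass to the parametric (Schwinger) representation recalled earlier in this chapter, in which $\FR(G)$ becomes an integral over $\R_{>0}^{\edges}$ of $\prod_{e}\alpha_e^{\EP_e-1}$ (normalised by $\prod_e\Gamma(\EP_e)$) against $\mathcal{U}^{-\dimension/2}e^{-\mathcal{F}/\mathcal{U}}$, where $\mathcal{U}$ and $\mathcal{F}$ are the first and second Symanzik polynomials, homogeneous of degrees $h_1$ and $h_1+1$. Euclidean kinematics guarantees $\mathcal{U},\mathcal{F}>0$ on the open orthant, so the integrand is positive and its only obstructions to convergence are the power-counting inequalities for $\FR(G)$ and all of its subquotients. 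I will use that raising an index $\EP_e$ or the dimension $\dimension$ shifts each of these inequalities by a definite amount, and trade a divergent $\FR(G)$ for convergent integrals with larger indices and dimension through an integration-by-parts recursion. Throughout I keep $(\dimension,\EP_e)$ generic and read every identity as one of meromorphic functions, specialising to the desired expansion point only at the very end; this is what lets the rational prefactors carry the poles while the shifted graphs stay finite.

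The engine is a family of integration-by-parts identities coming from vanishing boundary terms. In the region of absolute convergence one has $\int_{\R_{>0}^{\edges}}\partial_{\alpha_e}\big[\alpha_e\prod_{e'}\alpha_{e'}^{\EP_{e'}-1}\,\mathcal{U}^{-\dimension/2}e^{-\mathcal{F}/\mathcal{U}}\big]\,\dd\alpha=0$, because the bracket vanishes as $\alpha_e\to0$ (since $\Re\EP_e>0$) and decays exponentially as $\alpha_e\to\infty$ (Euclidean positivity of $\mathcal{F}/\mathcal{U}$). Expanding the derivative and using the multilinearity of $\mathcal{U}$ yields a relation of the schematic form $\EP_e\,\FR(G)=\sum_j q_j\,\FR(G_j)$, where each $G_j$ carries a dimension raised by $2$ or $4$ (from the extra powers $\mathcal{U}^{-\dimension/2-1}$ and $\mathcal{U}^{-\dimension/2-2}$), indices raised by the exponents of the monomials in $\alpha_e\partial_{\alpha_e}\mathcal{U}$ and $\alpha_e\partial_{\alpha_e}\mathcal{F}$, and rational coefficients $q_j\in\Q(\dimension,\EP_e)$ (the $\Gamma$-normalisation converts each index shift into a rational, not a transcendental, factor). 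Since $\mathcal{U}$, $\mathcal{F}$ and the masses have non-negative coefficients, every $G_j$ is again a Euclidean graph, so the same boundary argument applies to it and the recursion never leaves the class we may shift. The full Euler operator $\sum_e\alpha_e\partial_{\alpha_e}$ gives the cleaner relation $\sdd\,\FR(G)=\sum_\mu q_\mu\FR(G_\mu)$ with $\dimension\to\dimension+2$ and $\sdd(G_\mu)=\sdd(G)+1$, which removes exactly the overall divergence.

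With these relations I would argue by induction on a well-founded measure of the divergence structure of $G$. The power-counting criterion presents convergence as a finite list of strict inequalities, one per divergent subquotient (ultraviolet for subgraphs, infrared for their complements), and each inequality is improved by raising the indices on the relevant edges. Choosing at each step an identity that raises those indices, treating the divergent subquotients in a suitable nesting order and dividing by the nonzero rational factor ($\EP_e$ or $\sdd$) multiplying $\FR(G)$, the measure strictly decreases; after finitely many steps every integral in the accumulated combination satisfies all inequalities and is therefore convergent. The rational prefactors multiply but stay in $\Q(\dimension,\EP_e)$, the dimension shifts accumulate to an element of $2\N_0$, and the index shifts to non-negative integers, exactly as claimed.

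The main obstacle is the termination argument, and specifically the interplay between the two moves an integration-by-parts is forced to make at once: raising $\dimension$ worsens every ultraviolet subquotient by $h_1$ of the corresponding subgraph, whereas the accompanying index raising improves only those subquotients whose edges actually occur in the differentiated monomials of $\mathcal{U}$ and $\mathcal{F}$. One must therefore fix the order in which divergent subquotients are treated, and the precise identity used at each step, so that no already-convergent sector is pushed back into divergence — equivalently, one must exhibit a measure that is genuinely well-founded under the recursion. Establishing this, together with the (routine but necessary) justification that the boundary terms vanish throughout the domain of meromorphic continuation, is where the real work lies; the algebraic manipulation of the Symanzik polynomials and the bookkeeping of the rational coefficients are then straightforward.
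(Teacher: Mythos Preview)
Your overall strategy --- parametric representation plus integration by parts --- is exactly the one the paper uses, and you have correctly located the only nontrivial point: termination. But the difficulty you anticipate is not actually there, and seeing why it dissolves is the one idea your sketch is missing.

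Your worry is that a term coming from $\SP_e\partial_e\psipol$ carries an extra power of $\psipol^{-1}$, which shifts $\dimension\to\dimension+2$ and hence worsens every ultraviolet condition $\sdd(\gamma)>0$ by $h_1(\gamma)$, while the accompanying index raises might not compensate. The simple observation that removes this worry is that every monomial of $\psipol$ (and likewise of $\phipol$) has, for every subgraph $\gamma$, degree at least $\deg_\gamma(\psipol)=h_1(\gamma)$ in the variables $\SP_e$ with $e\in\gamma$; this is immediate because $\deg_\gamma(\psipol)$ is by definition the minimum over the monomials. Hence each monomial in $\SP_e\partial_e\psipol$ raises indices within $\gamma$ by at least $h_1(\gamma)$, so $\sdd'(\gamma)\geq\sdd(\gamma)$ for every single term and every subgraph simultaneously. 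The same argument applies to the infrared sectors via $\phipol$. There is therefore no need for a careful nesting order and no well-founded measure to invent: no sector ever gets worse.

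The paper packages this more cleanly. It works in the projective representation and, for each divergent sector $\ScaleVec$, applies the operator
\[
\anapartial{\ScaleVec}\;=\;\frac{1}{\sdd_{\ScaleVec}}\Big(\deg_{\ScaleVec}\,-\,\sum_e\ScaleVec_e\,\SP_e\partial_e\Big),
\]
which is a specific $\ScaleVec$-weighted combination of your single-edge identities. It then proves, in one short lemma, that $\sdd_{\ScaleVec'}\big(\anapartial{\ScaleVec}(I)\big)\geq\sdd_{\ScaleVec'}(I)$ for every sector $\ScaleVec'$, with strict inequality for $\ScaleVec'=\ScaleVec$; the proof is exactly the inequality $\deg_{\ScaleVec'}(\SP_e\partial_e p)\geq\deg_{\ScaleVec'}(p)$ for polynomials with nonnegative coefficients. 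Since there are only finitely many relevant sectors in the Euclidean case, iterating $\anapartial{\ScaleVec}$ over the divergent ones terminates. The rational prefactors come from $1/\sdd_{\ScaleVec}$, the dimension shifts from extra powers of $\psipol^{-1}$ and $\phipol^{-1}$, and the index shifts from the monomials --- exactly as you describe. So once you add the monotonicity observation above, your argument and the paper's coincide.
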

In practice this means that one never has to compute a divergent integral and all coefficients in the $\varepsilon$-expansion of a divergent integral admit convergent integral representations. This proves that these coefficients are \emph{periods} for algebraic values of the kinematic invariants, which was known before but relied on the principle of \emph{sector decomposition} and a version of Hironaka's desingularization theorem \cite{BognerWeinzierl:Periods}. The advantage of our approach is that the obtained representation is naturally interpreted in terms of Feynman integrals in the original Schwinger parameters. This is very important for the application of hyperlogarithms for their integration.

Finally we show how combinatorics of graphs and graph polynomials may be exploited to obtain recursive integral formulas in the parametric representation. This idea is very clear in momentum space but difficult to carry out in practice. We find that it is very natural and simple in the parametric representation, \emph{before} kinematics are introduced.

For this purpose we introduce \emph{forest functions} of two different kinds, adapted to three- and four-point kinematics. It will become clear that these new objects (which will turn out to be the inverse Laplace transforms of Feynman integrals) are very useful for recursive computations of a graph in terms of its subgraphs.

Together with the material of chapter~\ref{chap:hyperlogs}, this framework provides a simplified proof of theorem~\ref{theorem:vw3-MZV} and the following two extensions:

\begin{theorem}
	All coefficients in the $\varepsilon$-expansion of a $3$-point function (with three arbitrary external momenta $p_i^2$) with vertex-width three are linearly reducible and evaluate to rational linear combinations of multiple polylogarithms and MZV.
	\label{theorem:vw3-momentum}%
\end{theorem}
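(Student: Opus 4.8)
The plan is to combine the reduction to convergent integrals from chapter~\ref{chap:parametric} with the hyperlogarithm integration algorithm of chapter~\ref{chap:hyperlogs}, the crucial new ingredient being a proof that the second Symanzik polynomial carrying the three-scale dependence does not spoil linear reducibility. First I would pass to the parametric representation, where the integrand of $\FR(G)$ is built from the first Symanzik polynomial $\dodgson$ (the Kirchhoff polynomial) and the second Symanzik polynomial, which depends linearly on the external invariants $p_i^2$. Applying the convergence theorem, I would write $\FR(G)$ as a finite $\Q(\dimension,\EP_e)$-linear combination of convergent parametric integrals; setting $\dimension = 4 - 2\varepsilon$ and expanding, every coefficient of the $\varepsilon$-expansion becomes an absolutely convergent parametric integral whose integrand is a product of powers of $\dodgson$ and the second Symanzik polynomial times a polynomial in their logarithms. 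The problem is thereby reduced to integrating out the Schwinger parameters $\SP_e$ of such integrands one variable at a time.

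The heart of the argument is \emph{linear reducibility}. Recall that the algorithm of chapter~\ref{chap:hyperlogs} succeeds whenever the polynomial (singularity) reduction produces, in some order of the variables $\SP_e$, only polynomials that factor into terms at most linear in the next integration variable; in that case each partial integral is a hyperlogarithm in the remaining $\SP_e$ and the $p_i^2$. For the first Symanzik polynomial alone this linear reducibility is exactly the content underlying theorem~\ref{theorem:vw3-MZV}, where the vertex-width elimination order supplies a compatible ordering of the edges. I would reuse that elimination order: the defining sequence of edge removals and degree-one vertex deletions witnessing $\vw(G)\le 3$ induces an order on the $\SP_e$, and the task is to show that integrating in this order keeps the \emph{whole} Symanzik system---not just $\dodgson$---linearly reducible.

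To control the second Symanzik polynomial I would set up the recursion through the forest functions introduced in chapter~\ref{chap:parametric}, which are precisely the inverse Laplace transforms that express a graph's integrand in terms of those of its subgraphs and isolate the $p_i^2$-dependence. Because vertex-width is inherited by the minors appearing in this recursion, an induction on the elimination order becomes available. At each step I would invoke the contraction--deletion and Dodgson identities to show how $\dodgson$ and the second Symanzik polynomial factor upon eliminating the next variable; the key observation is that the three invariants $p_i^2$ enter only as coefficients and never raise the degree in the $\SP_e$, so that arbitrary (rather than single-scale) kinematics is admissible. Once linear reducibility of the combined system is established, the hyperlogarithm algorithm integrates out all $\SP_e$, leaving multiple polylogarithms in ratios of the $p_i^2$, which degenerate to multiple zeta values in the relevant boundary limits.

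I expect the main obstacle to be exactly the compatibility of the vertex-width elimination order with the second Symanzik polynomial: one must rule out that a genuinely quadratic (or higher, irreducible) dependence on a not-yet-integrated $\SP_e$ is generated once the multi-scale polynomial enters the reduction. Establishing this requires a careful combinatorial analysis of how the spanning-forest (Dodgson) polynomials behave under the two admissible vertex-width moves, and verifying that the extra monomials produced by the three independent $p_i^2$ always appear with the right factorization to remain linear. This combinatorial bookkeeping, rather than the analytic or algorithmic parts, is where the real work lies.
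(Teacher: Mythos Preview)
Your overall architecture---reduce to convergent integrals, expand in $\varepsilon$, establish linear reducibility, then run the hyperlogarithm algorithm---matches the paper. But the mechanism you describe for the linear-reducibility step is different and considerably harder than what the paper actually does. You propose to track $\psipol$ and $\phipol$ through a Schwinger-parameter reduction along the vertex-width elimination order, controlling the resulting Dodgson-type polynomials by contraction--deletion and Dodgson identities; that is essentially Brown's original route to theorem~\ref{theorem:vw3-MZV}, extended to the three-scale $\phipol$, and the ``careful combinatorial bookkeeping'' you flag as the main obstacle is real work that the paper does not attempt.

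The paper sidesteps this by a change of variables. All Schwinger-parameter integrations are packaged into the forest function $\GfunForest{G}(z_1,z_2,z_3)$ of only three variables (definition~\ref{def:GfunForest}), and the vertex-width recursion becomes a one-dimensional integral in the $z$-variables (lemmata~\ref{lemma:Gfun-Forest-vertex} and \ref{lemma:Gfun-Forest-edge}, the latter resting on the single Dodgson-type identity~\eqref{eq:3pt-forest-identity}). Linear reducibility is then a graph-independent fixed-point statement: the compatibility graph $(S^{\StarSymbol},C^{\StarSymbol})$ on the seven polynomials $\{z_i,\,z_i+z_j,\,\SunrisePsi\}$ is preserved by each recursion step (proposition~\ref{prop:GfunStarTriangle-reduction} and figure~\ref{fig:cg-reduction-3pt}). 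The kinematics enter only at the end (lemma~\ref{lemma:vw3-momenta} and theorem~\ref{theorem:vw3-3pt}), where the linear polynomial $z\bar z\,x_1+(1-z)(1-\bar z)\,x_2+x_3$ is adjoined and two final reductions are computed. So instead of proving new factorization identities for the spanning-forest polynomials carrying the three $p_i^2$ at every edge step, the paper shows once that the singularity locus in forest-function coordinates is uniformly bounded by a fixed seven-element set, regardless of the graph's size.
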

\begin{theorem}
	\label{theorem:ladderbox-intro} %
	Let $G$ denote any of the box-ladder graphs $B_n$ (indicted in figure~\ref{fig:ladderboxes}) or a minor\footnote{A minor is any graph obtained by deletion or contraction of edges, sometimes also called \emph{subtopology}.} of such a graph. For vanishing internal masses and light-like external momenta $p_1^2 = p_2^2 = 0$, the Feynman integral $\FR(G)$ is linearly reducible and can be computed in terms of multiple polylogarithms.
	
	In the fully on-shell case (where $p_3^2 = p_4^2 = 0$ as well) it can be expressed with multiple zeta values and harmonic polylogarithms of the ratio $x = (p_1 + p_4)^2 / (p_1 + p_2)^2$.
\end{theorem}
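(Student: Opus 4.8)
The plan is to reduce both claims to the \emph{linear reducibility} of the Schwinger-parametric integrand, which—once established—lets the hyperlogarithm integration algorithm of chapter~\ref{chap:hyperlogs} perform all parametric integrations one variable at a time and return multiple polylogarithms, exactly as in the proof of theorem~\ref{theorem:vw3-momentum}. First I would write $\FR(B_n)$ for the box ladders of figure~\ref{fig:ladderboxes} in the parametric representation, with first Symanzik polynomial $\dodgson$ and second Symanzik polynomial $\mathcal{F}$. Setting the internal masses to zero and imposing $p_1^2=p_2^2=0$ annihilates every $2$-forest contribution to $\mathcal{F}$ whose marked momentum squares to $p_1^2$ or to $p_2^2$, so that $\mathcal{F}$ collapses to a short sum governed by $s=(p_1+p_2)^2$, $t=(p_1+p_4)^2$ and the still off-shell legs $p_3^2,p_4^2$. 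It is precisely this degeneracy that I expect to make the integrand linearly reducible, whereas the generic (fully off-shell) box ladder is not.

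Next I would prove linear reducibility for the whole family by induction on the loop number, using the recursive ladder structure. The base case is the single box and is settled by running the polynomial reduction directly. For the inductive step I would use the fact that $B_n$ arises from $B_{n-1}$ by gluing one further box along a rung, so that the four-point forest function of $B_n$ is obtained from that of $B_{n-1}$ by integrating the Schwinger parameters of the new rung against a rational kernel built from $\dodgson$ and $\mathcal{F}$. The decisive point is to exhibit a finite set $S$ of irreducible polynomials, closed under the reduction operations (leading coefficients, discriminants and pairwise resultants in each remaining variable), in which every element is at most linear in the next variable of a suitable ordering; the recursion should show that passing from $B_{n-1}$ to $B_n$ only appends polynomials of the same controlled shape. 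Taking minors then comes for free: deletion and contraction act by $\dodgson^{G/e}=\restrict{\dodgson^{G}}{\SP_e=0}$ and $\dodgson^{G\setminus e}=\partial_{\SP_e}\dodgson^{G}$ (and analogously for $\mathcal{F}$), so the reduction set of any minor is a specialization or a coefficient of the reduction set of $B_n$ and is therefore again linearly reducible. Running the integration algorithm on the resulting integrand expresses every coefficient of the $\varepsilon$-expansion of $\FR(G)$ as a multiple polylogarithm, which is the first claim.

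For the fully on-shell case I would additionally impose $p_3^2=p_4^2=0$. Now $\mathcal{F}$ degenerates so strongly that, after factoring out the overall scale, the integral depends only on the single dimensionless ratio $x=t/s=(p_1+p_4)^2/(p_1+p_2)^2$. Tracking the alphabet produced by the integration, I expect every hyperlogarithm letter to lie in $\set{0,1}$ once the remaining variable is traded for $x$, so that the final function is a harmonic polylogarithm of $x$, while the constants generated at the integration boundaries are values of such hyperlogarithms at $0$ and $1$, i.e.\ multiple zeta values. This yields the second claim and reproduces, in this framework, the classical ladder results.

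The main obstacle is the \emph{uniform} control of the reduction set: I must guarantee that the degrees and the number of irreducible factors appearing in the polynomial reduction stay bounded independently of $n$, rather than merely verifying linear reducibility graph by graph. The forest-function recursion is the intended device to tame this, by showing that each added rung contributes only polynomials of a shape already present; but making this closure precise—and checking that it survives the light-like and on-shell specializations without spawning new, higher-degree factors—is the delicate part of the argument.
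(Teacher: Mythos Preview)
Your strategy matches the paper's: linear reducibility is established by induction along a recursive build-up of the ladder via forest functions, and kinematics are inserted only at the end. You have correctly isolated the crux, namely finding a finite set of polynomials stable under the reduction as one appends rungs. Two technical ingredients deserve mention, because they are what turns the outline into a proof.

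First, the recursion for the four-point forest function does not close by elementary manipulations of $\psipol$ and $\phipol$ alone; it hinges on a quadratic identity among spanning-forest polynomials,
\[
\psipol_G\cdot\forestpolynom[G]{\{1,2\},\{3\},\{4\}}
= f_{12}(f_{14}+f_3+f_4)+f_3 f_4,
\]
which holds only under the planarity hypothesis that no two vertex-disjoint paths join $v_1$ to $v_3$ and $v_2$ to $v_4$ (lemma~\ref{lemma:forest-identity-ladderbox}). Without this identity the recursion produces a polynomial $\forestpolynom[G]{\{1,2\},\{3\},\{4\}}$ that is not expressible in the four variables $z_{12},z_{14},z_3,z_4$ of the forest function, and the induction breaks.

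Second, your proposed stability check on the set $S$ via \emph{all} pairwise resultants is too coarse: the paper works with a compatibility \emph{graph} $(S^{\ForestDeltaBoxSymbol},C^{\ForestDeltaBoxSymbol})$, and the stability statement (proposition~\ref{prop:GfunForestBox-reduction}) is that only resultants of \emph{compatible} pairs must be taken. This refinement is not cosmetic; it is what keeps the set bounded independently of $n$. Once this is in place, the on-shell alphabet indeed collapses to $\{0,-1\}$ in $x=t/s$ (equation~\eqref{eq:ladderbox-onshell-kinematics}), yielding harmonic polylogarithms, and the integration constants are shown to be MZV by tracking leading coefficients through the regularised limits (proof of theorem~\ref{theorem:ladderbox-kinematics}). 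Your handling of minors and of divergences via convergent representatives is correct.
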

The precise (and more refined) forms of these main results of this thesis are given as theorems~\ref{theorem:vw3-3pt} and \ref{theorem:ladderbox-kinematics}. 

\begin{figure}\centering
	$B_3 = \Graph[0.4]{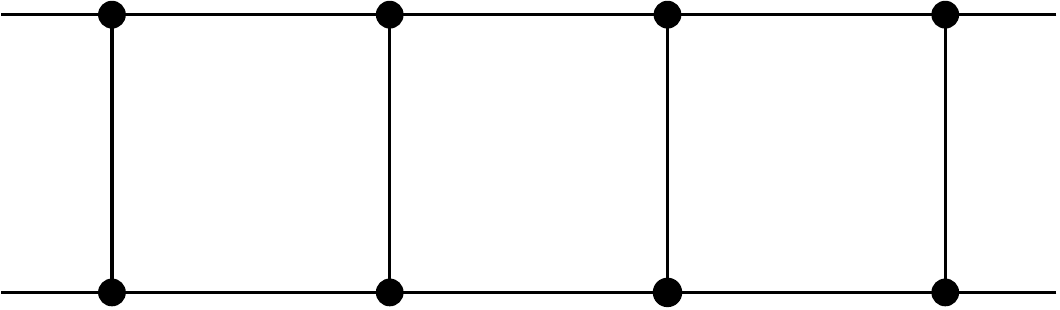}$
	\qquad
	$B_4 = \Graph[0.4]{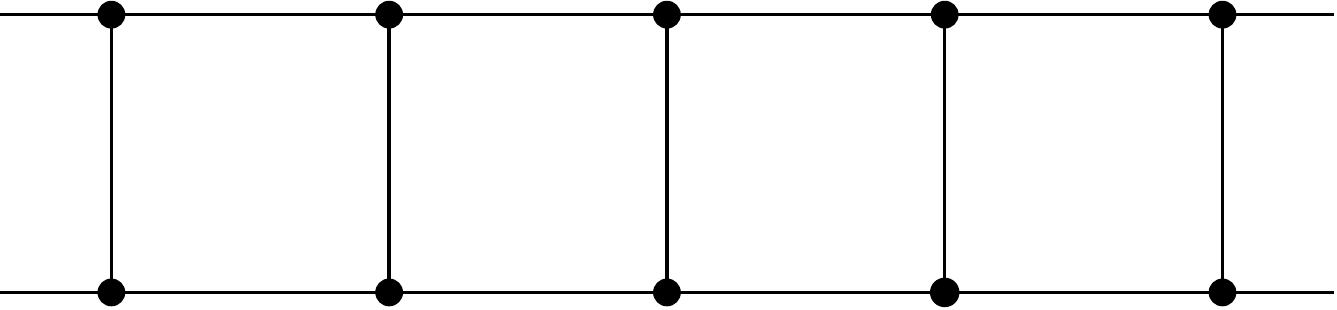}$
	\caption{Examples of the \emph{box ladder} graphs $B_n$ with $n=3$ and $n=4$ loops.} %
	\label{fig:ladderboxes} %
\end{figure}
\subsection{Hyperlogarithms}
In chapter~\ref{chap:hyperlogs} we give a self-contained account of the theory of hyperlogarithms. Our presentation focuses on a formulation of each result in an algorithmic form in order to make a possible implementation obvious.

Their application to compute multivariate integrals requires the \emph{linear reducibility} of the integrand. We review this property and introduce a refined algorithm for the approximation of Landau varieties, which improves the original method of \cite{Brown:PeriodsFeynmanIntegrals}. This polynomial reduction provides an upper bound on the \emph{symbol} \cite{GoncharovSpradlinVerguVolovich:ClassicalPolylogarithmsAmplitudesWilsonLoops} of the hyperlogarithms that can appear.
We will use it to prove our main theorems as a consequence of the recursion formulas we set up in chapter~\ref{chap:parametric}.

Our implementation {\HyperInt} \cite{Panzer:HyperIntAlgorithms} of the algorithms in the computer algebra system {\Maple} will be described briefly as well. It proved itself very useful for practical computations of Feynman integrals \cite{Panzer:MasslessPropagators,Panzer:DivergencesManyScales,Panzer:LL2014}.

\subsection{Applications and examples}
In the final chapter we present selected details, interesting results and observations from practical calculations of Feynman integrals using the methods we developed before.
The majority of explicit results obtained are new.

As an example of a non-MZV period, we comment on a counterterm in massless $\phi^4$ theory which evaluates to multiple polylogarithms at primitive sixth roots of unity. This evaluation used a parity result which we prove in section~\ref{sec:Periods}:
\begin{theorem}
	Consider the algebra $\Q[\Li_{n_1,\ldots,n_r}(1,\ldots,1,\xi_6)]$ of multiple polylogarithms at the primitive sixth root $\xi_6 = e^{\imag\pi/3}$ of unity \cite{Deligne:GroupeFondamentalMotiviqueN}. If $r$ and $n_1+\cdots+n_r$ have the same parity, then $\imag\Imaginaerteil \Li_{\vec{n}}(1,\ldots,1,\xi_6)$ can be written as a linear combination of products of such values and terms with lower depth (smaller $r$). The same decomposition is possible for the real part when the parity of $r$ and $n_1+\cdots + n_r$ is different.
\end{theorem}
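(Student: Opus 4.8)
The plan is to derive a single \emph{reflection functional equation} that relates $L\defas\Li_{\vec n}(1,\ldots,1,\xi_6)$ to its complex conjugate, and then to read off the two cases from the parity of the sign that appears. First I would pass from the nested sum \eqref{eq:def-MPL-intro} to the iterated-integral representation \eqref{eq:intro-hlog}. Since all arguments except the last equal $1$, the value $L$ becomes, up to sign, a hyperlogarithm of weight $N\defas n_1+\cdots+n_r$ over the binary alphabet $\{0,\xi_6^{-1}\}$ in which the letter $\xi_6^{-1}$ occurs exactly $r$ times and the letter $0$ exactly $N-r$ times. Two arithmetic facts about the primitive sixth root will drive the argument and I would record them at the outset: $\overline{\xi_6}=\xi_6^{-1}$ (so complex conjugation coincides with the inversion $\xi_6\mapsto\xi_6^{-1}$ of the argument) and, crucially, $1-\xi_6=\overline{\xi_6}$.

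Second, I would realize complex conjugation geometrically through the inversion $t\mapsto 1/t$ of the integration variable. Because $\overline{\xi_6}=\xi_6^{-1}$, conjugating $L$ replaces the alphabet $\{0,\xi_6^{-1}\}$ by the conjugate alphabet $\{0,\xi_6\}$. The substitution sends $\tfrac{\dd t}{t}\mapsto-\tfrac{\dd t}{t}$ and, by partial fractions, $\tfrac{\dd t}{t-\xi_6^{-1}}\mapsto-\tfrac{\dd t}{t}+\tfrac{\dd t}{t-\xi_6}$, so it maps one alphabet into the other while exchanging the endpoints $0\leftrightarrow\infty$ and fixing $1$. Resolving the resulting path $\infty\to 1$ by shuffle regularization and tracking signs letter by letter — each of the $N-r$ occurrences of $0$ contributes a factor $-1$ — I expect the leading term to reproduce $\overline L=\Li_{\vec n}(1,\ldots,1,\xi_6^{-1})$ with coefficient $(-1)^{N-r}$, yielding a reflection identity of the shape
\begin{equation*}
	L-(-1)^{N-r}\,\overline L=(\text{products})+(\text{terms of strictly lower depth}).
\end{equation*}

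Third, I would identify the correction terms and conclude. The boundary contributions from regularizing the path through $\infty$ are shuffle products of hyperlogarithms of strictly smaller weight, while the remaining words of the partial-fraction expansion carry fewer non-zero letters and hence have depth $<r$; both classes are exactly those permitted by the statement. Substituting $\overline L=\Li_{\vec n}(1,\ldots,1,\xi_6^{-1})$ and separating parities then finishes the argument: if $N$ and $r$ have the same parity (so $N-r$ is even) the left-hand side equals $L-\overline L=2\imag\,\Imaginaerteil L$, so the imaginary part is reducible; if they have opposite parity (so $N-r$ is odd) it equals $L+\overline L=2\,\Realteil L$, so the real part is reducible. As a sanity check, the dilogarithm inversion $\Li_2(z)+\Li_2(1/z)$ reducible corresponds to $N-r=1$ and gives the reducibility of $\Realteil\Li_2(\xi_6)$, matching the opposite-parity case.

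The main obstacle is the bookkeeping of the correction terms rather than the sign count. I must show that every boundary and cross term falls into the two admissible classes and, above all, never leaves the algebra $\Q[\Li_{\vec n}(1,\ldots,1,\xi_6)]$: a priori the regularization can produce values attached to the points $1$ and $1-\xi_6$, and it is precisely the identity $1-\xi_6=\overline{\xi_6}$ together with the distribution relations of the sixth cyclotomic field that must be invoked to reduce these back to sixth-root values. Making this closure rigorous requires a careful shuffle regularization of the divergent pieces coming from the trailing zero-letters (equivalently from $n_r$ and from the endpoint behaviour at $0$) and an induction on the depth $r$ to control the lower-depth remainders; establishing this closure is the technical heart of the proof.
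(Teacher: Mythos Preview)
Your approach is essentially the paper's: both exploit $\overline{\xi_6}=\xi_6^{-1}$ by applying the inversion $z\mapsto z^{-1}$, track the sign $(-1)^{N-r}$ from its action on the forms $\letter{0}\mapsto-\letter{0}$ and $\letter{\sigma}\mapsto\letter{\sigma^{-1}}-\letter{0}$, and read off the two cases from the reflection identity $L-(-1)^{N-r}\overline L\in(\text{products})+(\text{lower depth})$. Your sign count and the parity case distinction are correct.

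However, you misidentify where the work lies. Your worry about closure in the algebra --- the appearance of $1-\xi_6$ and the need for distribution relations --- is a red herring: the inversion keeps the alphabet stable (in the paper's normalization it is $\{0,1\}$ throughout, with the path running from $0$ to $\xi_6$), and no new singularities enter. The genuine obstacle is one you dismiss too quickly. When you resolve the path through $\infty$ via the deconcatenation coproduct, the cross terms are indeed products of strictly smaller weight, but the extreme term --- where the \emph{entire} word sits on the boundary side --- contributes a single period $\AnaReg{z}{\infty}\int_0^z w$ of the \emph{full} weight $N$ and depth $r$. It is not a product, contrary to your claim that ``boundary contributions \ldots\ are shuffle products of hyperlogarithms of strictly smaller weight''. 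Showing that this full-weight term lies in the admissible class requires an external input: the classical parity theorem for multiple zeta values (any $\mzv{n_1,\ldots,n_r}$ with $r+N$ odd reduces to lower depth plus products). The paper invokes this as Theorem~\ref{theorem:MZV-parity} and packages its consequence for regularized limits at infinity as~\eqref{eq:MZV-parity-0-inf}; without it the argument does not close.
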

Independently from our work, Oliver Schnetz informed us that he obtained a different proof of this result based on motivic periods.

\section{Outlook}

\subsection{Beyond Feynman integrals}

The integration algorithms of chapter $2$ are applicable not to Feynman integrals alone, but to any integral whose integrand is built from polylogarithms and rational functions such that the criterion of linear reducibility is fulfilled. One such example are hypergeometric functions, which are important in particle physics because several Feynman integrals have been rewritten in terms of hypergeometric functions.
Several programs for their expansion are available; some are based on the integral representation \cite{HuberMaitre:HypExp,HuberMaitre:HypExp2} while others use summation methods \cite{MochUwer:XSummer,Weinzierl:SymbolicExpansion,MochUwerWeinzierl:NestedSums,Weinzierl:HalfIntegerExpansion}. Parametric integration could simplify and extend the cases for which expansions can be computed. This is ongoing work together with Christian Bogner, who already applied the method \cite{BognerBrown:SymbolicIntegration} of multivariate iterated integrals to integration of hypergeometric functions \cite{BognerBrown:GenusZero}.

Also, the considerations of chapter one concerning the convergence of parametric integrals, their divergences and analytic regularization extend to suitable parametric integrals in general: We only require that the original integrand is a product of polynomials raised to some powers and such that each monomial appearing in any of these polynomials has coefficients with positive real part. Then we know:
\begin{itemize}
	\item The integral is an analytic function of the coefficients of the monomials in the domain where they all have positive real part.
	\item Analytic continuation in terms of exponents of the polynomials (like $\dimension$ and $\EPE_e$ in the Feynman integral case) are possible with partial integrations like in lemma~\ref{lemma:anapartial}.
\end{itemize}

\subsection{Linear reducibility}
Still only very little is understood about linear reducibility. Very interestingly, we observed many cases where this criterion fails in the Schwinger parameters, but is restored after a suitable change of variables (see section~\ref{sec:P711} and \cite{Panzer:DivergencesManyScales}).
A first attempt to a systematic study of at least one particular change of variables was given in \cite{Yeats:SomeInterpretations}, but much more work remains to be done. In particular the cases of integrals with many masses (which are in general not reducible) but still evaluate to polylogarithms, like \cite{HennSmirnov:Bhabha}, are not understood from the parametric integration viewpoint.

\subsection{Phenomenology}
Many interesting phenomenological applications of hyperlogarithms are already feasible. Recall that in principle the program {\HyperInt} suffices to compute
\begin{itemize}
	\item $5$-loop massless propagators (at least the $\phi^4$ graphs, very likely all),
	\item $3$-loop massless $3$-point functions in position- and momentum space,
	\item all minors of massless box ladder graphs with two light-like and two massive external legs (or simpler kinematics)
	\item plenty of further integrals, also involving masses \cite{Panzer:DivergencesManyScales}.
\end{itemize}
For example, very recently a result on $3$-loop ladder boxes with one leg off-shell has been published \cite{DiVitaMastroliaSchubertYundin:ThreeLoopLadderBox}. Our methods can extend these results to two off-shell legs and arbitrary loop number.

Also we advertise in section~\ref{sec:ex-renormalized-parametric} that renormalized observables can be calculated parametrically without introducing a regularization. A very promising and interesting project for a relevant application of this method could for example be the computation of the $\beta$-function of quantum electrodynamics (at least in the quenched case).

\subsection{Number theory}
Oliver Schnetz's theory of graphical functions and single-valued hyperlogarithms is incredibly powerful and apt to compute periods of $11$-loop vacuum diagrams. This is totally out of reach with direct parametric integration and the current program. But the computation of further small graphical functions will provide more and more periods at higher weight, using the recursive methods of adding edges and appending vertices to graphical functions \cite{Schnetz:GraphicalFunctions}. With a growing pool of data on periods at hand, one might hope to gain an intuition and further insight into the highly constrained structure of Feynman periods.

In a different direction, by now we know many graphs which evaluate to multiple zeta values even though their parametric integration entails singularities that suggest alternating sums to appear, see for example section~\ref{sec:P79}. This discrepancy poses an interesting problem for future research.

\subsection{Implementation}
The implementation of {\HyperInt} could be improved considerably. Most severely, the current program can not deal efficiently with complicated rational functions, such that the naive reduction to finite integrals (in the initial presence of divergences) as constructed in corollary~\ref{cor:convergent-anareg-integrand} is not viable for many divergences.

Instead, this reduction to finite master integrals should be implemented in the designated programs on integration by parts \cite{ManteuffelStuderus:Reduze2,Smirnov:Fire4LiteRed}.

\section*{Acknowledgments}

I thank Francis Brown for his beautiful articles, helpful discussions, hospitality at IHES and extremely careful reading of my manuscript, 
my supervisor Dirk Kreimer for continuous encouragement, confidence, full support on all projects and wise counsel, 
David Broadhurst for sharing his wisdom, endless enthusiasm and igniting the passion for physics and mathematics again and again,
Oliver Schnetz for strong support, compliments, collaboration and motivation through ever more challenging problems to tackle,
Christian Bogner for many fruitful discussions, detailed examples and independent checks with his program,
Marcel Golz for saving me from confusing Lagrange with Laplace and
all of the Kreimer Gang for a great atmosphere at work and uncountable instructive discussions which ever so often made me realize how little I really understood.

My teachers and supervisors from Cambridge, FU, BTU and the Max-Steenbeck-Gymnasium Cottbus were fantastic and I had the chance to learn a lot in many wonderful courses.
I am indebted to the Studienstiftung, not only for a scholarship to study abroad, but also for the chance to attend exciting summer schools. My interests into physics, mathematics and informatics were also greatly supported by the effort of many people involved in the preparation and training for the various undergraduate olympiads.

Furthermore I profited tremendously from many discussions at seminars, conferences and schools, and I only do not dare to list all those people for the danger of overlooking someone unintentionally.

Finally, my research would not have been possible without the love, encouragement, endless confidence and support wherever possible by my family.
I thank Maxie for her long-term friendship, and PopKon for the best hours after work.

Figures were generated with {\JaxoDraw} \cite{BinosiTheussl:JaxoDraw} and {\Axodraw} \cite{Vermaseren:Axodraw}.

Thanks to the help of attentive readers, in particular Francis Brown and Jianqiang Zhao, several misprints could be corrected in this updated version of my thesis. It also contains a few additional comments.

\chapter{Parametric Feynman integrals}
\label{chap:parametric}%

In perturbative quantum field theory, Feynman integrals are contributions to the Dyson series and are naturally expressed in position- or momentum space. But it was soon realized that these can be rewritten in what we call the \emph{(Schwinger-) parametric representation}, which is for example briefly described in \cite{ItzyksonZuber}.

While as of today most calculations are based on the momentum space representation, many early articles exploited the parametric representation to great effect in the study divergences and renormalization. A detailed study of many analytic properties of Feynman integrals was collected in the book \cite{Nakanishi:GraphTheoryFeynmanIntegrals}. It focuses on the combinatorial properties of Feynman graphs and their relations to their analysis.

In this thesis we argue that this representation is also very well adapted to the evaluation of Feynman integrals in terms of iterated integrals. The integration algorithms we will develop in chapter~\ref{chap:hyperlogs} happen to be extremely useful to compute Feynman integrals in the parametric representation.

We devote the first sections of this chapter to a self-contained derivation of the parametric representation, partly for convenience of the reader but also because the book \cite{Nakanishi:GraphTheoryFeynmanIntegrals} is very difficult to obtain nowadays and many details can not be found in modern references. Furthermore, we require a good understanding of certain generalizations of graph polynomials for the recursion formulas developed at the end of this chapter.
Therefore we include a proof of the well-known matrix-tree-theorem \ref{theorem:matrix-tree}.

Furthermore, we recall power-counting theorems to assess the convergence of those integrals and explain a general method for obtaining the analytic regularization in terms of convergent integrals. This is an elementary procedure, yet it is crucial for our approach of integration and may actually be of use on more general grounds, as we will briefly discuss.

Afterwards we briefly sketch how renormalization can be carried out in the parametric representation, making use of the Hopf algebra of Feynman graphs. We will only address logarithmic ultraviolet divergences here and discuss the angle- and scale-dependence of renormalized Feynman integrals, the renormalization group (or dependence on the renormalization scheme) and the \emph{period} which gives a contribution to $\beta$-functions.

The last part of this chapter is dedicated to two particular, infinite classes of Feynman integrals: recursively three-point graphs (\emph{vertex-width} $3$) and ladder boxes. We will define these and derive recursive integral representations that allow for their efficient computation with the algorithms of chapter~\ref{chap:hyperlogs}. Some explicit results and comments are given in sections~\ref{sec:ex-3pt} and \ref{sec:ex-ladderboxes}.

\section{Representations using the Schwinger trick}
\subsection{Feynman graphs}
Throughout this thesis we will consider connected multigraphs $G=(\vertices,\edges,\source,\target)$ (multiple edges connecting the same pair of vertices as well as self-loops are allowed) consisting of finite sets of vertices $\vertices(G)$ and edges $\edges(G)$. We assume\footnote{All results are independent of the chosen orientation as we only consider scalar integrals.} that each edge $e \in \edges(G)$ is directed from a source vertex $\source(e)$ to a target vertex $\target(e)$ and encode this data in the \emph{incidence matrix} $\IM$ through
\begin{equation}
	\forall e\in\edges, v\in\vertices:
	\quad
	\IM_{e,v} \defas
			\begin{cases}
				-1 & \text{if $e$ starts in $v = \source(e)$,} \\
				1 & \text{if $e$ ends in $v = \target(e)$ and} \\
				0 & \text{when $e$ is not incident to $v$.} \\
			\end{cases}
	\label{eq:incidence-matrix} %
\end{equation}%
\nomenclature[E]{$\IM$}{incidence matrix of a graph, equation~\eqref{eq:incidence-matrix}\nomrefpage}%
\nomenclature[G]{$G$}{a graph}%
\nomenclature[g]{$\gamma$}{a subgraph}%
\nomenclature[V]{$\vertices$}{the set of vertices of a graph}%
\nomenclature[E]{$\edges$}{the set of edges of a graph}%
\nomenclature[v]{$v$}{a vertex of a graph}%
\nomenclature[e]{$e$}{an edge of a graph}%
\nomenclature[e]{$\source(e), \target(e)$}{the source and target of a (directed) edge $e$}%
\nomenclature[alpha e]{$\SP_e$}{the Schwinger parameter associated to an edge $e$, see equation~\eqref{eq:Schwinger-trick}\nomrefpage}
Each edge $e \in \edges$ represents a scalar particle of non-negative mass $m_e \in \R_{\geq 0}$, whose propagation is described in momentum space by $(k_e^2 + m_e^2)^{-1}$ ($k_e$ is the momentum of the particle $e$). We allow this propagator to be raised to some power $\EP_e \in \C$, called \emph{index} of $e$. Furthermore, the vertices $\vertices = \vertices_{\Tint} \cupdot \vertices_{\Text}$ are partitioned into \emph{internal} and \emph{external} vertices.

We use elementary concepts and results from graph theory without further reference as they can be obtained from any book on the subject, including \cite{West:GraphTheory} which we recommend. However, we will focus on combinatorics of graph polynomials and will provide the corresponding proofs and definitions along the way. A superb reference for this combinatorial theory of Feynman graphs is \cite{Nakanishi:GraphTheoryFeynmanIntegrals}.

Most of the time, a \emph{subgraph} $\gamma \subseteq G$ is identified with its edges $\edges(\gamma)\subseteq \edges(G)$ and so we write $\gamma \subseteq \edges(G)$. In this case, we always assume that $\vertices(\gamma) = \vertices(G)$ contains all vertices of $G$ (so $\gamma$ is a \emph{spanning subgraph}). A \emph{forest} $F \subseteq \edges(G)$ is a subgraph without cycles and a \emph{tree} is a connected forest. We denote the set of connected components of $G$ by $\comps(G)$ and recall the formula
\begin{equation}
	\loops{G}
	= \abs{\edges(G)} - \abs{\vertices(G)} + \abs{\comps(G)}
	\label{eq:loop-number} %
\end{equation}%
\nomenclature[pi0G]{$\comps(G)$}{the set of connected components of a graph $G$}%
that counts the number of independent cycles in $G$.\footnote{This is the first Betti number $\loops{G} = \dim H_1(G)$ of $G$ as a simplicial complex, just as $\abs{\comps(G)} = \dim H_0(G)$.} The \emph{contraction} $G\contract\gamma$ is obtained from $G$ by replacing each connected component of $\gamma$ with a single vertex (the edges of $\gamma$ are not present anymore in $G\contract\gamma$). The graph $G$ is \emph{one-particle irreducible} (1PI) if it is connected and stays so ($\abs{\comps(G\setminus e)}= 1$) upon deletion of any edge ($e \in \edges$). Feynman graphs are usually assumed to be 1PI, but we will also consider non-1PI graphs in constructions of a graph in sections~\ref{sec:vw3} and \ref{sec:ladderboxes}.

\subsection{Scalar momentum space integrals}\label{sec:momentum-space}
In momentum space, each external vertex $v \in \vertices_{\Text}$ is assigned an incoming external momentum
$
	\ExMom(v)
	\in
	\R^{\dimension}
$
belonging to $\dimension$-dimensional Euclidean space-time $\R^{\dimension}$ (results for Minkowski space-time can be obtained through analytic continuation). We set $\ExMom(v) = 0 $ for internal $v \in \vertices_{\Tint}$.%
\nomenclature[p(v)]{$\ExMom(v)$}{external momentum entering at vertex $v$, section~\ref{sec:momentum-space}\nomrefpage}%
\nomenclature[D]{$\dimension$}{dimension of space-time}

Scalar Feynman rules $\FR$ assign to $G$ the integral\footnote{This choice of constant prefactor removes explicit powers of $\pi$ in \eqref{eq:feynman-integral-parametric}.}
\begin{equation}
	\FR(G)
	=
	\prod_{e\in\edges}
			\int_{\R^{\dimension}} \frac{\dd[\dimension] k_e}{\pi^{\dimension/2}} \left( k_e^2 + m_e^2 \right)^{-\EP_e}
	\prod_{v\in\vertices\setminus\set{v_0}}
			\pi^{\dimension/2}
			\delta^{(\dimension)}\left( k_v \right)
	\label{eq:feynman-integral-momentum} %
\end{equation}%
\nomenclature[Phi]{$\FR(G)$}{momentum space Feynman rules applied to $G$, equation~\eqref{eq:feynman-integral-momentum}\nomrefpage}%
over the momenta $k_e$ flowing through edge $e$, which are subject to momentum conservation constraints $\delta^{(D)}(k_v)$ where
\begin{equation*}
	k_v \defas \ExMom(v) + \sum_{e\in\edges} \IM_{e,v} k_e
\end{equation*}
collects the total momentum flowing into $v$.
We omit this factor for an arbitrary vertex $v_0 \in \vertices$ to strip off an overall $\delta^{(\dimension)}\left( \sum_{v\in\vertices} \ExMom(v) \right)$ from the result. 
The \emph{Schwinger trick}
\begin{equation}
	\frac{1}{P^{\EP}}
	= \frac{1}{\Gamma(\EP)} \int_0^{\infty} \SP^{\EP - 1} e^{-\SP P} \dd\SP
	\quad\text{valid for}\quad
	P, \Realteil(\EP) > 0
	\label{eq:Schwinger-trick} %
\end{equation}
and 
$
	(2\pi)^{\dimension} \delta^{(D)}(k)
	=
	\int e^{i x k} \dd[\dimension]x
$
transforms this representation into
\begin{multline}
	\FR(G)
	=
	\prod_{e\in\edges}
			\int_0^{\infty} \frac{\SP_e^{\EP_e -1} \dd \SP_e}{\Gamma(\EP_e)}
	\prod_{v\in\vertices\setminus\set{v_0}}
			\int_{\R^{\dimension}} \dd[\dimension] x_v (4\pi)^{\dimension/2}
	\prod_{e\in\edges}
			\int_{\R^{\dimension}} \frac{\dd[\dimension] k_e}{\pi^{\dimension/2}} \\
	\times
	\exp\left[ 
			-\sum_{e\in\edges} \SP_e (m_e^2 + k_e^2) 
			+ i \hspace{-2mm}\sum_{v\in\vertices\setminus\set{v_0}}\hspace{-2mm} x_v \cdot \left( \ExMom(v) + \sum_{e\in\edges} k_e \IM_{e,v} \right)
	\right].
	\label{eq:feynman-integral-parametric-derivation} %
\end{multline}

\begin{definition}
	\label{def:graph-matrix}
	The \emph{graph matrix} $\GM(G)$ is the square matrix of size $\abs{\edges} + \abs{\vertices} - 1$ built out of $\IM$ and the diagonal matrix $\DM$ as
	\begin{equation}
		\GM(G)
		\defas
		\begin{pmatrix}
			\DM & \tilde{\IM} \\
			-\tilde{\IM}^{\Transpose} & 0\\
		\end{pmatrix},
		\quad
		\DM
		\defas
		\diag\left( \SP_1,\ldots,\SP_{\abs{\edges}} \right)
		= \setlength{\arraycolsep}{0mm} 
		\begin{pmatrix}
			\SP_1 & & \\[-2mm]
			& \ddots & \\[-2mm]
			& & \SP_{\abs{\edges}} \\
		\end{pmatrix}
		\label{eq:graph-matrix} %
	\end{equation}%
\nomenclature[M(G)]{$\GM(G)$}{graph matrix of $G$, equation~\eqref{eq:graph-matrix}\nomrefpage}%
	where the reduced incidence matrix $\tilde{\IM}$ is obtained from $\IM$ upon deletion of the column $v_0$.
	The \emph{Laplace matrix} $\LM$ and its dual $\hat{\LM}$ are the square matrices of size $\abs{\vertices} - 1$ given by
	\begin{equation}
		\LM
		\defas
		\tilde{\IM}^{\Transpose} \DM \tilde{\IM}
		\quad\text{and}\quad
		\hat{\LM}
		\defas
		\tilde{\IM}^{\Transpose} \DM^{-1} \tilde{\IM}.
		\label{eq:laplace-matrix} %
	\end{equation}
\end{definition}%
\nomenclature[L]{$\LM$}{Laplace matrix of a graph, equation~\eqref{eq:laplace-matrix}\nomrefpage}%
Collecting all momenta and position variables into the vectors $k = (k_e)_{e\in\edges} \in \R^{\dimension \abs{\edges}}$ and $x = (x_v)_{v\in\vertices\setminus\set{v_0}} \in \R^{\dimension (\abs{\vertices}-1)}$, completion of the square lets us rewrite the argument of the exponential in \eqref{eq:feynman-integral-parametric-derivation} as
\begin{equation*}
	- \left( k - \frac{i}{2} \DM^{-1} \tilde{\IM} x \right)^{\Transpose}
		\DM
		\left( k - \frac{i}{2} \DM^{-1} \tilde{\IM} x \right)
	- \left( \frac{x}{2} - i \hat{\LM}^{-1} p \right)^{\Transpose}
		\hat{\LM}
		\left( \frac{x}{2} - i \hat{\LM}^{-1} p \right)
	- p^{\Transpose} \hat{\LM}^{-1} p
	- \sum_{e\in\edges} \SP_e m_e
\end{equation*}
where we interpret $p=\left( \ExMom(v) \right)_{v\in\vertices\setminus\set{v_0}} \in \R^{\dimension\left( \abs{\vertices} -1 \right)}$. Hence the Gau{\ss}ian integrals first over $k$ and then over $x$ in \eqref{eq:feynman-integral-parametric-derivation} yield the \emph{parametric representation}
\begin{equation}
	\FR(G)
	=
	\prod_{e\in\edges}
			\int_0^{\infty} \frac{\SP_e^{\EP_e -1} \dd \SP_e}{\Gamma(\EP_e)}
	\cdot
	\frac{e^{-\phipol / \psipol}}{\psipol^{\dimension/2}}.
	\label{eq:feynman-integral-parametric} %
\end{equation}
It depends on the first and second \emph{Symanzik polynomials} $\psipol$ (which we also just call graph polynomial) and $\phipol$ given by
\begin{equation}
	\psipol
	=
	\det \DM
	\cdot
	\det \left( \tilde{\IM}^{\Transpose} \DM^{-1} \tilde{\IM} \right)
	= \det \DM
		\cdot
		\det \hat{\LM}
	\quad\text{and}\quad
	\phipol
	=
	\psipol \left(
		\sum_{e\in\edges} \SP_e m_e
		+
		p^{\Transpose} \hat{\LM}^{-1} p
	\right)
	.
	\label{eq:graph-polynomials} %
\end{equation}
Going back to Kirchhoff, these enjoy a long history and we refer to \cite{BognerWeinzierl:GraphPolynomials} for a review. Often they are also denoted as $\mathcal{U} = \psipol$ and $\mathcal{F} = \phipol$.

To find a combinatorial description of these polynomials, one invokes the
\begin{theorem}[Matrix-Tree-Theorem]
	\label{theorem:matrix-tree} %
	For subsets $I\subseteq \edges$ and $W\subseteq\vertices$ let $\IM(I,W)$ denote the matrix $\IM$ after deleting rows $I$ and columns $W$. If it is square, that is $\abs{\edges\setminus I} = \abs{\vertices\setminus W}$ , then
	\begin{equation}
		\det \IM(I,W) =
		\begin{cases}
			\pm 1 & \parbox[t]{9cm}{if $F \defas \edges\setminus I$ is a forest with $\abs{W}$ connected components, each containing precisely one vertex of $W$,} \\
			0 & \text{otherwise.} \\
		\end{cases}
		\label{eq:matrix-tree} %
	\end{equation}
\end{theorem}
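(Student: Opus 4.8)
The plan is to reduce the statement to a combinatorial characterization of when a square submatrix of the incidence matrix has nonvanishing determinant. First I would observe that the matrix $\IM(I,W)$ has rows indexed by the edges $F = \edges \setminus I$ and columns indexed by the vertices $\vertices \setminus W$. The key structural feature I would exploit is that each row of $\IM$ (hence of $\IM(I,W)$) contains at most two nonzero entries, namely a single $-1$ at the source and a single $+1$ at the target of the corresponding edge; deleting the columns in $W$ can only remove entries, so each row of $\IM(I,W)$ has at most two nonzero entries, each equal to $\pm 1$.

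The main argument proceeds by induction on the number $\abs{F}$ of edges (equivalently on the matrix size). For the inductive step I would expand the determinant cleverly rather than by brute force. If the forest $F$ is as described, I expect to locate a leaf, that is, a vertex of the reduced vertex set $\vertices \setminus W$ incident to exactly one edge of $F$, which yields a row (or column) with a single nonzero entry $\pm 1$; Laplace expansion along that row or column reduces the determinant to $\pm\det\IM(I',W)$ of a smaller graph obtained by deleting that leaf-edge pair, and the inductive hypothesis applies. Conversely, to prove the determinant vanishes in the ``otherwise'' case, I would distinguish two failure modes: either $F$ contains a cycle, or $F$ fails the forest-component condition relative to $W$. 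If $F$ contains a cycle, the corresponding rows are linearly dependent — summing the edge-rows around the cycle with appropriate signs (dictated by orientation) produces the zero vector, since each interior vertex appears once as a head and once as a tail — so $\det\IM(I,W) = 0$. If $F$ is acyclic but the component condition fails, then by counting ($\abs{F} = \abs{\vertices} - \abs{W}$ when the matrix is square, combined with $F$ being a forest) some connected component of $F$ must contain either zero or at least two vertices of $W$; in the first case that component's vertex-columns in $\vertices\setminus W$ exceed the incident edges, forcing a column dependence, while in the second case I would again extract a cycle-like relation or a degenerate subblock.

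The hard part will be handling the bookkeeping in the vanishing case cleanly: making the ``too many'' versus ``too few'' vertices of $W$ dichotomy rigorous via the rank count, and verifying that the alternating sum of rows along a cycle truly cancels once one accounts for the edge orientations encoded in $\source$ and $\target$. A slick alternative I would keep in mind is to set up a sign-reversing involution or a direct bijection between nonzero terms in the Leibniz expansion of $\det\IM(I,W)$ and the spanning forests of the required type, where each surviving term contributes $\pm 1$; this makes the $\pm 1$ claim transparent but requires care to show that at most one term survives for each valid forest. Either route reduces the theorem to elementary linear algebra over $\Z$ applied to the sparse, two-nonzeros-per-row structure of $\IM$, and I would then read off the Symanzik polynomials $\psipol$ and $\phipol$ of \eqref{eq:graph-polynomials} via the Cauchy–Binet formula applied to $\det\hat{\LM} = \det(\tilde\IM^{\Transpose}\DM^{-1}\tilde\IM)$ as a corollary.
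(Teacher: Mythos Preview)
Your proposal is correct and follows essentially the same approach as the paper: linear dependence of rows along a cycle, a leaf-based Laplace expansion for the valid forest case, and a counting argument to handle the remaining failure mode. The paper's organization of the second failure case is slightly slicker than your column-count sketch: rather than splitting into ``zero versus at least two vertices of $W$'' subcases, it observes directly that if $F$ contains a path between two distinct vertices $v,w\in W$, then summing the rows along that path (with signs from the orientations) yields a vector supported only on columns $v$ and $w$, both of which have been deleted---so the row sum is zero in $\IM(I,W)$ and the determinant vanishes.
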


\begin{proof}
	If $F \defas \edges \setminus I$ contains a loop $C$, $\det \IM(I, W) = 0$ because the corresponding rows
	\begin{equation*}
		\sum_{e \in C} \pm \IM(I,W)_e = 0
	\end{equation*}
	add to zero when each row $\IM(I, W)_e$ is taken with the sign $+1$ when $C$ runs through $e$ along its orientation and $-1$ if $C$ contains $e$ in reversed direction.
	
	Similarly, if $F$ contains a path $v \rightarrow \ldots \rightarrow w$ for distinct $v,w \in W$, adding the rows $\IM(I)_e$ of these edges with the appropriate signs gives a vector with only two non-zero components, namely in the columns $v$ and $w$. But these do not appear in $\IM(I,W)$, so again $\det \IM(I,W) = 0$.

	Now let $F$ be free of cycles and such paths, it follows that it has $\abs{\comps(F)} = \abs{\vertices} - \abs{F} = \abs{W}$ components as claimed; each of which contains precisely one vertex in $W$.
	Choose any edge $e\notin I$ that connects some $w\in W$ to some other vertex $v\notin W$. Then the $e$'th row of $\IM(I,W)$ contains only one non-zero entry, namely $\IM_{e,v} = \pm 1$. Expanding along this row we find
	\begin{equation*}
		\det \IM(I,W)
		= \pm \det \IM(I \cup \set{e}, W \cup \set{v}).
	\end{equation*}
	As $\edges\setminus \left( I \cup \set{e} \right)$ is a forest with one vertex of $W \cup \set{v}$ in each component, we can apply the argument again and continue until we are left with a trivial one-by-one determinant. This proves $\det \IM(I,W) = \pm 1$.
\end{proof}

\begin{theorem}
	\label{theorem:graph-polynomials} %
	The graph polynomials for a connected graph $G$ are given by
	\begin{equation}
		\psipol
		=
		\sum_T \prod_{e\notin T} \SP_e
		\quad\text{and}\quad
		\phipol
		= \psipol \sum_{e\in\edges} \SP_e m_e^2
		+
		\sum_F \ExMom(F)^2 \prod_{e \notin F} \SP_e
		,
		\label{eq:graph-polynomials-combinatorial} %
	\end{equation}
	where the sums run over all spanning trees $T$ and spanning two-forests $F$ which are defined to be those subsets of $\edges$ that do not contain any cycles/loops and have $\abs{\comps(T)} = 1$ or $\abs{\comps(F)} = 2$ connected components.

	We write $\ExMom(F) \defas \sum_{v\in F_0} \ExMom(v)$ for the momentum flowing into the component $F_0 \in \comps(F)$ that contains $v_0$.
\end{theorem}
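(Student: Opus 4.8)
The plan is to prove both identities in \eqref{eq:graph-polynomials-combinatorial} directly from the determinantal definitions in \eqref{eq:graph-polynomials}, by combining the Cauchy--Binet formula with theorem~\ref{theorem:matrix-tree}. Since each identity equates two polynomials in the $\SP_e$, it suffices to verify it on the dense open set where all $\SP_e>0$; there $\DM$ and $\hat{\LM}$ are invertible and I may freely use $\hat{\LM}^{-1}$ as a rational function of the $\SP_e$.

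For $\psipol$ I would start from $\psipol=\det\DM\cdot\det\hat{\LM}$ with $\hat{\LM}=\tilde{\IM}^{\Transpose}\DM^{-1}\tilde{\IM}$ and apply Cauchy--Binet to this product of an $(\abs{\vertices}-1)\times\abs{\edges}$ matrix and an $\abs{\edges}\times(\abs{\vertices}-1)$ matrix, summing over edge subsets $S\subseteq\edges$ with $\abs{S}=\abs{\vertices}-1$. Because $\DM^{-1}$ is diagonal, its weights factor out as $\prod_{e\in S}\SP_e^{-1}$ and each summand reduces to that weight times $\bigl(\det\tilde{\IM}_{S,\cdot}\bigr)^2$. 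Now $\tilde{\IM}_{S,\cdot}$ is exactly the matrix $\IM(\edges\setminus S,\set{v_0})$, so theorem~\ref{theorem:matrix-tree} identifies $\bigl(\det\tilde{\IM}_{S,\cdot}\bigr)^2$ with the indicator that $S$ is a spanning tree. Multiplying by $\det\DM=\prod_e\SP_e$ converts $\prod_{e\in S}\SP_e^{-1}$ into $\prod_{e\notin S}\SP_e$ and yields $\psipol=\sum_T\prod_{e\notin T}\SP_e$.

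For $\phipol$ the mass term $\psipol\sum_e\SP_e m_e^2$ already matches, so only $\psipol\,p^{\Transpose}\hat{\LM}^{-1}p$ remains. Writing $\hat{\LM}^{-1}=(\det\hat{\LM})^{-1}\operatorname{adj}(\hat{\LM})$ turns this into $\det\DM\cdot p^{\Transpose}\operatorname{adj}(\hat{\LM})\,p=\det\DM\sum_{v,w}p_v p_w\,\operatorname{adj}(\hat{\LM})_{v,w}$. Each adjugate entry is a signed minor of $\hat{\LM}$ obtained by deleting one row $w$ and one column $v$, and such a minor equals $\det\bigl((\tilde{\IM}_{\hat{w}})^{\Transpose}\DM^{-1}\tilde{\IM}_{\hat{v}}\bigr)$. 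A second application of Cauchy--Binet, now to a product of two \emph{different} matrices, together with theorem~\ref{theorem:matrix-tree}, expresses this minor through subsets $S$ with $\abs{S}=\abs{\vertices}-2$ that are spanning two-forests separating $v_0$ from both $v$ and $w$; after multiplication by $\det\DM$ each such $S$ carries the weight $\prod_{e\notin S}\SP_e$, with $v$ and $w$ both lying in the component $F_1$ that does not contain $v_0$.

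The main obstacle is the sign bookkeeping. I must show that, for a fixed spanning two-forest $S$ with components $F_0\ni v_0$ and $F_1$, the combined sign from the adjugate prefactor $(-1)^{v+w}$ and the two Matrix--Tree signs is independent of the choice of $v,w\in F_1$ and in fact equals $+1$. Granting this, the double sum over $v,w\in F_1$ collapses to $\bigl(\sum_{v\in F_1}p_v\bigr)^2$, and momentum conservation $\sum_{v\in\vertices}\ExMom(v)=0$ identifies this with $\ExMom(F)^2$ for the component containing $v_0$, producing $\sum_F\ExMom(F)^2\prod_{e\notin F}\SP_e$ as claimed. The cleanest way to pin down the signs is to track how the row/column deletions compose when one peels off the edges of $F_1$ along a path toward $v_0$, mirroring the inductive reduction already used in the proof of theorem~\ref{theorem:matrix-tree}; the perfect-square structure is precisely what forces the spurious signs to cancel.
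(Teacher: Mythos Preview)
Your approach is essentially the same as the paper's: both reduce $\psipol$ and the $\hat{\LM}^{-1}$ entries to sums over edge subsets weighted by squared or paired minors of $\tilde{\IM}$, then invoke theorem~\ref{theorem:matrix-tree}. The paper packages the expansion slightly differently---it writes $\psipol=\det\GM(G)$ for the block matrix and expands by linearity in the diagonal $\SP_e$ entries rather than citing Cauchy--Binet---but this is the same computation.

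The one place where the paper is sharper is the sign step you flag as the main obstacle. Rather than peeling edges along a path, the paper observes that the columns $c_u$ of $\tilde{\IM}(S,\cdot)$ indexed by the vertices of the component $F_1$ not containing $v_0$ satisfy $\sum_{u\in F_1}c_u=0$ (every edge of $S$ either has both ends in $F_1$ or neither). Hence in $\det\tilde{\IM}(S,\set{v})=\det(\cdots\not{c_v}\cdots c_w\cdots)$ one may replace $c_w$ by $-\sum_{u\in F_1\setminus\set{w}}c_u$; all terms with $u\neq v$ vanish by column repetition, leaving $-\det(\cdots\not{c_v}\cdots c_v\cdots)=(-1)^{v+w}\det\tilde{\IM}(S,\set{w})$. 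This gives $\det\tilde{\IM}(S,\set{v})\det\tilde{\IM}(S,\set{w})=(-1)^{v+w}$ in one line and cancels the adjugate sign exactly, with no induction needed.
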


\begin{proof}
	First notice that by linearity of the determinant we can expand \eqref{eq:graph-polynomials} as
	\begin{equation}
		\psipol
		= \det \begin{pmatrix}
				\DM & \tilde{\IM} \\
				0 & \tilde{\IM}^{\Transpose} \DM^{-1} \tilde{\IM} \\
			\end{pmatrix}
		= \det \GM(G)
		= \sum_{S \subseteq \edges}
			\prod_{e\in S} \SP_e
			\det \begin{pmatrix}
				0 & \tilde{\IM}(S) \\
				-\tilde{\IM}(S)^{\Transpose} & 0 \\
			\end{pmatrix}
		,
		\label{eq:graph-polynomial-proof-expansion} %
	\end{equation}
	where $\tilde{\IM}(S)$ denotes $\tilde{\IM}$ after deletion of the rows $S$ (for the second equality, multiply the first $\abs{\edges}$ rows of $\GM(G)$ with $\IM^{\Transpose} \DM^{-1}$ and add this to the lower $\abs{\vertices}-1$ rows). Let $\GM_S$ denote the last matrix in this equation, then
	\begin{equation}
		\rank \GM_S
		= 2\rank \tilde{\IM}(S)
		\leq 2 \min\set{\abs{\vertices}-1,\abs{\edges\setminus S}}
		< \abs{\vertices}-1 + \abs{\edges\setminus S}
		\label{eq:graph-polynomial-proof-ranks} %
	\end{equation}
	whenever $\abs{\vertices}-1 \neq \abs{\edges\setminus S}$. Therefore $\GM_S$ can only be non-singular for square $\tilde{\IM}(S)$ with $\abs{\vertices}-1 = \abs{\edges\setminus S}$, but then theorem \ref{theorem:matrix-tree} immediately shows that
	\begin{equation*}
		\det \GM_S
		= \left[ \det \tilde{\IM}(S) \right]^2
		= \begin{cases}
				1 & \text{if $\edges\setminus S$ is a spanning tree and} \\
				0 & \text{otherwise.}\\
			\end{cases}
	\end{equation*}
	For $\phipol$ we compute the components $(\hat{\LM}^{-1})_{v,w} = (-1)^{v+w} \cdot \det \hat{\LM}(\set{w},\set{v}) \cdot \det \hat{\LM}^{-1}$ for any two $v,w\in\vertices\setminus\set{v_0}$ in terms of minors (deleting columns from $\tilde{\IM}$) as
	\begin{equation*}
		\psipol \cdot (\hat{\LM}^{-1})_{v,w}
		(-1)^{v+w}
		= \det \begin{pmatrix}
				\DM & \tilde{\IM}(\set{v}) \\
				0 & \tilde{\IM}(\set{w})^{\Transpose} \DM^{-1} \tilde{\IM}(\set{v}) \\
			\end{pmatrix}
		= \det M(\set{w},\set{v}),
	\end{equation*}
	where $M(\set{w},\set{v})$ denotes the graph matrix \eqref{eq:graph-matrix} after deleting row $w$ and column $v$.
	Expanding like \eqref{eq:graph-polynomial-proof-expansion} and analyzing the rank as in \eqref{eq:graph-polynomial-proof-ranks} shows that this equals
	\begin{equation*}
		= 
			\sum_{S \subseteq \edges}
			\prod_{e\in S} \SP_e
			\det \begin{pmatrix}
				0 & \tilde{\IM}(S, \set{v}) \\
				-\tilde{\IM}(S,\set{w})^{\Transpose} & 0 \\
			\end{pmatrix}
		= \hspace{-4mm}
				\sum_{\substack{S \subseteq \edges \\ \abs{\edges\setminus S} = \abs{\vertices}-2}}
			\hspace{-3mm}
			\prod_{e\in S} \SP_e
			\det \tilde{\IM}(S,\set{v})
			\cdot
			\det \tilde{\IM}(S,\set{w})
	\end{equation*}
	and we invoke theorem \ref{theorem:matrix-tree} again to deduce that we only get non-zero contributions when $F\defas\edges\setminus S$ is free of loops and therefore a spanning two-forest. Further, $F$ may not connect $v_0$ to neither $v$ nor $w$, so $v,w\notin F_0$ lie together in the other connected component.
	The signs conspire to $\det \tilde{\IM}(S,\set{v}) \det \tilde{\IM}(S,\set{w}) = (-1)^{v+w}$, because	\begin{align*}
		\det \tilde{\IM}(S, \set{v})
		&=
		\det \left(\cdots \not{c_v} \cdots c_w \cdots \right)
		= - \sum_{u \notin F_0 \cup \set{w}}
			\det \left( \cdots \not{c_v} \cdots c_u \cdots \right)
		\\
		&= -\det \left(\cdots \not{c_v}\cdots c_v \cdots \right)
		=
		(-1)^{v+w} \det \tilde{\IM}(S, \set{w})
	\end{align*}
	and $\sum_{u \notin F_0} c_u = 0$ if $c_u$ ($u \neq v_0$) denote the columns of $\tilde{\IM}(S, \set{v})$. Thus we conclude
	\begin{equation*}
		\psipol \cdot p^{\Transpose} \hat{\LM}^{-1} p
		= \sum_F \prod_{e\notin F} \SP_e \sum_{v,w \notin F_0} \ExMom(v) \ExMom(w)
		= \sum_F \prod_{e\notin F} \SP_e \left[ \sum_{v \notin F_0} \ExMom(v) \right]^2
		= \sum_F \ExMom(F)^2 \prod_{e\notin F} \SP_e
	\end{equation*}
	since by momentum conservation, 
	$
		\ExMom(F)
		= \sum_{v\in F_0} \ExMom(v)
		= -\sum_{v\notin F_0} \ExMom(v)
	$.
\end{proof}
\begin{remark}
	\label{remark:symanzik-properties} %
	Note the following elementary properties of Symanzik polynomials:
	\begin{enumerate}
		\item
$\psipol$ is independent of masses and momenta and linear in each individual $\SP_e$. The respective linear and constant coefficients are related to contractions and deletions:
\begin{equation}\label{eq:psi-contraction-deletion}
	\psipol_G = \SP_e \psipol_{G\setminus e} + \psipol_{G/e}.
\end{equation}
In the special case of a loop ($\source(e)=\target(e)$), this is modified to $\psipol_G = \SP_e \psipol_{G\setminus e}$.

		\item $\phipol$ is linear in $\SP_e$ only for zero mass $m_e = 0$ and otherwise quadratic. If $m_e=0$, the contraction-deletion formula \eqref{eq:psi-contraction-deletion} holds for $\phipol$ as well.

		\item
			Both $\psipol$ and $\phipol$ are homogeneous in the Schwinger parameters of degrees
			\begin{equation}
				\deg (\psipol)
				=
				\loops{G}
				\quad\text{and}\quad
				\deg(\phipol)
				=
				\loops{G}+1
				.
				\label{eq:phipsi-degrees}%
			\end{equation}
			For connected $G$, this \emph{loop number} is
			$
				\loops{G}
				\urel{\eqref{eq:loop-number}}
				\abs{\edges} - \abs{\vertices} + 1
			$.
	\end{enumerate}
\end{remark}
\begin{figure}
	\centering
	$ C_2 = \Graph[0.55]{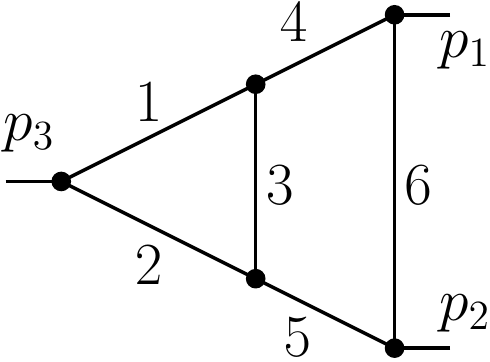} $ \qquad
	$ C_2 \contract 3 = \Graph[0.55]{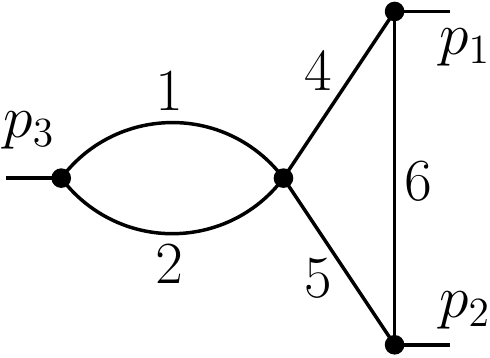} $ \qquad
	$ C_2 \setminus 3 = \Graph[0.55]{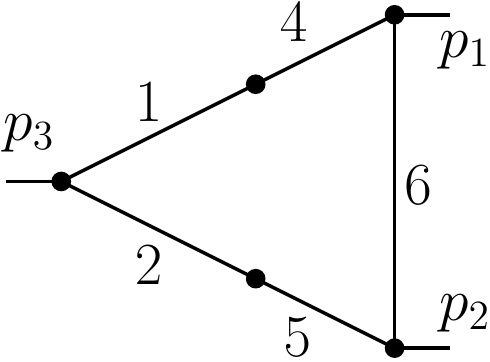} $
	\caption{Contraction and deletion of an edge.}%
	\label{fig:triladder-contraction-deletion}%
\end{figure}%
\begin{example}\label{ex:psiphi-C2}
	Consider the two-loop triangle ladder $C_2$ from figure~\ref{fig:triladder-contraction-deletion}. The first Symanzik polynomial of a cycle is just the sum of all Schwinger parameters, so the contraction and deletion of edge $3$ give
	\begin{equation}
		\psipol_{C_2} = \SP_3 \psipol_{C_2 \setminus 3} + \psipol_{C_2/3}
		= \SP_3 \left( \SP_1 + \SP_2 + \SP_4 + \SP_5 + \SP_6 \right)
		+ (\SP_1 + \SP_2)(\SP_4 + \SP_5 + \SP_6).
		\label{eq:psipol_C2}%
	\end{equation}
	If we let $p_3^2 = 1$, $p_1^2 = p_2^2 = 0$, and set all internal masses $m_e = 0$ to vanish, then we compute from $\phipol_{C_2} = \SP_3 \phipol_{C_2\setminus 3} + \phipol_{C_2\contract 3}$ the second Symanzik polynomial as
	\begin{equation}
		\phipol_{C_2} 
		= \SP_3 (\SP_1 + \SP_4)(\SP_2 + \SP_5)
		+\SP_1 \SP_2(\SP_4 + \SP_5 + \SP_6) + \SP_4 \SP_5 (\SP_1 + \SP_2).
		\label{eq:phipol_C2}%
	\end{equation}
\end{example}

\subsection{Projective integrals}
\label{sec:projective-integrals}%
The above mentioned homogeneity of the Symanzik polynomials \eqref{eq:graph-polynomials-combinatorial} allows us to carry out one integration in \eqref{eq:feynman-integral-parametric} as follows:
Choose any hyperplane $H(\SP) \defas \sum_e H_e \SP_e$ with $H_e \geq 0$ not all zero and insert $1 = \int_0^{\infty} \dd\Pscale\ \delta\left( \Pscale - H(\SP) \right)$ into \eqref{eq:feynman-integral-parametric}. After substituting $\SP_e$ for $\lambda \SP_e$,
\begin{equation*}
	\FR(G)
	= \left[
			\prod_e \int_{0}^{\infty} \frac{\SP_e^{\EP_e-1} \dd \SP_e}{\Gamma(\SP_e)}
		\right]
		\frac{\delta\left( 1- H(\SP)\right)}{\psipol^{\dimension/2}}
		\int_0^{\infty} \Pscale^{\sdd - 1} e^{-\Pscale \phipol / \psipol} \dd \Pscale
\end{equation*}
where the \emph{superficial degree of divergence} $\sdd$ of $G$ is given by
\begin{equation}
	\label{eq:def-sdd} %
	\sdd
	\defas
		\sum_{e \in \edges} \EP_e
		- \frac{\dimension}{2} \loops{G}
	.
\end{equation}
Hence the integral over $\Pscale$ gives Euler's $\Gamma$-function such that
\begin{align}
	\FR(G)
	=& \frac{\Gamma(\sdd)}{\prod_{e} \Gamma(\EP_e)}
		\int \Omega \cdot I_G
	\quad\text{where}
	\label{eq:feynman-integral-projective} %
	\\
	\int \Omega
	\defas&
		\left[
			\prod_{e} \int_{0}^{\infty} \dd \SP_e
		\right]
		\delta\left( 1- H(\SP) \right)
	\quad\text{and}\quad
	I_{G}
	\defas
		\frac{1}{\psipol^{\dimension/2}} 
		\left( \frac{\psipol}{\phipol} \right)^{\sdd}
		\prod_{e} \SP_e^{\EP_e - 1}
	.
	\label{eq:projective-delta-form-integrand} %
\end{align}
By construction, the integral \eqref{eq:feynman-integral-projective} does not depend on the choice of $H$. This fact is sometimes called \emph{Cheng-Wu theorem} and applies to the integral $\int \Omega \cdot I$ whenever the integrand $I(\Pscale \SP) = \Pscale^{-\abs{\edges}} \cdot I(\SP)$ is homogeneous. In fact, $H$ induces a bijection
\begin{equation*}
	\R_{+}^{\abs{\edges}}
	\longrightarrow
		\R_+ \times \RP_{+}^{\abs{\edges}-1}
	,\ 
	\SP
	\mapsto
	\big(
		H(\SP),
		[\SP]
	\big)
	\quad\text{with inverse}\quad
	\big(
		\Pscale,
		[\SP]
	\big)
	\mapsto
	\Pscale \cdot \frac{\SP}{H(\SP)}
	,
\end{equation*}
between the integration domain of \eqref{eq:feynman-integral-parametric} and $\R_{+} \defas \setexp{\Pscale \in \R}{\Pscale > 0}$ times the positive piece
$
	\RP_{+}^{\abs{\edges}-1}
	\defas
		\setexp{[\SP]}{\SP_1,\ldots,\SP_{\abs{\edges}}>0}
$
of projective space. Here 
$
	[\SP] 
	\defas
	\big[\SP_1\mathbin{:}\cdots\mathbin{:}\SP_{\abs{\edges}}\big]
$
denote homogeneous coordinates. In the coordinates $(\Pscale, [\SP])$, the volume form splits as
\begin{equation*}
	\bigwedge_{e} \dd \SP_e
	= \Pscale^{\abs{\edges}-1} \dd \Pscale
		\wedge
		\Omega_{H}
	,
	\quad\text{where}\quad
	\Omega_{H}
	\defas
	\sum_{e=1}^{\abs{\edges}}
		(-1)^{e-1} \frac{\SP_e}{H}
		\bigwedge_{e' \neq e} \dd \left( \frac{\SP_{e'}}{H} \right)
\end{equation*}
defines a smooth volume form 
$
	\Omega_{H}
	\in
	\Omega^{\abs{\edges}-1}\left( \RP_+^{\abs{\edges}-1} \right)
$.
Hence \eqref{eq:feynman-integral-parametric} becomes
\begin{align}
	\FR(G)
	&= \int_{\RP_+^{\abs{\edges}-1}} \Omega_H
		\int_{0}^{\infty} \frac{\dd \Pscale}{\Pscale} \cdot \Pscale^{\sdd}
		\left( \frac{H^{\loops{G}}}{\psipol} \right)^{\dimension/2}
			e^{-\Pscale/H \cdot \phipol/\psipol}
			\prod_{e} \left( \frac{\SP_e}{H} \right)^{\EP_e - 1} \frac{1}{\Gamma(\EP_e)}
	\nonumber\\
	&=
		\frac{\Gamma(\sdd)}{\prod_{e} \Gamma(\EP_e)}
		\int_{\RP_+^{\abs{\edges}-1}} \Omega_{H} \cdot \left[ H^{\abs{\edges}} \cdot I_{G} \right]
	\label{eq:feynman-integral-projective-hyperplane} %
\end{align}
and its independence of $H$ follows immediately from $\Omega_{H'} = (H/H')^{\abs{\edges}} \cdot \Omega_{H}$ for any other hyperplane $H'$. In this sense, \eqref{eq:feynman-integral-projective} is the projective integral of the smooth, $H$-independent volume form $\Omega\cdot I_G = \Omega_H \cdot H^{\abs{\edges}} \cdot I_G$ on $\RP_{+}^{\abs{\edges}-1}$.\footnote{%
Beware that the affine form $\Omega = \Omega_H \cdot H^{\abs{\edges}} = \sum_e (-1)^{e-1} \SP_e \bigwedge_{e'\neq e} \dd \SP_{e'} $ is not homogeneous of degree zero. Well-defined forms on the projective space $\RP_+^{\abs{\edges}-1}$ are instead given by $\Omega_{H}$ itself and the product $H^{\abs{\edges}}\cdot I_G$ as shown in \eqref{eq:feynman-integral-projective-hyperplane}.}

Though this interpretation is very appealing to algebraic geometry, we will not dwell on it further. In the sequel we shall always refer to \eqref{eq:feynman-integral-projective} and exploit the invariance by choosing $H$ in \eqref{eq:projective-delta-form-integrand} as suitable to assist our needs.

\subsection{Spanning forest polynomials}
It is very useful to have combinatorial, graph-theoretic descriptions for the Symanzik polynomials and generalizations thereof at hand. One such tool are the spanning forest polynomials, which were introduced and discussed in detail in \cite{BrownYeats:SpanningForestPolynomials}. We recall
\begin{definition}
	\label{def:forestpolynom} %
	Let $P = \set{P_1,\ldots,P_k}$ denote a partition $P_1 \cupdot \cdots \cupdot P_k \subseteq \vertices(G)$ of a subset of the vertices of the graph $G$. Then the associated \emph{spanning forest polynomial} $\forestpolynom[G]{P}$ and its dual $\forestpolynomDual[G]{P}$ are given by
	\begin{equation}
		\forestpolynom[G]{P}
		\defas
			\sum_{F} \prod_{e \notin F} \SP_e
		\quad\text{and}\quad
		\forestpolynomDual[G]{P}
		\defas
			\sum_{F} \prod_{e \in F} \SP_e
		,
		\label{eq:def-forestpolynom} %
	\end{equation}%
	where the sums run over all spanning forests $F$ of $G$ with precisely $k = \abs{P}$ connected components $\comps(F) = \set{T_1,\ldots,T_k}$ such that $P_i \subseteq \vertices(T_i)$ for all $1 \leq i \leq k$ (note that this implies $T_i \cap \vertices(T_j) = \emptyset$ for $i \neq j$).
	We also write $\forestpolynom{P_1,\ldots,P_k} = \forestpolynom{P}$.
\end{definition}
\begin{table}
	$	\Graph[0.5]{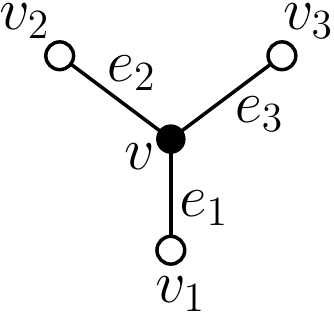} $ \hfill
	\begin{tabular}{rcccc}
		Partition:&
		$ \set{1,2}$,$\set{3} $ &
		$ \set{1,3}$,$\set{2} $ &
		$ \set{2,3}$,$\set{1} $ &
		$ \set{1}$,$\set{2}$,$\set{3}$ \\
		\cmidrule[1pt](lr){2-2}
		\cmidrule[1pt](lr){3-3}
		\cmidrule[1pt](lr){4-4}
		\cmidrule[1pt](lr){5-5}
		Forests:&
		$ \Graph[0.4]{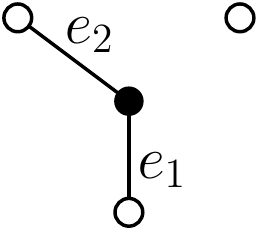} $ &
		$ \Graph[0.4]{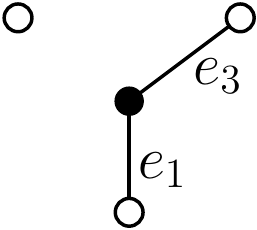} $ &
		$ \Graph[0.4]{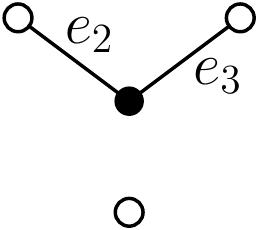}	$ &
		$ \Graph[0.4]{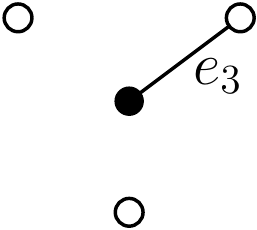} $ {\textcolor{lightgray}{\vrule width 1pt}}
		$ \Graph[0.4]{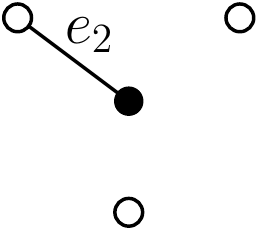} $ {\textcolor{lightgray}{\vrule width 1pt}}
		$ \Graph[0.4]{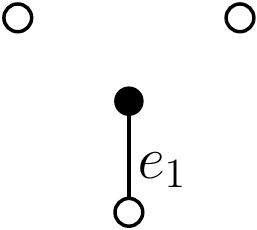} $ \\
	\end{tabular}
	\caption{The star and spanning forests for some partitions of its tips $\set{v_1,v_2,v_3}$.}%
	\label{tab:star-forests}%
\end{table}%
\begin{example}\label{ex:forestpolynom}
	The forests of the star graph contributing to $\forestpolynom[\StarSymbol]{P}$ for selected partitions are shown in table~\ref{tab:star-forests}. We read off the forest polynomials
	\begin{equation*}
		\forestpolynom[\StarSymbol]{\set{1,2},\set{3}}
		= \SP_3
		,\ 
		\forestpolynom[\StarSymbol]{\set{1,3},\set{2}}
		= \SP_2
		,\ 
		\forestpolynom[\StarSymbol]{\set{2,3},\set{1}}
		= \SP_1
		,\ 
		\forestpolynom[\StarSymbol]{\set{1},\set{2},\set{3}}
		= \SP_1 \SP_2 + \SP_1 \SP_3 + \SP_2 \SP_3.
	\end{equation*}
\end{example}
We also encountered these polynomials already: In the proof of theorem~\ref{theorem:graph-polynomials}, we expressed the inverse of the dual Laplace matrix as
\begin{equation}
	\widetilde{\LM}^{-1}_{v,w}
	= \psipol^{-1} \cdot \forestpolynom{\set{v_0}, \set{v,w}}.
	\label{eq:Laplace-inverse-forestpolynom}%
\end{equation}
We will use these polynomials in the following section and prominently for recursions in sections~\ref{sec:3pt-recursions} and \ref{sec:ladderbox-forestfunctions}. From \eqref{eq:loop-number} we find that the spanning forest polynomials are homogeneous of degrees
\begin{align}
	\deg\left( \forestpolynomDual[G]{P} \right)
	&= \abs{F}
	= \abs{\vertices(G)} - \abs{P}
	= 1 + \abs{\edges(G)} - \loops{G} - \abs{P}
	\quad\text{and}
	\label{eq:degree-forestpolynomDual}%
	\\
	\deg \left( \forestpolynom[G]{P} \right)
	&= \abs{\edges(G)} - \deg\left( \forestpolynomDual[G]{P} \right)
	= \abs{P} + \loops{G} - 1
	.
	\label{eq:degree-forestpolynom}%
\end{align}

\subsection{Position space and graphical functions}
\label{sec:position-space} %
Fourier transformation connects momentum space Feynman integrals \eqref{eq:feynman-integral-momentum} with a representation in position space. We usually prefer the former because the propagator $(k^2 + m^2)^{-1}$ is a rational function. Note that in position space, it translates not to a rational function but can be expressed in terms of a Bessel function instead. If we consider the massless case though, the position space propagator stays rational:
\begin{align}
	\posprop[\EP](x-y)
	&\defas
	\int \frac{\dd[\dimension] k}{(2\pi)^{\dimension}} \frac{e^{ik(x-y)}}{k^{2\EP}}
	= \frac{\Gamma(\dimension/2 - \EP)}{4^{\EP} \pi^{\dimension/2}\Gamma(\EP)} \cdot \norm{x-y}^{2\EP - \dimension}
	.
	\label{eq:position-space-propagator} %
\end{align}
In particular note that for $\EP = 1$, we get the propagators
\begin{equation}
	\posprop(x)
	\defas
		\posprop[1](x)
	= \begin{cases}
			\frac{1}{4\pi^2} \norm{x}^{-2}	& \text{when $\dimension=4$ and} \\
			\frac{1}{4 \pi^3} \norm{x}^{-4}	& \text{when $\dimension=6$.} \\
		\end{cases}
	\label{eq:position-space-propagators} %
\end{equation}
Therefore, up to the replacement $\EP_e \mapsto \dimension/2 - \EP_e$ and an overall prefactor, the Fourier transform of \eqref{eq:feynman-integral-momentum} is $\widehat{\Phi}(G)$ as defined in
\begin{proposition}
	\label{prop:parametric-position-space} %
	Let $G$ be a connected graph with a partition $\vertices = \vertices_{\Tint} \cupdot \vertices_{\Text}$ into internal and external vertices.\footnote{%
		In the momentum space representation, these are
		$\vertices_{\Text} = \setexp{v \in \vertices}{\ExMom(v) \neq 0}$.
	}
	Then
	\begin{equation}
		\widehat{\Phi}(G)
		\defas
		\prod_{\mathclap{v \in \vertices_{\Tint}}} \int_{\R^{\dimension}} \frac{\dd[\dimension] x_v}{\pi^{\dimension/2}}
		\cdot
		\prod_{e \in \edges}
		\norm{x_{\target(e)} - x_{\source(e)}}^{-2 \EP_e} \!\!
		=
		\prod_{e \in \edges}
			\int_0^{\infty} \frac{\SP_e^{\EP_e - 1} \dd\SP_e}{\Gamma(\EP_e)}
		\cdot
		\frac{e^{-\widehat{\phipol}/\widehat{\psipol}}}{\widehat{\psipol}^{\dimension/2}}
		\label{eq:parametric-position-space} %
	\end{equation}%
\nomenclature[PhiTilde]{$\widehat{\Phi}(G)$}{position space Feynman rules applied to $G$, equation~\eqref{eq:parametric-position-space}\nomrefpage}%
	where
	$	\widehat{\psipol}	= \forestpolynomDual{P}	$
	for
	$ P \defas \setexp{\set{v}}{v \in \vertices_{\Text}} $
	sums all $\abs{\vertices_{\Text}}$-forests $F$ with precisely one external vertex in each connected component. The polynomial $\widehat{\phipol}$ is given by
	\begin{equation}
		\widehat{\phipol}
		=
		\sum_{\substack{v,w\in\vertices_{\Text}\\v < w}}
			\norm{x_v - x_w}^2
			\cdot
			\forestpolynomDual{P_{v,w}}
		\quad\text{with}\quad
		P_{v,w}
		\defas
		\left( P \setminus \set{\set{v},\set{w}} \right) \cupdot \set{\set{v,w}}
		,
		\label{eq:phipol-position-space} %
	\end{equation}%
\nomenclature[phi]{$\widehat{\phipol}$}{position space polynomial, equation~\eqref{eq:phipol-position-space}\nomrefpage}%
	where $\forestpolynomDual{P_{v,w}}$ sums all forests of $\abs{\vertices_{\Text}}$ components, one of which contains both $v$ and $w$.
\end{proposition}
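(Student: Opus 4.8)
The plan is to reproduce the Gaussian-integration argument that led from \eqref{eq:feynman-integral-momentum} to \eqref{eq:feynman-integral-parametric}, but now integrating over the \emph{positions} $x_v$ of the internal vertices $v\in\vertices_{\Tint}$ rather than over momenta. First I would apply the Schwinger trick \eqref{eq:Schwinger-trick} to each factor, with $P = \norm{x_{\target(e)}-x_{\source(e)}}^2$, writing
\[
	\norm{x_{\target(e)}-x_{\source(e)}}^{-2\EP_e}
	= \frac{1}{\Gamma(\EP_e)}\int_0^{\infty} \SP_e^{\EP_e-1}\, e^{-\SP_e\norm{x_{\target(e)}-x_{\source(e)}}^2}\dd\SP_e .
\]
After exchanging the order of integration the entire $x$-dependence sits in one Gaussian factor $e^{-\sum_e \SP_e\norm{x_{\target(e)}-x_{\source(e)}}^2}$, whose exponent is $-x^{\Transpose} L x$ for the weighted graph Laplacian $L \defas \IM^{\Transpose}\DM\IM$, using the incidence identity $x_{\target(e)}-x_{\source(e)} = \sum_v \IM_{e,v}x_v$.

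Next I would split $x = (x_{\Tint},x_{\Text})$ into internal and external positions and complete the square in $x_{\Tint}$, so that $-x^{\Transpose} L x = -(x_{\Tint}-\mu)^{\Transpose} L_{\Tint,\Tint}(x_{\Tint}-\mu) - x_{\Text}^{\Transpose}\Sigma\, x_{\Text}$, where $L_{\Tint,\Tint}$ is the principal submatrix on the internal vertices, $\mu = -L_{\Tint,\Tint}^{-1}L_{\Tint,\Text}x_{\Text}$, and $\Sigma \defas L_{\Text,\Text}-L_{\Text,\Tint}L_{\Tint,\Tint}^{-1}L_{\Tint,\Text}$ is the Schur complement. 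Carrying out the $\dimension$ independent Gaussian integrations over $x_{\Tint}$ produces $\pi^{\dimension\abs{\vertices_{\Tint}}/2}\,(\det L_{\Tint,\Tint})^{-\dimension/2}$; the factor $\pi^{-\dimension/2}$ attached to each internal integral in \eqref{eq:parametric-position-space} cancels the $\pi$-power, leaving $(\det L_{\Tint,\Tint})^{-\dimension/2}$ together with the residual exponent $-x_{\Text}^{\Transpose}\Sigma\, x_{\Text}$.

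It remains to identify the two factors combinatorially, and here I would invoke the matrix-tree theorem~\ref{theorem:matrix-tree} exactly as in the proof of theorem~\ref{theorem:graph-polynomials}. For $\widehat{\psipol}$ I expand $\det L_{\Tint,\Tint} = \det(\IM_{\Tint}^{\Transpose}\DM\IM_{\Tint})$ by Cauchy--Binet as $\sum_{S}\big(\prod_{e\in S}\SP_e\big)\big(\det\IM(\edges\setminus S,\vertices_{\Text})\big)^2$, summing over $\abs{\vertices_{\Tint}}$-element edge subsets $S$; theorem~\ref{theorem:matrix-tree} makes each squared minor equal $1$ exactly when $S$ is a spanning forest with $\abs{\vertices_{\Text}}$ components, each containing one external vertex, and $0$ otherwise, which is precisely $\det L_{\Tint,\Tint} = \forestpolynomDual{P}$ for $P = \setexp{\set{v}}{v\in\vertices_{\Text}}$. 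For $\widehat{\phipol}$ I first note that $\Sigma$ inherits the vanishing row sums of $L$ (one checks $\Sigma\mathbbm{1}_{\Text}=0$ from $L\mathbbm{1}=0$), so $\sum_w \Sigma_{v,w}=0$ rewrites the quadratic form as $x_{\Text}^{\Transpose}\Sigma\, x_{\Text} = -\sum_{v<w}\Sigma_{v,w}\norm{x_v-x_w}^2$, already matching the shape of \eqref{eq:phipol-position-space}. Cramer's rule then borders $\det L_{\Tint,\Tint}$ to express $\widehat{\psipol}\,\Sigma_{v,w}$ as the Laplacian minor obtained by keeping rows $\vertices_{\Tint}\cup\set{v}$ and columns $\vertices_{\Tint}\cup\set{w}$, and a second application of theorem~\ref{theorem:matrix-tree} evaluates it as $-\forestpolynomDual{P_{v,w}}$, the forests now having one component containing both $v$ and $w$. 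Substituting $-\widehat{\psipol}\,\Sigma_{v,w} = \forestpolynomDual{P_{v,w}}$ turns the residual exponent into $-\widehat{\phipol}/\widehat{\psipol}$, as required.

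The main obstacle is this last identification: one must track precisely which bordered cofactor yields $\Sigma_{v,w}$ and verify that the signs conspire — just as in the $(-1)^{v+w}$ bookkeeping of theorem~\ref{theorem:graph-polynomials} — so that every surviving minor contributes $+\prod_{e\in F}\SP_e$ for a forest $F$ of the type counted by $P_{v,w}$, without cancellation. A useful consistency check along the way is the matching of homogeneous degrees from \eqref{eq:degree-forestpolynomDual}: $\deg\det L_{\Tint,\Tint} = \abs{\vertices_{\Tint}} = \deg\forestpolynomDual{P}$ and $\deg(\widehat{\psipol}\,\Sigma_{v,w}) = \abs{\vertices_{\Tint}}+1 = \deg\forestpolynomDual{P_{v,w}}$, which confirms that the bijection between surviving minors and spanning forests is degree-correct. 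Note the pleasant duality with the momentum-space computation: there the inverse dual Laplacian produced the polynomials $\forestpolynom{\cdots}$ via \eqref{eq:Laplace-inverse-forestpolynom}, whereas here the Schur complement of the Laplacian itself produces the dual polynomials $\forestpolynomDual{\cdots}$.
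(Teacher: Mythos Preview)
Your proposal is correct and follows the same overall architecture as the paper (Schwinger trick, completion of the square in $x_{\Tint}$, Gaussian integration, then combinatorial identification via the all-minors matrix-tree theorem). The treatment of $\widehat{\psipol}=\det L_{\Tint,\Tint}$ is identical. The genuine difference lies in the identification of $\widehat{\phipol}$.

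The paper does not work with the Schur complement $\Sigma$ directly. Instead it first evaluates the entries of the \emph{inverse} $\widehat{\psipol}\,(\LM_{\Tint}^{-1})_{a,b}=\forestpolynomDual{\{a,b\},P}$ for internal $a,b$ via matrix-tree, and then builds $(B^{\Transpose}\LM_{\Tint}^{-1}B)_{v,w}$ as a double sum over edges $e=\{v,a\}$, $f=\{w,b\}$ touching the external vertices. A short combinatorial bijection (add $e,f$ to the forest) shows this equals $\forestpolynomDual{P_{v,w}}+\widehat{\psipol}\,(\LM_{\Text})_{v,w}$; the diagonal case $v=w$ is handled separately, and only after recombining both cases does the $\|x_v-x_w\|^2$ structure emerge. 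Your route is more direct: the observation $\Sigma\,\mathbbm{1}_{\Text}=0$ gives the $\|x_v-x_w\|^2$ form immediately, and the Schur determinant identity $\det L[\vertices_{\Tint}\cup\{v\},\vertices_{\Tint}\cup\{w\}]=\widehat{\psipol}\,\Sigma_{v,w}$ (what you call ``Cramer's rule'') reduces the off-diagonal entry to a single bordered Laplacian minor, to which the all-minors matrix-tree theorem applies in one stroke. This bypasses the edge-bijection argument and the separate diagonal/off-diagonal analysis. The price is exactly the sign issue you flag: one must show that every contributing forest $F$ (those compatible with $P_{v,w}$) enters with the same sign $\sigma(F)$. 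This is done by the same column-sum trick as in the proof of theorem~\ref{theorem:graph-polynomials}: since $v$ and $w$ lie in the same component of $F$, adding the internal columns of that component to $c_v$ yields $-c_w$, giving the fixed relative sign. Both approaches thus rest on the identical technical core.
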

\begin{proof}
	The Schwinger trick \eqref{eq:Schwinger-trick} introduces the integrals $\int_0^{\infty} \frac{\SP_e^{\EP_e - 1} \dd \SP_e}{\Gamma(\EP_e)}$ and the factor
	\begin{equation*}
		\exp\Big[ -\sum_e \SP_e \left( x_{e_1} - x_{e_2} \right)^2 \Big]
		=
		\exp \Big( -x^{\Transpose} \LM x \Big),
	\end{equation*}
	throughout this proof we let $\LM \defas \IM^{\Transpose} \DM \IM$ instead of \eqref{eq:laplace-matrix} (without deleting a fixed vertex $v_0$).
	When we split the vector $x = (x_v)_{v \in \vertices} = (x_{\Tint}, x_{\Text})$ and this matrix
	$
		\LM = \Big(\begin{smallmatrix}
			\LM_{\Tint} & -B \\
			-B^{\Transpose}	&	\LM_{\Text} \\
		\end{smallmatrix} \Big)
	$
	into the internal and external vertex positions, completing the square
	\begin{equation*}
		x^{\Transpose} \LM x
		= \left( x_{\Tint} - \LM_{\Tint}^{-1} B x_{\Text} \right)^{\Transpose} \LM_{\Tint} \left( x_{\Tint} - \LM_{\Tint}^{-1} B x_{\Text} \right)
		+ x_{\Text}^{\Transpose} \left(
				\LM_{\Text} - B^{\Transpose} \LM_{\Tint}^{-1} B
			\right) x_{\Text}
	\end{equation*}
	in the Gau{\ss}ian integral
	$
		\prod_{v \in \vertices_{\Tint}} \int_{\R^{\dimension}} \frac{\dd[\dimension] x_v}{\pi^{\dimension/2}}
		\cdot
		\exp\left( -x^{\Transpose} \LM x \right)
	$
	proves \eqref{eq:parametric-position-space} with
	\begin{equation}
		\widehat{\psipol} = \det \LM_{\Tint}
		\quad\text{and}\quad
		\widehat{\phipol}
		=
			\widehat{\psipol} \cdot
			x_{\Text}^{\Transpose} \left( \LM_{\Text} - B^{\Transpose} \LM_{\Tint}^{-1} B \right) x_{\Text}
		.
		\label{eq:position-polynomials-determinant} %
	\end{equation}
	Since $\det \LM_{\Tint} = \det \LM(\vertices_{\Text},\vertices_{\Text})$, we consider minors of $\LM$ from deletion of rows $W$ and columns $W'$ of vertices. To apply the matrix-tree theorem~\ref{theorem:matrix-tree}, we rewrite \eqref{eq:laplace-matrix} as
	\begin{equation*}
			\det \LM(W,W')
		= \det \DM \cdot \det
			\begin{pmatrix}
				\DM^{-1} & \IM(W') \\
				-\IM(W)^{\Transpose} & 0 \\
			\end{pmatrix}
		= \sum_{S \subseteq E} 
			\det \begin{pmatrix}
				0 & \IM(S, W') \\
				-\IM(S, W)^{\Transpose} & 0 \\
			\end{pmatrix}
			\cdot
			\prod_{e \notin S} \SP_e
	\end{equation*}
	and conclude, just as in the proof of theorem~\ref{theorem:graph-polynomials}, that
	\begin{equation}\begin{split}
		\det \LM(W, W')
		&=
			\sum_F \sigma(F) \prod_{e \in F} \SP_e
		\quad\text{with signs}
		\\
		\sigma(F)
		&=
			\det \IM(\edges \setminus F, W)
			\cdot
			\det \IM(\edges \setminus F, W')
		\in
			\set{1,-1}
		\label{eq:all-minors-matrix-tree} %
	\end{split}
	\end{equation}
	where $F$ runs over all forests that contain precisely one vertex of $W$ and one vertex of $W'$ in each connected component (i.\,e. $\abs{\comps(F)} = \abs{W} = \abs{W'}$).
	This formula is also known as the \emph{all-minors matrix-tree theorem} \cite{BognerWeinzierl:GraphPolynomials}. As an immediate consequence, we read off our claimed formula $\widehat{\psipol} = \forestpolynomDual{P}$ upon setting $W=W' = \vertices_{\Text}$.

	To interpret $\widehat{\phipol}$, just as in the proof of theorem~\ref{theorem:graph-polynomials} we compute for any $a,b \in \vertices_{\Tint}$
	\begin{equation*}
		\widehat{\psipol} \cdot (\LM_{\Tint}^{-1})_{a,b}
		= (-1)^{a+b} \cdot \det \LM(\vertices_{\Text} \cup \set{a}, \vertices_{\Text} \cup \set{b})
		= (-1)^{a+b} \sum_F \sigma(F) \prod_{e \in F} \SP_e
	\end{equation*}
	using \eqref{eq:all-minors-matrix-tree} with $W= \vertices_{\Text} \cup \set{a}$, $W' = \vertices_{\Text} \cup \set{b}$ and find $\sigma(F) = (-1)^{a+b}$. In short, $\widehat{\psipol} \cdot (\LM_{\Tint}^{-1})_{a,b} = \forestpolynomDual{\set{a,b},P}$. For distinct $v,w \in \vertices_{\Text}$ as shown in figure~\ref{fig:B-LM-B-v,w},
	\begin{equation}
		\widehat{\psipol} \cdot
		\left( B^{\Transpose} \LM_{\Tint}^{-1} B \right)_{v,w}
		= \sum_{a, b \in \vertices_{\Tint}}
			\forestpolynomDual{\set{a,b},P}
			\cdot
			\sum_{\substack{
					e=\set{v,a} \in \edges \\
					f=\set{w,b} \in \edges
			}}	\SP_e \SP_f 
		= \forestpolynomDual{P_{v,w}} + \widehat{\psipol} \cdot \left(\LM_{\Text} \right)_{v,w} 
		\label{eq:B-LM-B-v,w} %
	\end{equation}
	where we sum over (possibly multiple) edges $e,f$ connecting $v,w$ to $a,b$.
	Note that for any forest $F$ contributing to $\forestpolynomDual{\set{a,b},P}$, $F' \defas F \cupdot \set{e,f}$ is also a forest and contributes to $\forestpolynomDual{P_{v,w}}$. The last equality in \eqref{eq:B-LM-B-v,w} follows since each such $F'$ occurs exactly once as $F$ can be reconstructed from $F'$ by removing the unique edges $e,f \in F'$ that are first and last in the path connecting $v$ and $w$ in $F'$.
	Only $F'$ which contain an edge $e$ that connects the external $v$ and $w$ directly can not occur this way and must be subtracted (in this case $F' \setminus \set{e}$ are precisely the forests of $\widehat{\psipol} = \forestpolynomDual{P}$).

	Similarly we obtain (see figure~\ref{fig:B-LM-B-v,v})
	\begin{equation}
		\widehat{\psipol} \cdot
		\left( B^{\Transpose} \LM_{\Tint}^{-1} B \right)_{v,v}
		= \sum_{a, b \in \vertices_{\Tint}}
			\forestpolynomDual{\set{a,b},P}
			\cdot
			\sum_{\substack{
					e=\set{v,a} \in \edges \\
					f=\set{v,b} \in \edges
			}}	\SP_e \SP_f 
			=
			\sum_{F'} \prod_{e' \in F'} \SP_{e'} 
			\cdot
			\sum_{v \in f \in C} \SP_f
		\label{eq:B-LM-B-v,v-1}
	\end{equation}
	where $F' \defas F \cupdot \set{e}$ runs over forests whose connected components partition $\vertices_{\Text}$ into the singletons $P$ (as before $F$ shall be a forest contributing to $\forestpolynomDual{\set{a,b},P}$). The edges $f$ must connect $v$ with a vertex $b$ in the same connected component $C$ of $F'$ that $v$ lies in.
	If $b$ lies in another component $C'$ let $\set{w} = C' \cap \vertices_{\Text}$ (so $w \neq v$), then the forest $F' \cupdot \set{f}$ contributes to the partition $P_{v,w}$ such that
	\begin{equation}
		\widehat{\psipol} \cdot
		\left( B^{\Transpose} \LM_{\Tint}^{-1} B \right)_{v,v}
		=
			\widehat{\psipol} \cdot
				\left( \LM_{\Text} \right)_{v,v}
			- \sum_{w \in \vertices_{\Text} \setminus \set{v}}
				\forestpolynomDual{P_{v,w}}
		.
		\label{eq:B-LM-B-v,v-2} %
	\end{equation}
	Plugging \eqref{eq:B-LM-B-v,w} and \eqref{eq:B-LM-B-v,v-2} into \eqref{eq:position-polynomials-determinant} we finally arrive at
	\begin{equation*}
		\widehat{\phipol}
		= \sum_{v \in \vertices_{\Text}} 
				x_v^2 \cdot \sum_{w \neq v \in \vertices_{\Text}} \forestpolynomDual{P_{v,w}}
		-2\sum_{\mathclap{v<w \in \vertices_{\Text}}}
				x_v^{\Transpose} x_w \cdot \forestpolynomDual{P_{v,w}}
		=	\sum_{\mathclap{v<w \in \vertices_{\Text}}}
				\norm{x_v - x_w}^2 \cdot \forestpolynomDual{P_{v,w}}
		. \qedhere
	\end{equation*}
\end{proof}
\begin{figure}
	\subfloat[%
		Adding $e$ and $f$ to the forest $F$ in \eqref{eq:B-LM-B-v,w} yields a forest contributing to $\forestpolynomDual{P_{v,w}}$. Grey areas indicate the connected components of $F$, each of which contains precisely one external vertex.%
	]{\parbox{0.43\linewidth}{\centering%
		$\Graph[0.57]{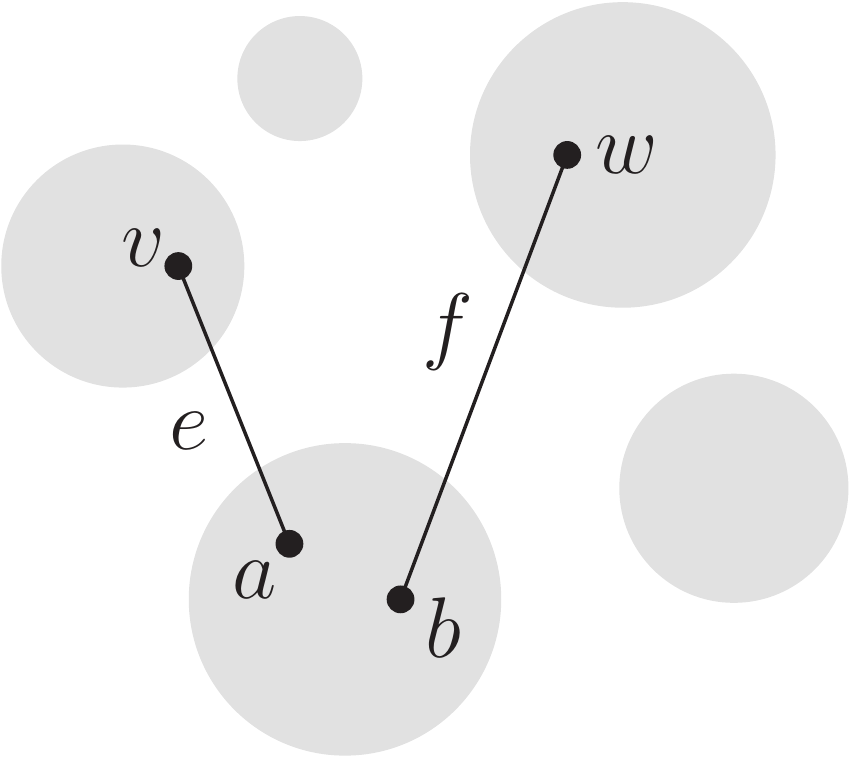}$%
		\label{fig:B-LM-B-v,w} %
	}}%
	\hfill%
	\subfloat[%
		We depict the connected components of $F'$ for \eqref{eq:B-LM-B-v,v-1}, $b$ must lie in $C$. When we extend the sum to all edges $f$ incident to $v$, additional contributions arise when $f$ connects to a different component $C'$ (indicated by the dashed line $f'$).%
	]{\parbox{0.55\linewidth}{\centering%
		$\Graph[0.5]{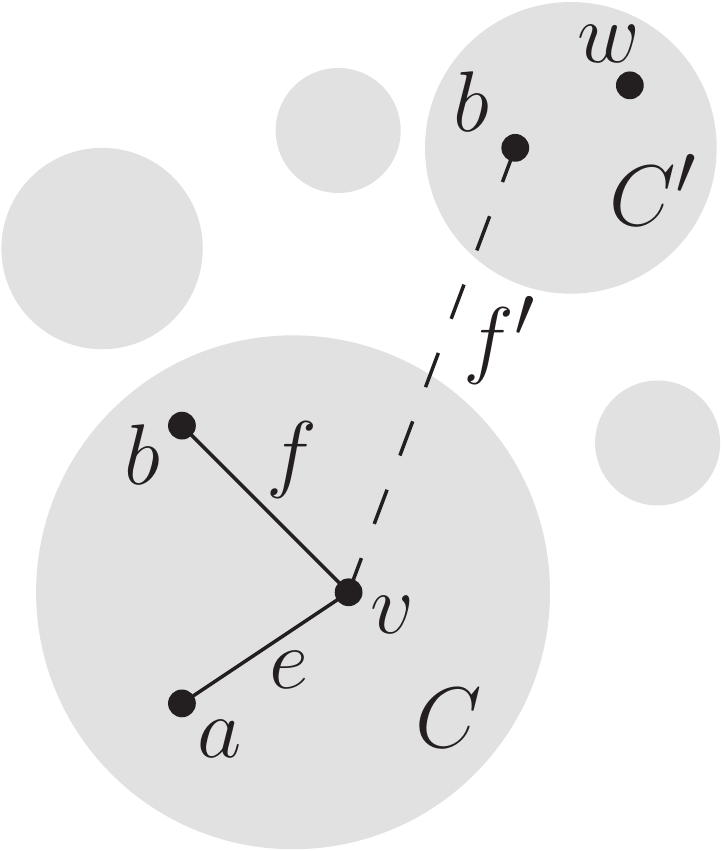}$%
		\label{fig:B-LM-B-v,v} %
	}}%
	\caption{Illustration of the proof of proposition~\ref{prop:parametric-position-space}.}%
	\label{fig:B-LM-B} %
\end{figure}
The formulas \eqref{eq:parametric-position-space} and \eqref{eq:phipol-position-space} can already be found in \cite{Nakanishi:GraphTheoryFeynmanIntegrals}. We will exploit this combinatorial description in section~\ref{sec:vw3-applications}, but note that in general the computation of $\widehat{\psipol}$ and $\widehat{\phipol}$ in terms of the determinants \eqref{eq:position-polynomials-determinant} could be more efficient than the explicit enumeration of spanning forests.
\begin{corollary}
\label{cor:position-space-projective}%
From \eqref{eq:degree-forestpolynomDual} we find 
$\deg\left( \widehat{\psipol} \right) = \abs{\vertices_{\Tint}}$ and 
$\deg\left( \widehat{\phipol} \right) = \abs{\vertices_{\Tint}}+1$, so the projective version of \eqref{eq:parametric-position-space}---the analogue of \eqref{eq:feynman-integral-projective}---reads
\begin{equation}
	\widehat{\Phi}(G)
	=	\frac{\Gamma(\widehat{\sdd})}{\prod_e \Gamma(\EP_e)}
		\int \frac{\Omega}{\widehat{\psipol}^{\dimension/2}}
			\left(\frac{\widehat{\psipol}}{\widehat{\phipol}} \right)^{\widehat{\sdd}}
			\prod_e \SP_e^{\EP_e - 1}
	\quad\text{where}\quad
	\widehat{\sdd}
	\defas
		\sum_e \EP_e - \dimension/2 \cdot \abs{\vertices_{\Tint}}
	.
	\label{eq:position-space-projective}%
\end{equation}
\end{corollary}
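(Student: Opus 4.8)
The plan is to read off the degrees of $\widehat{\psipol}$ and $\widehat{\phipol}$ directly from proposition~\ref{prop:parametric-position-space} and then to exploit that the parametric integral \eqref{eq:parametric-position-space} has exactly the same shape as the momentum-space integral \eqref{eq:feynman-integral-parametric}, so that the projective reduction of section~\ref{sec:projective-integrals} transcribes verbatim.

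For the degrees I would argue as follows. Since $\widehat{\psipol} = \forestpolynomDual{P}$ with $P = \setexp{\set{v}}{v \in \vertices_{\Text}}$, we have $\abs{P} = \abs{\vertices_{\Text}}$, and formula \eqref{eq:degree-forestpolynomDual} gives $\deg(\widehat{\psipol}) = \abs{\vertices(G)} - \abs{P} = \abs{\vertices_{\Tint}}$. For $\widehat{\phipol}$, each summand of \eqref{eq:phipol-position-space} carries a factor $\forestpolynomDual{P_{v,w}}$; because $P_{v,w}$ fuses the two singletons $\set{v}$ and $\set{w}$ into the single block $\set{v,w}$, it has $\abs{P_{v,w}} = \abs{\vertices_{\Text}} - 1$ parts, so \eqref{eq:degree-forestpolynomDual} yields $\deg(\forestpolynomDual{P_{v,w}}) = \abs{\vertices_{\Tint}} + 1$. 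The prefactors $\norm{x_v - x_w}^2$ do not involve the Schwinger parameters, whence $\deg(\widehat{\phipol}) = \abs{\vertices_{\Tint}} + 1$, as claimed.

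The decisive consequence is that $\widehat{\phipol}/\widehat{\psipol}$ is homogeneous of degree one in the $\SP_e$, precisely as $\phipol/\psipol$ was in the momentum-space case. I would therefore repeat the computation of section~\ref{sec:projective-integrals} line by line: insert $1 = \int_0^{\infty} \dd\Pscale\, \delta(\Pscale - H(\SP))$ into \eqref{eq:parametric-position-space} and substitute $\SP_e \mapsto \Pscale\SP_e$. Collecting powers of $\Pscale$ from the measure $\prod_e \SP_e^{\EP_e - 1}\dd\SP_e$ (giving $\Pscale^{\sum_e \EP_e}$), from $\widehat{\psipol}^{-\dimension/2}$ (giving $\Pscale^{-\dimension/2 \cdot \abs{\vertices_{\Tint}}}$ by the first degree count), and from the rescaled constraint $\delta(\Pscale - H(\SP)) = \Pscale^{-1}\delta(1 - H(\SP))$, the radial integral reduces to $\int_0^{\infty} \Pscale^{\widehat{\sdd} - 1} e^{-\Pscale\,\widehat{\phipol}/\widehat{\psipol}}\dd\Pscale = \Gamma(\widehat{\sdd})\,(\widehat{\psipol}/\widehat{\phipol})^{\widehat{\sdd}}$ with $\widehat{\sdd} = \sum_e \EP_e - \dimension/2 \cdot \abs{\vertices_{\Tint}}$, and the remaining integral over the projective simplex assembles into \eqref{eq:position-space-projective}.

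Because every step is a direct copy of the already-established momentum-space reduction, I anticipate no genuine difficulty; the only point requiring a little care is the degree count for $\widehat{\phipol}$, where one must remember that merging the two singleton blocks $\set{v},\set{w}$ of $P$ into the block $\set{v,w}$ lowers the number of parts by one and hence \emph{raises} the dual forest-polynomial degree by one via \eqref{eq:degree-forestpolynomDual}.
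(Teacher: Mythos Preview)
Your proposal is correct and follows exactly the approach the paper intends: the corollary is stated without a separate proof, the degree claims being read off from \eqref{eq:degree-forestpolynomDual} and the projective reduction being a verbatim rerun of section~\ref{sec:projective-integrals} with $(\psipol,\phipol,\loops{G},\sdd)$ replaced by $(\widehat{\psipol},\widehat{\phipol},\abs{\vertices_{\Tint}},\widehat{\sdd})$. You have simply made explicit what the paper leaves to the reader.
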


\begin{remark}[Dualization]
\label{rem:position-space-dual-parametric} %
	Inversion of the variables $\SP_e \mapsto \SP_e^{-1}$ transforms
	$\widehat{\psipol} \mapsto \forestpolynom{} \cdot \prod_{e} \SP_e^{-1}$
	and
	$\widehat{\phipol} \mapsto \phipol \cdot \prod_{e} \SP_e^{-1}$
	where
	$\forestpolynom{} \defas \forestpolynom{P}$
	and
	$\phipol \defas \sum_{v<w \in \vertices_{\Text}} \norm{x_a - x_b}^2 \cdot \forestpolynom{P_{a,b}}$.
	So with $\widehat{\EP}_e \defas \dimension/2 - \EP_e$, equations~\eqref{eq:parametric-position-space} and \eqref{eq:position-space-projective} take the form
	\begin{equation}
		\widehat{\Phi}(G)
		= \prod_{e} \int_0^{\infty} \frac{\SP_e^{\widehat{\EP}_e - 1} \dd \SP_e}{\Gamma(\EP_e)}
		\cdot \frac{e^{-\phipol/\forestpolynom{}}}{{\forestpolynom{}}^{\dimension/2}}
		= \frac{\Gamma(\widehat{\sdd})}{\prod_{e} \Gamma(\EP_e)}
			\int \frac{\Omega}{{\forestpolynom{}}^{\dimension/2}}
			\left( \frac{\forestpolynom{}}{\phipol} \right)^{\widehat{\sdd}}
			\prod_e \SP_e^{\widehat{\EP}_e - 1}
		.
		\label{eq:position-space-projective-dual} %
	\end{equation}
\end{remark}

\subsubsection{Graphical functions}
The translation invariance of \eqref{eq:phipol-position-space} means that we may restrict to $x_{v_0} = 0$ for a particular vertex $v_0 \in \vertices_{\Text}$. Furthermore, \eqref{eq:position-space-projective} is homogeneous like $\widehat{\Phi}(G, \Pscale x) = \Pscale^{-\widehat{\sdd}} \cdot \widehat{\Phi}(G,x)$ so it is enough to compute it for $\norm{x_{v_1}} = 1$, where $v_1 \in \vertices_{\Text} \setminus \set{v_0}$ is another external vertex.

In the case of $\abs{\vertices_{\Text}} = 3$ external vertices $\vertices_{\Text} = \set{v_0,v_1,v_z}$ this means
\begin{equation}
	\widehat{\Phi}(G) 
	=	\norm{x_{v_1} - x_{v_0}}^{-2\widehat{\sdd}} 
		\cdot 
		\gf[G](z,\bar{z})
	,
	\label{eq:position-space-graphical-function} %
\end{equation}
where $\gf[G](z,\bar{z})$ depends only on two ratios which we parametrize by two complex variables $z, \bar{z} \in \C$ subject to the conditions\footnote{For real Euclidean vectors $x_v$, they will be either conjugates $\bar{z} = \conjugate{z}$ or both real $z,\bar{z} \in \R$.}
\begin{equation}
	z \bar{z} = \frac{\norm{x_{v_z} - x_{v_0}}^2}{\norm{x_{v_1} - x_{v_0}}^2}
	\quad\text{and}\quad
	(1-z)(1-\bar{z}) = \frac{\norm{x_{v_z} - x_{v_1}}^2}{\norm{x_{v_1} - x_{v_0}}^2}
	.
	\label{eq:position-cross-ratios} %
\end{equation}
Note that $\gf[G](z,\bar{z})$ is given by formula~\eqref{eq:position-space-projective-dual} upon setting
\begin{equation}
	\phipol
	= 		\forestpolynom{\set{v_0,v_1},\set{v_z}}
			+ z\bar{z} \cdot \forestpolynom{\set{v_0,v_z},\set{v_1}}
			+ (1-z)(1-\bar{z}) \cdot \forestpolynom{\set{v_1,v_z},\set{v_0}}
	.
	\label{eq:phipol-graphical-function} %
\end{equation}
\begin{example}
	Edges between external vertices contribute to $f_G$ only through rational factors, so the triangle graph ($C_1$ in figure~\ref{fig:startriangle-added}) gives $f_{C_1} = \big[z\bar{z}(1-z)(1-\bar{z})\big]^{-1}$. 
	The first interesting function comes from the star (table~\ref{tab:star-forests}) whose forest polynomials we determined in example~\ref{ex:forestpolynom}. 
	If we set $v_0 \defas v_3$ and $v_z \defas v_2$, we find $\phipol = \SP_2 + z\bar{z} \SP_1 + (1-z)(1-\bar{z}) \SP_3$ and \eqref{eq:position-space-projective-dual} takes the form ($\widehat{\sdd} = 1$)
	\begin{equation*}\begin{split}
		\gf[\StarSymbol](z,\bar{z})
		&=
			\int_0^{\infty} \frac{\Omega}{(\SP_1 \SP_2 + \SP_1 \SP_3 + \SP_2 \SP_3)(\SP_2 + z\bar{z} \SP_1 + (1-z)(1-\bar{z}) \SP_3)}
		\\
		&=
			\int_0^{\infty} \frac{\Omega}{(\SP_1 z + \SP_2)(\SP_1 \bar{z} + \SP_2)} \log \frac{(\SP_1 + \SP_2)(z\bar{z}\SP_1 + \SP_2)}{(1-z)(1-\bar{z})\SP_1 \SP_2}
		\\
		&= \frac{1}{z-\bar{z}} \left[ 
				2 \Li_2(z) - 2\Li_2(\bar{z})
				+ \log (z\bar{z}) \log\frac{1-z}{1-\bar{z}}
			\right].
	\end{split}\end{equation*}
In the case of complex conjugated $\conjugate{z} = \bar{z}$, this is just $\gf[\StarSymbol](z,\conjugate{z}) = 2\BlochWigner(z)/\Imaginaerteil(z)$, in terms of the famous Bloch-Wigner dilogarithm function \cite{Zagier:Dilogarithm}
	\begin{equation}
		\BlochWigner
		\defas \Imaginaerteil\left( \Li_2(z) \right) + \arg(1-z) \log \abs{z}.
		\label{eq:BlochWigner}%
	\end{equation}%
	Note that its symmetries
	$\BlochWigner(z)
		= \BlochWigner(1-1/z)
		= \BlochWigner(1/(1-z))
		= - \BlochWigner(-z/(1-z))
		= - \BlochWigner(1-z)
		= - \BlochWigner(1/z)
	$ follow immediately from the integral representation of $\gf[\StarSymbol]$.
\end{example}
These \emph{graphical functions} $\gf[G]$ were recently introduced in \cite{Schnetz:GraphicalFunctions} and are very interesting for at least the following reasons:
\begin{enumerate}
	\item For complex conjugate $\conjugate{z} = \bar{z}$ they are single-valued real-analytic functions on $\C \setminus \set{0,1}$ and in many aspects behave similar to analytic functions.\footnote{For example, convergent integrals $\int_{\C} f(z,\conjugate{z})\ \dd z \wedge \dd \conjugate{z}$ can be computed by a residue theorem.}
	\item Often $\gf[G](z,\bar{z})$ can be computed explicitly in terms of multiple polylogarithms and a rich set of tools is available to perform such calculations.
	\item They are extremely powerful to evaluate vacuum periods in scalar field theory.
	\item Up to a rational prefactor, conformally invariant four-point integrals evaluate to graphical functions.
\end{enumerate}
For the first three points we refer to~\cite{Schnetz:GraphicalFunctions}; the application to (conformally invariant) supersymmetric Yang-Mills theory was demonstrated in~\cite{DDEHPS:LeadingSingularitiesOffShellConformal}. We will come back to the computation of periods in $\fieldphi^4$-theory in section~\ref{sec:ex-periods}.

\subsection{Tensor integrals}\label{sec:tensor-integrals}
Physical theories that contain not only scalar particles but also fields of higher spin (fermions, vector bosons, gravitons) lead to more general Feynman rules that introduce products $\prod_i k_{e_i}^{\LI_i}$ of momenta into the numerator of \eqref{eq:feynman-integral-momentum}, where $1\leq\LI_i\leq\dimension$ denote space-time indices.

Such \emph{tensor integrals} admit a Schwinger parametrization as well and explicit formulas are well-known \cite{Nakanishi:GraphTheoryFeynmanIntegrals,Tarasov:ConnectionBetweenFeynmanIntegrals,BergereZuber:RenormalizationFeynmanParametric,Smirnov:EvaluatingFeynmanIntegrals}. These parametric integrands have the form $P/(\psipol^n \phipol^m)$, where $P$ denotes some polynomial in the Schwinger variables $\SP_e$ and the exponents $n = \dimension/2 - \sdd + \delta_n$, $m=\sdd+\delta_m$ are shifted from their values for the scalar integral by integers $\delta_n,\delta_m \in \N_0$. Each monomial in $P$ thus gives the parametric integrand of the scalar integral but in dimension $\dimension + 2(\delta_n + \delta_m)$ and with shifted indices $\EP_e$.

Therefore, tensor integrals are just linear combinations of scalar integrals\footnote{In particular, there is no need to worry about \emph{irreducible scalar products} which are necessary if one does not allow for shifted dimensions.} and we do not need to discuss them any further. For completeness, let us still recall the idea behind their parametric representation, following \cite{KreimerSarsSuijlekom}.

\begin{remark}
	Tensor integrals carry highly non-trivial structures though. Most prominently, the sum of all contributions to the QED $\beta$-function features cancellations of transcendental numbers which appear in the individual diagrams \cite{BroadhurstDelbourgoKreimer:Unknotting}.
	Despite many efforts, this phenomenon still seems far from being understood.
\end{remark}

\subsubsection{Derivatives and auxiliary momenta}
We assign an auxiliary momentum $\AUX_e \in \R^{\dimension}$ to every edge and set $\ExMom(v) \defas - \sum_e \IM_{e,v} \AUX_e$. This already incorporates momentum conservation and we consider the scalar integral 
\begin{equation}
	\FR(G)
	=
	\prod_{e\in\edges}
			\int_{\R^{\dimension}} \frac{\dd[\dimension] k_e}{\pi^{\dimension/2}} \left( [k_e + \AUX_e]^2 + m_e^2 \right)^{-\EP_e}
	\prod_{v\in\vertices\setminus\set{v_0}}
			\pi^{\dimension/2}
			\delta^{(\dimension)}\left( \sum_{e\in\edges} \IM_{e,v} k_e \right)
	\label{eq:feynman-integral-momentum-tensor}%
\end{equation}
as a function of the unconstrained variables $\setexp{\AUX_e}{e\in\edges}$. A momentum in the numerator can be generated with the differential operator $\hat{\AUX}_{e,\LI} \defas -\frac{1}{2\SP_e} \frac{\partial}{\partial\AUX_{e}^{\LI}}$, because
\begin{equation*}
	\hat{\AUX}_{e,\LI}
	\frac{1}{\left[ (k_e + \AUX_e)^2 + m_e^2 \right]^{\EP_e}}
	=
	\frac{\EP_e}{\SP_e} \frac{(k_e + \AUX_e)_{\LI}}{\left[ (k_e + \AUX_e)^2 + m_e^2 \right]^{\EP_e+1}}.
\end{equation*}
In the parametric representation, the factor $\EP_e/\SP_e$ reduces $\SP_e^{\EP_e}/\Gamma(\EP_e+1)$ back to $\SP_e^{\EP_e-1} / \Gamma(\EP_e)$ and hence we can compute the tensor integral by replacing each numerator momentum $k_{e_i}^{\LI_i}$ with $\hat{\AUX}_{e_i}^{\LI_i}$ and let this operator act on the scalar integrand $I$ from \eqref{eq:projective-delta-form-integrand}.\footnote{Care is needed when some edge $e_i = e_j$ appears twice in the numerator, as additional \emph{Leibniz terms} need to be subtracted off again.} All one needs for this computation is the relation
\begin{equation}
	\hat{\AUX}_e^{\LI} 
	\phipol
	= 
	- \AUX_e^{\LI} \restrict{\psipol}{\SP_e = 0} + \sum_{f \neq e} (-1)^{e+f}\dodgson^{e,f} \SP_f \AUX_{f}^{\LI}
	\label{eq:tensor-differential-phipol}%
\end{equation}
in terms of the metric tensor $g^{\mu,\nu}$ ($=\delta_{\mu,\nu}$ in the Euclidean case) and the Dodgson polynomial $\dodgson^{e,f}$ introduced in definition~\ref{def:dodgson}. In terms of spanning forest polynomials,
\begin{equation}
	(-1)^{e+f+1} \dodgson^{e,f}
	= \forestpolynom{\set{\source(e),\source{f}},\set{\target(e),\target(f)}}
	- \forestpolynom{\set{\source(e),\target{f}},\set{\target(e),\source(f)}}
	\label{eq:dodgson-as-spanning-2-forest}%
\end{equation}
sums all forests $F$ such that both $F \cupdot \set{e}$ and $F \cupdot \set{f}$ are spanning trees, with a positive sign if $e$ and $f$ connect the two components of $F$ in the same direction and a negative sign otherwise. Worked examples can be found in \cite{KreimerSarsSuijlekom}.

\section{Divergences and analytic regularization}\label{sec:divergences}
The singularities of Feynman integrals $\FR(G,\Kinematics,\EP,\dimension)$ as functions of the \hypertarget{eqkinematics}{kinematics}
$\Kinematics = \set{m_e^2} \cup \set{\ExMom^2(F)}$
\label{eqkinematics}%
(internal masses and external momenta), the indices $\EP_e$ and the dimension $\dimension$ of space-time have been studied in great detail and are perfectly understood for Euclidean kinematics. It is well-known that plain power counting suffices to study the convergence of a Feynman integral, in the momentum \eqref{eq:feynman-integral-momentum} as well as the parametric representation \eqref{eq:feynman-integral-projective}.
This simplicity (combined with combinatorics of graphs) is in fact crucial to prove the renormalizability of a quantum field theory, but we will mostly be concerned with the computation of individual Feynman integrals in this thesis and comment on renormalization only in sections~\ref{sec:renormalization} and \ref{sec:ex-renormalized-parametric}.

As we shall recall below, absolute convergence of $\FR(G)$ is guaranteed in a non-empty domain $\Lambda_{G} \subset \C^{\abs{\edges}+1}$ (bounded by linear inequalities) of values $(\EP,\dimension) \in \Lambda_G$ of the indices and the dimension. Strikingly, the analytic continuation of $\FR(G)$ in these variables defines a meromorphic function on $\C^{\abs{\edges}+1}$ with singularities on hyperplanes. This \emph{analytic regularization} has been studied (dominantly in the parametric representation) in great detail both purely mathematically \cite{Speer:SingularityStructureGenericFeynmanAmplitudes,Speer:GeneralizedAmplitudes} and with a view towards physics, for example through the \emph{dimensional renormalization} scheme \cite{BreitenlohnerMaison:DimRenAction,BreitenlohnerMaison:DimRenMasslessI,BreitenlohnerMaison:DimRenMasslessII}. The special case of \emph{dimensional regularization} (keeping $\EP_e$ fixed and studying the dependence on $\dimension$ only) became particularly popular in the momentum space representation \cite{tHooftVeltman:RegularizationGaugeFields} and underlies the majority of all exact computations of Feynman integrals accomplished so far.

In section~\ref{sec:anareg} we show how this analytic continuation can be implemented directly on the level of the parametric integrand, which yields a representation of divergent Feynman integrals in terms of convergent ones. This relation extends the applicability of hyperlogarithms to singular, analytically regularized Feynman integrals. It is also interesting in itself and might be useful for other techniques as well. For example we will relate it to \emph{sector decomposition}.

In addition we will comment on restrictions and open problems in the case of Minkowski kinematics (in this metric, momentum squares $\ExMom(F)^2$ can be negative and may introduce additional and more complicated singularities).

\subsection{Euclidean power counting}
\subsubsection{Ultraviolet (UV) divergences}
We need conditions that guarantee absolute convergence of the Feynman integrals $\FR(G)$. These are easiest to obtain in the fully massive case, where the integrand of the momentum space representation \eqref{eq:feynman-integral-momentum} is smooth and divergences can arise only from the integration over large momenta. Indeed, in his excellent article \cite{Weinberg:HighEnergy} Weinberg proves
\begin{theorem}
	\label{theorem:Weinberg} %
	The scalar Feynman integral $\FR(G)$ from \eqref{eq:feynman-integral-momentum} is absolutely convergent provided that all propagators are massive $m_e > 0$ and that for all 1PI $\gamma \subseteq \edges$,
	\begin{equation}
		\sdd(\gamma)
		\urel{\eqref{eq:def-sdd}}
		\sum_{e \in \gamma} \EP_e - \frac{\dimension}{2} \loops{\gamma}
		> 0
		.
		\label{eq:UV-sdd} %
	\end{equation}
\end{theorem}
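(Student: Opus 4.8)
The plan is to work directly in momentum space and reduce everything to a power-counting analysis at large momenta. First I would eliminate the momentum-conservation constraints: solving the delta functions $\delta^{(\dimension)}(k_v)$ identifies the integration domain with $\R^{\dimension \loops{G}}$, on which one may pick a basis $\ell_1,\ldots,\ell_{\loops{G}}$ of independent loop momenta so that every edge momentum $k_e$ becomes an affine function of the $\ell_i$ and the fixed external momenta. Because every mass is strictly positive, each denominator $k_e^2 + m_e^2$ is bounded away from zero, so the integrand is a smooth, nowhere-singular function on $\R^{\dimension\loops{G}}$; the only possible source of divergence is the behaviour as the loop momenta tend to infinity. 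Moreover $\abs{(k_e^2+m_e^2)^{-\EP_e}} = (k_e^2+m_e^2)^{-\Realteil \EP_e}$, so absolute convergence is governed entirely by the real parts of the indices and I may assume the $\EP_e$ real and positive, the inequality \eqref{eq:UV-sdd} being understood for these.

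Next I would quantify the decay at infinity. Scaling a subset $S$ of the loop momenta uniformly, $\ell_i \mapsto \lambda \ell_i$ for $i \in S$ with $\lambda \to \infty$, the edges whose momentum genuinely depends on the scaled momenta form (the edge set of) a subgraph $\gamma$, on which $k_e^2$ grows like $\lambda^2$; the integrand therefore decays like $\lambda^{-2\sum_{e\in\gamma}\EP_e}$ while the measure of the $\abs{S}$ scaled $\dimension$-dimensional momenta contributes $\lambda^{\dimension\abs{S}}$. Since $\abs{S}$ can be taken equal to $\loops{\gamma}$ for the maximal subgraph $\gamma$ carrying that loop space, the radial integral converges at infinity exactly when $\dimension\loops{\gamma} - 2\sum_{e\in\gamma}\EP_e < 0$, i.e. when $\sdd(\gamma) > 0$. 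A union of $1$PI pieces joined by bridges carries no loop momentum on its bridges and has loop number additive over its $1$PI components, so the sharpest such conditions come precisely from the $1$PI subgraphs; hence the hypothesis \eqref{eq:UV-sdd} is exactly the statement that every one of these uniform scalings is integrable at infinity.

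Finally I would invoke Weinberg's power-counting theorem to upgrade these necessary one-parameter conditions to genuine absolute convergence. The hard part is precisely this last step: a general approach to infinity is not a single uniform scaling but a nested (hierarchical) one, in which different groups of loop momenta run off at different rates, and one must show that the asymptotic decay along every such direction is controlled by the family of subgraph degrees above. Weinberg's theorem organizes these directions into finitely many ``sectors'' indexed by flags of subspaces, computes the asymptotic coefficient in each, and produces a uniform integrable majorant once all these coefficients are negative. Reproducing this reduction---showing that the continuum of limiting directions collapses to the finite combinatorial check over $1$PI subgraphs, and constructing the dominating function---is the substantive analytic content; granting it, the conditions $\sdd(\gamma) > 0$ are exactly what the theorem requires, and absolute convergence of $\FR(G)$ follows.
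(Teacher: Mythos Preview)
Your proposal is correct and matches the paper's treatment: the paper does not supply an independent proof of this theorem either, but cites Weinberg's original article for the sufficiency after explaining (exactly as you do) why the one-parameter scaling argument makes the condition \eqref{eq:UV-sdd} necessary. Your honest acknowledgment that the substantive content lies in Weinberg's sector analysis of nested scalings is precisely the point the paper makes when it writes that ``the content of the theorem is the non-trivial sufficiency of this condition.''
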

We call $\sdd(\gamma)$ the \emph{superficial degree of ultraviolet divergence} of the subgraph $\gamma$. It describes the contribution to the integral \eqref{eq:feynman-integral-momentum} from the domain where all $k_e =  k_e' \sqrt{\Pscale}$ for edges $e \in \gamma$ approach infinity jointly as $\Pscale \rightarrow \infty$, while $k_e$ ($e \notin \gamma$) and $k_e'$ ($e \in \gamma$) stay fixed. There the integrand falls off like $\Pscale^{-\sum_{e \in \gamma} \EP_e}$ and the rescaling of $\loops{\gamma}$ independent loop momenta contributes $\Pscale^{\loops{\gamma} \dimension/2}$, so \eqref{eq:UV-sdd} is clearly necessary for absolute convergence (the content of the theorem is the non-trivial sufficiency of this condition).

Note that $\eqref{eq:UV-sdd}$ only needs to hold for $\gamma$ that are 1PI, since any $k_e$ ($e \in \gamma$) not contained in a loop in $\gamma$ is fixed by momentum conservation in terms of the $k_e$ with $e \notin \gamma$ and external $\ExMom(v)$'s.

A proof of theorem~\ref{theorem:Weinberg} in the parametric representation \eqref{eq:feynman-integral-parametric} is sketched in \cite{BergereLam:BPAlpha}, where also the renormalization is addressed directly in the parametric representation. We like to point out the modern treatment of ultraviolet divergences and their renormalization from the viewpoint of algebraic geometry \cite{BrownKreimer:AnglesScales}.
In fact, for our purpose it is important to understand the result \eqref{eq:UV-sdd} in the Schwinger parameters, where ultraviolet divergences correspond to singularities in \eqref{eq:feynman-integral-parametric} when $\SP_e \rightarrow 0$ for $e \in \gamma$. We will come back to this after mentioning the situation with vanishing masses.

\begin{example}
	\label{ex:dunce-divergences} %
	Take $G$ the dunce's cap from figure~\ref{fig:dunce-cocommutative-bijection} with unit indices $\EP_1 = \EP_2 = \EP_3 = \EP_4 = 1$ in $\dimension = 4 - 2\varepsilon$ dimensions. From $\sdd = 4 - \dimension = 2\varepsilon$ we see that $\FR(G)$ is superficially \emph{logarithmically} divergent ($\restrict{\sdd}{\varepsilon=0} = 0$).
	There is a single logarithmic ultraviolet subdivergence formed by the edges $3$ and $4$:
	\begin{equation*}
		\sdd(\set{3,4})
		= \EP_3 + \EP_4 - \dimension/2
		= \varepsilon
		.
	\end{equation*}
	Convergence of $\FR(G)$ therefore requires $\varepsilon>0$.
\end{example}
\begin{example}
	\label{ex:2loop-master-d6-divergences} %
	Consider the two-loop master integral $F$ from figure~\ref{fig:low-loop-masters} with unit indices $\EP_1=\cdots=\EP_5 = 1$ in $\dimension = 6 - 2\varepsilon$ dimensions. The superficial degree of ultraviolet divergence of $G$ is $\sdd(G) = 5 - 2\cdot (3-\varepsilon) = -1 + 2\varepsilon$ and we call $G$ \emph{quadratically} divergent ($\restrict{\sdd}{\varepsilon=0} = -1$).

	Furthermore, we find two logarithmic ultraviolet subdivergences, namely
	\begin{equation*}
		\sdd\left( \set{1,4,5} \right)
		=
		\sdd\left( \set{2,3,5} \right)
		=
		3 - (3 - \varepsilon)
		= \varepsilon
	\end{equation*}
	which are called \emph{overlapping}, since $\set{1,4,5} \cap \set{2,3,5} = \set{5} \neq \emptyset$. 
	Convergence of $\FR(G)$ in the momentum space \eqref{eq:feynman-integral-momentum} or parametric representations \eqref{eq:feynman-integral-parametric} requires $\varepsilon > 1/2$. Note that the projective integral \eqref{eq:feynman-integral-projective} converges already for $\varepsilon>0$, because the overall divergence is captured in the factor $\Gamma(\sdd)$.
\end{example}
\subsubsection{Infrared (IR) divergences}
A different type of divergence can appear only when a graph $G$ contains massless propagators $m_e = 0$. These are very important and ubiquitous in the calculations of physical scattering amplitudes, because fully massive graphs ($m_e >0$ for all edges $e$) are actually very rare amidst the abundance of graphs containing one or more massless propagators. These arise for example from
\begin{itemize}
	\item
		massless gauge bosons (gluons and photons) in the Standard model and
		
	\item
		approximate computations where some masses (light quarks or leptons, in particular neutrinos) are considered negligibly small in comparison to other scales in the process (masses of heavy quarks, leptons or $W^{\pm}$ and $Z$ bosons).
\end{itemize}
While vanishing masses $m_e = 0$ simplify the kinematic dependence of a Feynman integral compared to the massive case, they also introduce a divergence of the propagator $(k_e^2 + m_e^2)^{-\EP_e} = k_e^{-2 \EP_e}$ at $k_e =0$ which can be non-integrable.
An extension of Weinberg's theorem~\ref{theorem:Weinberg} to this case was worked out for example in \cite{LowensteinZimmermann:PowerCountingMassless}, where we find
\begin{theorem}
	\label{theorem:LowensteinZimmermann} %
	The scalar Feynman integral $\FR(G)$ from \eqref{eq:feynman-integral-momentum} is absolutely convergent, given that $\sdd(\gamma)>0$ for any 1PI $\gamma \subseteq \edges$ and furthermore
	\begin{equation}
		-\sdd(G / \gamma)
		=
			\frac{\dimension}{2} \loops{G / \gamma}
			-\sum_{e \notin \gamma} \EP_e
		= \sdd(\gamma) - \sdd(G)
		> 0
		\label{eq:IR-sdd} %
	\end{equation}
	for all $\gamma \subseteq \edges$ that contain all massive edges ($m_e \neq 0 \Rightarrow e \in \gamma$) and connect all external vertices 
	$
		\vertices_{\Text}
		\defas
		\setexp{v \in \vertices}{\ExMom(v) \neq 0}
	$ with each other (all $\vertices_{\Text}$ lie in the same connected component of $\gamma$) such that $G/\gamma$ is 1PI.
\end{theorem}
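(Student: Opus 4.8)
This is a power-counting statement, and the natural plan is to invoke a Weinberg-type convergence theorem for Euclidean momentum integrals that controls \emph{both} the ultraviolet regime (some loop momenta $\to\infty$) and the infrared regime (some loop momenta $\to 0$), and then to translate its analytic hypotheses---originally phrased as degree conditions along every coordinate subspace of the loop-momentum space $\R^{\dimension\loops{G}}$---into the stated combinatorial conditions on subgraphs. The ultraviolet half is already supplied by theorem~\ref{theorem:Weinberg}: the large-momentum regions are indexed by the $1$PI subgraphs $\gamma$, and the requisite condition is precisely $\sdd(\gamma)>0$, which is among our hypotheses. The genuinely new content is therefore the infrared half.

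For the infrared analysis I would fix $\gamma\subseteq\edges$ as in the theorem and consider the region in which the $\loops{G/\gamma}$ independent loop momenta of the quotient $G/\gamma$ are jointly rescaled, $k_e\mapsto\sqrt{\Pscale}\,k_e'$ for $e\notin\gamma$ with $\Pscale\to 0$, the momenta $k_e$ with $e\in\gamma$ and all external $\ExMom(v)$ held fixed. Since $\gamma$ is assumed to contain every massive edge, each propagator with $e\notin\gamma$ is genuinely massless and falls off like $\Pscale^{-\EP_e}$, while those with $e\in\gamma$ stay bounded away from zero. The $\dimension$-dimensional measure of the $\loops{G/\gamma}$ rescaled loops contributes $\Pscale^{\dimension\loops{G/\gamma}/2}$, so integrand and measure together scale as $\Pscale^{\dimension\loops{G/\gamma}/2-\sum_{e\notin\gamma}\EP_e}=\Pscale^{-\sdd(G/\gamma)}$; convergence as $\Pscale\to 0$ then forces the positive power $-\sdd(G/\gamma)>0$, which is exactly \eqref{eq:IR-sdd}. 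The two remaining hypotheses ensure this region is the only infrared danger and that it is genuinely realised: requiring $\gamma$ to connect all external vertices forces every $\ExMom(v)$ to flow entirely inside $\gamma$, so that no external momentum pins any of the scaled $k_e$ away from zero, while the $1$PI condition on $G/\gamma$ guarantees that its $\loops{G/\gamma}$ loop momenta can be sent to zero independently, with no bridge momentum forced to remain finite.

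To assemble these local estimates into absolute convergence I would use the additivity of the superficial degree, $\sdd(G)=\sdd(\gamma)+\sdd(G/\gamma)$ (equivalently $\loops{G}=\loops{\gamma}+\loops{G/\gamma}$, which follows from \eqref{eq:loop-number} for spanning $\gamma$), together with the fact that Weinberg's theorem demands the degree inequalities along every nested flag of subspaces, not merely single subgraphs. Along a maximal flag the total asymptotic degree telescopes into a sum of single-step contributions, each bounded by some $\sdd(\gamma)$ (for an ascending, ultraviolet step) or some $-\sdd(G/\gamma)$ (for a descending, infrared step); hence the per-subgraph inequalities $\sdd(\gamma)>0$ and $-\sdd(G/\gamma)>0$ already imply the full family of flag conditions, and absolute convergence follows.

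The main obstacle is not the elementary per-region power counting above but the analytic core of the argument, namely the Weinberg--Lowenstein--Zimmermann theorem itself: that simultaneous control of the ultraviolet and infrared asymptotic degrees along all coordinate directions of $\R^{\dimension\loops{G}}$ suffices for absolute convergence of the full integral, and that the extremal degree along an arbitrary direction is always attained along a direction aligned with the loop momenta of some subgraph (ultraviolet) or of some quotient $G/\gamma$ (infrared) and decomposes additively over a flag. Establishing this rests on the careful induction over subspaces at the heart of \cite{LowensteinZimmermann:PowerCountingMassless}, and it is there, rather than in the graph-theoretic translation, that the real work lies.
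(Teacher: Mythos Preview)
Your proposal matches the paper's treatment: the paper does not prove this theorem either but gives exactly the single-region power-counting heuristic you describe (rescaling $k_e\mapsto\sqrt{\Pscale}\,k_e'$ for $e\notin\gamma$ to see that $-\sdd(G/\gamma)>0$ is necessary) and then defers the non-trivial sufficiency to the literature, citing \cite{LowensteinZimmermann:PowerCountingMassless} for momentum space and \cite{Speer:SingularityStructureGenericFeynmanAmplitudes} for the parametric representation. Your honest acknowledgement that the analytic core lies in the Weinberg--Lowenstein--Zimmermann induction, rather than in the graph-theoretic translation, is precisely the paper's stance as well.
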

We call $-\sdd(G/\gamma)$ the \emph{superficial degree of infrared divergence} associated to the subgraph $\gamma^c = \edges \setminus \gamma$.
It is the leading power of $\Pscale \rightarrow 0$ after rescaling $k_e = k_e' \sqrt{\Pscale}$ for $e \in \gamma^c$ (thus $m_e = 0$) and fixed values of $k_e'$ ($e \in \gamma^c$) and $k_e$ ($e \notin \gamma^c$).
Note that momentum conservation allows $k_e \rightarrow 0$ for all $e \in \gamma^c$ only when no momentum flows through $G/\gamma$ (that means all external momenta enter at the same vertex in $G/\gamma$ and therefore sum to zero).
\begin{figure}
	\centering
	$M = \Graph[0.36]{M}$ \qquad
	$\gamma = \Graph[0.46]{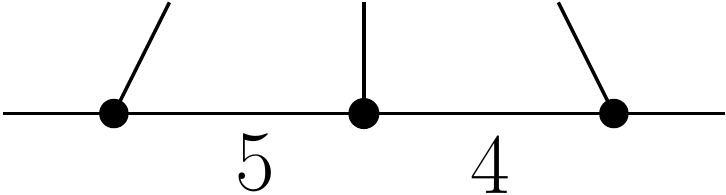}$ \qquad
	$M/\gamma = \Graph[0.4]{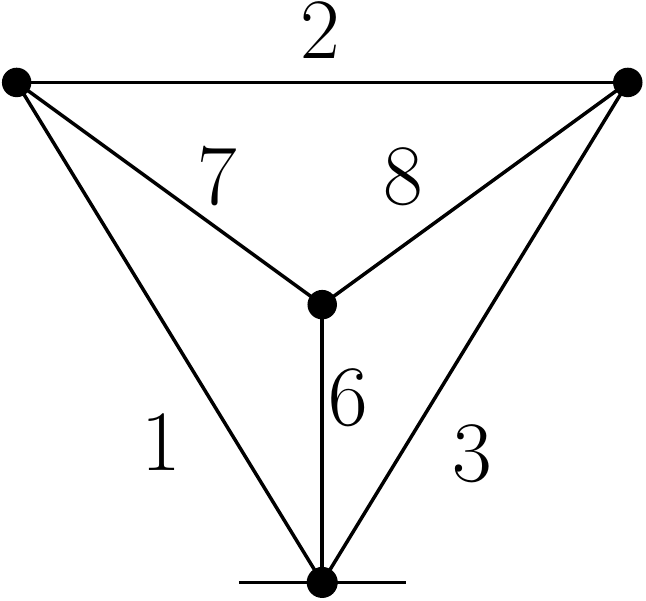}$
	\caption{The Mercedes graph and its infrared subdivergence.}%
	\label{fig:mercedes}%
\end{figure}
\begin{example}
	\label{ex:mercedes-divergences} %
	Consider the Mercedes (or Benz) graph $M$ of figure~\ref{fig:mercedes} in $\dimension = 4 - 2\varepsilon$ with massless propagators $m_e = 0$ and unit indices $\EP_e = 1$ for all edges $e$. It turns out that $\sdd(\gamma)>0$ (even when $\varepsilon = 0$) for all $\emptyset \neq \gamma \subseteq \edges$, so $M$ is ultraviolet-finite. But for the subgraph $\gamma = \set{4,5}$ we find (at $\varepsilon = 0$) a logarithmic infrared divergence
	\begin{equation*}
		\sdd(G / \gamma)
		= 6 - 3\cdot (2-\varepsilon)
		=	3 \varepsilon.
	\end{equation*}
	Graphically it corresponds to the co-graph $G/\gamma$ which is scaleless (no masses or external momenta, because all momenta enter at the same vertex and sum to zero by momentum conservation) as shown in figure~\ref{fig:mercedes}. So convergence of $\FR(G)$ requires $\varepsilon<0$.
\end{example}
In the parametric representation \eqref{eq:feynman-integral-parametric}, such an infrared divergence manifests itself at \mbox{$\SP_e \rightarrow \infty$} for $e \notin \gamma$ (see lemma~\ref{lemma:vanishing-degree-UV} and remark~\ref{rem:divergences-from-affine-to-projective} below).
A very detailed discussion and proof of theorem~\ref{theorem:LowensteinZimmermann} in the parametric representation can be found in \cite{Speer:SingularityStructureGenericFeynmanAmplitudes}, while we also recommend the instructive short exposition in \cite[section~4.4]{Smirnov:AnalyticToolsForFeynmanIntegrals}.

\subsection{Scaling degrees for Schwinger parameters}

To understand the above convergence criteria in the parametric representation \eqref{eq:feynman-integral-projective}, we investigate how its integrand \eqref{eq:projective-delta-form-integrand},
\begin{equation*}
	I_G
	= \psipol^{-\dimension/2}
		\left(\psipol/\phipol \right)^{\sdd(G)}
		\prod_{e} \SP_e^{\EP_e -1},
\end{equation*}
scales when a subset of variables \mbox{$\SP_e \rightarrow 0$} approaches zero jointly in
\begin{definition}
	\label{def:vanishing-degree} %
	For any $\gamma \subseteq \edges$ and a suitable\footnote{We think of some $I_G$, but our arguments hold for more general integrands (see remark~\ref{rem:more-general-integrands}).} parametric integrand $I$, let
	$
		I^{(\gamma)}
		\defas \restrict{I}{\SP_e = \Pscale \SP_e' \ \forall e\in \gamma}
	$
	denote $I$ after rescaling all $\SP_e$ with $e \in \gamma$ by a number $\Pscale$. The \emph{vanishing degree} $\deg_{\gamma}\left( I \right)$ is the unique number such that
	\begin{equation}
		I^{(\gamma)}
			\in
			\bigo{\lambda^{\deg_{\gamma}(I)}}
		,\quad\text{by which we mean that}\quad
		\lim_{\Pscale \rightarrow 0}
			\left[
				I^{(\gamma)} \cdot \Pscale^{-\deg_{\gamma}(I)}
			\right]
			\neq 0,\infty
		\label{eq:def-vanishing-degree} %
	\end{equation}
	is finite and non-zero.
\end{definition}
\begin{remark}
For functions $f_i$ with exponents $n_i$, the vanishing degree follows
\begin{equation}
		\deg_{\gamma} \left( \prod_i f_i^{n_i} \right)
		= \sum_i n_i \deg_{\gamma}\left( f_i \right)
		\quad\text{and}\quad
		\deg_{\gamma} \left( \sum_i f_i \right)
		\geq \min_i \left[ \deg_{\gamma} \left( f_i \right) \right]
		,
		\label{eq:vanishing-degree-rules} %
\end{equation}
where an inequality can occur when the individual leading contributions of different $f_i$ cancel each other in the sum. However, if $f = \sum_{n} a_{n} \SP^{n}$ is a polynomial and $\SP^n = \prod_e \SP_e^{n_e}$ denotes its distinct monomials ($n \in \N_0^{\edges}$), then there can be no such cancellation and
\begin{equation}
	\deg_{\gamma} (f)
	=
		\min_{n\colon a_n \neq 0} \deg_{\gamma} \left( \SP^n \right)
	= \min_{n\colon a_n \neq 0} \Bigg( \sum_{e \in \gamma} n_e \Bigg)
	.
	\label{eq:vanishing-degree-polynomial} %
\end{equation}
\end{remark}
\begin{lemma}
	\label{lemma:vanishing-degree-UV}%
	For $\gamma \subseteq \edges$, the integrand \eqref{eq:projective-delta-form-integrand} scales as
	\begin{equation}
		\deg_{\gamma} \left( I \right) + \abs{\gamma}
		= \begin{cases}
				- \sdd\left( G / \gamma \right) & \text{if $G/\gamma$ is $0$-scale and} \\
				\phantom{-}\sdd(\gamma) & \text{otherwise.} \\
			\end{cases}
		\label{eq:vanishing-degree-UV} %
	\end{equation}
	Here a graph $Q = G/\gamma$ is called \emph{$0$-scale} (or \emph{tadpole}) if it does not depend on any kinematic invariants (all internal masses and external momenta of $Q$ vanish), equivalently $\phipol_Q = 0$.
\end{lemma}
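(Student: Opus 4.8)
The plan is to take the logarithm of the integrand and apply the product rule \eqref{eq:vanishing-degree-rules}, reducing the whole statement to computing $\deg_\gamma(\psipol)$ and $\deg_\gamma(\phipol)$ separately. Writing $I = \psipol^{\sdd(G)-\dimension/2}\,\phipol^{-\sdd(G)}\prod_e \SP_e^{\EP_e-1}$ and using $\deg_\gamma(\SP_e)=1$ for $e\in\gamma$ and $0$ otherwise, \eqref{eq:vanishing-degree-rules} gives
\[
\deg_\gamma(I)+\abs{\gamma} = \big(\sdd(G)-\tfrac{\dimension}{2}\big)\deg_\gamma(\psipol) - \sdd(G)\,\deg_\gamma(\phipol) + \sum_{e\in\gamma}\EP_e .
\]
So everything hinges on the two polynomial degrees. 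I will show $\deg_\gamma(\psipol)=\loops{\gamma}$ unconditionally, while $\deg_\gamma(\phipol)$ equals $\loops{\gamma}+1$ precisely when $G/\gamma$ is $0$-scale and $\loops{\gamma}$ otherwise. Substituting these and using $\loops{G/\gamma}=\loops{G}-\loops{\gamma}$ then reproduces exactly the two cases of \eqref{eq:vanishing-degree-UV}, a purely arithmetic check.

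For $\psipol$ I would use its spanning-tree expansion from theorem~\ref{theorem:graph-polynomials}. Since $\psipol$ has nonnegative (in fact unit) coefficients, \eqref{eq:vanishing-degree-polynomial} applies and $\deg_\gamma(\psipol)$ equals the minimal number of $\gamma$-edges in a complement $\edges\setminus T$ of a spanning tree, i.e.\ $\abs{\gamma}-\max_T\abs{\gamma\cap T}$. A spanning tree meets $\gamma$ in an acyclic set, so $\abs{\gamma\cap T}\leq\abs{\gamma}-\loops{\gamma}$; conversely any spanning forest of the subgraph $\gamma$ extends (as $G$ is connected) to a spanning tree of $G$, attaining the bound. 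Hence $\deg_\gamma(\psipol)=\loops{\gamma}$, the minimising monomials being exactly those of $\psipol_\gamma\cdot\psipol_{G/\gamma}$; this factorisation will guide the more delicate analysis of $\phipol$.

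The heart of the proof is $\phipol$. Again all coefficients are nonnegative for Euclidean kinematics ($m_e^2\geq 0$ and $\ExMom(F)^2\geq 0$), so \eqref{eq:vanishing-degree-polynomial} applies monomial by monomial and no cancellation can occur; a monomial survives for generic kinematics iff its combinatorial coefficient is not identically zero. I would split $\phipol$ as in theorem~\ref{theorem:graph-polynomials} into the mass part $\psipol\sum_e\SP_e m_e^2$ and the momentum part $\sum_F\ExMom(F)^2\prod_{e\notin F}\SP_e$. The mass part has $\deg_\gamma$ equal to $\loops{\gamma}$ if some massive edge lies outside $\gamma$ and $\loops{\gamma}+1$ if every massive edge lies in $\gamma$. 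For the momentum part, exactly as for $\psipol$ one has $\abs{\gamma\cap F}\leq\abs{\gamma}-\loops{\gamma}$ for every spanning $2$-forest $F$, so its degree is $\geq\loops{\gamma}$; equality holds iff there is a $2$-forest $F$ with $\ExMom(F)\neq 0$ whose intersection with $\gamma$ is a full spanning forest of $\gamma$. Such $F$ are precisely those descending to $2$-forests of $G/\gamma$, and they carry nonzero momentum iff the external vertices $\vertices_{\Text}$ do not all lie in a single connected component of $\gamma$. When no such $F$ exists one must cut one further $\gamma$-edge inside the external component, lowering $\abs{\gamma\cap F}$ by exactly one and giving degree $\loops{\gamma}+1$.

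Combining the two parts, $\deg_\gamma(\phipol)$ is the minimum of the two, and one checks the dichotomy: it equals $\loops{\gamma}$ precisely when some mass or some external momentum survives the contraction of $\gamma$ — that is, when $\phipol_{G/\gamma}\neq 0$ — and it equals $\loops{\gamma}+1$ exactly in the $0$-scale case $\phipol_{G/\gamma}=0$. Plugging this into the displayed formula finishes the proof. The main obstacle, I expect, is the momentum part: one must argue cleanly that maximising $\abs{\gamma\cap F}$ over momentum-carrying $2$-forests is governed by the connectivity of $\vertices_{\Text}$ inside $\gamma$, and identify this connectivity condition with the vanishing of $\phipol_{G/\gamma}$. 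Making the correspondence between $2$-forests of $G$ that respect $\gamma$ and $2$-forests of $G/\gamma$ precise, and confirming that the leading coefficients are genuinely nonzero for generic Euclidean kinematics, is where the real care is needed; the $\psipol$ computation and the closing arithmetic are routine by comparison.
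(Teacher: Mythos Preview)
Your proposal is correct and follows essentially the same approach as the paper: reduce to computing $\deg_\gamma(\psipol)$ and $\deg_\gamma(\phipol)$ separately via their spanning-tree and spanning-forest expansions, establish $\deg_\gamma(\psipol)=\loops{\gamma}$ and the dichotomy $\deg_\gamma(\phipol)\in\{\loops{\gamma},\loops{\gamma}+1\}$ according to whether $G/\gamma$ carries a scale, and finish with the arithmetic. The paper's proof organises the same ingredients in the same order, and you have correctly identified the one place requiring genuine care (constructing a momentum-carrying $2$-forest attaining $\abs{\gamma\cap F}=\abs{\gamma}-\loops{\gamma}$ or $\abs{\gamma}-\loops{\gamma}-1$ depending on the connectivity of $\vertices_{\Text}$ inside $\gamma$).
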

\begin{proof}
	First of all, we apply \eqref{eq:vanishing-degree-rules} to \eqref{eq:projective-delta-form-integrand} and find
	\begin{equation}
		\deg_{\gamma} \left( I \right) 
		= -\frac{\dimension}{2} \cdot \deg_{\gamma} (\psipol) 
			- \sdd(G) \cdot \left( \deg_{\gamma} \phipol - \deg_{\gamma} \psipol \right)
			+ \sum_{e \in \gamma} (\EP_e - 1)
		.
		\label{eq:vanishing-degree-integrand} %
	\end{equation}
	Spanning trees $T$ and spanning two-forests $F$ can share at most $\abs{\gamma} - \loops{\gamma}$ edges with $\gamma$(otherwise they would contain a loop), and furthermore this maximum is attained%
\footnote{%
	First remove $\loops{\gamma}$ suitable edges from $\gamma$ to obtain a forest $F \subseteq \gamma$, then add adequate edges $K \subset\edges \setminus \gamma$ to construct such a tree $T = F \cupdot K$ with $\abs{T \cap \gamma} = \abs{\gamma} - \loops{\gamma}$.
}
	in
	\begin{equation}
		\deg_{\gamma} \left( \psipol \right)
		\urel{\eqref{eq:graph-polynomials}}
			\min_T \abs{\gamma \setminus T}
		= \abs{\gamma} - \max_T \abs{\gamma \cap T}
		= \loops{\gamma}
		.
		\label{eq:vanishing-degree-psi} %
	\end{equation}
	For the second Symanzik polynomial $\phipol$, no cancellations are possible between monomials multiplying a mass and those stemming from a two-forest $F$, as all kinematic invariants $m_e, \ExMom^2(F) \geq 0$ are non-negative (Euclidean momenta). Therefore
	\begin{equation}
		\deg_{\gamma} \left( \phipol \right)
		\urel{\eqref{eq:graph-polynomials}}
		\min\set{
			\min_{F:\ \ExMom^2(F) \neq 0} \abs{\gamma \setminus F},
			\loops{\gamma} + \deg_{\gamma} \sum_{e \in \edges} m_e \SP_e
		}
		\geq
			\loops{\gamma}
		\label{eq:vanishing-degree-phi} %
	\end{equation}
	and equality holds only when $G/\gamma$ has a scale (is not $0$-scale), because this means that
	\begin{itemize}
		\item
			there exists a massive ($m_{e'}>0$) edge $e' \in G/\gamma$, that means $e' \notin \gamma$, wherefore
			$\deg_{\gamma} \sum_{e \in \edges} m_e \SP_e = 0$,
			
		\item
			or $G/\gamma$ admits a two-forest $F'$ such that $\ExMom^2(F') \neq 0$. But such $F'$ can be augmented to a two-forest $F = F' \cupdot F''$ of $G$ with $\ExMom^2(F) = \ExMom^2(F') \neq 0$ by selecting a spanning $\abs{\comps(\gamma)}$-forest $F'' \subseteq \gamma$ of $\gamma$. As $\abs{F''} = \abs{\gamma} - \loops{\gamma}$ and $F' \cap \gamma = \emptyset$, indeed $\abs{\gamma \setminus F} = \loops{\gamma}$.
	\end{itemize}
	Otherwise $G/\gamma$ is $0$-scale, so $\gamma \subseteq \setexp{e \in \edges}{m_e>0}$ must contain all massive edges and \mbox{$\deg_{\gamma}\left( \phipol \right) \geq \loops{\gamma} + 1$}. If there is at least one massive edge $e \in \edges$ ($m_e>0$) at all, then $\deg_{\gamma} \sum_{e\in\edges} m_e \SP_e = 1$ and $\deg_{\gamma}\left( \phipol \right) = \loops{\gamma} + 1$.

	But when all edges are massless ($m_e = 0$), the massive contribution in \eqref{eq:vanishing-degree-phi} is absent altogether and $\deg_{\gamma}\left( \phipol \right) = \min_{F:\ \ExMom^2(F) \neq 0} \abs{\gamma \setminus F}$. Now $\gamma$ must contain all external vertices $\vertices_{\Text}$ in the same connectivity component $C$ as $G / \gamma$ is assumed to be $0$-scale. After removing $\loops{\gamma}$ edges from $\gamma$ to obtain a spanning forest $F' \subseteq \gamma$ with the same components $\comps\left( F' \right) = \comps(\gamma)$, we can further remove a suitable edge $e \in F'\cap \edges(C)$ such that both components of $C \setminus e$ contain at least one external vertex. Then $F' \setminus e$ can be extended to a two-forest $F$ of $G$ (by adding edges from $\edges \setminus \gamma$) with $\ExMom^2(F) \neq 0$ and $\abs{\gamma \setminus F} = \loops{\gamma} + 1$.

	We conclude that for scaleful $G/\gamma$, \eqref{eq:vanishing-degree-integrand} reduces to $\sdd(\gamma) - \abs{\gamma}$ since $\deg_{\gamma} \psipol = \deg_{\gamma} \phipol$. For $0$-scale $G/\gamma$, we must replace $\sdd(\gamma)$ by $\sdd(\gamma) - \sdd(G) = - \sdd(G/\gamma)$.
\end{proof}
Note that in this setup of Euclidean momenta, $G/\gamma$ is $0$-scale precisely when $\gamma$ comprises all massive edges and furthermore contains all external vertices $\vertices_{\Text}$ in the same connected component.
\begin{remark}
Let $\emptyset \neq \gamma \subsetneq \edges$ and insert the factor $1 = \int_0^{\infty} \dd \Pscale\ \delta\left( \Pscale - \sum_{e \in \gamma} \SP_e \right)$ into the projective representation \eqref{eq:feynman-integral-projective}. The substitution of $\SP_e$ with $\Pscale\SP_e$ for all $e \in \gamma$ shows
\begin{equation}
	\int I \ \Omega
	= \int \Omega\ \delta\left( 1-\sum_{e \in \gamma} \SP_e \right)
		\int_0^{\infty} \frac{\dd \Pscale}{\Pscale} \Pscale^{\abs{\gamma} + \deg_{\gamma}(I)}
			\cdot \widetilde{I^{(\gamma)}},
	\label{eq:uvdiv-delta-inserted} %
\end{equation}
where $\widetilde{I^{(\gamma)}} \defas I^{(\gamma)} \cdot \Pscale^{-\deg_{\gamma}(I)}$ is finite at $\Pscale \rightarrow 0$. Therefore $\abs{\gamma} + \deg_{\gamma}(I) > 0$ is apparently necessary for the absolute convergence of $\FR(G)$. The content of theorem~\ref{theorem:LowensteinZimmermann} and lemma~\ref{lemma:vanishing-degree-UV} lies in the sufficiency of this simple criterion.
\end{remark}
\begin{corollary}[Euclidean convergence]
	\label{corollary:projective-convergence-euclidean}%
	With non-exceptional Euclidean external momenta, the projective integral \eqref{eq:feynman-integral-projective} is absolutely convergent precisely when $\abs{\gamma} + \deg_{\gamma}(I) > 0$ for all $\emptyset \neq \gamma \subsetneq \edges$.
\end{corollary}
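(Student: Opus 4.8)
The equivalence splits into necessity and sufficiency, and the necessity is already contained in the remark preceding the statement, which I would simply recall: fixing a single $\emptyset\neq\gamma\subsetneq\edges$ and inserting $1=\int_0^{\infty}\dd\Pscale\,\delta\left(\Pscale-\sum_{e\in\gamma}\SP_e\right)$ factors the integral as in \eqref{eq:uvdiv-delta-inserted} into the radial piece $\int_0^{\infty}\frac{\dd\Pscale}{\Pscale}\Pscale^{\abs{\gamma}+\deg_{\gamma}(I)}$ times an integral whose integrand tends to a finite, non-zero limit as $\Pscale\to0$ by \eqref{eq:def-vanishing-degree}. Since the integrand of \eqref{eq:feynman-integral-projective} is positive throughout (Euclidean kinematics rule out sign cancellations), the radial factor diverges at $\Pscale=0$ whenever $\abs{\gamma}+\deg_{\gamma}(I)\leq0$, forcing divergence of the whole integral. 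Hence positivity of $\abs{\gamma}+\deg_{\gamma}(I)$ for every proper nonempty $\gamma$ is necessary.

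For sufficiency I would resolve the boundary of the projective domain by a Hepp-type sector decomposition. Write $n\defas\abs{\edges}$ and partition $\RP_+^{n-1}$ into the $n!$ sectors indexed by the orderings $\SP_{\sigma_1}\leq\cdots\leq\SP_{\sigma_n}$ of the parameters; on each sector pass to the ratio coordinates $t_i\defas\SP_{\sigma_i}/\SP_{\sigma_{i+1}}\in[0,1]$ for $1\leq i\leq n-1$, which map the sector onto the closed cube $[0,1]^{n-1}$. After normalising $\SP_{\sigma_n}=1$, every parameter becomes a monomial $\SP_{\sigma_i}=\prod_{j=i}^{n-1}t_j$, so the limit $t_i\to0$ is precisely the regime in which the initial segment $\gamma_i\defas\set{\sigma_1,\ldots,\sigma_i}$ scales to zero relative to the remaining edges. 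Using the multiplicativity \eqref{eq:vanishing-degree-rules} of the vanishing degree and the fact that, for a polynomial with non-negative coefficients, the degree \eqref{eq:vanishing-degree-polynomial} is attained with no cancellation, the per-sector contribution to $\int\Omega\cdot I_G$ (taken in absolute value, so that the possibly complex exponents $\EP_e,\sdd$ enter only through their real parts $\Realteil(\EP_e),\Realteil(\sdd)$) takes the form
\begin{equation*}
	\int_{[0,1]^{n-1}}
		\Big(\prod_{i=1}^{n-1} t_i^{\,\abs{\gamma_i}+\deg_{\gamma_i}(I)-1}\Big)
		R_{\sigma}(t)\,\dd t_1\cdots\dd t_{n-1}
	\, ,
\end{equation*}
where $R_{\sigma}$ extends to a continuous, strictly positive function on the whole closed cube. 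Consequently this integral reduces to a product of one-dimensional integrals $\int_0^1 t_i^{\,\abs{\gamma_i}+\deg_{\gamma_i}(I)-1}\dd t_i$, which converge simultaneously exactly when $\abs{\gamma_i}+\deg_{\gamma_i}(I)>0$ for each $i$.

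Since every proper nonempty subset $\gamma$ arises as an initial segment $\gamma_{\abs{\gamma}}$ for at least one ordering $\sigma$ (list the elements of $\gamma$ first), the union of the per-sector conditions is exactly the stated criterion, and absolute convergence of \eqref{eq:feynman-integral-projective} is equivalent to convergence in every sector (the full set $\gamma=\edges$ does not occur, its overall scaling being absorbed into $\Gamma(\sdd)$). I expect the main obstacle to be the factorisation step: one must verify that the leading behaviour of $\psipol$ and $\phipol$ along each nested flag $\gamma_1\subset\cdots\subset\gamma_{n-1}$ is genuinely multiplicative and that the residual factor $R_{\sigma}$ stays bounded away from zero on the compact cube. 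This is where Lemma~\ref{lemma:vanishing-degree-UV} does the real work: it pins down $\deg_{\gamma}(\psipol)$ and $\deg_{\gamma}(\phipol)$ exactly, and the non-exceptional (generic) choice of Euclidean momenta guarantees that the relevant spanning-forest polynomials contribute strictly positively, so that no accidental degeneration of $R_{\sigma}$ occurs.

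Finally, I would note the shortcut of reading the criterion off the momentum-space theorems directly: by Lemma~\ref{lemma:vanishing-degree-UV} the quantity $\abs{\gamma}+\deg_{\gamma}(I)$ equals $\sdd(\gamma)$ when $G/\gamma$ is scaleful and $-\sdd(G/\gamma)$ when $G/\gamma$ is $0$-scale, so the condition $\abs{\gamma}+\deg_{\gamma}(I)>0$ for all proper nonempty $\gamma$ is precisely the combined ultraviolet and infrared requirement of Theorems~\ref{theorem:Weinberg} and \ref{theorem:LowensteinZimmermann}. The sector argument above is then a self-contained re-derivation of their content within the parametric representation \eqref{eq:projective-delta-form-integrand}.
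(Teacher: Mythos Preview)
Your proposal is correct. The paper does not actually give a self-contained proof of this corollary: it derives necessity from the remark around \eqref{eq:uvdiv-delta-inserted} and outsources sufficiency to the momentum-space Theorems~\ref{theorem:Weinberg} and \ref{theorem:LowensteinZimmermann}, translated via Lemma~\ref{lemma:vanishing-degree-UV}. That is precisely the ``shortcut'' you sketch in your final paragraph, so on that level you match the paper exactly.

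Your Hepp-sector argument is a genuinely different route: a direct parametric proof rather than an appeal to the momentum-space literature. It is essentially Speer's approach (which the paper cites in \cite{Speer:SingularityStructureGenericFeynmanAmplitudes} but does not reproduce). The one point to sharpen is the claim that $R_\sigma$ is bounded away from zero on the closed cube. Since $\psipol$ and $\phipol$ have non-negative coefficients, $R_\sigma$ is monotone in each $t_i$, so its minimum sits at the origin; positivity there requires a \emph{single} monomial of $\psipol$ (respectively $\phipol$) that realizes $\deg_{\gamma_j}$ simultaneously for \emph{every} $j$ in the nested flag. For $\psipol$ this follows by greedily extending a spanning forest of $\gamma_1$ through $\gamma_2,\ldots$ to a spanning tree of $G$. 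For $\phipol$ one must accommodate the jump $\deg_{\gamma_j}(\phipol)=\loops{\gamma_j}+1$ exactly when $G/\gamma_j$ becomes $0$-scale, and the construction in the proof of Lemma~\ref{lemma:vanishing-degree-UV} needs to be threaded along the chain. Lemma~\ref{lemma:vanishing-degree-UV} as stated treats a single $\gamma$, not a nested family, so slightly more is needed than you indicate; but the adaptation is routine and is carried out in the references the paper already cites. A minor aside on the shortcut: Theorems~\ref{theorem:Weinberg} and \ref{theorem:LowensteinZimmermann} restrict to 1PI subgraphs and quotients, whereas the corollary ranges over all proper nonempty $\gamma$; the paper itself notes just after the corollary that single edges $\gamma=\{e\}$ must be checked separately.
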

Note that apart from 1PI graphs, we must also consider individual edges $\gamma = \set{e}$ to ensure convergence ($\EP_e>0$) of the Schwinger trick \eqref{eq:Schwinger-trick}. The condition $\sdd(G)>0$ is needed for the parametric representation \eqref{eq:feynman-integral-parametric} but not for finiteness of the projective integral \eqref{eq:feynman-integral-projective} where it is already integrated out and captured by the prefactor $\Gamma(\sdd(G))$.
The restriction to Euclidean momenta precisely requires that
\begin{itemize}
	\item all masses $m_e \geq 0$ are non-negative and
	\item $\big[\sum_{v \in W} \ExMom(v) \big]^2 > 0$ for any $\emptyset \neq W \subsetneq \vertices_{\Text}$.
\end{itemize}
This \emph{non-exceptional configuration of momenta} was used in the proof of lemma~\ref{lemma:vanishing-degree-UV}.
\begin{remark}\label{rem:divergences-from-affine-to-projective}
	A UV-divergence at $\SP_e \rightarrow 0$ for $e \in \gamma$ in the affine integral~\eqref{eq:feynman-integral-parametric} is localized, in the projective representation~\eqref{eq:feynman-integral-projective}, on the subset
	\begin{equation*}
		\setexp{\big[\SP_1:\cdots:\SP_{\abs{E}}\big]}
		{\text{
			$\SP_e =0$ for $e\in\gamma$ and $\SP_e \in \R_+$ otherwise
		}}
		\subset
		\RP_+^{\abs{E}-1}
		.
	\end{equation*}
	But also a potential IR-divergence ($\SP_e \rightarrow \infty$ for $e \notin \gamma$) in~\eqref{eq:feynman-integral-parametric} corresponds, projectively, to a singularity on this same set,	since the projective form $I\ \Omega$ is invariant under simultaneous rescaling of all Schwinger variables:
	Substituting $\SP_e = \SP_e' \cdot \Pscale^{-1}$ for $e \in \gamma$ gives the same vanishing degree for small $\Pscale$ as rescaling $\SP_e = \SP_e' \cdot \Pscale$ for those $e \notin \gamma$.

	So we note that while UV- and IR-divergences are separated ($\SP_e \rightarrow0$ versus $\SP_e \rightarrow \infty$) in~\eqref{eq:feynman-integral-parametric}, they are treated uniformly in the projective form (see corollary~\ref{corollary:projective-convergence-euclidean}).
\end{remark}

\subsection{Non-Euclidean momenta}\label{sec:non-Euclidean}
Quantum field theory is formulated in Minkowski space, so finally results computed in the convenient Euclidean region must be analytically continued back to the physical region.\footnote{This is sometimes referred to as \emph{Wick rotation}, though many physics textbooks unfortunately confuse this term with a mere deformation of the integration contour for the timelike momentum components. However, it is a true analytic continuation.} This continuation does not pose a problem, but many kinematical configurations impossible to realize in the Euclidean region.

A very typical example concerns lightlike external momenta $\ExMom(v)^2 = 0$ (on-shell massless particles), which in the Euclidean metric always implies $\ExMom(v) = 0$ and thus no dependence on $\ExMom(v)$ whatsoever. Not so for the Minkowski metric. For example, a three-point graph (like $C_2$ from figure~\ref{fig:triladder-contraction-deletion}) can have two lightlike external momenta $\ExMom(v_1)^2 = \ExMom(v_2)^2 = 0$ and still depend on the free variable $\ExMom(v_3)^2 = [\ExMom(v_1) + \ExMom(v_2)]^2  = 2 \ExMom(v_1) \ExMom(v_2)$, while in Euclidean metric they would impose $\ExMom(v_1) = \ExMom(v_2) = 0$ and thus $\ExMom(v_3) = -\ExMom(v_1)-\ExMom(v_2)=0$ as well.

In such a case one can try to compute in the Euclidean region with general kinematics, perform the analytic continuation to Minkowski space and then take the desired limit. But a problem occurs if this limit diverges, then the analytic regularization of the integral with restricted kinematics can not be obtained from the analytic regularization of the non-exceptional configuration in a straightforward way. This situation (kinematic constraints that introduce additional divergences) is rather common in practice.
\begin{figure}
	\centering
	$ B_1 = \Graph[0.4]{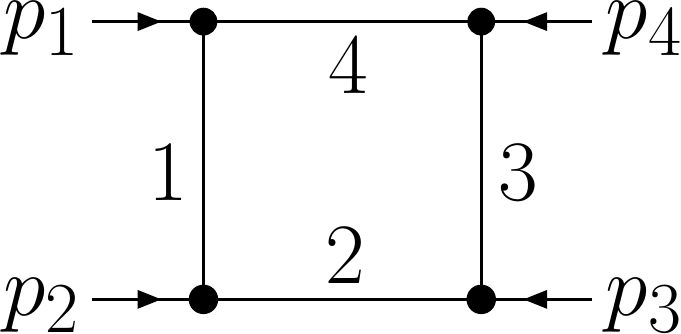} $ \quad
	$\gamma = \set{1,2} = \Graph[0.4]{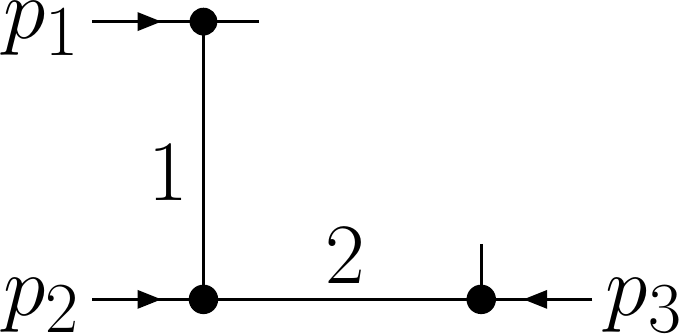} $ \quad
	$ B_1/\gamma = \Graph[0.4]{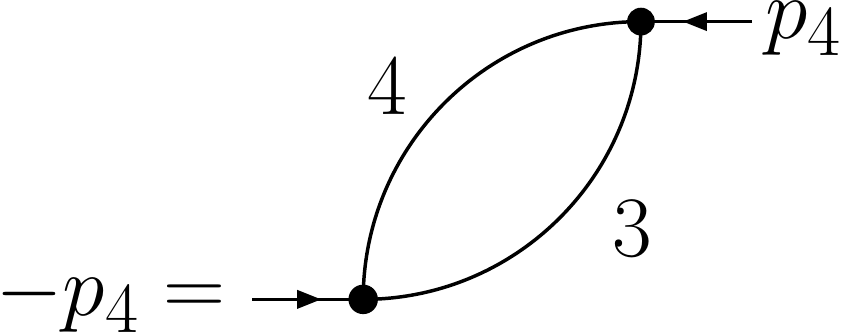} $
	\caption{The massless box graph $B_1$ and one of its four infrared subdivergences.}%
	\label{fig:box-divergences}%
\end{figure}%
\begin{example}
	\label{ex:onshell-box-divergences} %
	The one-loop on-shell massless box $B_1$ of figure~\ref{fig:box-divergences} with $\EP_i = 1$ and $\ExMom(v_i)^2 = m_i^2 = 0$ for $1 \leq i \leq 4$  is ultraviolet-finite in $\dimension = 4 - 2\varepsilon$ dimensions and still a non-trivial function of the two variables $s = [\ExMom(v_1) + \ExMom(v_2)]^2$ and $t=[\ExMom(v_1) + \ExMom(v_4)]^2$, through $\phipol = s \SP_2 \SP_4 + t \SP_1 \SP_3$. However, we locate four infrared divergences
	\begin{equation}
		\deg_{\gamma} (I) + \abs{\gamma}
		=
		-\sdd\left( B_1 / \gamma \right)
		=
		-\varepsilon
		\quad\text{at the corners}\quad
		\gamma
		= \set{1,2},\set{2,3},\set{3,4},\set{4,1}
		\label{eq:onshell-box-divergences} %
	\end{equation}
	in accordance with lemma~\ref{lemma:vanishing-degree-UV}. Observe that the quotients $B_1 / \gamma$ are $0$-scale even though $\gamma$ does not contain all external vertices in the same connected component: the only momentum running through $B_1 / \gamma$ is some vanishing $\ExMom(v_i)^2 = 0$.

	Note that by \eqref{eq:vanishing-degree-UV} we always encounter an infrared divergence in $\dimension = 4-2\varepsilon$ at every two-valent external vertex $v_i$ with $\ExMom(v_i)^2 = 0$ and $\EP_e = \EP_f = 1$ for the two edges $e,f$ incident to $v_i$.
\end{example}
\begin{example}
	\label{ex:triangle-divergences}%
	The triangle graph $G$ with one internal mass $m=m_3$ ($m_1 = m_2 = 0$) and lightlike $p_3^2 = 0$ has the integral representations (in $\dimension = 4-2\varepsilon$ with indices $\EP_e=1$)
	\begin{equation*}
		\FR\left(\Graph[0.3]{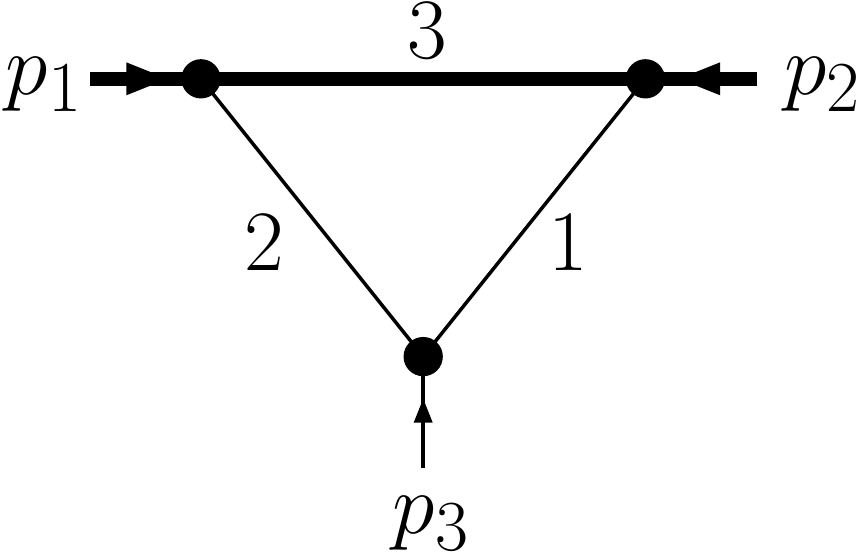}\right)
		= \int \frac{\dd[\dimension] k}{\pi^{\dimension/2}}
			 \frac{1}{(k^2 + m^2)(k+p_2)^{2}(k-p_1)^{2}}
		= \Gamma(1+\varepsilon)
			 \int \frac{\Omega}{\psipol^{1-2\varepsilon} \phipol^{1+\varepsilon}}.
	\end{equation*}
	The parametric integrand contains a factor $\SP_3^{-1-\varepsilon}$ because
	$
		\phipol
		= \SP_3 \left( m^2 \psipol + p_1^2 \SP_2 + p_2^2 \SP_1 \right)
		$ splits, which constitutes a logarithmic divergence at $\SP_3 \rightarrow 0$. It corresponds to the subgraph $\gamma=\set{3}$ with $-\sdd\left( G\contract\gamma\right) = \sdd\left( \gamma \right) - \sdd(G) = -\varepsilon$ since $G\contract\gamma$ is $0$-scale.
\end{example}
\begin{remark}
The scaling degrees $\deg_{\gamma} (I)$ of a parametric integrand depend on the kinematics only through the set of monomials in the second Symanzik polynomial that have a non-zero coefficient. Setting $\ExMom(v_i)^2 = 0$ in the examples above excludes certain monomials from $\phipol$ (which are present for generic $\ExMom(v_i)^2 \neq 0$) and therefore potentially changes $\deg_{\gamma}(\phipol)$ in \eqref{eq:vanishing-degree-polynomial}.

Similarly, on-shell external particles $\ExMom(v_i)^2 = -m_i^2$ with a mass $m_i = m_e$ that is also carried by an internal propagator $e$ can yield cancellations of monomials in $\phipol$ as well.
\end{remark}
These examples of non-Euclidean kinematic constraints are very mild in the following sense: All non-vanishing coefficients of the second Symanzik polynomial $\phipol$ are positive (which is automatic for a Euclidean metric). This implies the positivity $\phipol>0$ and therefore the smoothness of the integrand $I$ inside the integration domain $(0,\infty)^{\edges}$, such that divergences of $\FR(G)$ must stem from boundary contributions (integration over regions where $\SP_e \rightarrow 0,\infty$ for one or several edges $e$).

The convergence of $\FR(G)$ is therefore still likely to be assessable through power counting. Indeed, we saw that corollary~\ref{corollary:projective-convergence-euclidean} stays valid in our examples and correctly predicts the non-Euclidean divergences. Though \eqref{eq:uvdiv-delta-inserted} reveals $\abs{\gamma} + \deg_{\gamma}(I)>0$ as necessary for absolute convergence, its sufficiency (in the Euclidean case) is a non-trivial consequence of the special structure of the Symanzik polynomials. For arbitrary polynomials we should extend definition~\ref{def:vanishing-degree} to the more general rescalings of
\begin{definition}
	\label{def:vanishing-degree-general} %
	For a vector $0 \neq \ScaleVec \in \R^{\edges}$ (also called \emph{region} or \emph{sector}), the rescaled integrand
$
	I^{(\ScaleVec)}
	\defas I\left( \Pscale^{\ScaleVec_1} \SP_1, \cdots, \Pscale^{\ScaleVec_{\abs{\edges}}}\SP_{\abs{\edges}} \right)
$
	determines a vanishing degree $\deg_{\ScaleVec}(I)$ by
	\begin{equation}
		I^{(\ScaleVec)} 
			\in \bigo{\Pscale^{\deg_{\ScaleVec}(I)}}
		\quad\text{such that}\quad
		\widetilde{I^{(\ScaleVec)}} \defas I^{(\ScaleVec)} \cdot \Pscale^{-\deg_{\ScaleVec}(I)}
		\quad\text{has}\quad
		\lim_{\Pscale \rightarrow 0} \widetilde{I^{(\ScaleVec)}} 
			\neq 0,\infty
		\label{eq:vanishing-degree-general} %
	\end{equation}
	finite and non-zero. The associated degree of divergence is
	\begin{equation}
		\sdd_{\ScaleVec}(I)
		\defas 
			\sum_{e \in \edges} \ScaleVec_e
			+\deg_{\ScaleVec} (I)
		.
		\label{eq:sdd-general} %
	\end{equation}
\end{definition}
When $ \emptyset \neq \supp \ScaleVec \defas \setexp{e}{\ScaleVec_e \neq 0} \subsetneq \edges$ (not all variables can be rescaled simultaneously in the projective form $\int I\ \Omega$) we insert a factor 
$
	1 
	= \int_0^{\infty} 
			\dd\Pscale
			\ \delta\left( \Pscale - \sum_{e\colon \ScaleVec_e \neq 0} \SP_e^{1/{\ScaleVec_e}} \right)
$ into the integrand as we did in \eqref{eq:uvdiv-delta-inserted}.\footnote{Here we choose the hyperplane ($\delta$-function) in $\Omega$ of \eqref{eq:projective-delta-form-integrand} such that it does not constrain any of the rescaled variables $\SP_e$ with $\ScaleVec_e \neq 0$.} After substituting $\SP_e$ for $\Pscale^{\ScaleVec_e} \SP_e$ we arrive at
\begin{equation}
	\int I\ \Omega
	= \int \Omega
		\ \delta\Bigg( 1 - \sum_{e\colon \ScaleVec_e \neq 0} \SP_e^{1/{\ScaleVec_e}} \Bigg)
		\int_0^{\infty} \frac{\dd \Pscale}{\Pscale} 
			\Pscale^{\sdd_{\ScaleVec}}
			\cdot
			\widetilde{I^{(\ScaleVec)}},
	\label{eq:anareg-delta-factor} %
\end{equation} 
where 
$
	\widetilde{I^{(\ScaleVec)}} 
$
is finite at $\Pscale \rightarrow 0$.
It follows that the absolute convergence of $\int I\ \Omega$ requires $\sdd_{\ScaleVec} > 0$ for all possible regions $\ScaleVec$ with $\emptyset \neq \supp \ScaleVec \subsetneq \edges$. This highlights how extremely special the Feynman integrands are for Euclidean kinematics, considering that corollary~\ref{corollary:projective-convergence-euclidean} allows us to deduce $\sdd_{\ScaleVec}>0$ for any sector $\ScaleVec \in \R^{\edges}$ from only checking the special (and few) $\ScaleVec$ indexed by subgraphs $\emptyset \neq \gamma \subsetneq \edges$ via $\ScaleVec_e = 1$ for $e \in \gamma$ (and $\ScaleVec_e = 0$ otherwise).

In fact we can prove that $\sdd_{\ScaleVec} (I) > 0$ for all $\emptyset \neq \supp \ScaleVec \subsetneq \edges$ implies convergence of $\int I\ \Omega$ (when $\phipol$ has only non-negative coefficients), but we are still left with the task to identify the divergent sectors. In general, it does not suffice to only consider the simple forms where $\ScaleVec_e \in \set{0,1}$.
\begin{example}
	Consider the two-loop $3$-point graph $C_2$ from figure~\ref{fig:triladder-contraction-deletion} with lightlike $p_2^2 = p_3^2 = 0$ and $p_1^2 = 1$. From its graph polynomials \eqref{eq:psipol_C2} and \eqref{eq:phipol_C2} we find that with unit indices $\EP_e = 1$ in $\dimension = 4 - 2\varepsilon$, the vector $\ScaleVec=(3,2,2,1,2,0)$ yields a divergence
	\begin{equation*}
		\sdd_{\ScaleVec}(I)
		= 10 + 3 \varepsilon \deg_{\ScaleVec}(\psipol) - (2+2\varepsilon) \deg_{\ScaleVec}(\phipol)
		= 10 + 6\varepsilon - 5 (2+2\varepsilon)
		= -4\varepsilon.
	\end{equation*}
	This example is taken from \cite{Smirnov:ProblemsStrategyRegions} which discusses the technique of \emph{expansion by regions}. Within this approach, this particular region $\ScaleVec$ was identified as one contribution to the expansion of $C_2$ in a particular kinematic limit.
\end{example}
As there are uncountably many sectors to consider for any given integrand $I$, it is not immediately clear how the positivity of all these degrees can be checked and the determination of all divergent sectors is a non-trivial problem \cite{Smirnov:ProblemsStrategyRegions}. But note that
\begin{itemize}
	\item $\sdd_{\ScaleVec}(I)$ is a continuous function of $\ScaleVec$,
	\item $\sdd_{\ScaleVec} = \sdd_{\ScaleVec + \lambda(1,\ldots,1)}$ for any $\lambda$ (projectiveness of $I\ \Omega$),
	\item $\sdd_{\ScaleVec} = \sdd_{\lambda\ScaleVec}$ for any $\lambda>0$,
	\item $\sdd_{\ScaleVec}(I_G) = \sum_e \ScaleVec_e \EP_e + (\sdd(G)-\dimension/2) \deg_{\ScaleVec}(\psipol_G) - \sdd(G)  \deg_{\ScaleVec} (\phipol_G)$ is a linear function (in $\ScaleVec$) as long as the minimal monomials of $\phipol_G$ and $\psipol_G$ in \eqref{eq:vanishing-degree-polynomial} stay the same.
\end{itemize}
By the first two observations, we may restrict to sectors in the compact
\begin{equation*}
	\Delta
	\defas \setexp{\ScaleVec \in \R_{\geq 0}^{\edges}}{\ScaleVec_1 + \cdots+ \ScaleVec_{\edges}=1}.
\end{equation*}
For any pair $\SP^n$ and $\SP^m$ of monomials ($n,m\in \N_0^{\edges}$) that occur in $\psipol, \phipol$ (with non-vanishing coefficient), let $H_{n,m}$ denote the orthogonal complement of $n-m$, the hyperplane
\begin{equation*}
	H_{n,m} \defas
	\setexp{\ScaleVec \in \R^{\edges}}{\deg_{\ScaleVec} (\SP^{n_i}) = \deg_{\ScaleVec} (\SP^{n_j})}
	= \Vanishing\left( \sum_{e \in \edges} \ScaleVec_e (n-m)_e \right)
	= (n-m)^{\perp}.
\end{equation*}
All these hyperplanes divide $\Delta = \bigcup_i \Delta_i$ into a finite number of convex polytopes, such that $\deg_{\ScaleVec}$ is a linear function inside each $\Delta_i$ and thus attains its minimum on one of the finitely many vertices ($0$-simplices) of $\Delta_i$. This simple observation shows
\begin{corollary}\label{cor:finitely-many-sectors}
	The Symanzik polynomials of a graph $G$ determine a finite number $N$ of sectors $\ScaleVec_k \in \Delta$ such that $\sdd_{\ScaleVec_k}(I_G)>0$ (for all $1 \leq k \leq N$) already implies that $\sdd_{\ScaleVec}(I_G) > 0$ holds for any $\ScaleVec \in \R^{\edges}$.
\end{corollary}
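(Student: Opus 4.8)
The plan is to reduce the claim, which \emph{a priori} quantifies over the infinite family of all sectors $\ScaleVec \in \R^{\edges}$, to the statement that a piecewise-affine function attains its minimum at one of finitely many points of a compact polytope. First I would use the two invariances recorded just before the corollary---$\sdd_{\ScaleVec} = \sdd_{\ScaleVec + \lambda(1,\ldots,1)}$ and $\sdd_{\ScaleVec} = \sdd_{\lambda \ScaleVec}$ for $\lambda>0$---to replace any sector by a representative in $\Delta$: subtracting $(\min_e \ScaleVec_e)(1,\ldots,1)$ makes all components non-negative with at least one equal to zero, and dividing by the (now positive) coordinate sum lands the sector in $\Delta$. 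Thus it suffices to bound $\sdd_{\ScaleVec}(I_G)$ from below on the compact $\Delta$. The only sectors without such a representative are the nonzero multiples of $(1,\ldots,1)$; these have $\supp \ScaleVec = \edges$, satisfy $\sdd_{\ScaleVec}=0$, and are excluded by the convergence criterion $\emptyset \neq \supp \ScaleVec \subsetneq \edges$.

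Next I would make precise that $\sdd_{\ScaleVec}(I_G)$ is affine on each region cut out by the hyperplane arrangement. By \eqref{eq:vanishing-degree-polynomial}, for each of $\psipol$ and $\phipol$ the vanishing degree $\deg_{\ScaleVec}$ equals the minimum, over the monomials $\SP^{n}$ occurring with non-zero coefficient, of the linear functions $\ScaleVec \mapsto \sum_{e} \ScaleVec_e n_e$. A minimum of finitely many linear functions is concave and piecewise-linear, and the only loci where the minimizing monomial can change are the hyperplanes $H_{n,m} = (n-m)^{\perp}$. Since $\sdd_{\ScaleVec}(I_G) = \sum_e \ScaleVec_e \EP_e + (\sdd(G)-\dimension/2)\deg_{\ScaleVec}(\psipol) - \sdd(G)\deg_{\ScaleVec}(\phipol)$ is a fixed linear combination of $\deg_{\ScaleVec}(\psipol)$, $\deg_{\ScaleVec}(\phipol)$ and the manifestly linear term $\sum_e \ScaleVec_e \EP_e$, it is affine on any region on which the minimizing monomials of both polynomials are constant.

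I would then invoke the finite hyperplane arrangement $\set{H_{n,m}}$ (one hyperplane for each pair of monomials occurring in $\psipol$ or in $\phipol$) to subdivide $\Delta$ into finitely many closed convex polytopes $\Delta_i$. On the relative interior of each $\Delta_i$ the minimizing monomials are constant, and by continuity of $\sdd_{\ScaleVec}$ the resulting affine formula extends to the whole closed cell. An affine function on a compact polytope attains its minimum at a vertex, so the minimum of $\sdd_{\ScaleVec}(I_G)$ over $\Delta_i$ is attained at one of its finitely many $0$-faces. Collecting the vertices of all cells gives the finite set $\set{\ScaleVec_k}_{k=1}^{N}$. If $\sdd_{\ScaleVec_k}(I_G)>0$ for every $k$, then $\sdd_{\ScaleVec}(I_G) \geq \min_k \sdd_{\ScaleVec_k}(I_G) > 0$ for all $\ScaleVec \in \Delta$, and the invariances of the first step propagate this to all of $\R^{\edges}$.

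I expect the reduction to $\Delta$ and the counting of cells to be entirely routine; the one point needing genuine care is the claim that the minimizing monomial of each Symanzik polynomial is locally constant \emph{away from} the arrangement, so that $\sdd_{\ScaleVec}(I_G)$ is truly affine on each closed cell rather than merely globally piecewise-affine. This is exactly what \eqref{eq:vanishing-degree-polynomial} provides: it rules out cancellation between distinct monomials and thereby pins $\deg_{\ScaleVec}$ to a single monomial in the interior of each $\Delta_i$, while continuity of $\sdd_{\ScaleVec}$ extends the affine expression to the boundary. No analytic difficulty arises beyond this bookkeeping, since the whole statement ultimately concerns minimizing a piecewise-affine function over a compact polytope.
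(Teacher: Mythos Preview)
Your proposal is correct and follows essentially the same argument as the paper: reduce to the compact simplex $\Delta$ via the shift and scaling invariances, subdivide $\Delta$ by the finite hyperplane arrangement $\{H_{n,m}\}$ coming from pairs of monomials in $\psipol$ and $\phipol$, note that $\sdd_{\ScaleVec}(I_G)$ is affine on each closed cell, and collect the vertices. Your extra care about the $(1,\ldots,1)$ direction and the continuity-based extension to closed cells is welcome detail that the paper leaves implicit.
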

\begin{remark}\label{rem:more-general-integrands}
	More generally, we can replace $I_G$ by any homogeneous integrand of the form $I = \prod_k f_k^{\EP_k}$, built from polynomials $f_k \in \C[\SP_1,\ldots,\SP_{\edges}]$ raised to constant ($\SP$-independent) powers $\EP_k$ such that $I\ \Omega$ is a projective form.
\end{remark}
This idea was used in \cite{PakSmirnov:GeometricApproachAsymptotic} to devise an efficient algorithm to find all divergent sectors (in the slightly different context of asymptotic expansions of Feynman integrals).

\subsubsection{Non-positive $\phipol$}
A much more severe complication of Minkowski space, which is impossible in the Euclidean case, is the occurrence of negative coefficients in $\phipol$.
\begin{example}
	On-shell massless $4$-point kinematics are defined by $p_1^2 = p_2^2 = p_3^3 = p_4^2 = 0 = s+t+u$ for $s = (p_1+p_2)^2$, $t = (p_1 + p_4)^2$ and $u=(p_1+p_3)^2$. This condition requires at least one of the Mandelstam invariants $\set{s,t,u}$ to be negative, which means that $\phipol$ acquires zeros inside the domain of integration.
	
	In this particular case, the problem can be circumvented by first relaxing the constraint $s+t+u = 0$ and taking all three variables to be positive. The analytic continuation $u \rightarrow -s-t$ can be computed afterwards, since it does not introduce further divergences (no monomials of $\phipol$ drop out).
\end{example}
Some much more complex examples have been discussed in \cite{JantzenSmirnov:PotentialAndGlauberRegions}, including a case where
\begin{equation*}
	\phipol = \SP_1 ( \SP_1 + \SP_2 + \SP_3 + \SP_4 + \SP_5)m^2 + (\SP_2 - \SP_3)(\SP_4 - \SP_5)q^2 - \imag\epsilon.
\end{equation*}
The infinitesimal imaginary part is needed to fix the ambiguity in the definition of $\log(\phipol)$, as $\phipol$ takes negative (and zero) values inside the domain $\SP \in \R_+^5$ of integration. This problem was resolved by a change of variables that transforms $\phipol$ into the positive polynomial $\SP_1(\SP_1+\SP_2 + \SP_3 + \SP_4 + \SP_5) m^2 + \SP_2 \SP_4 q^2$.

In general however, the desingularization of the second Symanzik polynomial is not well-understood. Instead, practical computations rely on the automatization of general \emph{sector decomposition} algorithms \cite{BinothHeinrich:SectorDecomposition,BinothHeinrich:SectorDecomposition2}.

But this interesting problem is not the subject of this thesis and we will only consider cases where all coefficients of $\phipol$ are positive.

\subsection{Analyticity and convergence}
We saw in corollary~\ref{cor:finitely-many-sectors} that in many cases, the absolute convergence of a Feynman integral $\FR(G)$ is equivalent to the simultaneous fulfilment 
\begin{equation}
	\Lambda_G
	\defas \bigcap_{k} \setexp{(\EP_1,\ldots,\EP_{\edges},\dimension)}{\sdd_{\ScaleVec_k}(I_G) > 0}
	\subset \R^{\edges + 1}
	\label{eq:def-convergence-domain}%
\end{equation}
of a finite number of homogeneous linear inequalities among the propagator indices $\EP_e$ and the dimension $\dimension$ of space-time. Crucially, this region is non-empty \cite{Speer:SingularityStructureGenericFeynmanAmplitudes}.
\begin{theorem}
	\label{theorem:nonempty-convergence-domain} %
	For Euclidean, non-exceptional external momenta, the domain $\Lambda_{G}$ of absolute convergence of $\FR(G)$ is non-empty provided that $G$ is free of tadpoles.\footnote{A tadpole is a subgraph $\gamma$ independent of any kinematics ($\phipol_{\gamma} = 0$), see also lemma~\ref{lemma:vanishing-degree-UV} (``$0$-scale'').}
\end{theorem}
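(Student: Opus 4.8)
The plan is to recognise that the whole assertion is the \emph{consistency of a finite system of strict linear inequalities}. By Corollary~\ref{corollary:projective-convergence-euclidean}, for non-exceptional Euclidean momenta absolute convergence of $\FR(G)$ is equivalent to $\abs{\gamma}+\deg_{\gamma}(I_G)>0$ for every proper nonempty $\gamma\subsetneq\edges$, together with $\EP_e>0$ for each edge (the latter from convergence of the Schwinger trick). Each relevant $\sdd_{\ScaleVec}(I_G)$ with $\ScaleVec=\1_{\gamma}$ is a homogeneous linear function of the variables $(\EP_1,\ldots,\EP_{\abs{\edges}},\dimension)$, and there are only finitely many such $\gamma$; hence $\Lambda_G$ is an open convex cone and it suffices to produce a single point in it.

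First I would make the inequalities explicit with Lemma~\ref{lemma:vanishing-degree-UV}: for $\gamma$ whose quotient is scaleful the constraint is the ultraviolet inequality $\sdd(\gamma)>0$, while for $\gamma$ with $0$-scale quotient it is the infrared inequality $-\sdd(G\contract\gamma)>0$. From the additivity $\loops{\gamma}+\loops{G\contract\gamma}=\loops{G}$ of the loop number under contraction one gets $\sdd(\gamma)+\sdd(G\contract\gamma)=\sdd(G)$, so the infrared constraint is equivalently $\sdd(\gamma)>\sdd(G)$. This lays bare the tension that drives the problem: the ultraviolet inequalities push the dimension $\dimension$ down relative to the indices, the infrared ones push it up, and the theorem asserts that tadpole-freeness leaves exactly enough room to satisfy both simultaneously.

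The cleanest way to establish consistency is a theorem of the alternative (Gordan's lemma): the open cone is nonempty precisely when the origin is \emph{not} a nonnegative combination of the constraint normals, which are $\bigl(\1_{\gamma},-\tfrac12\loops{\gamma}\bigr)$ for the ultraviolet constraints, $\bigl(\1_{\gamma}-\1,\tfrac12\loops{G\contract\gamma}\bigr)$ for the infrared ones, and $(\mathrm{e}_e,0)$ for edge positivity (here $\1$ is the all-ones vector, $\1_{\gamma}$ the indicator of $\gamma$, and $\mathrm{e}_e$ the $e$-th basis vector). I would assume a vanishing nonnegative combination with multipliers $\mu\ge0$ and read off the coordinate equations: the $\dimension$-coordinate balances the total infrared loop-weight against the total ultraviolet loop-weight, and the $\EP_e$-coordinate yields a conservation law relating the ultraviolet subgraphs through $e$ to the infrared cographs avoiding $e$. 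The aim is to show these equations force all $\mu=0$.

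The hard part will be this last combinatorial step, where the hypothesis must be used: a nontrivial balanced solution would exhibit a subgraph on which the ultraviolet and infrared demands coincide, i.e.\ a $\gamma$ for which $G\contract\gamma$ is at once treated as scaleful and as $0$-scale, which is exactly a tadpole and is excluded by assumption. Translating the algebraic balancing relations into this graph-theoretic statement is the genuine content and the main obstacle; the analytic input (meromorphy and the power counting itself) is already supplied by the earlier lemmas. A more hands-on alternative would bypass Gordan and build an explicit witness, starting from indices on the common boundary of the ultraviolet half-spaces and perturbing $\dimension$ together with the indices of the \emph{soft} (massless) edges by independent infinitesimals; the solvability of the resulting linear system for these perturbations is, unsurprisingly, again precisely what tadpole-freeness guarantees, so the same obstruction reappears in dual form.
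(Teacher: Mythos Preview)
The paper does not prove this theorem: it is stated with an explicit citation to Speer \cite{Speer:SingularityStructureGenericFeynmanAmplitudes} immediately before the statement (``Crucially, this region is non-empty \cite{Speer:SingularityStructureGenericFeynmanAmplitudes}''), and the subsequent development (the integration-by-parts construction of section~\ref{sec:anareg}) takes $\Lambda_G\neq\emptyset$ as an input. So there is no paper proof to compare against.

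On its own merits, your strategy is sound in outline: $\Lambda_G$ is indeed the solution set of finitely many strict homogeneous linear inequalities, your identification of the constraint normals is correct, and Gordan's lemma is an appropriate tool. But the proposal stops at exactly the point where the content lies. Your claim that a nontrivial vanishing nonnegative combination of the normals would force ``a $\gamma$ for which $G\contract\gamma$ is at once treated as scaleful and as $0$-scale'' does not make sense: for a \emph{single} $\gamma$ this is a contradiction in terms, and for a combination involving several distinct $\gamma$'s you have given no argument. More importantly, you have misidentified what the tadpole hypothesis excludes. ``$G$ is free of tadpoles'' means there is no \emph{subgraph} $\gamma\subseteq G$ with $\phipol_{\gamma}=0$; it is a statement about subgraphs of $G$, not about whether some quotient $G\contract\gamma$ lies in two classes at once. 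The genuine mechanism is that a tadpole $\gamma_0$ produces incompatible constraints (roughly, rescaling the Schwinger parameters of $\gamma_0$ leaves $\phipol$ untouched, so the integrand is homogeneous in those variables and the integral cannot converge for any choice of indices), and conversely one must show that in the absence of such a $\gamma_0$ the system is feasible. Translating this into the linear-algebraic language of Gordan---or, in your alternative approach, into solvability of the perturbation system---is precisely Speer's work, and you have asserted rather than supplied it.
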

Inside $\Lambda_G$ and the Euclidean region, differentiation under the integral sign\footnote{See the theorem on holomorphic parameter integrals in \cite{Sauvigny:PDE1}.} proves that $\FR(G)$ is an analytic function of the variables $\EP_e$, $\dimension$ and also of the kinematic invariants $\Kinematics = \set{m_e^2} \cup \set{\ExMom(F)^2}$ that appear in the second Symanzik polynomial $\phipol$. The analytic continuation of $\FR(G)$ behaves very differently with respect to these arguments:
\begin{itemize}
	\item $\FR(G)$ is a multivalued function of the kinematics $\Kinematics$ with non-trivial monodromies. The singularities are governed by \emph{Landau equations} and branch cut ambiguities can be related to \emph{Cutkosky rules}. We do not address these questions here.\footnote{These subjects are treated in most textbooks on quantum field theory and have been studied extensively in the literature, but are still far from being solved \cite{BlochKreimer:LandauOneLoop}.}
	\item The dependence on $\EP_e$ and $\dimension$ is single-valued. Strikingly, $\FR(G)$ extends to a meromorphic function of these variables. As we saw already, the corresponding singularities are linear hypersurfaces in $\R^{\edges + 1}$.
\end{itemize}
Below we derive convergent integral representations for the analytic continuation of $\FR(G)$ to points outside the convergence domain $\Lambda_G$ of the original integral $\int I\ \Omega$. 
Our argument relies only on $\Lambda_G \neq \emptyset$ and reproves (via a constructive argument based on integration by parts) Speer's result \cite{Speer:SingularityStructureGenericFeynmanAmplitudes} that $\FR(G)$ is meromorphic in $\EP_e$ and $\dimension$, with poles only on hyperplanes (see corollary~\ref{cor:convergent-anareg-integrand} and remark~\ref{rem:anareg-divisors}).

\subsection{Analytic regularization}
\label{sec:anareg} %
Consider a projective integrand $I$ after the rescaling \eqref{eq:anareg-delta-factor}. The partial integration
\begin{equation}
	\label{eq:anareg-partial} %
	\int_0^{\infty} 
		\frac{\dd \Pscale}{\Pscale} 
		\Pscale^{\sdd_{\ScaleVec}} \cdot \widetilde{I^{(\ScaleVec)}}
	=
		\restrict{
			\frac{\Pscale^{\sdd_{\ScaleVec}}}{\sdd_{\ScaleVec}}
			\widetilde{I^{(\ScaleVec)}}
		}{\Pscale=0}^{\infty} 
		- \frac{1}{\sdd_{\ScaleVec}}
			\int_0^{\infty} \dd \Pscale \cdot \Pscale^{\sdd_{\ScaleVec}}
				\frac{\partial}{\partial \Pscale} \widetilde{I^{(\ScaleVec)}}
\end{equation}
has vanishing boundary contribution inside the convergence domain $\Lambda_G$ (as $\sdd_{\ScaleVec}>0$).\footnote{The vanishing when $\Pscale \rightarrow \infty$ follows from $\Pscale^{\sdd_{\ScaleVec}} \widetilde{I^{(\ScaleVec)}} = I^{(\ScaleVec)} = I^{(-\ScaleVec)} (\Pscale^{-1}) = \Pscale^{-\sdd_{-\ScaleVec}} \widetilde{I^{(-\ScaleVec)}}(\Pscale^{-1})$.}
We substitute back $\Pscale^{\ScaleVec_e} \SP_e$ with $\SP_e$, identify 
$
	\int_0^{\infty} \frac{\dd\Pscale}{\Pscale}
	\delta\left( 1-\Pscale^{-1} \sum_e \SP_e^{1/{\ScaleVec_e}} \right)
	= 1
	$ in \eqref{eq:anareg-delta-factor} and conclude that $\int \Omega \ I = \int \Omega\ \anapartial{\ScaleVec} \left( I \right)$ for the differential operator ($\partial_e \defas \frac{\partial}{\partial \SP_e}$)
\begin{equation}
	\label{eq:def-anapartial} %
	\anapartial{\ScaleVec}
	\defas
		1-\frac{1}{\sdd_{\ScaleVec}} 
		\sum_{e\in \edges} \ScaleVec_e \partial_e \SP_e
	= \frac{1}{\sdd_{\ScaleVec}} \left[
			\deg_{\ScaleVec}
			-\sum_{e} \ScaleVec_e \SP_e \partial_e
		\right].
\end{equation}
Both $\int \Omega\ I$ and $\int \Omega\ \anapartial{\ScaleVec}(I)$ are absolutely convergent on $\Lambda_G$ and coincide there, so their analytic continuations are the same.
\begin{example}[Triangle graph from example~\ref{ex:triangle-divergences}]\label{ex:triangle-anapartial}
		With respect to $\gamma=\set{3}$ we have $\widetilde{I^{(\gamma)}} = \psipol^{2\varepsilon-1} \cdot \left[ m^2 \psipol + p_2^2\SP_1 + p_1^2\SP_2 \right]^{-1-\varepsilon}$ with $\sdd_{\gamma}= -\varepsilon$ and deduce
		\begin{equation}
			\label{eq:triangle-divergent-partial} %
			\int \frac{\Omega}{\psipol^{1-2\varepsilon} \phipol^{1+\varepsilon}}
			= 
			\frac{1}{\varepsilon} \cdot \int \frac{\Omega}{\SP_3^{\varepsilon}} \frac{\partial}{\partial \SP_3} \widetilde{I^{(\gamma)}}
			= \frac{1}{\varepsilon} \cdot 
				\int \frac{\Omega\ \SP_3 }{\psipol^{1-2\varepsilon} \phipol^{1+\varepsilon}}
				\left[ \frac{2\varepsilon-1}{\psipol}
				- \frac{(1+\varepsilon)\SP_3 m^2}{\phipol}
				\right]
		\end{equation}
		as an identity of absolutely convergent integrals on their joint domain $\Lambda_{G} = \set{\varepsilon<0}$ of convergence. Note that the integral on the right-hand side has an increased regime $\set{\varepsilon<1}$ of convergence and can thus be expanded near $\varepsilon\rightarrow 0$.
\end{example}
\begin{example}\label{ex:anapartial-single-edges}
	Consider a single edge $\gamma = \set{f}$ and assume that $0 \neq \restrict{\phipol}{\SP_f=0}$ such that $\sdd_{\gamma} = \EP_f$. When $\EP_f\leq 0$, this divergence originates from the Schwinger trick \eqref{eq:Schwinger-trick}. Including the $\Gamma^{-1}(\SP_e)$ prefactors \eqref{eq:feynman-integral-projective} into the integrand, the partial integration \eqref{eq:def-anapartial} associated to $\gamma$ ($\anapartial{\ScaleVec}$ where $\ScaleVec_f = 1$ and $\ScaleVec_e=0$ for $e\neq f$) replaces
	\begin{equation}
		\left[\prod_{e\in\edges}\frac{\SP_e^{\EP_e-1}}{\Gamma(\EP_e)}\right] 
		\frac{1}{\psipol^{\dimension/2-\sdd}\phipol^{\sdd}}
		\quad\text{by}\quad
		\frac{\SP_f^{\EP_f}}{\Gamma(\EP_f+1)}
		\partial_f \left[\prod_{e\in\edges\setminus\set{f}}\frac{\SP_e^{\EP_e-1}}{\Gamma(\EP_e)}\right] 
		\frac{1}{\psipol^{\dimension/2-\sdd}\phipol^{\sdd}}.
		\label{eq:anapartial-single-edges}%
	\end{equation}
	This corresponds to the analytic continuation $P^{-\EP} = \Gamma^{-1}(\EP+1) \int_0^{\infty} \SP^\EP (-\partial_\SP) e^{-\SP P}\ \dd \SP$ of \eqref{eq:Schwinger-trick}. In the limit when $\EP \in -\N_0$ becomes a negative integer, iteration of this formula results in the elementary $P^{-\EP} = \int_0^{\infty} (-\partial_{\SP})^{1-\EP} e^{-\SP P} = \lim_{\SP\rightarrow 0} (-\partial_{\SP})^{-\EP} e^{-\SP P}$.
\end{example}
\begin{remark}
	\label{remark:ibp-parametric-noboundary}%
	Because $\int \Omega\ I = \int \Omega\ \anapartial{\ScaleVec}(I)$ holds for any $\ScaleVec \in \R^{\edges}$, \eqref{eq:def-anapartial} shows that 
\begin{equation}
	\label{eq:ibp-parametric-noboundary} %
	\int \Omega\ \left( \partial_e \SP_e I \right) = 0
	\quad\text{for any edge $e \in \edges$}.
\end{equation}
We just proved this identity in its literal sense whenever the integral is convergent. But this also means that the analytic continuation of $\int \Omega\ (\partial_e \SP_e I)$ is identically zero everywhere, because it is defined (and zero) on its non-empty domain of convergence. So if we understand $\int \Omega\ I$ as a symbol defined to be the analytic continuation of the actual integral from its domain of convergence, \eqref{eq:ibp-parametric-noboundary} is valid everywhere.

This very nicely explains the analogous formula $\int \dd[\dimension] k \ \partial_{k_{\mu}} \cdots = 0$, which is typically introduced as an \emph{axiom}\footnote{It has been known before though that it is a consequence of analytic continuation \cite{Collins}.} for the definition of dimensional regularization in momentum space and the corner stone of \emph{integration by parts} in momentum space \cite{ChetyrkinTkachov:IBP}.
\end{remark}
We summarize our results as
\begin{lemma}
	\label{lemma:anapartial} %
	Let $I$ denote a parametric integrand such that $I\ \Omega$ is a projective form. For any sector $\ScaleVec \in \R^{\edges}$ with $\sdd_{\ScaleVec}(I) \neq 0$, the parametric integrand $I' \defas \anapartial{\ScaleVec}\left( I \right)$ fulfils
	\begin{enumerate}
		\item $\int I\  \Omega$ = $\int I'\ \Omega$ as analytically regularized integrals,
		\item $\sdd_{\ScaleVec'} \big( I' \big) \geq \sdd_{\ScaleVec'} \left( I \right)$ for any $\ScaleVec'$ and
		\item $\sdd_{\ScaleVec} ( I' ) > \sdd_{\ScaleVec}\left( I \right)$ increases.
	\end{enumerate}
\end{lemma}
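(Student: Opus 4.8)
The first claim needs no new computation: it is precisely the identity $\int\Omega\,I=\int\Omega\,\anapartial{\ScaleVec}(I)$ established in the passage culminating in \eqref{eq:def-anapartial}. The plan is to recall that, after inserting the $\delta$-factor as in \eqref{eq:anareg-delta-factor} and integrating by parts in $\Pscale$ via \eqref{eq:anareg-partial}, the boundary term vanishes throughout the convergence domain $\Lambda_G$ (since $\sdd_{\ScaleVec}(I)>0$ there), leaving two \emph{absolutely} convergent integrals that agree on $\Lambda_G$; their analytic continuations therefore coincide. The hypothesis $\sdd_{\ScaleVec}(I)\neq 0$ is exactly what makes the prefactor $1/\sdd_{\ScaleVec}$ in $\anapartial{\ScaleVec}$ meaningful.

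For the degree statements I would first rewrite the operator, using \eqref{eq:def-anapartial}, as $\anapartial{\ScaleVec}=\frac{1}{\sdd_{\ScaleVec}(I)}\big(\deg_{\ScaleVec}(I)-\vartheta_{\ScaleVec}\big)$ with the weighted Euler operator $\vartheta_{\ScaleVec}\defas\sum_{e}\ScaleVec_e\SP_e\partial_e$, which acts on a monomial $\SP^n$ by multiplication with its $\ScaleVec$-degree $\sum_e\ScaleVec_e n_e=\deg_{\ScaleVec}(\SP^n)$. Writing $I=\prod_k f_k^{\EP_k}$ as in remark~\ref{rem:more-general-integrands}, the derivation property of $\vartheta_{\ScaleVec}$ gives $\vartheta_{\ScaleVec}(I)=I\sum_k\EP_k\,\vartheta_{\ScaleVec}(f_k)/f_k$, and together with $\deg_{\ScaleVec}(I)=\sum_k\EP_k d_k$ (from \eqref{eq:vanishing-degree-rules}) I obtain the closed form
\begin{equation*}
	I'=\frac{I}{\sdd_{\ScaleVec}(I)}\sum_k\EP_k\,\frac{d_k f_k-\vartheta_{\ScaleVec}(f_k)}{f_k},
	\qquad d_k\defas\deg_{\ScaleVec}(f_k).
\end{equation*}
The whole content of claims~2 and~3 is thereby packed into the single polynomial $d_k f_k-\vartheta_{\ScaleVec}(f_k)=\sum_n c_n\,\big(d_k-\deg_{\ScaleVec}(\SP^n)\big)\SP^n$, where $f_k=\sum_n c_n\SP^n$.

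The crucial observation is that this polynomial uses only monomials already present in $f_k$, and that every monomial of minimal $\ScaleVec$-degree (those with $\deg_{\ScaleVec}(\SP^n)=d_k$, i.e.\ the initial form of $f_k$) has been \emph{annihilated}. For claim~3 I specialise $\ScaleVec'=\ScaleVec$: all surviving monomials satisfy $\deg_{\ScaleVec}(\SP^n)>d_k$, so \eqref{eq:vanishing-degree-polynomial} gives $\deg_{\ScaleVec}\big(d_k f_k-\vartheta_{\ScaleVec}f_k\big)>d_k=\deg_{\ScaleVec}(f_k)$; each summand then has strictly positive $\ScaleVec$-degree, so does the sum, and multiplying by $I$ yields $\deg_{\ScaleVec}(I')>\deg_{\ScaleVec}(I)$, i.e.\ $\sdd_{\ScaleVec}(I')>\sdd_{\ScaleVec}(I)$. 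For claim~2 and an arbitrary $\ScaleVec'$ I use only that the surviving monomials form a \emph{subset} of those of $f_k$, so the minimum in \eqref{eq:vanishing-degree-polynomial} runs over fewer terms and $\deg_{\ScaleVec'}\big(d_k f_k-\vartheta_{\ScaleVec}f_k\big)\geq\deg_{\ScaleVec'}(f_k)$; hence each summand has non-negative $\ScaleVec'$-degree and $\deg_{\ScaleVec'}(I')\geq\deg_{\ScaleVec'}(I)$.

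I expect the only delicate point to be the passage from individual summands to their sum, where the second rule in \eqref{eq:vanishing-degree-rules} yields merely $\deg_{\ScaleVec'}\big(\sum_k\cdots\big)\geq\min_k$ and cancellations between summands could in principle occur. This is harmless: a minimum of finitely many strictly positive numbers is strictly positive (claim~3) and of finitely many non-negative numbers is non-negative (claim~2). The degenerate case where the bracket vanishes identically ($I'=0$, i.e.\ $I$ itself $\ScaleVec$-homogeneous) makes both inequalities hold in the extended sense. A final bookkeeping remark is that $I'$ is again a polynomial times a product of polynomial powers, so its vanishing degrees are well defined in the sense of definition~\ref{def:vanishing-degree-general}.
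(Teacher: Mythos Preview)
Your proof is correct and follows essentially the same route as the paper's. Both arguments rest on the observation that $\SP_e\partial_e f_k$ (or, in your grouping, $\vartheta_{\ScaleVec}f_k$) contains only a subset of the monomials of $f_k$, so the minimum in \eqref{eq:vanishing-degree-polynomial} cannot decrease; the paper treats each $\SP_e\partial_e$ separately and invokes Leibniz, whereas you bundle them into the weighted Euler operator and obtain the closed form for $I'$, which makes claim~3 explicit rather than deferring to \eqref{eq:anareg-partial}.
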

\begin{proof}
	We proved property~1 above and 3 is immediate from \eqref{eq:anareg-partial}.	To see property~2, consider any polynomial $p = \sum_{n} c_{n} \SP^{n}$ in Schwinger parameters with monomials $\SP^n$. Then for any exponent $\EP \neq 0$,
\begin{equation*}
	\deg_{\ScaleVec'} \SP_e \partial_e \left( p^{\EP} \right)
	= \deg_{\ScaleVec'} \left( \EP p^{\EP} \frac{\SP_e \partial_e p}{p} \right)
	= \deg_{\ScaleVec'} (p^{\EP})
		+ \min_{c_n \neq 0, n_e \neq 0} \deg_{\ScaleVec'} \SP^{n}
		- \deg_{\ScaleVec'} (p)
	\geq
	\deg_{\ScaleVec'} (p^{\EP})
\end{equation*}
because the minimum runs over a subset of the monomials in \eqref{eq:vanishing-degree-polynomial}. The integrand $I = \prod_p p^{\EP_p}$ is a product of powers of polynomials, thus the statement follows from Leibniz's rule and the form \eqref{eq:def-anapartial} of the operator $\anapartial{\ScaleVec}$, together with \eqref{eq:vanishing-degree-rules}.
\end{proof}
\begin{remark}\label{rem:integer-sectors}
	In corollary~\ref{cor:finitely-many-sectors}, we can choose $\ScaleVec_k \in \Q^{\edges}$ ($\Delta$ and $H_{n,m}$ are defined over $\Q$, so are the polytopes $\Delta_i$) and then rescale them by the common denominator of their components. Thus we can assume $\ScaleVec_k \in \N_0^{\edges}$ such that $\widetilde{I}^{(\ScaleVec_k)}$ will be analytic and admit a Taylor series in $\Pscale$. Then $\sdd_{\ScaleVec}(I') \geq 1 + \sdd_{\ScaleVec}(I)$ increases by an integer.
\end{remark}
\begin{corollary}\label{cor:convergent-anareg-integrand}
	Given a projective form $I_G\ \Omega$ and a point $(\EP,\dimension) \notin \Lambda_G$ outside its domain of convergence, finitely many applications of operators $\anapartial{\ScaleVec}$ (for suitable sectors $\ScaleVec$) on $I_G$ suffice to generate an integrand $I'$ which converges at $(\EP, \dimension)$ and computes the analytic continuation $\int I'\ \Omega$ of $ \int I\  \Omega$.
\end{corollary}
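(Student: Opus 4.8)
The plan is to drive the integrand into the convergence domain by repeatedly applying the partial-integration operators $\anapartial{\ScaleVec}$ of lemma~\ref{lemma:anapartial} to the finitely many critical sectors. First I would invoke corollary~\ref{cor:finitely-many-sectors} together with remark~\ref{rem:integer-sectors} to fix a finite set of integer sectors $\ScaleVec_1,\ldots,\ScaleVec_N \in \N_0^{\edges}$ with the property that $\sdd_{\ScaleVec_k}(I) > 0$ for all $k$ already forces $\sdd_{\ScaleVec}(I) > 0$ for every $\ScaleVec$, and hence absolute convergence by corollary~\ref{corollary:projective-convergence-euclidean}. The decisive structural observation is that none of the operators $\anapartial{\ScaleVec}$ enlarges the set of monomials occurring in the polynomials of the integrand: by \eqref{eq:def-anapartial} the operator only multiplies $I$ by rational factors of the shape $\SP_e \partial_e p / p$, whose numerators $\SP_e \partial_e p$ have monomials among those already present in $p$. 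Consequently the hyperplane arrangement $\{H_{n,m}\}$ defining the sectors, and with it the finite set $\ScaleVec_1,\ldots,\ScaleVec_N$, is preserved throughout the procedure (using remark~\ref{rem:more-general-integrands} to treat the resulting sums of such integrands).

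Next I would run the following iteration. At the target point evaluate the numbers $s_k \defas \sdd_{\ScaleVec_k}(I)\big|_{(\EP,\dimension)}$; since $(\EP,\dimension)\notin\Lambda_G$, at least one $s_k$ is non-positive. Choosing some $k$ with $s_k < 0$ and replacing $I$ by $\anapartial{\ScaleVec_k}(I)$, lemma~\ref{lemma:anapartial} guarantees three things at once: the analytically regularized integral is unchanged (property~1), no $s_j$ decreases (property~2), and --- because $\ScaleVec_k$ is an integer sector --- the chosen $s_k$ jumps up by at least $1$ (property~3 strengthened by remark~\ref{rem:integer-sectors}). To see that only finitely many steps are needed I would track the potential $\Phi \defas \sum_{k\colon s_k < 0} \lceil -s_k \rceil \in \N_0$, which strictly decreases at every step while no summand ever grows. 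When $\Phi = 0$ all the $s_k$ are positive.

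At that moment corollary~\ref{cor:finitely-many-sectors} upgrades $\sdd_{\ScaleVec_k}(I') > 0$ for the finite family to $\sdd_{\ScaleVec}(I') > 0$ for every sector $\ScaleVec$, so corollary~\ref{corollary:projective-convergence-euclidean} makes $\int I'\ \Omega$ absolutely convergent at $(\EP,\dimension)$, while property~1 of lemma~\ref{lemma:anapartial}, applied at each step, shows $\int I'\ \Omega = \int I_G\ \Omega$ as analytically regularized integrals. Thus $I'$ converges at the desired point and computes the sought continuation.

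The step that needs the most care --- and the genuine obstruction --- is the hypothesis $\sdd_{\ScaleVec}(I)\neq 0$ in lemma~\ref{lemma:anapartial}: by \eqref{eq:def-anapartial} the operator carries a factor $1/\sdd_{\ScaleVec}$, a linear function of $(\EP,\dimension)$, so it is singular exactly on the hyperplane $\{\sdd_{\ScaleVec}=0\}$. This is not a mere technicality: $s_k = 0$ signals a logarithmic divergence and, generically, a true pole of the meromorphic continuation (the divisors of remark~\ref{rem:anareg-divisors}), where no finite convergent representation can exist. The construction therefore has to be read for target points off these exceptional loci, i.e. where no $\sdd_{\ScaleVec_k}$ hits a non-positive integer along the way; one then only ever applies $\anapartial{\ScaleVec_k}$ while $s_k < 0$ strictly, and the increase by at least $1$ steps safely over $0$ (as in the triangle of example~\ref{ex:triangle-anapartial}, where $\sdd_\gamma = -\varepsilon$ and a single partial integration already enlarges the convergence regime to $\{\varepsilon < 1\}$). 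Arranging the order of operations so as never to divide by a vanishing $\sdd_{\ScaleVec}$ while still reaching $\Phi = 0$ is precisely the delicate point; for generic kinematics and non-integer dimensional offset it is automatic.
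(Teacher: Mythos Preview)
Your approach is essentially the paper's: fix the finitely many integer sectors $\ScaleVec_k$ from remark~\ref{rem:integer-sectors} and repeatedly apply $\anapartial{\ScaleVec_k}$, using lemma~\ref{lemma:anapartial} (properties~1--3) to guarantee that the continuation is preserved, no degree decreases, and the chosen degree strictly increases. The paper processes the sectors sequentially (finish $\ScaleVec_1$, then $\ScaleVec_2$, \ldots), which is slightly cleaner than your potential $\Phi$ since property~2 already ensures previously treated sectors stay positive; your discussion of the monomial-preservation and of the obstruction at $\sdd_{\ScaleVec_k}=0$ (the pole hyperplanes of remark~\ref{rem:anareg-divisors}) is correct and more explicit than the paper's terse proof, but not a different route.
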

\begin{proof}
	Choose the $\ScaleVec_k \in \N_0^{\edges}$ as in remark~\ref{rem:integer-sectors} and apply $I_{(j+1)} = \anapartial{\ScaleVec_k} (I_j)$ to $I_0 \defas I$ until $\sdd_{\ScaleVec_k}(I_j) > 0$. Repeat this process for each $k$ to reach a convergent integrand $I_N$.
\end{proof}
\begin{remark}\label{rem:anareg-divisors}
	The only singularities occur through the denominators $\sdd_{\ScaleVec_k}(I_j) \in \sdd_{\ScaleVec_k}(I) + \N_0$ in \eqref{eq:def-anapartial}, so $\int I\ \Omega$ is meromorphic with poles along hyperplanes
	\begin{equation}
		\bigcup_k \setexp{\sdd_{\ScaleVec_k}(I) = -n}{n \in \N_0}.
		\label{eq:anareg-divisors}%
	\end{equation}
\end{remark}

\subsection{Applications}
The convergent integral representation $\int I'\ \Omega$ for the analytic continuation of $\int I_G\ \Omega$ can itself be interpreted in terms of Feynman integrals.
From \eqref{eq:def-anapartial} it is clear that $I'$ is a linear combination of terms $(\psipol/\phipol)^{\sdd}/\psipol^{\dimension/2} \cdot P/ (\psipol^n \phipol^m)$ with integers $n,m \in \N_0$ and monomials $P = \prod_e \SP_e^{\EP_e'-1}$, so they correspond to the integrand $I_{G'}$ of the same graph, but with shifted indices $\EP_e' \in \EP_e + \N_0$ instead of $\EP_e$ and in dimension $\dimension' = \dimension + 2(n+m)$.
\begin{example}
	In the example~\ref{ex:triangle-anapartial} of the triangle $G$ with unit indices in $\dimension = 4 -2\varepsilon$, we find two terms in \eqref{eq:triangle-divergent-partial}. Paying attention to the $\Gamma$-prefactors in \eqref{eq:feynman-integral-projective}, they give
	\begin{equation}
		\FR(G,1,1,1,4-2\varepsilon)
		= \frac{2\varepsilon-1}{\varepsilon} \FR(G,1,1,2,6-2\varepsilon)
		- \frac{m^2}{2\varepsilon} \FR(G,1,1,3,6-2\varepsilon).
		\label{eq:triangle-anareg-Feynman}%
	\end{equation}
\end{example}
\begin{corollary}\label{cor:finite-anareg-as-Feynman}
	Every analytically regularized Feynman integral $\FR(G,\EP,\dimension)$ can be written as a linear combination $\FR(G,\EP,\dimension )= \sum_i r_i \FR(G, \EP_i, \dimension_i)$ of convergent Feynman integrals related to the same graph (but with integer-shifted indices and dimension) with prefactors $r_i$ that are rational functions in kinematics $\Kinematics$, dimension $\dimension$ and indices $\EP$.
\end{corollary}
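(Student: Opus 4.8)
The plan is to run the convergent integrand manufactured in corollary~\ref{cor:convergent-anareg-integrand} back through the projective Feynman rules~\eqref{eq:feynman-integral-projective} and to read off each of its summands as the integrand of $G$ with shifted parameters. First I would fix a point $(\EP,\dimension)$; by definition $\FR(G,\EP,\dimension)$ \emph{is} the analytic continuation of $\frac{\Gamma(\sdd)}{\prod_e\Gamma(\EP_e)}\int I_G\ \Omega$ from $\Lambda_G$, so it suffices to produce a convergent representation of $\int I_G\ \Omega$ at that point. Choosing the integer sectors $\ScaleVec_k\in\N_0^{\edges}$ of remark~\ref{rem:integer-sectors} and applying corollary~\ref{cor:convergent-anareg-integrand} yields an integrand $I'$ with $\int I'\ \Omega$ absolutely convergent at $(\EP,\dimension)$ and equal to that continuation.

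Next I would unfold the algebraic shape of $I'$. Writing $\anapartial{\ScaleVec}=\frac{1}{\sdd_{\ScaleVec}}\big[\deg_{\ScaleVec}-\sum_e\ScaleVec_e\SP_e\partial_e\big]$ as in~\eqref{eq:def-anapartial} and applying Leibniz's rule to a term $\prod_e\SP_e^{\EP_e-1}\cdot\psipol^{-a}\phipol^{-b}$, each derivative either hits a power $\SP_e^{\EP_e-1}$ (where the accompanying $\SP_e$ of $\SP_e\partial_e$ restores its degree, leaving the power unchanged), or hits $\psipol^{-a}$ resp.\ $\phipol^{-b}$ (raising the exponent to $a+1$ resp.\ $b+1$ and inserting the polynomial $\SP_e\partial_e\psipol$ resp.\ $\SP_e\partial_e\phipol$ into the numerator). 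Expanding these numerator polynomials into monomials and absorbing them into the $\SP_e$-powers, $I'$ becomes a finite sum
\begin{equation*}
	I'=\sum_i q_i\cdot\frac{(\psipol/\phipol)^{\sdd}}{\psipol^{\dimension/2}}\cdot\frac{\prod_e\SP_e^{\EP_{e,i}-1}}{\psipol^{n_i}\phipol^{m_i}},\qquad n_i,m_i\in\N_0,\ \EP_{e,i}\in\EP_e+\N_0,
\end{equation*}
whose coefficients $q_i$ are rational in $\dimension,\EP$ (through the factors $1/\sdd_{\ScaleVec}$) and polynomial in the kinematics $\Kinematics$ (masses and momenta enter only via the numerators $\SP_e\partial_e\phipol$, since $\psipol$ is kinematics-free). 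Crucially, no power of $\SP_e$, $\psipol$ or $\phipol$ is ever lowered, so the shifts are genuinely non-negative.

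Finally I would identify each summand with a Feynman integrand. Because $\anapartial{\ScaleVec}$ preserves projectivity (lemma~\ref{lemma:anapartial}), every term is homogeneous of degree $-\abs{\edges}$; comparing its powers of $\psipol$ and $\phipol$ against~\eqref{eq:projective-delta-form-integrand} then forces it to equal $I_{G_i}$, the integrand of $G$ with indices $\EP_{e,i}$, dimension $\dimension_i=\dimension+2(n_i+m_i)\in\dimension+2\N_0$, and $\sdd_i=\sdd+m_i$ — the homogeneity being exactly the bookkeeping identity that makes $\sum_e(\EP_{e,i}-\EP_e)$ compatible with $n_i$ and $m_i$. Feeding this back into~\eqref{eq:feynman-integral-projective} gives
\begin{equation*}
	\FR(G,\EP,\dimension)=\sum_i r_i\,\FR(G,\EP_i,\dimension_i),\qquad r_i=q_i\cdot\frac{\Gamma(\sdd)}{\prod_e\Gamma(\EP_e)}\cdot\frac{\prod_e\Gamma(\EP_{e,i})}{\Gamma(\sdd_i)},
\end{equation*}
exactly as in the triangle example~\eqref{eq:triangle-anareg-Feynman}.

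The main obstacle is the final piece of rationality bookkeeping rather than anything analytic. I would argue that, since $\EP_{e,i}-\EP_e\in\N_0$ and $\sdd_i-\sdd=m_i\in\N_0$ are integers, each quotient $\Gamma(\EP_{e,i})/\Gamma(\EP_e)=\EP_e(\EP_e+1)\cdots(\EP_{e,i}-1)$ and $\Gamma(\sdd)/\Gamma(\sdd_i)=\big[\sdd(\sdd+1)\cdots(\sdd_i-1)\big]^{-1}$ is a rational function of $\dimension$ and $\EP$; combined with $q_i\in\Q(\Kinematics,\dimension,\EP)$ this places $r_i\in\Q(\Kinematics,\dimension,\EP)$ and finishes the proof. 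The only subtlety to check carefully is that the monomial expansion genuinely never produces negative index shifts or odd dimension shifts, which holds because every differentiation of $\psipol^{-1}$ or $\phipol^{-1}$ contributes exactly one unit to $n_i$ or $m_i$ and multiplies by a polynomial with non-negative integer exponents.
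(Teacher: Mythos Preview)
Your proof is correct and follows exactly the route the paper takes: the argument in the paragraph preceding the corollary already notes that $I'$ is a sum of terms $(\psipol/\phipol)^{\sdd}/\psipol^{\dimension/2}\cdot P/(\psipol^n\phipol^m)$ with $n,m\in\N_0$ and monomials $P$, corresponding to integrands of $G$ with indices in $\EP_e+\N_0$ and dimension $\dimension+2(n+m)$. You have simply supplied the details the paper leaves implicit, in particular the Leibniz expansion of $\anapartial{\ScaleVec}$ and the rationality of the $\Gamma$-ratios coming from integer shifts.
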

Special cases of relations between Feynman integrals with different values of the space-time dimension have been known for long \cite{Tarasov:ConnectionBetweenFeynmanIntegrals}. They have been used to derive difference equations that allow for rapid numerical evaluation to very high precision \cite{Lee:DimensionalRecurrenceAnalyticalProperties,LeeSmirnov:EasyWay}, but we could not find a statement similar to our result in the literature.
\subsubsection{Finite master integrals}
A standard technique for calculations involving many individual Feynman diagrams is to exploit integration by parts (IBP) \cite{ChetyrkinTkachov:IBP} to express all of them in terms of a small number of \emph{master integrals}. Corollary~\ref{cor:finite-anareg-as-Feynman} proves that we can choose this basis to contain only finite integrals, which has several benefits in practice compared to divergent bases (see the following subsections for example).

Several computer programs that automatize IBP reductions are publicly available already \cite{Smirnov:Fire4LiteRed,ManteuffelStuderus:Reduze2}, but so far none of them allows to force the selection of a finite set of master integrals. 
This is expected to be rectified soon, because it suffices to just enumerate suitable (subdivergence free) integrals until one can select a basis.
A first study of these aspects in relation with IBP, including instructive examples, appeared after this thesis was finished \cite{ManteuffelPanzerSchabinger:QuasiFinite}.

\subsubsection{Integration with hyperlogarithms}
Having a convergent integral representation at hand allows for the computation of the individual terms of an $\varepsilon$-expansion $\FR(G) = \sum_{n} c_n \varepsilon^n$ of a Feynman integral analytically, because we may expand the integrand. This was our original motivation to construct these integrands \cite{Panzer:DivergencesManyScales}. It is very important that they are Feynman integrals themselves, in particular the integrands $I_n$ for each coefficient $c_n = \int I_n\ \Omega$ are polynomials
	\begin{equation}
		I_n
		\in
		\Q\left[
				\Kinematics,
				\psipol^{-1}, \phipol^{-1},
				\log( \psipol), \log (\phipol),
				\SP_e, \SP_e^{-1}, \log(\SP_e) \colon e \in \edges
		\right].
		\label{eq:epsilon-expansion-integrands}%
	\end{equation}
As we shall discuss in chapter~\ref{chap:hyperlogs}, such integrals can be computed with hyperlogarithms for \emph{linearly reducible} graphs $G$. In particular this means that in an analytic regularization scheme, the presence of subdivergences does not introduce essentially new difficulties or obstructions for exact integration techniques.

\subsubsection{Numeric evaluation}
Standard techniques like Monte Carlo integration can in principle be applied directly to a convergent integral. The reduction above therefore extends the applicability of these methods to divergent Feynman integrals as well.
However, the accuracy might be very poor due to the (integrable) singularities of the parametric integrand (on the boundary where some $\SP_e \rightarrow 0,\infty$). 
But note that further applications of $\anapartial{\ScaleVec_k}$ can be used to actually ensure finiteness (or even vanishing) of the integrand on these boundaries, which should increase the accuracy considerably. 
Studying the practical feasibility of this approach seems to be a very interesting project for future research.

\subsubsection{Comparison with sector decomposition}
Available programs \cite{BorowkaCarterHeinrich:SecDec2,BognerWeinzierl:ResolutionOfSingularities,Smirnov:FIESTA3} for numeric evaluation of Feynman integrals are based on sector decomposition \cite{BinothHeinrich:SectorDecomposition,BinothHeinrich:SectorDecomposition2}, which is a very general method for desingularization of polynomials \cite{BognerWeinzierl:Periods} and therefore applicable in most cases of practical interest (at least for Euclidean kinematics; extensions to Minkowski kinematics were discussed only rather recently \cite{BorowkaCarterHeinrich:SecDec2}).
This techniques divides the integral $\int I\ \Omega = \sum_i \int_{[0,1]^{\edges-1}} I_i$ into many convergent integrals constructed by iterations of changes of variables (poles in the analytic regulators are made explicit in prefactors). While the most common application is to use this representation for numeric integration, one can also think of exploiting sector decomposition solely for the purpose of desingularization and aim for an exact evaluation of the individual integrals $I_i$.

However, the changes of variables mean that the polynomials in $I_i$ are different from the original $\set{\psipol,\phipol}$ and the form \eqref{eq:epsilon-expansion-integrands} is not guaranteed anymore. 
So even for linearly reducible graphs, it is not clear if each $I_i$ as constructed by the sector decomposition algorithm is in fact linearly reducible as given. 
Furthermore, due to the subdivision of the integration domain each individual $I_i$ typically evaluates to more complicated functions and numbers than the actual Feynman integral; only in the sum of all $I_i$ these spurious structures cancel each other (see \cite{ManteuffelPanzerSchabinger:QuasiFinite} for an illustration).

Therefore, at least for the purpose of exact integration (with hyperlogarithms but also for other approaches), we consider our representation highly preferable over a desingularization via sector decomposition.

\section{Renormalization}
\label{sec:renormalization}%

It is the aim of perturbative quantum field theory to provide results on measurable quantities (like cross sections) that can be compared with the observations in an experiment. Therefore it is crucial to deal with the divergences occurring in Feynman diagrams and to find a way of absorbing these infinities in order to arrive at finite predictions.

This problem of \emph{renormalization} has been discussed and developed in the literature for more than sixty years. A rather recent addition to its underpinnings is the concept of Hopf algebra, introduced by Dirk Kreimer first in \cite{Kreimer:NonlinearDSE}. It stimulated a plethora of fruitful developments (in physics as well as pure mathematics) which we have no chance to recall here. Introductory texts into this subject are available by now, the reviews \cite{Manchon,Panzer:PM2012} are particularly suitable for our needs here.

We merely want to summarize very briefly the renormalization by kinematic subtraction in the case of logarithmic ultraviolet divergences. Our focus lies on its formulation in the Schwinger parametric representation, which has been studied in great detail long ago \cite{BergereZuber:RenormalizationFeynmanParametric,BergereLam:BPAlpha} and recently from a modern viewpoint of algebraic geometry \cite{BlochKreimer:MixedHodge,BrownKreimer:AnglesScales}.

In particular we recall the convergent integral representation for renormalized Feynman integrals, which is based on the \emph{forest formula} from the earliest days of renormalization theory. The parametric representation was used widely during those times, but the actual evaluation of the integrals in this form was too complicated. After the invention of dimensional regularization, huge progress in the evaluation of Feynman integrals was possible in momentum space. As of today, the standard machinery in perturbative quantum field theory is almost exclusively centered on dimensional regularization.

Our goal is to advertise the idea to directly compute renormalized integrals using the forest formula in the parametric representation, without ever introducing a regulator in the first place. In section~\ref{sec:ex-renormalized-parametric} we carry out this program in a few examples.

\subsection{Hopf algebra of ultraviolet divergences}
We consider the Hopf algebra $\FeynHopf$ of scalar, logarithmically divergent Feynman diagrams. As an algebra, $\FeynHopf = \Q[\FeynGraphs]$ is free, commutative and generated by connected, scalar, logarithmically divergent Feynman graphs
\begin{equation}
	\FeynGraphs
	\defas
		\setexp{G}{
				\comps(G) = \set{G},\ 
				\sdd(G)=0\ \text{and}\ 
				\sdd(\gamma) \leq 0\ \text{for all subgraphs}\ \gamma \subset G
		}
	\label{eq:def-FeynHopf}%
\end{equation}
that have at worst logarithmically divergent subgraphs.\footnote{Note that this implies that $G \in \FeynGraphs$ is \emph{one-particle irreducible} (1PI), that is, it can not be disconnected by deletion of a single edge.} We denote the empty graph by $\1$. The coproduct $\cop$ and the reduced coproduct $\copred$ are linear maps defined on every graph $G$ by
\begin{equation}
	\cop, \copred\colon \FeynHopf \longrightarrow \FeynHopf \tp \FeynHopf,\quad
	\cop(G)
	\defas
	\hspace{-2ex}\sum_{\gamma \subseteq G\colon \sdd(\gamma) = 0} \hspace{-2ex}
		\gamma \tp G/\gamma
	= \1 \tp G + G \tp \1 + \copred(G)
	\label{eq:def-FeynCop}%
\end{equation}
to extract all subdivergences $\gamma$ and the remaining quotients $G/\gamma$ (where each connected component of $\gamma$ has been shrunken to a single vertex). Since $\FeynHopf$ is graded by the number of loops, we can compute the \emph{antipode} $\antipode$ recursively by
\begin{equation}
	\antipode\colon \FeynHopf \longrightarrow \FeynHopf,\quad
	S(\1) = \1
	\quad\text{and}\quad
	\antipode(G)
	= - \hspace{-2ex}\sum_{\gamma \subsetneq G\colon \sdd(\gamma) = 0}\hspace{-2ex} \antipode(\gamma) G/\gamma
	\quad\text{for $G \neq \1$.}
	\label{eq:FeynAntipode}%
\end{equation}
An explicit solution to this relation is given by the forest formula. To state it we let $\forests(G)$ denote the \emph{forests} of $G$, which are those subsets $F\subset \setexp{\gamma}{\gamma \subsetneq G} \cap \FeynGraphs$ of proper subgraphs of $G$ such that any pair of subgraphs is either (edge-) disjoint or nested:
\begin{equation}
	F \in \forests(G)
	\gdw
\text{For any $\gamma_1,\gamma_2 \in F$, either $\gamma_1 \cap \gamma_2 = \emptyset$, $\gamma_1 \subseteq \gamma_2$ or $\gamma_2 \subseteq \gamma_1$.}
	\label{eq:def-forests}%
\end{equation}
Mind that the empty forest $\emptyset \in \forests(G)$ is always included.
If we set $\gamma/F \defas \gamma/\bigcup_{\delta \in F, \delta \subsetneq \gamma} \delta$ to the contraction of all proper subgraphs $\delta$ of $\gamma$ that are contained in the forest $F$, we can state the forest formula as
\begin{equation}
	\antipode(G)
	= - \sum_{F \in \forests(G)} (-1)^{\abs{F}} G/F \prod_{\gamma \in F} \gamma/F.
	\label{eq:forest-formula}%
\end{equation}
The Feynman rules are a character on $\FeynHopf$, that means $\FR(G_1 G_2) = \FR(G_1) \FR(G_2)$, but in general ill-defined. They depend on the kinematic invariants $\Kinematics=\set{m_i^2} \cup \set{p_i \cdot p_j}$ including the masses of particles in the theory and products of external momenta. We choose a renormalization point $\widetilde{\Kinematics}$ and write $\restrict{\FR}{\widetilde{\Kinematics}}$ for the Feynman rules with these reference kinematics. The associated counterterms $\FR_-$ are given by
\begin{equation}
	\FR_- (G)
		= \restrict{\FR}{\widetilde{\Kinematics}}^{\convolution - 1} (G)
		= \restrict{\FR}{\widetilde{\Kinematics}} \Big(\antipode (G) \Big)
		\urel{\eqref{eq:FeynAntipode}}
		- \sum_{\gamma \subsetneq G\colon \sdd(\gamma)=0}
			\FR_-(\gamma) \restrict{\FR}{\widetilde{\Kinematics}} (G/\gamma)
	\label{eq:counterterm}%
\end{equation}%
\nomenclature[Phi-,Phi+]{$\FR_-, \FR_+$}{counterterms \eqref{eq:counterterm} and renormalized Feynman rules \eqref{eq:birkhoff}\nomrefpage}%
and the renormalized Feynman rules $\FRren$ are determined via the Birkhoff decomposition%
\begin{equation}
	\FRren = \FR_- \convolution \FR,\quad\text{meaning} \quad
	\FRren(G)
		= \sum_{\mathclap{\gamma \subsetneq G\colon \sdd(\gamma) = 0}}
		\FR_-(\gamma) \left[ \FR(G/\gamma) - \restrict{\FR}{\widetilde{\Kinematics}}(G/\gamma) \right].
	\label{eq:birkhoff}%
\end{equation}%
\begin{example}
	If $\copred(G) = 0$ (so $G$ has no subdivergence), we call $G$ \emph{primitive} and find $\antipode(G) = -G$, $\FR_-(G) = - \restrict{\FR}{\widetilde{\Kinematics}}(G)$ and $\FRren(G) = \FR(G) - \restrict{\FR}{\widetilde{\Kinematics}}(G)$ is a simple subtraction.

	When $G$ has a single subdivergence $\copred(G) = \gamma \tp G/\gamma$, we find
	$\antipode(G) = -G + \gamma \cdot G/\gamma$, the counterterm 
	$\FR_-(G)
	= \restrict{\FR}{\widetilde{\Kinematics}}(G)
	+ \restrict{\FR}{\widetilde{\Kinematics}}(\gamma) \restrict{\FR}{\widetilde{\Kinematics}}(G/\gamma)$ and the renormalized
	\begin{equation*}
		\FRren(G)
	= \FR(G) - \restrict{\FR}{\widetilde{\Kinematics}}(G)
	- \restrict{\FR}{\widetilde{\Kinematics}}(\gamma) \Big[ 
			\FR(G/\gamma)
			- \restrict{\FR}{\widetilde{\Kinematics}}(G/\gamma)
		\Big].
	\end{equation*}
\end{example}
In particular, evaluation at the renormalization point always gives $\restrict{\FRren}{\widetilde{\Kinematics}}(G) = 0$, unless $G = \1$.

\subsubsection{Renormalization group}
\label{sec:renormalization-group}%
Suppose we choose another renormalization point $\widetilde{\Kinematics}'$, then we get different renormalized Feynman rules $\FRren'$. They are related to $\FRren$ through the renormalization group equation
\begin{equation}
	\FRren'
	= \restrict{\FR}{\widetilde{\Kinematics}'}^{\convolution - 1} \convolution \FR
	= \restrict{\FR}{\widetilde{\Kinematics}'}^{\convolution - 1} \convolution \restrict{\FR}{\widetilde{\Kinematics}}^{} \convolution \restrict{\FR}{\widetilde{\Kinematics}}^{\convolution - 1} \convolution \FR
	= \restrict{\FRren'}{\widetilde{\Kinematics}} \convolution \FRren
	= \restrict{\FRren}{\widetilde{\Kinematics}'}^{\convolution - 1} \convolution \FRren.
	\label{eq:rge-finite}%
\end{equation}
Equivalently, we can think of this as keeping the scheme (renormalization point) fixed, but varying the actual kinematics instead. The $\beta$-function of a theory is determined by a very special such variation: We rescale all kinematic invariants by a common factor.
\begin{definition}
	\label{def:period}%
	Suppose all kinematic invariants $\Kinematics_{\scalelog} \defas \big\{m^2_i\, e^{\scalelog}\big\} \cup \big\{(p_i \cdot p_j)\,e^{\scalelog}\big\}$ are simultaneously scaled by a factor $e^{\scalelog}$. Then the \emph{period map} 
	$	\period\colon \FeynHopf \longrightarrow \R$,
	\begin{equation}
		\period
		\defas -\Big[ 
			\partial_{\scalelog} \restrict{\FRren}{\widetilde{\Kinematics}_{\scalelog}}
		\Big]_{\scalelog=0}
		= - \left( \sum_{\theta \in \Kinematics} (\theta \partial_{\theta}) \FRren \right)_{\Kinematics = \widetilde{\Kinematics}}
		\label{eq:def-period}%
	\end{equation}
	measures the scaling dependence of $\FRren$ at the renormalization point.
\end{definition}
These numbers govern the full scaling dependence, because one can prove \cite{Panzer:Mellin}
\begin{equation}
	- \partial_{\scalelog} \restrict{\FRren}{\Kinematics_{\scalelog}}
	= \period \convolution \restrict{\FRren}{\Kinematics_{\scalelog}},
	\quad\text{such that}\quad
	\restrict{\FRren}{\Kinematics_{\scalelog}}
	= \exp_{\convolution}\left( -\period \scalelog \right) \convolution \FRren
	\label{eq:rge-infinitesimal}%
\end{equation}
reveals $\FRren(G)$ as a polynomial in $\scalelog$. If $\FRren(G)$ depends only on a single kinematic invariant $\theta$, we call $G$ to be \emph{one-scale} and conclude that it is a polynomial in $\log(\theta/\widetilde{\theta})$ and completely determined by the period map alone.

In general, periods depend on the chosen renormalization point $\widetilde{\Kinematics}$. From \eqref{eq:rge-finite} one infers that the periods $\period'$ for the point $\widetilde{\Kinematics}'$ are related by the conjugation
\begin{equation}
	\period'
	= \restrict{\FRren}{\widetilde{\Kinematics}'}^{\convolution - 1}
		\convolution
		\period
		\convolution
		\restrict{\FRren}{\widetilde{\Kinematics}'}^{}.
	\label{eq:period-scheme-dependence}%
\end{equation}
This implies that $\period(G) = \period'(G)$ is independent of the renormalization point when $G$ is primitive. In section~\ref{sec:ex-periods} we return to the computation of these interesting numbers.

We give a detailed account of the algebraic structures and proofs of the results presented above in \cite{Panzer:Mellin,Panzer:PM2012}.

\subsection{Parametric representation}
\label{sec:renormalization-parametric}%
This general formulation of renormalization is now applied to Feynman integrals in the representation \eqref{eq:feynman-integral-parametric}. Our subtractions for the renormalization are determined by a choice $\widetilde{\Kinematics}$ of reference values for the kinematic invariants, so we let
$ \widetilde{\phipol}_G \defas \restrict{\phipol_G}{\widetilde{\Kinematics}}$ denote the second Symanzik polynomial \eqref{eq:graph-polynomials-combinatorial} evaluated at these values of masses and momenta.
The following formula for the renormalized Feynman rules $\FRren$ follows from \eqref{eq:forest-formula}, \eqref{eq:birkhoff} and \eqref{eq:feynman-integral-parametric} and was discussed in \cite{BlochKreimer:MixedHodge}:
\begin{equation}
	\FRren (G)
	= \left[
			\prod_{e \in \edges}
			\int_{0}^{\infty} \frac{\SP_e^{\EP_e - 1}\ \dd\SP_e}{\Gamma(\EP_e)}
		\right]
		\sum_{F \in \forests(G)}
		\!\!\!(-1)^{\abs{F}}\,
		\frac{
			e^{ -\frac{\phipol_{G\!/\!F}}{\psipol_{G\!/\!F}} }
			-
			e^{ -\frac{\widetilde{\phipol}_{G\!/\!F}}{\psipol_{G\!/\!F}}}
		}{
			\psipol_F^{\dimension/2}
		}
		\prod_{\gamma \in F}
		e^{ -\frac{\widetilde{\phipol}_{\gamma\!/\!F}}{\psipol_{\gamma\!/\!F}} }
	.
	\label{eq:feynman-renormalized-parametric} %
\end{equation}
Here we abbreviate $\psipol_F \defas \psipol_{G\!/\!F} \prod_{\gamma \in F} \psipol_{\gamma/\!F}$. This integral is absolutely convergent. As in section~\ref{sec:projective-integrals} we rescale all Schwinger parameters by $\Pscale$ such that each forest contributes an integral of the form $\int_0^{\infty} \frac{\dd \Pscale}{\Pscale} \big[e^{-\Pscale A}-e^{-\Pscale B} \big] = -\ln \frac{A}{B}$, so
\begin{equation}
	\FRren (G)
	= \frac{1}{\prod_{e \in \edges} \Gamma(\EP_e)}
		\int \Omega
			\prod_{e \in \edges} \SP_e^{\EP_e -1}
		\!\!\!\!\sum_{F \in \forests(G)} \!\!
		\frac{(-1)^{1+\abs{F}}}{\psipol_F^{\dimension/2}}
		\ln \frac{
			\frac{\phipol_{G\!/\!F}}{\psipol_{G\!/\!F}}
			+
			\sum_{\gamma \in F}
				\frac{\widetilde{\phipol}_{\gamma\!/\!F}}{\psipol_{\gamma\!/\!F}}
		}{
			\frac{\widetilde{\phipol}_{G\!/\!F}}{\psipol_{G\!/\!F}}
			+
			\sum_{\gamma \in F}
				\frac{\widetilde{\phipol}_{\gamma\!/\!F}}{\psipol_{\gamma\!/\!F}}
		}
	.
	\label{eq:feynman-renormalized-projective} %
\end{equation}
This representation has been studied in great detail and extensions to incorporate quadratic divergences are available \cite{BrownKreimer:AnglesScales}. By definition~\ref{def:period}, the period becomes
\begin{equation}
	\period(G)
	= \frac{1}{\prod_{e \in \edges} \Gamma(\EP_e)}
		\int \Omega
		\sum_{F \in \forests(G)}
		\frac{(-1)^{\abs{F}}}{\psipol_F^{\dimension/2}}
			\frac{
				\frac{\widetilde{\phipol}_{G\!/\!F}}{ \psipol_{G\!/\!F} }
			}{
				\frac{\widetilde{\phipol}_{G\!/\!F} }{\psipol_{G\!/\!F} }
				+ \sum_{\gamma \in F}
				\frac{\widetilde{\phipol}_{\gamma\!/\!F}}{ \psipol_{\gamma\!/\!F} }
			},
	\label{eq:period-projective}%
\end{equation}
because $\restrict{\phipol}{\Kinematics_{\scalelog}} = \phipol e^{\scalelog}$ in contrast to $\widetilde{\phipol}$ which is independent of $\Kinematics$ and $\scalelog$.
\begin{example}[Primitive divergence]
	Consider a logarithmically divergent graph $G$ without subdivergences and all indices $\EP_e = 1$. The renormalized Feynman rule and the period are
	\begin{equation}
		\FRren(G) 
		= -
			\int \frac{\Omega}{\psipol^{\dimension/2}}
			\ln \frac{\phipol}{\widetilde{\phipol}}
		\quad\text{and}\quad
		\period(G)
		= \int \frac{\Omega}{\psipol^{\dimension/2}}.
		\label{eq:period-primitive-projective}%
	\end{equation}
	So indeed, $\period(G)$ is independent of the renormalization point (the integrand does not contain $\widetilde{\phipol}$) and we see that
	$ \period(G) =- \restrict{\partial_{\scalelog}}{\scalelog=0} \restrict{\FRren(G)}{\widetilde{\Kinematics}_{\scalelog}}$ holds indeed. If $G$ is one-scale, then $ \FRren(G) = - \scalelog \cdot \period(G)$ is just a logarithm $\scalelog = \ln(\phipol/\widetilde{\phipol}) = \ln (\theta/\widetilde{\theta})$ of the ratio of the scale $\Kinematics=\set{\theta}$ and its value at the renormalization point.

	In dimensional regularization, we set $\dimension = \dimension_0 - 2\varepsilon$ and find
	$\sdd 
	= \varepsilon\loops{G}$ if $G$ is logarithmically divergent in $\dimension_0$ dimensions. The unrenormalized Feynman rules converge for $\varepsilon>0$ and give the Laurent series
	\begin{align}
		\FRdim(G)
		&= \Gamma(\sdd) 
		\int\frac{\Omega}{\psipol^{\dimension/2}} \left( \frac{\psipol}{\phipol} \right)^{\sdd}
		\!\!\!= \Gamma(\varepsilon\loops{G}) 
			\sum_{n \geq 0} \frac{(-\varepsilon)^n}{n!}
				\int \frac{\Omega}{\psipol^{\dimension_0/2}} 
				\ln^n \frac{\phipol^{\loops{G}}}{\psipol^{1+\loops{G}}}
		\label{eq:primitive-epsilon-expansion}%
		\\
		&= \frac{\period(G)}{\varepsilon\loops{G}}
			 + \bigo{\varepsilon^0}.
		\label{eq:primitive-residue-period}%
	\end{align}
	So the period appears as the residue of the regularized Feynman rules at $\varepsilon \rightarrow 0$. Epsilon-expansions like \eqref{eq:primitive-epsilon-expansion} can often be computed with hyperlogarithms, see the examples in chapter~\ref{chap:examples}.
\end{example}

\section{Vertex-width three}
\label{sec:vw3} %
In this section we take a close look on a particular class of massless Feynman graphs which is infinite yet so special that they can all be computed explicitly in terms of multiple polylogarithms. This result was obtained originally by Francis Brown for $0$- and $1$-scale graphs. We will give a new proof which extends to the case of three massive external particles.

\begin{definition}
	\label{def:vw}%
	A \emph{construction} $\sigma=(e_1,\ldots,e_{\abs{\edges}})$ of a graph $G$ is a total order on its edges. It defines sequences $G_k$, $G^{k}$ of graphs induced\footnote{So $\edges(G_k) \defas \set{e_1,\ldots,e_k}$ and $\vertices(G_k) \defas e_1 \cup \ldots \cup e_k$ contains all vertices touched by any $e \in \edges(G_k)$.} by the edges $\set{e_1,\ldots,e_k}$ and $\big\{e_{k+1},\ldots,e_{\abs{\edges}}\big\}$, respectively. The \emph{vertex-width} of a construction is
	\begin{equation}
		\vw\left( e_1,\ldots,e_{\abs{\edges}} \right)
		\defas
		\max_{1\leq k < \abs{\edges}} \abs{\actives_k},\quad
		\actives_k
		\defas
		\vertices\left( G_k \right) \cap \vertices\left( G^{k+1} \right)
		\label{eq:def-vw-order}%
	\end{equation}
	and we define the \emph{vertex-width} of the graph $G$ as the minimum over all constructions:
	\begin{equation}
		\vw(G)
		\defas
		\min_{\sigma} \vw \left( e_{\sigma(1)},\ldots,e_{\sigma(\abs{\edges})} \right)
		.
		\label{eq:def-vw}%
	\end{equation}
\end{definition}
The idea is this: Suppose we start with the empty graph 
\mbox{$\emptyset = G_{0} \subsetneq G_1 \subsetneq \cdots \subsetneq G_{\abs{\edges}} = G$}
and construct $G$, adding one edge at a time in the given order $\sigma$. The remaining edges form the graphs 
\mbox{$G=G^0 \supsetneq G^1 \supsetneq \cdots \supsetneq G^{\abs{\edges}} = \emptyset$}
and at each stage $k$ share a set
\mbox{$\actives_k = \vertices ( G_k ) \cap \vertices\big( G^{k+1} \big)$}
of \emph{active} vertices with the so far constructed $G_k$. The vertex-width bounds the size of these $\actives_k$.

Figure~\ref{fig:zz5-vw3} shows a construction $\sigma$ of the zigzag graph $\ZZ{5}$ with
$\vw(\sigma) = 3$. Obviously there are infinitely many connected graphs $G$ with $\vw(G) \leq 3$, including all zigzag graphs $\ZZ{n}$ and the wheels $\WS{n}$ with $n$ spokes. The aforementioned result is

\begin{theorem}[theorem 118 and corollary 122 of \cite{Brown:PeriodsFeynmanIntegrals}]
	If $\vw(G)\leq 3$, then all periods of $G$ are in $\MZV$.
	\label{theorem:vw3-Brown} %
\end{theorem}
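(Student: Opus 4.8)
The plan is to evaluate the period entirely within the parametric representation and to integrate out the Schwinger parameters one edge at a time. By \eqref{eq:period-primitive-projective} the period of a primitive logarithmically divergent graph is the convergent projective integral
\begin{equation*}
	\period(G) = \int \frac{\Omega}{\psipol^{\dimension/2}}
\end{equation*}
over $\RP_+^{\abs{\edges}-1}$. Since there are no kinematic invariants the polynomial $\phipol$ never appears, so the whole integrand is controlled by the single graph polynomial $\psipol$, which by \eqref{eq:psi-contraction-deletion} is linear in each $\SP_e$ separately. It is exactly this multilinearity that I would exploit.

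First I would fix a construction $\sigma = (e_1,\ldots,e_{\abs{\edges}})$ witnessing $\vw(G)\leq 3$ and integrate the variables in this order. The aim at each step is to show that the partially integrated integrand is a $\Q$-linear combination of hyperlogarithms $\Hyper{w}$ in the next variable whose letters are \emph{rational} and \emph{linear} in that variable. This is the property of \emph{linear reducibility}, and once it holds each integration over $\SP_e \in (0,\infty)$ is carried out by the hyperlogarithm machinery of chapter~\ref{chap:hyperlogs}: partial-fraction with respect to $\SP_e$, take a primitive, and evaluate the shuffle-regularised limits at the endpoints $0$ and $\infty$. The result is again a $\Q$-linear combination of hyperlogarithms in the remaining variables, so the procedure iterates.

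The crux --- and the step I expect to be the main obstacle --- is to deduce linear reducibility from the purely combinatorial hypothesis $\vw(G)\leq 3$. I would run the polynomial reduction: starting from $\{\psipol\}$, at each variable one collects the leading and constant coefficients together with the reduced resultants and discriminants of the polynomials accumulated so far, and linear reducibility means that some order keeps every such polynomial at most linear in the next variable. The key structural input is that every polynomial generated by reducing $\psipol$ is a \emph{Dodgson polynomial} $\dodgson^{I,J}$, a signed minor of the graph matrix \eqref{eq:graph-matrix}; by \eqref{eq:dodgson-as-spanning-2-forest} these are differences of spanning-forest polynomials and hence multilinear in the $\SP_e$. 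Brown's Dodgson identities (Pl\"{u}cker-type relations among these minors) show that the resultants arising in the reduction factor into further Dodgson polynomials rather than into genuinely new irreducible factors, and the bound $\vw(G)\leq 3$ guarantees that at most three vertices are active at each stage, so only Dodgson polynomials indexed by these few active vertices appear and the reduction closes inside this finite multilinear family. Proving that no quadratic-or-worse singularity is ever generated is the technical heart of the argument; this is the content of theorem~118 of \cite{Brown:PeriodsFeynmanIntegrals}.

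Finally I would show that the fully integrated number is an MZV. Because $G$ carries no kinematic dependence, every polynomial met in the reduction is defined over $\Q$, and the linearly reducible integrand defines a rational form with logarithmic poles along the boundary divisors of a moduli space $\Moduli{0}{n}$ of marked points on $\Projective^1$; its period is therefore a convergent period of $\Moduli{0}{n}$, which is an element of $\MZV$. The one point demanding care is to check that the endpoint limits only ever reach the singular points $0$ and $1$, so that no alternating or cyclotomic values can intrude --- a subtlety that is exactly what makes the answer land in $\MZV$ rather than in a larger ring of polylogarithmic values.
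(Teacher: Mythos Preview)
Your outline is a fair sketch of Brown's original proof as cited: run the polynomial reduction along a construction of width three, show that every irreducible factor arising is a Dodgson polynomial $\dodgson^{I,J}_K$ (hence multilinear), and conclude that the final letters are in $\set{0,-1}$ so the integral lands in $\MZV$. That is exactly theorems~118 and 122 of \cite{Brown:PeriodsFeynmanIntegrals}, and there is no genuine gap in what you describe, though the claim that the integrand becomes a period of $\Moduli{0}{n}$ is a bit loose --- what one actually checks is that the alphabets in the iterated regularized limits reduce to $\set{0,-1}$, which by \eqref{eq:reginf-gives-mzv} forces MZV.

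The paper, however, gives a different and arguably more transparent proof via the \emph{forest functions} $\GfunForest{G}$ of section~\ref{sec:forest-functions}. Rather than tracking the full polynomial reduction of $\psipol$ in all Schwinger parameters at once, one packages the partially integrated graph into a function of only three variables $z_1,z_2,z_3$ (the ratios $f_i/\psipol$ of the three spanning-forest polynomials attached to the active vertices). The construction of $G$ then becomes a sequence of explicit one-dimensional recursions (lemmata~\ref{lemma:Gfun-Forest-vertex} and \ref{lemma:Gfun-Forest-edge}) on these forest functions, and the point is that the small compatibility graph $(S^{\StarSymbol},C^{\StarSymbol})$ of figure~\ref{fig:cg-reduction-3pt} is \emph{stable} under each recursion step (proposition~\ref{prop:GfunStarTriangle-reduction}). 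This replaces the intricate tracking of Dodgson identities by a single verification that a handful of polynomials in three variables reproduce themselves under the reduction of remark~\ref{remark:cg-reduction-recursion}. The payoff is twofold: the argument is short enough to carry out by hand, and it extends immediately to three-point functions with nontrivial kinematics (theorem~\ref{theorem:vw3-3pt}) and to the ladder-box families of section~\ref{sec:ladderboxes}, neither of which is accessible to the original Dodgson-polynomial analysis.
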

This statement means that all coefficients of the $\varepsilon$-expansion of $\int I_G \ \Omega$ (expanding indices $\EP_e = \EPZ_e + \varepsilon\EPE_e$ near integers $\EPZ_e$ and the dimension $\dimension\in 2\N-2\varepsilon$ near an even integer) are rational linear combinations of multiple zeta values. By \eqref{eq:feynman-integral-projective} this property carries over to the Feynman integral $\FR(G)$, up to the $\Gamma$-prefactors which introduce the Euler-Mascheroni constant $\gamma_E$ into the expansion.

\begin{figure}
	\begin{gather*}
		\Graph[0.5]{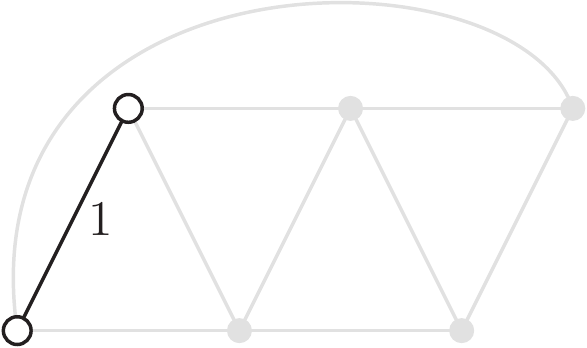}
		\Graph[0.5]{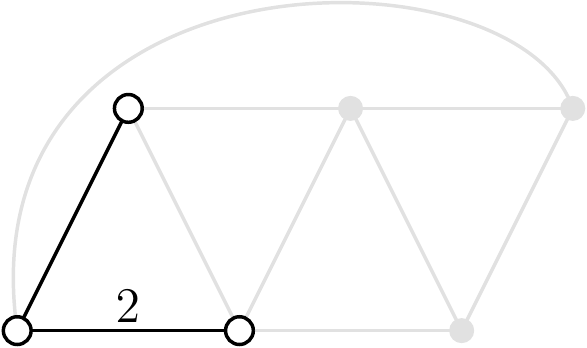}
		\Graph[0.5]{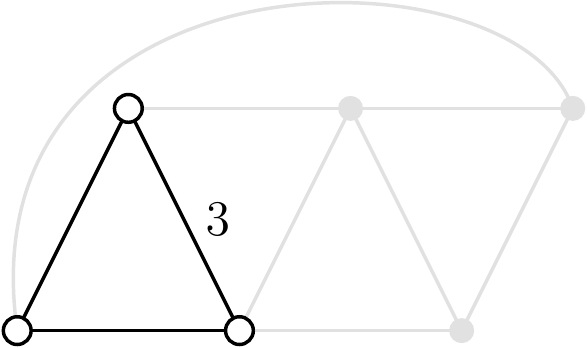}
		\Graph[0.5]{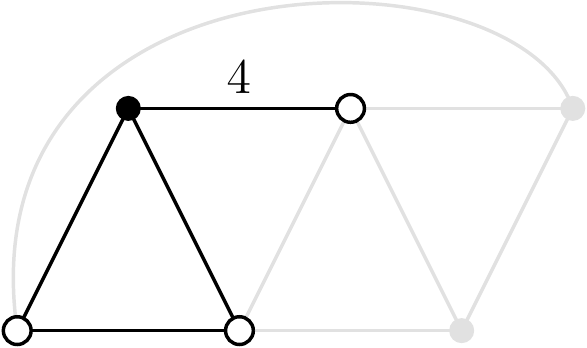}
		\Graph[0.5]{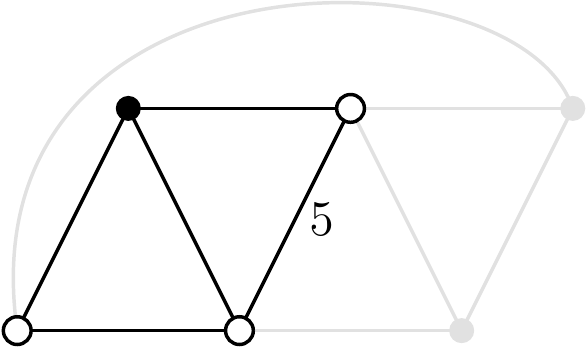}
		\\
		\Graph[0.5]{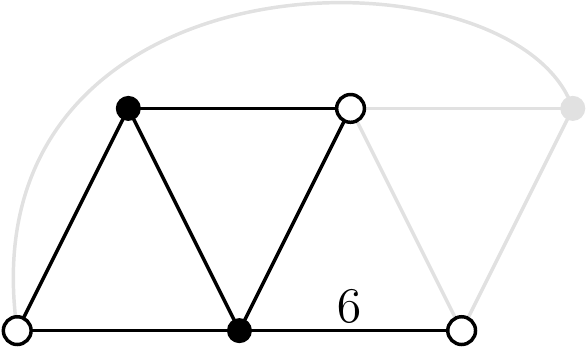}
		\Graph[0.5]{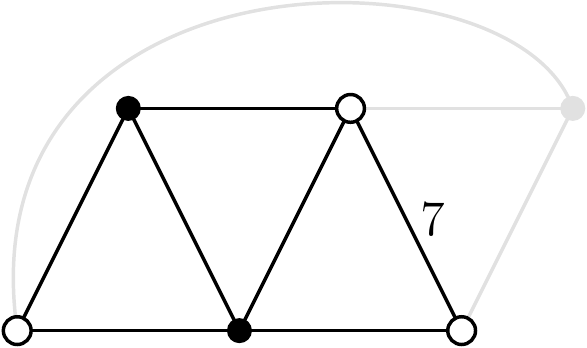}
		\Graph[0.5]{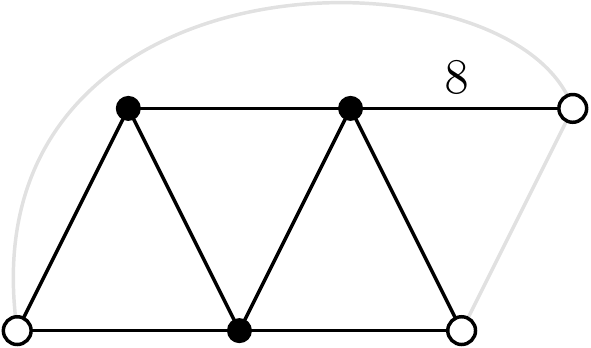}
		\Graph[0.5]{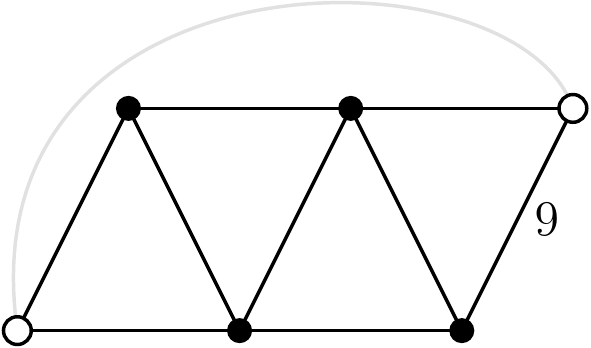}
		\Graph[0.5]{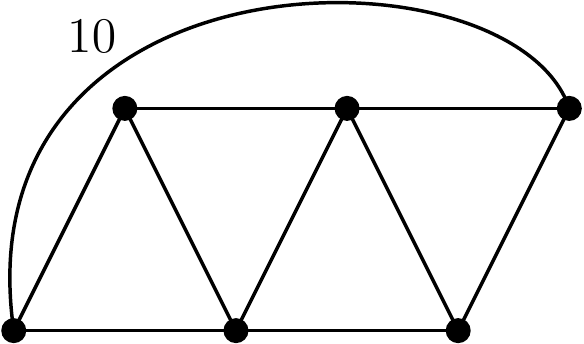}
	\end{gather*}%
	\caption[A construction of the zigzag graph $\ZZ{5}$]{A construction $e_1,\ldots,e_{10}$ of the zigzag graph $G=\ZZ{5}$: $G_k$ is drawn in black, $G^{k}$ in grey and they intersect in the vertices $\actives_k$ (white circles). These are never more than three, so $\vw(\ZZ{5}) \leq 3$ and in fact equality holds.}%
	\label{fig:zz5-vw3}%
\end{figure}

\subsection{Some general properties}
\begin{theorem}
	\label{theorem:vw3-planar}%
	Every graph $G$ with $\vw(G) \leq 3$ is planar.
\end{theorem}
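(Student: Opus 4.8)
The plan is to construct a planar embedding of $G$ incrementally, following a construction $\sigma=(e_1,\ldots,e_{\abs{\edges}})$ that realises $\vw(G)\le 3$, and to maintain a planarity invariant anchored on the at most three active vertices. Since planarity is insensitive to self-loops and parallel edges, I would first reduce to the simple underlying graph. The induction is on $k$: having embedded $G_k$ in the plane, I add the single edge $e_{k+1}$ to obtain $G_{k+1}$, keeping the drawing accessible from the unbounded (outer) face at the vertices that still matter for the future.

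The key invariant is that, after $G_k$ is embedded, all active vertices $\actives_k=\vertices(G_k)\cap\vertices(G^{k+1})$ lie on the boundary of the outer face (components of a disconnected $G_k$ are placed side by side so that their relevant boundary vertices share this outer face). The essential point is the bound $\abs{\actives_k}\le 3$: at most three marked vertices sit on the outer boundary, so a new vertex dropped into the outer region can be joined to any of them---indeed to all three at once---without creating a crossing.

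For the inductive step I would split into cases according to how many endpoints of $e_{k+1}=\{u,w\}$ are new. An endpoint already present lies in $\vertices(G_k)\cap\vertices(G^k)$, because $e_{k+1}\in\edges(G^k)$, hence it is on the current outer face and is reachable; a new endpoint is simply placed in the outer region. After routing $e_{k+1}$ I must re-establish the invariant for $\actives_{k+1}$, and here I would use the inclusion $\actives_{k+1}\subseteq\actives_k\cup\{u,w\}$, which follows since $\vertices(G_{k+1})=\vertices(G_k)\cup\{u,w\}$ and $\vertices(G^{k+2})\subseteq\vertices(G^{k+1})$: every vertex that remains active is among the at most three old active vertices together with the two endpoints of the edge just added.

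The main obstacle is precisely this re-establishment step: when $e_{k+1}$ joins two vertices already on the outer boundary, drawing it in the outer region subdivides that face, and I must route it so that every vertex of $\actives_{k+1}$ ends up on the new outer face while the now ``finished'' endpoints (those in $\vertices(G_k)\cap\vertices(G^k)$ but not in $\actives_{k+1}$) may be enclosed. Because only three vertices need to stay exposed, I would argue that the freedom in the cyclic position of the boundary vertices---carried along as part of the invariant, e.g.\ by keeping the active vertices consecutive on the outer boundary---always permits such a routing; this bookkeeping is exactly what the bound $\abs{\actives_k}\le 3$ is designed to make possible. As an independent cross-check, one could instead verify that $\vw$ is monotone under edge deletion and contraction and that $\vw(K_5)\ge 4$ and $\vw(K_{3,3})\ge 4$, whence planarity would follow from the Kuratowski--Wagner excluded-minor characterisation.
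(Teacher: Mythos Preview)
Your strategy coincides with the paper's: build a planar embedding edge by edge along a construction of vertex-width at most three, keeping the (at most three) currently active vertices accessible from the outer face. The Kuratowski--Wagner cross-check you mention is also noted in the paper as an alternative, non-constructive route.

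Where your argument stops short is exactly the step you yourself flag as the ``main obstacle'': when $e_{k+1}$ joins two already-present active vertices $u,w$ and the outer face is split. You say you ``would argue'' that a suitable routing exists, perhaps by strengthening the invariant to keep the active vertices consecutive along the boundary, but you do not actually carry this out. In fact no strengthening is needed: since $\lvert\actives_k\rvert\le 3$ there is at most one further active vertex $v$; the curve $uw$ drawn through the outer region produces two faces that both carry $u$ and $w$ on their boundary, and $v$ lies on the boundary of at least one of them. Declaring that face to be the new outer face (and sliding any other connected components into it) restores the invariant, since $\actives_{k+1}\subseteq\actives_k=\{u,w,v\}$. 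This single observation is what your sketch is missing.

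The paper avoids this case analysis by making the embedding completely explicit: vertices receive polar coordinates on three rays at angles $0,\tfrac{2\pi}{3},\tfrac{4\pi}{3}$ with strictly increasing radii, so that the active vertices are always the three outermost ``tips'' and every edge is either a side of one of the nested triangles or a radial segment. Planarity is then immediate, and the invariant $\actives_k\subseteq\Delta_k$ is maintained by the placement rules themselves rather than by an a~posteriori choice of outer face.
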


\begin{proof}
	Since $G$ has $\vw(G) \leq r$ if and only if each of its connected components $H\in\comps(G)$ meets $\vw(H) \leq r$ as well, we may restrict to connected $G$. We can also exclude any parallel edges, self-loops or vertices of valency one (any of these can simply be added without destroying the planarity of an embedding).

	We take any construction which achieves $\vw\big(e_1,\ldots,e_{\abs{\edges}}\big) \leq 3$ and inductively assign polar coordinates $r\colon \vertices \rightarrow \N$ and $\phi\colon \vertices \rightarrow \set{0,\frac{2}{3}\pi,\frac{4}{3}\pi}$ such that drawing all edges as straight lines yields a planar embedding of $G$.
	
	Our algorithm iterates over $k$ from $1$ to $\abs{\edges}$. As illustrated in figure~\ref{fig:vw3-planarity-proof}, at each stage $k$ exactly one of the following cases occurs:
	\begin{enumerate}
		\item[(1)] $e_k$ connects $v,w\in\actives_{k-1}$

		\item[(2)] $e_k$ connects $v,w \notin \actives_{k-1}$: Since $v,w$ are incident to at least one further vertex each, we will have $\actives_{k-1} \cupdot \set{v,w} \subseteq \actives_k$ and therefore $\abs{\actives_{k-1}} \leq \abs{\actives_k} - 2 \leq 1$. Hence we can choose $\phi(v) \neq \phi(w)$ both distinct from $\phi(\actives_{k-1})$ and further set $r(v) = r(w) \defas k$.

		\item[(3)] $e_k$ connects one vertex $v\in \actives_{k-1}$ with one vertex $w\in \vertices(G^k)\setminus \actives_{k-1}$: If $v \notin \actives_k$, set $\phi(w) \defas \phi(v)$ and $r(w) \defas k$. Otherwise we must have $\actives_k = \actives_{k-1} \cupdot \set{w}$ ($w$ is incident to at least one further edge, so $w \in \actives_{k}$) and from $\abs{\actives_k} \leq 3$ we know $\abs{\actives_{k-1}} \leq 2$, so we can choose some $\phi(w) \notin \phi(\actives_{k-1})$ and set $r(w) \defas k$.
	\end{enumerate}
	This construction ensures that for any $k$, the embedding of $G_k$ with straight lines is contained in the triangle with corners $\Delta_k = \set{v_{\theta}}$, where $v_{\theta} \in \vertices_{k,\theta} \defas \vertices(G_k) \cap \phi^{-1}(\theta)$ denotes the farthest vertex $r(v_{\theta}) = \max r(\vertices_{k,\theta})$ of $G_k$ on the ray $\phi=\theta$. In particular $\actives_k \subseteq \Delta_k$ is a subset of these corners.
	
	By construction all edges lie on the sides of such triangles $\Delta_{k}$, except for the radial edges in case (3) when $v \notin \actives_{k}$. None of these can cross and planarity is obvious.
\end{proof}
\begin{figure}\centering
	\begin{tabular}{cccc}
		$\Graph[0.75]{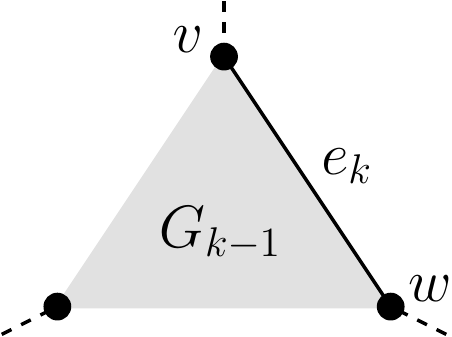}$ &
		$\Graph[0.75]{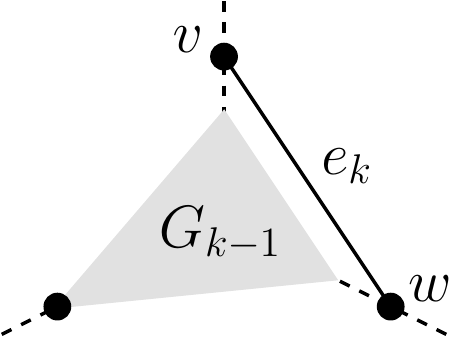}$ &
		$\Graph[0.6]{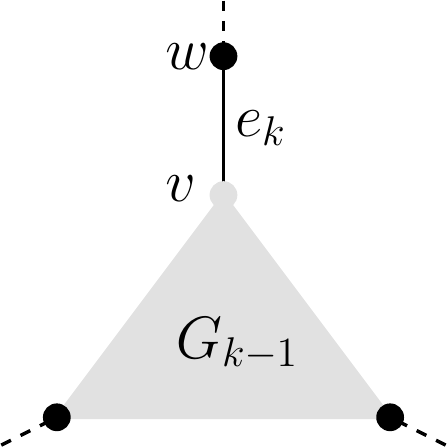}$ &
		$\Graph[0.75]{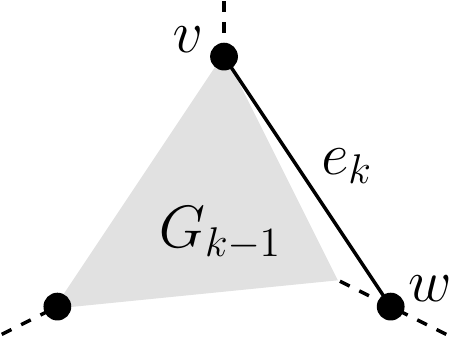}$ \\
		(1) & (2) & (3): $v \notin \actives_k$ & (3): $v \in \actives_k$ \\
	\end{tabular}%
	\caption[Different cases in the construction of a planar embedding of a graph with vertex-width $3$]{The proof of theorem~\ref{theorem:vw3-planar} distinguishes the displayed cases to extend the planar embedding of $G_{k-1}$ (grey) by the edge $e_k=\set{v,w}$. Any forthcoming edges can connect only at the extremal vertices (black dots) on each ray of constant $\phi$ (dashed).}%
	\label{fig:vw3-planarity-proof}%
\end{figure}
\begin{remark}\label{rem:planar-dual-construction}%
	From this construction it follows that the same sequence of edges gives rise to a construction of the planar dual $\widehat{G}$ of $G$ (relative to this planar embedding) with $\vw \leq 3$ as well. Note that for $3$-connected $G$, the planar embedding and $\widehat{G}$ are unique \cite{Whitney:UniqueDual}.
\end{remark}
The sets $\actives_k$ are cuts of $G$, so the vertex-width $\vw(G) \geq \vconn(G)$ bounds the \emph{connectivity}
\begin{equation}
	\vconn(G)
	\defas
	\max \setexp{n \in \N_0}{G \setminus C\ \text{is connected for all}\ C \subset \vertices(G)\ \text{with}\ \abs{C} = n}
	.
	\label{eq:def-connectivity}%
\end{equation}
As mentioned in section~\ref{sec:one-scale-insertions}, for the computation of Feynman integrals we only need to consider $3$-connected simple graphs $G$, $\vconn(G) \geq 3$. In this case each vertex is at least $3$-valent and $\abs{\actives_k} = 3$ for all $2 \leq \abs{\edges}-2$. Furthermore the first three edges $e_{\sigma(1)}$, $e_{\sigma(2)}$ and $e_{\sigma(3)}$ of any construction $\sigma$ of $G$ with $\vw(G) =3$ must either form a triangle or a star:
$e_{\sigma(1)} = \set{v_1,w}$ and $e_{\sigma(2)} = \set{v_2,w}$ share one vertex $w$ (otherwise $\actives_2 = e_{\sigma(1)} \cupdot e_{\sigma(2)}$ has four elements) and if $w \notin e_{\sigma(3)}$, the third edge can only connect $v_1$ with $v_2$.

One can therefore test for $\vw(G)\leq 3$ very efficiently with
\begin{lemma}
	\label{lemma:vw3-test}%
	Given any simple and $3$-connected graph $G$, an algorithm can decide $\vw(G) = 3$ (and if positive provide a construction $\sigma$ with $\vw(\sigma) = 3$) in time $\bigo{\abs{\vertices} \cdot \abs{\edges}}$.
\end{lemma}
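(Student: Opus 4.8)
The plan is to reduce the decision to a single greedy edge-ordering procedure that is run from $\bigo{\abs{\vertices}}$ carefully chosen starting configurations, each run costing $\bigo{\abs{\edges}}$. Since $G$ is simple and $3$-connected we have $\vw(G) \geq \vconn(G) \geq 3$, so the task is to decide whether $\vw(G) \leq 3$ and, when it holds, to exhibit a construction attaining it. I would lean on the structure already recorded above: in any construction with $\vw=3$ the active set $\actives_k$ has size exactly $3$ for $2 \leq k \leq \abs{\edges}-2$, and the admissible moves split into two kinds. Adding an edge whose endpoints both lie in the current frontier $\actives_{k-1}$ (an \emph{internal} move) never enlarges the active set, whereas connecting a frontier vertex $v$ to a fresh vertex $w$ (a \emph{growth} move) keeps $\abs{\actives_k} \leq 3$ only when $\set{v,w}$ is the last unplaced edge at $v$, so that $v$ leaves the frontier precisely as $w$ enters it.

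First I would fix the mechanics of a single greedy run from a given size-$3$ frontier. Maintaining the three active vertices together with a counter of unplaced edges at each of them, the run repeatedly (a) places any still-unplaced edge internal to the frontier, and, once none remains, (b) performs a growth move at an active vertex that has exactly one remaining unplaced edge. When a new vertex $w$ enters the frontier one scans its incidences to detect the at most two new internal edges; amortised over the run this scanning costs $\sum_w \deg(w) = \bigo{\abs{\edges}}$, while every edge is placed once and every vertex enters and leaves the frontier once, so each run is $\bigo{\abs{\edges}}$. The run \emph{fails} exactly when edges remain but neither an internal edge nor a growth vertex with a single remaining edge is available.

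For the starting frontier I would exploit a short degree count: in any construction with $\vw(G)=3$ the first vertex $u$ to become inactive is necessarily trivalent, since before it leaves no vertex has yet completed, so all of its neighbours are active and the frontier contains $\set{u}\cup (N(u)\setminus\set{x})$, forcing $\deg(u)=3$ and making its final edge a growth move into a fresh $x$; at that step the frontier becomes exactly $N(u)$. Hence it suffices to seed the procedure once for each of the $\bigo{\abs{\vertices}}$ trivalent vertices $u$: place the three edges at $u$ (a buildup of frontier sizes $2,3,3$) so that the active set becomes $N(u)$, and then run forward. If $G$ has minimum degree at least four there is no such seed, and the same count shows $\vw(G)>3$. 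Reporting success as soon as one seed places all $\abs{\edges}$ edges gives the total bound $\bigo{\abs{\vertices}\cdot\abs{\edges}}$, and the easy direction of correctness is immediate, since a successful run is a construction witnessing $\vw(G)\leq 3$, hence $\vw(G)=3$.

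The hard part will be the converse, and in particular the justification that each seed needs only a single, backtrack-free greedy run. This rests on an exchange (confluence) lemma: from any partial construction that admits a completion with $\vw=3$, every available internal move again admits such a completion, and when no internal move is available, performing a growth move at \emph{any} frontier vertex with exactly one remaining edge again preserves the existence of a completion, so that the several such moves commute and may be carried out in arbitrary order. Granting this, the greedy run from the correct seed never fails and terminates in a full construction whenever $\vw(G)=3$, while the degree count guarantees that the correct seed lies among the $\bigo{\abs{\vertices}}$ trivalent vertices we try. Proving the exchange lemma by an explicit edge-swap argument, and verifying that the degenerate initial and final phases with $\abs{\actives_k}<3$ introduce no exceptions, is where the real work lies; here the $3$-connectivity of $G$ -- forcing every vertex to be at least trivalent and every intermediate frontier to be a genuine $3$-cut -- is exactly what makes the forced growth moves well-defined and the confluence argument go through.
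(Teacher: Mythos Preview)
Your proposal is correct and follows essentially the same approach as the paper: seed a greedy frontier-expansion from the star at each trivalent vertex and run it in $\bigo{\abs{\edges}}$ time. Your justification for the seed via the first-inactive vertex is a slight variant of the paper's triangle-to-star swap, and you are actually more explicit than the paper about the confluence step---the paper simply asserts that greedy failure from a given star-seed rules out all constructions starting there, without spelling out the exchange argument you correctly identify as the crux.
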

\begin{proof}
	Suppose $G$ has a construction $\sigma$ with $\vw(\sigma) = 3$ that starts out with a triangle $\Delta$. Then $e_{\sigma(4)} = \set{v,w}$ must connect one of $v \in \Delta$ to a new vertex $w \notin \Delta$, so necessarily $v \notin \actives_4 = \set{w} \cupdot \Delta\setminus \set{v}$ and $v$ is $3$-valent. Swapping $\sigma(4)$ with the edge of $\Delta$ that does not touch $v$ yields a construction $\sigma'$ that also achieves $\vw(\sigma') = 3$.
	
	Thus we only need to look for constructions that begin with a star $e_{\sigma(i)} = \set{v_i, w}$ ($1 \leq i \leq 3$) defined by some three-valent vertex $w$. Starting from $\actives \defas \set{v_1,v_2,v_3}$ and $I \defas \edges \setminus \set{e_1,e_2,e_3}$, repeat the following steps as often as possible:
	\begin{itemize}
		\item Remove any edges from $I$ that connect active vertices: $I \defas I \setminus \setexp{e \in I}{e \subseteq \actives}$.
		\item If some $v \in \actives$ is incident to only one edge $e=\set{v,w} \in I$, remove $e$ from $I$ and replace $v$ by its neighbour $w$: $\actives \defas \actives \setminus \set{v} \cupdot \set{w}$.
	\end{itemize}
	If this process ends in $I = \emptyset$, the order $\sigma$ in which edges were removed from $I$ is a construction with $\vw(\sigma) = 3$. Otherwise, $I \neq \emptyset$ proves that any construction of $G$ starting with the star around $w$ must have $\vw(\sigma) \geq 4$.

	To implement this test it suffices to scan through the edges $e = \set{v,w}$ incident to $v$ every time a vertex $v$ is added to $A$: After deletion of those with $w \in A$, the algorithm is iterated as some one-valent vertex in $A$ is replaced by its neighbour. This procedure requires a time linear in $\abs{\edges}$ and there are at most $\abs{\vertices}$ initial stars ($3$-valent vertices $w$) to check, so the total runtime is in $\bigo{\abs{\vertices} \cdot \abs{\edges}}$.
\end{proof}
Apparently the vertex-width can not decrease when an edge is removed or contracted, so $\vw(H) \leq \vw(G)$ for all minors $H \minoreq G$.\footnote{%
	A \emph{minor} of a graph $G$ is any graph $H = G\setminus I / K$ obtained from deletion and contraction of disjoint sets $I \cupdot K \subseteq \edges(G)$ of edges.%
}
Hence the theorem of Robertson and Seymour \cite{RobertsonSeymour:XX} applies: The set of graphs $G$ with $\vw(G) \leq 3$ can be characterized by a finite set of forbidden minors. For example, the five graphs shown in figure~\ref{fig:vw3-forbidden-minors} each have a vertex-width of four and can thus not appear as a minor of $G$ when $\vw(G) \leq 3$. Recently the sufficiency of this condition was proven \cite{BlackCrumDeVosYeats:ForbiddenMinorsSplit5} and we quote
\begin{theorem}
	A simple, $3$-connected graph $G$ has vertex-width $\vw(G)=3$ if and only if it contains none of $\set{K_{3,3}, K_5, C, O, H}$ as a minor (see figure~\ref{fig:vw3-forbidden-minors}).
	\label{theorem:vw3-minors} %
\end{theorem}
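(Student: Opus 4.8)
I prove the two implications separately. The direction ``$\vw(G)=3 \Rightarrow G$ has none of $\set{K_{3,3},K_5,C,O,H}$ as a minor'' is the easy one and is essentially already in place: vertex-width is minor-monotone, $\vw(H)\leq\vw(G)$ for every $H\minoreq G$, so it suffices to check that each of the five graphs has vertex-width exactly $4$. For each one exhibits a construction attaining four active vertices and rules out any better one using the observation above that a width-$3$ construction of a $3$-connected graph must begin with a triangle or a star; since each of the five has vertex-width $4$, containing any of them as a minor forces $\vw(G)\geq 4$. Together with the lower bound $\vw(G)\geq\vconn(G)\geq 3$ valid for every $3$-connected $G$, this shows that for such $G$ the condition $\vw(G)=3$ is the same as $\vw(G)\leq 3$, i.e.\ the absence of any minor of vertex-width $\geq 4$.

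The real content is the converse: every simple $3$-connected $G$ that avoids all five minors satisfies $\vw(G)=3$. Since $\set{G:\vw(G)\leq 3}$ is minor-closed, Robertson--Seymour already guarantees a finite obstruction set; the theorem pins it down, among $3$-connected simple graphs, to exactly these five. Rather than bound the obstruction set abstractly, I would argue constructively by induction on $\abs{\edges(G)}$, producing an explicit edge-construction of vertex-width $3$.

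The induction runs along Tutte's wheel theorem: every $3$-connected graph is obtained from a wheel by a sequence of edge additions and vertex splits through $3$-connected intermediate graphs. The wheels $\WS{n}$ have $\vw(\WS{n})=3$ (place the hub as a star, then sweep along the rim), which gives the base case. For the step, write $G$ as obtained from a smaller $3$-connected $G'$ by one such operation; as $G'\minoreq G$, $G'$ also avoids the five, so $\vw(G')=3$ by the inductive hypothesis. I would then lift a width-$3$ construction of $G'$ to $G$ by inserting the new or split edges into the edge order so that the frontier $\actives_k$ never exceeds three active vertices; whenever no such insertion is possible, I must show the local obstruction exhibits one of $\set{K_{3,3},K_5,C,O,H}$ as a minor of $G$, contradicting the hypothesis.

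The hard part will be exactly this last dichotomy: understanding how a vertex split interacts with a three-vertex frontier and proving that an \emph{unavoidable} jump to four active vertices always forces one of the five graphs as a minor, with no sixth obstruction ever required. Here I expect to lean on the planarity of Theorem~\ref{theorem:vw3-planar}, which disposes of the nonplanar $K_{3,3}$ and $K_5$ at once and confines the remaining analysis to the triangulated ``stacked-triangle'' embeddings constructed in its proof, combined with a splitter-type refinement of the wheel-theorem induction to guarantee that the list of planar obstructions is precisely $\set{C,O,H}$ and closes at five.
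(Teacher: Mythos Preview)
The paper does not prove this theorem at all; it is quoted from the external references \cite{BlackCrumDeVosYeats:ForbiddenMinorsSplit5} and \cite{Crump:ForbiddenMinors3Connected}. So there is no ``paper's own proof'' to compare against here.

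Your easy direction is fine and matches what the paper sketches: minor-monotonicity of $\vw$ together with $\vw=4$ for each of the five listed graphs.

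For the converse, what you have written is a plan, not a proof. The entire difficulty is concentrated in the sentence ``whenever no such insertion is possible, I must show the local obstruction exhibits one of $\set{K_{3,3},K_5,C,O,H}$ as a minor of $G$,'' and you have not supplied any mechanism for this. A Tutte wheel-theorem induction gives you a $3$-connected $G'$ with $\vw(G')=3$ and one extra operation, but there is no reason a \emph{particular} width-$3$ construction of $G'$ should extend; you would need to argue over all such constructions, or identify structural invariants that survive the operation. The ``splitter-type refinement'' you invoke at the end is doing all the work and is not spelled out. As written, nothing prevents a sixth obstruction from appearing in your case analysis.

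For context, the route taken in the cited proofs is different: Crump's original argument \cite{Crump:ForbiddenMinors3Connected} proceeds via the equivalence with the \emph{splitting} property of five-invariants (Theorem~\ref{theorem:split-minors}), so the forbidden minors were discovered by analysing when $\fiveinv{i,j,k,l,m}$ degenerates rather than by a direct Tutte-style induction on the graph. The full characterisation (dropping $3$-connectivity) in \cite{BlackCrumDeVosYeats:ForbiddenMinorsSplit5} is considerably more intricate.
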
%
\begin{figure}\centering
	\begin{tabular}{ccccc}
		$\Graph[0.45]{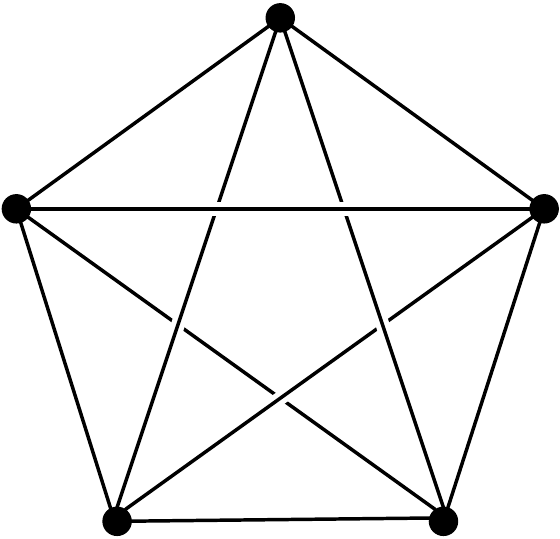}$ & 
		$\Graph[0.4]{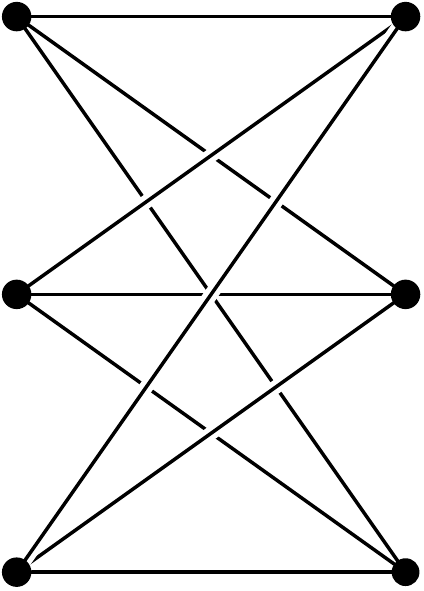}$ &  $\Graph[0.55]{cube}$ & $\Graph[0.5]{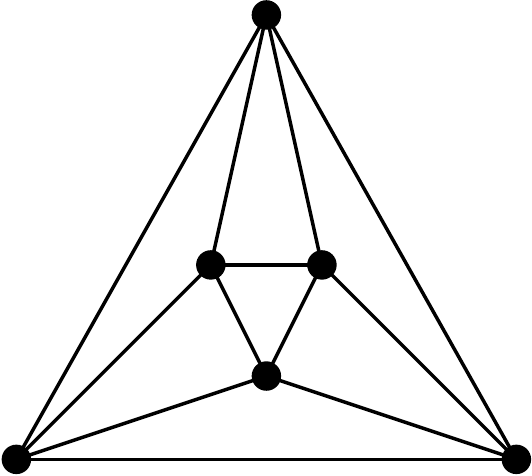}$ & $\Graph[0.5]{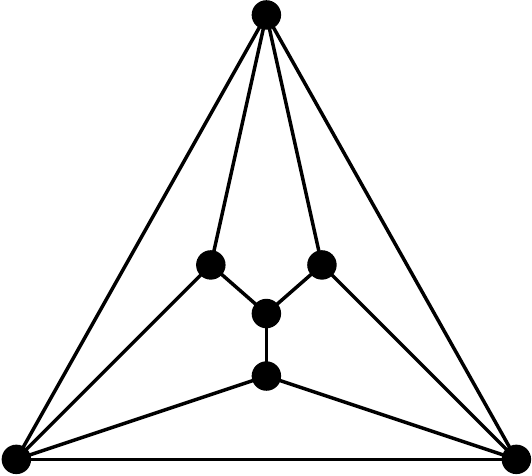}$ \\
		$K_5$ & $K_{3,3}$ & $C$ & $O$ & $H$ \\
	\end{tabular}%
	\caption[Forbidden minors for simple $3$-connected graphs with $\vw(G) =3$]{The forbidden minors for simple $3$-connected graphs $G$ with $\vw(G) = 3$ from theorem~\ref{theorem:vw3-minors} contain the non-planar complete graphs $K_5$ and $K_{3,3}$, as well as three polyhedra: The cube $C$ together with its dual (the octahedron $O$) and the self-dual heptahedron $H$.}%
	\label{fig:vw3-forbidden-minors}%
\end{figure}%
Note that this result entails an alternative (but non-constructive) proof of theorem~\ref{theorem:vw3-planar} via Wagner's theorem \cite{Wagner:EbeneKomplexe} ($K_{3,3}$ and $K_5$ are the forbidden minors for planarity).

Very interestingly, these forbidden minors were originally discovered by Iain Crump in his thesis \cite{Crump:ForbiddenMinors3Connected} to describe the seemingly unrelated \emph{splitting} property as we will mention in the following section.

\subsection{Denominator reduction}
\label{sec:denominator-reduction} %
In chapter~\ref{chap:hyperlogs} we explain the integration with hyperlogarithms in detail. But we anticipate that we will compute the integral of say $I_0 = \psipol^{-2}$ step by step, constructing the \emph{partial integrals}
\begin{equation}
	I_{k} \defas \int_0^{\infty} I_{k-1}\ \dd \SP_{e_k}
	\quad\text{for all}\quad
	1 \leq k < \abs{\edges}
	\label{eq:def-partial-integral} %
\end{equation}%
\nomenclature[Ik]{$I_k$}{partial integral, equation~\eqref{eq:def-partial-integral}\nomrefpage}%
along a suitable construction $\sigma=(e_1,\ldots,e_{\abs{\edges}})$ of $G$. We will find that $I_k = \sum_i f_i L_i$ is a linear combination of hyperlogarithms $L_i$ with rational prefactors $f_i$. This requires \emph{linear reducibility}, which in particular implies that the denominator of each $f_i$ must factor linearly in the next Schwinger variable $\SP_{e_{k+1}}$.

If indeed a summand of $I_k$ has a denominator $D_k = (a \SP_{e_{k+1}} + b)(c \SP_{e_{k+1}} + d)$ that factors, the integration of $\SP_{e_{k+1}}$ results in a contribution to $I_{k+1}$ with denominator $D_{k+1} \defas ad - bc$. In general these quadratic polynomials do not factorize.

But in their seminal work, Bloch, Esnault and Kreimer showed that during the first four integrations ($k \leq 4$), all denominators are products of linear polynomials \cite[section~8]{BlochEsnaultKreimer:MotivesGraphPolynomials}. Brown subsequently introduced these\footnote{Note that the polynomials in \cite{BlochEsnaultKreimer:MotivesGraphPolynomials} are defined via the \emph{cycle} (or \emph{circuit}) \emph{matrix} instead of the incidence matrix $\IM$.} as \emph{Dodgson polynomials} \cite{Brown:PeriodsFeynmanIntegrals} with
\begin{definition}
	\label{def:dodgson}%
	For any sets $I,J, K \subset \edges$ with $\abs{I} = \abs{J}$, the \emph{Dodgson polynomial} is
	\begin{equation}
		\dodgson^{I,J}_K
		\defas
		\restrict {\det \GM(I,J) }{\SP_e = 0 \ \forall e\in K},
		\label{eq:def-dodgson} %
	\end{equation}%
\nomenclature[Psi IJ K]{$\dodgson^{I,J}_K$}{Dodgson polynomial, equation~\eqref{eq:def-dodgson}\nomrefpage}%
	where $\GM(I,J)$ denotes the minor of $\GM$ obtained by deleting rows $I$ and columns $J$.
\end{definition}
\begin{remark}
	Dodgson polynomials depend on the choices made in the construction of a graph matrix $\GM$ for $G$ through an overall sign, so we understand that we stick to one particular matrix $\GM$ and thereby fix the order of its rows (edges) and columns (vertices) throughout.
	For the same reason, the orientation of the edges (signs in the incidence matrix $\IM$) and the deleted column $v_0 \in \vertices$ must stay the same.
\end{remark}
In particular all denominators factor into such linear polynomials $\dodgson^{I,J}_K$ with $I \cup J \cup K \subseteq \set{e_1,e_2,e_3,e_4}$, as long as $k \leq 4$. Therefore the first obstruction to linear reducibility can occur at $k=5$ and indeed a new polynomial enters the game at this stage: the \emph{five-invariant}
\begin{equation}
	\fiveinv{i,j,k,l,m}
	=
	\pm \det \begin{pmatrix}
		\dodgson^{ij,kl}_{m} & \dodgson^{ijm,klm} \\
		\dodgson^{ik,jl}_{m} & \dodgson^{ikm,jlm} \\
	\end{pmatrix}
	\label{eq:def-fiveinv}%
\end{equation}
which is associated to a set $\set{i,j,k,l,m} \subseteq \edges$ of five distinct edges. Any permutation of these changes the determinant in \eqref{eq:def-fiveinv} only by an overall sign \cite{Brown:PeriodsFeynmanIntegrals}.
In the special case when $I_0 = \psipol^{-2}$, it even turns out that $I_5 = L/D_5$ is a trilogarithm $L$ divided by the common denominator $D_5 = \fiveinv{e_1,\ldots,e_5}$.

A typical five-invariant of a complicated graph has irreducible quadratic components. But when one of the four Dodgson polynomials in \eqref{eq:def-fiveinv} vanishes, $\fiveinv{i,j,k,l,m}$ degenerates into the product of two Dodgson polynomials and we say that $\set{i,j,k,l,m}$ splits.  Francis Brown \cite{Brown:PeriodsFeynmanIntegrals} proved that when $e_1,\ldots,e_{\abs{\edges}}$ is a construction with vertex-width three, all denominators are Dodgson polynomials (thus linear). In particular, $\fiveinv{e_{\sigma(1)},\ldots,e_{\sigma(5)}}$ splits.

Actually, a stronger statement can be made.
\begin{theorem}[Iain Crump \cite{Crump:ForbiddenMinors3Connected}]
	A simple, $3$-connected graph $G$ splits if and only if it contains none of $\set{K_{3,3}, K_5, C, O, H}$ as a minor. Furthermore, this condition is equivalent to $\vw(G)=3$.
	\label{theorem:split-minors} %
\end{theorem}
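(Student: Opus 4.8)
The plan is to prove the equivalence of three properties of a simple, $3$-connected graph $G$: that $G$ splits (a), that $G$ contains none of $\set{K_{3,3}, K_5, C, O, H}$ as a minor (b), and that $\vw(G) = 3$ (c). The equivalence (b)$\Leftrightarrow$(c) is exactly Theorem~\ref{theorem:vw3-minors}, which I assume; it therefore suffices to establish (a)$\Leftrightarrow$(c) and the full statement follows. One half, (c)$\Rightarrow$(a), is already supplied by Brown's denominator reduction discussed in section~\ref{sec:denominator-reduction}: along any construction of vertex-width three every denominator remains a Dodgson polynomial, so in particular the relevant five-invariant \eqref{eq:def-fiveinv} degenerates and $G$ splits. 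The content is thus concentrated on the converse (a)$\Rightarrow$(c), equivalently $\vw(G) \geq 4 \Rightarrow G$ does not split.

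First I would show that splitting is minor-monotone among $3$-connected graphs, i.e.\ that the class of graphs which split is closed under passing to $3$-connected minors. The engine is the contraction-deletion calculus for the Dodgson polynomials $\dodgson^{I,J}_K$ entering \eqref{eq:def-fiveinv}, governed by the relation \eqref{eq:psi-contraction-deletion}: setting $\SP_e = 0$ realizes the contraction of $e$ and extracting the linear coefficient in $\SP_e$ realizes its deletion (cf.\ definition~\ref{def:dodgson}). Tracking how each entry of the $2\times 2$ matrix in \eqref{eq:def-fiveinv}—and hence the vanishing of a Dodgson minor that witnesses splitting—is inherited when an edge outside the distinguished five is deleted or contracted, one obtains that every $3$-connected minor of a splitting graph again splits. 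Equivalently, a $3$-connected graph that contains a non-splitting minor cannot itself split.

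With this monotonicity in hand the converse reduces to a finite verification. Assume $\vw(G) \geq 4$. By Theorem~\ref{theorem:vw3-minors}, $G$ contains one of the five graphs $M \in \set{K_{3,3}, K_5, C, O, H}$ as a minor; each of these is $3$-connected. I would then check directly that none of the five splits: for each one exhibits that no choice of five edges forces a Dodgson polynomial in \eqref{eq:def-fiveinv} to vanish, so the five-invariant keeps an irreducible quadratic factor. Minor-monotonicity of non-splitting then forces $G$ not to split, yielding (a)$\Rightarrow$(c); combined with Brown's implication (c)$\Rightarrow$(a) and Theorem~\ref{theorem:vw3-minors} this closes the argument.

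The main obstacle is twofold. The more delicate point is pinning down the precise meaning of ``$G$ splits'' so that it is genuinely minor-monotone: the five-invariant is attached to a distinguished $5$-element edge set, and I must argue that the existence of a good such set (or of a construction realizing it) survives passage to minors, while controlling the change of the ambient polynomials in the surviving variables and the possible loss of $3$-connectivity under intermediate contractions. The second obstacle is the base computation for the five obstructions; although each is a bounded calculation, I would want a structural argument establishing non-splitting uniformly rather than brute-force inspection of all $\binom{\abs{\edges}}{5}$ five-element sets and constructions.
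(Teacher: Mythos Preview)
First, note that the paper itself does not prove this theorem: it is quoted from Crump's thesis \cite{Crump:ForbiddenMinors3Connected}, so there is no in-paper proof to compare against. Your approach to the contrapositive of (a)$\Rightarrow$(c) is sound in outline: for a fixed five-element edge set, each Dodgson entry in a minor $H$ of $G$ is either the specialization $\SP_e = 0$ or the $\SP_e$-linear coefficient of the corresponding entry in $G$, so a nonvanishing entry in $H$ forces a nonvanishing entry in $G$; hence a five-set that fails to split in $H$ also fails in $G$. What then remains is indeed the finite verification that each forbidden minor possesses at least one five-set with no vanishing Dodgson entry (which is weaker than your stated ``no choice of five edges'').

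The genuine gap is in (c)$\Rightarrow$(a). Brown's result, as recorded in section~\ref{sec:denominator-reduction}, says only that along a vertex-width-three construction the five-invariant of the \emph{first five} edges $e_{\sigma(1)},\ldots,e_{\sigma(5)}$ splits. But ``$G$ splits'' means that \emph{every} five-element edge set splits---this is exactly how the paper glosses Crump's statement immediately after quoting it. For an arbitrary five-set $\{i,j,k,l,m\}$ there is no reason that a width-three construction should begin with those particular edges, so Brown's argument does not cover it; indeed, the stronger claim that any ordering works is stated as an open conjecture right after the theorem. You have implicitly switched between two readings of ``splits'': the weak one (some distinguished five-invariant) in your (c)$\Rightarrow$(a), and the strong one (all five-sets) in your (a)$\Rightarrow$(c). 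The implication $\vw(G)=3 \Rightarrow$ every five-invariant splits is genuinely part of Crump's contribution and requires its own structural argument, not merely an appeal to Brown.
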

This means that not only $\fiveinv{e_1,\ldots,e_5}$, but in fact \emph{every} five-invariant of $G$ splits when $\vw(G) \leq 3$. Conversely, the splitting of every five-invariant in a $3$-connected graph requires $\vw(G) = 3$. Hence graphs with $\vw(G) \leq 3$ are extremely special from this viewpoint and in \cite{Crump:ForbiddenMinors3Connected} we even find
\begin{conjecture}
	If $\vw(G) \leq 3$, then it is linearly reducible with respect to any ordering of its edges and the polynomials in the reduction are all Dodgson polynomials.
\end{conjecture}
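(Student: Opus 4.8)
The plan is to prove the stronger statement by induction on the integration step in the polynomial reduction of chapter~\ref{chap:hyperlogs}, maintaining as a loop invariant that after eliminating any edges $e_1,\ldots,e_k$ the reduced set $S_k$ of irreducible polynomials consists \emph{only} of Dodgson polynomials $\dodgson^{I,J}_K$ with index sets $I,J,K \subseteq \set{e_1,\ldots,e_k}$ and $\abs{I}=\abs{J}$. Since every such Dodgson polynomial is multilinear in the Schwinger parameters --- it is a minor of the matrix $\GM$, which is linear in each $\SP_e$ --- it is in particular of degree at most one in \emph{every} remaining variable $\SP_{e_{k+1}}$. Hence the invariant immediately yields linear reducibility, and the order of the edges enters only through the bookkeeping of which index sets have so far been populated. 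For the period integrand the base of the induction is the starting polynomial $\psipol = \dodgson^{\emptyset,\emptyset}_{\emptyset}$ together with the observation of Bloch, Esnault and Kreimer \cite[section~8]{BlochEsnaultKreimer:MotivesGraphPolynomials} that for $k \leq 4$ all denominators are already products of Dodgson polynomials.

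For the inductive step I would analyse the two operations that generate $S_{k+1}$ from $S_k$. Coefficient extraction is harmless: expanding a Dodgson polynomial in $\SP_{e_{k+1}}$ gives, up to sign, its leading coefficient $\dodgson^{I e_{k+1}, J e_{k+1}}_K$ and its constant term $\dodgson^{I,J}_{K e_{k+1}}$, both again Dodgson polynomials. The heart of the matter is the resultant $\resultant{p}{q}{e_{k+1}}$ of two polynomials $p,q \in S_k$ that are linear in $\SP_{e_{k+1}}$. Writing it as $[p]_1 [q]_0 - [p]_0 [q]_1$ in terms of the extracted coefficients and applying the quadratic Dodgson (Plücker-type) identities among the minors of $\GM$, one reduces it, up to explicit products of Dodgson polynomials, to a five-invariant $\fiveinv{i,j,k,l,m}$. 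By Crump's theorem~\ref{theorem:split-minors}, every five-invariant of a graph with $\vw(G)=3$ \emph{splits}, i.e. one of the four Dodgson polynomials in \eqref{eq:def-fiveinv} vanishes and $\fiveinv{i,j,k,l,m}$ degenerates into a product of two Dodgson polynomials. Therefore the resultant factors into Dodgson polynomials and the invariant is preserved.

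The reason this argument is insensitive to the chosen ordering --- which is exactly the content of the conjecture beyond theorem~\ref{theorem:vw3-Brown} --- is that splitting is a property of $G$ and of \emph{all} its minors simultaneously: since $\vw(H) \leq \vw(G) \leq 3$ for every minor $H \minoreq G$, Crump's characterisation~\ref{theorem:split-minors} applies uniformly, so whichever five edges $\set{i,j,k,l,m}$ happen to occur at a given step, their invariant splits. Brown's proof \cite{Brown:PeriodsFeynmanIntegrals} only exploits the splitting of the single invariant $\fiveinv{e_1,\ldots,e_5}$ selected by a vertex-width-three construction; replacing this by the minor-closed splitting property is what removes the dependence on the order.

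The step I expect to be the genuine obstacle is the passage beyond $k=5$. For small $k$ the resultant of two Dodgson polynomials collapses to a \emph{single} five-invariant, whose splitting is governed directly by theorem~\ref{theorem:split-minors}. For larger index sets, however, the Dodgson identities produce resultants whose irreducible structure is controlled by higher analogues of the five-invariant, associated to six, seven or more edges, and no ``higher splitting theorem'' guaranteeing their factorisation is presently available. Making the induction go through therefore requires either a general identity expressing the resultant of two arbitrary Dodgson polynomials purely in terms of Dodgson polynomials and five-invariants, or an independent proof that the minor-closed splitting property forces all these higher invariants to degenerate as well. Establishing this robust closure of the Dodgson family under the full reduction, for an arbitrary edge order, is precisely what keeps the statement at the level of a conjecture.
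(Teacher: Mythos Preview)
The statement is a \emph{conjecture} in the paper, not a theorem; the paper provides no proof and attributes the conjecture to Crump's thesis \cite{Crump:ForbiddenMinors3Connected}. There is therefore nothing to compare your proposal against.

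That said, your write-up is not really a proof attempt either, and you acknowledge as much in the final paragraph: you correctly locate the obstruction at the step $k>5$, where resultants of Dodgson polynomials are no longer governed by a single five-invariant. Your outline of the strategy --- maintain the invariant ``$S_k$ consists of Dodgson polynomials'', use multilinearity for the linearity claim, and invoke Crump's theorem~\ref{theorem:split-minors} (every five-invariant of a $\vw\leq 3$ graph splits, not just the one picked out by a particular construction) to handle the first non-trivial resultant --- is exactly the natural approach and is consistent with how the paper motivates the conjecture. The honest identification of the missing ingredient (a ``higher splitting'' statement or a general closure of Dodgson polynomials under resultants) is accurate; this is precisely why the statement remains conjectural. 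One minor caution: your appeal to minor-closure of vertex-width to deduce splitting for all minors is fine, but note that theorem~\ref{theorem:split-minors} as stated in the paper is for simple $3$-connected graphs, so a complete argument would also need the extension in \cite{BlackCrumDeVosYeats:ForbiddenMinorsSplit5} to cover arbitrary minors.
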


We like to mention that the characterization (in terms of forbidden minors) of graphs with vertex-width less than or equal to three is much more complicated when we drop the requirement of $3$-connectedness. This very intricate problem was solved in \cite{BlackCrumDeVosYeats:ForbiddenMinorsSplit5}.

\subsubsection{Quadratic identities}

Known factorizations of denominators are consequences of local combinatorics (e.\,g.\ the presence of triangles or $3$-valent vertices) and the representation of the graph polynomial $\psipol = \det \GM$ as a determinant. Already Stembridge \cite{Stembridge:CountingPoints} observed that the \emph{Dodgson identity}
\begin{equation}
	\det \GM(ij,ij) \cdot \det \GM
	= \det \GM(i,i) \cdot \det \GM(j,j) - \det \GM(i,j) \cdot \det \GM(j,i)
	\label{eq:dodgson-identity} %
\end{equation}
proves the factorization of the third denominator 
\begin{equation*}
	D_2
	=
	\dodgson^{1}_{2} \dodgson^{2}_{1} - \dodgson^{12}_{} \dodgson_{12}
	=
	\left(\dodgson^{1}_{2} \right)^2
	.
\end{equation*}
In fact, Dodgson \cite{Dodgson:CondensationOfDeterminants} refers to a much more general result as ``well-known'': Jacobi's determinant formula \cite{Brown:PeriodsFeynmanIntegrals}. Its application to the graph matrix can be phrased as
\begin{lemma}[corollary 10 in \cite{VlasevYeats:FourVertexIdentity}]
	\label{lemma:dodgson-jacobi}%
	If the edge sets 
	$I = \set{I_1,\ldots,I_r}$,
	$J=\set{J_1,\ldots,J_r}$,
	$A$,
	$B$,
	$K \subseteq \edges$
	fulfil $A \cap I = B \cap J = \emptyset$ and $\abs{A} = \abs{B}$, then
	\begin{equation}
		\det\left( \dodgson^{A \cup \set{I_i}, B \cup \set{J_j}}_{K} \right)_{1 \leq i,j\leq r}
		= \dodgson^{A \cup I, B \cup J}_{K}
		\left( \dodgson^{A,B}_{K} \right)^{r-1}
		.
		\label{eq:dodgson-jacobi}%
	\end{equation}
\end{lemma}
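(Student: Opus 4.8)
The plan is to recognise every entry of the matrix in \eqref{eq:dodgson-jacobi} as a minor of a single fixed square matrix and then to invoke Jacobi's complementary-minor identity. Keeping the graph matrix $\GM$ fixed (with its chosen order of rows and columns, as in the remark preceding definition~\ref{def:dodgson}), I would set
\[
	M \defas \restrict{\GM(A,B)}{\SP_e = 0\ \forall e \in K},
\]
the submatrix of $\GM$ obtained by deleting the rows $A$ and columns $B$ and specialising $\SP_e = 0$ for $e \in K$. Since $\abs{A} = \abs{B}$, this $M$ is square and $\det M = \dodgson^{A,B}_K$ by definition. The hypotheses $A \cap I = B \cap J = \emptyset$ ensure that the rows $I_1,\ldots,I_r$ and the columns $J_1,\ldots,J_r$ all survive in $M$, so deleting from $M$ the one further row $I_i$ and column $J_j$ produces exactly $\dodgson^{A \cup \set{I_i}, B \cup \set{J_j}}_K$. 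Thus every entry of the $r\times r$ array is a first minor of $M$, i.e.\ a cofactor of $M$ up to the checkerboard sign.

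Next I would pass to the adjugate $\operatorname{adj}(M) = \det(M)\, M^{-1}$. Writing $p_i$ and $q_j$ for the positions of the surviving rows $I_i$ and columns $J_j$ inside $M$, the cofactor expansion gives $\dodgson^{A\cup\set{I_i},B\cup\set{J_j}}_K = (-1)^{p_i + q_j}(\operatorname{adj} M)_{q_j,\, p_i}$. The sign $(-1)^{p_i}$ depends only on the row index $i$ and $(-1)^{q_j}$ only on the column index $j$ of the small array, so these factor out of the determinant as one global sign; what remains, after transposition, is the $r \times r$ minor of $\operatorname{adj}(M)$ on rows $\set{q_1,\ldots,q_r}$ and columns $\set{p_1,\ldots,p_r}$.

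The key input is then Jacobi's theorem on the minors of the inverse: for invertible $M$ and index sets $\alpha,\beta$ of size $r$, one has $\det\bigl(\operatorname{adj}(M)[\alpha,\beta]\bigr) = \pm (\det M)^{r-1}\det\bigl(M[\beta^{c},\alpha^{c}]\bigr)$. Applying this with $\alpha = \set{q_1,\ldots,q_r}$ and $\beta = \set{p_1,\ldots,p_r}$, the complementary minor $M[\beta^{c},\alpha^{c}]$ deletes precisely the $I$-rows and $J$-columns from $M$, so it equals $\det\restrict{\GM(A\cup I, B \cup J)}{\SP_e=0} = \dodgson^{A\cup I, B\cup J}_K$, while $(\det M)^{r-1} = \bigl(\dodgson^{A,B}_K\bigr)^{r-1}$ is the asserted power. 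This reproduces \eqref{eq:dodgson-jacobi} up to an overall sign.

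The step requiring the most care is the sign bookkeeping: the checkerboard signs of the cofactors, the sign pulled out of the determinant, and the sign delivered by Jacobi's identity must be combined and shown to cancel to $+1$. The clean way is to track the sum-of-positions parities, or else to absorb the net sign into the overall-sign ambiguity of Dodgson polynomials already fixed by the choice of $\GM$. A second, minor obstacle is that Jacobi's identity presupposes $\det M \neq 0$, whereas $\dodgson^{A,B}_K$ can vanish identically; since both sides of \eqref{eq:dodgson-jacobi} are polynomials in the $\SP_e$, it suffices to establish the identity on the Zariski-dense locus $\det M \neq 0$ and extend by continuity (and for $r=1$ the claim is a tautology).
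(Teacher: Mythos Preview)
The paper does not actually prove this lemma; it is cited from \cite{VlasevYeats:FourVertexIdentity} and the surrounding text merely indicates that it is ``Jacobi's determinant formula \ldots{} [whose] application to the graph matrix can be phrased as'' the stated identity. Your approach is precisely that intended derivation: recognise $\dodgson^{A,B}_K = \det M$ for $M = \restrict{\GM(A,B)}{\SP_e=0}$, identify the entries of the $r\times r$ array as first minors of $M$, and apply Jacobi's complementary-minor theorem. Your treatment of the degenerate case $\det M = 0$ by Zariski density is correct, and your remark that the residual sign is fixed once the graph matrix $\GM$ is fixed (consistent with the remark after definition~\ref{def:dodgson}) is also the right way to dispose of the sign bookkeeping.
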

These and further identities where studied in detail in \cite{Brown:PeriodsFeynmanIntegrals}. Applications to denominator reduction include impressive computations of point-counts of graph hypersurfaces \cite{BrownSchnetz:K3Phi4} and explicit graphical criteria for the \emph{weight-drop} phenomenon \cite{BrownYeats:SpanningForestPolynomials}.

In order to understand these identities combinatorially, we need an alternative description of Dodgson polynomials in the spirit of \eqref{eq:graph-polynomials-combinatorial} for the Symanzik polynomials. This was worked out in \cite{BrownYeats:SpanningForestPolynomials} and we present a slightly more general formulation and proof of
\begin{lemma}
	\label{lemma:dodgson-as-forests}%
	For equinumerous sets $I, J \subset \edges(G)$ of edges, the Dodgson polynomial
	\begin{equation}
		\dodgson_{G}^{I,J}
		= 
		\sum_{P} \epsilon_P^{I,J} \cdot \forestpolynom[G\setminus (I \cup J)]{P}
		\quad\text{with signs}\quad
		\epsilon_P^{I,J} \in \set{\pm 1}
		\label{eq:dodgson-as-forests}%
	\end{equation}
	is a linear combination of spanning forest polynomials.
	These are indexed by partitions $P$ of $\vertices(I \cup J)$ such that $(I\setminus J)/P$ and $(J \setminus I)/P$ are spanning trees, where $H/P$ denotes the graph obtained from the edges $H$ after identification of all vertices that belong to the same parts of $P$.
	The signs are $\epsilon_P^{I,J} = (-1)^{N_{I,J}} \cdot \dodgson_{(I \cup J)/P}^{I,J}$ where
	\begin{equation}
		N_{I,J}
		= r
			+ \abs{\setexp{(e,f)}{\text{$e \in I \cap J$, $f \in I \SymDiff J$ such that $e<f$}}}
			+	\sum_{e \in I \SymDiff J} e
		\label{eq:dodgson-sign-IJ}%
	\end{equation}
	and
	$
		I \SymDiff J 
		\defas 
			I \setminus J \cupdot J \setminus I 
	$ stands for the symmetric difference.
\end{lemma}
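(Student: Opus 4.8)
The plan is to retrace the proof of theorem~\ref{theorem:graph-polynomials}, working with the full Dodgson polynomial (the subscript $K=\emptyset$; the general case then follows by setting $\SP_e=0$ for $e\in K$ in the resulting spanning forest polynomials). First I would record the block shape of the minor in definition~\ref{def:dodgson}: deleting the edge-rows $I$ and the edge-columns $J$ from $\GM$ leaves
\[
  \GM(I,J)
  =
  \begin{pmatrix}
    \DM(I,J) & \tilde{\IM}(I) \\
    -\tilde{\IM}(J)^{\Transpose} & 0
  \end{pmatrix},
\]
where the diagonal variables $\SP_e$ survive in $\DM(I,J)$ exactly for $e\notin I\cup J$. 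Expanding $\det\GM(I,J)$ by multilinearity in these diagonal entries, precisely as in \eqref{eq:graph-polynomial-proof-expansion}, gives
\[
  \dodgson^{I,J}_G
  = \sum_{S\subseteq\edges\setminus(I\cup J)}
    \Big(\prod_{e\in S}\SP_e\Big)
    \det\begin{pmatrix}
      0 & \tilde{\IM}(I\cup S) \\
      -\tilde{\IM}(J\cup S)^{\Transpose} & 0
    \end{pmatrix}.
\]

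The anti-block determinant factorizes (up to sign) as $\det\tilde{\IM}(I\cup S)\cdot\det\tilde{\IM}(J\cup S)$ and can be non-zero only when both blocks are square, i.e. when $\abs{\edges\setminus(I\cup S)}=\abs{\vertices}-1$. Setting $F\defas\edges\setminus(I\cup J\cup S)$ I have $\edges\setminus(I\cup S)=F\cupdot(J\setminus I)$ and $\edges\setminus(J\cup S)=F\cupdot(I\setminus J)$, so by the matrix-tree theorem~\ref{theorem:matrix-tree} the contribution is $\pm1$ precisely when both $F\cupdot(J\setminus I)$ and $F\cupdot(I\setminus J)$ are spanning trees, and vanishes otherwise. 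Since $\prod_{e\in S}\SP_e=\prod_{e\notin F}\SP_e$ is exactly the monomial attached to the forest $F$ in $G\setminus(I\cup J)$, this already reproduces \eqref{eq:dodgson-as-forests} up to sign once the sum is regrouped.

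The regrouping step uses that whether $F\cupdot(J\setminus I)$ (resp. $F\cupdot(I\setminus J)$) is a spanning tree depends on $F$ only through the partition $P$ of $\vertices(I\cup J)$ cut out by the connected components of $F$: it is equivalent to $(J\setminus I)/P$ (resp. $(I\setminus J)/P$) being a spanning tree of the contracted graph, because $J\setminus I$ and $I\setminus J$ touch no vertices outside $\vertices(I\cup J)$. Summing $\prod_{e\notin F}\SP_e$ over all forests $F$ of $G\setminus(I\cup J)$ inducing a fixed admissible $P$ gives precisely $\forestpolynom[G\setminus(I\cup J)]{P}$ by \eqref{eq:def-forestpolynom}, so it remains only to show that the sign $\pm\det\tilde{\IM}(I\cup S)\det\tilde{\IM}(J\cup S)$ is constant over this class and equals $(-1)^{N_{I,J}}\dodgson^{I,J}_{(I\cup J)/P}$.

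This last point is the main obstacle. The key observation, which makes the sign $P$-dependent rather than $F$-dependent, is that both incidence minors retain the rows of the same subforest $F$: contracting every component of $F$ to a point identifies $\tilde{\IM}(I\cup S)$ and $\tilde{\IM}(J\cup S)$ with the corresponding incidence minors of the small graph $(I\cup J)/P$, each up to one and the same $F$-sign whose square is $1$; hence the $F$-dependence cancels in the product and what survives is exactly the Dodgson polynomial $\dodgson^{I,J}_{(I\cup J)/P}$ of the contracted graph (a pure $\pm1$, since there all forest variables have been contracted away). I would make this precise by block-triangularizing $\tilde{\IM}$ along $F$ (choosing a root in each component), or equivalently by an exchange argument showing that swapping one edge of $F$ for another in the same fundamental cut flips both determinants simultaneously. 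The remaining global prefactor $(-1)^{N_{I,J}}$ of \eqref{eq:dodgson-sign-IJ} then has to be assembled by careful bookkeeping of the cofactor signs incurred when the diagonal $S$-rows and $S$-columns are extracted, together with the relative order (with respect to the fixed order $<$ on $\edges$ underlying $\GM$) of the rows and columns indexed by $I\SymDiff J$ against those indexed by $I\cap J$. Matching these exactly against $\abs{\setexp{(e,f)}{e\in I\cap J,\ f\in I\SymDiff J,\ e<f}}+\sum_{e\in I\SymDiff J}e$ is where I expect the real work to lie.
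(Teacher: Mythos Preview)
Your approach is essentially the paper's: expand $\det\GM(I,J)$ in the Schwinger variables, invoke the matrix-tree theorem~\ref{theorem:matrix-tree} for both incidence minors, regroup by the partition $P$ induced by the components of $F$, and use that both minors share the identical rows of $F$ so that contracting $F$ edge by edge multiplies both determinants by the same factor and leaves only $\dodgson^{I,J}_{(I\cup J)/P}$.

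The only point where the paper is sharper is the global sign $(-1)^{N_{I,J}}$, which you leave as ``careful bookkeeping''. Note that your block $\DM(I,J)$ is \emph{not} diagonal when $I\neq J$ (deleting rows $I$ and columns $J$ moves the surviving $\SP_e$ off the main diagonal), so your displayed expansion formula is not literally correct as written. The paper fixes both issues at once: before expanding, it permutes the columns indexed by $I\setminus J$ and the rows indexed by $J\setminus I$ of $\GM(I,J)$ into the first $r$ positions. This costs exactly $N_{I,J}$ transpositions (whence the formula~\eqref{eq:dodgson-sign-IJ}, as the $k$-th element $i_k\in I\setminus J$ sits in column $i_k-\abs{\{e\in J:e<i_k\}}$ of $\GM(I,J)$ and must move to column $k$), and simultaneously restores the $\SP_e$ with $e\notin I\cup J$ to the diagonal so that the multilinear expansion runs as in \eqref{eq:graph-polynomial-proof-expansion}. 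After this rearrangement the remaining sign is purely $\dodgson^{I,J}_{(I\cup J)/P}$, as you argue.
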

\begin{proof}
	Let $I\setminus J = \set{i_1,\ldots,i_r}$ and $J \setminus I = \set{j_1,\ldots,j_r}$ in ascending order ($i_1<\cdots<i_r$). If we move the columns $I \setminus J$ (and rows $J \setminus I$) of the graph matrix $\GM(I,J)$ such that they end up in the first $r$ columns (rows), without changing the relative order of any other columns (rows), then we pick up
	\begin{equation*}
		N_{I,J}
		\equiv
		\sum_{k=1}^{r} \left( i_{k}- \abs{\setexp{e \in J}{e < i_k}} - k \right)
		+ \sum_{k=1}^{r} \left( j_{k}- \abs{\setexp{e \in I}{e < j_k}} - k \right)
		\mod 2
	\end{equation*}
	minus signs: $i_k$ is the $\left[i_k - \abs{\setexp{e\in J}{e < i_k}} \right]$'th column of $\GM(I,J)$ and must be moved to column $k$.

	Now we expand $M(I,J)$ with respect to the Schwinger variables as in \eqref{eq:graph-polynomial-proof-expansion} and obtain, by the matrix-tree-theorem \eqref{eq:matrix-tree},
	\begin{equation}
		\dodgson^{I,J}
		= \sum_{S \supseteq I \cup J} 
				\det \tilde{\IM}(I \cupdot \edges \setminus S) 
				\cdot \det \tilde{\IM}(J \cupdot \edges \setminus S)
				\cdot \prod_{e \notin S} \SP_e
		\label{eq:dodgson-signs-proof-expansion}%
	\end{equation}
	where the non-vanishing contributions are precisely those for which both $S \setminus I$ and $S \setminus J$ are spanning trees. Hence $F \defas S \setminus (I \cup J)$ is a spanning forest that contributes to $\forestpolynom[G \setminus (I \cup J)]{P}$ for the partition $P=\comps(F)$.
	Note that in \eqref{eq:dodgson-signs-proof-expansion}, both minors of the incidence matrix share the identical last rows $\tilde{\IM}_e$ ($e \in F$).

	Contracting any edge $e=\set{v,w} \in F$ amounts to adding the column $w$ to the column $v$, followed by an expansion in the row $e$ (where the only non-zero entry then remains in column $w$).  This multiplies both of the determinants in \eqref{eq:dodgson-signs-proof-expansion} with the same factor and does not change the overall sign. After all $e \in F$ were contracted, any vertices belonging to the same part of $P$ have been identified.
\end{proof}
	To compute $\dodgson_{(I \cup J)/P}^{I,J}$, choose any part $P_0 \in P$ to fix a root of both trees $(I\setminus J)/P$ and $(J \setminus I)/P$. Every edge $i_k \in I\setminus J$ has two endpoints in $(I\setminus J)/P$, and with $\phi_I(k)$ we denote that one which is further away from the root as shown in figure~\ref{fig:dodgson-signs}. This gives us two bijections 
	$\phi_I, \phi_J\colon \set{1,\ldots,r} \longrightarrow P \setminus \set{P_0}$ and we have \cite{BrownYeats:SpanningForestPolynomials}
	\begin{equation}
		\dodgson_{(I \cup J)/P}^{I,J}
		=
		\sgn \left( \phi_I \circ \phi_J^{-1} \right)
		\cdot
		\prod_{k=1}^{r} \left[
			\IM_{i_k, \phi_I(k)} \IM_{j_k, \phi_{J}(k)}
		\right],
		\label{eq:dodgson-as-forests-signs}%
	\end{equation}
	where the product counts how many of the edges in the trees $(I\setminus J)/P$ and $(J\setminus I)/P$ are directed towards the root.
	Note that when $r = 1$, the permutations $\phi_I$ and $\phi_J$ do not play any role.
\begin{figure}\centering
	$ G=\Graph[0.8]{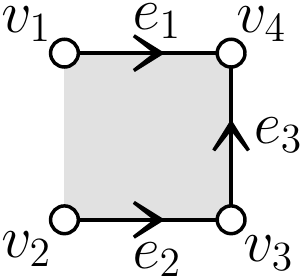}$
	\hfill
	\begin{tabular}{cc@{\hspace{1mm}}cc@{\hspace{1mm}}cc@{\hspace{1mm}}c}
		\toprule
		$I,J$ &
			\multicolumn{2}{c}{$\set{1,2}$, $\set{1,3}$} &
			\multicolumn{2}{c}{$\set{1,2}$, $\set{2,3}$} &
			\multicolumn{2}{c}{$\set{1,3}$, $\set{2,3}$}
		\\
		\midrule
		$H/P$ &
			$ \Graph[0.5]{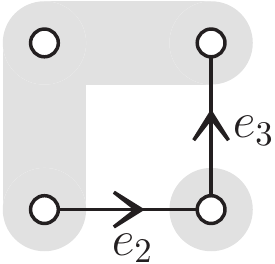} $ &
			$ \Graph[0.5]{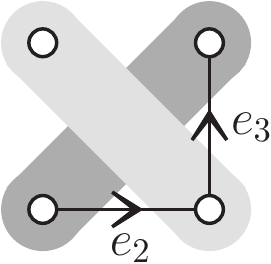} $ &
			$ \Graph[0.5]{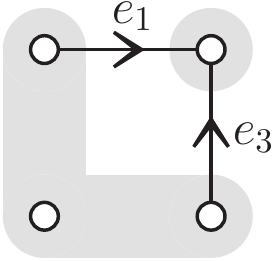} $ &
			$ \Graph[0.5]{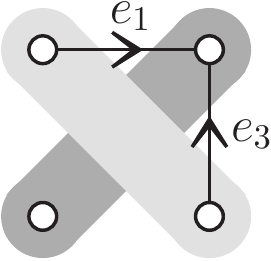} $ &
			$ \Graph[0.5]{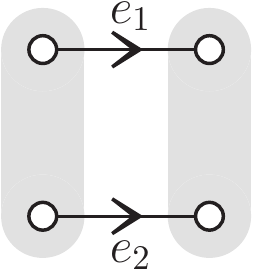} $ &
			$ \Graph[0.5]{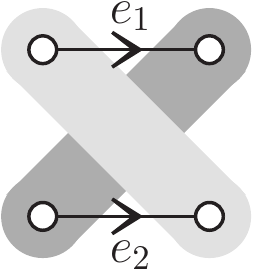} $
		\\\addlinespace
		$P$ &
			$124,3$ &
			$13,24$ &
			$123,4$ &
			$13,24$ &
			$12,34$ &
			$13,24$
		\\
		$\dodgson_{H/P}^{I,J}$ &
			$ -1 $ &
			$ -1 $ &
			$ +1 $ &
			$ +1 $ &
			$ +1 $ &
			$ -1 $
		\\
		\bottomrule
	\end{tabular}%
	\caption[Signs of spanning forest polynomials in Dodgson polynomials determined by associated trees]{The grey area of $G$ indicates that in the drawing we only show the part $H$ of $G$ that is formed by the three edges $e_i$ and the four vertices $v_i$ they touch. In example~\ref{ex:dodgson-signs-boxforest} we compute the corresponding Dodgson polynomials with \eqref{eq:dodgson-as-forests}. The signs are determined by whether the two edges $I \SymDiff J$ of $H/P$ connect both parts in the same direction.}%
	\label{fig:dodgson-signs}%
\end{figure}
\begin{example}
	\label{ex:dodgson-signs-boxforest}%
	We consider $I \cup J \subseteq \set{e_1,e_2,e_3}$ for three particular edges of $G$ arranged and oriented as shown in figure~\ref{fig:dodgson-signs}. Also assume that $e_1<e_2<e_3$ are indeed the first three edges in the order chosen to define the graph matrix.

	For $I=\set{1,2}$ and $J=\set{1,3}$ (so $r=1$) we find that $N_{I,J} = 1 +2 +3 + 2$ from \eqref{eq:dodgson-sign-IJ} is even such that $\epsilon_P^{I,J} = \dodgson_{H/P}^{I,J}$ where $H$ contains the four vertices $v_i$ and the three edges $e_i$.
	The only partitions $P$ to consider are $124,3$ and $13,24$. In both cases, $e_2$ and $e_3$ connect the two parts of $P$ in opposite directions and thus $\epsilon_P^{I,J} = -1$ from \eqref{eq:dodgson-as-forests-signs}:
	\begin{equation}
		\dodgson^{12,13}
		= - \forestpolynom{\set{1,2,4},\set{3}} - \forestpolynom{\set{1,3},\set{2,4}}
		.
		\label{eq:dodgson-as-forest-boxladder-1} %
	\end{equation}
	Similarly, we obtain expressions for the Dodgson polynomials
	\begin{equation}
		\dodgson^{12,23}
		= \forestpolynom{\set{1,2,3},\set{4}} + \forestpolynom{\set{1,3},\set{2,4}}
		\quad\text{and}\quad
		\dodgson^{13,23}
		= \forestpolynom{\set{1,2},\set{3,4}} - \forestpolynom{\set{1,3},\set{2,4}}
		\label{eq:dodgson-as-forest-boxladder-2}%
	\end{equation}
	from analyzing the partitions summarized in the table of figure~\ref{fig:dodgson-signs}. Note that $N_{I,J}$ is even in all these cases.
\end{example}

\subsection{Forest functions}
\label{sec:forest-functions}%
The main aim of this section to provide an independent and simplified proof of theorem~\ref{theorem:vw3-Brown} (linear reducibility of graphs with $\vw{G} \leq 3$) which was originally stated in \cite{Brown:PeriodsFeynmanIntegrals}. In particular we follow an inductive approach and instead of considering the full graph $G$ at once, we use functions of only three variables and build up $G$ edge by edge.

This recursion is very similar in spirit to the construction of graphical functions in \cite{Schnetz:GraphicalFunctions}, but formulated in Schwinger parameters.

\begin{definition}
	\label{def:GfunForest}%
	Let $G$ be a graph with three marked vertices $\set{v_1,v_2, v_3} = \vertices_{\Text}(G)$. We introduce the abbreviations $f = (f_1, f_2, f_3)$ for the spanning forest polynomials
	\begin{equation}
		f_1 \defas \forestpolynom[G]{\set{v_1},\set{v_2,v_3}}
		,\quad
		f_2 \defas \forestpolynom[G]{\set{v_2},\set{v_1,v_3}}
		\quad\text{and}\quad
		f_3 \defas \forestpolynom[G]{\set{v_3},\set{v_1,v_2}}
		\label{eq:3pt-forest-shorthand} %
	\end{equation}
	and define the \emph{forest function} $\GfunForest{G}\colon \R_+^3 \longrightarrow \R_{+}$ of $G$ by
	\begin{equation}
		\GfunForest{G}(z)
		=
		\GfunForest{G}(z_1,z_2,z_3)
		\defas
		\int_{0}^{\infty}
			\psipol^{-\dimension/2}
			\prod_{i=1}^{3}
				\delta\left( \frac{f_i}{\psipol} - z_i \right)
			\cdot
			\prod_{e \in \edges} \SP_e^{\EP_e -1} \dd \SP_e
		.
		\label{eq:def-GfunForest} %
	\end{equation}
\end{definition}%
\nomenclature[f G]{$\GfunForest{G}(z)$}{forest function of a $3$-point graph, equation~\eqref{eq:def-GfunForest}\nomrefpage}%
In the following we will always assume that the indices $\EP_e$ are such that the integral \eqref{eq:def-GfunForest} converges absolutely, so $\GfunForest{G}(z)$ is analytic in $z$. In fact it can be extended at least to the domain $\setexp{z \in \C^3}{\Realteil(z_i) > 0 \ \text{for $1 \leq i \leq 3$}}$. If we rescale the argument $z$ and all Schwinger parameters $\SP_e$ in \eqref{eq:def-GfunForest}, power counting shows the homogeneity
\begin{equation}
\GfunForest{G}(\lambda z)
	= \lambda^{\sdd - 3} \cdot \GfunForest{G}(z)
	.
	\label{eq:GfunForest-scaling} %
\end{equation}
\begin{example}
	\label{ex:GfunForest-3edge}%
	The star of figure~\ref{fig:GfunForest-few} has $\psipol = 1$ and $f_i = \SP_i$, for the triangle the polynomials are $\psipol = \SP_1 + \SP_2 + \SP_3$ and $f_i = \SP_1 \SP_2 \SP_3 / \SP_i$. The forest integrals evaluate to
	\begin{equation}
		\GfunForest{\StarSymbol}(z)
		= z_1^{\EP_1 -1} z_2^{\EP_2-1} z_3^{\EP_3 -1}
		\qquad\text{and}\qquad
		\GfunForest{\TriangleSymbol}(z)
		= \frac{(z_1 z_2 z_3)^{\dimension/2-1}}{\SunrisePsi^{\dimension}}
			\prod_{k=1}^3 \left( \frac{\SunrisePsi}{z_k} \right)^{\EP_k}
		,
		\label{eq:GfunForest-3edge}%
	\end{equation}
	where we write $\SunrisePsi$ for the polynomial
	\begin{equation}
		\SunrisePsi
		= \SunrisePsi(z)
		\defas
		z_1 z_2 + z_2 z_3 + z_3 z_1
		.
		\label{eq:SunrisePsi}%
	\end{equation}%
\end{example}%
\nomenclature[psi sunrise]{$\SunrisePsi$}{sunrise graph polynomial, equation~\eqref{eq:SunrisePsi}\nomrefpage}%

\subsection{Recursions}\label{sec:3pt-recursions}
In this section we derive formulas to compute forest integrals recursively, adding edges one by one. Examples will be given for the graphs in figure~\ref{fig:GfunForest-few}. Appending a vertex is simple:
\begin{figure}
	\centering
	$
		K_{1,3} = \hspace{-7mm}\Graph[0.65]{star_forest}
		\quad
		C_1= \hspace{-4mm} \Graph[0.67]{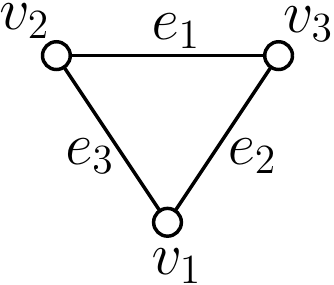} \hspace{-2mm}
		\rightarrow
		C_1'=\hspace{-5mm}\Graph[0.5]{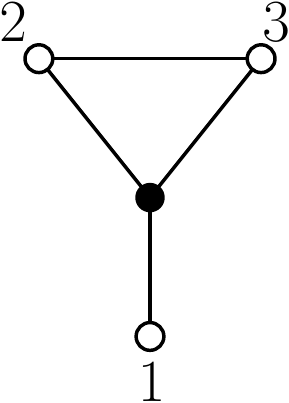}\hspace{-2mm}
		\rightarrow
		\WS{3}^{-}=\hspace{-7mm}\Graph[0.65]{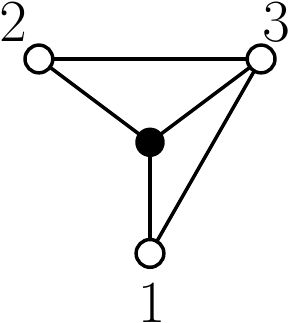}\hspace{-2mm}
		\rightarrow
		\WS{3}=\hspace{-5mm}\Graph[0.65]{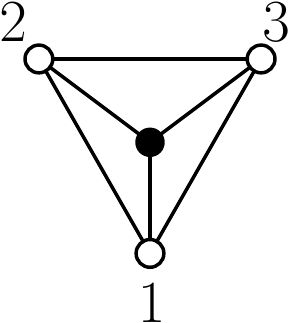}
	$ 
	\caption[Three-point graphs with few edges]{%
		Three-point graphs with few edges. Forest functions for the star $K_{1,3}$ and the triangle $C_1$ are computed in example~\ref{ex:GfunForest-3edge}. We can then add edges one by one to construct the wheel $\WS{3}$ with $3$ spokes.}%
	\label{fig:GfunForest-few}%
\end{figure}
\begin{lemma}
	\label{lemma:Gfun-Forest-vertex}%
	Given a graph $G$ with three external vertices $\set{v_1, v_2, v_3} = \vertices_{\Text}(G)$, append an edge $e = \set{v_1,v_1'}$ to a new vertex $v_1'$ and call the resulting graph $G'$.
	This means $\vertices(G') = \vertices(G) \cupdot \set{v_1'}$, $\edges(G') = \edges(G) \cupdot \set{e}$ and we set $\vertices_{\Text}(G') \defas \set{v_1',v_2,v_3}$. Furthermore an index $\EP_e$ is assigned to the new edge.
	
	Then we can compute the forest integral $\GfunForest{G'}$ from $\GfunForest{G}$ following
	\begin{equation}
		\GfunForest{G'}(z)
		=
			\int_0^{z_1}
			\GfunForest{G}\left( z_1 - \SP_e, z_2, z_3 \right)
			\, \SP_e^{\EP_e - 1} 
			\ \dd \SP_e
		.
		\label{eq:Gfun-Forest-vertex}%
	\end{equation}
\end{lemma}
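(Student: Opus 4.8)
The plan is to reduce the recursion to two elementary facts about how the graph polynomials respond to the attachment of the univalent vertex $v_1'$, and then to peel off the $\SP_e$-integration by Fubini (justified by the assumed absolute convergence). Since $v_1'$ is incident only to $e$, every spanning tree of $G'$ must contain $e$, and every spanning two-forest $F'$ of $G'$ either contains $e$ (so that $v_1'$ is attached to $v_1$) or omits it (so that $\set{v_1'}$ is an isolated component). This dichotomy drives everything.

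First I would record the polynomial identities. For spanning trees, $G'/e = G$ and $G'\setminus e$ is disconnected, so the contraction--deletion formula~\eqref{eq:psi-contraction-deletion} gives $\psipol_{G'} = \psipol_G$. For the forest polynomials of $G'$ relative to $\vertices_{\Text}(G') = \set{v_1',v_2,v_3}$, I would split the contributing two-forests $F'$ by whether $e \in F'$. In the partitions separating $v_2$ (resp.\ $v_3$) from the block containing $v_1'$, the leaf $v_1'$ must share its tree with $v_3$ (resp.\ $v_2$), which forces $e \in F'$; removing $e$ is then a weight-preserving bijection onto the two-forests of $G$ counted by $f_2$ (resp.\ $f_3$), since dropping $e \in F'$ does not affect the weight $\prod_{e'\notin F'}\SP_{e'}$. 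Hence $f_2' = f_2$ and $f_3' = f_3$. The partition $\set{v_1'},\set{v_2,v_3}$ is the interesting one: the term with $e \in F'$ reproduces $f_1$ exactly as above, while the term with $e \notin F'$ leaves $\set{v_1'}$ isolated and forces the complementary component to be a spanning tree $T$ of $G$, contributing $\SP_e\prod_{e'\notin T}\SP_{e'}$ summed over all $T$. Therefore $f_1' = f_1 + \SP_e\,\psipol_G$.

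With these in hand the rest is formal. Substituting $\psipol_{G'} = \psipol_G$, $f_2' = f_2$, $f_3' = f_3$ and $f_1'/\psipol_{G'} = f_1/\psipol_G + \SP_e$ into the definition~\eqref{eq:def-GfunForest} of $\GfunForest{G'}$, the first delta factor becomes $\delta\big(f_1/\psipol_G - (z_1 - \SP_e)\big)$, the measure splits as $\SP_e^{\EP_e-1}\,\dd\SP_e$ times $\prod_{e'\in\edges(G)}\SP_{e'}^{\EP_{e'}-1}\,\dd\SP_{e'}$, and the inner integral over the Schwinger parameters of $G$ is by inspection $\GfunForest{G}(z_1-\SP_e,z_2,z_3)$. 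This already gives the claimed integrand, with upper limit $\infty$; to truncate it to $z_1$ I would invoke positivity: $f_1/\psipol_G$ is non-negative on $(0,\infty)^{\edges(G)}$, so the delta has no support once $z_1 - \SP_e < 0$, and the integrand vanishes for $\SP_e > z_1$.

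I expect the only genuine subtlety to be the forest-polynomial bookkeeping of the second paragraph, specifically isolating the extra $\SP_e\,\psipol_G$ contribution to $f_1'$ and keeping the $\prod_{e'\notin F}$ weighting convention of~\eqref{eq:def-forestpolynom} straight when translating between forests of $G'$ and those of $G$. The integral manipulation itself is routine once these identities are in place.
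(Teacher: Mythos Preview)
Your proposal is correct and follows essentially the same route as the paper: establish $\psipol_{G'}=\psipol_G$, $f_2'=f_2$, $f_3'=f_3$, $f_1'=f_1+\SP_e\psipol_G$, then substitute into the defining integral and peel off the $\SP_e$-integration. You supply more combinatorial detail on the forest bijections than the paper (which simply asserts the polynomial identities), and you make the truncation of the $\SP_e$-range to $[0,z_1]$ explicit via positivity, whereas the paper leaves this implicit in the delta constraint; neither difference is substantive.
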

\begin{proof}
	Any spanning forest $F$ of $G'$ that does not contain $e$ has $\set{v_1'} \in \comps(F)$ as one connected component. Therefore only
	$ f_1' = \SP_e \psipol + f_1$
	depends on $\SP_e$ while
	$ \psipol' = \psipol$,
	$ f_2' = f_2$ and
	$ f_3' = f_3$
	are constant. We set $y \defas f / \psipol$ so our claim is the consequence of
	\begin{equation*}
		\GfunForest{G'}(z)
		=
			\int_0^{\infty} \SP_e^{\EP_e - 1} \dd \SP_e
			\int_0^{\infty} 
			\GfunForest{G}(y) \cdot
			\delta(y_1 + \SP_e - z_1)
			\delta(y_2 - z_2)
			\delta(y_3 - z_3)
			\ \dd[3] y
		. \qedhere
	\end{equation*}
\end{proof}
\begin{example}
	\label{ex:GfunForest-append-vertex}%
	We start with the triangle in $\dimension = 4$ dimensions with unit indices, that is
	$ \GfunForest{C_1}(z)
		= 1/\SunrisePsi
	$
	from \eqref{eq:GfunForest-3edge}, and append a vertex at $v_1$ as in figure~\ref{fig:GfunForest-few}:
	\begin{equation}
		\GfunForest{C_1'}(z)
		= \int_0^{z_1} \frac{\dd \SP}{(z_1-\SP)(z_2+z_3) + z_2 z_3}
		= \frac{1}{z_2+ z_3} \ln \frac{\SunrisePsi}{z_2 z_3}
		.
		\label{eq:GfunForest-C1'}%
	\end{equation}
\end{example}
\begin{lemma}
	\label{lemma:3pt-forest-identity}%
	With notations as in definition~\ref{def:GfunForest} and $f_{123} \defas \forestpolynom[G]{\set{1},\set{2},\set{3}}$, we have the identity
	\begin{equation}
		\psipol \cdot f_{123} 
		= f_1 f_2 + f_1 f_3 + f_2 f_3
		= \SunrisePsi(f)
		.
		\label{eq:3pt-forest-identity} %
	\end{equation}
\end{lemma}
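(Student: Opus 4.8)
The plan is to interpret all four polynomials as cofactors of the dual Laplace matrix and to reduce \eqref{eq:3pt-forest-identity} to a single Jacobi (complementary‑minor) determinant relation. Choose $v_0 \defas v_3$ as the vertex deleted in $\tilde{\IM}$, so that $\hat{\LM}$ is indexed by $\vertices\setminus\set{v_3}$, in particular by $v_1$ and $v_2$. By \eqref{eq:Laplace-inverse-forestpolynom} the off‑diagonal entry gives $\psipol\,\hat{\LM}^{-1}_{v_1,v_2} = f_3$ directly (up to the sign $(-1)^{v_1+v_2}$, which will square away below), while each diagonal entry splits according to the component of the remaining external vertex:
\[
	g_1 \defas \psipol\,\hat{\LM}^{-1}_{v_1,v_1}
	= \forestpolynom[G]{\set{v_3},\set{v_1}}
	= f_1 + f_3,
	\qquad
	g_2 \defas \psipol\,\hat{\LM}^{-1}_{v_2,v_2}
	= \forestpolynom[G]{\set{v_3},\set{v_2}}
	= f_2 + f_3 .
\]
Here the first equalities are the $v=w$ case of \eqref{eq:Laplace-inverse-forestpolynom}, and the splitting comes from deciding whether the vertex $v_2$ (resp.\ $v_1$) lies in the component of $v_3$ or the other one.

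First I would form the principal $2\times 2$ submatrix of $\hat{\LM}^{-1}$ on the index set $\set{v_1,v_2}$. Since $\hat{\LM}$ is symmetric, the off‑diagonal entry enters squared and its sign is irrelevant, so
\[
	\psipol^2\,
	\det\begin{pmatrix}
		\hat{\LM}^{-1}_{v_1,v_1} & \hat{\LM}^{-1}_{v_1,v_2} \\
		\hat{\LM}^{-1}_{v_2,v_1} & \hat{\LM}^{-1}_{v_2,v_2}
	\end{pmatrix}
	= g_1 g_2 - f_3^2
	= (f_1+f_3)(f_2+f_3) - f_3^2
	= f_1 f_2 + f_1 f_3 + f_2 f_3
	= \SunrisePsi(f) .
\]

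Next I would evaluate the very same minor through Jacobi's complementary‑minor identity,
\[
	\det\begin{pmatrix}
		\hat{\LM}^{-1}_{v_1,v_1} & \hat{\LM}^{-1}_{v_1,v_2} \\
		\hat{\LM}^{-1}_{v_2,v_1} & \hat{\LM}^{-1}_{v_2,v_2}
	\end{pmatrix}
	= \frac{\det \hat{\LM}\!\left(\set{v_1,v_2},\set{v_1,v_2}\right)}{\det \hat{\LM}},
\]
where the principal case carries no extra sign. The denominator is $\det\hat{\LM} = \psipol/\det\DM$, while the numerator is the determinant of $\hat{\LM}$ with the rows and columns $v_1,v_2$ deleted, i.e.\ the dual Laplace matrix with all of $v_1,v_2,v_3$ removed. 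Expanding it in the Schwinger variables exactly as in the proof of theorem~\ref{theorem:graph-polynomials} (equivalently, invoking the all‑minors matrix‑tree formula \eqref{eq:all-minors-matrix-tree}) identifies it with the triple‑forest polynomial, $\det\hat{\LM}(\set{v_1,v_2},\set{v_1,v_2}) = f_{123}/\det\DM$, summing the spanning forests that separate $v_1,v_2,v_3$. Hence the minor equals $f_{123}/\psipol$, and multiplying by $\psipol^2$ yields $\psipol\,f_{123}$. Comparing with the previous display proves \eqref{eq:3pt-forest-identity}.

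The main obstacle is this middle step: making the identification of the doubly‑reduced dual‑Laplacian minor with $f_{123}$ rigorous while correctly tracking the $\det\DM$ factors and the cofactor signs, and confirming that the principal instance of Jacobi's identity introduces no sign. All of this is routine given \eqref{eq:Laplace-inverse-forestpolynom}, theorem~\ref{theorem:graph-polynomials} and lemma~\ref{lemma:dodgson-jacobi}. A purely combinatorial alternative would instead exhibit a weight‑preserving bijection between pairs (spanning tree, $\set{v_1}\!\mid\!\set{v_2}\!\mid\!\set{v_3}$‑forest) contributing to $\psipol\,f_{123}$ and the disjoint union of the pairs of two‑forests counted by $f_i f_j$, but the determinant route is shorter with the tools already at hand.
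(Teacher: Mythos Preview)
Your proof is correct, and at its core it is the same reduction to the $2\times 2$ Dodgson/Jacobi identity that the paper invokes. The paper omits a detailed argument and points to the parallel proof of lemma~\ref{lemma:forest-identity-ladderbox}, which proceeds by adjoining auxiliary edges to $G$, applying lemma~\ref{lemma:dodgson-jacobi} to the Dodgson polynomials of the extended graph, and then translating those back to spanning forest polynomials via lemma~\ref{lemma:dodgson-as-forests}. Your implementation is more direct: you apply Jacobi's identity straight to the dual Laplace matrix $\hat{\LM}$ and read off the forest polynomials from \eqref{eq:Laplace-inverse-forestpolynom} and the all-minors matrix-tree theorem, with no need for auxiliary edges or the Dodgson-polynomial dictionary. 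This buys a shorter and more self-contained argument; the paper's route, on the other hand, fits into the broader Dodgson framework that is reused for the more intricate four-point identity of lemma~\ref{lemma:forest-identity-ladderbox}.
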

This result is a reformulation of the Dodgson identity \eqref{eq:dodgson-identity} and we omit the proof since it is a parallel (but simpler) argument to our lemma~\ref{lemma:forest-identity-ladderbox} below and refer to \cite{Brown:PeriodsFeynmanIntegrals,BrownYeats:SpanningForestPolynomials}.
\begin{lemma}
	\label{lemma:Gfun-Forest-edge} %
	Let $G$ be a graph with three external vertices $\vertices_{\Text}(G) = \set{v_1, v_2, v_3}$ and add an edge $e = \set{v_2,v_3}$ to construct $G' \defas G \cup \set{e}$. Then
	\begin{equation}
		\GfunForest{G'} (z)
		=
		\SunrisePsi^{\EP_e + \sdd-\dimension}
		\int_{0}^{z_1}
			\frac{
				\GfunForest{G}(z_1 - x, z_2, z_3)
				\,
				x^{\dimension/2 - \EP_e - 1}
			}{
				\SunrisePsi(z_1 - x, z_2, z_3)^{\sdd - \dimension/2}
			}
			\ 
			\dd x
		.
		\label{eq:Gfun-Forest-edge} %
	\end{equation}
\end{lemma}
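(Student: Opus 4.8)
The plan is to track how the first Symanzik polynomial and the three spanning-forest polynomials change when the edge $e=\set{v_2,v_3}$ is added, to insert resolution-of-identity $\delta$-functions so that the old forest function $\GfunForest{G}$ reappears, and then to perform the single additional Schwinger integration over $\SP_e$ by solving the $\delta$-constraints and exploiting the homogeneity \eqref{eq:GfunForest-scaling}.

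First I would determine the transformation of the polynomials. Writing $x\defas\SP_e$ and using contraction--deletion together with the spanning-forest interpretation, a forest of $G'$ either omits $e$ (contributing a factor $x$ and a forest of $G$) or contains $e$ (contributing a forest of the graph $\bar{G}$ in which $v_2,v_3$ are identified). Since $e$ joins $v_2$ to $v_3$, it can only occur in forests placing $v_2,v_3$ in a common component, which gives
\begin{equation*}
	\psipol' = x\psipol + f_2 + f_3,\quad
	f_1' = x f_1 + f_{123},\quad
	f_2' = x f_2,\quad
	f_3' = x f_3,
\end{equation*}
where $f_{123}\defas\forestpolynom[G]{\set{1},\set{2},\set{3}}$. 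The key combinatorial inputs are that a spanning tree of $\bar{G}$ is a two-forest of $G$ separating $v_2$ from $v_3$ (whence $\psipol_{\bar G}=f_2+f_3$) and that a two-forest of $\bar{G}$ separating $v_1$ from the merged vertex is exactly the three-forest $f_{123}$. Lemma~\ref{lemma:3pt-forest-identity} then lets me substitute $f_{123}=\SunrisePsi(f)/\psipol$.

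Next, inserting $1=\int_{\R_+^3}\prod_i\delta(f_i/\psipol-y_i)\,\dd y_i$ into \eqref{eq:def-GfunForest} for $G'$ and integrating out the original parameters $\set{\SP_{e'}\colon e'\in\edges(G)}$ reproduces $\GfunForest{G}(y)$ times $(\psipol')^{-\dimension/2}=\psipol^{-\dimension/2}(x+y_2+y_3)^{-\dimension/2}$, since on this support the new ratios $f_i'/\psipol'$ depend only on $x$ and $y$. This leaves an integral over $x$ and $y$ carrying the three $\delta$-functions $\delta(g_i(x,y)-z_i)$ with $g_2=xy_2/s$, $g_3=xy_3/s$, $g_1=(xy_1+\SunrisePsi(y))/s$ and $s\defas x+y_2+y_3$. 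Solving these constraints yields $s=x^2/D$ with $D\defas x-z_2-z_3$ and
\begin{equation*}
	y=\frac{x}{D}\left(z_1-\frac{\SunrisePsi(z)}{x},\,z_2,\,z_3\right),
\end{equation*}
while a short computation of the $3\times 3$ determinant gives the Jacobian $\det(\partial g/\partial y)=x^3/s^3$.

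Finally, because $y$ is the pure scale $\tfrac{x}{D}$ times $\zeta\defas(z_1-\SunrisePsi(z)/x,z_2,z_3)$, the homogeneity \eqref{eq:GfunForest-scaling} extracts a factor $(x/D)^{\sdd-3}$ and leaves $\GfunForest{G}(\zeta)$. Substituting $x=\SunrisePsi(z)/t$ turns $\zeta$ into $(z_1-t,z_2,z_3)$, identifies $D=\SunrisePsi(z_1-t,z_2,z_3)/t$ through $\SunrisePsi(\zeta)=\SunrisePsi(z)-t(z_2+z_3)$, and maps the admissible range $x\geq\SunrisePsi(z)/z_1$ onto $t\in(0,z_1]$. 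Collecting the powers of $x$, $s$, $D$, $t$, $\SunrisePsi(z)$ and $\SunrisePsi(\zeta)$ then reproduces \eqref{eq:Gfun-Forest-edge} with $\sdd=\sdd(G)$ (the superficial degree of divergence of $G$, not of $G'$). I expect the main obstacle to be the disciplined bookkeeping of these exponents, together with the realization that $y$ must be written as a scale times $(z_1-t,z_2,z_3)$ so that homogeneity applies and the second and third arguments emerge unchanged; verifying $\det(\partial g/\partial y)=x^3/s^3$ and the transformation of the integration range are the remaining technical points.
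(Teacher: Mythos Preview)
Your proposal is correct and follows essentially the same route as the paper: you derive the same transformation laws for $\psipol'$ and the $f_i'$, invoke lemma~\ref{lemma:3pt-forest-identity} to replace $f_{123}$, solve the three $\delta$-constraints, apply homogeneity, and finish with the substitution $\SP_e=\SunrisePsi(z)/t$. Your presentation is in fact slightly more explicit than the paper's about the Jacobian $\det(\partial g/\partial y)=x^3/s^3$ and about the fact that $\sdd$ in the formula means $\sdd(G)$.
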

\begin{proof}
	Looking at the spanning forests of $G'$ and whether they contain $e$ or not, we obtain that
	$\psipol' = \SP_e \psipol + f_2 + f_3$,
	$f_1' = \SP_e f_1 + f_{123}$,
	$f_2' = \SP_e f_2$ and
	$f_3' = \SP_e f_3$
	such that
	\begin{align*}
		\GfunForest{G'}(z)
		&=
			\int_0^{\infty} \SP_e^{\EP_e - 1} \dd \SP_e
			\int_0^{\infty} \dd[3] y\ 
			\GfunForest{G}(y)
			\cdot
			(\SP_e + y_2 + y_3)^{-\dimension/2}
		\\
		& \quad \times
			\delta\left( \frac{\SP_e y_1 + \SunrisePsi(y)}{\SP_e + y_2 + y_3} - z_1 \right)
			\delta\left( \frac{\SP_e y_2}{\SP_e + y_2 + y_3} - z_2 \right)
			\delta\left( \frac{\SP_e y_3}{\SP_e + y_2 + y_3} - z_3 \right)
		.
	\end{align*}
	Note that we used \eqref{eq:3pt-forest-identity} to rewrite $f_{123}$. The solution of the $\delta$-constraints is given by
	\begin{equation*}
		y_1 = z_1 - \frac{z_2 z_3}{\SP_e - z_2 - z_3}
		,\quad
		y_2 = \frac{\SP_e z_2}{\SP_e - z_2 - z_3}
		\quad\text{and}\quad
		y_3 = \frac{\SP_e z_3}{\SP_e - z_2 - z_3}
		.
	\end{equation*}
	So $\SP_e + y_2 + y_3 = \frac{\SP_e^2}{\SP_e - z_2 - z_3}$ and with the measure
	$\SP_e^3\ \dd[3] z = \left( \SP_e - z_2 - z_3 \right)^3 \dd[3] y$
	we find
	\begin{equation*}
		\GfunForest{G'}(z)
		=
		\int_{\SunrisePsi/z_1}^{\infty}
		\GfunForest{G}\left(
			\frac{\SP_e z_1 - \SunrisePsi(z)}{\SP_e - z_2 - z_3},
			\frac{\SP_e z_2}{\SP_e - z_2 - z_3},
			\frac{\SP_e z_3}{\SP_e - z_2 - z_3}
		\right)
		\frac{(\SP_e - z_2 - z_3)^{\dimension/2-3}}{\SP_e^{\dimension-\EP_e - 2}}
		\ \dd \SP_e
		.
	\end{equation*}
	Finally we exploit the homogeneity \eqref{eq:GfunForest-scaling} to pull out $\SP_e/(\SP_e - z_2 - z_3)$ from the arguments of $\GfunForest{G}$ and substitute $\SP_e = \SunrisePsi/x$.
\end{proof}
\begin{example}
	\label{ex:GfunForest-add-ege}%
	We add edges opposite to $v_2$ and $v_3$ to $C_1'$ as shown in figure~\ref{fig:GfunForest-few}. In $\dimension = 4$ dimensions and with unit indices $\EP_e = 1$ we find, starting from \eqref{eq:GfunForest-C1'}, that
	\begin{align}
		\GfunForest{\WS{3}^{-}}(z)
		&= \frac{1}{\SunrisePsi} \int_0^{z_2}
				\frac{1}{z_2-x+z_3}
				\ln \frac{(z_2-x)(z_1+z_3) + z_1 z_3}{(z_2-x) z_3}
			\ \dd x
		\nonumber\\
		&= -\frac{1}{\SunrisePsi} \left[ 
					\Li_2\left(1-\frac{\SunrisePsi}{z_1z_3} \right) 
					+\Li_2\left(1-\frac{\SunrisePsi}{z_2z_3} \right) 
					+\frac{1}{2} \ln^2\frac{z_1}{z_2}
					+\mzv{2}
				\right]
		\quad\text{and}
		\label{eq:GfunForest-WS3-}%
		\\
		\GfunForest{\WS{3}}(z)
		&= \frac{1}{\SunrisePsi^2} \int_0^{z_3} \left[ \SunrisePsi \cdot \GfunForest{\WS{3}^{-}} \right]_{z_3=z_3-x} \ \dd x
		\nonumber\\
		&= -\frac{1}{\SunrisePsi^2} 
				\sum_{i<j} \left[
					(z_i+z_j)	\Li_2\left( 1-\frac{\SunrisePsi}{z_i z_j} \right)
					+ \frac{z_k}{2} \ln^2 \frac{z_i}{z_j}
					+ z_k \mzv{2}
				\right]_{\mathrlap{\set{i,j,k} = \set{1,2,3}}}.
		\label{eq:GfunForest-WS3}%
	\end{align}
\end{example}

\subsection{Stars and triangles}
\label{sec:stars-triangles}%
We like to briefly comment on another approach to iteratively construct Feynman integrals parametrically. Even though it is very closely related to the forest integrals above, this different viewpoint is conceptually simpler and may be helpful in future, more general applications.

The idea is not to refer to the spanning forest polynomials $f$ explicitly, but rather consider partial integrals with three free (not integrated) Schwinger variables.
\begin{figure}
	\centering
	$G_{\StarSymbol} \defas \Graph[0.7]{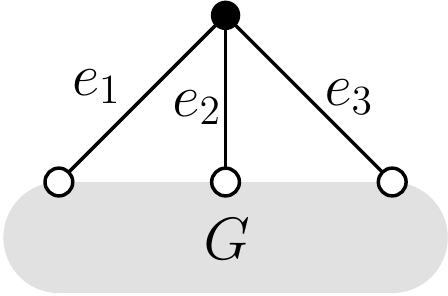} \qquad
	 G_{\TriangleSymbol} \defas \Graph[0.7]{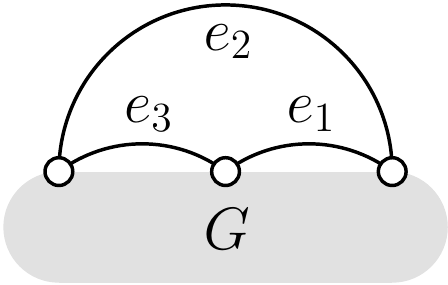}$%
	\hfill
	$\Graph[0.7]{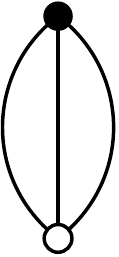} \qquad \Graph[0.6]{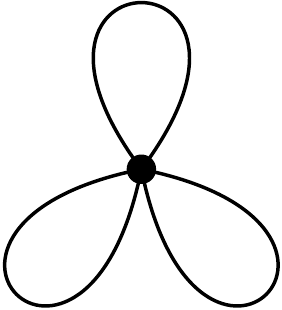}$%
	\caption[Definition of the star- and triangle graphs]{Definition of the star- and triangle graphs (the external vertices $v_1$, $v_2$ and $v_3$ of $G$ are the white circles, from left to right) and examples for the case when $G$ consists of just one vertex $v_1$ = $v_2$ = $v_3$.}%
	\label{fig:startriangle-added}%
\end{figure}%
\begin{definition}
	\label{def:star-triangle-functions}%
	Let $G$ be a graph with three marked vertices $\set{v_1,v_2, v_3} = \vertices_{\Text}(G)$ (not necessarily distinct). By $G_{\StarSymbol}$ and $G_{\TriangleSymbol}$ we denote the graphs obtained after adding a star or a triangle to $G$, as shown in figure~\ref{fig:startriangle-added}. In particular these have three additional edges $e_i$.
	The \emph{star-} and \emph{triangle functions} 
	$\GfunStar{G}, \GfunTriangle{G}\colon \R_+^3 \longrightarrow \R_+$
	are the partial integrals
	\begin{align}
		\GfunStar{G}(z)
		\defas
		\int_{0}^{\infty} \hspace{-1mm}
			\psipol_{G_{\StarSymbol}}^{-\dimension/2}
			\hspace{-1mm}\prod_{e \in \edges(G)}\hspace{-1mm} \SP_e^{\EP_e -1} \dd \SP_e
		\quad\text{and}\quad
		\GfunTriangle{G}(z)
		\defas
		\int_{0}^{\infty}\hspace{-1mm}
			\psipol_{G_{\TriangleSymbol}}^{-\dimension/2}
			\hspace{-1mm}\prod_{e \in \edges(G)}\hspace{-1mm} \SP_e^{\EP_e -1} \dd \SP_e
		\label{eq:def-star-triangle-function} %
		,
	\end{align}
	given that these converge.
	They depend on the three free Schwinger parameters $z_i = \SP_{e_i}$ associated to the additional edges, the dimension $\dimension$ and the indices $\EP_e$.
\end{definition}%
\nomenclature[fG,fG]{$\GfunStar{G},\GfunTriangle{G}$}{star and triangle function of $G$, equation~\eqref{eq:def-star-triangle-function}\nomrefpage}%
These functions are analytic and their homogeneity is determined by power counting
$\deg(\psipol_G) = \loops{G} = \loops{G_{\StarSymbol}} -2 = \loops{G_{\TriangleSymbol}} - 3$.
In terms of the superficial degree of divergence $\sdd$ of $G$ from \eqref{eq:def-sdd} we conclude that
\begin{equation}
	\GfunStar{G}(\lambda z)
	= \lambda^{\sdd - \dimension} \cdot  \GfunStar{G}(z)
	\quad\text{and}\quad
	\GfunTriangle{G}(\lambda z)
	= \lambda^{\sdd - 3\dimension/2} \cdot \GfunTriangle{G}(z)
	.
	\label{eq:Gfun-StarTriangle-scaling} %
\end{equation}
\begin{example}
	\label{ex:StarTriangle-3edge}%
	If $G$ is just the single vertex, no integration has to be performed. The graph polynomials are $\SunrisePsi$ for $G_{\StarSymbol}$ and $z_1 z_2 z_3$ for $G_{\TriangleSymbol}$ (see figure~\ref{fig:startriangle-added}, so
	\begin{equation}
		\GfunStar{G} = \SunrisePsi^{-\dimension/2}
		\quad\text{and}\quad
		\GfunTriangle{G} = (z_1 z_2 z_3)^{-\dimension/2}.
		\label{eq:Gfun-StarTriangle-K1}%
	\end{equation}
\end{example}
Stars and triangles are related by a well-known change of variables.
\begin{lemma}[Star-Triangle duality]
	\label{lemma:star-triangle-duality}%
	Let $G$ be as in definition~\ref{def:star-triangle-functions}.
	Then we have
	\begin{align}
		\psipol_{G_{\StarSymbol}}
		&=
		(z_1 z_2 + z_2 z_3 + z_3 z_1)\psipol_{G}
		+ \sum_{i \neq j} z_i f_j
		+ f_{123}
		\quad\text{and}
		\label{eq:psipol-star} %
		\\
		\psipol_{G_{\TriangleSymbol}}
		&=
		z_1 z_2 z_3\psipol_{G}
		+ \sum_{i \neq j} f_i z_i z_j
		+ (z_1 + z_2 + z_3) f_{123}
		,
		\label{eq:psipol-triangle} %
	\end{align}
	where the sums run over $i,j \in\set{1,2,3}$ and $f_{\cdot}$ denotes the forest polynomials of $G$.
	It follows that under the change of variables 
	\begin{equation}
		z_i
		= \frac{1}{y_i} \frac{y_1 y_2 y_3}{y_1 + y_2 + y_3}
		\qquad\text{with inverse}\qquad
		y_i 
		= \frac{z_1 z_2 + z_1 z_3 + z_2 z_3}{z_i},
		\label{eq:star-triangle-variable-change} %
	\end{equation}
\hide{
	which yields the transformations
	\begin{equation}
		x_1 x_2 + x_1 x_3 + x_2 x_3
		= \frac{y_1 y_2 y_3}{y_1 + y_2 + y_3}
		\quad
		x_1 x_2 x_3
		= \frac{(y_1 y_2 y_3)^2}{(y_1 + y_2 + y_3)^3}
		\quad
		y_1 y_2 y_3
		= \frac{(x_1 x_2 + x_1 x_3 + x_2 x_3)^3}{x_1 x_2 x_3}
		\quad
		y_1 + y_2 + y_3
		= \frac{(x_1 x_2 + x_1 x_3 + x_2 x_3)^2}{x_1 x_2 x_3}
		,
		\label{eq:star-triangle-polynomial-transformations} %
	\end{equation}
}%
	we obtain the identities
	\begin{equation}
		\psipol_{G_{\StarSymbol}}
		= \frac{1}{y_1 + y_2 + y_3}
			\restrict{\psipol_{G_{\TriangleSymbol}}}{z_i \mapsto y_i}
		\quad\text{and}\quad
		\GfunStar{G}(z)
		=
		(y_1 + y_2 + y_3)^{\dimension/2}
		\GfunTriangle{G}(y)
		.
		\label{eq:star-triangle-psipol-transformation} %
	\end{equation}
\end{lemma}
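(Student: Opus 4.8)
The plan is to first prove the two polynomial identities \eqref{eq:psipol-star} and \eqref{eq:psipol-triangle} by expanding the first Symanzik polynomial over spanning trees as in theorem~\ref{theorem:graph-polynomials}, and then to read off the duality \eqref{eq:star-triangle-psipol-transformation} as a purely algebraic consequence of the change of variables \eqref{eq:star-triangle-variable-change}. The identity for the star- and triangle \emph{functions} will follow because the relevant factor is independent of the Schwinger parameters $\SP_e$ that are integrated over.

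For \eqref{eq:psipol-star} I would write $\psipol_{G_{\StarSymbol}} = \sum_T \prod_{e \notin T} \SP_e$ and classify the spanning trees $T$ of $G_{\StarSymbol}$ according to how many of the three star edges $e_1, e_2, e_3$ (joining the new central vertex $w$ to $v_1, v_2, v_3$) they contain. If all three lie in $T$, then $T \cap \edges(G)$ must be a spanning three-forest separating $v_1, v_2, v_3$ (otherwise a cycle closes through $w$), and the omitted $G$-edges sum to $f_{123} = \forestpolynom[G]{\set{v_1},\set{v_2},\set{v_3}}$. If exactly one star edge $e_j$ is missing, the remaining two force $T \cap \edges(G)$ to be a spanning two-forest separating the two endpoints attached to $w$, which contributes $z_j \cdot \forestpolynom[G]{\set{v_i},\set{v_k}}$; recording where the third marked vertex lands gives $\forestpolynom[G]{\set{v_i},\set{v_k}} = f_i + f_k$, and summing over $j$ yields $\sum_{i \neq j} z_i f_j$. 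If only one $e_i$ lies in $T$, then $T \cap \edges(G)$ is a spanning tree of $G$ and the two missing star edges contribute $z_j z_k \psipol_G$, summing to $\SunrisePsi(z)\,\psipol_G$; if no star edge is present $w$ is isolated and there is no contribution. Adding the four cases gives exactly \eqref{eq:psipol-star}.

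The proof of \eqref{eq:psipol-triangle} is entirely parallel, now classifying by the number of triangle edges in $T$, where the edge carrying $z_i$ joins $v_j$ and $v_k$; the case of all three edges is excluded since they form a cycle, and the three surviving cases reproduce $(z_1+z_2+z_3) f_{123}$, $\sum_{i \neq j} f_i z_i z_j$ and $z_1 z_2 z_3 \psipol_G$. For the duality I set $\Sigma \defas \SunrisePsi(z)$ and $y_i = \Sigma/z_i$ as in \eqref{eq:star-triangle-variable-change}, note $y_1 + y_2 + y_3 = \Sigma^2/(z_1 z_2 z_3)$, and substitute into \eqref{eq:psipol-triangle}. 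Checking term by term, $y_1 y_2 y_3 \psipol_G$ divided by $y_1+y_2+y_3$ gives $\Sigma \psipol_G$, the factor $(y_1+y_2+y_3) f_{123}$ gives $f_{123}$, and in the middle term $y_i y_j = \Sigma^2 z_k/(z_1 z_2 z_3)$ turns $\sum_{i \neq j} f_i y_i y_j$ into $\tfrac{\Sigma^2}{z_1 z_2 z_3}\sum_{i\neq j} z_i f_j$; dividing by $y_1+y_2+y_3$ reproduces the three summands of \eqref{eq:psipol-star}, which is the first identity in \eqref{eq:star-triangle-psipol-transformation}. Since this polynomial identity holds pointwise in the $\SP_e$, one has $\psipol_{G_{\StarSymbol}}^{-\dimension/2} = (y_1+y_2+y_3)^{\dimension/2}\,\psipol_{G_{\TriangleSymbol}}^{-\dimension/2}$ inside the integrals \eqref{eq:def-star-triangle-function}, and as $y_1+y_2+y_3$ does not depend on the integration variables it factors out, giving the second identity $\GfunStar{G}(z) = (y_1+y_2+y_3)^{\dimension/2}\GfunTriangle{G}(y)$.

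The only genuine work is the combinatorial bookkeeping in the spanning-tree classification --- in particular verifying the reduction $\forestpolynom[G]{\set{v_i},\set{v_j}} = f_i + f_j$ and keeping track of which partial forest polynomial arises in each case --- together with confirming that the degenerate configurations allowed by definition~\ref{def:star-triangle-functions} (coincident marked vertices, which turn triangle edges into self-loops) are covered by the self-loop convention recorded in remark~\ref{remark:symanzik-properties}. The algebraic substitution establishing the duality is then routine once the two polynomial identities are in hand.
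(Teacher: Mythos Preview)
Your proposal is correct and follows essentially the same approach as the paper: the spanning-tree classification by the number of star (resp.\ triangle) edges, together with the refinement $\forestpolynom[G]{\set{v_i},\set{v_k}} = f_i + f_k$, is exactly what the paper does. You even spell out the triangle case and the term-by-term verification of the substitution \eqref{eq:star-triangle-variable-change}, which the paper leaves as a remark, so your write-up is slightly more detailed than the original.
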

\begin{proof}
	This result amounts to a simple classification of spanning trees $T$ of $G_{\StarSymbol}$:
	Since $T$ must contain at least one of the edges $e_i$ in order to connect $v$ (the centre of the star) to $\vertices(G)$, the set $S \defas T \cap \set{e_1,e_2,e_3}$ is non-empty. Now distinguish
	\begin{itemize}
		\item[$\abs{S} = 3$:]
			$T \setminus S$ is a three-forest of $G$ with each $v_i$ in a separate component. All trees with $\abs{S} = 3$ therefore add up to $f_{123} = \forestpolynom[G]{\set{v_1},\set{v_2},\set{v_3}}$.

		\item[$\abs{S} = 2$:]
			Say $e_i \notin S = \set{e_j, e_k}$, then $T \setminus S$ is a two-forest with $v_j$ and $v_k$ in different components. All these add up to
			$
				z_i \forestpolynom[G]{\set{j}, \set{k}}
				= z_i (f_j + f_k)
			$.

		\item[$ \abs{S} = 1$:]
			Here $T \setminus S$ is a spanning tree of $G$ itself. If we fix $S = \set{e_i}$, then such trees contribute $\psipol_G \prod_{j \neq i} z_j$.
	\end{itemize}
	We omit the analogous argument for $G_{\TriangleSymbol}$, see also \cite[examples 32 and 33]{Brown:PeriodsFeynmanIntegrals}.
\hide{
	Adding up all these contributions gives \eqref{eq:psipol-star}. The same way of reasoning can be applied to the spanning forests $T$ of $G_{\TriangleSymbol}$:
	\begin{itemize}
		\item[$S = \emptyset$:]
			$T$ is a spanning tree of $G$ itself. These add up to $z_1 z_2 z_3 \psipol_G$.

		\item[$S = \set{e_i}$:]
			$T \setminus S$ is a two-forest of $G$ with the two vertices $e_i = \set{v_j, v_k}$ in different components. Their contributions sum to $z_j z_k \forestpolynom[G]{\set{j},\set{k}} = z_j z_k (f_j + f_k)$.

		\item[$\abs{S}=2$:]
			$T \setminus S$ is a three-forest of $G$ with each $v_i$ in a different component.
	\end{itemize}
}%
\end{proof}
Using definition~\ref{def:GfunForest} we can express the star- and triangle functions as integrals of the forest function: From \eqref{eq:psipol-star} and \eqref{eq:psipol-triangle} we read off
\begin{corollary}
	Let $G$ be as in definition~\ref{def:star-triangle-functions}, then we have the identities
	\begin{align}
		\GfunStar{G}(z)
		&=
			\int_0^{\infty}
				\GfunForest{G}(x)
				\cdot
				\Bigg[ 
					\sum_{i < j}
					(x_i+z_i)(x_j+z_j)
				\Bigg]^{-\dimension/2}
			\!\!\dd[3] x
		,
		\label{eq:star-function-from-forest} %
		\\
		\GfunTriangle{G}(z)
		&=
				\int_0^{\infty}
			\GfunForest{G}(x)
			\cdot
			\Bigg[ 
				z_1 z_2 z_3
				+ \sum_{i \neq j} z_i z_j x_j
				+ (z_1 + z_2 + z_3) \sum_{i<j} x_i x_j
			\Bigg]^{-\dimension/2}
				\!\!\dd[3] x
		.
		\label{eq:triangle-function-from-forest} %
	\end{align}
\end{corollary}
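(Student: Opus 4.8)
The plan is to feed the forest function into the definitions \eqref{eq:def-star-triangle-function} by inserting, under the integral sign, the resolution of the identity
\[
	1 = \int_0^{\infty} \prod_{i=1}^3 \delta\!\left( x_i - \frac{f_i}{\psipol} \right) \dd[3]x,
\]
where $\psipol = \psipol_G$ and $f_i$ are the spanning forest polynomials of $G$ from \eqref{eq:3pt-forest-shorthand}. Since all integrals in sight are assumed to converge absolutely, Fubini's theorem lets me exchange the order of integration and pull the $x$-integration to the outside. What remains inside is then an integration over the Schwinger parameters $\SP_e$ (for $e \in \edges(G)$ only) of $\psipol_{G_{\StarSymbol}}^{-\dimension/2}$ (respectively $\psipol_{G_{\TriangleSymbol}}^{-\dimension/2}$) against $\prod_{e \in \edges(G)} \SP_e^{\EP_e - 1}$ and the three delta functions that fix $f_i/\psipol = x_i$.

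The crucial step is then to rewrite $\psipol_{G_{\StarSymbol}}$ with the star--triangle duality \eqref{eq:psipol-star}. On the support of the delta functions I may substitute $f_i = x_i \psipol$, and lemma~\ref{lemma:3pt-forest-identity} turns $f_{123} = \SunrisePsi(f)/\psipol$ into $\SunrisePsi(x)\,\psipol$. Collecting the common factor $\psipol$, the duality formula collapses to $\psipol_{G_{\StarSymbol}} = \psipol \cdot \big[ \sum_{i<j} z_i z_j + \sum_{i \neq j} z_i x_j + \sum_{i<j} x_i x_j \big]$, and a one-line expansion identifies the bracket with $\sum_{i<j}(x_i + z_i)(x_j + z_j)$. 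Hence $\psipol_{G_{\StarSymbol}}^{-\dimension/2}$ factorizes as $\psipol^{-\dimension/2}$ times this bracket raised to $-\dimension/2$; the first factor recombines with the remaining $\SP$-integral, which is by definition exactly $\GfunForest{G}(x)$, and \eqref{eq:star-function-from-forest} follows. The triangle case is entirely parallel, starting instead from \eqref{eq:psipol-triangle}: the same substitutions yield $\psipol_{G_{\TriangleSymbol}} = \psipol \cdot \big[ z_1 z_2 z_3 + \sum_{i \neq j} z_i z_j x_j + (z_1 + z_2 + z_3)\sum_{i<j} x_i x_j \big]$, and factoring out $\psipol^{-\dimension/2}$ again exposes $\GfunForest{G}(x)$, giving \eqref{eq:triangle-function-from-forest}.

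I do not expect a genuine obstacle here, as everything reduces to elementary bookkeeping once the two lemmas are available. The only points requiring care are the legitimacy of inserting the delta functions and applying Fubini (which rest on the standing absolute-convergence assumption), and the observation that the inner integral still carries $\psipol = \psipol_G$ rather than $\psipol_{G_{\StarSymbol}}$, so that it matches the forest function \eqref{eq:def-GfunForest} verbatim. The one genuinely non-trivial ingredient is the replacement $f_{123} = \SunrisePsi(x)\,\psipol$ via lemma~\ref{lemma:3pt-forest-identity}, which is precisely what makes the star bracket factor into the clean product $\sum_{i<j}(x_i+z_i)(x_j+z_j)$.
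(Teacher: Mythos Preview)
Your proposal is correct and follows exactly the approach the paper intends: the corollary is stated immediately after \eqref{eq:psipol-star} and \eqref{eq:psipol-triangle} with the remark that one simply ``reads off'' the identities by combining those formulas with definition~\ref{def:GfunForest} and lemma~\ref{lemma:3pt-forest-identity}. Your write-up spells out precisely these substitutions and the factorization of $\psipol$, so there is nothing to add.
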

It is obvious that we can write down recursion formulas for the partial integrals in the same way as we did for the forest function. The calculations are very similar and straightforward, so we omit the details of the proof and only state the results of our
\begin{lemma}
	\label{lemma:Gfun-star-triangle-recursion} %
	Let $G'$ be obtained from $G$ by adding an edge $e = \set{v_2, v_3}$. Then
	\begin{align}
		\GfunStar{G'}(z)
		&= \SunrisePsi^{\EP_e + \sdd - \dimension}
			\int_0^{\infty} 
			\frac{
				\GfunStar{G}(x+z_1, z_2 ,z_3)
				\,
				x^{\dimension/2 - \EP_e -1}
			}{
				\left[ \SunrisePsi(x + z_1, z_2, z_3) \right]^{\sdd-\dimension/2}
			}
			\ \dd x
		\quad\text{and}
		\label{eq:GfunStar-addedge}%
		\\
		\GfunTriangle{G'}(z)
		&= \int_0^{\infty} 
			\GfunTriangle{G}\left( \frac{x z_1}{x + z_1}, z_2, z_3 \right)
			\frac{x^{\EP_e -1}\ \dd x}{	(z_1 + x)^{\dimension/2} }
		.
		\label{eq:GfunTriangle-addedge}%
	\end{align}
	When $G'$ denotes $G$ after appending a new vertex $v_1'$ with an edge $e = \set{v_1^{}, v_1'}$, then
	\begin{align}
		\GfunStar{G'}(z)
		&= \int_{0}^{\infty}
				\GfunStar{G}(x+z_1, z_2, z_3)
				\,
				x^{\EP_e - 1}
				\ \dd x
		\quad\text{and}
		\label{eq:GfunStar-appendvertex}%
		\\
		\GfunTriangle{G'}(z)
		&= \left( \frac{z_2 z_3}{z_1+z_2+z_3} \right)^{\EP_e}
			\!\!
			\int_0^{\infty}
			\!
			\GfunTriangle{G}\left(\frac{z_1}{1+x}, z_2, z_3  \right)
			\left[ 1+\frac{x(z_2+z_3)}{z_1+z_2+z_3} \right]^{\sdd - \dimension/2}
			\frac{x^{\EP_e - 1} \dd x}{(1+x)^{\dimension/2}}
		.
		\label{eq:GfunTriangle-appendvertex}%
	\end{align}
\end{lemma}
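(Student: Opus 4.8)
The plan is to prove all four identities by the same template, which is the one already used for the forest functions in Lemmas~\ref{lemma:Gfun-Forest-vertex} and \ref{lemma:Gfun-Forest-edge}. In each case $G'$ differs from $G$ by exactly one new edge, carrying a single fresh Schwinger parameter $t \defas \SP_e$ with index $\EP_e$. I would start from the definition \eqref{eq:def-star-triangle-function} of $\GfunStar{G'}$ (resp. $\GfunTriangle{G'}$), isolate the integration over $t$ as the outermost integral, and analyse how $\psipol_{G'_{\StarSymbol}}$ (resp. $\psipol_{G'_{\TriangleSymbol}}$) depends on $t$ relative to the polynomial of the smaller graph. The crucial observation is that the new edge, together with the edges of the added star or triangle, always creates a \emph{local} series- or parallel-configuration at the relevant external vertex, so that $\psipol_{G'_{\bullet}}$ factorises through $\psipol_{G_{\bullet}}$ evaluated at modified arguments. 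The required factorisations are consequences of the contraction-deletion formula \eqref{eq:psi-contraction-deletion} and the explicit expansions \eqref{eq:psipol-star}, \eqref{eq:psipol-triangle}.

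\textbf{The two elementary cases.} Two of the formulas follow immediately from series/parallel reductions of the first Symanzik polynomial. For \eqref{eq:GfunStar-appendvertex}, appending $v_1'$ to $v_1$ and then attaching the star makes $v_1'$ a two-valent vertex whose two incident edges (parameters $t$ and $z_1$) are in series; series edges add, with no Jacobian, so $\psipol_{G'_{\StarSymbol}} = \restrict{\psipol_{G_{\StarSymbol}}}{z_1 \mapsto z_1 + t}$ and the $t$-integral directly yields \eqref{eq:GfunStar-appendvertex}. For \eqref{eq:GfunTriangle-addedge}, adding $e=\set{v_2,v_3}$ makes $e$ parallel to the triangle edge on $\set{v_2,v_3}$ (parameter $z_1$); the parallel rule gives $\psipol_{G'_{\TriangleSymbol}} = (z_1+t)\cdot\restrict{\psipol_{G_{\TriangleSymbol}}}{z_1 \mapsto z_1 t/(z_1+t)}$, and raising to $-\dimension/2$ reproduces exactly the factor $(z_1+t)^{-\dimension/2}$ and the argument $z_1 t/(z_1+t)$ of \eqref{eq:GfunTriangle-addedge}.

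\textbf{The two harder cases.} The remaining two identities are the genuine analogues of Lemma~\ref{lemma:Gfun-Forest-edge}. For \eqref{eq:GfunStar-addedge}, the edge $e=\set{v_2,v_3}$ closes a triangle at the three-valent star centre, which admits no single series/parallel move; here I would repeat the computation of Lemma~\ref{lemma:Gfun-Forest-edge} verbatim, using \eqref{eq:psipol-star} together with the edge-transformation rules for $\psipol,f_1,f_2,f_3,f_{123}$ and the forest identity \eqref{eq:3pt-forest-identity} to eliminate $f_{123}$, then solving the resulting constraint and substituting $\SP_e = \SunrisePsi/x$ as before; the factor $\SunrisePsi(z)^{\EP_e+\sdd-\dimension}$ and the exponents in the integrand are produced by the homogeneity \eqref{eq:Gfun-StarTriangle-scaling}. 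For \eqref{eq:GfunTriangle-appendvertex}, the new three-valent vertex $v_1'$ is itself a star on $\set{v_1,v_2,v_3}$, to which I would apply the star-triangle duality of Lemma~\ref{lemma:star-triangle-duality}: dualising this local star turns $G'_{\TriangleSymbol}$ into a triangle of $G$ (with the extra $v_2 v_3$ edge now parallel to the dual triangle edge), at the cost of the duality factor, and the $t$-integral then collapses to \eqref{eq:GfunTriangle-appendvertex}. As a consistency check, the star formulas can also be obtained from the triangle formulas (and vice versa) by applying \eqref{eq:star-triangle-psipol-transformation} to both $G$ and $G'$.

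\textbf{Main obstacle.} The routine part is the polynomial factorisation; the delicate part is exponent bookkeeping, and I expect \eqref{eq:GfunTriangle-appendvertex} to be the hardest. There one must correctly track the shift of the superficial degree of divergence $\sdd$ and the loop number under adding a vertex versus an edge, carry the homogeneity weights \eqref{eq:Gfun-StarTriangle-scaling} through the nonlinear change of variables \eqref{eq:star-triangle-variable-change}, and verify that the duality factor combines with the $t$-integral to give precisely the prefactor $\left(z_2 z_3/(z_1+z_2+z_3)\right)^{\EP_e}$ together with the exponent $\sdd-\dimension/2$ on the bracket $\bigl[1+x(z_2+z_3)/(z_1+z_2+z_3)\bigr]$. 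Once these weights are pinned down, matching each formula is a direct substitution, which is why the detailed calculation may reasonably be omitted.
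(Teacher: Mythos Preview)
Your proposal is correct and matches the paper's intent: the paper itself omits the proof entirely, simply stating that ``the calculations are very similar and straightforward'' to those for the forest functions (Lemmas~\ref{lemma:Gfun-Forest-vertex} and \ref{lemma:Gfun-Forest-edge}), which is precisely the template you describe.

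Your series/parallel observations for \eqref{eq:GfunStar-appendvertex} and \eqref{eq:GfunTriangle-addedge} are a genuine simplification over a uniform forest-function-style argument, and your use of star--triangle duality to derive the two harder formulas from the two easy ones (or as a consistency check) is also sound and arguably cleaner than a direct attack. One small caveat on wording: for \eqref{eq:GfunStar-addedge} the phrase ``solving the resulting constraint'' is slightly off, since $\GfunStar{G}$ carries no $\delta$-constraints. What actually replaces that step is a direct factorisation
\[
\psipol_{G'_{\StarSymbol}}(z,t,\SP_G) = (t+z_2+z_3)\cdot \psipol_{G_{\StarSymbol}}\!\left(\tfrac{tz_1+\SunrisePsi(z)}{t+z_2+z_3},\tfrac{tz_2}{t+z_2+z_3},\tfrac{tz_3}{t+z_2+z_3},\SP_G\right),
\]
which follows from \eqref{eq:psipol-star} and the edge-rules you cite (the forest identity \eqref{eq:3pt-forest-identity} is in fact not needed here); then homogeneity \eqref{eq:Gfun-StarTriangle-scaling} rescales the last two arguments back to $z_2,z_3$ and the substitution $t=\SunrisePsi(z)/x$ gives \eqref{eq:GfunStar-addedge} exactly. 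This is the analogue of ``solving the constraint'' in the present setting, and with that adjustment your outline goes through without obstacle.
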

\subsection{Applications and kinematics}\label{sec:vw3-applications}
In the article \cite{DavydychevUssyukina:AnApproachLadderDiagrams}, Ussyukina and Davydychev developed a recursive approach to compute massless off-shell three-point functions with ladder topology as shown in figure~\ref{fig:three-point-ladders}. They evaluated the finite integrals
\begin{equation}
	\FR\left( C_L \right)
	=
		\left( \frac{1}{p_3^2} \right)^L
		\Phi^{(L)}\left( 
			\frac{p_1^2}{p_3^2},
			\frac{p_2^2}{p_3^2}
		\right)
	\quad\text{for $\dimension = 4$ and $\EP_e = 1$}
	\label{eq:ladder3pt-d4} %
\end{equation}
to arbitrary loop-order $L$ in terms of the polylogarithms \cite{DavydychevUssyukina:LadderDiagramsArbitraryRungs}
\begin{equation}
	\Phi^{(L)}(x,y)
	\defas
	\frac{1}{z-\bar{z}}
		\sum_{k=L}^{2L}
		\binom{k}{L}
		\frac{
			\ln^{2L - k} (x/y)
		}{
			(2L - k)!
		}
		\left[ 
			\Li_{k}\left( 1-\frac{1}{z} \right)
			-
			\Li_{k}\left( 1-\frac{1}{\bar{z}} \right)
		\right]
	\label{eq:Ussyukina-polylogarithm} %
\end{equation}
of the variables $z$, $\bar{z}$ that parametrize $ x = z \bar{z}$ and $y = (1-z)(1-\bar{z})$.
Knowing this simple result, it is natural to ask if it can be obtained by parametric integration. More importantly, the formula \eqref{eq:ladder3pt-d4} is only valid in $\dimension=4$ dimensions and for $p_1^2, p_2^2, p_3^2 > 0$. 

\begin{figure}
	\centering
	$
		C_1 = \Graph[0.4]{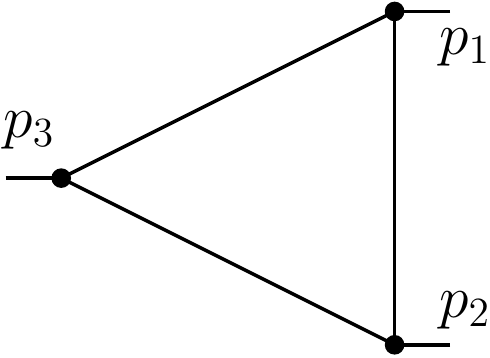} \quad
		C_2 = \Graph[0.4]{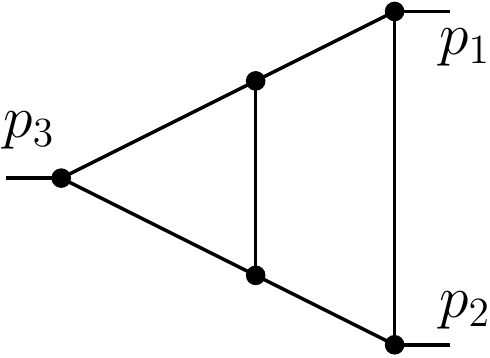} \quad
		C_3 = \Graph[0.4]{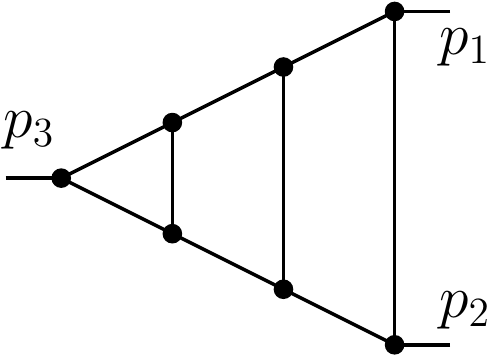} \quad
		C_4 = \Graph[0.4]{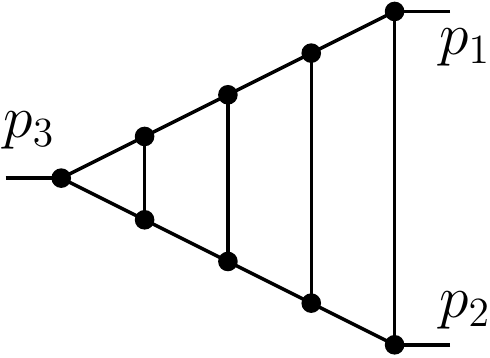} \quad
		\cdots
	$
	\caption{The triangle ladder series $C_n$ from Ussyukina and Davydychev.}%
	\label{fig:three-point-ladders}%
\end{figure}%
But since massless external particles occur in the standard model, one also needs these integrals with some momenta $p_i^2 = 0$ on the light cone. This usually introduces infrared divergences and makes dimensional regularization necessary. Furthermore, one also needs tensor integrals of these graphs and higher orders in their $\varepsilon$-expansion.

Even though the kinematics become simpler when some $p_i^2 = 0$ vanish, these divergences render the computation more difficult with standard approaches. No all-loop result like \eqref{eq:ladder3pt-d4} exists in the literature. Even the two-loop case was computed only recently \cite{ChavezDuhr:Triangles}. Interestingly, this result was obtained with parametric integration using hyperlogarithms. To our knowledge it was the first time that this method was systematically applied to compute higher order contributions to $\varepsilon$-expansions in practice.

Furthermore, we found that linear reducibility also holds for all three-loop massless three-point graphs \cite{Panzer:DivergencesManyScales}. This study was a result of applying the polynomial reduction algorithm to all these graphs individually.

We also found counterexamples to linear reducibility at four loops, but it had become clear that a surprisingly huge number of massless three-point functions is linearly reducible. This property can be reduced to the forest functions with
\begin{lemma}
	\label{lemma:vw3-momenta} %
	Let $G$ have three external vertices $v_1,v_2,v_3 \in \vertices(G)$, massless internal propagators $m_e = 0$ and degree of divergence $\sdd \neq 0$. If we parametrize the momenta $p_i$ entering $G$ at $v_i$ by $p_1^2 = z \bar{z} p_3^2$ and $p_2^2 = (1-z)(1-\bar{z}) p_3^2$, then
	\begin{equation}
		\FR(G)
		=	p_3^{-2\sdd}
			\frac{\Gamma(\sdd)}{\prod_e \Gamma(\EP_e)}
			\int_0^{\infty}
				\frac{
					\GfunForest{G}(x)
					\,\Omega
				}{
					\left[
						z\bar{z} x_1 + (1-z)(1-\bar{z}) x_2 + x_3
					\right]^{\sdd}
				}
		.
		\label{eq:vw3-3pt-projective}%
	\end{equation}
\end{lemma}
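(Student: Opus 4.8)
The plan is to start from the Schwinger parametric representation \eqref{eq:feynman-integral-parametric}, insert delta functions that fix the three ratios $f_i/\psipol$ so as to produce the forest function, and then projectivise the resulting three-fold integral exactly as in section~\ref{sec:projective-integrals}. The first ingredient is the shape of the second Symanzik polynomial. Since all masses vanish, \eqref{eq:graph-polynomials-combinatorial} gives $\phipol = \sum_F \ExMom(F)^2 \prod_{e\notin F}\SP_e$, a sum over spanning two-forests. A two-forest distributes $\set{v_1,v_2,v_3}$ among its two components, and by momentum conservation $\ExMom(F)^2 = (p_1+p_2+p_3)^2 = 0$ whenever all three external vertices land in one component; hence the only surviving contributions come from the three partitions that isolate a single $v_i$. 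Comparing with \eqref{eq:def-forestpolynom} and the shorthand \eqref{eq:3pt-forest-shorthand} yields the clean identity $\phipol = p_1^2 f_1 + p_2^2 f_2 + p_3^2 f_3$. Substituting the parametrisation $p_1^2 = z\bar z\,p_3^2$ and $p_2^2 = (1-z)(1-\bar z)\,p_3^2$ factors out $p_3^2$, leaving $\phipol = p_3^2\,[z\bar z f_1 + (1-z)(1-\bar z) f_2 + f_3]$.

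Next I would insert $1 = \prod_{i=1}^3\int_0^\infty \dd x_i\,\delta(f_i/\psipol - x_i)$ into \eqref{eq:feynman-integral-parametric}. Under these constraints the exponent $\phipol/\psipol$ becomes $p_3^2\,[z\bar z x_1 + (1-z)(1-\bar z) x_2 + x_3]$, which no longer depends on the $\SP_e$ and pulls out of the $\SP$-integration. What remains of the $\SP$-integral is precisely $\GfunForest{G}(x)/\prod_e\Gamma(\EP_e)$ by the definition \eqref{eq:def-GfunForest}, so that
\[
	\FR(G) = \frac{1}{\prod_e\Gamma(\EP_e)}\int_0^\infty \GfunForest{G}(x)\,e^{-p_3^2[z\bar z x_1 + (1-z)(1-\bar z) x_2 + x_3]}\,\dd[3] x.
\]

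Finally I would projectivise this affine three-variable integral. Because $\GfunForest{G}$ is homogeneous of degree $\sdd-3$ by \eqref{eq:GfunForest-scaling} while the bracket $L \defas z\bar z x_1 + (1-z)(1-\bar z) x_2 + x_3$ is homogeneous of degree one, I can insert a scale $1 = \int_0^\infty \dd\Pscale\,\delta(\Pscale - H(x))$, rescale $x_i\mapsto\Pscale x_i$, and split the measure $\dd[3] x = \Pscale^2\,\dd\Pscale\wedge\Omega$ just as in section~\ref{sec:projective-integrals}. The $\Pscale$-integral is then $\int_0^\infty \Pscale^{\sdd-1}e^{-p_3^2 L\Pscale}\,\dd\Pscale = \Gamma(\sdd)\,(p_3^2 L)^{-\sdd}$, which reproduces exactly the claimed projective form with its prefactors $\Gamma(\sdd)$ and $p_3^{-2\sdd}$.

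The reduction of $\phipol$ to the three forest polynomials in the first step is the conceptual heart of the argument; the remainder is the standard affine-to-projective manipulation. The point demanding the most care is the last step: matching homogeneity weights and $\Gamma$-factors correctly, and in particular keeping the forest function in its affine normalisation until the very end so that the single projective scale is introduced only once, producing the denominator $(p_3^2 L)^{-\sdd}$ cleanly.
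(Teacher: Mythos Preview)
Your proposal is correct and follows essentially the same approach as the paper: identify $\phipol = p_3^2[z\bar z f_1 + (1-z)(1-\bar z) f_2 + f_3]$, insert delta functions to recognise the Laplace transform of $\GfunForest{G}$, then projectivise via the homogeneity \eqref{eq:GfunForest-scaling}. You spell out the momentum-conservation reason for the shape of $\phipol$ and the projectivisation step in more detail than the paper does, but the argument is the same.
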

\begin{proof}
	The second Symanzik polynomial of $G$ in the given kinematics reads
	\begin{equation*}
		\phipol 
		= p_3^2 \cdot \left[ 
			z\bar{z} f_1
			+ (1-z)(1-\bar{z}) f_2
			+ f_3
		\right]
		.
	\end{equation*}
	If $\FR(G)$ is convergent, we can insert this into \eqref{eq:feynman-integral-parametric} to obtain
	\begin{align*}
		\FR(G)
		&=
			\int_0^{\infty}
				\prod_e \frac{\SP_e ^{\EP_e - 1} \dd \SP_e}{\Gamma(\EP_e)}
			\int_0^{\infty} 
			\frac{e^{-\phipol/\psipol}}{\psipol^{\dimension/2}}
			\delta^{(3)}\left( \frac{f}{\psipol} - x \right)
			\dd[3] x
		\\
		&=
			\frac{1}{\prod_e \Gamma(\EP_e)}
			\int_0^{\infty} 
			\GfunForest{G}(x)
			\exp \left\{
				- p_3^2 \big[ z\bar{z} x_1 + (1-z)(1-\bar{z}) x_2 + x_3 \big]
			\right\}
			\ \dd[3] x
	\end{align*}
	which is the Laplace transform of $\GfunForest{G}$. When we exploit the homogeneity \eqref{eq:GfunForest-scaling}, we can perform one integration and arrive at the projective integral representation \eqref{eq:vw3-3pt-projective}.
\end{proof}
\begin{example}
	In $\dimension=4$ with $\EP_e=1$, we have $\GfunForest{\TriangleSymbol} = 1/\SunrisePsi$ from \eqref{eq:GfunForest-3edge} and therefore
	\begin{equation*}
		\FR(C_1)
		= p_3^{-2} \int \frac{\Omega}{\left[ z\bar{z} x_1 + (1-z)(1-\bar{z}) x_2 + x_3 \right]\SunrisePsi(x)}
		= \frac{4\imag \BlochWigner(z)}{p_3^2 (z-\bar{z})}
	\end{equation*}
	with the Bloch-Wigner dilogarithm from \eqref{eq:BlochWigner}. So the first Ussyukina-Davydychev function of \eqref{eq:Ussyukina-polylogarithm} is $4\imag\BlochWigner(z)/(z-\bar{z}) = \Phi^{(1)}(z\bar{z},(1-z)(1-\bar{z}))$.
\end{example}
Together with the recursion formulas of section~\ref{sec:3pt-recursions}, lemma~\ref{lemma:vw3-momenta} essentially proves our theorem~\ref{theorem:vw3-momentum} as we will show in section~\ref{sec:recursion-reducibility}.

\subsubsection{Vacuum periods}
Suppose $G$ is logarithmically divergent ($\sdd = 0$) and primitive (free of subdivergences). From \eqref{eq:primitive-residue-period} we can compute its period as the residue
\begin{equation}
	\period(G)
	= \Res_{\sdd \rightarrow 0} \FR(G)
	\urel{\eqref{eq:vw3-3pt-projective}}
		\frac{1}{\prod_e \Gamma(\EP_e)}
		\int_0^{\infty} \GfunForest{G} \ \Omega
	.
	\label{eq:period-from-GfunForest}%
\end{equation}
\begin{example}
	\label{ex:WS3-period}%
	We compute the period of the wheel $\WS{3}$ with $3$ spokes (figure~\ref{fig:GfunForest-few}) in $\dimension = 4$ dimensions, with unit indices $\EP_e = 1$:
	\begin{align}
		\period\left( \WS{3} \right)
		&\urel{\eqref{eq:period-from-GfunForest}}
			\int_0^{\infty} \dd z_1 
			\int_0^{\infty}  \GfunForest{\WS{3}}(z_1,z_2,1) \ \dd z_2
		\nonumber\\
		&\urel{\eqref{eq:GfunForest-WS3}}
			\int_0^{\infty} \frac{
				3 \mzv{3} - 3\Li_3(-z_1) +	\big[ \Li_2(-z_1) - \mzv{2}\big] \ln z_1
			}{(1+z_1)^2} \ \dd z_1
		= 6 \mzv{3}
		.
		\label{eq:WS3-period}%
	\end{align}
\end{example}

\subsubsection{Graphical functions}
To compute a graphical function, we can combine \eqref{eq:phipol-graphical-function} with \eqref{eq:position-space-projective-dual} to find
\begin{equation}
	\gf[G](z,\bar{z})
	= \frac{\Gamma(\widehat{\sdd})}{\prod_e \Gamma(\EP_e)}
		\int_0^{\infty} 
		\frac{
				\Omega\ 
				\GfunForest{G'}(x)
		}{
				\SunrisePsi^{ \dimension/2- \widehat{\sdd}}
				\left[x_{v_z} + z\bar{z} x_{v_1} + (1-z)(1-\bar{z}) x_{v_0}\right]^{\widehat{\sdd}}
		}
	.
	\label{eq:graphical-from-GfunForest}%
\end{equation}
In this formula we replaced $\forestpolynom{P} = f_{123}$ with $\SunrisePsi(f)/\psipol$ using \eqref{eq:3pt-forest-identity}, and by $G'$ we indicate that the indices $\EP_e$ in $G$ need to be replaced with $\widehat{\EP}_e$. 
\begin{remark}
	It seems natural to express $\gf[G]$ in terms of a variation of the forest function \eqref{eq:def-GfunForest} given by
\begin{equation}
	\widehat{\GfunForest{G}}(z)
	\defas \int_0^{\infty}
		f_{123}^{-\dimension/2}
		\delta^{(3)} \left( 
			\frac{f}{f_{123}} - z
		\right)
		\prod_{e \in \edges} \SP_e^{\widehat{\EP}_e-1} \dd \SP_e,
	\label{eq:GfunForestDual}%
\end{equation}
because then \eqref{eq:position-space-projective-dual} directly translates to the Laplace transform
\begin{align}
	\gf[G](z,\bar{z})
	&= \frac{1}{\prod_e \Gamma(\EP_e)}
		\int_0^{\infty} 
		\widehat{\GfunForest{G}}(x)
		\exp\left[
			-x_{v_z} - z\bar{z} x_{v_1} - (1-z)(1-\bar{z}) x_{v_0}
		\right]
		\dd[3] x
	\label{eq:graphical-GfunForestDual-parametric}%
	\\
	&= \frac{\Gamma(\widehat{\sdd})}{\prod_e \Gamma(\EP_e)}
		\int_0^{\infty} 
		\frac{
			\Omega\ 
			\widehat{\GfunForest{G}}(x)
		}
		{
			\left[x_{v_z} + z\bar{z} x_{v_1} + (1-z)(1-\bar{z}) x_{v_0}\right]^{\widehat{\sdd}}
		}
	.
	\label{eq:graphical-GfunForestDual-projective}%
\end{align}
But referring to \eqref{eq:3pt-forest-identity} again, we realize that this is identical to \eqref{eq:graphical-from-GfunForest}, since $\widehat{\GfunForest{G}}(z) = \GfunForest{G'}(z) \cdot \big[ \SunrisePsi(z) \big]^{\widehat{\sdd}-\dimension/2}$.
\end{remark}

\section{Ladder boxes}\label{sec:ladderboxes}
Among the myriad of multi-loop Feynman integral calculations, the early results of Ussyukina and Davydychev are unique for the reason that they evaluated a series of three- and four-point functions, shown in figures~\ref{fig:three-point-ladders} and \ref{fig:box-ladders}, to arbitrary loop order \cite{DavydychevUssyukina:LadderDiagramsArbitraryRungs}. In particular they computed the box ladders $B_n$ (for unit indices $\EP_e = 1$),
\begin{equation}
	\FR\left( B_L \right)
	= \frac{1}{t s^L}
		\Phi^{(L)}\left(
			\frac{p_1^2 p_3^2}{s t},
			\frac{p_2^2 p_4^2}{s t}
		\right)
	\ \text{where}\ 
	s \defas (p_1 + p_2)^2,
	\ 
	t \defas (p_1 + p_4)^2
	\label{eq:ladderboxes-d4-offshell} %
\end{equation}
in terms of the polylogarithms $\Phi^{(L)}$ of \eqref{eq:Ussyukina-polylogarithm}. However, this very simple result only holds fully off-shell ($p_1^2,p_2^2,p_3^2,p_4^2 >0$) such that the integrals are finite, in exactly $\dimension = 4$ dimensions. As explained in \cite{Broadhurst:SummationLadderDiagrams}, a conformal symmetry is the reason why the complicated kinematics of an off-shell $4$-point function (which in general depends on $6$ independent scales: $p_1^2$, $p_2^2$, $p_3^2$, $p_4^2$, $s$ and $t$) in this special case is reduced essentially to only two dimensionless ratios. This argument that relates $\FR(B_L)$ to the three-point function $\FR(C_L)$ breaks down as soon as $\dimension \neq 4$ (or $\EP_e \neq 1$).\footnote{Generalizations of the symmetry exist only for very special relations among the indices $\EP_e$ and $\dimension$ \cite{Isaev:OperatorApproach}.}
\begin{figure}
	\centering
	\begin{tabular}{>{$}c<{$}>{$}c<{$}>{$}c<{$}>{$}c<{$}>{$}c<{$}}
	\Graph[0.35]{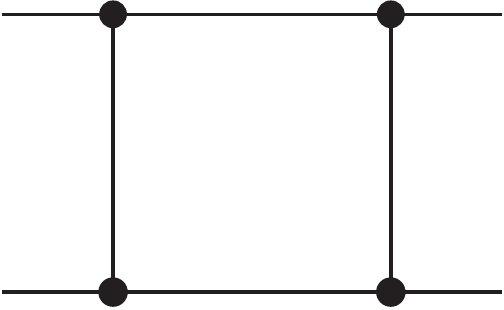} &
	\Graph[0.35]{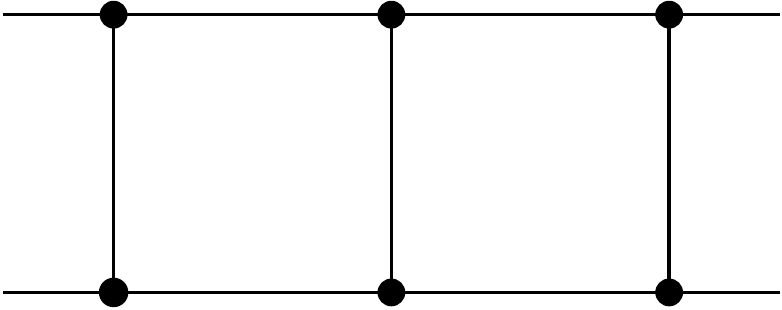} &
	\Graph[0.35]{boxes3loop} &
	\Graph[0.35]{boxes4loop} &
	\cdots\\[4mm]
	B_1 & B_2 & B_3 & B_4 & \\
\end{tabular}
	\caption{The series of box ladder graphs $B_n$.}%
	\label{fig:box-ladders}%
\end{figure}

Computations of scattering amplitudes involving massless external particles (photons, gluons or light quarks and leptons) however demand lightlike $p_i^2$, where the box integrals acquire infrared divergences. The evaluation of these on-shell ladder-boxes as a Laurent series in dimensional regularization turned out to be much more complicated.

For example, it took roughly ten years until the double-box was computed on-shell in \cite{Smirnov:DoubleBoxOnShell}, with the triple box following in \cite{Smirnov:OnShellTripleBox}. These computations, using Mellin-Barnes techniques, become excessively demanding with an increasing number of loops and at the time of writing, the author is not aware of an exact result in the four-loop case.

Interestingly, all known results for massless on-shell four-point functions, which (up to a prefactor) depend only on one dimensionless ratio $x= \frac{s}{t} = s/t$ of Mandelstam invariants, are harmonic polylogarithms (of $x$). It is therefore tempting to ask whether this holds for an infinite series of graphs and secondly, if these can be computed by parametric integration using hyperlogarithms.

Therefore we study box ladder graphs parametrically. Following our approach from section~\ref{sec:forest-functions} on three-point functions, we develop recursion formulas that allow for a simple inductive computation of the box integrals. Studying the polynomials that occur in this formulas will can show (see section~\ref{sec:recursion-reducibility} that all these integrals are linearly reducible and evaluate to a specific class of polylogarithms.

Some new results obtained with this approach are reviewed in section~\ref{sec:ex-ladderboxes}, where we also comment on some generalizations of our method.

\subsection{Forest functions}\label{sec:ladderbox-forestfunctions}
\begin{definition}
	\label{def:GfunForestBox} %
	Let $v_1, v_2, v_3, v_4 \in \vertices(G)$ denote four distinct vertices of a connected graph $G$. We introduce the vector $f = (f_{12},f_{14},f_3,f_4)$ to abbreviate the forest polynomials
	\begin{equation}
		f_{12}
			\defas \forestpolynom[G]{\set{1,2},\set{3,4}}
		,\quad
		f_{14}
			\defas \forestpolynom[G]{\set{1,4},\set{2,3}}
		,\quad
		f_{3}
			\defas \forestpolynom[G]{\set{3},\set{1,2,4}}
		\quad\text{and}\quad
		f_{4}
			\defas \forestpolynom[G]{\set{4},\set{1,2,3}}
		.
		\label{eq:def-forestpolynoms-ladderbox} %
	\end{equation}%
\nomenclature[f i]{$f_i$}{spanning forest polynomials of $3$- or $4$-point functions, see \eqref{eq:3pt-forest-shorthand} and \eqref{eq:def-forestpolynoms-ladderbox}}%
	Assuming that these are algebraically independent\footnote{This excludes trivial cases like constant forest polynomials for example.} from each other, they define a function
	$\GfunForestBox{G}\colon \R_{+}^4 \longrightarrow \R_+$ of a vector $z = (z_{12},z_{14}, z_{3}, z_{4})$ by
	\begin{equation}
		\GfunForestBox{G}\left( z \right)
		\defas
		\int_{\R_+^{\edges}}
		\psipol_G^{-\dimension/2}
		\cdot\delta^{(4)}\left( \frac{f}{\psipol} - z \right)
		\prod_{e \in \edges} \SP_e^{\EP_e - 1}\ \dd \SP_e
		.
		\label{eq:def-GfunForestBox} %
	\end{equation}
\end{definition}%
\nomenclature[f G]{$\GfunForestBox{G}(z)$}{forest function of a box-ladder graph, equation~\eqref{eq:def-GfunForestBox}\nomrefpage}%
Again we shall assume that the indices $\EP_e$ and $\dimension$ are such that \eqref{eq:def-GfunForestBox} converges absolutely, hence $\GfunForestBox{G}(z)$ is analytic in $z$. As four integrations are omitted, the homogeneity is given by
\begin{equation}
	\GfunForestBox{G}(\lambda z)
	=
	\lambda^{\sdd - 4} \GfunForestBox{G}(z)
	.
	\label{eq:GfunForestBox-scaling}%
\end{equation}
\begin{figure}
	\centering\setlength{\tabcolsep}{0mm}\begin{tabular}{crcccccc}
		$ \Graph[0.7]{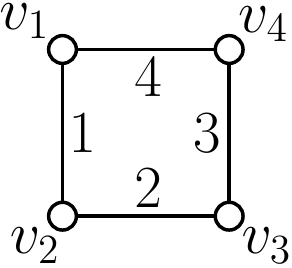}$ & $\mapsto$ & $ \Graph[0.7]{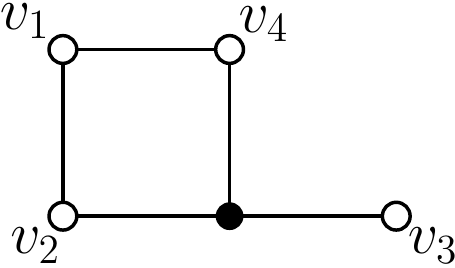}$ & $\mapsto$ & $ \Graph[0.7]{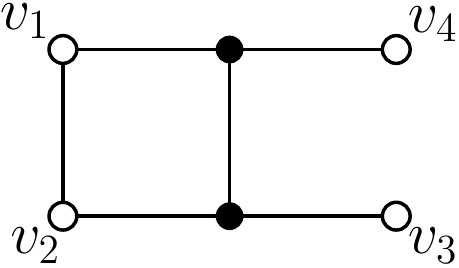}$ &  $\mapsto$ & $ \Graph[0.7]{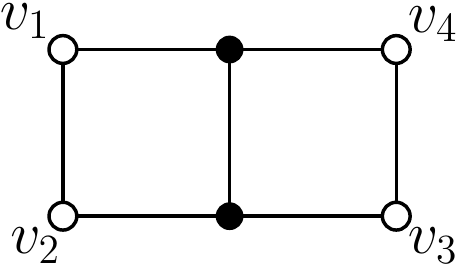}$ \\
	$B_1$ & & $B_1'$ & & $B_1''$ & & $B_2$\\
	\end{tabular}%
	\caption[Construction of the double box $B_2$ from the box $B_1$ with the moves of figure~\ref{fig:boxladder-addedge}]{Starting from the box $B_1$, the double box $B_2$ can be constructed by adding edges according to the moves of figure~\ref{fig:boxladder-addedge}. Internal vertices are shown in black.}%
	\label{fig:boxladders-forest}%
\end{figure}
\begin{example}
	\label{ex:GfunForestBox-1loop} %
	The polynomials of the box graph $B_1$ of figure~\ref{fig:boxladders-forest} read
	\begin{equation*}
		\psipol = \SP_1 + \SP_2 + \SP_3 + \SP_4,\quad
		f_{12} = \SP_2 \SP_4,\quad
		f_{14} = \SP_1 \SP_3,\quad
		f_3 = \SP_2 \SP_3 \quad
		\text{and}\quad
		f_4 = \SP_3 \SP_4
		.
	\end{equation*}
	The change of variables inverse to
	$
		z = f / \psipol
	$
	may be summarized as
	\begin{align}
		(\SP_1, \SP_2, \SP_3, \SP_4, \psipol, \dd[4] \SP)
		&= \left(
				\frac{z_{14} \BoxPoly}{z_3 z_4},
				\frac{\BoxPoly}{z_4},
				\frac{\BoxPoly}{z_{12}},
				\frac{\BoxPoly}{z_3},
				\frac{\BoxPoly^2}{z_{12} z_3 z_4},
				\frac{\BoxPoly^4\ \dd[4] z}{z_{12}^2 z_3^3 z_4^3}
		\right)
		\quad\text{in terms of}
		\nonumber\\
		\BoxPoly(z)
		&\defas
		z_{12}\left( z_{14} + z_3 + z_4 \right) + z_3 z_4
		.
		\label{eq:def-BoxPoly}%
	\end{align}%
\nomenclature[Q z]{$\BoxPoly(z)$}{a singularity of $4$-point forest functions, equation~\eqref{eq:def-BoxPoly}\nomrefpage}%
	Inserting this transformation into \eqref{eq:def-GfunForestBox} gives the general result
	\begin{equation}
		\GfunForestBox{B_1}(z)
		= 
			(z_3 z_4)^{\dimension/2-3}
			\left( \frac{z_{12}}{\BoxPoly^2} \right)^{\dimension/2-2}
			\left( \frac{z_{14} \BoxPoly}{z_3 z_4} \right)^{\EP_1 -1}
			\left( \frac{\BoxPoly}{z_4} \right)^{\EP_2 - 1}
			\left( \frac{\BoxPoly}{z_{12}} \right)^{\EP_3 - 1}
			\left( \frac{\BoxPoly}{z_3} \right)^{\EP_4 - 1}
		\label{eq:GfunForestBox-1loop} %
	\end{equation}
	and we note in particular that in the special case of unit indices $\EP_1 = \EP_2 = \EP_3 = \EP_4 = 1$,
	\begin{equation}
		\GfunForestBox{B_1}(z)
		=	\frac{1}{z_3 z_4} 
		\quad (\dimension=4)
		\qquad\text{and}\qquad
		\GfunForestBox{B_1}(z)
		= \frac{z_{12}}{\BoxPoly^2}
		\quad (\dimension=6).
		\label{eq:GfunForestBox-1loop-d4-d6} %
	\end{equation}
\end{example}
\begin{figure}
	\centering
	$	G = \!\!\!\Graph[0.6]{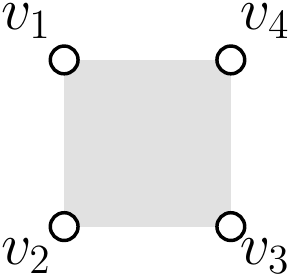} $
	\quad $\mapsto$ \quad
	$	\Graph[0.6]{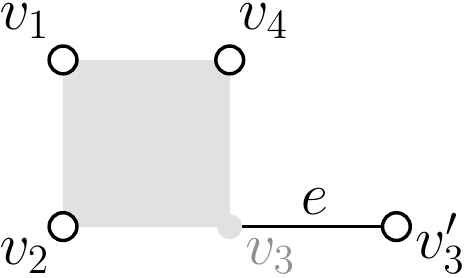} $
	\ or\quad
	$	\Graph[0.6]{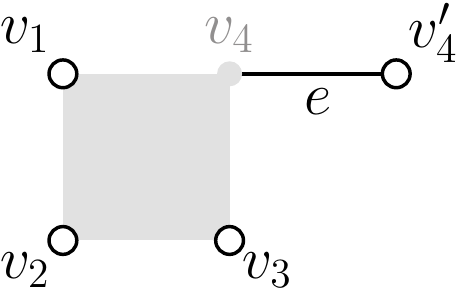} $
	\ or\quad
	$	\Graph[0.6]{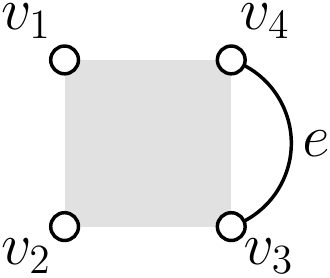} $
	\caption[Allowed ways to add edges to a four-point forest function]{We consider three different ways to add an edge $e$ to the graph $G$. Either we replace one of the external vertices $v_3$, $v_4$ with a new one that is attached via $e$, or we keep all external vertices when we add the edge $e = \set{v_3,v_4}$.} %
	\label{fig:boxladder-addedge} %
\end{figure}
We want to find recursion formulas for $\GfunForestBox{G \cup e}$ in terms of $\GfunForestBox{G}$ such that we can add edges $e$ to $G$ as shown in figure~\ref{fig:boxladder-addedge}. Replacing the external vertex $v_4$ (analogously for $v_3$) is simple:
\begin{lemma}
	\label{lemma:GfunForestBox-add-vertex} %
	Let $G'$ denote the graph obtained from $G$ by appending a new external vertex $v_4'$ through the edge $e = \set{v_4^{}, v_4'}$. Then (analogously for $e=\set{v_3^{},v_3'}$)
	\begin{equation}
		\GfunForestBox{G'}(z)
		= \int_0^{z_4} 
			\GfunForestBox{G}(z_{12}, z_{14}, z_3, z_4 - x)
			\cdot x^{\EP_e - 1}
			\ \dd x 
		.
		\label{eq:GfunForestBox-add-vertex} %
	\end{equation}
\end{lemma}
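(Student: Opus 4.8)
The plan is to mirror the proof of lemma~\ref{lemma:Gfun-Forest-vertex}: since $v_4'$ is attached to $G$ only through the single pendant edge $e$, appending it changes the four forest polynomials of definition~\ref{def:GfunForestBox} in a very controlled way, and the claimed formula then follows by splitting off the $\SP_e$-integration and resolving the $\delta$-constraints. The external vertices of $G'$ are $\set{v_1,v_2,v_3,v_4'}$, so I would first record how $\psipol$ and the vector $f=(f_{12},f_{14},f_3,f_4)$ transform.

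The combinatorial heart is a case distinction on whether a spanning forest $F$ of $G'$ contains $e$. Because $e=\set{v_4,v_4'}$ is the only edge incident to $v_4'$, either $e\notin F$ and then $\set{v_4'}\in\comps(F)$ is an isolated component, or $e\in F$ and then $v_4'$ lies in the same component as $v_4$. For the graph polynomial every spanning tree of $G'$ must contain $e$, hence $\SP_e$ never appears in $\prod_{e'\notin F}\SP_{e'}$ and $\psipol_{G'}=\psipol$. The same reasoning applied to the partitions defining $f_{12}',f_{14}',f_3'$ shows that each of these forces $v_4'$ to share a component with another external vertex, which requires $e\in F$; contracting $e$ (identifying $v_4'$ with $v_4$) then identifies these sums with $f_{12},f_{14},f_3$ respectively, so $f_{12}'=f_{12}$, $f_{14}'=f_{14}$ and $f_3'=f_3$. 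Only $f_4'=\forestpolynom[G']{\set{4'},\set{1,2,3}}$ acquires a new contribution: when $e\notin F$ the vertex $v_4'$ is isolated and $F$ restricts to a spanning tree of $G$, summing to $\SP_e\psipol$; when $e\in F$ the forest $F\setminus e$ is a two-forest of $G$ separating $v_4$ from $\set{v_1,v_2,v_3}$, summing to $f_4$. Hence $f_4'=\SP_e\psipol+f_4$, while all other entries are unchanged.

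It then remains to feed these identities into \eqref{eq:def-GfunForestBox}. Writing $y\defas f/\psipol$ for the ratios attached to $G$ and using $\psipol_{G'}=\psipol$, the four $\delta$-functions of $G'$ become $\delta(y_{12}-z_{12})$, $\delta(y_{14}-z_{14})$, $\delta(y_3-z_3)$ and $\delta(\SP_e+y_4-z_4)$. Pulling out the $\SP_e$-integration and recognising the remaining $y$-integral as $\GfunForestBox{G}(y)$ gives
\begin{equation*}
    \GfunForestBox{G'}(z)
    = \int_0^{\infty} \SP_e^{\EP_e-1}\,\GfunForestBox{G}\!\left(z_{12},z_{14},z_3,z_4-\SP_e\right)\dd\SP_e,
\end{equation*}
and the positivity constraint $y_4=z_4-\SP_e\geq 0$ (the domain of $\GfunForestBox{G}$) cuts the upper limit down to $z_4$, yielding \eqref{eq:GfunForestBox-add-vertex} after renaming $\SP_e=x$. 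The case $e=\set{v_3,v_3'}$ is identical after interchanging the roles of $v_3$ and $v_4$.

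I expect the only real obstacle to be the bookkeeping of the middle paragraph—namely verifying that among the four forest polynomials exactly $f_4'$ acquires the scale term $\SP_e\psipol$ while the others are literally unchanged. Everything else is the same $\delta$-function manipulation already used for the three-point forest functions.
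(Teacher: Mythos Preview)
Your proof is correct and follows essentially the same approach as the paper. The paper's proof is more terse—it simply states that $\psipol_{G'}=\psipol_G$ and $f_4'=\SP_e\psipol+f_4$ while the other forest polynomials are unchanged, then writes the defining integral with the shifted $\delta$-constraints—whereas you spell out the combinatorial case distinction on $e\in F$ versus $e\notin F$ in more detail; but the structure and all key steps are identical.
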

\begin{proof}
	The forest polynomials of $G'$ are identical to those of $G$ (also $\psipol_G = \psipol_{G'}$), except for
	$f_4' = x \psipol + f_4$ where $x = \SP_e$. Hence $f'/\psipol_{G'} = f/\psipol_{G} + (0,0,0,x)$ and therefore
	\begin{equation*}
		\GfunForestBox{G'}(z)
		= \int_0^{\infty} x^{\EP_e - 1} \dd x
			\int_{0}^{\infty}
				\GfunForestBox{G}(z')
				\delta(z_{12}' - z_{12})
				\delta(z_{14}' - z_{14})
				\delta(z_3' - z_3)
				\delta(x + z_4' - z_4)
				\ \dd[4] z'
		. \qedhere
	\end{equation*}
\end{proof}
\begin{example}
	\label{ex:ladderbox-forest-append}%
	In $\dimension=6$ dimensions, the forest functions for the graphs $B_{1}'$ and $B_1''$ shown in figure~\ref{fig:boxladders-forest} are simple to compute from \eqref{eq:GfunForestBox-1loop-d4-d6} using \eqref{eq:GfunForestBox-add-vertex}. We obtain
	\begin{align}
		\GfunForestBox{B_1'}(z)
		&= \int_{0}^{z_3}
				\GfunForestBox{B_1}(z_{12},z_{14},z_3',z_4) 
				\ \dd z_3'
		=	\frac{z_{3}}{(z_{14}+z_{4})\cdot\BoxPoly}
		\quad\text{and}
		\label{eq:ladderbox-forest-1'}%
		\\
		\GfunForestBox{B_1''}(z)
		&= \int_{0}^{z_4}
				\GfunForestBox{B_1'}(z_{12},z_{14},z_3,z_4') 
				\ \dd z_4'
		= \frac{1}{z_{12}-z_{14}} 
			\ln \frac{
				z_{12} (z_3 + z_{14})(z_4 + z_{14})
			}{
				z_{14} \cdot \BoxPoly
			}
		.
		\label{eq:ladderbox-forest-1''}%
	\end{align}
\end{example}
In order to add an edge $e = \set{v_3,v_4}$, we derive an identity of forest polynomials in
\begin{lemma}
	\label{lemma:forest-identity-ladderbox} %
	Let $v_1, v_2, v_3, v_4 \in \vertices(G)$ denote four distinct vertices of $G$ such that it is impossible to find two disjoint paths connecting $v_1$ with $v_3$ and $v_2$ with $v_4$.
	Then we have the following quadratic identity of forest polynomials of $G$:
	\begin{equation}
		\psipol_G^{} \cdot
		\forestpolynom[G]{\set{1,2},\set{3},\set{4}}
		=
		f_{12}
		\left(
			f_{14} + f_3 + f_4
		\right)
		+ f_3
			f_4
		=
		\BoxPoly(f).
		\label{eq:forest-identity-ladderbox} %
	\end{equation}
\end{lemma}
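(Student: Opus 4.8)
The plan is to prove \eqref{eq:forest-identity-ladderbox} combinatorially, in direct parallel to the three-point identity \eqref{eq:3pt-forest-identity} of lemma~\ref{lemma:3pt-forest-identity} but with one extra case. First I would expand both sides using the definition \eqref{eq:def-forestpolynom}: writing $A \defas \forestpolynom[G]{\set{1,2},\set{3},\set{4}}$, the left-hand side $\psipol\cdot A$ is the sum of monomials $\SP^{\mathbf 1_{\edges\setminus T}+\mathbf 1_{\edges\setminus F}}$ over all pairs $(T,F)$ with $T$ a spanning tree and $F$ a spanning $3$-forest whose components $C_{12}\ni v_1,v_2$, $C_3\ni v_3$, $C_4\ni v_4$ realise the partition $\set{1,2},\set{3},\set{4}$. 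Likewise each product $f_{12}f_{14}$, $f_{12}f_3$, $f_{12}f_4$, $f_3f_4$ on the right is a sum over ordered pairs of $2$-forests of the respective types from \eqref{eq:def-forestpolynoms-ladderbox}. Since $\mathbf 1_{\edges\setminus T}+\mathbf 1_{\edges\setminus F} = 2\cdot\mathbf 1_{\edges}-(\mathbf 1_T+\mathbf 1_F)$, matching monomials is equivalent to producing a bijection between the pairs $(T,F)$ and the disjoint union of the four right-hand families that preserves the edge multiset $T\cupdot F$. The degree bookkeeping $\deg\psipol+\deg A = 2\loops{G}+2 = 2\deg f_{12}$ coming from \eqref{eq:degree-forestpolynom} confirms that a degree-preserving matching is at least numerically possible.

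The heart of the argument is the explicit bijection, obtained by superposing $T$ and $F$. Adding the edges of $T\setminus F$ to the three blocks $C_{12},C_3,C_4$ of $F$ reconnects them into the single tree $T$, and because $T$ is acyclic this reconnection pattern is unique; it records exactly how $v_3$ and $v_4$ attach to the $\set{1,2}$-block. Splitting the multiset $T\cupdot F$ along this pattern then yields an ordered pair of $2$-forests, and I would organise the cases by which separation of $\set{v_1,v_2,v_3,v_4}$ survives in the output: the blocks regroup either as $\set{1,2}\mid\set{3,4}$ paired with $\set{1,4}\mid\set{2,3}$, or as $\set{1,2}\mid\set{3,4}$ paired with a singleton $\set{3}$ or $\set{4}$, or as the two singletons $\set{3}$ and $\set{4}$ — and these are precisely the four products $f_{12}f_{14}$, $f_{12}f_3$, $f_{12}f_4$, $f_3f_4$ assembled in $\BoxPoly(f)$. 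Conversely, any such pair of $2$-forests recombines canonically into a spanning tree together with the $3$-forest $F$, which provides the inverse map.

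The main obstacle, and the only place the hypothesis enters, is showing the case analysis is exhaustive: no pair $(T,F)$ may force the \emph{crossing} regrouping $\set{1,3}\mid\set{2,4}$. Such a regrouping would exhibit a $2$-forest with $v_1,v_3$ in one component and $v_2,v_4$ in the other, hence two vertex-disjoint paths joining $v_1$ to $v_3$ and $v_2$ to $v_4$ — exactly what the assumption forbids — so $\forestpolynom{\set{1,3},\set{2,4}}=0$ and that family is empty, leaving only the four terms of $\BoxPoly(f)$. Making the reconnection map strictly canonical and verifying its invertibility is the routine but delicate part of the work. As a cross-check and fallback, one can instead realise $\psipol$, $A$ and the $f_{\bullet}$ as minors of the graph matrix $\GM$ via the Dodgson description \eqref{eq:dodgson-as-forests} with the explicit signs \eqref{eq:dodgson-as-forests-signs}, and then deduce \eqref{eq:forest-identity-ladderbox} from the Desnanot–Jacobi determinant identity \eqref{eq:dodgson-identity} (equivalently lemma~\ref{lemma:dodgson-jacobi}); in that route the subtle point migrates to checking that the signed forest sums collapse to the positive combination, while the vanishing of the crossing minor again follows from the disjoint-path hypothesis.
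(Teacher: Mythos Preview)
Your fallback via Desnanot--Jacobi is precisely the paper's route. Concretely, the paper adjoins three auxiliary edges $e_1=\{v_1,v_4\}$, $e_2=\{v_2,v_3\}$, $e_3=\{v_3,v_4\}$ to $G$, applies lemma~\ref{lemma:dodgson-jacobi} to obtain $\dodgson^{12,12}\dodgson^{13,23}-\dodgson^{13,12}\dodgson^{12,23}=\dodgson^{123,123}\dodgson^{1,2}$ in the enlarged graph, rewrites each factor through \eqref{eq:dodgson-as-forests} and example~\ref{ex:dodgson-signs-boxforest}, and sets $\SP_{e_3}=0$. This yields $\psipol_G\cdot\forestpolynom[G]{\{1,2\},\{3\},\{4\}}=\BoxPoly(f)+f_{13}(f_{12}-f_{14})$, and the disjoint-path hypothesis is exactly $f_{13}=0$. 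So your identification of the two subtle points (sign collapse and vanishing of the crossing term) is correct; the auxiliary-edge construction is the one concrete step you would still need to supply.

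Your primary bijective plan has a genuine gap. The assertion ``adding the edges of $T\setminus F$ to the blocks of $F$ reconnects them into the single tree $T$'' is false as written ($F\cup(T\setminus F)=F\cup T$, not $T$), and the charitable reading --- contract the three components of $F$ and look at the image of $T$ --- does not yield a tree either: $T$ generically has more than two inter-component edges, so its image on three vertices has cycles and no unique ``reconnection pattern'' follows from $T$ being acyclic. More to the point, you never actually define the map. Given the multiset $T\cupdot F$ there are typically many splittings into two $2$-forests, and listing which four target families \emph{should} appear is not a rule for which one a given $(T,F)$ produces, nor for how to invert. Bijective proofs of Dodgson-type identities do exist but require a real construction, typically a sign-reversing involution on a signed superset rather than a plain bijection between positive families; calling this step ``routine but delicate'' skips the entire content. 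Absent that construction, the combinatorial route is not a proof --- stick with the Dodgson argument.
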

\begin{figure}%
	\begin{equation*}
		\Graph[0.4]{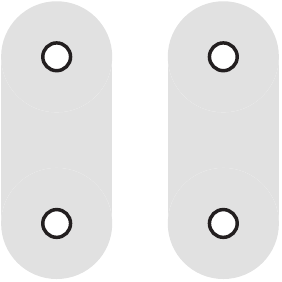}
		\ \cdot\ 
		\left(
			\Graph[0.4]{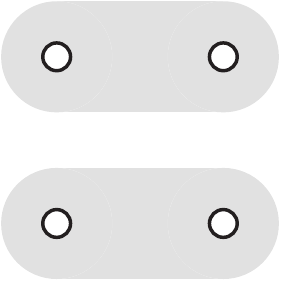}
			+\Graph[0.4]{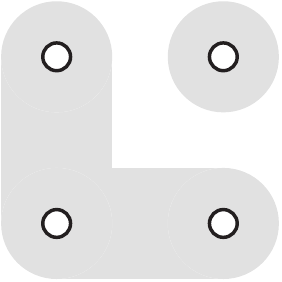}
			+\Graph[0.4]{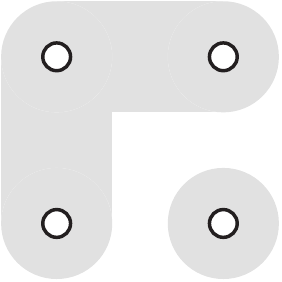}
		\right)
		+ \Graph[0.4]{spanning_forest_124_3_circled} 
		\ \cdot\ \Graph[0.4]{spanning_forest_123_4_circled}
		=
		\Graph[0.4]{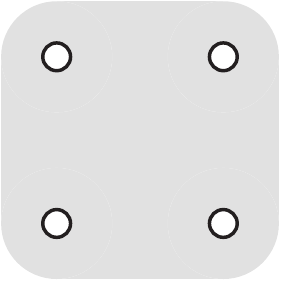}
		\ \cdot\ 
		\Graph[0.4]{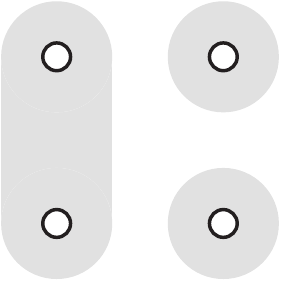}
	\end{equation*}%
	\caption[The forest polynomial identity \eqref{eq:forest-identity-ladderbox}]{Picture of the forest polynomial identity \eqref{eq:forest-identity-ladderbox}: The grey areas show how the four vertices are allocated to the connected components of the forests that contribute to the corresponding spanning forest polynomial.}%
	\label{fig:forest-identity-ladderbox} %
\end{figure}
\begin{proof}
	Construct $G'$ by adding three edges $e_1 = \set{v_1, v_4}$, $e_2 = \set{v_2,v_3}$ and $e_3 = \set{v_3, v_4}$ to $G$, as shown in figure~\ref{fig:dodgson-signs}. We apply lemma~\ref{lemma:dodgson-jacobi} with $I=\set{2,3}$, $J=\set{1,3}$, $A= \set{1}$ and $B=\set{2}$ to find the Dodgson identity 
	$
		\dodgson_{G'}^{12,12} \dodgson_{G'}^{13,23} - \dodgson_{G'}^{13,12} \dodgson_{G'}^{12,23}
		=
		\dodgson_{G'}^{123,123} \dodgson_{G'}^{1,2}
	$.
	Into this we insert \eqref{eq:dodgson-as-forest-boxladder-1}, \eqref{eq:dodgson-as-forest-boxladder-2} and the expansions
	\begin{align*}
		\dodgson_{G'}^{12,12}
		&= \psipol_{G'\setminus 12/3}
		= \SP_3 \psipol_G
			+ \forestpolynom[G]{\set{3},\set{4}}
		\\
		\dodgson_{G'}^{1,2}
		&= \forestpolynom[G \cup e_2]{\set{1,2},\set{3,4}}
			- \forestpolynom[G \cup e_2]{\set{1,3},\set{2,4}}
		= \SP_3 \left(
				\forestpolynom[G]{\set{1,2},\set{3,4}}
				- \forestpolynom[G]{\set{1,3},\set{2,4}}
			\right)
			+ \forestpolynom[G]{\set{1,2},\set{3},\set{4}}
		.
	\end{align*}
	After setting $\SP_3 = 0$ we arrive at the identity
	\begin{equation*}
		\psipol_G^{}
		\forestpolynom[G]{\set{1,2},\set{3},\set{4}}
		=
			(f_{3} + f_{13})(f_{4} + f_{13}) + (f_{12} - f_{13}) \forestpolynom[G]{\set{3},\set{4}}
		= \BoxPoly(f) + f_{13} \left( f_{12} - f_{14} \right)
	\end{equation*}
	between forest polynomials of $G$, where $f_{13} \defas \forestpolynom[G]{\set{1,3},\set{2,4}}$. Here we exploited that
	\begin{equation}
		\forestpolynom[G]{\set{3},\set{4}}
		= \forestpolynom[G]{\set{3},\set{1,2,4}}
			+\forestpolynom[G]{\set{4},\set{1,2,3}}
			+\forestpolynom[G]{\set{1,4},\set{2,3}}
			+\forestpolynom[G]{\set{1,3},\set{2,4}}
		= f_3 + f_4 + f_{14} + f_{13}
		\label{eq:refine-partition-3,4}%
	\end{equation}
	which sums all possible ways how $v_1$ and $v_2$ can be distributed among the parts of this partition.
	The condition on $G$ that any two paths connecting $v_1$ with $v_3$ and $v_2$ with $v_4$ must share a vertex is equivalent to $f_{13} = 0$.
\end{proof}
\begin{lemma}
	\label{lemma:GfunForestBox-add-edge} %
	Let $G'$ be the graph obtained from $G$ by adding a new edge $e = \set{v_3, v_4}$ to $G$. Let $\sdd \defas \sdd_G$ denotes the degree of divergence of the original graph, then
	\begin{equation}
		\GfunForestBox{G'}(z)
		= \BoxPoly^{\EP_e + \sdd - \dimension}
		\int_0^{z_{12}} x^{\dimension/2-\EP_e-1} \left[ \BoxPoly^{\dimension/2-\sdd} \cdot \GfunForestBox{G} \right]_{z_{12} = z_{12} - x}
			\dd x.
		\label{eq:GfunForestBox-add-edge} %
	\end{equation}
\end{lemma}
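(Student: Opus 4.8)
The plan is to follow the template of the proof of lemma~\ref{lemma:Gfun-Forest-edge}, tracking how the five relevant polynomials transform when the edge $e=\set{v_3,v_4}$, carrying the parameter $\SP_e$, is adjoined to $G$. Classifying each spanning tree or spanning forest of $G'$ by whether it contains $e$, and invoking \eqref{eq:refine-partition-3,4} together with the quadratic identity \eqref{eq:forest-identity-ladderbox} (so the hypothesis $\forestpolynom[G]{\set{1,3},\set{2,4}}=0$ of lemma~\ref{lemma:forest-identity-ladderbox} is assumed throughout), I would first establish
\begin{gather*}
\psipol_{G'} = \SP_e\,\psipol_G + f_{14}+f_3+f_4,\qquad
f_{12}' = \SP_e f_{12} + \BoxPoly(f)/\psipol_G,\\
f_{14}' = \SP_e f_{14},\qquad f_3'=\SP_e f_3,\qquad f_4'=\SP_e f_4.
\end{gather*}
Writing $y\defas f/\psipol_G$ and inserting $1=\int_{\R_+^4}\delta^{(4)}\!\big(f/\psipol_G - y\big)\,\dd[4]y$ into the definition \eqref{eq:def-GfunForestBox} of $\GfunForestBox{G'}$ makes the inner $\edges(G)$-integral collapse to $\GfunForestBox{G}(y)$; the point is that every ratio $f_i'/\psipol_{G'}$ depends only on $y$ and $\SP_e$ (the overall factor $\psipol_G$ cancels), whereas $\psipol_{G'}^{-\dimension/2}$ contributes an extra $D^{-\dimension/2}$ with $D\defas\SP_e+y_{14}+y_3+y_4$.

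Next I would invert the four $\delta$-constraints. The three linear constraints give $D=\SP_e^2/(\SP_e-s)$ with $s\defas z_{14}+z_3+z_4$ and $y_{14}=z_{14}\SP_e/(\SP_e-s)$ (and likewise $y_3,y_4$), while the quadratic one yields $y_{12}=z_{12}-z_3z_4/(\SP_e-s)$. The Jacobian of $z'=(z_{12}',z_{14}',z_3',z_4')$ with respect to $y$ is block-triangular, because $z_{12}'=y_{12}+y_3y_4/D$ has $\partial z_{12}'/\partial y_{12}=1$; the remaining $3\times3$ block is evaluated with the matrix determinant lemma to give $\det(\partial z'/\partial y)=\SP_e^4/D^4$. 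Assembling the factor $\SP_e^{\EP_e-1}$, the $D^{-\dimension/2}$ and this Jacobian, and integrating out the $\delta^{(4)}$, leaves
\begin{equation*}
\GfunForestBox{G'}(z)=\int \GfunForestBox{G}(y)\,\frac{\SP_e^{\,\sdd+\EP_e-1-\dimension}}{(\SP_e-s)^{\,\sdd-\dimension/2}}\,\dd\SP_e,
\end{equation*}
the integrand being supported where $y\in\R_+^4$.

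The decisive step is then the homogeneity \eqref{eq:GfunForestBox-scaling}. Since $y=\lambda\,\big(z_{12}-\BoxPoly(z)/\SP_e,\,z_{14},z_3,z_4\big)$ with $\lambda=\SP_e/(\SP_e-s)$, pulling out the scale $\lambda^{\sdd-4}$ reduces the argument of $\GfunForestBox{G}$ to the single shift $z_{12}\mapsto z_{12}-\BoxPoly(z)/\SP_e$. I would finish by substituting $\SP_e=\BoxPoly(z)/x$: this turns the shift into $z_{12}\mapsto z_{12}-x$, turns $\SP_e-s$ into $\BoxPoly(z_{12}-x,z_{14},z_3,z_4)/x$, and---since $y_{12}>0$ forces $\SP_e>\BoxPoly(z)/z_{12}$---maps the integration range to $x\in(0,z_{12})$; balancing the powers of $\BoxPoly$, $\SP_e$ and $\SP_e-s$ reproduces exactly \eqref{eq:GfunForestBox-add-edge}. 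I expect the main obstacle to be purely computational: obtaining the nonlinear solution for $y_{12}$, which rests on \eqref{eq:forest-identity-ladderbox}, and then carrying the various powers of $\BoxPoly$, $\SP_e$ and $\SP_e-s$ through the homogeneity rescaling and the final substitution without exponent errors.
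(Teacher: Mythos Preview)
Your proposal is correct and follows exactly the same route as the paper's proof: derive the transformation of the five polynomials via \eqref{eq:refine-partition-3,4} and \eqref{eq:forest-identity-ladderbox}, insert the delta-function representation of $\GfunForestBox{G}$, resolve the constraints, apply the homogeneity \eqref{eq:GfunForestBox-scaling}, and substitute $\SP_e = \BoxPoly(z)/x$. The only cosmetic slip is that your displayed intermediate integral already carries the exponents $\SP_e^{\sdd+\EP_e-1-\dimension}(\SP_e-s)^{\dimension/2-\sdd}$ obtained \emph{after} the rescaling by $\lambda^{\sdd-4}$ you describe in the next paragraph; before that step the powers are $\SP_e^{\EP_e+3-\dimension}(\SP_e-s)^{\dimension/2-4}$, as in the paper.
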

\begin{proof}
	The spanning forests $F$ of $G'$ that do not include $e$ are precisely the spanning forests of $G$. A spanning forest $F$ including $e$ puts $v_3$ and $v_4$ in the same component of a partition. Hence $\forestpolynom[G']{P} = x \forestpolynom[G]{P}$ (we write $x = \SP_e$) for every partition $P$ that contains $v_3$ and $v_4$ in different parts; in particular
	\begin{equation*}
		f_3' = x f_3, \quad
		f_4' = x f_4 \quad\text{and}\quad
		f_{14}' = x f_{14}
		.
	\end{equation*}
	In contrast,
	$ \psipol_{G'} = x \psipol_G + \forestpolynom[G]{\set{3},\set{4}}$
	can contain spanning trees $T$ with $e \in T$, such that $v_3$ and $v_4$ lie in different components of $T \setminus \set{e}$. We use \eqref{eq:refine-partition-3,4} and $f_{13} = 0$, because we only consider graphs $G$ that allow for a planar drawing with $v_1, v_2, v_3, v_4$ on the outer face in counter-clockwise order. Thus
	\begin{equation*}
		\psipol_{G'}
		= \psipol_{G} \cdot \left( 
					x + z_3' + z_4' + z_{14}'
			\right)
		\ \text{where}\ 
		z' 
		\defas 
		\frac{f}{\psipol_G}
		.
	\end{equation*}
	Similarly we find
	$
		f_{12}'
		= x f_{12} + \forestpolynom[G]{\set{1,2},\set{3},\set{4}}
	$
	and invoke \eqref{eq:forest-identity-ladderbox} to deduce
	\begin{equation*}
		f_{12}'
		= \psipol_G \cdot \left[ 
				z_{12}'(x + z_3' + z_4' + z_{14}') + z_3' z_4'
			\right]
		= \psipol_G \cdot \left[ x z_{12}' + \BoxPoly(z') \right]
		.
	\end{equation*}
	In this way we expressed all relevant forest polynomials of $G'$ in terms of $x$ and the forest polynomials of $G$. Putting this together we find
	\begin{equation*}
		\GfunForestBox{G'}(z)
		=
			\int_{0}^{\infty} \! \dd x\ 
				x^{\EP_e - 1}
			\int_0^{\infty} \dd[4] z'\ 
				\GfunForestBox{G}(z')
				(x + z_3' + z_4' + z_{14}')^{-\dimension/2}
				\delta^{(4)}\left( y - z  \right)
	\end{equation*}
	where using the above calculations, the ratios $y \defas f' / \psipol_{G'}$ are given explicitly by
	\begin{equation*}
		y
		= \left( y_{12}, y_{14}, y_3, y_4 \right)
		=
			\frac{1}{x + z_3' + z_4' + z_{14}'}
			\cdot
			\Big( 
				x z_{12}' + \BoxPoly(z'),
				x z_{14}',
				x z_3',
				x z_4'
			\Big)
		.
	\end{equation*}
	With $x + z_3' + z_4' + z_{14}' = x^2/ (x - y_3 - y_4 - y_{14})$, the inverse transformation reads
	\begin{equation*}
		z'
		= \frac{1}{x - y_3 - y_4 - y_{14}} \cdot \Big(
				x y_{12} - \BoxPoly(y),
				x y_{14},
				x y_{3},
				x y_{4}
			\Big)
	\end{equation*}
	with measure $\dd[4] z' = \left( \frac{x}{x-y_3-y_4-y_{14}} \right)^4 \dd[4] y$. Resolving the $\delta$-constraints this way, we find
	\begin{align*}
		\GfunForestBox{G'}(z)
		&= \int_{\BoxPoly/z_{12}}^{\infty}
			\! \dd x\ 
			x^{\EP_e + 3 - \dimension}
			(x-z_3-z_4-z_{14})^{\dimension/2 - 4}
		\\&\qquad\qquad
		\times
		\GfunForestBox{G}\left( 
			z_{12} - \tfrac{z_3 z_4}{x-z_3-z_4-z_{14}},
			\tfrac{x z_{14}}{x-z_3-z_4-z_{14}},
			\tfrac{x z_3}{x-z_3-z_4-z_{14}},
			\tfrac{x z_4}{x-z_3-z_4-z_{14}}
		\right).
	\end{align*}
	Finally we pull out the factor $x/(x-z_3 - z_4 - z_{14})$ from the arguments of $\GfunForestBox{G}$ using the homogeneity \eqref{eq:GfunForestBox-scaling} and change the integration variable from $x = \BoxPoly/x'$ to $x'$.
\end{proof}
\begin{example}
	\label{ex:ladderbox-forest-edge} %
	In $\dimension = 6$ dimensions, the forest integral of the double box $B_2$ from figure~\ref{fig:boxladders-forest} can be computed from \eqref{eq:ladderbox-forest-1''} for unit indices $\EP_e = 1$ using
	\begin{equation*}
		\GfunForestBox{B_2}(z)
		= \BoxPoly^{-2}
			\int_0^{z_{12}}
			x\cdot
			\GfunForestBox{B_1''}\left( z_{12} - x, z_{14}, z_3, z_4 \right)
			\ \dd x.
		\end{equation*}
	This integral can be expressed in terms of logarithms and dilogarithms, for instance
	\begin{align}
		\GfunForestBox{B_2}(z)
		&= 
			\frac{z_{12}-z_{14}}{\BoxPoly^2} \left[
					\Li_2\left( \frac{z_3 z_4}{\BoxPoly} \right)
					- \Li_2\left( \frac{z_3 z_4 (z_{14}-z_{12})}{z_{14} \BoxPoly} \right)
					+ \ln \frac{\BoxPoly}{z_3 z_4}
						\ln \frac{(z_{14}+z_3)(z_{14}+z_4)}{z_{14}(z_{14}+z_3 + z_4)}
			\right]
		\nonumber\\&\quad
		+ \frac{z_{12}}{\BoxPoly^2} \ln \frac{z_{14} z_3 z_4}{z_{12}(z_{14}+z_3)(z_{14}+z_4)}
		- \frac{\ln (z_3 z_4/\BoxPoly)}{\BoxPoly (z_{14} + z_3 + z_4)}
		.
		\label{eq:ladderbox-forest-2} %
	\end{align}
\end{example}
\subsection{Kinematics}
\label{sec:ladderbox-kinematics}%
The analogue of theorem~\ref{theorem:vw3-3pt} for the massless planar $4$-point topology is
\begin{theorem}
	\label{theorem:ladderbox} %
	Assume that $G$ has four external vertices $v_1$ to $v_4$ such that $f_{13} = 0$ (there are now disjoint paths in $G$ that connect $v_1$ with $v_3$ and $v_2$ with $v_4$).
	When all internal masses $m_e=0$ vanish and the external momenta $p_i$ entering $G$ at $v_i$ fulfil $p_1^2 = p_2^2 = 0$, the Feynman integral $\FR(G)$ is a projective integral of $\GfunForestBox{G}$:
	\begin{equation}
		\FR(G)
		=
			\frac{\Gamma(\sdd)}{\prod_e \Gamma(\EP_e)}
			\int_0^{\infty}
				\frac{
					\GfunForestBox{G}(z)
					\,\Omega
				}{
					\left[
						(p_1+p_2)^2 z_{12} + (p_1 + p_4)^2 z_{14} + p_3^2 z_3^{} + p_4^2 z_4^{}
					\right]^{\sdd}
				}
		.
		\label{eq:ladderbox-projective}%
	\end{equation}
	When $p_3^2 = p_4^2 = 0$ vanish as well, we set $x = (p_1 + p_4)^2 / (p_1 + p_2)^2$ and find
	\begin{equation}
		\FR(G)
		= (p_1+p_2)^{-2\sdd}
			\frac{\Gamma(\sdd)}{\prod_e \Gamma(\EP_e)}
			\int_0^{\infty}
				\frac{
					\GfunForestBox{G}(z)
					\,\Omega
				}{
					\left[
						 z_{12} + x z_{14}
					\right]^{\sdd}
				}
		.
		\label{eq:ladderbox-projective-onshell}%
	\end{equation}
\end{theorem}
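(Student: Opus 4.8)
The plan is to run the proof of lemma~\ref{lemma:vw3-momenta} in the four-point setting, the only genuinely new input being the evaluation of the second Symanzik polynomial $\phipol$ under the stated kinematics. First I would expand $\phipol$ using its combinatorial form \eqref{eq:graph-polynomials-combinatorial}. Since all internal masses vanish, only the spanning two-forest sum survives, and each such forest splits the external vertices $\set{v_1,v_2,v_3,v_4}$ into two blocks whose squared momentum is fixed by momentum conservation. Collecting the seven bipartitions gives
\begin{equation*}
	\phipol
	= p_1^2 \forestpolynom[G]{\set{1},\set{2,3,4}}
	+ p_2^2 \forestpolynom[G]{\set{2},\set{1,3,4}}
	+ (p_1+p_3)^2 \forestpolynom[G]{\set{1,3},\set{2,4}}
	+ (p_1+p_2)^2 f_{12}
	+ (p_1+p_4)^2 f_{14}
	+ p_3^2 f_3
	+ p_4^2 f_4 .
\end{equation*}

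The crucial observation is that the hypotheses remove exactly the three forest polynomials that do not occur in definition~\ref{def:GfunForestBox}: imposing $p_1^2 = p_2^2 = 0$ kills the first two terms, while $f_{13} = \forestpolynom[G]{\set{1,3},\set{2,4}} = 0$ (equivalently the absence of disjoint $v_1$–$v_3$ and $v_2$–$v_4$ paths, in the sense of lemma~\ref{lemma:forest-identity-ladderbox}) eliminates the $(p_1+p_3)^2$ contribution. What is left is
\begin{equation*}
	\phipol
	= (p_1+p_2)^2 f_{12} + (p_1+p_4)^2 f_{14} + p_3^2 f_3 + p_4^2 f_4 ,
\end{equation*}
a linear form in precisely the four forest polynomials $f = (f_{12},f_{14},f_3,f_4)$ of \eqref{eq:def-forestpolynoms-ladderbox} that define $\GfunForestBox{G}$.

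From here the argument is identical to the three-point case. Assuming $\FR(G)$ converges, I would substitute this $\phipol$ into the parametric representation \eqref{eq:feynman-integral-parametric} and insert $1 = \int_0^{\infty} \delta^{(4)}(f/\psipol - z)\,\dd[4] z$. Recognising $\psipol^{-\dimension/2}\,\delta^{(4)}(f/\psipol - z)\prod_e \SP_e^{\EP_e-1}$ as the integrand of $\GfunForestBox{G}$ from \eqref{eq:def-GfunForestBox}, the Schwinger integrals collapse and leave the Laplace transform
\begin{equation*}
	\FR(G)
	= \frac{1}{\prod_e \Gamma(\EP_e)}
		\int_0^{\infty} \GfunForestBox{G}(z)\,
		\exp\!\big[-(p_1+p_2)^2 z_{12} - (p_1+p_4)^2 z_{14} - p_3^2 z_3 - p_4^2 z_4\big]\,\dd[4] z .
\end{equation*}
Writing $z = \lambda\hat z$ on a hyperplane, the homogeneity \eqref{eq:GfunForestBox-scaling} gives $\GfunForestBox{G}(\lambda\hat z) = \lambda^{\sdd-4}\GfunForestBox{G}(\hat z)$, so the radial $\lambda$-integral yields $\Gamma(\sdd)$ together with the $\sdd$-th power of the linear form in the denominator, which is exactly \eqref{eq:ladderbox-projective}. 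The fully on-shell statement \eqref{eq:ladderbox-projective-onshell} then follows by setting $p_3^2 = p_4^2 = 0$ and factoring $(p_1+p_2)^2$ out of the homogeneous bracket $(p_1+p_2)^2 z_{12} + (p_1+p_4)^2 z_{14}$, which produces the prefactor $(p_1+p_2)^{-2\sdd}$ and the ratio $x = (p_1+p_4)^2/(p_1+p_2)^2$.

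I expect the only delicate point to be the bookkeeping of the first two paragraphs: enumerating the external-vertex bipartitions correctly and verifying that $p_1^2 = p_2^2 = 0$ together with $f_{13} = 0$ strip $\phipol$ down to the four forest polynomials entering $\GfunForestBox{G}$. Everything downstream---the $\delta$-function insertion, the Laplace transform, and the single radial integration producing $\Gamma(\sdd)$---is routine and mirrors the proof of lemma~\ref{lemma:vw3-momenta} step for step.
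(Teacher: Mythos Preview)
Your proposal is correct and follows exactly the same route as the paper's proof: compute $\phipol$ in the given kinematics, recognize the Laplace transform of $\GfunForestBox{G}$ via the $\delta$-function insertion, and use homogeneity \eqref{eq:GfunForestBox-scaling} to reduce to the projective integral. Your account is in fact more explicit than the paper's, which simply states the reduced form of $\phipol$ without enumerating the seven bipartitions or spelling out how $p_1^2=p_2^2=0$ and $f_{13}=0$ eliminate the three unwanted terms.
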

\begin{proof}
	The second Symanzik polynomial of $G$ in the given kinematics reads
	\begin{equation*}
		\phipol 
		=	 (p_1+p_2)^2 f_{12} + (p_1 + p_4)^2 f_{14} + p_3^2 f_3^{} + p_4^2 f_4^{}
	\end{equation*}
	such that for convergent $\FR(G)$, we can insert this into \eqref{eq:feynman-integral-parametric} and find
	\begin{align*}
		\FR(G)
		&=
			\int_0^{\infty}
				\prod_e \frac{\SP_e ^{\EP_e - 1} \dd \SP_e}{\Gamma(\EP_e)}
			\int_0^{\infty} 
			\frac{e^{-\phipol/\psipol}}{\psipol^{\dimension/2}}
			\delta^{(4)}\left( \frac{f}{\psipol} - z \right)
			\dd[4] z
		\\
		&=
			\frac{1}{\prod_e \Gamma(\EP_e)}
			\int_{0}^{\infty} 
			\GfunForestBox{G}(z)
			\exp\left[-(p_1+p_2)^2 z_{12} - (p_1+p_4)^2 z_{14} - p_3^2 z_3^{} - p_4^2 z_4^{} \right]
			\dd[4] z
	\end{align*}
	which is the Laplace transform of $\GfunForestBox{G}$. When we exploit the homogeneity, we can perform one integration and arrive at the projective integral representation \eqref{eq:ladderbox-projective}.
\end{proof}
\begin{example}
	\label{ex:ladderbox-1-onshell}%
	In $\dimension = 6$, starting from \eqref{eq:GfunForestBox-1loop-d4-d6} we obtain the well-known box result
	\begin{align}
		s \FR(B_1)
		&= \int_{0}^{\infty} \frac{\dd z_{12}}{z_{12}+x}
			\int_{0}^{\infty} \dd z_3 \int_{0}^{\infty} \dd z_4
			\ 
			\frac{z_{12}}{[ z_{12}( 1+z_{3}+z_{4} ) +z_{4}z_{3}]^{2}}
		\nonumber\\
		&= \int_{0}^{\infty} \frac{\dd z_{12}}{z_{12}+x}
			\int_{0}^{\infty} \frac{\dd z_3}{(1+z_{3})(z_{12}+z_{3})}
		= \int_{0}^{\infty} \frac{\dd z_{12}\ \ln z_{12}}{(z_{12}+x)(z_{12} - 1)}
		= \frac{\pi^2 + \ln^2 x}{2(1+x)}
		.
		\label{eq:ladderbox-1-onshell} %
	\end{align}
\end{example}

\chapter{Hyperlogarithms}
\label{chap:hyperlogs}%
As we motivate below, the iterated integration of rational functions makes it necessary to introduce a class of special functions called hyperlogarithms. Our goal is to present a self-contained account of their fundamental properties, but with a view towards their application to the evaluation of definite integrals of multivariate rational functions.

The main results in section~\ref{sec:hlog-algorithms} are algorithms for integration, differentiation, analytic continuation and series expansion of hyperlogarithms. Many examples illustrate how these reduce explicit computations to formal manipulations of words and therefore lend themselves to straightforward implementation on a computer.
In the next chapter~\ref{chap:hyperint} we comment on our own realization in a computer algebra system.

Section~\ref{sec:Periods} recalls structural results on the special values of multiple polylogarithms. We prove a parity theorem for values at primitive sixth roots of unity which is needed in the computation of a particularly interesting example in $\phi^4$-theory, see section~\ref{sec:P711}.

In the multivariate setting, hyperlogarithms with rational prefactors are not closed anymore under integration. We recall the necessary criterion of linear reducibility and further issues related to the presence of multiple variables in section~\ref{sec:multiple-integrals}. We propose a variant of the polynomial reduction with compatibility graphs in order to track singularities along the recursive integral formulas from the previous chapter. As an application, we will prove the main results of this thesis in section~\ref{sec:recursion-reducibility}.

Apart from this extension and our general algorithm to compute regularized limits, this entire chapter is essentially based on the work of Francis Brown \cite{Brown:PeriodsFeynmanIntegrals,Brown:TwoPoint,Brown:MZVPeriodsModuliSpaces}. We like to point out and recommend the lecture notes \cite{Brown:IteratedIntegralsQFT} which contain an excellent introduction to iterated integrals adapted to our context, as well as the combined exposition \cite{Brown:ModuliSpacesFeynmanIntegrals} on multiple zeta values, moduli spaces and Feynman integrals.

A reader unfamiliar with polylogarithms should skim section~\ref{sec:polylogarithms} before section~\ref{sec:hlog-algorithms} to get a feeling for typical hyperlogarithms which we use in the examples.

Note that all tensor products in this thesis are understood over $\Q$.

\section{How does one integrate rational functions?}
The functions $\C(z)$ in one variable $z$ are closed under the differential operator $\partial_z=\partial/\partial_z$, but antiderivatives (primitives) $\int \dd z$ do not always exist. Partial fractioning yields the basis
\begin{equation*}
	\C(z)
	=
	\C\left[
		z,\frac{1}{z-\sigma} \colon
		\sigma \in \C
	\right]
	=
	\bigoplus_{n \in \N_0} \C \cdot z^n
	\oplus
	\bigoplus_{\sigma \in \C,n \in \N}
	\C\cdot\left(\frac{1}{z-\sigma}\right)^{n}
\end{equation*}
in which the elements $\frac{1}{z-\sigma}$ can not be integrated inside $\C(z)$. Therefore the logarithm $\log (z-\sigma) = \int_{1}^{z-\sigma} \frac{\dd x}{x}$ is introduced as a first transcendental function. It suffices to find primitives $\int P(z)\ \dd z$ (in short $\int P$) of any rational function $P \in \C(z)$.

But adjoining the logarithms alone does still not provide an algebra closed under taking primitives. Further transcendentals are now needed to integrate $\int P \log(z)$. It was Kummer \cite{Kummer:IntegrationenRationalerFormeln} who first studied such \emph{iterated integrals} $\int P \int Q$ of two arbitrary rational functions $P,Q \in \C(z)$ systematically and showed that they can all be expressed in terms of rationals, logarithms and the dilogarithm $\Li_2(z)$ of \eqref{eq:def-Mpl}.
He also considered triple integrals $\int P \int Q \int R$ and found that they can all be written in terms of the same functions and only one new transcendental: the trilogarithm $\Li_3(z)$.

Such an analysis becomes more and more difficult with an increasing number of integrations, but using partial fractioning and integration by parts we can reduce all integrands to the simple form $\frac{1}{z-\sigma}$. These integrals were mentioned by Poincar\'{e} in his study of linear differential equations with algebraic coefficients \cite{Poincare:GroupesEquationsLineaires}: He remarked that the dependence of their solutions on the coefficients can be expanded in the functions
\begin{equation*}
	\Lambda(z, \sigma_1)
	\defas
	\int_0^z \frac{\dd x}{x - \sigma_1}
	\quad\text{and}\quad
	\Lambda(z, \sigma_1,\ldots,\sigma_{n+1})
	\defas
	\int_0^z \frac{\Lambda(x,\sigma_1,\ldots,\sigma_n)}{x - \sigma_{n+1}} \dd x
	.
\end{equation*}
Lappo-Danilevsky carried this out in great detail \cite{LappoDanilevsky} and called these functions \emph{hyperlogarithms}. He first introduced them in \cite{LappoDanilevsky:CorpsRiemann} by
\begin{equation*}
	\HyperLappo{b}{\sigma}{z}
	\defas
	\int_b^z \frac{\dd x}{x-\sigma}
	\quad\text{and}\quad
	\HyperLappo{b}{\sigma_1,\ldots,\sigma_{n+1}}{z}
	\defas
	\int_b^z \frac{\HyperLappo{b}{\sigma_1,\ldots,\sigma_{n}}{x}}{x-\sigma_{n+1}} \dd x,
\end{equation*}
defined whenever $\set{\sigma_1,\ldots,\sigma_n} \cap \set{b,z} = \emptyset$. We will denote these functions by
\begin{equation}
	\int_b^z \letter{\sigma_n}\!\!\cdots\letter{\sigma_1}
	\defas
	\HyperLappo{b}{\sigma_1,\ldots,\sigma_n}{z}
	\quad\text{or even with}\quad
	\int_{\gamma} \letter{\sigma_n}\!\!\cdots\letter{\sigma_1}
	\label{eq:def-Hlog-from-to}%
\end{equation}
when we want to stress the dependence on the homotopy class of the path of integration
\mbox{$\gamma\colon [0,1] \longrightarrow \C\setminus\set{\sigma_1,\ldots,\sigma_n}$}
from $\gamma(0) = b$ to $\gamma(1)=z$. 

The $\C(z)$-span of all hyperlogarithms is by construction the smallest extension of $\C(z)$ which is closed under taking primitives (basically we just added everything without a primitive as a new transcendental function). What makes this approach sensible is that we understand all these new special functions and their relations perfectly well.

\section{Preliminaries on iterated integrals}
\label{sec:iteratedintegrals}%
The notion \eqref{eq:def-Hlog-from-to} of iterated integrals makes perfect sense for arbitrary one-forms $\omega_i \in \Omega^{1}(X)$ on a smooth manifold $X$. Given a smooth path $\gamma\colon [0,1]\longrightarrow X$ we set
\begin{equation}
	\int_{\gamma} \omega_1\!\cdots\omega_n
	\defas
	\int_0^1 \gamma^{\ast}(\omega_1) (t_1)
	\int_0^{t_1} \gamma^{\ast}(\omega_2)(t_2)
	\int_0^{t_2}
	\!\cdots
	\int_0^{t_{n-1}} \gamma^{\ast}(\omega_n)(t_n)
	,
	\label{eq:def-II}%
\end{equation}
which defines a functional of the path $\gamma$ (it does not depend on its parametrization). In terms of the initial piece $\gamma_t \defas \restrict{\gamma}{[0,t]}$ of $\gamma$ from $\gamma(0)$ to $\gamma(t)$, we can write equivalently
\begin{equation}
	\int_{\gamma} \omega_{1} \!\cdots \omega_{n}
	= \int_0^{1} \left[
		\gamma^{\ast}(\omega_1)(t) \int_{\gamma_t} \omega_2\scalebox{0.5}{\!}\cdots\omega_n
		\right]
	\label{eq:def-II-iterated}%
\end{equation}
to stress the underlying idea of iteration. By linear extension, $\int_{\gamma} w$ is defined for any element $w \in T( \Omega^1(X) )$ of the tensor algebra (we set $\int_{\gamma} \emptyWord \defas 1$ for the empty word $\emptyWord$).

Chen studied iterated integrals in great detail and for example constructed a complex out of them that computes the cohomology of the path space over $X$ \cite{Chen:IteratedPathIntegrals,Chen:II}. It is related to the \emph{bar construction} of differential graded algebras, though we will not need any of this machinery here except for his result on \emph{homotopy invariance}.
In general $\int_{\gamma}w$ depends on the shape of $\gamma$, but we want to construct functions of the endpoint $z = \gamma(1)$ only. So we require that $\int_{\gamma} w = \int_{\gamma'} w$ for all homotopic\footnote{We consider all homotopies relative to the endpoints, so that $\gamma(0)$ and $\gamma(1)$ stay fixed.} paths $\gamma \homotop \gamma'$. The dependence on $\gamma$ then remains only through its homotopy class which reflects that $\int_{\gamma} w$ is a multivalued function of $\gamma$'s endpoints.
\begin{lemma}[Chen's integrability condition]
	For any $w \in T(\Omega^1(X))$, the iterated integral $\int_{\gamma} w$ is homotopy invariant if $\delta(w) = 0$, where $\delta\colon T\left( \Omega(X) \right) \longrightarrow T\left( \Omega(X) \right)$ is defined by
\begin{equation}
	\delta \left( \omega_1\!\cdots\omega_n \right)
	\defas
	\sum_{k=1}^{n} \omega_1\!\cdots(d \omega_i)\cdots \omega_n
	-
	\sum_{k=1}^{n-1} \omega_1\!\cdots\left( \omega_k \wedge \omega_{k+1} \right)\cdots\omega_n
	.%
	\label{eq:def-bar-differential}%
\end{equation}
\end{lemma}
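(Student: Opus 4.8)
The plan is to prove homotopy invariance by differentiating the iterated integral with respect to the homotopy parameter and recognising the answer as the iterated integral of $\delta(w)$. First I would reduce to the generic situation of a smooth homotopy $H\colon [0,1]\times[0,1]\to X$, $(t,s)\mapsto H(t,s)$, relative to the endpoints, so that $H(0,s)$ and $H(1,s)$ are independent of $s$; the passage from continuous to smooth homotopies is standard and I would only remark on it. Writing $\gamma_s\defas H(\cdot,s)$ and $F(s)\defas\int_{\gamma_s} w$, it suffices to show $F'(s)=0$, and by linearity I may take $w=\omega_1\cdots\omega_n$ to be a single word while keeping the vanishing hypothesis imposed on the full tensor $w$.

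Next I would pull the one-forms back to the square, writing $H^{\ast}(\omega_i)=f_i\,dt+g_i\,ds$ on $[0,1]^2$. Then $F(s)=\int_{\Delta_n} f_1(t_1,s)\cdots f_n(t_n,s)\,dt_1\cdots dt_n$ over the simplex $\Delta_n=\{0\le t_1\le\cdots\le t_n\le1\}$, so that $F'(s)=\sum_{k=1}^{n}\int_{\Delta_n} f_1\cdots(\partial_s f_k)\cdots f_n$. The key local input is that $H^{\ast}$ commutes with $d$: comparing the $dt\wedge ds$-components of $d(H^{\ast}\omega_i)=H^{\ast}(d\omega_i)$ yields $\partial_s f_i=\partial_t g_i-(H^{\ast}d\omega_i)_{ts}$. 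Substituting this splits $F'(s)$ into the $-(H^{\ast}d\omega_k)_{ts}$ contributions and the $\partial_t g_k$ contributions.

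The heart of the argument is then an integration by parts in the variable $t_k$ of each $\partial_t g_k$ term over $\Delta_n$. This produces boundary contributions on the diagonal faces $t_k=t_{k-1}$ and $t_k=t_{k+1}$, together with the outer faces $t_k=0$ and $t_k=1$. On the outer faces the curve sits at a fixed endpoint, where $\partial_s H=0$ and hence $g_i=\langle\omega_i,\partial_s H\rangle=0$, so these terms drop out --- this is exactly where the fixed-endpoint assumption enters. The diagonal-face terms pair a $g_k$ with an adjacent $f_{k\pm1}$ at a coincident argument, precisely the contraction encoding $\omega_{k-1}\wedge\omega_k$ and $\omega_k\wedge\omega_{k+1}$, while the $H^{\ast}d\omega_k$ terms reproduce $\omega_1\cdots(d\omega_k)\cdots\omega_n$. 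Reassembling, $F'(s)$ equals (up to an overall sign) the iterated integral along $\gamma_s$ of the mixed expression $\delta(w)$ from \eqref{eq:def-bar-differential}, so $\delta(w)=0$ forces $F'\equiv0$ and $F(0)=F(1)$.

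I expect the main obstacle to be the careful bookkeeping of this integration by parts: tracking the signs of the simplex faces, interpreting the \emph{mixed} iterated integrals in which one letter is a two-form so that the $d\omega$ terms and the $\omega\wedge\omega$ terms land with matching signs, and confirming that they assemble into exactly $\delta(w)$ rather than $\delta(w)$ up to a stray sign. A conceptually cleaner but essentially equivalent route, which I would mention, is to package the integrals into the generating series $S(\gamma)=\sum_w w\int_\gamma w$ valued in the completed tensor algebra and to recognise $S(\gamma)$ as the parallel transport of the connection $d-A$ with $A=\sum_i\hat\omega_i\,\omega_i$; flatness $dA-A\wedge A=0$ is literally Chen's condition, and homotopy invariance of the transport of a flat connection then gives the statement.
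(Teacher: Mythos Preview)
Your proof is correct but takes a genuinely different route from the paper's. The paper argues recursively on the word length: writing $w=\sum_{i,j}\omega_i\omega_j u_{i,j}$ to expose the first two letters, the outermost integral $\int_\gamma w=\int_\gamma \sum_{i,j}\omega_i\,F_{i}$ with $F_i=\sum_j\int_{\gamma_t}\omega_j u_{i,j}$ is homotopy invariant precisely when the integrand one-form $\sum_i\omega_i F_i$ is closed (Poincar\'e's lemma). The $k=1$ part of $\delta(w)=0$ gives exactly this closedness, while the $k>1$ parts (after stripping off the common first letter $\omega_i$) are the integrability conditions $\delta(\sum_j\omega_j u_{i,j})=0$ for the inner integrals $F_i$, so by induction those are already well-defined functions of the endpoint. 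Your approach, by contrast, differentiates $F(s)=\int_{\gamma_s}w$ directly and integrates by parts over the simplex, producing all of $\delta(w)$ in one shot. Your version is the classical Chen computation---more explicit, symmetric in the letters, and naturally tied to the flat-connection picture you mention---at the cost of heavier sign bookkeeping. The paper's recursion is shorter but implicitly uses the freeness of $T(\Omega^1(X))$ to separate the vanishing of $\delta(w)$ into its ``depth-$k$'' components.
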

\begin{proof}
	By Poincar\'{e}'s lemma, homotopy invariance of $\int_{\gamma} w$ is equivalent to the closedness of its integrand $\sum_{i,j} \omega_i \int_{\gamma} \omega_j u_{i,j}$, where we write $w = \sum_{i,j} \omega_i \omega_j u_{i,j}$ for words $u_{i,j}$ to get a grip on the first two letters. The case $k=1$ in \eqref{eq:def-bar-differential} of $\delta(w)=0$ indeed implies
	\begin{equation*}
		\dd \sum_{i,j} \omega_i \int_{\gamma} \omega_{j} u_{i,j}
		= \sum_{i} 
				\dd \omega_i
				\int_{\gamma} \sum_j u_{i,j}
			-\sum_{i,j} \omega_i \wedge \omega_j \int_{\gamma} u_{i,j}
		= 0,
	\end{equation*}
	and the requirement that the integrand is homotopy invariant itself follows recursively from the conditions where $k>1$.
\end{proof}
The crucial result of Chen is that $\delta(w) = 0$ is also \emph{necessary} for the homotopy invariance if we only consider a suitable (small) subspace $V \subset \Omega^{1}(X)$. This restriction is necessary to obtain a result on linear independence of iterated integrals (see lemma~\ref{lemma:hyperlog-independence}). For example, any exact form $\omega_1 = \dd f$ will reduce the iterated integral
\begin{equation*}
	\int_{\gamma} \omega_1\!\cdots \omega_n
	\urel{\eqref{eq:def-II-iterated}}
	\int_0^1 \dd t\ (f\circ \gamma)'(t) \int_{\gamma_t} \omega_2\cdots\omega_n
	=
		f(\gamma(1)) \int_{\gamma} \omega_2 \cdots \omega_n
		- \int_{\gamma} (f \omega_2) \cdots \omega_n
\end{equation*}
to simpler ones, involving only $n-1$ integrations. Hence it suffices to take $V \cap \dd (\Omega^{0}(X)) = \set{0}$. When $X$ has dimension one, all forms are closed and the condition $\delta(w) = 0$ is vacuous as $\dd \omega_i,\omega_i \wedge \omega_j \in \Omega^{2}(X) = \set{0}$, so $V$ is generated by a finite basis $\setexp{\letter{\sigma}}{\sigma \in \Sigma}$ of $1$-forms and their equivalence classes yield a basis of the cohomology $H^{1}(X)$.

This is the case for hyperlogarithms with singularities in some fixed set $\Sigma \subset \C$, where $X = \C\setminus \Sigma$ and $\letter{\sigma} \defas \dd \log (z-\sigma)$. We will always assume that $0 \in \Sigma$.
\nomenclature[Sigma]{$\Sigma$}{a finite set $\Sigma\subset \C$}%
\nomenclature[omega sigma]{$\letter{\sigma}$}{the differential form $\dd z/(z-\sigma)$}

For several variables ($\dim X>1$), homotopy invariance is more complicated to characterize. We return to this question briefly in section~\ref{sec:multiple-variables}.

\subsection{The shuffle (Hopf) algebra}
\label{sec:shuffle-algebra}%
If $V$ has a $\Q$-basis $\setexp{\letter{\sigma}}{\sigma \in \Sigma}$, we write $T(\Sigma) = T(V)$ for its graded tensor algebra
\begin{equation}
	T(\Sigma)
	\defas
	\bigoplus_{n=0}^{\infty}
	T_n
	=
	\bigoplus_{w \in \Sigma^{\times}} \Q w
	\quad\text{with components}\quad
	T_n
	\defas
	V^{\tp n}
	=
	\bigoplus_{w\in \Sigma^n} \Q w
	\label{eq:def-tensor-algebra}%
\end{equation}%
\nomenclature[T(Sigma)]{$T(\Sigma)$}{tensor algebra with letters $\letter{\sigma}$ ($\sigma\in\Sigma$), equation~\eqref{eq:def-tensor-algebra}\nomrefpage}%
of \emph{weight} $n$ which are spanned by the words $w = \letter{\sigma_1}\!\!\cdots\letter{\sigma_n} \in \Sigma^n$ with $\abs{w} = n$ letters. The set
$
	\Sigma^{\times}
	\defas
	\set{\emptyWord}
	\cupdot
	\Sigma
	\cupdot
	\Sigma^2
	\cupdot
	\ldots
$
of all words contains the empty word $\set{\emptyWord} = \Sigma^0$ in weight zero. It acts as the unit for the non-commutative \emph{concatenation product} on $T(\Sigma)$ defined by
\begin{equation}
	\omega_{1}\!\cdots\omega_n
	\cdot\,
	\omega_{n+1}\!\cdots\omega_{n+m}
	\defas
	\omega_1\!\cdots\omega_{n+m}.
	\label{eq:def-concatenation-product}%
\end{equation}%
We also introduce the commutative \emph{shuffle product} defined recursively by
\begin{equation}
	(\omega_1 w_1)
	\shuffle
	(\omega_2 w_2)
	\defas
	\omega_1\left( w_1 \shuffle \omega_2 w_2 \right)
	+
	\omega_2\left( \omega_1 w_1 \shuffle w_2 \right)
	\label{eq:def-shuffle-product}%
\end{equation}%
\nomenclature[v sha w]{$v \shuffle w$}{shuffle product of words, equation~\eqref{eq:def-shuffle-product}, page~\pageref{eq:def-shuffle-product}}%
and $\emptyWord \shuffle w = w \shuffle \emptyWord = w$. It shuffles the letters of both factors in all possible ways that keep the order among the letters of each factor, so we can also write
\begin{gather*}
	\omega_1\!\cdots\omega_n
	\shuffle
	\omega_{n+1}\!\cdots\omega_{n+m}
	=
	\sum_{\pi \in \Shuffles{n}{m}}
	\omega_{\pi(1)}\!\cdots\omega_{\pi(n+m)}
	\qquad\text{where}
	\nonumber\\
	\Shuffles{n}{m}
	\defas
	\setexp{
		\pi \in S_{n+m}
	}{\pi^{-1}(1)<\cdots<\pi^{-1}(n)
		\ \text{and}\ 
		\pi^{-1}(n+1)<\cdots<\pi^{-1}(n+m)
	}%
\end{gather*}
denotes the shuffles of $1,\ldots,n$ and $n+1,\ldots,n+m$ (a special set of permutations). Note that \eqref{eq:def-shuffle-product} also holds with respect to the last letter of a word:
\begin{equation}
	(w_1 \omega_1) \shuffle (w_2 \omega_2)
	=
	(w_1 \shuffle w_2\omega_2)\omega_1
	+
	(w_1 \omega_1 \shuffle w_2)\omega_2
	.%
	\label{eq:shuffle-product-tail} %
\end{equation}
\begin{lemma}
	\label{lemma:iint-shuffle} %
	For any words $v,w \in T(\Sigma)$, we have the product identity
	\begin{equation}
		\int_{\gamma} \!v
		\cdot
		\int_{\gamma} \!w
		=
		\int_{\gamma} (v \shuffle w)
		.
		\label{eq:iint-shuffle} %
	\end{equation}
\end{lemma}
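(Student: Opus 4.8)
The plan is to prove the shuffle identity \eqref{eq:iint-shuffle} by induction on the total weight $\abs{v} + \abs{w}$, using the recursive structure of iterated integrals in \eqref{eq:def-II-iterated} together with the recursive definition \eqref{eq:def-shuffle-product} of the shuffle product. The base cases are immediate: if either word is empty, say $w = \emptyWord$, then $\int_{\gamma} \emptyWord = 1$ and $v \shuffle \emptyWord = v$, so both sides reduce to $\int_{\gamma} v$. This handles in particular the case where the total weight is zero or one.

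For the inductive step I would write $v = \letter{\sigma} v'$ and $w = \letter{\tau} w'$ with leading one-forms $\letter{\sigma}, \letter{\tau}$, so that by \eqref{eq:def-shuffle-product} we have $v \shuffle w = \letter{\sigma}(v' \shuffle w) + \letter{\tau}(v \shuffle w')$. The key device is to differentiate both sides with respect to the upper endpoint, i.e. to consider the initial segment $\gamma_t$ and use the product (Leibniz) rule. Concretely, applying \eqref{eq:def-II-iterated} to the left-hand side gives
\begin{equation*}
	\frac{\dd}{\dd t}\left( \int_{\gamma_t} v \cdot \int_{\gamma_t} w \right)
	= \gamma^{\ast}(\letter{\sigma})(t) \int_{\gamma_t}\! v' \cdot \int_{\gamma_t}\! w
	+ \int_{\gamma_t}\! v \cdot \gamma^{\ast}(\letter{\tau})(t) \int_{\gamma_t}\! w'.
\end{equation*}
By the induction hypothesis, applied to the pairs $(v', w)$ and $(v, w')$ which have strictly smaller total weight, the two products of integrals become $\int_{\gamma_t}(v' \shuffle w)$ and $\int_{\gamma_t}(v \shuffle w')$ respectively. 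Recognizing the right-hand side as exactly the derivative of $\int_{\gamma_t}\bigl(\letter{\sigma}(v' \shuffle w) + \letter{\tau}(v \shuffle w')\bigr) = \int_{\gamma_t}(v \shuffle w)$ via \eqref{eq:def-II-iterated}, I conclude that both sides of \eqref{eq:iint-shuffle} have the same derivative along $\gamma$ and agree at $t = 0$ (where both vanish for nonempty words, or equal $1$ in the empty case), hence they coincide at $t = 1$.

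The main obstacle, though a mild one, is bookkeeping the endpoint dependence carefully: the iterated integrals are functionals of the path, and the Leibniz rule must be applied to the family $t \mapsto \int_{\gamma_t} v$ rather than to a fixed number. One must check that $\frac{\dd}{\dd t}\int_{\gamma_t}\letter{\sigma}v' = \gamma^{\ast}(\letter{\sigma})(t)\int_{\gamma_t}v'$, which is precisely the content of the iterative formula \eqref{eq:def-II-iterated}, and that the shuffle-splitting of $v \shuffle w$ by first letters lines up term-by-term with the two Leibniz contributions. Once this alignment is verified, the induction closes cleanly and no genuine analytic difficulty remains.
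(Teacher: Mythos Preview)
Your proof is correct and follows essentially the same approach as the paper: induction on total weight, peeling off the first letters $v=\letter{\sigma}v'$ and $w=\letter{\tau}w'$, and matching the two terms against the shuffle recursion \eqref{eq:def-shuffle-product}. The only cosmetic difference is that the paper writes the product as a double integral and splits the domain according to $s<t$ versus $s>t$, whereas you differentiate along $\gamma_t$ and invoke Leibniz before integrating back; these are the same argument in integral versus differential form.
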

\begin{proof}
	By linearity it suffices to consider individual words $v=\omega_1 v'$, $w=\omega_2 w'$ and with an induction over the lengths of the words we may already assume
	$
				\int_{\gamma} v'
				\cdot
				\int_{\gamma} w
				= \int_{\gamma} (v' \shuffle w)
	$
	and
	$
				\int_{\gamma} v
				\cdot
				\int_{\gamma} w'
				= \int_{\gamma} (v \shuffle w')
	$. With $\gamma_t \defas \restrict{\gamma}{[0,t]}$,
	\begin{align*}
		\int_{\gamma}\! v \cdot \int_{\gamma} \!w
		&\urel{\eqref{eq:def-II}}
		\int_{0}^{1} \!\!\gamma^{\ast} (\omega_1)(t) \int_{\gamma_t} \!\!v'
		\cdot
		\int_{0}^{1} \!\!\gamma^{\ast} (\omega_2)(s) \int_{\gamma_s} \!\!w'
		\\
		&\urel{}
		\int_{0}^{1} \!\!\gamma^{\ast} (\omega_1)(t) \left[
				\int_{\gamma_t}\!\! v'
				\cdot
				\int_{\gamma_t}\!\! w
		\right]
		+
		\int_{0}^{1} \!\!\gamma^{\ast} (\omega_2)(s) \left[
				\int_{\gamma_s} \!\! v
				\cdot
				\int_{\gamma_s} \!\! w'
		\right]
		\\
		&\urel{} \int_{\gamma} \omega_1\left( v' \shuffle w \right)
			+\int_{\gamma} \omega_2\left( v \shuffle w' \right)
	\end{align*}
	where we split the outermost integrations according to whether $s<t$ or $s>t$.
\end{proof}
So in particular the span of iterated integrals is an algebra and the integration map $w \mapsto \int_{\gamma} w$ is a morphism of algebras, we also say that $\int_{\gamma}$ is a \emph{character} on $T(\Sigma)$. A second very important formula relates integrals along two different paths that can be concatenated. It is often attributed to Chen, but it was stated before by Lappo-Danilevsky for hyperlogarithms \cite[M\'{e}moire 2, \S 2 (13)]{LappoDanilevsky}.
\begin{lemma}
	\label{lemma:path-concatenation}%
	Let $\gamma,\eta\colon [0,1] \longrightarrow X$ denote paths that meet at $\gamma(1) = \eta(0)$ and $\gamma \concat \eta$ their concatenation running from $\gamma(0)$ to $\eta(1)$. For any word $\omega_1\!\cdots\omega_n \in T(\Sigma)$ we have
	\begin{equation}
		\int_{\gamma \concat \eta} \omega_1 \!\cdots \omega_n
		=
		\sum_{k=0}^{n} 
			\int_{\eta} \omega_1 \!\cdots\omega_{k}
			\cdot
			\int_{\gamma} \omega_{k+1} \!\cdots \omega_n
		.
		\label{eq:path-concatenation}%
	\end{equation}
\end{lemma}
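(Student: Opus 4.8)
The plan is to prove the path-concatenation formula \eqref{eq:path-concatenation} by induction on the weight $n$, using the iterative structure \eqref{eq:def-II-iterated} of iterated integrals as the primary tool. The base case $n=0$ is immediate: the empty word contributes $\int_{\gamma\concat\eta}\emptyWord = 1 = 1\cdot 1$, matching the single term $k=0$ on the right-hand side. For the inductive step I would exploit the fact that $\gamma\concat\eta$ traverses $\gamma$ first and then $\eta$, so the outermost integration in \eqref{eq:def-II} splits at the junction point $\gamma(1)=\eta(0)$ according to whether the parameter $t_1$ falls in the $\gamma$-portion or the $\eta$-portion of the concatenated path.

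Concretely, first I would parametrize $\gamma\concat\eta$ so that $t_1\in[0,\tfrac12]$ corresponds to $\gamma$ and $t_1\in[\tfrac12,1]$ to $\eta$ (the result is parametrization-independent). Splitting the outer integral $\int_0^1 (\gamma\concat\eta)^{\ast}(\omega_1)(t_1)\,\big(\!\int_{(\gamma\concat\eta)_{t_1}}\omega_2\cdots\omega_n\big)$ at the midpoint yields two pieces. In the first piece, the initial segment $(\gamma\concat\eta)_{t_1}$ lies entirely within $\gamma$, so $\int_{(\gamma\concat\eta)_{t_1}}\omega_2\cdots\omega_n = \int_{\gamma_{t_1}}\omega_2\cdots\omega_n$, and this piece reassembles into $\int_{\gamma}\omega_1\cdots\omega_n$, which is exactly the $k=0$ term. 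In the second piece, the initial segment is all of $\gamma$ followed by a sub-path of $\eta$; here the inner iterated integral $\int_{\gamma\concat\eta_{s}}\omega_2\cdots\omega_n$ is itself a concatenation integral of weight $n-1$, so I can apply the induction hypothesis to expand it as $\sum_{j=0}^{n-1}\int_{\eta_s}\omega_2\cdots\omega_{j+1}\cdot\int_{\gamma}\omega_{j+2}\cdots\omega_n$.

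Substituting the induction hypothesis into the second piece and recognizing that $\int_0^1 \eta^{\ast}(\omega_1)(s)\int_{\eta_s}\omega_2\cdots\omega_{j+1} = \int_{\eta}\omega_1\cdots\omega_{j+1}$ (again by \eqref{eq:def-II-iterated}), the second piece becomes $\sum_{j=0}^{n-1}\int_{\eta}\omega_1\cdots\omega_{j+1}\cdot\int_{\gamma}\omega_{j+2}\cdots\omega_n$. Re-indexing with $k=j+1$ turns this into $\sum_{k=1}^{n}\int_{\eta}\omega_1\cdots\omega_k\cdot\int_{\gamma}\omega_{k+1}\cdots\omega_n$, which together with the $k=0$ term from the first piece gives precisely the full sum in \eqref{eq:path-concatenation}. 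This completes the induction.

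The main obstacle I anticipate is purely bookkeeping rather than conceptual: one must carefully justify that $\int_{(\gamma\concat\eta)_{t_1}}\omega_2\cdots\omega_n$ genuinely equals a concatenation integral when $t_1$ lies in the $\eta$-range, i.e.\ that the prefix path is $\gamma$ followed by $\eta_s$ and that the convention $\int_{\gamma}\emptyWord=1$ correctly produces the boundary terms $k=0$ and $k=n$. The only subtlety worth stating explicitly is the ordering of the letters: because the first letter $\omega_1$ is integrated outermost (over the whole concatenated path) while the remaining letters $\omega_2\cdots\omega_n$ sit in the inner integrals, the split naturally separates $\omega_1\cdots\omega_k$ onto $\eta$ and $\omega_{k+1}\cdots\omega_n$ onto $\gamma$ — the reversed-looking grouping in \eqref{eq:path-concatenation} is thus forced by the nesting convention of \eqref{eq:def-II} and requires no extra argument beyond tracking it faithfully through the induction.
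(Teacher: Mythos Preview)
Your proof is correct and follows essentially the same route as the paper: both use the standard double-speed parametrization of $\gamma\concat\eta$, split the outermost integration at the midpoint, identify the first piece as the $k=0$ term, and apply the induction hypothesis to the inner integral $\int_{\gamma\concat\eta_u}\omega_2\cdots\omega_n$ in the second piece. The paper's write-up is terser but the argument is the same.
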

\begin{proof}
	Say that $\gamma \concat \eta (t) = \gamma(2t)$ for $t \in [0,\frac{1}{2}]$ and $\eta(2t-1)$ otherwise. For $w'=\omega_2\!\cdots\omega_n$,
	\begin{align*}
		\int_{\gamma \concat \eta} w
		&\urel{\eqref{eq:def-II}}
			\int_0^{1/2} (\gamma \concat \eta)^{\ast} (\omega_1)(t) \int_{(\gamma \concat \eta)_{t}} w'
			+
			\int_{1/2}^{1} (\gamma \concat \eta)^{\ast} (\omega_1)(t) \int_{(\gamma \concat \eta)_{t}} w'
		\\&\urel{}
			\int_{0}^{1} \gamma^{\ast} (\omega_1)(s) \int_{\gamma_s} w'
			+
			\int_{0}^{1} \eta^{\ast} (\omega_1)(u) \int_{\gamma \concat \eta_u} w'
	\end{align*}
	with $s = 2t$ and $u=2t-1$ proves the statement inductively.
\end{proof}
This result will be used in the sequel to compute monodromies, analytic continuations and expansions of hyperlogarithms near singular points. The \emph{deconcatenation coproduct}
\begin{equation}
	\cop\colon T(\Sigma) \longrightarrow T(\Sigma) \tp T(\Sigma),\quad
	\cop \left( \omega_1\!\cdots\omega_n \right)
	\defas
	\sum_{k=0}^n \omega_1\!\cdots\omega_k
		\tp
		\omega_{k+1}\!\cdots\omega_n
	\label{eq:def-deconcatenation-coproduct}%
\end{equation}
is often abbreviated by $\cop(w) = \sum_{(w)} w_{(1)} \tp w_{(2)}$ and endows $T(\Sigma)$ with the structure of a \emph{Hopf algebra} \cite{Manchon,Sweedler,Reutenauer:FreeLieAlgebras}. This means that $\cop (v \shuffle w) = \sum_{(v),(w)} (v_{(1)} \shuffle w_{(1)}) \tp (v_{(2)} \shuffle w_{(2)})$ is multiplicative and furthermore we have the \emph{antipode}
\begin{equation}
	\antipode\colon T(\Sigma) \longrightarrow T(\Sigma),
	\quad
	\omega_1\!\cdots\omega_n
	\mapsto
	(-\omega_n)\cdots(-\omega_1)
	=
	(-1)^n \omega_n\!\cdots\omega_1.
	\label{eq:def-antipode}%
\end{equation}
It is easy to check that $\int_{\gamma} (\antipode w) = \int_{\gamma^{-1}} w$ is the iterated integral along the inverted path $\gamma^{-1}(t) = \gamma(1-t)$. The augmentation $ \Q\cdot \emptyWord \subset T(\Sigma)$ defines the \emph{co-unit}
\begin{equation}
	\counit\colon T(\Sigma) \longrightarrow \Q,
	\quad
	\sum_{w\in\Sigma^{\times}} \lambda_w \cdot w
	\mapsto
	\lambda_{\emptyWord}
	\label{eq:def-counit}%
\end{equation}
which just extracts the coefficient of the empty word. The antipode obeys its defining relation $\antipode \convolution \id = \id \convolution \antipode = \counit$ where the \emph{convolution product}
\begin{equation}
	(f \convolution g) (w)
	\defas
	\sum_{(w)} f\left(w_{(1)}\right) \cdot g\left( w_{(2)} \right)
	\label{eq:def-convolution-product}%
\end{equation}
is defined for any pair of linear maps $f,g\colon T(\Sigma) \longrightarrow \alg$ that take values in a commutative algebra $\alg$. 
Note that $f \convolution g$ is itself linear and if both $f$ and $g$ are characters (multiplicative), so stays
$
	(f\convolution g)(v \shuffle w) 
	=
		(f\convolution g)(v) 
		\cdot
		(f\convolution g)(w) 
$.

Within this Hopf algebra terminology (which is summarized nicely in \cite{Manchon}), the path concatenation formula \eqref{eq:path-concatenation} can be stated as
$ \int_{\gamma \concat \eta} = \int_{\eta} \convolution \int_{\gamma}$.

\subsection{Regularization}
\label{sec:shuffle-regularization}%
It is well known that any shuffle algebra
$
	T(\Sigma) \isomorph \Q\left[ \Lyndons(\Sigma) \right]
$
is free and and an explicit algebra basis is furnished by \emph{Lyndon words} \cite{Radford:BasisShuffleAlgebra}. These are defined with respect to a total order $<$ on $\Sigma$ as those words which are smaller than all their proper suffixes (with respect to the lexicographic order on $\Sigma^{\times}$ induced by $<$),
	\begin{equation}
		\Lyndons(\Sigma)
		\defas
		\setexp{w=\letter{\sigma_1}\!\cdots\letter{\sigma_n} \in \Sigma^{\times}}{
			n > 0
			\ \text{and}\ 
			w < \letter{\sigma_i}\!\cdots\letter{\sigma_n}
			\ \text{for all}\ 
			1 < i \leq n
		}.
		\label{eq:def-lyndon-words} %
	\end{equation}
This means that each $w \in T(\Sigma)$ has a unique representation
as a polynomial in Lyndon words. In the sequel it will prove extremely useful to exploit this structure to rewrite a general word in terms of shuffle products of special words which enjoy additional properties. Without explicitly referring to it every time, we will make frequent use of
\begin{lemma}
	\label{lemma:shuffle-decomposition} %
	For disjoint sets $A,B \subset \Sigma$, any $w \in T(\Sigma)$ admits a unique decomposition
	\begin{equation}
		w = \sum_{a \in A^{\times}} \sum_{b \in B^{\times}} a \shuffle b \shuffle w_{A,B}^{(a,b)}
		\label{eq:shuffle-decomposition} %
	\end{equation}
	into words $a$ ($b$) that consist of letters only in $A$ ($B$) and words $w_{A,B}^{(a,b)}$ which neither begin with a letter in $B$ nor end in a letter from $A$.
\end{lemma}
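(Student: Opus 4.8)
The plan is to prove the statement weight by weight, since both the shuffle products and the claimed decomposition respect the grading \eqref{eq:def-tensor-algebra}; so I fix $n$ and work in the finite-dimensional space $T_n$. Write $C$ for the set of \emph{central} words, namely those that neither begin with a letter of $B$ nor end with a letter of $A$ (with $\emptyWord \in C$). The key assertion is that the shuffle products $a \shuffle b \shuffle c$, as $a$ ranges over $A^{\times}$, $b$ over $B^{\times}$ and $c$ over $C$ with $\abs{a}+\abs{b}+\abs{c}=n$, form a basis of $T_n$. Granting this, grouping the central words into $w_{A,B}^{(a,b)} = \sum_{c} \lambda_{a,b,c}\, c$ reproduces exactly \eqref{eq:shuffle-decomposition}, and the uniqueness is precisely the linear independence half of the basis claim.

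To set up a triangularity argument I first observe that the map $(a,b,c) \mapsto bca$ (concatenation) is a bijection from these index triples onto the words of length $n$: given a word $w$, let $b$ be its longest prefix lying in $B^{\times}$, strip it as $w = b w'$, let $a$ be the longest suffix of $w'$ lying in $A^{\times}$, and write $w' = c a$. Maximality forces $c$ neither to begin with a $B$-letter nor to end with an $A$-letter, so $c \in C$, and the reconstruction is manifestly unique. This bijection will serve as the diagonal of the change-of-basis matrix.

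The heart of the proof is a leading-term lemma: in the expansion of $a \shuffle b \shuffle c$ into words, the word $bca$ occurs with coefficient $1$, while every other word that occurs is strictly smaller in the order that compares words first by the length of their initial $B$-run and then by the length of their final $A$-run (refined arbitrarily to a total order). For the initial run I claim that in any shuffle term a maximal initial block of $B$-letters can only draw on the letters of $b$: the word $a$ contributes no $B$-letters (as $A \cap B = \emptyset$), and any $B$-letter of $c$ is preceded in $c$ by $c$'s first letter, which is not in $B$ and so cannot lie inside an initial all-$B$ block. Hence the initial $B$-run has length at most $\abs{b}$, with equality precisely when $b$ occurs as a genuine prefix. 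Among those terms the tail is a shuffle of $a$ and $c$, and the mirror-image argument (now using that $c$ does not end in an $A$-letter) shows that the final $A$-run has length at most $\abs{a}$, with equality only for the term $bca$, which occurs exactly once. Thus $bca$ is the unique maximal word and its coefficient is $1$.

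With the bijection and the leading-term lemma in hand, the matrix expressing $\{a \shuffle b \shuffle c\}$ in the word basis of $T_n$ is unitriangular with respect to the chosen total order, hence invertible; this simultaneously yields existence and uniqueness in \eqref{eq:shuffle-decomposition}. The only genuine subtlety, and the step I expect to require the most care, is the claim that the two extremal runs draw their letters solely from $b$ and from $a$ respectively; once the disjointness $A \cap B = \emptyset$ and the defining properties of $c$ are exploited as above, the degenerate cases (empty $c$, or $A$ or $B$ empty) fall out automatically and need no separate treatment.
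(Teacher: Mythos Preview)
Your proof is correct and takes a genuinely different route from the paper. The paper proves existence constructively: it invokes the explicit formula of lemma~\ref{lemma:shuffle-decomposition-formula} twice, first to pull off the $A$-tail and then the $B$-head, and checks (using $A \cap B = \emptyset$) that the second step does not spoil the first. Uniqueness is then argued separately by picking $(a,b)$ of maximal total length $\abs{a}+\abs{b}$ with $w^{(a,b)}_{A,B} \neq 0$ and observing that only the term $b \shuffle w^{(a,b)}_{A,B} \shuffle a$ can produce words with prefix $b$ and suffix $a$. Your approach instead establishes in one stroke that $\{a \shuffle b \shuffle c\}$ is a basis of $T_n$, via the bijection $(a,b,c) \mapsto bca$ and the unitriangularity coming from your leading-term lemma. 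The two arguments share the same combinatorial core---the word $bca$ as the extremal term of $a \shuffle b \shuffle c$---but package it differently: the paper's version is more algorithmic and feeds directly into the explicit regularization formulas \eqref{eq:shuffle-regularization-formula} and \eqref{eq:reglim-convolution-formula} used downstream, whereas yours gives the cleaner structural statement that the shuffles form a basis, at the cost of not producing an explicit inverse.
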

An explicit formula to compute \eqref{eq:shuffle-decomposition} is provided by
\begin{lemma}
	\label{lemma:shuffle-decomposition-formula}%
	Let $w$ be any word and $\letter{\sigma}$ one of its letters such that $w = u\letter{\sigma} a$ for some word $u$ and another word $a = \letter{a_1}\!\!\cdots\letter{a_n}$. Then we have the identity
	\begin{equation}
		w
		=
		\sum_{i=0}^n \left[ u \shuffle (-\letter{a_i}) \cdots (-\letter{a_1}) \right] \letter{\sigma}
			\shuffle \letter{a_{i+1}}\!\!\cdots\letter{a_n}
		=
		\sum_{(a)} \left[ u \shuffle S\big(a_{(1)}\big) \right] \letter{\sigma} \shuffle a_{(2)}
		.
		\label{eq:shuffle-decomposition-formula-tail} %
	\end{equation}
	The same holds in the reversed form $a \letter{\sigma} u = \sum_{(a)} a_{(1)} \shuffle \letter{\sigma} \left[ \big(S a_{(2)}\big) \shuffle u \right]$.
\end{lemma}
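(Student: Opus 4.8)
The plan is to prove lemma~\ref{lemma:shuffle-decomposition-formula} by induction on the length of the tail word $a = \letter{a_1}\!\!\cdots\letter{a_n}$, using the recursive definition \eqref{eq:def-shuffle-product} of the shuffle product. The base case $n=0$ reads $w = u\letter{\sigma} = u \shuffle \emptyWord \cdot \letter{\sigma}$, which is trivially true since the empty antipode $S(\emptyWord) = \emptyWord$ and $\emptyWord \shuffle u = u$. The content of the lemma is that one can \emph{peel off} the trailing letters $\letter{a_i}$ from the original word and redistribute them, at the cost of antipode signs, into shuffles. The natural strategy is to isolate the claimed right-hand side and show that it telescopes back to $w$.

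First I would establish the reversed identity $a\letter{\sigma} u = \sum_{(a)} a_{(1)} \shuffle \letter{\sigma}\big[(Sa_{(2)}) \shuffle u\big]$ and then obtain the displayed \eqref{eq:shuffle-decomposition-formula-tail} from it by applying the antipode \eqref{eq:def-antipode}, which reverses words and multiplies by $(-1)^{\abs{w}}$, together with the fact from section~\ref{sec:shuffle-algebra} that $S$ is an algebra anti-homomorphism for the shuffle product (indeed it is the Hopf-algebra antipode, so $S(v \shuffle w) = S(v) \shuffle S(w)$ and $S\circ S = \id$). Concretely, writing each word read backwards and using $S^2 = \id$ converts the tail-statement into the head-statement; the two Sweedler-notation forms on the right of \eqref{eq:shuffle-decomposition-formula-tail} are just the explicit and compact renderings of the same sum, since the deconcatenation coproduct \eqref{eq:def-deconcatenation-coproduct} splits $a = \letter{a_1}\!\!\cdots\letter{a_n}$ into $a_{(1)} = \letter{a_1}\!\!\cdots\letter{a_i}$ and $a_{(2)} = \letter{a_{i+1}}\!\!\cdots\letter{a_n}$, and $S(a_{(1)}) = (-\letter{a_i})\cdots(-\letter{a_1})$.

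For the inductive step on the head form, I would write $a = \letter{a_1} a'$ and apply \eqref{eq:def-shuffle-product} to the outermost shuffle $[u \shuffle S(a_{(1)})]\letter{\sigma} \shuffle a_{(2)}$, separating the terms where $a_{(2)}$ is empty from those where it begins with $\letter{a_{i+1}}$. The antipode relation $S \convolution \id = \counit$ from \eqref{eq:def-convolution-product}, i.e.\ $\sum_{(a)} S(a_{(1)}) \shuffle a_{(2)} = \counit(a)\emptyWord$, is the algebraic engine that makes the internal signed sum collapse: all intermediate contributions cancel pairwise and only the original word $w = u\letter{\sigma} a$ survives. The cleanest route is therefore to recognize the claimed sum as the image of $w$ under an identity operator built from $\id = \id \convolution (S \convolution \id)$ acting on the tail, so that the lemma is really a restatement of the defining antipode relation specialized to this factored word.

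The main obstacle I anticipate is bookkeeping rather than conceptual: one must carefully track how the single distinguished letter $\letter{\sigma}$ stays pinned in position while the letters of $a$ get shuffled around it, and verify that the signs produced by $S$ match exactly what \eqref{eq:def-shuffle-product} generates when the recursion is unrolled. The subtlety is that $\letter{\sigma}$ and the $\letter{a_j}$ are treated asymmetrically (the $\letter{a_j}$ to the left of $\letter{\sigma}$ acquire signs, those to the right do not), so a naive appeal to multiplicativity of $S$ is not quite enough and the positional argument must be made explicit. Once the reversed form is proven by the telescoping/antipode argument, deducing the stated form is immediate from the anti-homomorphism property of $S$, so I would invest the bulk of the effort in the induction for the reversed identity.
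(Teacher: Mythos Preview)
Your proposal is essentially correct and rests on the same two ingredients as the paper's proof: induction on the length $n$ of $a$, and the antipode identity $S \convolution \id = \counit$ to make the unwanted terms collapse. The paper proves the tail form \eqref{eq:shuffle-decomposition-formula-tail} directly by expanding the outer shuffle with respect to the \emph{last} letter via \eqref{eq:shuffle-product-tail}: the terms ending in $\letter{a_n}$ reassemble to $u\letter{\sigma}\letter{a_1}\!\cdots\letter{a_{n-1}}\cdot\letter{a_n}$ by the induction hypothesis, while the terms ending in $\letter{\sigma}$ carry the factor $\sum_{(a)} S(a_{(1)}) \shuffle a_{(2)} = \counit(a) = 0$ and vanish.

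Your route differs only in organisation, and in one place is slightly tangled: you propose to prove the reversed (head) form first and then deduce the tail form by applying $S$, yet in your inductive step you write down the tail-form expression $[u \shuffle S(a_{(1)})]\letter{\sigma} \shuffle a_{(2)}$ and peel off the \emph{first} letter $a = \letter{a_1}a'$. For the tail form the natural move is to peel off $\letter{a_n}$ using \eqref{eq:shuffle-product-tail}; for the head form one would peel off $\letter{a_1}$ using \eqref{eq:def-shuffle-product}. Either works, but mixing them obscures the argument. The detour of proving one form and deducing the other via the antipode is also unnecessary: the two statements are mirror images under word reversal (which is a shuffle-algebra automorphism), so once one is established the other follows by symmetry without invoking signs at all. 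The paper simply proves the tail form and remarks that the reversed form holds the same way.
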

\begin{proof}
	The statement is trivial for $n=0$ and we apply induction over $n$: For $n>0$, the outer shuffle product in \eqref{eq:shuffle-decomposition-formula-tail} decomposes with respect to the last letter using \eqref{eq:shuffle-product-tail} into
	\begin{multline*}
		\left\{ 
				\sum_{i=0}^{n-1} 
						\left[ u \shuffle (-\letter{a_i})\cdots(-\letter{a_1}) \right] \letter{\sigma}
						\shuffle
						\letter{a_{i+1}}\!\!\cdots\letter{a_{n-1}}
		\right\}\letter{a_n}
	\\
	+ \left\{ 
				u
				\shuffle
				\sum_{i=0}^n (-\letter{a_i})\cdots(-\letter{a_1}) \shuffle \letter{a_{i+1}}\!\!\cdots\letter{a_n}
		\right\} \letter{\sigma}
		.
	\end{multline*}
	The first summand is the desired $u\letter{\sigma}\letter{a_1}\!\!\cdots\letter{a_{n-1}}\letter{a_n}$ by the induction hypothesis, whereas the second summand vanishes since it is 
	$
		\left\{
			u
			\shuffle
			(\antipode \convolution \id) (\letter{a_1}\!\!\cdots\letter{a_n})
		\right\}
		\letter{\sigma}
	$
	and $S \convolution \id = \counit$ vanishes on any non-empty word.
\end{proof}
\begin{proof}[Proof of lemma~\ref{lemma:shuffle-decomposition}]
	First use \eqref{eq:shuffle-decomposition-formula-tail} to write $w = \sum_{a \in A^{\times}} a \shuffle w^{(a)}$ such that with $w^{(a)}$ is free of words that end in $A$. Then express each $w^{(a)} = \sum_{b \in B^{\times}} b \shuffle w^{(a,b)}$ with the reversed form of lemma~\ref{lemma:shuffle-decomposition-formula} such that $w^{(a,b)}$ does not contain words that begin with a letter in $B$.
Note that the last letter of any word in $w^{(a,b)}$ is either the last letter of $w^{(a)}$ or some letter in $B$, and therefore not in $A$. So indeed we obtained an expansion of the form \eqref{eq:shuffle-decomposition}.

To finish the proof of lemma~\ref{lemma:shuffle-decomposition}, we must only realize the uniqueness of all $w^{(a,b)}$. So assume that for $w=0$, there would be some non-zero $w^{(a,b)}$. 
Pick $a$ and $b$ of maximal total length $\abs{a}+\abs{b}$ such that $w^{(a,b)}\neq0$. Then the only words on the right-hand side of \eqref{eq:shuffle-decomposition} which have $b$ as a prefix and $a$ as a suffix are precisely $b w^{(a,b)} a$, stemming from $b \shuffle w^{(a,b)} \shuffle a$ (and no other summand). But these words must have coefficient zero ($w=0$), thus $w^{(a,b)}=0$ contradicts our choice of $a$ and $b$.
\end{proof}
\begin{definition}
	\label{def:shuffle-regularization} %
	For disjoint sets $A,B \subset \Sigma$, the \emph{shuffle regularization} is the coefficient of the empty words $a=b=\emptyWord$ in the decomposition \eqref{eq:shuffle-decomposition}:
	\begin{equation}
		\WordReg{A}{B}\colon T(\Sigma) \longrightarrow T(\Sigma)
		,\quad
		w \mapsto
		w_{A,B}^{(\emptyWord,\emptyWord)}
		.
		\label{eq:def-shuffle-regularization} %
	\end{equation}
	We will mostly consider cases with $\abs{A}, \abs{B} \leq 1$ and write $\WordReg{\sigma}{\tau}$ instead of $\WordReg{\set{\sigma}}{\set{\tau}}$. Furthermore, empty sets are suppressed: $\WordReg{\sigma}{} \defas \WordReg{\sigma}{\emptyset}$ and $\WordReg{}{\tau} \defas \WordReg{\emptyset}{\tau}$.
\end{definition}%
\begin{remark}
	The shuffle regularization fulfils the following properties, which are immediate consequences of the definitions:
	\begin{enumerate}
		\item $\WordReg{A}{B} (w \shuffle w') = \WordReg{A}{B}(w) \shuffle \WordReg{A}{B}(w')$ for all $w,w' \in \Sigma^{\times}$ ($\WordReg{A}{B}$ is a character),

		\item $\WordReg{A}{B}(w) = w$ for all words $w = \letter{\sigma_1}\!\!\cdots\letter{\sigma_n}$ with $\sigma_1 \notin B$ and $\sigma_n \notin A$,

		\item $\WordReg{A}{B}(w) = 0$ when $1 \neq w \in A^{\times} \cup B^{\times}$ has only letters in $A$ or only letters in $B$,

		\item	$
			\WordReg{A}{\emptyset} \circ \WordReg{\emptyset}{B}
			=	\WordReg{A}{B}
			= \WordReg{\emptyset}{B} \circ \WordReg{A}{\emptyset}
			$ commute and

		\item $\WordReg{A}{B} = \WordReg{A}{B} \circ \WordReg{A}{B}$ is a projection.
	\end{enumerate}
	For the multiplicativity note that by \eqref{eq:def-shuffle-product} and \eqref{eq:shuffle-product-tail}, the first (last) letter of $ w_{A,B}^{(a,b)} \shuffle {w'}_{A,B}^{(a,b)}$ is the first (last) letter of either factor and thus not in $B$ ($A$).
\end{remark}
\begin{corollary}
	\label{corollary:shuffle-regularization}%
	Lemma~\ref{lemma:shuffle-decomposition-formula} says that for any words $u\in T(\Sigma)$, $a \in T(A)$ and $\sigma\notin A$, the regularizations are explicitly computed by the formulas
	\begin{equation}
		\WordReg{A}{} \left(u\letter{\sigma}a \right) 
		= \left(u \shuffle \antipode a \right) \letter{\sigma}
		\quad\text{and}\quad
		\WordReg{}{A} \left(a\letter{\sigma}u \right) 
		= \letter{\sigma} \left(u \shuffle \antipode a \right)
		.
		\label{eq:shuffle-regularization-formula}%
	\end{equation}
	Furthermore, the identity \eqref{eq:shuffle-decomposition-formula-tail} translates into
	\begin{equation}
		u\letter{\sigma}a
		= \sum_{(a)} \WordReg{A}{}\left(u \letter{\sigma} a_{(1)} \right) \shuffle a_{(2)}
		\quad\text{and}\quad
		a\letter{\sigma}u
		= \sum_{(a)} a_{(1)} \shuffle \WordReg{}{A}\left(a_{(2)}\letter{\sigma}u \right)
		.
		\label{eq:shuffle-regularization-identity}%
	\end{equation}
\end{corollary}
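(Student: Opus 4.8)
The plan is to recognize that the explicit expansion \eqref{eq:shuffle-decomposition-formula-tail} furnished by lemma~\ref{lemma:shuffle-decomposition-formula} is, under the extra hypotheses $a \in T(A)$ and $\sigma \notin A$, already the canonical decomposition of lemma~\ref{lemma:shuffle-decomposition} relative to the pair $(A,\emptyset)$. Once this is established, both closed formulas in \eqref{eq:shuffle-regularization-formula} are obtained by simply reading off the coefficient of the empty word in the sense of definition~\ref{def:shuffle-regularization}, and the two identities \eqref{eq:shuffle-regularization-identity} drop out by re-inserting these formulas term by term.

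First I would apply \eqref{eq:shuffle-decomposition-formula-tail} to $w = u\letter{\sigma}a$, obtaining $w = \sum_{(a)} [u \shuffle \antipode(a_{(1)})]\letter{\sigma} \shuffle a_{(2)}$. I would then check that this sum meets the requirements of the normal form \eqref{eq:shuffle-decomposition} with $\mathcal{A} = A$ and $\mathcal{B} = \emptyset$: since $a \in T(A)$, every suffix $a_{(2)}$ produced by the deconcatenation coproduct lies in $A^{\times}$, and these suffixes have pairwise distinct lengths, so they index distinct words in $A^{\times}$; each complementary factor $[u\shuffle \antipode(a_{(1)})]\letter{\sigma}$ is a linear combination of words terminating in $\letter{\sigma}$, which (as $\sigma\notin A$) never end in a letter of $A$; and the condition on the leading letter is vacuous because $\mathcal{B} = \emptyset$. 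By the uniqueness part of lemma~\ref{lemma:shuffle-decomposition}, this is therefore the canonical decomposition, and extracting the term with $a_{(2)} = \emptyWord$ (equivalently $a_{(1)} = a$) yields $\WordReg{A}{}(u\letter{\sigma}a) = (u\shuffle \antipode a)\letter{\sigma}$. The second formula in \eqref{eq:shuffle-regularization-formula} follows symmetrically from the reversed form of \eqref{eq:shuffle-decomposition-formula-tail} taken relative to $(\emptyset, A)$, using the commutativity of $\shuffle$ to rewrite $(\antipode a)\shuffle u = u\shuffle \antipode a$.

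Having the closed forms, I would prove \eqref{eq:shuffle-regularization-identity} by inserting them back into \eqref{eq:shuffle-decomposition-formula-tail}. Because each prefix $a_{(1)}$ again lies in $T(A)$, the first formula gives $\WordReg{A}{}(u\letter{\sigma}a_{(1)}) = [u\shuffle \antipode(a_{(1)})]\letter{\sigma}$, so the right-hand side $\sum_{(a)} \WordReg{A}{}(u\letter{\sigma}a_{(1)})\shuffle a_{(2)}$ coincides termwise with the right-hand side of \eqref{eq:shuffle-decomposition-formula-tail} and hence equals $u\letter{\sigma}a$. The reversed identity is obtained in the same way from the reversed form of \eqref{eq:shuffle-decomposition-formula-tail} together with the second regularization formula.

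The only genuinely delicate point is the appeal to uniqueness: one must make sure that the explicit expansion really matches the normal form \eqref{eq:shuffle-decomposition} coefficient by coefficient, which hinges on the suffixes (respectively prefixes) supplied by the deconcatenation coproduct being distinct words --- guaranteed here by their distinct lengths --- and on the complementary factors satisfying the prescribed boundary-letter conditions. Everything else is formal manipulation inside the shuffle Hopf algebra and requires no further input.
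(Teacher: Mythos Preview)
Your proposal is correct and matches the paper's approach: the corollary is stated as an immediate consequence of lemma~\ref{lemma:shuffle-decomposition-formula}, and you have spelled out precisely why the expansion \eqref{eq:shuffle-decomposition-formula-tail} already realises the unique decomposition \eqref{eq:shuffle-decomposition} for the pair $(A,\emptyset)$, so that the $a_{(2)}=\emptyWord$ term is $\WordReg{A}{}(u\letter{\sigma}a)$. Your verification that the identities \eqref{eq:shuffle-regularization-identity} follow by reinserting the closed formulas termwise is likewise the intended reading.
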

In Hopf algebra terms, this says that
$
	\id 
	= \WordReg{A}{} \convolution \ProjectOn{A}
	= \ProjectOn{A} \convolution \WordReg{}{A}
$
when we let $\ProjectOn{A}\colon T(\Sigma) \longrightarrow T(A)$ denote the natural projection on words with all letters in $A$. It is easy to prove a manifest form of the shuffle decomposition \eqref{eq:shuffle-decomposition} as the identity
\begin{equation}
	\id
	= \ProjectOn{B} \convolution \WordReg{A}{B} \convolution \ProjectOn{A}
	,\quad\text{equivalently}\quad
	\WordReg{A}{B}
	= \ProjectOn{B}^{\convolution -1} \convolution \id \convolution \ProjectOn{A}^{\convolution-1}.
	\label{eq:reglim-convolution-formula}%
\end{equation}
Write any word $w = \letter{b_1}\!\!\cdots\letter{b_r} u \letter{a_1}\!\!\cdots\letter{a_s}$ such that $b_1,\ldots,b_r \in B$ and $a_1,\ldots,a_s \in A$ but $u$ neither begins in $B$ nor ends in $A$. Then with \eqref{eq:def-antipode} we get
\begin{equation}
	\WordReg{A}{B}(w)
	=
	\sum_{k=0}^r \sum_{l=0}^s (-1)^{k+s-l} \letter{b_k}\!\!\cdots\letter{b_1} \shuffle (\letter{b_{k+1}}\!\!\cdots\letter{b_r} u \letter{a_1}\!\!\cdots\letter{a_l}) \shuffle \letter{a_s}\!\!\cdots\letter{a_{l+1}}
	.
	\label{eq:wordreg-general}%
\end{equation}

\subsection{Multiple variables}
\label{sec:multiple-variables}%
\begin{definition}
	\label{def:bar-objects}%
	Let $S \subset \Q[z_1,\ldots,z_n]$ denote a set of irreducible polynomials in $n$ variables and $X_S \defas \Affine^n \setminus \bigcup_{f \in S} \Vanishing(f)$ the variety given as the complement of the hypersurfaces defined by their vanishing loci $\Vanishing(f) \subset \Affine^n$. The set%
\nomenclature[S]{$S$}{a set $S\subset \Q[z_1,\ldots,z_n]$ of irreducible polynomials}%
\nomenclature[V(f)]{$\Vanishing(f)$}{the zero set $\Vanishing(f) = \setexp{z}{f(z)=0}$ of a polynomial $f$}%
\nomenclature[XS]{$X_S$}{the complement $X_S = \Affine^n \setminus \bigcup_{f\in S} \Vanishing(f)$ of $S$, page~\pageref{sec:multiple-variables}}
	\begin{equation}
		\Omega_S
		\defas
		\setexp{\letter{f}}{f \in S}
		\subset
		\Omega^1\left( X_S \right)
		,\quad
		\letter{f} \defas \dd \log(f)
		\label{eq:def-bar-object-forms}%
	\end{equation}%
\nomenclature[omega f]{$\letter{f}$}{differential form $\letter{f}=\dd \log(f)$, equation~\eqref{eq:def-bar-object-forms}\nomrefpage}%
	of smooth, closed but not exact one-forms generates a space of homotopy invariant iterated integrals corresponding to the \emph{integrable words}
	\begin{equation}
		\BarObjects(S)
		\defas
		\setexp{w \in T(\Omega_S)}{\delta(w) = 0}.
		\label{eq:def-bar-objects}%
	\end{equation}%
\end{definition}%
\nomenclature[B(S)]{$\BarObjects(S)$}{integrable words of $T(\Omega_S)$, equation~\eqref{eq:def-bar-objects}\nomrefpage}%
In general, $\BarObjects(S) \subset T(\Omega_S)$ forms a non-trivial Hopf subalgebra (one can check that it is closed under $\cop$ and $\shuffle$).
One key observation in \cite{Brown:MZVPeriodsModuliSpaces} is that if all $f = z_1 f^{1} + f_1 \in S$ are linear in $z_1$, then one can embed\footnote{%
Explicitly, this is the map $(\HlogProjection{z_1} \tp \ProjectOn{z_1=0}) \cop$ from section~\ref{sec:ii-several-variables}.}
\begin{equation}
	\BarObjects(S)
	\hookrightarrow
	T(\Sigma_1)
	\tp
	\BarObjects\left( \restrict{S}{z_1=0} \right)
	\quad\text{with}\quad
	\Sigma_1
	\defas
	\setexp{-\frac{f_{1}}{f^{1}}}{f \in S \quad\text{and}\quad f^{1} \neq 0}
	\label{eq:barobjects-decomposition} %
\end{equation}
into a product of a (simple) hyperlogarithm algebra and iterated integrals in one variable less. The set $\restrict{S}{z_1=0} \subset \Q[z_2,\ldots,z_n]$ holds the irreducible factors of all constant parts $f_1 = \restrict{f}{z_1=0}$ of the polynomials $f \in S$. If these are all linear in $z_2$, we can continue this process and may eventually arrive at a full factorization
\begin{equation}
	\BarObjects(S)
	\hookrightarrow
	T(\Sigma_1) \tp \cdots \tp T(\Sigma_n)
	\label{eq:barobjects-decomposition-full}%
\end{equation}
into hyperlogarithm algebras. Indeed this is the key concept of this chapter, because it allows us to use the simple hyperlogarithms to describe iterated integrals in many variables without much effort.

Francis Brown originally developed this for the \hypertarget{eqmodulispace}{moduli spaces}\label{eqmodulispace} $\Moduli{0}{n}$ of the Riemann sphere $\RSphere = \C \cup \set{\infty}$ with $n$ marked points \cite{Brown:MZVPeriodsModuliSpaces}. Explicitly,
$\Moduli{0}{n+3} = X_S$
is characterized by the hypersurfaces
$
	S 
	= 
	\setexp{z_i,1-z_i}{1 \leq i \leq n} \cup \setexp{z_i - z_j}{1 \leq i < j \leq n}
$
such that the factorization \eqref{eq:barobjects-decomposition-full} applies with $\Sigma_i = \set{0,1,z_1,\ldots,z_{i-1}}$ (using \eqref{eq:barobjects-decomposition} first for $z_n$, then for $z_{n-1}$ and so on). Moreover, since the projections $\pi_i$ forgetting $z_i$ in
\begin{equation*}
	\Moduli{0}{n+3} \xrightarrow{\pi_n} \Moduli{0}{n+2} 
	\xrightarrow{\pi_{n-1}} \cdots \xrightarrow{\pi_2}
	\Moduli{0}{4} \xrightarrow{\pi_1} \Moduli{0}{3}
\end{equation*}%
\nomenclature[M 0 n]{$\Moduli{0}{n}$}{moduli space of $\RSphere$ with $n$ marked points\nomrefpage}%
are actually fibrations, the maps in \eqref{eq:barobjects-decomposition} and \eqref{eq:barobjects-decomposition-full} are surjective and therefore isomorphisms. These lift to (non-canonical) isomorphisms of the associated algebras of iterated integrals (after choosing base points and tensoring with $\C$), which can be computed effectively with the \emph{symbol map} of \cite{BognerBrown:SymbolicIntegration,BognerBrown:GenusZero}.

Note that the factorization \eqref{eq:barobjects-decomposition-full} can always be enforced if we allow for algebraic zeros $\sigma \in \Sigma_i \subset \overline{\Q[z_{i+1},\ldots, z_n]}$ to factorize also non-linear polynomials $f \in \Q[z_i,\ldots,z_n]$ completely with respect to $z_i$. The point is that subsequent integrals of such functions in general leave the space of hyperlogarithms and introduce new special functions. A simpler case occurs if we only need to extend the constants from $\Q$ to $\overline{\Q}$ and indeed we give an example adjoining sixth roots of unity in section~\ref{sec:P711}.
\begin{figure}
	\centering
	$
		\Graph[0.4]{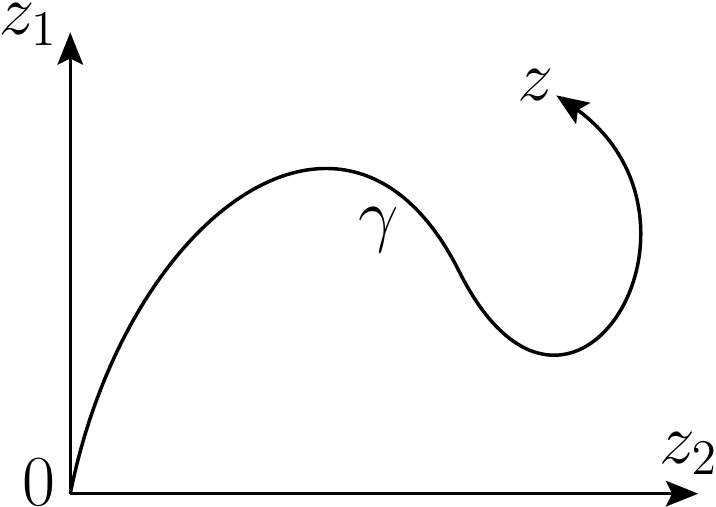}
		\rightarrow
		\Graph[0.4]{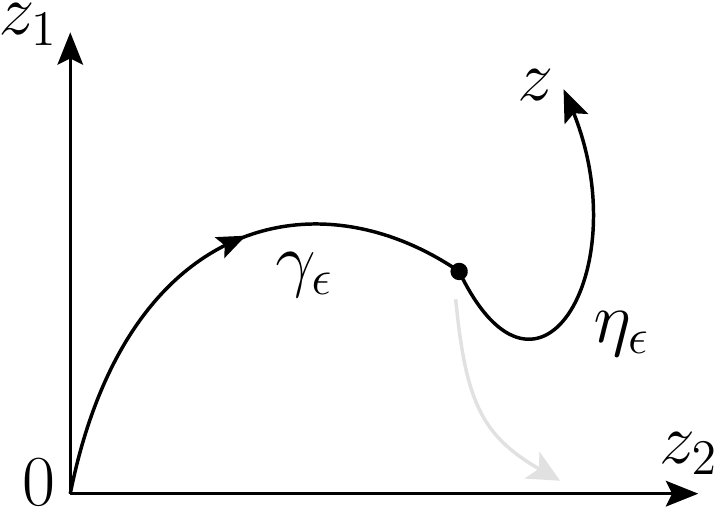}
		\rightarrow
		\Graph[0.4]{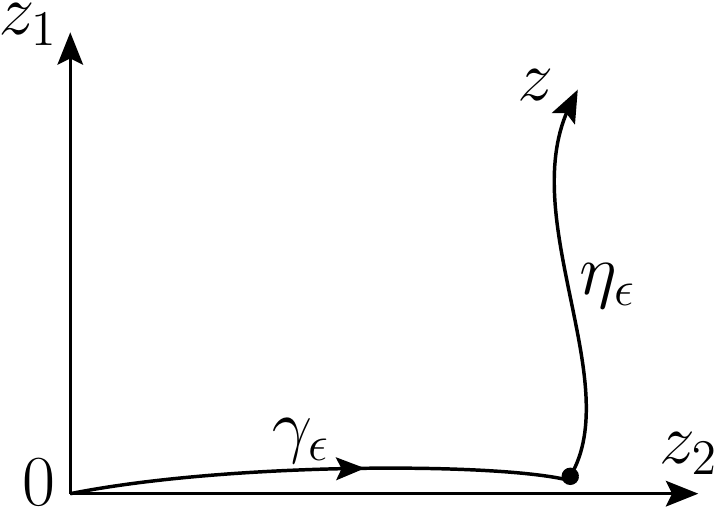}
		\rightarrow
		\Graph[0.4]{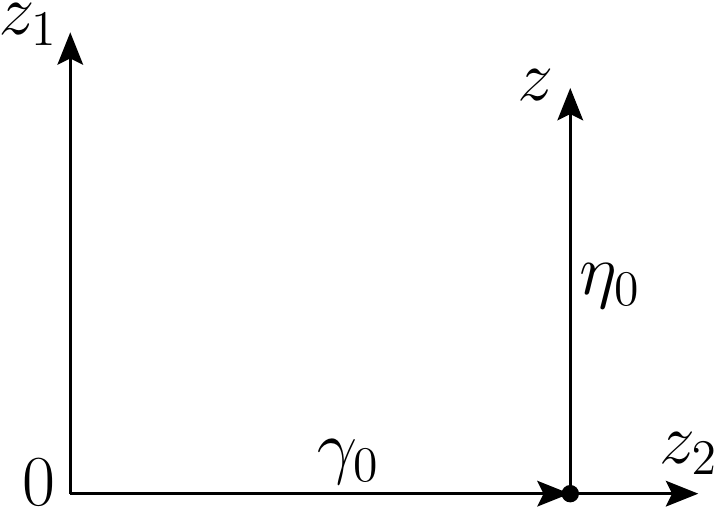}
	$
	\caption[Decomposition of an iterated integral into fibre and base]{A homotopy from the integration path $\gamma \homotop \gamma_{\epsilon} \concat \eta_{\epsilon}$ to the horizontal $\gamma_0$ $(z_1=0)$ followed by the vertical line $\eta_0$ with constant $z_2$, \ldots, $z_n$ yields the factorization \eqref{eq:barobjects-decomposition} of iterated integrals of $z=(z_1,\ldots,z_n)$ into hyperlogarithms along the \emph{fibre} $z_1$ and iterated integrals of $(z_2,\ldots, z_n)$ in the \emph{base} $\set{z_1 = 0}$.}%
	\label{fig:fibration-contour-deformation}%
\end{figure}
\begin{remark}
The idea behind \eqref{eq:barobjects-decomposition} is to exploit homotopy invariance of $\int_{\gamma} w$ for any integrable word $w \in \BarObjects(S)$. We may deform the path $\gamma \homotop \gamma_{\epsilon} \concat \eta_{\epsilon}$ continuously such that $\eta_0(t) = (tz_1, z_2, \ldots, z_n)$ moves only in the $z_1$ direction and $\gamma_0(t) = (0,\gamma^{(2)}(t), \ldots, \gamma^{(n)}(t))$ has constantly $\gamma_0^{(1)}(t) = 0$ as shown in figure~\ref{fig:fibration-contour-deformation}. According to lemma~\ref{lemma:path-concatenation},
\begin{equation}
	\int_{\gamma} \! w
	= \sum_{(w)} \int_{\eta_{\epsilon}}\!\!\! w_{(1)}\cdot \int_{\gamma_{\epsilon}}\!\!\! w_{(2)}
	\quad\text{for all $\epsilon > 0$, so}\quad
	\int_{\gamma} \!w
	= \lim_{\epsilon\rightarrow 0}\ 
	\sum_{(w)} \int_{\eta_{\epsilon}}\!\!\! w_{(1)}\cdot \int_{\gamma_{\epsilon}}\!\!\! w_{(2)}
	.
	\label{eq:iteratedintegrals-decomposition-limit}%
\end{equation}
If both factors stay finite individually when $\epsilon \rightarrow 0$ we are done: 
\begin{itemize}
	\item $\int_{\eta_0} w_{(1)}$ is a hyperlogarithm with differential forms $\dd z_1/(z_1 - f_1/f^1)$ on the one-dimensional fibre of the projection $z \mapsto (z_2,\ldots,z_n)$ and
	\item $\int_{\gamma_0} w_{(2)}$ does not depend on $z_1$ at all with its forms $ \dd \log (f_1)$ on the base $\set{z_1=0}$ of the projection.
	\end{itemize}%
In the next section we show how to deal with divergences in the limit $\epsilon \rightarrow 0$. Note that in general, the so-defined inclusion \eqref{eq:barobjects-decomposition} is not surjective; to guarantee an isomorphism we need
$\pi_1\colon X_S \longrightarrow X_{\restrict{S}{z_1=0}}$
to be a fibration.\footnote{This amounts to the requirement $S_1 \subset S$ for the \emph{reduction} $S_1$ from definition~\ref{def:lin-reducible-polys}. Geometrically, adjoining $S_1$ to $S$ precisely cuts out all pieces from the base above which the fibre of $\pi_1$ degenerates.}
For our application to integration problems in section~\ref{sec:multiple-integrals} this is not required and it suffices to have an ambient algebra of hyperlogarithms at hand.
\end{remark}

\subsection{Tangential base points}%
\label{sec:tangential-base-points}%
If the initial point $u =\gamma(0)$ tends to zero, a hyperlogarithm $\int_{\gamma} \! w = \sum_k \log^k(u) f^{(k)}(u)$ can diverge logarithmically but with coefficients $f^{(k)}(u)$ that are analytic at $u \rightarrow 0$.
We can therefore define a regularization by mapping any $\log(u)$ to zero, so $\int_{\gamma} w \defas f^{(0)}(0)$ for a path $\gamma\colon (0,1] \longrightarrow \C \setminus \Sigma$ with $\gamma(0) \defas \lim_{t \rightarrow 0} \gamma(0) = 0$.
To make this well-defined, we must fix the branch of $\log(u)$ that we annihilate. This is can be achieved by requiring $\dot{\gamma}(0) = 1$ as initial tangent to $\gamma$, because then $\gamma(t) \in \C \setminus (-\infty,0]$ for sufficiently small $t$ and we can use the principal branch of $\log(\gamma(t))$.

The extra effort necessary to work with such \emph{tangential basepoints} is worthwhile because it simplifies the geometry (as we do not introduce a new distinguished point $\gamma(0) \notin \Sigma$) and therefore the periods that appear. For example we shall recall in example~\ref{ex:moduli-space-mzv} how this technique suffices to prove that all periods of $\Moduli{0}{n}(\R)$ are multiple zeta values \cite{Brown:MZVPeriodsModuliSpaces}, which is not true for ordinary basepoints (that lie in the interior of the moduli space).
\begin{example}
	Say $w_{(1)} = \dd \log (z_1)$ in \eqref{eq:iteratedintegrals-decomposition-limit}, then $\int_{\eta_{\epsilon}} w_{(1)} = \log( z_1) - \log( \eta_\epsilon(0))$ diverges when $\epsilon\rightarrow 0$. The reason is that the limit $\eta_0(0) = \gamma_0(1) = (0, z_2, \ldots, z_n)$ of endpoints does not lie in $X_S$, so it can be singular even for a smooth form on $X_S$.
\end{example}
The generalization to higher dimensions may be stated in the form of
\begin{lemma}
	Suppose the endpoint $\gamma(1) \rightarrow b \in \partial(X_S)$ of $\gamma\colon [0,1] \longrightarrow X_S$ approaches a smooth point $b$ on the boundary of $X_S$, so $f(b)=0$ for a unique polynomial $f \in S$. Then any integrable $w \in \BarObjects(S)$ can develop at worst logarithmic singularities, because
	\begin{equation}
		\int_{\gamma} \!w
		= \sum_k \log^k f(\gamma(1)) \cdot \int_{\gamma} \! w_k
		\label{eq:divergences-barobjects}%
	\end{equation}
	for integrable $w_k \in \BarObjects(S)$ that give iterated integrals $\int_{\gamma} w_k$ which are finite at $\gamma(1) \rightarrow b$.
\end{lemma}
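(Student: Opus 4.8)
The plan is to isolate the single differential form responsible for the singularity and to shuffle-regularize with respect to it. Since $b$ is a smooth point of $\partial(X_S)$ at which only $f$ vanishes, every other generator $\letter{g}=\dd\log g$ ($g\in S$, $g\neq f$) has $g(b)\neq0$ and is therefore smooth on a neighbourhood of $b$; the only form that degenerates at $b$ is $\letter{f}=\dd\log f$. Because the outermost integration in $\int_{\gamma}\omega_1\cdots\omega_n$ is the one whose upper limit is the endpoint $\gamma(1)\to b$ (see \eqref{eq:def-II}), a divergence can only be produced by words that \emph{begin} with the letter $\letter{f}$. The whole divergence is thus governed by how often $\letter{f}$ occurs as a leading letter, and shuffle regularization is exactly the tool that peels these off.

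Concretely, I would apply lemma~\ref{lemma:shuffle-decomposition} with $A=\emptyset$ and $B=\set{f}$ to write
\begin{equation*}
	w=\sum_{k\geq0}\letter{f}^{k}\shuffle w_k,
	\qquad w_k\defas w^{(\emptyWord,\letter{f}^{k})}_{\emptyset,\set{f}},
\end{equation*}
where $\letter{f}^{k}$ denotes the word of $k$ copies of $\letter{f}$ and each $w_k$ does not begin with $\letter{f}$. Since $\BarObjects(S)$ is a Hopf subalgebra closed under $\shuffle$ and $\cop$, and the regularization is expressed through these operations together with the projection $\ProjectOn{f}$ by \eqref{eq:reglim-convolution-formula}, each $w_k$ is again integrable. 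Now lemma~\ref{lemma:iint-shuffle} turns the shuffle into a product,
\begin{equation*}
	\int_{\gamma}w=\sum_{k\geq0}\int_{\gamma}\letter{f}^{k}\cdot\int_{\gamma}w_k,
\end{equation*}
and $\int_{\gamma}\letter{f}^{k}=\tfrac{1}{k!}\bigl(\int_{\gamma}\letter{f}\bigr)^{k}=\tfrac{1}{k!}\bigl(\log f(\gamma(1))-\log f(\gamma(0))\bigr)^{k}$ is a polynomial of degree $k$ in $\log f(\gamma(1))$. Collecting equal powers of $\log f(\gamma(1))$ yields \eqref{eq:divergences-barobjects} with the required $w_k$.

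It remains to prove the finiteness statement, and this is the main obstacle: that $\int_{\gamma}w_k$ stays finite as $\gamma(1)\to b$ whenever $w_k$ does not begin with $\letter{f}$. I would argue this by induction on the length, simultaneously establishing that $\int_{\gamma_t}u$ grows at most like a polynomial in $\log f(\gamma(t))$ with bounded coefficients for every word $u$. For the inductive step write $w_k=\letter{g}u$ with $g\neq f$ and use the iterated form \eqref{eq:def-II-iterated},
\begin{equation*}
	\int_{\gamma}w_k=\int_0^1\gamma^{\ast}(\letter{g})(t)\int_{\gamma_t}u.
\end{equation*}
Applying the same shuffle decomposition to $u$ along the truncated path $\gamma_t$ expresses $\int_{\gamma_t}u$ as $\sum_j \tfrac{1}{j!}\bigl(\log f(\gamma(t))+c\bigr)^{j}\int_{\gamma_t}u_j$ with each $u_j$ not beginning with $\letter{f}$, hence with $\int_{\gamma_t}u_j$ finite and continuous up to $t=1$ by the induction hypothesis. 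Since $g(b)\neq0$, the pulled-back form $\gamma^{\ast}(\letter{g})$ is bounded near $t=1$, so the integrand is a bounded function times powers of $\log f(\gamma(t))$; as $\log^{j}$ is integrable against $\dd t$ at the endpoint where $f(\gamma(t))\to0$, the integral converges.

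The delicate point is precisely this last integrability estimate: it presupposes that $\gamma$ approaches $b$ regularly enough that $f(\gamma(t))$ vanishes to finite order (so $\log^{j}f(\gamma(t))$ is $\dd t$-integrable near $t=1$), which holds for the piecewise-smooth transversal paths in use and is where the smoothness of $b$ enters. Everything else is formal manipulation in the shuffle Hopf algebra, the only other nontrivial ingredient being the closure of $\BarObjects(S)$ under $\cop$ that keeps the $w_k$ integrable.
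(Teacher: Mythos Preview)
Your proposal is correct and follows essentially the same route as the paper: shuffle-regularize with respect to the single offending letter $\letter{f}$ via lemma~\ref{lemma:shuffle-decomposition}, use that $\BarObjects(S)$ is a Hopf subalgebra to keep the $w_k$ integrable, and then prove finiteness of $\int_{\gamma}w_k$ by induction on the length using that $\letter{g}$ ($g\neq f$) is bounded near $b$ while the inner integral has only integrable logarithmic growth. Your write-up of the finiteness step is in fact somewhat more careful than the paper's terse version.
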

\begin{proof}
	The regularization of lemma~\ref{lemma:shuffle-decomposition} gives unique $w = \sum_k \letter{f}^k \shuffle w'_k$ such that $w'_k$ does not begin with $\letter{f}$. By \eqref{eq:reglim-convolution-formula}, each $w'_k \in \BarObjects(S)$ is integrable because $\BarObjects(S)$ is a Hopf subalgebra (and because each power $\letter{f}^k \in \BarObjects(S)$ is integrable itself). 
	Let $w'_k = \letter{g}u$ ($g \neq f$), then $F \defas \int_{\gamma} u$ diverges only logarithmically at $\gamma(1)\rightarrow b$ by induction, so $\int_{\gamma} w_k' = \int_0^1 (F g)(\gamma(t)) \gamma'(t)\ \dd t$ stays finite in this limit.

	We conclude with \eqref{eq:iint-shuffle} and note that $\int_{\gamma} \letter{f} = \log f(\gamma(1)) - \log f(\gamma(0))$, so we can explicitly set $w_k \defas \sum_{l \geq k} w'_l /[k!(l-k)!] \cdot [-\log f(\gamma(0))]^{l-k}$ to get \eqref{eq:divergences-barobjects}.
\end{proof}
As before we may thus extend the definition of iterated integrals to allow for an endpoint (or base point) $\gamma(1) = b \in \partial(X_S)$ outside of $X_S$ by setting $\int_{\gamma} w \defas \int_{\gamma} w_0$ in \eqref{eq:divergences-barobjects}.
To make this well/defined, we must fix the branch of $\log f(z)$ we annihilate. Again this is conveniently enforced by a tangent condition like $\partial_t f(\gamma(t)) \rightarrow 1$ as $t \rightarrow 1$. We will come back to this in section~\ref{sec:multiple-integrals}.

\section{Properties and algorithms for hyperlogarithms}
\label{sec:hlog-algorithms}%
This section is based on the algorithm for parametric integration presented by Francis Brown \cite{Brown:TwoPoint}, which in turn emerged from his thesis~\cite{Brown:MZVPeriodsModuliSpaces} on the moduli spaces $\Moduli{0}{n}$. The latter is a great reference also for hyperlogarithms themselves. Further comments on the relation between the moduli space setting and the apparently more general (but in fact equivalent) treatment below are made in \cite{Brown:PeriodsFeynmanIntegrals}.

We like to extend the definition~\eqref{eq:def-Hlog-from-to} of hyperlogarithms to allow for the singular base point $\gamma(0) = 0 \in \Sigma$. This amounts to choosing a branch of $\log (z)$, which we take to be the principal one (analytic on $\C \setminus (-\infty,0]$ with $\log 1 = 0$).
\begin{definition}
	\label{def:Hlog}%
	Given a finite set $\Sigma \subset \C$ of points containing $0 \in \Sigma$, any $w \in T(\Sigma)$ has a unique expression of the form $w = \sum_k \letter{0}^k \shuffle w_k$ such that each $w_k$ does not contain any words ending in $\letter{0}$ (by lemma~\ref{lemma:shuffle-decomposition}). The associated \emph{hyperlogarithm} is defined by
\begin{equation}
	\Hyper{w}(z)
	\defas
	\sum_k \frac{\log^k (z)}{k!} \cdot \int_0^z \! w_k
	\quad\text{for}\quad
	w = \sum_k \letter{0}^k \shuffle w_k
	\quad\text{with}\quad
	w_k = \WordReg{0}{}(w_k) 
	\ \forall k.
	\label{eq:def-Hlog}%
\end{equation}
\end{definition}
The iterated integral $\int_{0}^{z} w_k$ is absolutely convergent and analytic at $z\rightarrow 0$ (even if letters $\letter{0}$ appear in $w_k$) as we work out in \eqref{eq:hlog-zsum} below, so \eqref{eq:def-Hlog} makes sense.
In fact, if we take a path $\gamma\colon (0,1] \longrightarrow \C \setminus \Sigma$ from $\gamma(0) = \lim_{t\rightarrow 0} \gamma(t)=0$ to $\gamma(1)=z$, then
\begin{equation*}
	\int_{\restrict{\gamma}{[\epsilon,1]}} \! w
	\urel{\eqref{eq:iint-shuffle}}
	\sum_k \frac{[\log(z)-\log(\gamma(\epsilon))]^k}{k!} \int_{\gamma} \! w_k
	\mapsto
		\Hyper{w}(z)
	\quad\text{when}\quad
	\log(\gamma(\epsilon)), \gamma(\epsilon) \rightarrow 0
\end{equation*}
provided that $\gamma$ does not wind around zero.\footnote{By this we mean that in $\int_{\restrict{\gamma}{[\epsilon,1]}}\!\!\! \letter{0} = \log(z) - \log (\gamma(\epsilon))$, $\log (\gamma(\epsilon))$ denotes the principal branch again.} So in view of section~\ref{sec:tangential-base-points}, $\Hyper{w}(z) = \int_\gamma w$ is a homotopy invariant functional of paths with tangent $\dot{\gamma}(0)=1$. For the examples in figure~\ref{fig:tangent-basepoints} we get $\int_{\gamma_1} \letter{0} = \int_{\gamma_2} \letter{0} = \log(z)$, while $\int_{\gamma_3} \letter{0} = \log(z) - 2\pi\imag$.
\begin{figure}
	\centering
	\includegraphics[width=0.24\textwidth]{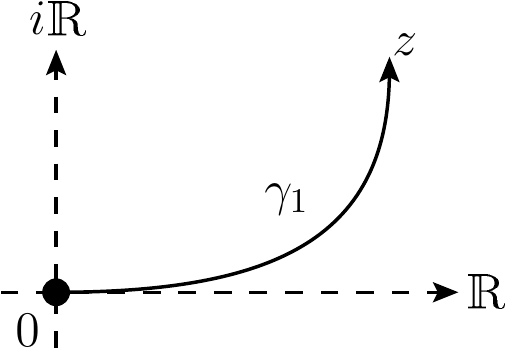}\qquad
	\includegraphics[width=0.3\textwidth]{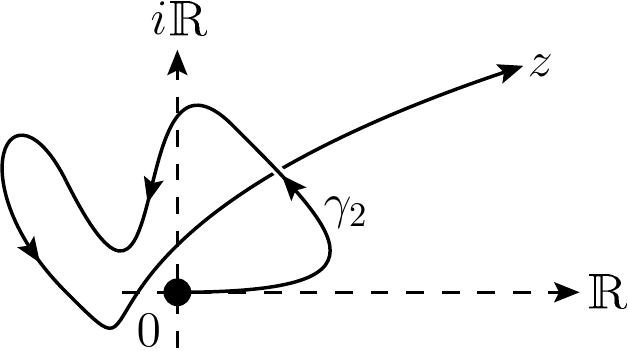}\qquad
	\includegraphics[width=0.27\textwidth]{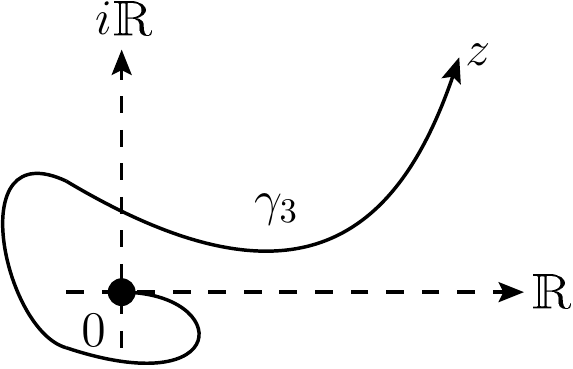}
	\caption[Homotopic and non-homotopic paths with tangential basepoint]{Relative to the endpoint $z=\gamma(1)$ and the tangential basepoint \mbox{$\dot{\gamma}(0) = 1$}, $\gamma_1 \homotop \gamma_2$ are homotopic to each other but not to $\gamma_3$.}%
	\label{fig:tangent-basepoints}%
\end{figure}%

Instead of this dependence on $\gamma$ we consider $\Hyper{w}(z)$ as a multivalued analytic function on $\C \setminus \Sigma$, unambiguously defined by its restriction to the open ball $B_\epsilon(\epsilon)$ around its radius $\epsilon=1/2\cdot\min\setexp{\abs{\sigma}}{0 \neq \sigma\in\Sigma}$. Inside $B_{\epsilon}(\epsilon)$, we take just straight line paths to define $\int_0^z w_k$ in \eqref{eq:def-Hlog} and the principal branch of $\log(z)$.
Typically we will have no positive singularities ($\Sigma \cap \R_+ = \emptyset$) and then each $\Hyper{w}(z)$ admits a unique analytic continuation to the full positive real axis $\R_+$.
\begin{example}
	Hyperlogarithms for the letters $\Sigma = \set{0, 1}$ are known as multiple polylogarithms $\Li_{n_1,\ldots,n_r}(z)$ of a single variable (see section~\ref{sec:polylogarithms}) and acquire path-dependent imaginary parts when $z>1$. We will mostly consider $\Sigma = \set{0,-1}$ instead, so multiple polylogarithms $\Li_{n_1,\ldots,n_r}(-z)$ of negative argument which are analytic on all $z \in \R_+$.
\end{example}

We will abbreviate the linear map $w \mapsto \Hyper{w}$ with $\Hyper{\cdot}$ and its image, the hyperlogarithms with singularities in $\Sigma$, by
\begin{equation}
	\HlogAlgebra(\Sigma) 
	\defas
	\im\left( \Hyper{\cdot} \right)
	= \bigoplus_{w \in \Sigma^{\times}} \Q \cdot \Hyper{w}
	.
	\label{eq:def-HlogAlgebra}%
\end{equation}%
\nomenclature[L(Sigma)]{$\HlogAlgebra(\Sigma)$}{algebra of hyperlogarithms with letters in $\Sigma$, equation~\eqref{eq:def-HlogAlgebra}\nomrefpage}%
The directness of this sum follows from lemma~\ref{lemma:hyperlog-independence} and $\HlogAlgebra(\Sigma)$ is an algebra by
\begin{lemma}
	\label{lemma:Hlog-character}%
	For any $w$, $v \in T(\Sigma)$, $\Hyper{w \shuffle v}(z) = \Hyper{w}(z) \cdot \Hyper{v}(z)$ is multiplicative.
\end{lemma}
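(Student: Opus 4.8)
The plan is to reduce the statement to the shuffle product formula for absolutely convergent iterated integrals (lemma~\ref{lemma:iint-shuffle}), combined with the elementary combinatorics of shuffling powers of the single letter $\letter{0}$. Throughout I use the defining decomposition of definition~\ref{def:Hlog}: write $w = \sum_k \letter{0}^k \shuffle w_k$ and $v = \sum_l \letter{0}^l \shuffle v_l$ where each $w_k = \WordReg{0}{}(w_k)$ and $v_l = \WordReg{0}{}(v_l)$ is a combination of words not ending in $\letter{0}$, so that $\int_0^z w_k$ and $\int_0^z v_l$ converge and are analytic at $z \to 0$.

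First I would expand the product $\Hyper{w}(z) \cdot \Hyper{v}(z) = \sum_{k,l} \frac{\log^k z}{k!}\frac{\log^l z}{l!} \int_0^z w_k \cdot \int_0^z v_l$ and apply the shuffle product formula $\int_0^z w_k \cdot \int_0^z v_l = \int_0^z (w_k \shuffle v_l)$. Crucially, a shuffle of two words each not ending in $\letter{0}$ is again a combination of words not ending in $\letter{0}$ (its last letter is the last letter of one of the two factors), so $w_k \shuffle v_l$ yields convergent integrals and no regularization is needed here. This gives $\Hyper{w}(z)\cdot\Hyper{v}(z) = \sum_{k,l} \frac{\log^{k+l} z}{k!\,l!}\int_0^z (w_k \shuffle v_l)$.

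Next I would compute the regularized decomposition of $w \shuffle v$. Using bilinearity of $\shuffle$ and the identity $\letter{0}^k \shuffle \letter{0}^l = \binom{k+l}{k}\letter{0}^{k+l}$ for shuffling powers of a single letter, one collects $w \shuffle v = \sum_m \letter{0}^m \shuffle \big(\sum_{k+l=m}\binom{m}{k}\, w_k \shuffle v_l\big)$, and the inner bracket again consists of words not ending in $\letter{0}$. By the uniqueness of the decomposition in lemma~\ref{lemma:shuffle-decomposition}, this \emph{is} the regularized decomposition required by definition~\ref{def:Hlog}, so $\Hyper{w \shuffle v}(z) = \sum_m \frac{\log^m z}{m!}\sum_{k+l=m}\binom{m}{k}\int_0^z (w_k \shuffle v_l)$. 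The binomial coefficient collapses via $\frac{1}{m!}\binom{m}{k} = \frac{1}{k!\,l!}$, matching the expression for $\Hyper{w}\cdot\Hyper{v}$ term by term; this finishes the identity.

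The only genuinely delicate point is the appeal to lemma~\ref{lemma:iint-shuffle} at the singular base point $0 \in \Sigma$, since that lemma is stated for smooth paths with an interior base point. I expect this to be the main obstacle to write up rigorously. The clean way around it is to apply the shuffle formula to the truncated path $\restrict{\gamma}{[\epsilon,1]}$ (a genuine smooth path) and pass to the limit $\epsilon \to 0$: because $w_k \shuffle v_l$ and $w_k, v_l$ all give absolutely convergent integrals at the lower endpoint, each truncated integral converges to $\int_0^z(\cdots)$, and the product identity survives the limit. Equivalently one invokes the tangential-base-point framework of section~\ref{sec:tangential-base-points}, where $\Hyper{w}(z) = \int_\gamma w$ is the regularized integral along a path with $\dot\gamma(0)=1$ and the regularized integration is already a character.
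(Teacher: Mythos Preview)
Your proof is correct and follows essentially the same approach as the paper: both use the regularized decompositions $w = \sum_k \letter{0}^k \shuffle w_k$, $v = \sum_l \letter{0}^l \shuffle v_l$, apply the shuffle product formula \eqref{eq:iint-shuffle} to the convergent integrals $\int_0^z w_k$, $\int_0^z v_l$, and conclude via $\letter{0}^k \shuffle \letter{0}^l = \binom{k+l}{k}\letter{0}^{k+l}$. The paper compresses this into a single displayed line; your version spells out two points the paper leaves implicit, namely that $w_k \shuffle v_l$ again consists of words not ending in $\letter{0}$ (so the resulting decomposition of $w\shuffle v$ is the unique one required by definition~\ref{def:Hlog}) and that lemma~\ref{lemma:iint-shuffle} at the singular base point is justified by passing to the limit of truncated paths---both valid and worth saying.
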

\begin{proof}
	The regularizations $w = \sum_k \letter{0}^k \shuffle w_k$ and $v = \sum_{l} \letter{0}^l \shuffle v_l$ imply
\begin{equation*}
	\Hyper{w}(z) \cdot \Hyper{v}(z)
	\urel{\eqref{eq:def-Hlog}}
		\sum_{k,l} \frac{\log^{k+l} z}{k! l!} \int_{0}^k \!\! w_k \cdot \int_{0}^{k} \!\! v_l
	\urel{\eqref{eq:iint-shuffle}}
		\sum_{k,l} \frac{\log^{k+l} z}{k! l!} \int_{0}^k (w_k \shuffle v_l)
	= \Hyper{w \shuffle v}(z)
\end{equation*}
because $w \shuffle v = \sum_{k,l} \letter{0}^k \shuffle \letter{0}^l \shuffle (w_k \shuffle v_l)$ and $\letter{0}^k \shuffle \letter{0}^l = (k+l)!/(k! l!) \cdot \letter{0}^{k+l}$.
\end{proof}
\begin{lemma}
	For any non-empty word $w=\letter{\sigma}w'$ we have $\partial_z \Hyper{w}(z) = \frac{1}{z-\sigma} \Hyper{w'}(z)$.
	\label{lemma:Hlog-differential}%
\end{lemma}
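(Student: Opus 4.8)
The claim is that $\partial_z \Hyper{w}(z) = \frac{1}{z-\sigma} \Hyper{w'}(z)$ for any non-empty word $w = \letter{\sigma}w'$. The plan is to reduce the differentiation to the iterated-integral picture, where the derivative simply strips off the outermost one-form, and then to account carefully for the regularization prescription \eqref{eq:def-Hlog} that separates off the $\letter{0}$-prefactors.

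First I would dispose of the case $\sigma \neq 0$, where the definition is cleanest. Here the word $w = \letter{\sigma}w'$ already begins with a letter different from $\letter{0}$, so in the decomposition $w = \sum_k \letter{0}^k \shuffle w_k$ we have $w_0 = w$ and $w_k$ involves shuffling $\letter{0}^k$ into $\letter{\sigma}w'$. The key computation is to differentiate $\int_0^z w_k$ using \eqref{eq:def-II-iterated}: the derivative of an iterated integral with respect to its endpoint is the outermost integrand times the integral of the remaining word. I would first verify that $\partial_z \int_0^z w_k = \frac{1}{z-\sigma}\int_0^z w_k'$ whenever $w_k$ does not end in $\letter{0}$ and begins with $\letter{\sigma}$ — but the subtlety is that shuffling $\letter{0}^k$ into $\letter{\sigma}w'$ produces words beginning with either $\letter{\sigma}$ or $\letter{0}$. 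The cleanest route is therefore to avoid manipulating the $w_k$ individually and instead work with the full series: since $\Hyper{w} = \int_\gamma w$ is the analytic continuation of a genuine iterated integral (as explained after Definition~\ref{def:Hlog}), I would differentiate that integral representation directly. By \eqref{eq:def-II-iterated}, $\partial_z \int_{\gamma} \letter{\sigma} w' = \frac{1}{z-\sigma} \int_\gamma w'$, and the same holds for the regularized version once we check both sides carry the identical $\log(z)$-dependence.

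The honest way to organize this is to use lemma~\ref{lemma:Hlog-character} together with the differentiation of the $\sigma\neq 0$ case to bootstrap the $\sigma=0$ case. For $\sigma = 0$, I would write $w = \letter{0}w'$ and use the decomposition $\letter{0}^k \shuffle w_k$ applied to $w'$: if $w' = \sum_k \letter{0}^k \shuffle w_k'$ with each $w_k'$ not ending in $\letter{0}$, then pulling out one more $\letter{0}$ via \eqref{eq:def-shuffle-product} gives $\letter{0}w' = \sum_k \letter{0}^{k+1}\shuffle w_k' - \sum_k \letter{0}^k\shuffle(\text{terms from the recursive shuffle})$, and I would match this against the regularization of $w$ itself. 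The cleaner statement: differentiating $\Hyper{\letter{0}w'}(z)$ and using that $\partial_z \log^k(z)/k! = \frac{1}{z}\log^{k-1}(z)/(k-1)!$ reproduces exactly $\frac{1}{z}\Hyper{w'}(z)$, provided the $\int_0^z w_k$ pieces are handled consistently. Concretely, I expect to show $\partial_z \Hyper{\letter{0}w'}(z) = \frac{1}{z}\Hyper{w'}(z)$ by differentiating \eqref{eq:def-Hlog} term by term, where the $\frac{1}{z}\log^{k-1}$ coming from the logarithmic prefactor combines with the endpoint-derivative of $\int_0^z w_k$ to rebuild the regularized expansion of $w'$.

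The main obstacle, as anticipated, is the regularization bookkeeping in the $\sigma=0$ case: one cannot naively differentiate $\int_0^z \letter{0}w'$ because such an integral is not convergent at the lower endpoint, which is precisely why \eqref{eq:def-Hlog} resums the divergences into powers of $\log z$. I would therefore state and use the convergent series representation \eqref{eq:hlog-zsum} (referenced in the text as giving absolute convergence and analyticity of $\int_0^z w_k$ at $z\to 0$) to justify differentiating under the summation and to guarantee that the formal manipulation of pulling $\letter{0}$ in and out of the shuffle corresponds to an honest identity of analytic functions. Once the two cases are settled, the result holds for all letters $\sigma \in \Sigma$ by linearity, completing the proof.
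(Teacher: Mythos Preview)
Your dichotomy by the value of the \emph{first} letter $\sigma$ (zero vs.\ nonzero) is the wrong organizing principle, and it leads you to an actual error: you claim that if $w=\letter{\sigma}w'$ with $\sigma\neq 0$ then in the decomposition $w=\sum_k \letter{0}^k \shuffle w_k$ one has $w_0=w$. That is false. The regularization \eqref{eq:def-Hlog} is governed by whether $w$ \emph{ends} in $\letter{0}$, not by how it begins. For instance $w=\letter{\sigma}\letter{0}$ with $\sigma\neq 0$ ends in $\letter{0}$ and has $w_1=\letter{\sigma}$, $w_0=-\letter{0}\letter{\sigma}$, not $w_0=w$. Your subsequent retreat to ``differentiate the genuine iterated integral $\int_\gamma w$ directly'' is then not a proof, because that integral diverges at the lower endpoint precisely in the cases you have not yet handled.

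The paper's proof runs along the correct axis. First it treats words $w$ that do \emph{not} end in $\letter{0}$: then $\Hyper{w}(z)=\int_0^z w$ is honestly the convergent iterated integral \eqref{eq:def-Hlog-from-to} and $\partial_z$ strips off the leading form, giving $\frac{1}{z-\sigma}\Hyper{w'}(z)$ at once---regardless of whether $\sigma$ itself is zero. Second, it introduces the linear map $R(\letter{\tau}v')\defas \Hyper{v'}(z)/(z-\tau)$ and checks by direct computation that $\partial_z \Hyper{\letter{0}^k\shuffle w}(z)=R(\letter{0}^k\shuffle w)$ for any such $w$: the product rule applied to $\frac{\log^k z}{k!}\,\Hyper{w}(z)$ yields two terms which, via the shuffle recursion $\letter{0}^k\shuffle(\letter{\sigma}w')=\letter{0}(\letter{0}^{k-1}\shuffle w)+\letter{\sigma}(\letter{0}^k\shuffle w')$, are exactly $R$ applied to the two summands of that identity. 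The general word now follows by writing $w=\sum_k \letter{0}^k\shuffle w_k$ and using linearity of both $\partial_z\Hyper{\cdot}$ and $R$. No first-letter case split is needed, and lemma~\ref{lemma:Hlog-character} is not used.
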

\begin{proof}
	First assume that $w$ does not end with $\letter{0}$, then $\Hyper{w}(z) = \int_0^z w = \int_0^z \frac{\dd x}{x-\sigma} \int_0^x w'$ by definition \eqref{eq:def-Hlog-from-to} proves the statement. We further find for any $k$ that
	\begin{equation*}
		\partial_z \Hyper{\letter{0}^k \shuffle w}(z)
		\urel{\eqref{eq:def-Hlog}}
			\frac{\log^{k-1} z}{z (k-1)!} \Hyper{w}(z)
			+ \frac{\log^k z}{k!}\frac{\Hyper{w'}(z)}{z-\sigma} 
		\urel{\eqref{eq:def-Hlog}}
			\frac{\Hyper{\letter{0}^{k-1} \shuffle w}(z)}{z-0}
			+\frac{\Hyper{\letter{0}^k \shuffle w'}(z)}{z-\sigma}
		= R\left(\letter{0}^k \shuffle w \right)
	\end{equation*}
	coincides with the linear map $R$ defined by $R(\letter{\tau}v') \defas \Hyper{v'}(z) / (z-\tau)$ because
	$ \letter{0}^k \shuffle w
	= \letter{0}\big( \letter{0}^{k-1} \shuffle w \big)
	+ \letter{\sigma} \big( \letter{0}^k \shuffle w' \big)
	$ by \eqref{eq:def-shuffle-product}. Applying this to the regularization $w=\sum_k \letter{0}^k \shuffle w_k$ of an arbitrary word yields the desired result $\partial_z \Hyper{w}(z) = R\big( \sum_k \letter{0}^k \shuffle w_k \big) = R(w)$.
\end{proof}
This result means that we can rephrase the definition~\ref{def:Hlog} of hyperlogarithms as
\begin{equation}
	\Hyper{\letter{0}^n} (z)
	=	\frac{\log^n(z)}{n!}
	\quad\text{and}\quad
	\Hyper{\letter{\sigma}w} (z)
	=	\int_0^z \frac{\dd z'}{z' - \sigma} L_{w} (z')
	\quad\text{when}\quad \letter{\sigma}w \notin \set{\letter{0}}^{\times}
	.
	\label{eq:def-Hlog-noshuffle}%
\end{equation}
In the following our principal goals is to pull back all computations with hyperlogarithms onto the shuffle algebra, thereby rephrasing analytic operations and properties of $\Hyper{w}(z)$ in combinatorial terms of the words $w$. Hence it is important to note
\begin{lemma}%
	\label{lemma:hyperlog-independence}%
	Let $\overline{\C(z)}$ denote the algebraic closure of $\C(z)$, then hyperlogarithms are linearly independent over $\overline{\C(z)}$.
	In other words, the map $\sum_w f_w \tp w \mapsto \sum_w f_w \Hyper{w}$ from $\overline{\C(z)} \tp T(\Sigma)$ into the space of multivalued, locally analytic functions on $\C\setminus \Sigma$ is injective.
\end{lemma}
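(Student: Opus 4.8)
The statement asserts the linear independence of hyperlogarithms $\Hyper{w}$ over the field $\overline{\C(z)}$. My plan is to argue by contradiction: suppose there is a nontrivial relation $\sum_{w} f_w \Hyper{w} = 0$ with $f_w \in \overline{\C(z)}$ not all zero, and derive a contradiction by exploiting the two structures that distinguish hyperlogarithms, namely their behaviour under the derivative $\partial_z$ (lemma~\ref{lemma:Hlog-differential}) and their asymptotics at the tangential base point $z \to 0$.

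First I would reduce to a minimal counterexample. Among all nontrivial relations, choose one involving words of minimal maximal weight $N = \max\setexp{\abs{w}}{f_w \neq 0}$, and among those, one with the fewest words of weight $N$. I would apply $\partial_z$ to the relation $\sum_w f_w \Hyper{w} = 0$. Using lemma~\ref{lemma:Hlog-differential}, each term $\partial_z \big( f_w \Hyper{w} \big) = f_w' \Hyper{w} + f_w \frac{1}{z-\sigma} \Hyper{w'}$ for $w = \letter{\sigma} w'$, so differentiation lowers the weight by one on the hyperlogarithm factor while keeping coefficients in $\overline{\C(z)}$ (the closure is closed under $\partial_z$). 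Collecting terms by the resulting lower-weight hyperlogarithms $\Hyper{v}$ and invoking minimality of the original relation, I expect to force the coefficient of each top-weight $\Hyper{w}$ to be a \emph{constant} (annihilated by $\partial_z$), i.e.\ $f_w' = 0$ for $\abs{w} = N$. The delicate point is to handle the shuffle-regularized letters $\letter{0}$ correctly, since $\Hyper{\letter{0}^n}(z) = \log^n(z)/n!$ is not rational; here I would first use lemma~\ref{lemma:shuffle-decomposition} to split off powers of $\letter{0}$ and treat $\log(z)$ as an additional transcendental, reducing to words $w_k$ with $w_k = \WordReg{0}{}(w_k)$ whose iterated integrals $\int_0^z w_k$ are genuinely analytic at $z \to 0$.

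Having reduced to constant top-weight coefficients, I would extract a contradiction from the behaviour at $z = 0$. The words $w_k$ not ending in $\letter{0}$ satisfy $\Hyper{w_k}(z) \to 0$ as $z \to 0$, with leading asymptotics governed by the first letters of $w_k$ and the $z$-expansion (the $\Zsum{}{}{}$-type expansion of $\int_0^z w_k$ alluded to near \eqref{eq:def-Hlog}). Comparing the $\log^k(z)$-graded pieces and the leading power-series coefficients in $z$ of the surviving relation, distinct words contribute linearly independent asymptotic monomials, so all remaining constant coefficients must vanish. This contradicts the assumed nontriviality and completes the induction.

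The main obstacle I anticipate is bookkeeping around the base point rather than any deep difficulty: one must show cleanly that the asymptotic expansions $\int_0^z w_k \sim c_{w_k} z^{m} \log^j(z) + \cdots$ of words with distinct leading structure are $\overline{\C(z)}$-linearly independent, which amounts to a triangularity statement organized by weight, by the multiplicity of $\log(z)$, and by the order of vanishing at $z=0$. The differentiation step is essentially formal once lemma~\ref{lemma:Hlog-differential} and the shuffle-regularization of definition~\ref{def:Hlog} are in place; the only subtlety is ensuring the passage to the minimal relation is compatible with the $\letter{0}$-regularization so that differentiation does not reintroduce words one has already eliminated.
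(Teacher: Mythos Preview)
Your minimality-plus-differentiation setup matches the paper's proof exactly, including the idea of normalizing one top-weight coefficient to $1$ so that the derivative relation has strictly fewer top-weight words. The divergence comes at the extraction of the contradiction.

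Your asymptotic argument at $z\to 0$ has a genuine gap. The claim that ``distinct words contribute linearly independent asymptotic monomials'' is not true as stated: all words $w$ of weight $N$ and depth $r$ (i.e.\ with $r$ non-zero letters, none of them trailing) satisfy $\Hyper{w}(z)\sim c_w z^r$ near $z=0$, with the same leading monomial $z^r$ and only the scalar $c_w$ differing. Separating such words forces you to compare the full Taylor series, but then the lower-weight terms $f_w(z)\Hyper{w}(z)$ with $f_w\in\overline{\C(z)}$ intrude: these coefficients may have poles or Puiseux-type expansions at $z=0$ that mix with the vanishing of the top-weight hyperlogarithms. You have not said how to disentangle these contributions, and there is no evident triangularity here.

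The paper avoids asymptotics entirely. After forcing the top-weight coefficients to be constants $c_w\in\C$, it looks one weight down inside the (now trivial) derivative relation: for each word $w'$ of weight $N-1$, comparing the coefficient of $\Hyper{w'}$ gives $\partial_z f_{w'} = -\sum_\sigma c_{\letter{\sigma}w'}/(z-\sigma)$, hence $f_{w'}\in -\sum_\sigma c_{\letter{\sigma}w'}\log(z-\sigma)+\C$. If some $c_{\letter{\sigma}w'}\neq 0$ this contradicts $f_{w'}\in\overline{\C(z)}$, because $\log(z-\sigma)$ is transcendental over $\C(z)$. To justify that transcendence, the paper first runs the entire argument with $f_w\in\C(z)$ (where ``a nonzero sum of logarithms is not rational'' is elementary), uses the resulting $\C(z)$-independence together with the shuffle product (lemma~\ref{lemma:Hlog-character}) to deduce that hyperlogarithms are transcendental over $\C(z)$, and only then reruns the minimality argument over $\overline{\C(z)}$. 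This two-pass structure is the missing ingredient in your plan.
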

\begin{proof}
	We assume the opposite and consider the counterexamples
	\begin{equation}
		\Xi
		\defas
		\setexp{
			f: \Sigma^{\times} \longrightarrow \C(z)
		}{
			0<\abs{\supp(f)}<\infty
			\ \text{and}\ 
			\sum_w f_w(z) \cdot \Hyper{w}(z)
			= 0
		}
		\label{eq:independence-proof-counterexamples}%
	\end{equation}
	with rational coefficients $f_w \in \C(z)$ first (we write $\supp(f) \defas \setexp{w}{f_w \neq 0}$). For each $f \in \Xi$, let $n(f) \defas \max \setexp{\abs{w}}{f_w \neq 0} > 0$ denote the highest occurring weight and collect the corresponding words in $W(f) \defas \setexp{\abs{w}=n(f)}{f_w \neq 0} = \supp(f) \cap \C^{n(f)}$.

	Now choose a counterexample $f \in \Xi$ with minimal $n(f)$, and among these select one with minimal $\abs{W(f)}$. We can take any $v \in W(f)$ and divide each $f_w$ by $f_v$ such that we may assume that $f_{v} = 1$. But then the derivative
	\begin{equation}
		0
		=
			\sum_{v \neq w\in W(f)} \left(\partial_z f_{w} \right) \Hyper{w}
			+
			\sum_{w \in W(f)} f_w \partial_z \Hyper{w}(z)
			+
			\sum_{w \in \supp (f) \setminus W(f)} \partial_z \left( f_w \Hyper{w} \right)
		\label{eq:independence-proof-derivative}%
	\end{equation}
	contains at most $n(f)-1$ hyperlogarithms of weight $n(f)$ (since $\partial_z f_v = 0$) and can therefore not be in $\Xi$, considering the minimal choice of $f$. In particular the occurring hyperlogarithms must be linearly independent over $\C(z)$ and we can compare the coefficients in \eqref{eq:independence-proof-derivative} to conclude
	\begin{equation*}
		\partial_z f_w = 0
		\ \text{for all}\ 
		w \in W(f)
		\quad\text{and}\quad
		\partial_z f_w
		=
		- \sum_{\sigma} \frac{f_{\letter{\sigma}w}}{z-\sigma}
		\ \text{for all}\ 
		w \in \supp (f) \setminus W(f).
	\end{equation*}
	But this means that all $0 \neq f_w \in \C$ are constant for $w\in W(f)$ and then $f_w \in -\sum_{\sigma} f_{\letter{\sigma}w} \ln (z-\sigma) + \C$ contradicts the rationality of $f_w(z)$ whenever $w$ is such that $\letter{\sigma}w \in W(f)$ for some $\sigma \in \C$ (i.e.\ $f_{\letter{\sigma}w} \neq 0$).

	Hence the linear independence of hyperlogarithms is proven for rational coefficients $f_w(z) \in \C(z)$. 
	As a consequence, we find that hyperlogarithms $f = \sum_w f_w \Hyper{w}$ with $f_w \in \C(z)$ and $n(f)>0$ are transcendental over $\C(z)$: 
	Otherwise there would be $a_0,\ldots,a_N \in \C(z)$ with $\sum_{n=0}^N a_n f^n = 0$ ($a_N \neq 0$). Using lemma~\ref{lemma:Hlog-character} to multiply out $f^N$ via the shuffle product, linear independence implies that all words of maximum weight $N\cdot n(f)$ must yield a vanishing contribution, leading to the contradiction $a_N f_n^N = 0$ where $f_n \defas \sum_{w \in W(f)} f_w \Hyper{w} \neq 0$.

	Finally we consider \eqref{eq:independence-proof-counterexamples} with algebraic coefficients $f_w(z) \in \overline{\C(z)}$ and apply exactly the same argument as before. The contradiction is again that for some $w$, we would have $f_w(z) \in -\sum_{\sigma} f_{\letter{\sigma}w} \ln(z-\sigma) + \C$ for some constants $f_{\letter{\sigma}w} \in \C$ (not all of them zero). But as we just remarked, such a logarithm is not an algebraic function.
\end{proof}
This independence is lost when one allows for rational functions $f(z) \in \C(z)$ in the arguments of hyperlogarithms $\Hyper{w}\left( f(z) \right)$. Studying the plethora of functional equations that arise this way is a fundamental topic in the study of polylogarithms. The algorithms we will develop actually allow us to write every such polylogarithm in the above basis of hyperlogarithms.

The crucial path concatenation formula \eqref{eq:path-concatenation} does not apply directly to $\Hyper{w}(z)$ due to the special role of $\letter{0}$. But it turns out to be consistent with tangential base points in
\begin{lemma}
	\label{lemma:hlog-path-concatenation} %
		Let $u,z \in \C \setminus \Sigma$, then
		$
			\Hyper{}(z)
			=
			\int_u^z \convolution \Hyper{}(u)
		$ where $\Hyper{}(z)$ is defined via the concatenation of paths chosen for $\Hyper{}(u)$ and $\int_u^z$. For a particular word this reads
		\begin{equation}
			\Hyper{\letter{\sigma_1}\!\cdots\,\letter{\sigma_n}}(z)
			=
			\sum_{k=0}^n
			\int_u^z \letter{\sigma_1}\!\cdots\letter{\sigma_{k}}
				\cdot
				\Hyper{\letter{\sigma_{k+1}}\!\cdots\,\letter{\sigma_n}}(u)
			.
			\label{eq:hlog-path-concatenation} %
		\end{equation}
\end{lemma}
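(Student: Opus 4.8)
The plan is to deduce the statement from the genuine path-concatenation formula of Lemma~\ref{lemma:path-concatenation} by carefully tracking the regularization at the tangential base point. Let $\gamma\colon(0,1]\to\C\setminus\Sigma$ be the path with $\gamma(0)=0$ and $\dot{\gamma}(0)=1$ underlying $\Hyper{}(u)$ (so $\gamma(1)=u$), and let $\eta$ be the path from $u$ to $z$ underlying $\int_u^z$; by hypothesis $\Hyper{}(z)$ is computed along $\gamma\concat\eta$. For $\epsilon>0$ the truncation $\restrict{\gamma}{[\epsilon,1]}$ runs from $\gamma(\epsilon)\neq0$ to $u$ and avoids the base point, and since iterated integrals are parametrization independent we have $\restrict{(\gamma\concat\eta)}{[\epsilon,1]}=\restrict{\gamma}{[\epsilon,1]}\concat\eta$ for all sufficiently small $\epsilon$.

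First I would apply Lemma~\ref{lemma:path-concatenation} to the two honest paths $\restrict{\gamma}{[\epsilon,1]}$ and $\eta$, obtaining for each $\epsilon>0$ the identity
\[
	\int_{\restrict{\gamma}{[\epsilon,1]}\concat\eta}\letter{\sigma_1}\cdots\letter{\sigma_n}
	=\sum_{k=0}^n\int_{\eta}\letter{\sigma_1}\cdots\letter{\sigma_k}\cdot\int_{\restrict{\gamma}{[\epsilon,1]}}\letter{\sigma_{k+1}}\cdots\letter{\sigma_n}.
\]
Next I would pass to the regularized limit $\epsilon\to0$ defining the tangential base point. As recalled after Definition~\ref{def:Hlog}, via Lemma~\ref{lemma:iint-shuffle} and the decomposition $w=\sum_k\letter{0}^k\shuffle w_k$ every such truncated integral is a polynomial in $\log\gamma(\epsilon)$ with coefficients that stay finite as $\gamma(\epsilon)\to0$; the regularization keeps the constant term in $\log\gamma(\epsilon)$ and continues the analytic remainder to the base point. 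Applied to the left-hand side this returns $\Hyper{\letter{\sigma_1}\cdots\letter{\sigma_n}}(z)$, since $\gamma\concat\eta$ is by assumption the path defining $\Hyper{}(z)$ and only its germ at $0$ enters the regularization.

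On the right-hand side each $\int_{\eta}\letter{\sigma_1}\cdots\letter{\sigma_k}=\int_u^z\letter{\sigma_1}\cdots\letter{\sigma_k}$ is an honest convergent integral independent of $\epsilon$, because $\eta$ stays away from $0$ and carries no $\log\gamma(\epsilon)$. Hence the regularization, being linear over the ring of $\epsilon$-independent constants, pulls these factors out untouched and acts only on $\int_{\restrict{\gamma}{[\epsilon,1]}}\letter{\sigma_{k+1}}\cdots\letter{\sigma_n}$, where it yields $\Hyper{\letter{\sigma_{k+1}}\cdots\letter{\sigma_n}}(u)$ by the definition of $\Hyper{}$ along $\gamma$. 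This produces exactly \eqref{eq:hlog-path-concatenation}, and the Hopf-algebraic form $\Hyper{}(z)=\int_u^z\convolution\Hyper{}(u)$ is then just the reformulation $\int_{\gamma\concat\eta}=\int_{\eta}\convolution\int_{\gamma}$ of Lemma~\ref{lemma:path-concatenation} carried through the regularization.

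The step I expect to be the main obstacle is precisely the interchange of regularization and concatenation: one must be sure that the base-point divergences (the powers of $\log\gamma(\epsilon)$) are confined to the $\gamma$-factors and never contaminate the $\eta$-factors. This is exactly what the tangential base point secures, since $\eta$ never approaches $0$; the only remaining verification is the elementary fact that the constant term in $\log\gamma(\epsilon)$ of a product of an $\epsilon$-independent factor with a polynomial in $\log\gamma(\epsilon)$ equals that factor times the polynomial's constant term, which is immediate from linearity.
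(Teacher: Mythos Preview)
Your argument is correct and takes a different route from the paper. The paper proves the identity algebraically: it first observes that \eqref{eq:hlog-path-concatenation} is literally Lemma~\ref{lemma:path-concatenation} when $\sigma_n\neq 0$ (since then $\Hyper{w}(z)=\int_0^z w$ is an honest iterated integral), then checks the single case $\Hyper{\letter{0}}(z)=\log(z)=\int_u^z\letter{0}+\Hyper{\letter{0}}(u)$ by hand, and finally uses that both sides of $\Hyper{}(z)=\int_u^z\convolution\Hyper{}(u)$ are \emph{characters} on the shuffle algebra (Lemmas~\ref{lemma:iint-shuffle} and~\ref{lemma:Hlog-character}), so agreement on these generators and on the regularized decomposition $w=\sum_k\letter{0}^k\shuffle w_k$ forces agreement everywhere.

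Your analytic approach---applying Lemma~\ref{lemma:path-concatenation} at finite $\epsilon$ and then passing the tangential-base-point regularization through the resulting sum---is sound because the $\int_\eta$-factors are genuinely $\epsilon$-independent and the expansion in $\log\gamma(\epsilon)$ is unique, so equality for all $\epsilon$ forces equality of the regularized constant terms. This route is arguably more transparent about \emph{why} the lemma holds (it is literally path concatenation pushed through the limit), while the paper's character argument is cleaner bookkeeping and avoids having to justify the interchange of regularization with the deconcatenation sum. Both ultimately rest on the same ingredients; the paper trades your $\epsilon$-limit argument for multiplicativity under $\shuffle$.
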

\begin{proof}
	The claim is identical to \eqref{eq:path-concatenation} when $\letter{\sigma_n} \neq 0$ since then $\Hyper{w}(z) = \int_0^z w$. But
	\begin{equation*}
			\Hyper{\letter{0}}(z)
			= \log (z)
			= \log \left( \frac{z}{u} \right) + \log (u)
			= \int_u^z \letter{0} + \Hyper{\letter{0}}(u)
	\end{equation*}
	holds as well and suffices to prove the identity \eqref{eq:hlog-path-concatenation} for all words $w \in T(\Sigma)$. Here we exploit that both $\int_{u}^z$ and $\Hyper{}(u)$ are characters on the Hopf algebra $T(\Sigma)$ by lemmas~\ref{lemma:iint-shuffle} and \ref{lemma:Hlog-character}, which implies that $\Phi \defas \int_u^z \convolution \Hyper{}(u)$ is a character as well.
	So we let it act on $w = \sum_k \letter{0}^k \shuffle w_k = \sum_k \letter{0}^{\shuffle k} / k! \shuffle w_k$ with $w_k$ not ending in $\letter{0}$ and conclude with
	\begin{equation*}
			\Phi(w)
			=
				\sum_k \frac{\Phi(\letter{0})^k}{k!}
				\cdot
				\Phi(w_k)
			=
			\sum_k \frac{[\Hyper{\letter{0}}(z)]^k}{k!} \cdot \Hyper{w_k}(z)
			=
				\Hyper{\sum_k \letter{0}^k \shuffle w_k}(z) 
			= \Hyper{w}(z). \qedhere
	\end{equation*}
\end{proof}
\begin{remark}
	We can use this result also to express any iterated integral $\int_u^z$ in terms of the hyperlogarithms based at zero: As any character on the Hopf algebra $T(\Sigma)$, the map $w\mapsto\Hyper{w}(u)$ has a unique inverse with respect to the convolution product. It is explicitly given by the antipode \eqref{eq:def-antipode} as
	$ [w \mapsto \Hyper{w}(u)]^{\convolution - 1} =  [w\mapsto \Hyper{S(w)}(u)] $.
	If we multiply both sides of $\Hyper{}(z) = \int_u^z \convolution \Hyper{}(u)$ with this we solve for
	$	\int_u^z = \Hyper{}(z) \convolution \Hyper{S}(u)$. For an individual word,
	\begin{equation*}
		\int_{u}^z \letter{\sigma_1}\!\!\cdots\letter{\sigma_n}
		= \sum_{k=0}^n 
				(-1)^{n-k}
				\Hyper{\omega_{\sigma_1}\!\cdots\,\omega_{\sigma_{k}}}(z) 
				\cdot \Hyper{\omega_{\sigma_{n}}\!\cdots\,\omega_{\sigma_{k+1}}}(u).
	\end{equation*}
	In particular, the algebras of iterated integrals generated by $\int_u^z$ and $\Hyper{\cdot}(z)$ coincide up to the constants given by special values of hyperlogarithms at $u$:
	\begin{equation}
		\int_u^z \!\! T(\Sigma)
		\isomorph
		\HlogAlgebra(\Sigma)(z) \tp \HlogAlgebra(\Sigma)(u)
		,\quad\text{for $u$ fixed:}\quad
		\int_u^z \!\! T(\Sigma) \tp \C
		=
		\HlogAlgebra(\Sigma)(z) \tp \C
		.
		\label{eq:hlogalgebras-equivalent-over-C}%
	\end{equation}
\end{remark}
\begin{definition}
	\label{def:punctured-regulars}%
	Given $\Sigma \subset \C$, we define the regular functions on $\C\setminus\Sigma$ as
	\begin{equation}
		\regulars( \Sigma)
		\defas
		\Q\Big[
			z,
			\frac{1}{z-\sigma}\colon \sigma \in \Sigma
		\Big].
		\label{eq:punctured-regulars}%
	\end{equation}
	In order to perform partial fraction decompositions, we extend this algebra to
	\begin{equation}
		\regulars^{+}(\Sigma)
		\defas
		\regulars(\Sigma)
			\tp
		\Q\Big[
			\sigma_i,
			\frac{1}{\sigma_i-\sigma_j}\colon \sigma_i, \sigma_j \in \Sigma 
				\ \text{and}\ 
				\sigma_i \neq \sigma_j
		\Big]
		.
		\label{eq:punctured-regulars+}%
	\end{equation}
\end{definition}%
\begin{lemma}\label{lemma:hlog-primitives}
	The algebra $\regulars^{+}(\Sigma) \tp \HlogAlgebra(\Sigma)$ is closed under taking primitives.
\end{lemma}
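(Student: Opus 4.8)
The plan is to show that for any generator $g \in \regulars^+(\Sigma)$ and any hyperlogarithm $\Hyper{w}$, the product $g \cdot \Hyper{w}$ admits a primitive again lying in $\regulars^+(\Sigma) \tp \HlogAlgebra(\Sigma)$, and then extend by linearity. First I would reduce the rational coefficient to a basis of $\regulars^+(\Sigma)$ by partial fractioning in $z$: every element of $\regulars(\Sigma)$ is a $\Q[\sigma,(\sigma_i-\sigma_j)^{-1}]$-linear combination of the monomials $z^n$ ($n \geq 0$) and the simple fractions $(z-\sigma)^{-m}$ ($\sigma \in \Sigma$, $m \geq 1$). Hence it suffices to find primitives of $z^n \Hyper{w}(z)$ and of $(z-\sigma)^{-m} \Hyper{w}(z)$, since the $\sigma$-dependent prefactors are constants with respect to $\partial_z$ and stay inside $\regulars^+(\Sigma)$.

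The two cases are handled by integration by parts using lemma~\ref{lemma:Hlog-differential}, which tells us that $\partial_z \Hyper{\letter{\sigma}w'}(z) = \tfrac{1}{z-\sigma}\Hyper{w'}(z)$, so differentiating a hyperlogarithm lowers its weight by one while producing a simple rational factor. For the polynomial case I would integrate $z^n \cdot \partial_z \Hyper{\letter{\sigma}w'}$ by parts: $\int z^n \Hyper{w'}(z)/(z-\sigma)\ \dd z$, after writing $z^n/(z-\sigma)$ as a polynomial plus $\sigma^n/(z-\sigma)$, reduces to integrals with strictly lower weight in the hyperlogarithm together with the explicitly integrable $\int z^n \partial_z \Hyper{\letter{\sigma}w'}$ term. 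The key mechanism is that the letter $\letter{\sigma}$ of $w = \letter{\sigma}w'$ exactly matches the pole $(z-\sigma)^{-1}$, so that $\int (z-\sigma)^{-1}\Hyper{w'}(z)\ \dd z = \Hyper{\letter{\sigma}w'}(z)$ directly by \eqref{eq:def-Hlog-noshuffle}; when the pole $(z-\tau)^{-1}$ does \emph{not} match the leading letter one uses a partial-fraction identity in the two denominators and integration by parts to trade $\Hyper{w}$ for $\Hyper{\letter{\tau}w}$ plus lower-weight terms. Both manipulations keep us inside $\regulars^+(\Sigma)\tp\HlogAlgebra(\Sigma)$: the new constants $\sigma^n$, $(\sigma-\tau)^{-1}$ are precisely what the extension in \eqref{eq:punctured-regulars+} was introduced to accommodate.

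I would organize the whole argument as an induction on the weight $\abs{w}$ of the hyperlogarithm. In weight zero, $\Hyper{\emptyWord}=1$ and the statement reduces to the classical fact that $\regulars^+(\Sigma)$ has primitives in $\regulars^+(\Sigma) + \sum_\sigma \Q\cdot\Hyper{\letter{\sigma}}$, with the logarithm $\log(z-\sigma)=\Hyper{\letter{\sigma}}$ (up to a constant shift) supplying the missing antiderivative of $(z-\sigma)^{-1}$; the $(z-\sigma)^{-m}$ terms with $m\geq 2$ integrate rationally. The inductive step is the integration-by-parts reduction described above, where each partial integration strictly decreases the weight of the hyperlogarithm appearing under the integral sign, so the induction hypothesis applies to the remainder. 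The main obstacle I anticipate is the bookkeeping of the mismatched-pole case: one must verify that repeatedly prepending letters $\letter{\tau}$ and extracting lower-weight pieces terminates and produces coefficients genuinely in $\regulars^+(\Sigma)$ rather than in some larger ring; this is exactly why the partial-fraction closure property built into $\regulars^+(\Sigma)$ is essential, and checking that the recursion closes (rather than spawning ever-new denominators) is the technical heart of the proof.
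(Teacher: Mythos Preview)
Your approach is essentially the paper's: partial-fraction the rational coefficient, then use integration by parts together with $\partial_z \Hyper{\letter{\sigma}w'} = (z-\sigma)^{-1}\Hyper{w'}$ to reduce the weight of the hyperlogarithm, terminating by induction. The paper writes down the two IBP identities
\[
\int \frac{\Hyper{w}}{(z-\sigma)^{n+1}}\,\dd z = -\frac{\Hyper{w}}{n(z-\sigma)^n} + \frac{1}{n}\int \frac{\partial_z \Hyper{w}}{(z-\sigma)^n}\,\dd z,
\qquad
\int z^n \Hyper{w}\,\dd z = \frac{z^{n+1}}{n+1}\Hyper{w} - \frac{1}{n+1}\int z^{n+1}\partial_z \Hyper{w}\,\dd z,
\]
and observes that the only remaining base case is $\int (z-\sigma)^{-1}\Hyper{w}\,\dd z = \Hyper{\letter{\sigma}w}$.

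One point to clean up: your ``mismatched-pole case'' is not an obstacle. The identity $\int (z-\tau)^{-1}\Hyper{w}(z)\,\dd z = \Hyper{\letter{\tau}w}(z)$ holds for \emph{any} $\tau\in\Sigma$ and \emph{any} word $w$; there is no requirement that $\tau$ coincide with the first letter of $w$. So once the rational prefactor has been reduced to a simple pole, you are done immediately---no further partial fractions between poles, no trading of letters. The recursion therefore closes without the bookkeeping you anticipate, and the constants $(\sigma_i-\sigma_j)^{-1}$ from partial fractioning are the only reason $\regulars^+(\Sigma)$ (rather than $\regulars(\Sigma)$) is needed.
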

\begin{proof}
	Any $g \in \regulars^{+}(\Sigma)$ decomposes uniquely into partial fractions of the form
	\begin{equation}
		g(z)
		=
		\sum_{\sigma \in \Sigma}
		\sum_{n \in \N}
		\frac{A_{\sigma,n}}{(z-\sigma)^n}
		+
		\sum_{n \in \N_0} A_n z^n,
	\end{equation}
	so to integrate $f(z) = g(z) \Hyper{w}(z)$ we can apply partial integration formulae
	\begin{align}
		\int \frac{\Hyper{w}(z)}{(z-\sigma)^{n+1}} \dd z
		&=
		-\frac{\Hyper{w}(z)}{n(z-\sigma)^n} + \int \frac{\partial_z \Hyper{w}(z)}{n(z-\sigma)^n} \dd z
		\quad\text{for any $\sigma\in\Sigma$ and $n \in \N$},
		\label{eq:hlog-primitive-nenner}%
		\\
		\int z^n \Hyper{w}(z) \dd z
		&= \frac{z^{n+1}}{n+1} \Hyper{w}(z) - \int \frac{z^{n+1} \partial_z \Hyper{w}(z)}{n+1} \dd z
		\quad\text{for any $n \in \N_0$.}
		\label{eq:hlog-primitive-zaehler}%
	\end{align}
	This recursion terminates because in each step the weight of the hyperlogarithm is reduced. The only remaining case is lemma~\ref{lemma:Hlog-differential}:
	$ \Hyper{\letter{\sigma}w}(z) $ is a primitive of $\frac{\Hyper{w}(z)}{z-\sigma}$.
\end{proof}
	
\subsection{Divergences and analytic properties}
Any hyperlogarithm $\Hyper{w}(z)$ with $w \in T(\Sigma)$ is locally analytic on $\C \setminus \Sigma$. In this section we study the behaviour near the singular points $\Sigma$ as summarized in
\begin{proposition}
	\label{prop:hlog-divergences}%
	Given any $w \in T(\Sigma)$ and a point $\tau \in \Sigma \cup \set{\infty}$, the hyperlogarithm
	\begin{equation}
		\Hyper{w}(z) = \sum_{k=0}^N f_{w,\tau}^{(k)} (z) 
		\cdot
		\begin{cases}
			\log^k(z-\tau) & \text{when $\tau \neq \infty$ and} \\
			\log^k(z) & \text{when $\tau = \infty$} \\
		\end{cases}
		\label{eq:hlog-divergences} %
	\end{equation}
	admits a unique expansion into hyperlogarithms $f_{w,\tau}^{(k)}(z)$ that are analytic at $z \rightarrow \tau$.\footnote{For $\tau=\infty$ this means that $f_{w,\infty}^{(k)}(1/z)$ is analytic at $z \rightarrow 0$.} In particular, the divergences at $z \rightarrow \tau$ are logarithmic. The number $N$ is bounded by the length of the longest word occurring in $w$. More precisely, for $w = \letter{\sigma_1}\ldots\letter{\sigma_n}$ and $\tau \neq \infty$, we have $N$ at most the maximum number of consecutive letters $\letter{\tau}$ that appear in $w$.
\end{proposition}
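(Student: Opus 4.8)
The plan is to reduce everything to a single word $w=\letter{\sigma_1}\cdots\letter{\sigma_n}$ by linearity, and then to \emph{move the base point of the iterated integral to $\tau$}. The case $\tau=0$ is already built into Definition~\ref{def:Hlog}: the decomposition $w=\sum_k\letter{0}^k\shuffle w_k$ with $\WordReg{0}{}(w_k)=w_k$ exhibits $\Hyper{w}$ in the form \eqref{eq:hlog-divergences} at $\tau=0$, because each $\int_0^z w_k$ is absolutely convergent and analytic at $z\to0$ (as recorded after \eqref{eq:def-Hlog}); here $N$ is the largest trailing block of $\letter{0}$ that can be split off, which is at most the longest block of consecutive $\letter{0}$. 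For $\tau\in\Sigma\setminus\set{0}$ I would instead use the path-concatenation formula (Lemma~\ref{lemma:hlog-path-concatenation}) and let the intermediate point tend to $\tau$. Since $\tau\in\Sigma$ is singular, this limit must be taken in the regularized, tangential sense developed in Section~\ref{sec:tangential-base-points}; the outcome is an expansion
\[
	\Hyper{w}(z)=\sum_{k=0}^n\Big(\int_\tau^z\letter{\sigma_1}\cdots\letter{\sigma_k}\Big)\cdot C_{\letter{\sigma_{k+1}}\cdots\letter{\sigma_n}},
\]
where the $C_{\bullet}$ are the finite regularized limits $\Reglim_{u\to\tau}\Hyper{\bullet}(u)$ (constants) and $\int_\tau^z$ denotes the iterated integral with tangential base point at $\tau$.

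Second, I would analyse each prefix integral $\int_\tau^z v$ with $v=\letter{\sigma_1}\cdots\letter{\sigma_k}$ by shuffle-regularizing with respect to the letters $\letter{\tau}$ sitting \emph{at} the base point, that is, at the tail of $v$. Lemma~\ref{lemma:shuffle-decomposition} (with $A=\set{\tau}$) produces a unique decomposition $v=\sum_{j}\letter{\tau}^j\shuffle v_j$ with $v_j$ not ending in $\letter{\tau}$, and since $\int_\tau^z$ is a character (Lemma~\ref{lemma:iint-shuffle}) with $\int_\tau^z\letter{\tau}^j=\tfrac1{j!}\log^j(z-\tau)$, this gives $\int_\tau^z v=\sum_j\frac{\log^j(z-\tau)}{j!}\int_\tau^z v_j$. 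Because the innermost letter of each $v_j$ differs from $\letter{\tau}$, the integral $\int_\tau^z v_j$ is genuinely convergent and analytic as $z\to\tau$. Collecting these contributions over all $k$ proves \eqref{eq:hlog-divergences} with coefficients $f^{(k)}_{w,\tau}$ analytic at $\tau$; that they lie in $\HlogAlgebra(\Sigma)$ follows from the closure under primitives, Lemma~\ref{lemma:hlog-primitives}.

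The refined bound is where the argument has to be run carefully, and I expect it to be the main obstacle. The crucial point is that, with the base point placed \emph{at} $\tau$, the highest power of $\log(z-\tau)$ produced by $\int_\tau^z v$ equals the length of the maximal \emph{trailing} block of $\letter{\tau}$ in $v$ (the letters adjacent to the base point), with \emph{no} cancellation between different prefixes --- in contrast to a naive expansion based at $0$, where non-adjacent copies of $\letter{\tau}$ would spuriously inflate the log-degree and only cancel in the sum. Maximizing the trailing run over all prefixes $\letter{\sigma_1}\cdots\letter{\sigma_k}$ of $w$ yields exactly the longest block of \emph{consecutive} letters $\letter{\tau}$, so $N$ is bounded by that number for $\tau\neq\infty$. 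The hard part is to justify that this tangential organization is cancellation-free and that the constants $C_{\bullet}$ are finite; this is precisely what the tangential-base-point theory of Section~\ref{sec:tangential-base-points} (finiteness of regularized limits) and the shuffle regularization of Definition~\ref{def:shuffle-regularization} are designed to supply.

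Finally, for $\tau=\infty$ I would substitute $t=1/z$: each form $\letter{\sigma}=\dd z/(z-\sigma)$ becomes $-\dd t/t$ plus a form regular at $t=0$, so \emph{every} letter contributes to the singularity at $t=0$, the refinement by consecutive letters disappears, and one is left with the crude bound $N\leq n$. Uniqueness of the expansion \eqref{eq:hlog-divergences} is then immediate, since the powers $1,\log(z-\tau),\dots,\log^N(z-\tau)$ are linearly independent over the germs of functions analytic at $\tau$ (a logarithm is transcendental over such germs, exactly as exploited in the proof of Lemma~\ref{lemma:hyperlog-independence}), so the coefficients $f^{(k)}_{w,\tau}$ are determined.
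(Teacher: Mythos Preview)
Your proposal is correct and follows essentially the same route as the paper. The paper organizes the argument into exactly the pieces you outline: the case $\tau=0$ is Definition~\ref{def:Hlog} together with the convergence of $\int_0^z w_k$ (Lemma~\ref{lemma:hlog-zsum}); for $\tau\in\Sigma\setminus\{0\}$ the paper proves a singular path-concatenation formula (Lemma~\ref{lemma:hlog-path-concatenation-singular}) by letting the intermediate point in Lemma~\ref{lemma:hlog-path-concatenation} tend to $\tau$ and then shuffle-regularizes the prefixes at trailing $\letter{\tau}$, just as you describe; and for $\tau=\infty$ it works via the inversion $\WordTransformation{1/z}$ (Lemma~\ref{lemma:reginf-expansion-convolution}). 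Your formula with $C_\bullet=\AnaReg{u}{\tau}\Hyper{\bullet}(u)$ is valid by the multiplicativity \eqref{eq:reglim-product} of $\AnaReg{u}{\tau}$ applied to the $u$-independent sum $\sum_{(w)}\int_u^z w_{(1)}\cdot\Hyper{w_{(2)}}(u)=\Hyper{w}(z)$; the paper instead writes the constants as $\Hyper{\WordReg{}{\tau}(w_{(2)})}(\tau)$ and absorbs the difference into the convention $\int_\tau^z\letter{\tau}\defas\Hyper{\letter{\tau}}(z)$, but this is only a bookkeeping change. The ``hard part'' you flag---finiteness of the constants---is exactly Lemma~\ref{lemma:hlog-single-divergence}, and your refined bound via maximal trailing blocks over prefixes is the argument the paper leaves implicit after Lemma~\ref{lemma:hlog-path-concatenation-singular}.
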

We will prove this result and moreover show how the functions $f_{w,\tau}^{(k)}$ in \eqref{eq:hlog-divergences} can be computed explicitly. This logarithmic nature of the divergences motivates the
\begin{definition}
	\label{def:reglim} %
	Suppose that near $\tau \in \C \cup \set{\infty}$, the function $f(z)$ is a polynomial in $\log(z-\tau)$ with coefficients that are Laurent series in $z$. So for some $M \in \N$,
	\begin{equation}\begin{split}
		f(z)
		&= 
		\sum_{n=0}^{N} \log^n (z-\tau) \sum_{m=-M}^{\infty} (z-\tau)^m A_{n,m}
		\quad\text{in the case $\tau \in \C$ and}
		\\
		f(z)
		&= 
		\sum_{n=0}^{N} \log^n(z) \sum_{m=-M}^{\infty} \left( \frac{1}{z} \right)^m A_{n,m}
		\quad\text{when $\tau = \infty$.}
		\label{eq:log-laurent-expansion} %
	\end{split}\end{equation}
	Then the \emph{regularized limit} of $f(z)$ at $z \rightarrow \tau$ is
	\begin{equation}
		\AnaReg{z}{\tau} f(z)
		\defas A_{0,0}.
		\label{eq:def-reglim} %
	\end{equation}
\end{definition}
Note that the expansion \eqref{eq:log-laurent-expansion} is unique (after fixing the branch of the logarithms) and the regularized limit therefore well-defined. If we multiply two functions $f$ and $g$ that both have at worst logarithmic singularities at $z \rightarrow \tau$ (so we can set $M=0$), we can multiply their expansions and find
	\begin{equation}
		\AnaReg{z}{\tau} \left( f \cdot g \right)
		=
		\AnaReg{z}{\tau} (f)
		\cdot
		\AnaReg{z}{\tau} (g)
		\quad\text{when $M \geq 0$ in \eqref{eq:log-laurent-expansion} for both $f$ and $g$.}
		\label{eq:reglim-product} %
	\end{equation}
	In particular this applies to all hyperlogarithms $f,g \in \HlogAlgebra(\Sigma)(z)$ and therefore the map 
	$
		\AnaReg{z}{\tau}
		\colon
		\HlogAlgebra(\Sigma)(z)
		\longrightarrow
		\C
	$
	is a morphism of algebras. Below we will compute this map, strictly speaking its pull-back onto the tensor algebra $T(\Sigma)$, in terms of the combinatorial operator $\WordReg{}{\tau}$ from~\eqref{eq:def-shuffle-regularization}.

	Furthermore note that for any $f(z)$ in \eqref{eq:log-laurent-expansion}, the limit $z \rightarrow \tau$ is finite if and only if
	\begin{equation}
		A_{n,m} = 0
		\quad
		\text{whenever $n>0$ or $m<0$}.
		\label{eq:loglaurent-finiteness}%
	\end{equation}
	As a consequence,
	\begin{equation}
		\AnaReg{z}{\tau} f(z)
		=
		\lim_{z \rightarrow \tau} f(z)
		\quad
		\text{whenever the limit exists}.
		\label{eq:lim=reglim}%
	\end{equation}
	\begin{remark}[Tangential base points]\label{rem:AnaReg-tangential-basepoints}%
	For hyperlogarithms $f(z) = \Hyper{w}(z) = \int_0^z w$, our definition~\eqref{eq:def-reglim} is consistent with section~\ref{sec:tangential-base-points}:
		$\AnaReg{z}{\tau} \Hyper{w}(z) = \int_0^{\tau} w$
	if we define the right-hand side as in \eqref{eq:divergences-barobjects} via annihilation of $\log (z-\tau)$ when $\tau\neq\infty$.
	Note that here and in \eqref{eq:log-laurent-expansion} we fix the principal branch of the logarithm, so we think of letting $z\rightarrow \tau$ approach the singularity from the right such that $(z-\tau) \in \C \setminus (-\infty,0]$ gives a well-defined $\Imaginaerteil \log(z-\tau) \in (-\pi,\pi)$.
	In the case $\tau=\infty$, the limit $z\rightarrow \infty$ is understood along the positive real axis, so $\log(z)$ in \eqref{eq:log-laurent-expansion} is on the principal branch and real-valued.
	For example, $\AnaReg{z}{\infty} \log(2 z) = \log 2$ and $\AnaReg{z}{\infty} \log(-z) = \pm \imag \pi$ depends on whether $\log(-z)$ is defined by continuation from $z=-1$ into the lower or the upper half-plane. 
	\end{remark}

\subsubsection{Behaviour near zero}
\begin{definition}
	\label{def:zsum}%
	For any integers $n$, $r$, $n_1$,\ldots, $n_r \in \N$ and arbitrary numbers $z_1$,\ldots,$z_r \in \C$ we have an associated \emph{Z-sum} \cite{MochUwerWeinzierl:NestedSums} given as\footnote{Beware that our order of the arguments $n_i$ and $z_i$ in \eqref{eq:zsum} is reversed with respect to \cite{MochUwerWeinzierl:NestedSums}.}
	\begin{equation}
		\Zsum{n}{n_1,\ldots,n_r}{z_1,\ldots,z_r}
		\defas
		\sum_{0< k_1 < \cdots<k_r \leq n} \frac{z_1^{k_1}\!\cdots z_r^{k_r}}{k_1^{n_1}\!\cdots k_r^{n_r}}.
		\label{eq:zsum} %
	\end{equation}
\end{definition}
These sums form a Hopf algebra on their own. They have recently been studied in great detail and powerful tools for their manipulation are available \cite{AblingerBluemleinSchneider:GeneralizedHarmonicSumsAndPolylogarithms}. In a sense this focus on the sums is an alternative viewpoint to study hyperlogarithms. In this thesis however we will not take this route and treat hyperlogarithms as iterated integrals exclusively.
\begin{lemma}
		\label{lemma:hlog-zsum} %
	For any word $w$ that does not end in $\letter{0}$, the hyperlogarithm $\Hyper{w}(z)$ is analytic at $z \rightarrow 0$. If
$
	w=\letter{0}^{n_r-1}\letter{\sigma_r}\!\cdots\letter{0}^{n_1-1}\letter{\sigma_1}
$
with non-zero $\sigma_i \in \C$ and $n_i\in \N$, then
	\begin{equation}
		\Hyper{w}(z)
		= (-1)^r
			\sum_{k=r}^{\infty} \frac{\left(z/\sigma_r\right)^k}{k^{n_r}}
			\Zsum{k-1}{n_{1},\ldots,n_{r-1}}{\frac{\sigma_2}{\sigma_{1}},\ldots,\frac{\sigma_r}{\sigma_{r-1}}}
		.
 		\label{eq:hlog-zsum} %
	\end{equation}
\end{lemma}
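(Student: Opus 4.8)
The plan is to prove both assertions simultaneously by induction on the weight $N = \abs{w} = n_1 + \cdots + n_r$, using the differential recursion of lemma~\ref{lemma:Hlog-differential} to strip off the leftmost letter of $w$. First I would settle analyticity. If $w$ does not end in $\letter{0}$, then neither does the word $w''$ obtained by deleting the leftmost letter $\letter{\sigma}$ of $w$ (unless $w''$ is empty), so by induction $\Hyper{w''}$ is analytic at $z\to 0$ with $\Hyper{w''}(0)=0$. Since lemma~\ref{lemma:Hlog-differential} gives $\partial_z \Hyper{w}(z) = \Hyper{w''}(z)/(z-\sigma)$, the integrand is analytic at the origin --- when $\sigma\neq 0$ because the denominator does not vanish there, and when $\sigma=0$ because $\Hyper{w''}(0)=0$ cancels the simple pole --- and integrating from $0$ keeps $\Hyper{w}$ analytic at $0$ with $\Hyper{w}(0)=0$.

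For the series identity, write $S(z)$ for the right-hand side of \eqref{eq:hlog-zsum}. I would observe that it is a power series vanishing at $0$, so by the analyticity just established it is enough to check $\partial_z \Hyper{w} = \partial_z S$. Here I distinguish the two shapes of the leftmost letter via \eqref{eq:def-Hlog-noshuffle} and lemma~\ref{lemma:Hlog-differential}. If $n_r\geq 2$ the leftmost letter is $\letter{0}$ and stripping it yields the depth-$r$ word $\tilde w$ with $n_r$ lowered by one; by induction $\partial_z \Hyper{w}=\Hyper{\tilde w}/z = S_{\tilde w}/z$, and differentiating $S$ directly gives $\partial_z S = S_{\tilde w}/z$ as well, since lowering $n_r$ only divides the $k$-th coefficient by $k$ and leaves the inner $Z$-sum untouched. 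If $n_r = 1$ the leftmost letter is $\letter{\sigma_r}$ and stripping it yields the depth-$(r-1)$ word $w'$; by induction $\partial_z \Hyper{w}=S_{w'}/(z-\sigma_r)$, so it remains to verify the single identity $(z-\sigma_r)\,\partial_z S = S_{w'}$.

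This last identity is the heart of the matter and where the $Z$-sum bookkeeping enters. Expanding $(z-\sigma_r)\partial_z S$ as a power series in $z$ and reindexing, the coefficient of $(z/\sigma_r)^k$ is proportional to the difference $c_k-c_{k+1}$ of consecutive outer $Z$-sums $c_k = \Zsum{k-1}{n_1,\ldots,n_{r-1}}{\frac{\sigma_2}{\sigma_1},\ldots,\frac{\sigma_r}{\sigma_{r-1}}}$. By definition~\ref{def:zsum} this difference telescopes the top summation index: $c_{k+1}-c_k = (\sigma_r/\sigma_{r-1})^{k}\,k^{-n_{r-1}}\, d_k$, where $d_k = \Zsum{k-1}{n_1,\ldots,n_{r-2}}{\frac{\sigma_2}{\sigma_1},\ldots,\frac{\sigma_{r-1}}{\sigma_{r-2}}}$ is exactly the depth-$(r-2)$ sum appearing in $S_{w'}$. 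Substituting this back collapses the powers of $\sigma_r$ into powers of $\sigma_{r-1}$ and reproduces the coefficients of $S_{w'}$ term by term, once the signs $(-1)^{r-1}=-(-1)^r$ and the boundary term $k=r-1$ (where every $Z$-sum degenerates to a single summand with $k_i=i$) are matched. The base case $N=1$ is $\Hyper{\letter{\sigma_1}}(z)=\log(1-z/\sigma_1)=-\Li_1(z/\sigma_1)$, which is \eqref{eq:hlog-zsum} for $r=1$ with the empty $Z$-sum equal to $1$. I expect the main obstacle to be precisely this combinatorial telescoping together with keeping the prefactors $\sigma_i^{-k}$, the weights $k^{-n_{r-1}}$ and the alternating signs consistent across the depth reduction; convergence of $S$ near $0$ is not a separate concern, since $S$ is identified with the already analytic $\Hyper{w}$.
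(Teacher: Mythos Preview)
Your argument is correct, and it is essentially the differential mirror of the paper's proof. The paper proceeds constructively in the \emph{forward} direction: starting from $\Hyper{\emptyWord}=1$, it integrates the iterated integral $\int_0^z w$ from the inside out, expanding each $\frac{1}{x-\sigma_j}=-\frac{1}{\sigma_j}\sum_{m\geq 0}(x/\sigma_j)^m$ as a geometric series and applying $\int_0^z x^{n-1}\,\dd x = z^n/n$ for the $\letter{0}$-letters. This yields the multi-sum
\[
(-1)^r\!\!\!\sum_{m_1,\ldots,m_r\geq 0}\frac{z^{\sum(m_i+1)}\,\sigma_1^{-(m_1+1)}\cdots\sigma_r^{-(m_r+1)}}{(m_1+1)^{n_1}(m_1+m_2+2)^{n_2}\cdots(m_1+\cdots+m_r+r)^{n_r}}
\]
in a single stroke, and the identification with the $Z$-sum form is just the reindexing $k_i=(m_1+1)+\cdots+(m_i+1)$. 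Your inductive verification via $\partial_z\Hyper{w}=\Hyper{w''}/(z-\sigma)$ and the telescoping $c_{k+1}-c_k=(\sigma_r/\sigma_{r-1})^k k^{-n_{r-1}}d_k$ is exactly this computation read backwards. The paper's route avoids your case split on $n_r$ and the separate sign/boundary bookkeeping; your route has the advantage that it never writes down the intermediate multi-sum and works directly with the $Z$-sum recursion.
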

\begin{proof}
	As $\Hyper{w}(z) = \int_0^z w$ is an iterated integral \eqref{eq:def-Hlog-from-to}, it suffices to apply
\begin{equation*}
	\int_0^z \frac{\dd x}{x} \sum_n a_n x^n
	= \sum_{n=1}^{\infty} \frac{a_n}{n} z^n
	\quad\text{and}\quad
	\int_0^z \frac{\dd x}{x-\sigma_k} \sum_n a_n x^n
	= -\frac{1}{\sigma} \sum_{n,m_k=0}^{\infty} \frac{a_n z^{n+m_k+1}}{(n+m_k+1) \sigma_k^{m_k}}
\end{equation*}
	repeatedly as given by the form of $w$. We start with the constant $\Hyper{\emptyWord}(z) = 1$ and find
	\begin{equation*}
		\Hyper{w} (z)
		=
		(-1)^r \hspace{-3ex}
		\sum_{m_1,\ldots,m_r \geq 0}
		\frac{z^{m_1+1 + m_2+1 + \ldots + m_r+1} \sigma_1^{-(m_1+1)} \cdots \sigma_r^{-(m_r+1)}}
				{ (m_1+1)^{n_1} (m_1+1+m_2+1)^{n_2} \cdots (m_1+1+\ldots+m_r+1)^{n_r} }
		.
	\end{equation*}
	This is the Z-sum \eqref{eq:zsum} with $k_i = (m_1+1)+\ldots+(m_i+1)$ and $z_i = \sigma_{i+1}/\sigma_i$.
\end{proof}
In particular note that $\Hyper{w}(z)$ vanishes at zero with order $r$, the number of letters in $w$ that are different from $\letter{0}$.
The regularization of an arbitrary word $w = \sum_k \letter{0}^k \shuffle w_k$ ($w_k$ not ending on $\letter{0}$) in definition~\ref{def:Hlog} makes the claim of proposition~\ref{prop:hlog-divergences} explicit:
\begin{equation}
	\Hyper{w}(z)
	=
	\sum_k \log^k(z) \cdot f^{(k)}_{w,0}(z)
	\quad\text{where}\quad
	f_{w,0}^{(k)}(z)
	= \frac{\Hyper{w_k}(z)}{k!}
	\quad\text{is analytic at $z\rightarrow 0$.}
	\label{eq:hlog-zero-logseries} %
\end{equation}
\begin{corollary}
	\label{corollary:hlog-reg0}%
	$L_w(z)$ is convergent at $z\rightarrow 0$ if and only if $w$ does not contain any of the words $\setexp{\letter{0}^n}{n\in\N}$. The limit is then $\lim_{z \rightarrow 0} \Hyper{w}(z) = \counit(w)$ is the coefficient of the empty word in $w$. More generally we have $\AnaReg{z}{0} \Hyper{w}(z) = \counit(w)$ for all $w \in T(\Sigma)$.
\end{corollary}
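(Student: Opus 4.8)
The plan is to read off both assertions directly from the canonical log-expansion \eqref{eq:hlog-zero-logseries}. Writing $w = \sum_k \letter{0}^k \shuffle w_k$ with each $w_k$ not ending in $\letter{0}$, that formula gives $\Hyper{w}(z) = \sum_k \log^k(z)\cdot \Hyper{w_k}(z)/k!$, and by Lemma~\ref{lemma:hlog-zsum} every coefficient function $\Hyper{w_k}(z)$ is analytic at $z\rightarrow 0$. The same lemma yields more precise information: a word $v\neq\emptyWord$ not ending in $\letter{0}$ has $\Hyper{v}(z)$ vanishing at the origin to the order given by its number of letters different from $\letter{0}$, so for arbitrary $w_k$ one has $\lim_{z\rightarrow 0}\Hyper{w_k}(z) = \counit(w_k)$, with the analytic remainder vanishing to positive order whenever $\counit(w_k)=0$. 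Everything below is extracted from this single expansion.

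First I would settle the convergence criterion. The $k=0$ summand $\Hyper{w_0}(z)$ is analytic and so harmless. For $k\geq 1$, I distinguish two cases. If $\counit(w_k)=0$, then $\Hyper{w_k}(z)$ vanishes at least to first order, and since $z^r\log^k(z)\rightarrow 0$ as $z\rightarrow 0^+$ for $r\geq 1$, the term $\log^k(z)\,\Hyper{w_k}(z)/k!$ still tends to $0$; only a nonzero $\counit(w_k)$ with $k\geq 1$ produces a genuinely divergent $\counit(w_k)\log^k(z)/k!$. It then remains to translate $\counit(w_k)\neq 0$ into the stated combinatorial condition. Splitting $w_k = \counit(w_k)\,\emptyWord + \tilde w_k$, the contribution $\letter{0}^k\shuffle\tilde w_k$ interleaves zeros with nonempty words that each retain a non-$\letter{0}$ letter, hence never produces a pure power $\letter{0}^n$; since $w_0$ also cannot end in $\letter{0}$, the coefficient of $\letter{0}^n$ in $w$ equals $\counit(w_n)$ for $n\geq 1$. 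Therefore $\Hyper{w}(z)$ converges at $z\rightarrow 0$ exactly when $\counit(w_k)=0$ for all $k\geq 1$, i.e.\ exactly when $w$ contains none of the words $\letter{0}^n$, $n\in\N$.

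Next I would identify the value. When the limit exists, only the $k=0$ term survives, so $\lim_{z\rightarrow 0}\Hyper{w}(z) = \lim_{z\rightarrow 0}\Hyper{w_0}(z) = \counit(w_0)$; and because the empty word can occur in $\letter{0}^k\shuffle w_k$ only for $k=0$, we have $\counit(w_0)=\counit(w)$, giving the claimed limit. For the regularized limit, the expansion \eqref{eq:hlog-zero-logseries} is already of the log-Laurent form of Definition~\ref{def:reglim} with $M=0$, so its distinguished coefficient $A_{0,0}$ is the constant term of the $\log^0$-part $\Hyper{w_0}(z)$, namely $\counit(w_0)=\counit(w)$. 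Hence $\AnaReg{z}{0}\Hyper{w}(z)=\counit(w)$ unconditionally, consistently recovering \eqref{eq:lim=reglim} in the convergent case.

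I expect no deep obstacle: the argument is essentially a reading of \eqref{eq:hlog-zero-logseries}. The only two steps requiring care are the elementary analytic fact $z^r\log^k(z)\rightarrow 0$, which is precisely what prevents a nonempty $w_k$ from ever spoiling convergence, and the bookkeeping identity that $\letter{0}^n$ occurs in $w$ if and only if $\counit(w_n)\neq 0$; the latter combinatorial matching between the shuffle decomposition and the appearance of pure powers of $\letter{0}$ is the genuine heart of the proof.
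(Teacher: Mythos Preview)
Your proof is correct and follows exactly the approach the paper intends: the corollary is placed immediately after \eqref{eq:hlog-zero-logseries} precisely because it is meant to be read off from that expansion together with Lemma~\ref{lemma:hlog-zsum}, and you have spelled out those details carefully. In particular, your observation that only the coefficients $\counit(w_k)$ for $k\geq 1$ matter (rather than the full $w_k$), and your bookkeeping identity identifying $\counit(w_n)$ with the coefficient of $\letter{0}^n$ in $w$, are exactly the points the paper leaves implicit.
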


\subsubsection{Singularities at non-zero letters}
Even though a single hyperlogarithm $\Hyper{w}(z)$ (of a word $w \in \Sigma^{\times}$) is in general locally analytic only on $\C \setminus \Sigma$, it has at most a single point of divergence in $\Sigma$ by
\begin{lemma}
	\label{lemma:hlog-single-divergence}%
	Any hyperlogarithm $\Hyper{w}(z)$ with $w = \letter{\sigma_1}\!\!\cdots\letter{\sigma_n}$ has a finite limit when $z \rightarrow \sigma \neq \sigma_1$ (which may depend on the homotopy class of the path), provided $z$ approaches $\sigma$ in a way such that $\arg(z-\sigma)$ is bounded ($z$ shall not wind infinitely often around $\sigma$).
\end{lemma}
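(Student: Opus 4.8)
The plan is to reduce the claim to the differential equation of lemma~\ref{lemma:Hlog-differential} together with the divergence expansion of proposition~\ref{prop:hlog-divergences}. Write $w = \letter{\sigma_1} w'$ with $w' = \letter{\sigma_2}\!\cdots\letter{\sigma_n}$, so that $\partial_z \Hyper{w}(z) = \Hyper{w'}(z)/(z-\sigma_1)$. The essential point is that, because $\sigma \neq \sigma_1$, the rational prefactor $1/(z-\sigma_1)$ is holomorphic at $z = \sigma$; thus the only possible source of a singularity at $\sigma$ is the inner hyperlogarithm $\Hyper{w'}$, whose singularity there is purely logarithmic.

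First I would apply proposition~\ref{prop:hlog-divergences} to the lower-weight word $w'$ at the point $\tau = \sigma$, yielding a finite expansion $\Hyper{w'}(z) = \sum_{k=0}^{N} g_k(z)\, \log^k(z-\sigma)$ with each $g_k$ holomorphic at $z=\sigma$, valid (on a fixed branch) on a slit disk $U$ around $\sigma$. Multiplying by the holomorphic function $1/(z-\sigma_1)$ and Taylor-expanding gives $\partial_z \Hyper{w}(z) = \sum_{k=0}^{N}\sum_{m\geq 0} c_{k,m}\,(z-\sigma)^m \log^k(z-\sigma)$, where the power series converges locally uniformly on $U$ and hence may be integrated termwise.

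Next I would integrate term by term. A primitive of $(z-\sigma)^m \log^k(z-\sigma)$ for $m \geq 0$ is, by repeated integration by parts, of the shape $(z-\sigma)^{m+1}\sum_{j=0}^{k} a_{k,j}\log^j(z-\sigma)$; crucially every summand carries a factor $(z-\sigma)^{m+1}$ with $m+1\geq 1$, and no term proportional to $\log^j(z-\sigma)$ alone can arise, precisely because $\partial_z \Hyper{w}$ has no residue $\propto 1/(z-\sigma)$. Choosing the branch of these primitives compatible with that of $\Hyper{w'}$, I assemble a function $P(z)$ on $U$ with $\partial_z P = \partial_z \Hyper{w}$, whence $\Hyper{w}(z) = C + P(z)$ for a constant $C$ on $U$. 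The constant $C$ records the accumulated value of $\Hyper{w}$ along the chosen path up to $U$, and this is exactly where the possible dependence on the homotopy class of the path enters.

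The main obstacle is to show $P(z)\to 0$ as $z\to\sigma$. Since each term of $P$ is $(z-\sigma)^{m+1}\log^j(z-\sigma)$ with $m+1\geq 1$, it suffices that $(z-\sigma)^{m+1}\log^j(z-\sigma)\to 0$, and this is where the hypothesis that $\arg(z-\sigma)$ stays bounded is used: it guarantees $\log(z-\sigma)=\log\abs{z-\sigma}+\bigo{1}$, so the polynomial decay $\abs{z-\sigma}^{m+1}$ dominates the logarithmic growth. Without such control—for instance along a path spiralling into $\sigma$ with $\arg(z-\sigma)$ growing fast relative to $-\log\abs{z-\sigma}$—these terms need not vanish, which is why the hypothesis cannot be dropped. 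Hence $\lim_{z\to\sigma}\Hyper{w}(z)=C$ exists and is finite, subject only to the stated dependence on the homotopy class, completing the argument.
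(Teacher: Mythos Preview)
Your approach mirrors the paper's: both use $\partial_z \Hyper{w}(z)=\Hyper{w'}(z)/(z-\sigma_1)$, exploit that $1/(z-\sigma_1)$ is holomorphic at $\sigma$, and control the (at worst) logarithmic behaviour of $\Hyper{w'}$ near $\sigma$ to conclude integrability. The paper just bounds the integrand by a constant times $\log^{\abs{w'}}(z'-\sigma)$ and integrates that, whereas you work out primitives term by term; both are fine.

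There is, however, a logical gap in how you obtain the expansion of $\Hyper{w'}$. You invoke proposition~\ref{prop:hlog-divergences} for the lower-weight word $w'$, but in the paper that proposition is not yet available at this point: its proof for finite $\tau\in\Sigma$ is completed only \emph{via} the present lemma (see the sentence ``This settles \eqref{eq:hlog-divergences}'' right after lemma~\ref{lemma:hlog-path-concatenation-singular}). As written, your argument is circular. It can be repaired by making the induction explicit and joint (prove proposition~\ref{prop:hlog-divergences} and lemma~\ref{lemma:hlog-single-divergence} simultaneously by induction on the weight), but the paper avoids this entirely: it obtains the logarithmic decomposition of $\Hyper{w'}$ directly from the shuffle regularization $w'=\sum_k \letter{\sigma}^k\shuffle w_k$ (lemma~\ref{lemma:shuffle-decomposition}), so that $\Hyper{w'}(z')=\sum_k \tfrac{1}{k!}\big(\Hyper{\letter{\sigma}}(z')\big)^k\Hyper{w_k}(z')$, and then applies only the induction hypothesis of the present lemma to the shorter words $w_k$ (whose first letter is $\neq\letter{\sigma}$). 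That keeps the argument self-contained.
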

\begin{proof}
	For all words that involve only one letter (e.g.\ $w = \letter{0}^n$), there is nothing to prove. We use an induction over the length $n = \abs{w}$ of $w$ in the other cases: Suppose that $w = \letter{\sigma_1}w'$, then we decompose $w' = \sum_k \letter{\sigma}^k \shuffle w_k$ such that each $w_k$ does not begin with $\letter{\sigma}$. By induction we know that $\Hyper{w_k}(z)$ is finite at $z \rightarrow \sigma$, hence the integrand in
	\begin{equation*}
		\Hyper{w}(z)
		\urel{\eqref{eq:def-Hlog-noshuffle}}
		\int_0^z
		\frac{\dd z'}{z'-\sigma_1}
		\Hyper{w'}\left( z' \right)
		=
		\int_0^z \frac{\dd z'}{z'-\sigma_1} \sum_{k=0}^{\abs{w'}} \frac{\log^k(z'-\sigma)}{k!} \Hyper{w_k}(z')
	\end{equation*}
	is dominated near $z' \rightarrow \sigma \neq \sigma_1$ by some constant times $\log^{\abs{w'}}(z'-\sigma)$. But this integrates to $(z-\sigma)\sum_{k=0}^{\abs{w'}} \frac{(-1)^k \abs{w'}!}{(\abs{w'}-k)!} \log^{\abs{w'}-k}(z-\sigma)$ which has a finite limit (zero) at $z\rightarrow \sigma$.
\end{proof}
\begin{example}
	The multiple polylogarithm
	$\Li_{2,1} (z)
	= \Hyper{w} (z)
	$ with	$w= \letter{1}\letter{0}\letter{1}$ has a logarithmic divergence at $z \rightarrow 1$. For $w = \letter{1} \shuffle \letter{0}\letter{1}
	- 2 \letter{0}\letter{1}\letter{1}
	$ shows $\Li_{2,1}(z) = -\log(1-z) \Li_2(z) - 2 \Li_{1,2}(z)$ where $\Li_2(1) = \mzv{2}$ and
	$\Li_{1,2}(1) = \mzv{1,2} = \mzv{3}$ stay finite.
\end{example}
We can now extend the path concatenation lemma~\ref{lemma:hlog-path-concatenation} to singular base points.
\begin{lemma}
	\label{lemma:hlog-path-concatenation-singular} %
	For $z \in \C \setminus \Sigma$ and $\sigma \in \Sigma$ we have 
	$
		\Hyper{}(z)
		=
		\int_{\sigma}^z \convolution \Hyper{\WordReg{}{\sigma}} (\sigma)
	$, explicitly
	\begin{equation}
		\Hyper{\letter{\sigma_1}\!\cdots\,\letter{\sigma_n}}(z)
		=
		\sum_{k=0}^n
			\int_{\sigma}^z \letter{\sigma_1}\!\!\cdots\letter{\sigma_k}
			\cdot
			\Hyper{\WordReg{}{\sigma}(\letter{\sigma_{k+1}}\!\cdots\,\letter{\sigma_n})}(\sigma)
		.
		\label{eq:hlog-path-concatenation-singular} %
	\end{equation}
	Through the regularization $\WordReg{}{\sigma}$ and lemma~\ref{lemma:hlog-single-divergence}, the second factor is finite. But the iterated integral $\int_{\sigma}^z \letter{\sigma_1}\!\!\cdots\letter{\sigma_k}$ is divergent when $\sigma_k = \sigma$. Instead we define
		\begin{equation}
			\int_{\sigma}^z \letter{\sigma}
			\defas
			\Hyper{\letter{\sigma}}(z)
			\label{eq:hlog-singular-concatenation-branch}%
		\end{equation}
		as the branch of $\log (z-\sigma)$ determined by the original integration path and extend this to a character. This means $\int_{\sigma}^z w = \sum_k \Hyper{\letter{\sigma}^k}(z) \cdot \int_{\sigma}^z (w_k)$ if $w=\sum_k \letter{\sigma}^k \shuffle w_k$ is regularized such that $w_k$ does not end in $\letter{\sigma}$.
\end{lemma}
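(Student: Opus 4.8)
The plan is to adapt the proof of lemma~\ref{lemma:hlog-path-concatenation} and realize both factors on the right-hand side of \eqref{eq:hlog-path-concatenation-singular} as characters on the shuffle Hopf algebra $T(\Sigma)$, so that their convolution is again a character and can be compared with $\Hyper{\cdot}(z)$ directly.

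First I would set up the two characters. Let $f$ denote the map $w\mapsto\int_\sigma^z w$ fixed by the last paragraph of the statement: using the decomposition $w=\sum_k\letter{\sigma}^k\shuffle w_k$ of lemma~\ref{lemma:shuffle-decomposition} (with $A=\set{\sigma}$, so each $w_k$ does not end in $\letter{\sigma}$) I set $f(w_k)\defas\int_\sigma^z w_k$ to be the absolutely convergent iterated integral and $f(\letter{\sigma})\defas\Hyper{\letter{\sigma}}(z)$, extended multiplicatively over $\shuffle$; this makes $f$ a well-defined character. Let $g\defas\Hyper{\cdot}(\sigma)\circ\WordReg{}{\sigma}$. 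Since $\WordReg{}{\sigma}$ is a character (remark after definition~\ref{def:shuffle-regularization}) whose image consists of words not beginning with $\letter{\sigma}$, the value $\Hyper{\WordReg{}{\sigma}(w)}(\sigma)$ is finite by lemma~\ref{lemma:hlog-single-divergence}, and $g$ is a character by lemma~\ref{lemma:Hlog-character}. Hence $\Phi\defas f\convolution g$ is a character, and so is $\Hyper{\cdot}(z)$ by lemma~\ref{lemma:Hlog-character}; it remains to prove $\Phi=\Hyper{\cdot}(z)$.

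I would pin this down by an ODE-in-$z$ argument together with a single boundary value. Both characters obey the same differential recursion: $\partial_z\Hyper{\letter{\sigma_1}w'}(z)=\frac{1}{z-\sigma_1}\Hyper{w'}(z)$ is lemma~\ref{lemma:Hlog-differential}, whereas differentiating $\Phi(\letter{\sigma_1}\cdots\letter{\sigma_n})=\sum_k f(\letter{\sigma_1}\cdots\letter{\sigma_k})\,g(\letter{\sigma_{k+1}}\cdots\letter{\sigma_n})$ and using that $g$ is independent of $z$ while $f$ inherits $\partial_z f(\letter{\sigma_1}v)=\frac{1}{z-\sigma_1}f(v)$ from its definition yields $\partial_z\Phi(\letter{\sigma_1}w')=\frac{1}{z-\sigma_1}\Phi(w')$ (only the terms $k\geq1$ survive, since $f(\emptyWord)=1$). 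For the boundary value I pass to the tangential base point $z\to\sigma$ of section~\ref{sec:tangential-base-points}, annihilating $\Hyper{\letter{\sigma}}(z)$. Writing $w=\sum_k\letter{\sigma}^k\shuffle w_k$ with $w_k$ not beginning with $\letter{\sigma}$, so that $w_0=\WordReg{}{\sigma}(w)$, multiplicativity gives $\Hyper{w}(z)=\sum_k\Hyper{\letter{\sigma}^k}(z)\Hyper{w_k}(z)$; only $k=0$ survives the regularized limit (each $\Hyper{w_k}(z)$ is analytic at $z\to\sigma$ by lemma~\ref{lemma:hlog-single-divergence}), so $\AnaReg{z}{\sigma}\Hyper{w}(z)=\Hyper{w_0}(\sigma)=g(w)$. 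The parallel computation for $\Phi$, using $\AnaReg{z}{\sigma}f=\counit$, the product rule \eqref{eq:reglim-product}, and the counit axiom $\sum_{(w)}\counit(w_{(1)})\,w_{(2)}=w$, gives $\AnaReg{z}{\sigma}\Phi=g$ as well.

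Finally I would combine these by induction on the weight $\abs{w}$. For $\abs{w}=0$ both sides equal $1$. For the step set $D(w)\defas\Phi(w)-\Hyper{w}(z)$; writing $w=\letter{\sigma_1}w'$, the shared recursion gives $\partial_z D(w)=\frac{1}{z-\sigma_1}D(w')=0$ by the inductive hypothesis, so $D(w)$ is constant in $z$, and since $\AnaReg{z}{\sigma}D(w)=0$ while a constant equals its own regularized limit, $D(w)=0$. I expect the main obstacle to lie not in the ODE/induction but in the careful bookkeeping of the two regularizations: verifying that $f=\int_\sigma^z$ is a well-defined character still obeying the derivative recursion once words ending in $\letter{\sigma}$ are regularized, and matching the analytic regularized limit $\AnaReg{z}{\sigma}$ — with the branch of the divergent logarithm fixed by the tangential base point — against the combinatorial operator $\WordReg{}{\sigma}$, since these two objects agree only through the tangential-basepoint convention $\int_\sigma^z\letter{\sigma}=\Hyper{\letter{\sigma}}(z)$ rather than a naive $\log(z-\sigma)$.
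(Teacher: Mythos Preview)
Your approach is genuinely different from the paper's: the paper splits the path at a nearby \emph{non-singular} point $\tau$, uses lemma~\ref{lemma:hlog-path-concatenation} there, rearranges the shuffle regularizations explicitly, and then passes to the limit $\tau\to\sigma$. Your ODE-plus-boundary strategy is cleaner in principle, and the derivative recursion for $f$ does go through (by the same argument as in lemma~\ref{lemma:Hlog-differential}).

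However, your boundary computation contains an error. The regularized limit $\AnaReg{z}{\sigma}$ of definition~\ref{def:reglim} annihilates $\log(z-\sigma)$, \emph{not} $\Hyper{\letter{\sigma}}(z)=\log\big(1-\tfrac{z}{\sigma}\big)=\log(z-\sigma)+\log(-1/\sigma)$. Hence $\AnaReg{z}{\sigma}\Hyper{\letter{\sigma}^k}(z)=\tfrac{1}{k!}[\log(-1/\sigma)]^k$, which is generically nonzero for $k\geq1$. Consequently $\AnaReg{z}{\sigma}f\neq\counit$, and neither $\AnaReg{z}{\sigma}\Hyper{\cdot}(z)$ nor $\AnaReg{z}{\sigma}\Phi$ equals $g$; for $w=\letter{\sigma}^m v$ with $v$ not starting in $\letter{\sigma}$ one finds instead
\[
\AnaReg{z}{\sigma}\Hyper{w}(z)
=\sum_{k=0}^{m}\frac{[\log(-1/\sigma)]^k}{k!}\,\Hyper{\WordReg{}{\sigma}(\letter{\sigma}^{m-k}v)}(\sigma),
\]
which is precisely \eqref{eq:hlog-reglim-singularity}. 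The good news is that the argument can be rescued: computing $\AnaReg{z}{\sigma}\Phi(w)$ correctly (noting that $\AnaReg{z}{\sigma}f(u)$ vanishes on words unless $u=\letter{\sigma}^k$, where it equals $\tfrac{1}{k!}[\log(-1/\sigma)]^k$) yields the \emph{same} expression, so $\AnaReg{z}{\sigma}D(w)=0$ after all. Alternatively, you could use a rescaled tangential basepoint that annihilates $\Hyper{\letter{\sigma}}(z)$ directly, in which case your stated identities become correct; but you must say so explicitly rather than invoke the convention of section~\ref{sec:tangential-base-points}.
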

\begin{proof}
	Let $\tau \in \C \setminus \Sigma$ denote a point in the vicinity of $\sigma$ where we split the integration $\Hyper{}(z) = \int_{\tau}^z \convolution \Hyper{}(\tau)$ according to \eqref{eq:hlog-path-concatenation}, see figure~\ref{fig:path-concatenation-singular}. 
	In a word $w$ we now focus on a sequence of the letter $\letter{\sigma}$, so say $w = u \big(\letter{\sigma}^n\big) v$ where $u \in \im\left( \WordReg{\sigma}{} \right)$ does not end in $\letter{\sigma}$ and $v \in \im (\WordReg{}{\sigma})$ does not begin with $\letter{\sigma}$.
	From corollary~\ref{corollary:shuffle-regularization} we know that
	\begin{equation*}
		u \big( \letter{\sigma}^k \big)
		=
		\sum_{\mu=0}^{k} \letter{\sigma}^{k-\mu} \shuffle \WordReg{\sigma}{} \left( u \letter{\sigma}^\mu \right)
		\quad\text{and}\quad
		\big( \letter{\sigma}^k \big) v
		=
		\sum_{\nu=0}^{k} \letter{\sigma}^{k - \nu} \shuffle \WordReg{}{\sigma} \left( \letter{\sigma}^{\nu} v \right),
	\end{equation*}
	which we use to rewrite those summands of $\Hyper{w}(z) = \sum_{(w)} \int_{\tau}^z w_{(1)} \cdot \Hyper{w_{(2)}}(\tau)$ that split $w=w_{(1)}w_{(2)}=u\big(\letter{\sigma}^n\big)v$ between $u$ and $v$ as follows:
	\begin{equation*}
		\sum_{k=0}^{n}
			\int_{\tau}^z u \letter{\sigma}^k
			\cdot
			\Hyper{\letter{\sigma}^{n-k}v}(\tau)
		=
		\sum_{\mu + \nu + a + b = n}
			\int_{\tau}^z \WordReg{\sigma}{}\left( u \letter{\sigma}^{\mu} \right)
			\cdot
			\Hyper{\WordReg{}{\sigma} \left( \letter{\sigma}^{\nu} v \right)}(\tau)
			\cdot
			\int_{\tau}^z \letter{\sigma}^a
			\cdot
			\Hyper{\letter{\sigma}^b}(\tau)
		.
	\end{equation*}
	The sum over $a+b = n - \mu - \nu$ of the last two terms reduces to $\Hyper{\letter{\sigma}^{n-\mu-\nu}}(z)$ and the other two factors are both finite in the limit $\tau \rightarrow \sigma$. So with \eqref{eq:hlog-singular-concatenation-branch}, the above becomes
	\begin{equation*}
		\sum_{\mu + \nu \leq n}
			\int_{\sigma}^{z} \WordReg{\sigma}{}\left( u \letter{\sigma}^{\mu} \right)
			\cdot
			\int_{\sigma}^{z} \letter{\sigma}^{n-\mu-\nu}
			\cdot
			\Hyper{\WordReg{}{\sigma} \left( \letter{\sigma}^{\nu} v \right)}(\sigma)
		=
		\sum_{k=0}^{n}
			\int_{\sigma}^{z} \left( u \letter{\sigma}^{k} \right)
			\cdot
			\Hyper{\WordReg{}{\sigma}\left( \letter{\sigma}^{n-k} v \right)}(\sigma)
		.
	\end{equation*}
	Add these identities for all the sequences $\letter{\sigma}^{n_k}$ in
	$w = \letter{\sigma}^{n_0} \letter{\sigma_{i_1}} \letter{\sigma}^{n_1} \letter{\sigma_{i_2}} \!\!\!\cdots \letter{\sigma_{i_r}} \letter{\sigma}^{n_r}$
	(each $\sigma_{i_k} \neq \sigma$), this covers the full coproduct of $w$ to conclude
	\begin{equation*}
		\sum_{(w)}
		\int_{\sigma}^z w_{(1)} \cdot \Hyper{\WordReg{}{\sigma}\left( w_{(2)} \right)}(\sigma)
		= \lim_{\tau\rightarrow \sigma} \ 
			\sum_{(w)} \int_{\tau}^z w_{(1)} \cdot \Hyper{w_{(2)}}(\tau)
		\urel{\eqref{eq:hlog-path-concatenation}}
			\lim_{\tau\rightarrow \sigma} 
			 \Hyper{w}(z)
		= \Hyper{w}(z). \qedhere
	\end{equation*}
\end{proof}
\begin{figure}
	\centering
	\includegraphics[width=0.4\textwidth]{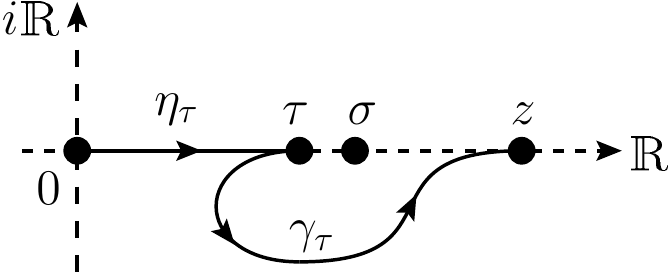}
	\caption[Path concatenation at a singular intermediate point]{The path concatenation \eqref{eq:hlog-path-concatenation-singular} for a singular intermediate point $\sigma \in \Sigma$ is obtained as the limit $\tau \rightarrow \sigma$. The integration path from zero to $z$ is assumed to be homotopic to $\eta_{\tau} \concat \gamma_{\tau}$ and determines the branch of $\log(z-\sigma)$.}%
	\label{fig:path-concatenation-singular} %
\end{figure}
This settles \eqref{eq:hlog-divergences}, because $\int_{\sigma}^z w$ is analytic at $z \rightarrow \sigma$ (with vanishing limit) when $w$ does not end on $\letter{\sigma}$ and we can only have logarithmic divergences from $\int_{\sigma}^{z} \letter{\sigma} = \Hyper{\letter{\sigma}}(z) = \log \frac{z-\sigma}{-\sigma}$. So $\lim_{z \rightarrow \sigma} \int_{\sigma}^z w = 0$ whenever $w$ is not of the form $\letter{\sigma}^k$ and thus
	\begin{equation}
		\AnaReg{z}{\sigma} \Hyper{\letter{\sigma}^n w}(z)
		= \sum_{k=0}^{n} \frac{\log^k(-1/\sigma)}{k!}L_{\WordReg{}{\sigma}(\letter{\sigma}^{n-k}w)}(\sigma)
		\label{eq:hlog-reglim-singularity}%
	\end{equation}
	if $w$ does not begin with $\letter{\sigma}$. In order to compute the full expansion \eqref{eq:log-laurent-expansion} at $z \rightarrow \sigma$ it is handy to transform the hyperlogarithm $\int_{\sigma}^z$ into the basis $\Hyper{\cdot}(z-\sigma)$ using
\begin{lemma}[Change of variables]
	\label{lemma:Moebius-transformation} %
	Let $f(z) = \frac{az + b}{cz + d}\in\Aut(\RSphere)$ denote a M\"{o}bius transformation and $\gamma$ a path such that $\int_{\gamma} w$ converges for a given word $w\in T(\Sigma)$.
	
	Then
	$
		\int_{\gamma} w
		= \int_{f \circ \gamma} \WordTransformation{f}(w)
	$
	where $\WordTransformation{f}\colon T(\Sigma) \longrightarrow T(f(\Sigma))$ substitutes letters according to
	\begin{equation}
		\WordTransformation{f}(\letter{\sigma}) 
		\defas
		\letter{f(\sigma)} - \letter{f(\infty)},
		\quad\text{setting any}\quad
		\letter{\infty}
		\defas
		0.
		\label{eq:Moebius-transformation}%
	\end{equation}
\end{lemma}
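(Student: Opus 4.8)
The plan is to reduce the statement to a single identity of one-forms under pullback. Recall from \eqref{eq:def-II} that an iterated integral depends on the path only through the pulled-back one-forms $\gamma^{\ast}(\omega_i)$ on $[0,1]$. Since for any one-form $\alpha$ on the target sphere one has $(f \circ \gamma)^{\ast}(\alpha) = \gamma^{\ast}(f^{\ast}\alpha)$, the very definition \eqref{eq:def-II} yields the functoriality
\[
\int_{f \circ \gamma} \alpha_1 \cdots \alpha_n = \int_{\gamma} (f^{\ast}\alpha_1) \cdots (f^{\ast}\alpha_n).
\]
I would extend $\WordTransformation{f}$ to all of $T(\Sigma)$ as the homomorphism for the concatenation product, so that $\WordTransformation{f}(\omega_1 \cdots \omega_n)$ arises by replacing each letter $\letter{\sigma_i}$ with the combination $\letter{f(\sigma_i)} - \letter{f(\infty)}$ of forms and expanding the resulting concatenation product linearly into words. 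Because $f^{\ast}$ acts tensor-slot-wise and commutes with this expansion, the claim $\int_{\gamma} w = \int_{f \circ \gamma} \WordTransformation{f}(w)$ follows from the functoriality formula as soon as one establishes the pointwise identity
\[
f^{\ast}\big( \letter{f(\sigma)} - \letter{f(\infty)} \big) = \letter{\sigma} \qquad \text{for every letter } \letter{\sigma},\ \sigma \in \Sigma,
\]
with the convention $\letter{\infty} \defas 0$.

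The core of the proof is then a direct computation. Assume first $c \neq 0$, so $f(\infty) = a/c$. From $f(z) = (az+b)/(cz+d)$ one gets $\dd f = (ad - bc)(cz+d)^{-2}\,\dd z$, and since $z = \sigma$ is the unique zero of $(a - f(\sigma)c)z + (b - f(\sigma)d)$, a partial-fraction decomposition gives
\[
f^{\ast} \letter{f(\sigma)} = \frac{\dd f}{f(z) - f(\sigma)} = \frac{(c\sigma + d)\,\dd z}{(cz + d)(z - \sigma)} = \frac{\dd z}{z - \sigma} - \frac{c\,\dd z}{cz + d}.
\]
The same computation with $\sigma \to \infty$ (whereupon $c\sigma + d$ drops out) yields $f^{\ast}\letter{f(\infty)} = -c\,(cz+d)^{-1}\,\dd z$, and subtracting leaves precisely $\letter{\sigma} = \dd z/(z-\sigma)$. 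The degenerate cases are covered by the same formula: if $\sigma = -d/c = f^{-1}(\infty)$ then $\letter{f(\sigma)} = \letter{\infty} = 0$ while $f^{\ast}(-\letter{f(\infty)}) = c\,(cz+d)^{-1}\,\dd z = \letter{-d/c}$; and if $c = 0$ then $f$ is affine, $f(\infty) = \infty$, $\letter{f(\infty)} = 0$, and $f^{\ast}\letter{f(\sigma)} = \dd f/(f - f(\sigma)) = \dd z/(z - \sigma) = \letter{\sigma}$ directly.

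Combining the form identity with the functoriality formula gives $\int_{f\circ\gamma} \WordTransformation{f}(w) = \int_{\gamma} w$, and the right-hand side converges by hypothesis. The step I expect to require the most care is the point at infinity: after expanding $\WordTransformation{f}(w)$ into genuine words, individual words contain the letter $\letter{f(\infty)}$, whose pulled-back form $-c\,(cz+d)^{-1}\,\dd z$ has a pole at $z = -d/c$ that is absent from the original integrand. This pole is spurious, cancelling only in the full combination $\letter{f(\sigma)} - \letter{f(\infty)}$, so one must verify that the path $f \circ \gamma$ can be chosen to avoid $f(\infty)$ and that the regularization conventions of definition~\ref{def:Hlog} are respected at the endpoints, so that the individual iterated integrals on the right are well-defined and their sum equals the convergent left-hand side. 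The uniform bookkeeping afforded by the convention $\letter{\infty} = 0$ is exactly what makes the three situations ($f(\sigma)$ finite, $f(\sigma) = \infty$, and $f(\infty) = \infty$) collapse into the single identity above.
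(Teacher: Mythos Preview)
Your proof is correct and follows essentially the same approach as the paper. The paper condenses the argument into the single identity $\frac{\dd f^{-1}(z)}{f^{-1}(z)-\sigma} = \frac{\dd z}{z-f(\sigma)} - \frac{\dd z}{z-f(\infty)}$, which is just your pullback computation read via $f^{-1}$ instead of $f$; your partial-fraction verification and treatment of the degenerate cases $c=0$ and $\sigma=f^{-1}(\infty)$ spell out what the paper leaves implicit.
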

\begin{proof}
	This is just the transformation of \eqref{eq:def-II} under $f$ and the identity
\begin{equation*}
	\frac{\dd f^{-1}(z)}{f^{-1}(z) - \sigma}
	=
	\frac{\dd z}{z-f(\sigma)} - \frac{\dd z}{z-f(\infty)}.\qedhere
\end{equation*}
\end{proof}
Note that $\Aut(\RSphere) \ni f \mapsto \WordTransformation{f} \in \Aut\left( T(\C) \right)$ furnishes a representation of the automorphisms of the Riemann sphere $\RSphere$: $\WordTransformation{f}^{-1} = \WordTransformation{f^{-1}}$ and $\WordTransformation{f \circ g} = \WordTransformation{f} \circ \WordTransformation{g}$.
\begin{example}
	\label{ex:dilog-past-one}%
	For the dilogarithm $\Hyper{\letter{0}\letter{1}}(z) = - \Li_2(z)$, the decomposition \eqref{eq:hlog-path-concatenation-singular} reduces with $\WordReg{}{1}(\letter{1}) = 0$, $\letter{0}\letter{1} = \letter{0} \shuffle \letter{1} - \letter{1}\letter{0}$ and $\Li_2(1) = \mzv{2}$ to
	\begin{equation*}
		\Hyper{\letter{0}\letter{1}}(z)
		= \int_1^z \!\letter{0}\letter{1} 
		+ \int_1^z \!\letter{0} \cdot \Hyper{\WordReg{}{1}(\letter{1})}(1)
		+ \Hyper{\WordReg{}{1}(\letter{0}\letter{1})}(1)
		= -\mzv{2}
		+ \int_1^z \!\!\letter{0}\, \cdot \int_1^z \!\!\letter{1}
		- \int_1^z \!\!\letter{1} \letter{0}
		.
	\end{equation*}
	We can apply the transformation $f(z)=1-z$ to the iterated integrals $\int_1^z \letter{0}$ and $\int_1^z \letter{1}\letter{0}$, while $\int_1^z \letter{1}$ is defined as $\Hyper{\letter{1}}(z) = \log(1-z)$. Finally this shows
	\begin{equation*}
		\Li_2(z)
		= \mzv{2} + \int_0^{1-z} \letter{0}\letter{1} - \log(1-z) \int_0^{1-z} \letter{1}
		= \mzv{2} - \Li_2(1-z) - \log(z) \cdot \log(1-z).
	\end{equation*}
	For real $z>1$, the branch of $\log(1-z) = \Hyper{\letter{1}}(z)$ is determined by the path of integration. If it passes below $1=\sigma$ as in figure~\ref{fig:path-concatenation-singular}, we have 
	\begin{equation*}
		\log(1-z)
		= \imag \pi + \log (z-1)
		\quad\text{and}\quad
		\Imaginaerteil \Li_2(z)
		= - \pi \log(z)
		\quad\text{for}\quad
		z>1.
	\end{equation*}
\end{example}
In general the analytic continuation past a positive singularity $\sigma>0$ introduces imaginary parts from $\eqref{eq:hlog-singular-concatenation-branch}$, but it may happen that the hyperlogarithm at hand is analytic at $\sigma$ (for example this is granted for Feynman integrals in the Euclidean region). In this case, no explicit imaginary parts occur:
\begin{lemma}\label{lemma:singular-expansion-analytic}
	For $w\in T(\Sigma)$, the hyperlogarithm $\Hyper{w}(z)$ is analytic at $z \rightarrow \sigma \in \Sigma$ if and only if \eqref{eq:hlog-path-concatenation-singular} is equal to $\Hyper{w}(z) = \int_{\sigma}^z \WordReg{\sigma}{} \convolution \Hyper{\WordReg{}{\sigma}(\cdot)}(\sigma)$.
\end{lemma}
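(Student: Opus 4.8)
The plan is to isolate, inside the path-concatenation identity of lemma~\ref{lemma:hlog-path-concatenation-singular}, the unique factor responsible for the $\log(z-\sigma)$-divergence and then appeal to the uniqueness of the logarithmic expansion \eqref{eq:hlog-divergences}. Abbreviate the constant character on the right of that lemma by $X \defas \Hyper{\WordReg{}{\sigma}(\cdot)}(\sigma)$, so that it reads $\Hyper{\cdot}(z) = \int_{\sigma}^z \convolution X$. First I would factor the character $\int_{\sigma}^z$ using the convolution identity $\id = \WordReg{\sigma}{}\convolution\ProjectOn{\sigma}$ from corollary~\ref{corollary:shuffle-regularization}: since $\int_{\sigma}^z$ is multiplicative (lemma~\ref{lemma:iint-shuffle}), precomposition distributes over the convolution --- for a character $\chi$ one checks directly that $\chi\circ(f\convolution g) = (\chi\circ f)\convolution(\chi\circ g)$ --- and hence
\begin{equation*}
	\int_{\sigma}^z = \rho_\sigma \convolution \varpi_\sigma,
	\qquad
	\rho_\sigma \defas \int_{\sigma}^z\! \WordReg{\sigma}{},
	\quad
	\varpi_\sigma \defas \int_{\sigma}^z\! \ProjectOn{\sigma}.
\end{equation*}
The asserted expression is precisely $\rho_\sigma\convolution X$, i.e.\ $\int_{\sigma}^z$ with the factor $\varpi_\sigma$ deleted.

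Next I would record the two properties that make this factorization useful. The factor $\rho_\sigma$ is \emph{analytic} at $z\to\sigma$: by construction $\WordReg{\sigma}{}(v)$ is a combination of words not ending in $\letter{\sigma}$, and for such words $\int_{\sigma}^z$ is analytic (with vanishing limit) at $z\to\sigma$, as noted right after lemma~\ref{lemma:hlog-path-concatenation-singular}. The factor $\varpi_\sigma$ carries all the singular behaviour: it is supported on the powers $\letter{\sigma}^k$, with $\varpi_\sigma(\letter{\sigma}^k) = \ell^k/k!$ where $\ell \defas \Hyper{\letter{\sigma}}(z) = \log\frac{z-\sigma}{-\sigma}$ by \eqref{eq:hlog-singular-concatenation-branch}, so that $\varpi_\sigma = \counit + \sum_{k\ge 1}\frac{\ell^k}{k!}(\cdots)$. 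Consequently $\Hyper{\cdot}(z) = \rho_\sigma\convolution\varpi_\sigma\convolution X$ and, subtracting the asserted expression,
\begin{equation*}
	\Hyper{w}(z) - \big(\rho_\sigma\convolution X\big)(w)
	= \big(\rho_\sigma\convolution(\varpi_\sigma-\counit)\convolution X\big)(w)
	= \sum_{k\ge 1} \ell^k\, c_k(z),
\end{equation*}
where each $c_k$ is a $\Q$-combination of products of an analytic function (a value of $\rho_\sigma$) with the constants $X(\cdot)$, hence itself analytic at $z\to\sigma$ and free of $\log(z-\sigma)$.

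Finally I would conclude by uniqueness of the expansion \eqref{eq:hlog-divergences} (definition~\ref{def:reglim}). Because $\ell$ and $\log(z-\sigma)$ differ only by the constant $\log(-\sigma)$, a polynomial in $\ell$ whose coefficients are analytic has the same degree in $\log(z-\sigma)$ as in $\ell$; thus $\Hyper{w}(z)$ is analytic at $z\to\sigma$ exactly when its $\ell$-degree is zero, which by the displayed difference happens exactly when every $c_k$ with $k\ge 1$ vanishes, i.e.\ exactly when $\Hyper{w}(z) = (\rho_\sigma\convolution X)(w)$, the claimed formula. The opposite implication is immediate, since $\rho_\sigma\convolution X$ is manifestly analytic. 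The one point demanding care --- and the main obstacle --- is the bookkeeping that $\varpi_\sigma$ really captures \emph{all and only} the logarithmic dependence, so that the coefficients $c_k$ are genuinely $\log(z-\sigma)$-free; combined with the constant shift relating $\ell$ to $\log(z-\sigma)$, this is what lets me translate ``$\ell$-degree zero'' into honest analyticity and then into the exact identity.
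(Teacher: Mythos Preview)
Your proposal is correct and takes essentially the same approach as the paper: both factor off the $\letter{\sigma}^k$-component of $\int_\sigma^z$ (you via the convolution identity $\id=\WordReg{\sigma}{}\convolution\ProjectOn{\sigma}$ and the character property of $\int_\sigma^z$, the paper via the explicit shuffle regularization $\letter{\sigma_1}\!\!\cdots\letter{\sigma_k}=\sum_i\letter{\sigma}^i\shuffle w_{k,i}$ of each prefix) and then argue that analyticity of $\Hyper{w}(z)$ forces the coefficients of the positive powers of $\ell=\Hyper{\letter{\sigma}}(z)$ to vanish. The paper additionally invokes lemma~\ref{lemma:hyperlog-independence} (transported by the shift $z\mapsto z-\sigma$) to conclude $\sum_k\lambda_k w_{k,i}=0$ at the level of words, whereas you stop at the function-level vanishing of your $c_k(z)$ via uniqueness of the expansion \eqref{eq:hlog-divergences}; for the lemma as stated your version suffices.
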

\begin{proof}
	As an iterated integral, $\int_{\sigma}^z \WordReg{\sigma}{}$ is analytic. Conversely, analyticity of $\Hyper{w}(z)$ at $z\rightarrow \sigma$ implies the vanishing (identically, not only at $z = \sigma$) of all coefficients of powers of $\log (z-\sigma)$, which equals $\int_{\sigma}^z \letter{\sigma}$ up to a constant. So if
	$	\letter{\sigma_1}\!\!\cdots\letter{\sigma_k}
		= \sum_{i = 0}^{k} \letter{\sigma}^i \shuffle w_{k,i}
	$
	is the regularization with all $w_{k,i}$ not ending in $\letter{\sigma}$, we see from
	\begin{equation*}
		\Hyper{\letter{\sigma_1}\!\cdots\letter{\sigma_n}}(z)
		= \sum_{i=0}^n \Hyper{\letter{\sigma}^i}(z) \sum_{k=0}^{n-i} \lambda_k \int_{\sigma}^z w_{k,i}
		\quad\text{with}\quad
		\lambda_k \defas \Hyper{\WordReg{}{\sigma}(\letter{\sigma_{k+1}\!\!\cdots\sigma_{n}})}(\sigma),
	\end{equation*}
	that $\sum_k \lambda_k \int_{\sigma}^z w_{k,i} = 0$ must vanish for all $i>0$. The linear independence of lemma~\ref{lemma:hyperlog-independence} carries over to $\int_{\sigma}^z$ via the shift $f(z) = z-\sigma$, hence we must have $\sum_k \lambda_k w_{k,i} = 0$. Thus $\Hyper{w}(z)$ collapses to $\sum_k \lambda_k \int_{\sigma}^z w_{k,0}$ where $w_{k,0} = \WordReg{\sigma}{}(\letter{\sigma_1}\!\cdots\letter{\sigma_k})$.
\end{proof}

\subsubsection{Expansion at infinity}
We could use a M\"{o}bius transformation to translate our result for singular points $\sigma \in \Sigma$ to infinity, but it is sensible to study this case separately in detail because it will appear at every step in the integration algorithm we have in mind.

The divergence at $z \rightarrow \infty$ is of course only logarithmic:
Consider a straight path $\gamma$ from $\gamma(0) = u \in \R_+$ to $\gamma(1) = \infty$ on the positive real axis such that $\im \gamma \cap \Sigma = [u,\infty) \cap \Sigma = \emptyset$. Then on all of $\gamma$, we can bound $1/\abs{z-\sigma} \leq C/z$ for any $\sigma \in \Sigma$ by
	\begin{equation*}
		C
		\defas 1 + \frac{\max \setexp{\abs{\sigma}}{\sigma \in \Sigma}}{\dist(\gamma, \Sigma)},
		\quad\text{therefore}\quad
		\abs{\int_u^z \letter{\sigma_1}\!\!\cdots\letter{\sigma_n}}
		\leq 
			C^n
			\int_u^z \letter{0}^n
		= \frac{C^n}{n!} \log^n\left( \frac{z}{u} \right)
		.
	\end{equation*}
	Via path concatenation \eqref{eq:hlog-path-concatenation} this logarithmic bound extends to all hyperlogarithms: For large $z$,
$ \abs{\Hyper{w}(z)} 
	\leq \sum_{(w)} \big|\int_u^z w_{(1)}\big| \cdot \,\big| \Hyper{w_{(2)}}(u) \big|
	\leq \log^{\abs{w}}(z/u) \cdot C^{\abs{w}}\max_{(w)} \big| \Hyper{w_{(2)}}(u) \big|
$.
\begin{lemma}
	\label{lemma:hlog-infinity-finite}%
	For any $w \in T(\Sigma)$ and $\sigma \in \Sigma$, the hyperlogarithm $\Hyper{(\letter{\sigma}-\letter{-1})w}(z)$ has a finite limit at $z \rightarrow \infty$.
\end{lemma}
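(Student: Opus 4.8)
The plan is to control $\Hyper{(\letter{\sigma}-\letter{-1})w}(z)$ through its derivative rather than the integral directly. By linearity of $\Hyper{\cdot}$ and lemma~\ref{lemma:Hlog-differential} applied to each of the non-empty words $\letter{\sigma}w$ and $\letter{-1}w$,
\[
	\partial_z \Hyper{(\letter{\sigma}-\letter{-1})w}(z)
	= \left( \frac{1}{z-\sigma} - \frac{1}{z+1} \right) \Hyper{w}(z)
	= \frac{1+\sigma}{(z-\sigma)(z+1)}\, \Hyper{w}(z).
\]
The whole point is the cancellation of the leading $1/z$ behaviour: each of the simple poles $\letter{\sigma}$ and $\letter{-1}$ individually produces a logarithmic growth $\sim\log z$ of the primitive, but in the difference the rational prefactor decays like $\bigo{z^{-2}}$ as $z\rightarrow\infty$ (and it vanishes identically in the trivial case $\sigma=-1$, where the word is zero anyway). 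This faster decay is exactly what will make the primitive converge at infinity.

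Next I would import the polylogarithmic growth estimate established immediately before the lemma: along the straight ray to infinity on $\R_+$ one has $\abs{\Hyper{w}(z)} \leq C^{\abs{w}}\log^{\abs{w}}(z/u)\cdot\max_{(w)}\abs{\Hyper{w_{(2)}}(u)}$, a bound that rests only on path concatenation~\eqref{eq:path-concatenation} together with the elementary estimate for $\int_u^z \letter{0}^n$. Combining this with the $\bigo{z^{-2}}$ behaviour of the prefactor shows that, for $z$ large, the integrand $\partial_z \Hyper{(\letter{\sigma}-\letter{-1})w}(z)$ is dominated by a constant multiple of $z^{-2}\log^{\abs{w}}(z)$. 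Since $\int^{\infty} z^{-2}\log^{N}(z)\,\dd z$ converges for every $N$, the derivative is absolutely integrable on $[u,\infty)$, where $u>\max_{\tau\in\Sigma}\abs{\tau}$ is chosen so that $[u,\infty)\cap\Sigma=\emptyset$.

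Finally I would conclude by the fundamental theorem of calculus along this ray. Writing
\[
	\Hyper{(\letter{\sigma}-\letter{-1})w}(z)
	= \Hyper{(\letter{\sigma}-\letter{-1})w}(u)
		+ \int_u^z \partial_{z'} \Hyper{(\letter{\sigma}-\letter{-1})w}(z')\,\dd z',
\]
the absolute convergence of the improper integral forces $\lim_{z\rightarrow\infty}\Hyper{(\letter{\sigma}-\letter{-1})w}(z)$ to exist and to be finite. In view of~\eqref{eq:lim=reglim} this simultaneously identifies the regularized limit $\AnaReg{z}{\infty}\Hyper{(\letter{\sigma}-\letter{-1})w}(z)$ with the genuine limit, which is what one actually wants for the integration algorithm.

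The one point requiring care---and the main, though mild, obstacle---is the multivaluedness of $\Hyper{\cdot}$: both the derivative identity and the growth bound are literal statements only along a fixed branch. I would therefore fix once and for all the straight path approaching $\infty$ along $\R_+$ avoiding $\Sigma$, exactly as in the discussion preceding the lemma, so that the antiderivative relation and the estimate hold verbatim on a simply connected neighbourhood of the ray. No new special function enters and no cancellation among words is needed beyond the elementary partial-fraction identity above; the entire argument is a growth estimate feeding into the fundamental theorem of calculus.
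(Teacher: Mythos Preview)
Your proof is correct and takes essentially the same approach as the paper: both compute the derivative via lemma~\ref{lemma:Hlog-differential} to obtain the rational prefactor $(1+\sigma)/[(z-\sigma)(z+1)]$ with $\bigo{z^{-2}}$ decay, combine it with the polylogarithmic growth bound on $\Hyper{w}(z)$, and conclude absolute integrability at infinity. Your choice to start the integral at a finite $u>0$ rather than at the origin is in fact slightly cleaner, since it sidesteps the special case $\sigma=0$, $w=\letter{0}^n$ (where $\Hyper{w}(z')/z'$ is not integrable at $z'\rightarrow 0$) that the paper has to treat separately in a footnote.
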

\begin{proof}
	The representation \eqref{eq:def-Hlog-noshuffle} yields an absolutely convergent integral\footnote{%
In the special case when $\sigma=0$ and $w = \letter{0}^n$, equation~\eqref{eq:hlog-finite-infinity} does not apply. Instead one calculates explicitly that
$
\Hyper{(\letter{0}-\letter{-1})\letter{0}^n}(z)
= \frac{\ln^{n+1}(z)}{(n+1)!(z+1)}
+ \int_0^z \frac{\ln^{n+1}(z')}{(n+1)!(z'+1)^2} \dd z'
$
which is clearly finite at $z \rightarrow \infty$.}
\begin{equation}
	\Hyper{(\letter{\sigma} - \letter{-1})w}(z)
	=
	\int_0^z \frac{(1+\sigma) \Hyper{w}(z')}{(z'-\sigma)(z' + 1)} \dd z'
	,
	\label{eq:hlog-finite-infinity}%
\end{equation}
because the integrand decays quadratically at large $z'$ (up to at most logarithmic growth of $\Hyper{w}(z')$ in the numerator, which does not spoil convergence).
\end{proof}
\begin{definition}
	\label{def:reginf-word} %
	The map $\WordReg{}{\infty}\colon T(\Sigma) \longrightarrow T(\Sigma \cup \set{-1})$ is given by linear extension of
	\begin{equation}
		\WordReg{}{\infty} \left(\letter{\sigma_1}\!\!\cdots\letter{\sigma_n} \right)
		\defas
		\sum_{k=1}^{n} 
			\left( \letter{\sigma_k} - \letter{-1} \right)
			\left[ (-\letter{-1})^{k-1} \shuffle \letter{\sigma_{k+1}}\!\!\cdots\letter{\sigma_n}\right]
		\label{eq:def-reginf-word} %
	\end{equation}
	and $\WordReg{}{\infty} (\emptyWord) \defas \emptyWord$ for the empty word.
	Furthermore we set $\WordReg{0}{\infty} \defas \WordReg{}{\infty} \circ \WordReg{0}{}$, using the shuffle-regularization $\WordReg{0}{}$ onto words not ending in $\letter{0}$ (from definition~\ref{def:shuffle-regularization}).
\end{definition}%
Note that $\WordReg{}{\infty}( \letter{-1}^n) = 0$ for any $n\in\N$. By lemma~\ref{lemma:hlog-infinity-finite}, all words $w$ in the image of $\WordReg{}{\infty}$ have a finite limit of $\Hyper{w}(z)$ at $z \rightarrow \infty$.
\begin{lemma}
	\label{lemma:reginf-properties} %
	The maps $\WordReg{}{\infty}$ and $\WordReg{0}{\infty}$ are projections and multiplicative, so in particular 
	$
		\WordReg{}{\infty} (w \shuffle w')
		=
		\WordReg{}{\infty} (w) \shuffle \WordReg{}{\infty} (w')
	$.
	Furthermore, any word fulfils the identity
	\begin{equation}
		\letter{\sigma_1}\!\!\cdots\letter{\sigma_n}
		=	\sum_{k=0}^{n} 
			\letter{-1}^k 
			\shuffle \WordReg{}{\infty} \left( \letter{\sigma_{k+1}}\!\!\cdots\letter{\sigma_n}\right)
		.
		\label{eq:reginf-shuffle-identity} %
	\end{equation}
\end{lemma}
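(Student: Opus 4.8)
The plan is to recognise $\WordReg{}{\infty}$ as a one-sided convolution operator, after which all three assertions become formal consequences of the shuffle Hopf-algebra structure. Introduce the linear map $\rho\colon T(\Sigma)\longrightarrow T(\set{-1})$ collapsing every letter to $\letter{-1}$, i.e. $\rho(\letter{\sigma_1}\!\cdots\letter{\sigma_n}) = \letter{-1}^n$. Since the shuffle of a word of length $p$ with one of length $q$ is a sum of $\binom{p+q}{p}$ words of length $p+q$, and $\letter{-1}^p\shuffle\letter{-1}^q = \binom{p+q}{p}\letter{-1}^{p+q}$, the map $\rho$ respects the shuffle product; it evidently respects the deconcatenation coproduct \eqref{eq:def-deconcatenation-coproduct}, so $\rho$ is a morphism of Hopf algebras. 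Hence its convolution inverse is $\rho^{\convolution-1} = \rho\circ\antipode$, explicitly $\rho^{\convolution-1}(\letter{\sigma_1}\!\cdots\letter{\sigma_n}) = (-\letter{-1})^n$ by \eqref{eq:def-antipode}. The first step is then to verify the closed form
\begin{equation*}
	\WordReg{}{\infty}(w) = \sum_{k=0}^{n} (-\letter{-1})^{k} \shuffle \letter{\sigma_{k+1}}\!\cdots\letter{\sigma_n} = \bigl(\rho^{\convolution-1}\convolution\id\bigr)(w),
\end{equation*}
where $w = \letter{\sigma_1}\!\cdots\letter{\sigma_n}$ and the second equality is just \eqref{eq:def-convolution-product} with \eqref{eq:def-deconcatenation-coproduct}. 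That the middle sum agrees with the definition \eqref{eq:def-reginf-word} is a short telescoping computation: expanding both via the first-letter recursion \eqref{eq:def-shuffle-product} of the shuffle and inducting on $n$, everything cancels except the term $\letter{\sigma_1}(\cdots)$ matching the $k=0$ contribution.

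Granting this convolution form, the three claims follow immediately. The identity \eqref{eq:reginf-shuffle-identity} reads $\rho\convolution\WordReg{}{\infty} = \id$, because its left-hand side $\sum_{k}\letter{-1}^k\shuffle\WordReg{}{\infty}(\letter{\sigma_{k+1}}\!\cdots\letter{\sigma_n})$ is precisely $(\rho\convolution\WordReg{}{\infty})(w) = (\rho\convolution\rho^{\convolution-1}\convolution\id)(w)$ and $\rho\convolution\rho^{\convolution-1}$ is the convolution unit. Multiplicativity is inherited from the factors: $\rho^{\convolution-1} = \rho\circ\antipode$ and $\id$ are both characters into the commutative shuffle algebra, and the convolution of characters is again a character (section~\ref{sec:shuffle-algebra}); so $\WordReg{}{\infty} = \rho^{\convolution-1}\convolution\id$ is a character, and $\WordReg{0}{\infty} = \WordReg{}{\infty}\circ\WordReg{0}{}$ is a character as a composition of characters.

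For the projection property, record from the closed form that $\WordReg{}{\infty}(\letter{-1}^n) = \sum_{k=0}^n(-1)^k\binom{n}{k}\letter{-1}^n = 0$ for all $n\geq 1$. Applying $\WordReg{}{\infty}$ to \eqref{eq:reginf-shuffle-identity} and using multiplicativity term by term, every summand with $k\geq 1$ carries the factor $\WordReg{}{\infty}(\letter{-1}^k)=0$, while the $k=0$ term is $\WordReg{}{\infty}(\WordReg{}{\infty}(w))$; since the left-hand side is $\WordReg{}{\infty}(w)$, this gives $\WordReg{}{\infty}\circ\WordReg{}{\infty}=\WordReg{}{\infty}$. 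For $\WordReg{0}{\infty}$ note that $\WordReg{0}{}(\letter{-1}^k)=\letter{-1}^k$ (no trailing $\letter{0}$), so by multiplicativity of both maps $\WordReg{}{\infty}\WordReg{0}{}(\letter{-1}^k\shuffle v)=\WordReg{}{\infty}(\letter{-1}^k)\shuffle(\cdots)=0$ for $k\geq1$; subtracting \eqref{eq:reginf-shuffle-identity} for $\WordReg{}{\infty}(w)$ from \eqref{eq:reginf-shuffle-identity} for $w$ then annihilates $w-\WordReg{}{\infty}(w)$ under $\WordReg{}{\infty}\WordReg{0}{}$, yielding $\WordReg{}{\infty}\WordReg{0}{}\WordReg{}{\infty}=\WordReg{}{\infty}\WordReg{0}{}$. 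Consequently $(\WordReg{0}{\infty})^2 = \WordReg{}{\infty}\WordReg{0}{}\WordReg{}{\infty}\WordReg{0}{} = \WordReg{}{\infty}\WordReg{0}{}\WordReg{0}{} = \WordReg{0}{\infty}$ using idempotency of $\WordReg{0}{}$.

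The only genuinely computational point is the telescoping identity establishing the closed/convolution form in the first paragraph. The one conceptual trap worth flagging is that $\WordReg{0}{}$ and $\WordReg{}{\infty}$ do \emph{not} commute (the former acts on the tail, the latter introduces $\letter{-1}$-letters near the head), so the projection property of $\WordReg{0}{\infty}$ cannot be derived from a naive commutation and must instead be routed through the weaker relation $\WordReg{}{\infty}\WordReg{0}{}\WordReg{}{\infty}=\WordReg{}{\infty}\WordReg{0}{}$ as above.
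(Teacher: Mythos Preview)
Your proof is correct and follows essentially the same approach as the paper: both arguments hinge on recognising $\WordReg{}{\infty}=\Psi^{\convolution-1}\convolution\id$ (your $\rho$ is the paper's $\Psi$), deducing multiplicativity from the convolution of characters, and deriving idempotence by applying $\WordReg{}{\infty}$ to the identity \eqref{eq:reginf-shuffle-identity} together with $\WordReg{}{\infty}(\letter{-1}^k)=0$. The only cosmetic difference is ordering: the paper first proves \eqref{eq:reginf-shuffle-identity} by induction and then reads off the convolution form, whereas you first establish the closed form $\WordReg{}{\infty}(w)=\sum_k(-\letter{-1})^k\shuffle\letter{\sigma_{k+1}}\!\cdots\letter{\sigma_n}$ by telescoping and derive \eqref{eq:reginf-shuffle-identity} as $\rho\convolution\rho^{\convolution-1}\convolution\id=\id$; the treatment of $\WordReg{0}{\infty}$ via $\WordReg{}{\infty}\WordReg{0}{}\WordReg{}{\infty}=\WordReg{}{\infty}\WordReg{0}{}$ is identical in both.
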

\begin{proof}
	Formula \eqref{eq:reginf-shuffle-identity} is evident for $n=1$. When $n>1$, we insert \eqref{eq:def-reginf-word} into its right-hand side and apply the shuffle-product recursion \eqref{eq:def-shuffle-product} to obtain
	\begin{equation*}
		\sum_{\mathclap{0 \leq k < i \leq n}}
		\left( \letter{\sigma_i} - \letter{-1}\right) \left[
			\letter{-1}^k
			\shuffle
			(-\letter{-1})^{i-k-1}
			\shuffle
			\letter{\sigma_{i+1}} \!\!\cdots \letter{\sigma_n}
		\right]
		+
		\letter{-1} \sum_{k=1}^{n} \left[
			\letter{-1}^{k-1}
			\shuffle
			\WordReg{}{\infty} \left( \letter{\sigma_{k+1}}\!\!\cdots \letter{\sigma_n} \right)
		\right]
		.
	\end{equation*}
	By induction the second sum evaluates to $\letter{-1}\letter{\sigma_2}\!\cdots\letter{\sigma_n}$, whereas 
	\begin{equation*}
		\sum_{k=0}^{i-1} \letter{-1}^k \shuffle (-\letter{-1})^{i-k-1}
		=
		\letter{-1}^{i-1}
		\sum_{k=0}^{i-1} \binom{i-1}{k} (-1)^{i-1-k}
		= \delta_{i,1}
	\end{equation*}
	reduces the first sum to $(\letter{\sigma_1}-\letter{-1})\letter{\sigma_2}\!\cdots \letter{\sigma_n}$. So we proved \eqref{eq:reginf-shuffle-identity}, which can be written as $\WordReg{}{\infty} = \Psi^{\convolution -1} \convolution \id$ where $\Psi(\letter{\sigma_1}\!\!\cdots\letter{\sigma_n}) \defas \letter{-1}^n$ denotes the character on $T(\Sigma)$ that replaces all letters by $\letter{-1}$. As a product of characters, $\WordReg{}{\infty}$ is multiplicative as well. Since it annihilates $\letter{-1}^k$, applying $\WordReg{}{\infty}$ to equation~\eqref{eq:reginf-shuffle-identity} proves its idempotence $\WordReg{}{\infty} = \WordReg{}{\infty} \circ \WordReg{}{\infty}$. If instead we first apply $\WordReg{0}{}$ and then $\WordReg{}{\infty}$, we find that $\WordReg{0}{} = \Psi \convolution (\WordReg{0}{}\circ\WordReg{}{\infty})$ and finally $\WordReg{0}{\infty} = \WordReg{0}{\infty} \circ \WordReg{}{\infty}$.
\end{proof}
Now recall that $\Hyper{\letter{-1}}(z) = \log (z+1) = \log (z) + \log \left( 1+z^{-1} \right)$, so by \eqref{eq:def-reglim} it gives $\AnaReg{z}{\infty} \Hyper{\letter{-1}}(z) = \log \left(1+0\right) = 0$ as does $\AnaReg{z}{\infty}\Hyper{\letter{0}}(z) = 0$ as well. Equation~\eqref{eq:reginf-shuffle-identity} and the multiplicativity \eqref{eq:reglim-product} of $\AnaReg{z}{\infty}$ and of $\Hyper{\cdot}(z)$ (lemma~\ref{lemma:Hlog-character}) thus prove
\begin{corollary}
	\label{cor:reginf-from-word} %
	For any word $w\in T(\Sigma)$, the regularized limit of $\Hyper{w}(z)$ at infinity can be expressed in terms of the (absolutely) convergent iterated integral
	\begin{equation}
		\AnaReg{z}{\infty} \Hyper{w}(z)
		=
		\Hyper{\WordReg{0}{\infty}(w)} (\infty)
		.
		\label{eq:reginf-from-word} %
	\end{equation}
\end{corollary}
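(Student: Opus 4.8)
The plan is to reduce the computation of $\AnaReg{z}{\infty}\Hyper{w}(z)$ to purely combinatorial operations on words by exploiting that the composite $w\mapsto\AnaReg{z}{\infty}\Hyper{w}(z)$ is multiplicative for the shuffle product, and then stripping away, in two stages, exactly the letters $\letter{0}$ and $\letter{-1}$ that produce the logarithmic growth at infinity. First I would record that $\AnaReg{z}{\infty}$ is a morphism of algebras on $\HlogAlgebra(\Sigma)(z)$: by proposition~\ref{prop:hlog-divergences} every hyperlogarithm has at worst logarithmic singularities as $z\rightarrow\infty$, so the exponent $M$ in \eqref{eq:log-laurent-expansion} is non-negative and the multiplicativity \eqref{eq:reglim-product} is available. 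Combined with the fact that $\Hyper{\cdot}(z)$ is itself a character (lemma~\ref{lemma:Hlog-character}), the composite $w\mapsto\AnaReg{z}{\infty}\Hyper{w}(z)$ is multiplicative, which is what makes a word-by-word reduction possible.

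The first stage removes the singular behaviour coming from $\letter{0}$. Using the decomposition $w = \sum_k \letter{0}^k \shuffle w_k$ of definition~\ref{def:Hlog} (this is exactly the shuffle decomposition of lemma~\ref{lemma:shuffle-decomposition} for $A=\set{0}$, $B=\emptyset$, whose $k=0$ component is $w_0 = \WordReg{0}{}(w)$), the character property gives $\Hyper{w}(z) = \sum_k \Hyper{\letter{0}^k}(z)\,\Hyper{w_k}(z)$. Since $\Hyper{\letter{0}^k}(z) = \log^k(z)/k!$ and $\AnaReg{z}{\infty}\log(z) = 0$, multiplicativity of $\AnaReg{z}{\infty}$ annihilates every term with $k>0$, so that $\AnaReg{z}{\infty}\Hyper{w}(z) = \AnaReg{z}{\infty}\Hyper{\WordReg{0}{}(w)}(z)$.

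The second stage removes the growth coming from $\letter{-1}$. Applying the shuffle identity \eqref{eq:reginf-shuffle-identity} to the word $\WordReg{0}{}(w)$ expresses it as $\sum_{j}\letter{-1}^j\shuffle\WordReg{}{\infty}(\dots)$, and the same reasoning as before, now using $\Hyper{\letter{-1}^j}(z) = \log^j(z+1)/j!$ together with $\AnaReg{z}{\infty}\log(z+1) = 0$, kills all contributions with $j>0$. What survives is $\AnaReg{z}{\infty}\Hyper{\WordReg{}{\infty}(\WordReg{0}{}(w))}(z) = \AnaReg{z}{\infty}\Hyper{\WordReg{0}{\infty}(w)}(z)$, by the very definition $\WordReg{0}{\infty} = \WordReg{}{\infty}\circ\WordReg{0}{}$. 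Finally, $\WordReg{0}{\infty}(w)$ lies in the image of $\WordReg{}{\infty}$, so lemma~\ref{lemma:hlog-infinity-finite} guarantees that $\Hyper{\WordReg{0}{\infty}(w)}(z)$ possesses a genuine finite limit; by \eqref{eq:lim=reglim} its regularized limit coincides with that limit $\Hyper{\WordReg{0}{\infty}(w)}(\infty)$, which is the claimed identity.

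The main obstacle I anticipate is not conceptual but bookkeeping: one must check that the multiplicativity \eqref{eq:reglim-product} is legitimately invoked at each product, i.e.\ that in every factorisation one factor is a pure power of a logarithm (so $M=0$) while the other is a hyperlogarithm with only logarithmic growth. The two facts that do the real work are the elementary evaluations $\AnaReg{z}{\infty}\Hyper{\letter{0}^k}(z) = \delta_{k,0}$ and $\AnaReg{z}{\infty}\Hyper{\letter{-1}^j}(z) = \delta_{j,0}$, and some care is needed to apply the two regularizations in the correct order, matching $\WordReg{0}{\infty} = \WordReg{}{\infty}\circ\WordReg{0}{}$ rather than the reverse composite.
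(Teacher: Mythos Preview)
Your proof is correct and follows essentially the same approach as the paper: the paper's argument (given in the paragraph immediately preceding the corollary) invokes exactly the two facts $\AnaReg{z}{\infty}\Hyper{\letter{0}}(z)=0$ and $\AnaReg{z}{\infty}\Hyper{\letter{-1}}(z)=0$, the shuffle identity \eqref{eq:reginf-shuffle-identity}, and the multiplicativity of $\AnaReg{z}{\infty}$ and $\Hyper{\cdot}(z)$. You have simply made the two stages ($\WordReg{0}{}$ then $\WordReg{}{\infty}$) explicit, whereas the paper compresses them into a single sentence.
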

\begin{example}
	\label{ex:dilog-reginf}%
	From $\WordReg{0}{\infty} (\letter{0}\letter{-1}) = (\letter{0}-\letter{-1})\letter{-1}$ we deduce
	\begin{equation*}
		\AnaReg{z}{\infty}
			\Li_2(-z)
		=-\AnaReg{z}{\infty}
			\Hyper{\letter{0}\letter{-1}}(z)
		= -\Hyper{(\letter{0}-\letter{-1})\letter{-1}}(\infty)
		\urel{\eqref{eq:Moebius-transformation}}
			\int_0^1 \letter{0} \letter{1}
		= - \Li_2 (1)
		= -\mzv{2},
	\end{equation*}
	exploiting a change of variables $z\mapsto \frac{z}{1+z}$ to transform the iterated integral from the interval $[0,\infty]$ to $[0,1]$. It substitutes $\letter{0} \mapsto \letter{0} - \letter{1}$ and $\letter{-1} \mapsto -\letter{1}$ and implies that the regularized limit of $\Hyper{w}(z)$ at infinity is a multiple zeta value for any $w \in T(\set{0,-1})$:
	\begin{align}
		\AnaReg{z}{\infty} \HlogAlgebra(\set{0,-1})(z)
		&\urel{\eqref{eq:reginf-from-word}}
			\int_0^{\infty}\!\! \WordReg{0}{\infty}\left( T(\set{0,-1}) \right)
		\urel{\eqref{eq:def-reginf-word}}
		\Q +
		\int_0^{\infty} \!\! (\letter{0}-\letter{-1}) T(\set{0,-1}) \letter{-1}
		\nonumber\\&
		\urel{\eqref{eq:Moebius-transformation}}
			\Q +
			\int_0^1 \!\! \letter{0} T(\set{0,1}) \letter{1}
			= \Q + \sum_{\mathclap{w \in \set{0,1}^{\times}}} \Q\cdot \int_0^1 \!\! \letter{0}w\letter{1}
		\safed \MZV.
		\label{eq:reginf-gives-mzv}%
	\end{align}
\end{example}
\begin{remark}\label{rem:letter-choice-reginf}
	Corollary~\ref{cor:reginf-from-word} shows how to write the regularized limit $z\rightarrow\infty$ (as defined in definition~\ref{def:reglim}) of a hyperlogarithm as a convergent integral, which we exploit in some proofs below. Our approach to treat $z\rightarrow \infty$ differs from the previous cases in that \eqref{eq:def-reginf-word} adds words (which act as counterterms) with some letters replaced by $\letter{-1}$ in such a way that the hyperlogarithm of the sum stays finite at $z\rightarrow \infty$.

	If $-1 \notin \Sigma$, this approach has the deficit that it introduces an additional letter. In this case we could choose any point of $0\neq\tau\in\Sigma$, replace $-1$ with $\tau$ in \eqref{eq:def-reginf-word}, \eqref{eq:reginf-shuffle-identity} and the only modification to \eqref{eq:reginf-from-word} would be additional terms\footnote{%
		We can even avoid these corrections altogether by changing our definition of $\AnaReg{z}{\infty}$ to annihilate $\Hyper{\letter{\tau}}(z)$. This corresponds to a rescaling of the tangent at the base point at infinity.%
	}
	with explicit powers of $\log(-1/\tau)$, corresponding to contributions to \eqref{eq:reginf-shuffle-identity} where $k>0$.
	
	However, for the integration algorithm itself one does not need to compute $\WordReg{0}{\infty}(w)$ of a word $w$ explicitly in intermediate steps: lemmata~\ref{lem:reginf-differential}, \ref{lem:reginf-reglim-finite}, \ref{lem:reglim-reginf-rescaling} and proposition \ref{prop:reglim-reginf-from-word} stay inside $\Sigma$ or its leading coefficients and do not introduce unnecessary letters.
	Only at the end of the computation we want to express the final result given as iterated regularized limits like \eqref{eq:final-period-algebra} in terms of convergent \emph{period} integrals (see section~\ref{sec:Periods}). 
	Our choice $\tau=-1$ is motivated by the fact that this final alphabet always contains $-1$ in the case of Feynman integrals.\footnote{The reason is that the non-zero coefficients of the graph polynomial $\psi$ are $1$.}
\end{remark}
\begin{lemma}
	\label{lemma:reginf-expansion-convolution} %
	The expansion of $\Hyper{w}(z)$ at $z\rightarrow \infty$ is given by
	\begin{equation}
		\Hyper{\letter{\sigma_1}\!\cdots\,\letter{\sigma_n}}(z)
		= \sum_{k=0}^n
			\Hyper{\WordTransformation{1/z}(\letter{\sigma_1}\!\cdots\,\letter{\sigma_k})} \left(z^{-1}\right)
			\cdot
			\Hyper{\WordReg{0}{\infty}(\letter{\sigma_{k+1}}\!\cdots\,\letter{\sigma_n})}(\infty)
		.
		\label{eq:reginf-expansion-convolution} %
	\end{equation}
\end{lemma}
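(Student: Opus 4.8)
The goal is to compute the full asymptotic expansion of $\Hyper{w}(z)$ at $z \rightarrow \infty$, expressing it as a sum of products where one factor is a hyperlogarithm in the variable $z^{-1}$ (carrying the $z$-dependence, including all powers of $\log z$ through its own expansion at the origin) and the other factor is a convergent regularized value at infinity. The structure of \eqref{eq:reginf-expansion-convolution} is exactly the deconcatenation pattern $\Hyper{}(z) = \bigl[ z \mapsto z^{-1}\bigr]$-part $\convolution \AnaReg{z}{\infty}\Hyper{}(z)$, so the plan is to deduce it from the path-concatenation machinery already developed, specialized to a base point at infinity.

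First I would recall from lemma~\ref{lemma:hlog-path-concatenation} that $\Hyper{}(z) = \int_u^z \convolution \Hyper{}(u)$ as characters on $T(\Sigma)$, for any intermediate $u \in \C \setminus \Sigma$, which for a single word gives the deconcatenation sum \eqref{eq:hlog-path-concatenation}. The idea is to push $u \rightarrow \infty$ along the positive real axis. The second factor $\Hyper{\letter{\sigma_{k+1}}\!\cdots\letter{\sigma_n}}(u)$ generally diverges logarithmically, and the first factor $\int_u^z \letter{\sigma_1}\!\cdots\letter{\sigma_k}$ diverges too, but these divergences cancel by construction. I would make this precise by inserting the identity \eqref{eq:reginf-shuffle-identity} from lemma~\ref{lemma:reginf-properties}, which splits every word into shuffle powers of $\letter{-1}$ times a word in the image of $\WordReg{}{\infty}$ (whose hyperlogarithm stays finite at infinity by lemma~\ref{lemma:hlog-infinity-finite}). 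Concretely, using that $\Hyper{\cdot}(u)$ is a character (lemma~\ref{lemma:Hlog-character}) together with \eqref{eq:reginf-shuffle-identity}, the value $\Hyper{v}(u) = \sum_j \tfrac{1}{j!}[\Hyper{\letter{-1}}(u)]^j \cdot \Hyper{\WordReg{}{\infty}(\cdot)}(u)$ isolates the $\log$-powers in $\Hyper{\letter{-1}}(u)$, and in the regularized limit only the finite pieces $\Hyper{\WordReg{0}{\infty}(\cdot)}(\infty)$ survive, which by corollary~\ref{cor:reginf-from-word} is precisely $\AnaReg{u}{\infty}\Hyper{}(u)$.

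Next I would handle the first factor. As $u \rightarrow \infty$ the iterated integral $\int_u^z \letter{\sigma_1}\!\cdots\letter{\sigma_k}$ should be re-expressed in the variable $w' \defas z^{-1}$. Applying the change of variables lemma~\ref{lemma:Moebius-transformation} with the M\"{o}bius transformation $f(z) = 1/z$ converts $\int_u^z$ into an integral in the inverted variable, substituting letters by $\WordTransformation{1/z}$ as in \eqref{eq:Moebius-transformation}; since $f(\infty) = 0$, the lower endpoint $u = \infty$ maps to $0$, turning $\int_u^z$ into an integral from $0$ to $z^{-1}$, i.e.\ exactly $\Hyper{\WordTransformation{1/z}(\letter{\sigma_1}\!\cdots\letter{\sigma_k})}(z^{-1})$. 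The matching of the deconcatenation index $k$ on both sides is automatic because \eqref{eq:hlog-path-concatenation} already organizes the sum by the split point between the first $k$ and the remaining $n-k$ letters, and because both $\int_u^z$ and $\Hyper{}(u)$ are characters the regularizations $\WordReg{}{\infty}$ on the tail commute with the deconcatenation (this is the content already packaged in \eqref{eq:reginf-from-word}).

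The main obstacle I anticipate is the careful bookkeeping of the simultaneous limit: one must show that the divergent $\log$-powers produced by the first factor $\int_u^z$ (growth $\sim \log^{k}(z/u)$, bounded as in the estimate preceding lemma~\ref{lemma:hlog-infinity-finite}) combine with the divergent $\log$-powers of $\Hyper{}(u)$ to give, term by term, the finite expression claimed, rather than requiring a genuine cancellation across different values of $k$. The cleanest route is to argue entirely at the level of characters on the Hopf algebra $T(\Sigma)$: the map $z \mapsto \bigl[w \mapsto \Hyper{\WordTransformation{1/z}(w)}(z^{-1})\bigr]$ is a character, the map $w \mapsto \Hyper{\WordReg{0}{\infty}(w)}(\infty)$ is a character, and \eqref{eq:reginf-expansion-convolution} asserts their convolution product equals $w \mapsto \Hyper{w}(z)$. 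Since all three are characters, it suffices to verify the identity on an algebra-generating set, and by the shuffle-regularization and linear independence (lemma~\ref{lemma:hyperlog-independence}) one reduces to checking it on words of the form $\letter{0}^n$ and on regularized words not ending in $\letter{0}$, where both sides match by a direct computation together with the limit argument of lemma~\ref{lemma:hlog-path-concatenation-singular} transported to $\infty$ via $f(z)=1/z$. Once the character property is used to localize the check, the remaining step is a finite, routine verification and the expansion \eqref{eq:reginf-expansion-convolution} follows.
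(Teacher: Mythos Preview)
Your proposal is correct and follows essentially the same route as the paper: path concatenation at an intermediate $u$, push $u\to\infty$, apply $\WordReg{0}{\infty}$ via \eqref{eq:reginf-from-word} on the tail factor, and the M\"{o}bius inversion $\WordTransformation{1/z}$ on the head factor. The only place the paper is more direct than your plan is in handling the ``simultaneous limit'': rather than verifying on generators, it simply observes that the deconcatenation sum is $u$-independent and each factor has only logarithmic divergences, so the multiplicativity \eqref{eq:reglim-product} of $\AnaReg{u}{\infty}$ lets one replace each factor by its regularized limit termwise; the first factor then becomes $\AnaReg{u}{\infty}\int_{1/u}^{1/z}\WordTransformation{1/z}(w_{(1)})$, which a short shuffle computation (splitting off $\letter{0}^k$) shows equals $\Hyper{\WordTransformation{1/z}(w_{(1)})}(1/z)$.
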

\begin{proof}
We split the integration path at a point $u$ such that $\Hyper{\cdot}(z) = \int_{u}^{z} \cdot \convolution \Hyper{\cdot}(u)$. Individually, both factors develop (at worst) logarithmic divergences at $u\rightarrow \infty$ which cancel in the total, $u$-independent sum. Therefore we can take the regularized limits
\begin{equation*}
	\Hyper{w}(z)
	\urel{\eqref{eq:reglim-product}}
	\sum_{(w)}
		\AnaReg{u}{\infty} \int_u^z w_{(1)}
		\cdot
		\AnaReg{u}{\infty} \Hyper{w_{(2)}} (u)
	\urel{\eqref{eq:reginf-from-word}}
	\sum_{(w)}
		\AnaReg{u}{\infty} \int_{1/u}^{1/z} \WordTransformation{1/z} \big( w_{(1)} \big)
		\cdot
		\Hyper{\WordReg{0}{\infty}\!\big(w_{(2)}\big)} (\infty)
	.%
\end{equation*}
Here we use lemma~\ref{lemma:Moebius-transformation} to change variables $z \mapsto 1/z$. If $w = \sum_k \letter{0}^k \shuffle w_k$ is written such that $w_k \in \im(\WordReg{0}{})$ does not end on $\letter{0}$, the iterated integrals $\int_0^{1/z} w_k$ are finite and
\begin{equation*}
	\AnaReg{u}{\infty} \int_{1/u}^{1/z}\!\! w
	\urel{\eqref{eq:reglim-product}}
		\sum_k
		\AnaReg{u}{\infty} \int_{1/u}^{1/z} \!\!\letter{0}^k\,
		\cdot
		\int_{0}^{1/z} \!\!w_k
	= \sum_k
		\AnaReg{u}{\infty} \frac{\log^k(u/z)}{k!}
		\cdot
		\Hyper{w_k}\!\left( \frac{1}{z}\right)
	= \Hyper{w}\!\left( \frac{1}{z} \right)
\end{equation*}
follows from $\log(u/z) = \log(u) + \Hyper{\letter{0}}(1/z)$ being mapped to $\Hyper{\letter{0}}(1/z)$ under $\AnaReg{u}{\infty}$.
\end{proof}
This decomposition suffices to compute the expansion \eqref{eq:log-laurent-expansion} at $z\rightarrow \infty$ explicitly for an arbitrary word, because $\Hyper{\WordTransformation{1/z}(w)}(z^{-1})$ is easily expanded using \eqref{eq:hlog-zsum} and \eqref{eq:hlog-zero-logseries}.
\begin{example}
	\label{ex:dilog-inf-expansion} %
	Continuing example~\ref{ex:dilog-reginf}, $\WordReg{0}{\infty}(\letter{-1}) = 0$ gives \eqref{eq:reginf-expansion-convolution} the form
	\begin{equation*}
		\Li_2(-z)
		= - \Hyper{\WordReg{0}{\infty}(\letter{0}\letter{-1})}(\infty)
		  - \Hyper{\WordTransformation{1/z}(\letter{0}\letter{-1})}(1/z)
		= - \mzv{2} -\tfrac{1}{2} \log^2(z) - \Li_2\left( -1/z \right)
	\end{equation*}
	using $\WordTransformation{1/z}(\letter{0}\letter{-1}) = (-\letter{0})(\letter{-1}-\letter{0})$. With \eqref{eq:hlog-zsum}, $\Li_2(-1/z) = \sum_{n=1}^{\infty} (-1/z)^n/n^2$ is analytic at $z \rightarrow \infty$.
\end{example}

\subsection{Dependence of regularized limits}
Considering the letters $\sigma\in\Sigma$ as variables themselves, a hyperlogarithm $\Hyper{w}(z)$ or $\int_{\gamma} w$ is a multivariate function which we so far only studied for fixed $\Sigma$. From the integral representation \eqref{eq:def-Hlog-from-to} we see that $\int_{\gamma} w$ is analytic in all variables $\Sigma \cup \set{\gamma(0),\gamma(1)}$, at least as long as $\dist(\gamma,\Sigma)>0$ (which guarantees absolute convergence).
\begin{lemma}
	\label{lem:hlog-total-differential} %
	The total differential of any hyperlogarithm is
	\begin{equation}\begin{split}
		\dd \Hyper{\letter{\sigma_1}\!\cdots\,\letter{\sigma_n}}(z)
		&= \Hyper{\not\letter{\sigma_1}\!\cdots}(z) \cdot \dd \log(z-\sigma_1)
		- \Hyper{\cdots\not\letter{\sigma_n}}(z) \cdot \dd \log (\sigma_n)
		\\&\quad
		+ \sum_{k=1}^{n-1} \Hyper{\cdots\not\letter{\sigma_{k+1}}\!\cdots \,-\, \cdots\not\letter{\sigma_k}\!\cdots}(z) \cdot \dd\log(\sigma_k-\sigma_{k+1})
		,
		\label{eq:hlog-total-differential} %
	\end{split}\end{equation}
	where $\cdots\mathord{\not\!\letter{\sigma_k}}\!\cdots$ denotes the word after deleting the $k$-th letter and summands with $\sigma_k = \sigma_{k+1}$ or $\sigma_n = 0$ do not contribute ($\dd \log 0 \defas 0$). 
	This can also be written in the form
	\begin{equation}
		\dd \Hyper{w}(z)
		=
		\sum_{k=1}^{n} \Hyper{\cdots\not\letter{\sigma_k}\!\cdots}(z)
		\cdot
		\dd\!\log\frac{\sigma_k - \sigma_{k-1}}{\sigma_{k+1}-\sigma_k}
		\quad\text{with}\quad
		\sigma_{n+1} \defas 0
		\quad\text{and}\quad
		\sigma_{0} \defas z
		.
		\label{eq:hlog-total-differential-compact} %
	\end{equation}
\end{lemma}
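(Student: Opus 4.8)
The plan is to prove the total differential formula \eqref{eq:hlog-total-differential} by induction on the weight $n=\abs{w}$, differentiating the defining recursion \eqref{eq:def-Hlog-noshuffle} and then recognizing that the extra boundary terms produced by the differentiation of the \emph{endpoints} $\sigma_k$ assemble into the stated $\dd\log$ form. The key obstacle is that $\Hyper{w}(z)$ is analytic in \emph{all} the data $(\sigma_1,\ldots,\sigma_n,z)$ simultaneously, so that $\dd$ must collect the partials $\partial_z$, $\partial_{\sigma_1},\ldots,\partial_{\sigma_n}$ coherently; the $z$-derivative is already handled cleanly by lemma~\ref{lemma:Hlog-differential}, but the derivatives with respect to the letters $\sigma_k$ are the genuine content.

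First I would treat the $z$-derivative: by lemma~\ref{lemma:Hlog-differential}, $\partial_z \Hyper{\letter{\sigma_1}w'}(z) = \Hyper{w'}(z)/(z-\sigma_1)$, which is exactly the first term $\Hyper{\not\letter{\sigma_1}\cdots}(z)\cdot\dd\log(z-\sigma_1)$ restricted to its $\dd z$ component. For the letter-derivatives, I would differentiate the integral representation \eqref{eq:def-Hlog-from-to} under the integral sign. Writing $\Hyper{\letter{\sigma_1}w'}(z)=\int_0^z \frac{\dd x}{x-\sigma_1}\Hyper{w'}(x)$, the derivative $\partial_{\sigma_1}$ hits only the explicit factor and yields $\int_0^z \frac{\dd x}{(x-\sigma_1)^2}\Hyper{w'}(x)$, which I would integrate by parts (using lemma~\ref{lemma:Hlog-differential} again for $\partial_x\Hyper{w'}$) to re-express it as a combination of weight-$(n-1)$ hyperlogarithms with the correct $\dd\log(\sigma_1-\sigma_2)$ coefficients and boundary contributions; meanwhile $\partial_{\sigma_k}$ for $k\geq 2$ commutes into the inner integral and is handled by the induction hypothesis applied to $\Hyper{w'}$. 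The bookkeeping of these integration-by-parts boundary terms at $x=0$ and $x=z$ is what produces the endpoint letters $\sigma_{n+1}=0$ and $\sigma_0=z$; the boundary term at $x=0$ vanishes precisely when $\sigma_n\neq 0$ because $\Hyper{w'}$ has a zero of appropriate order there (corollary~\ref{corollary:hlog-reg0} and lemma~\ref{lemma:hlog-zsum}), which accounts for the stated convention $\dd\log 0\defas 0$.

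I expect the main obstacle to be verifying that the telescoping pattern closes correctly, so that adjacent contributions combine into the single term $\Hyper{\cdots\not\letter{\sigma_{k+1}}\cdots}(z)-\Hyper{\cdots\not\letter{\sigma_k}\cdots}(z)$ multiplying $\dd\log(\sigma_k-\sigma_{k+1})$ rather than leaving stray pieces. Here I would argue that the derivative $\partial_{\sigma_k}$ of the $k$-th factor and the derivative $\partial_{\sigma_k}$ arising (through the induction hypothesis) from the $(k-1)$-th factor share the same differential form $\dd\log(\sigma_{k-1}-\sigma_k)=-\dd\log(\sigma_k-\sigma_{k-1})$ up to sign, and that deleting $\letter{\sigma_k}$ from the word is the common combinatorial operation in both. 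To avoid a proliferation of cases, I would first establish formula \eqref{eq:hlog-total-differential} for words $w=\letter{\sigma_1}\cdots\letter{\sigma_n}$ with $\sigma_n\neq 0$ (so no regularization is needed), and only afterward extend to general $w$ by the shuffle-regularization $w=\sum_k \letter{0}^k\shuffle w_k$ of definition~\ref{def:Hlog}: since $\dd$ is a derivation and $\Hyper{\cdot}$ is multiplicative on shuffles (lemma~\ref{lemma:Hlog-character}), the differential of $\Hyper{\letter{0}^k\shuffle w_k}$ splits via Leibniz, $\dd\Hyper{\letter{0}}(z)=\dd\log z$ contributes the $\sigma_n=0$ convention automatically, and the two presentations \eqref{eq:hlog-total-differential} and \eqref{eq:hlog-total-differential-compact} are then seen to be equivalent by simply regrouping each deleted-letter term $\Hyper{\cdots\not\letter{\sigma_k}\cdots}(z)$ with its two neighbouring $\dd\log$ factors $\dd\log(\sigma_k-\sigma_{k-1})$ and $\dd\log(\sigma_{k+1}-\sigma_k)$.
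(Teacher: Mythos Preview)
Your approach is essentially the same as the paper's: both proceed by induction on the weight, differentiate the integral representation $\Hyper{\letter{\sigma_1}w'}(z)=\int_0^z \frac{\dd x}{x-\sigma_1}\Hyper{w'}(x)$ under the integral sign, integrate the resulting $\int \frac{\dd x}{(x-\sigma_1)^2}\Hyper{w'}(x)$ by parts, and feed the induction hypothesis into $\dd\Hyper{w'}$. The only real difference is how the trailing-$\letter{0}$ case is handled: the paper treats $w=\letter{\sigma}\letter{0}^n$ as an explicit second base case (in addition to $w=\letter{0}^n$) and then observes that for all remaining words the boundary term at $x\to 0$ vanishes, whereas you restrict first to $\sigma_n\neq 0$ and propose to extend afterwards via the shuffle regularization and Leibniz. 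Both are valid; the paper's explicit base case is slightly more direct, while your shuffle extension requires the extra check that formula \eqref{eq:hlog-total-differential} is compatible with $\shuffle$ under Leibniz, which is true but not entirely automatic from the word-by-word form of the right-hand side.
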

\begin{proof}
	The statement clearly holds for any $w=\letter{0}^n$, then only the first term $\Hyper{\letter{0}^{n-1}}(z)\cdot\dd\log (z)$ contributes on the right-hand side. For $w=\letter{\sigma}^{}\letter{0}^n$ ($\sigma \neq 0$) we check
	\begin{align*}
		\dd \Hyper{w}(z)
		&= \dd \int_0^{z} \frac{\dd z'}{z'-\sigma} \frac{\log^n (z')}{n!}
		= \frac{\log^n( z)}{n!}\frac{\dd z}{z-\sigma} 
		+ \dd\sigma \int_0^z \frac{\dd z'}{(z'-\sigma)^2} \frac{\log^n (z')}{n!}
		\\
		&= 
		 \Hyper{\letter{0}^n}(z)\frac{\dd z}{z-\sigma}
		+ \dd\sigma \int_0^z \dd z' \left[ 
		\frac{\sigma^{-1}}{z'-\sigma} \frac{\log^{n-1} (z')}{(n-1)!}
			-\partial_{z'} \frac{\log^n (z')}{n!} \left( \frac{1}{z'-\sigma} + \frac{1}{\sigma} \right)
		\right]
		\\
		&= 
			\Hyper{\letter{0}^n}(z)\frac{\dd z}{z-\sigma}
		- \left. \frac{\log^n (z')}{n!} \left( \frac{1}{z'-\sigma} + \frac{1}{\sigma}\right) \right|_{z' \rightarrow 0}^z \!\!\!\!\!\!\cdot\dd\sigma 
			+ \Hyper{\letter{\sigma}^{}\letter{0}^{n-1}}(z) \cdot \dd \log( \sigma)
		\\
		&= \Hyper{\letter{0}^n}(z) \cdot \dd \log\frac{z-\sigma}{\sigma}
			+\Hyper{\letter{\sigma}^{}\letter{0}^{n-1}}(z) \cdot \dd \log (\sigma)
		,
	\end{align*}
	paying special attention to the vanishing boundary term $\frac{z' \log^n (z')}{(z'-\sigma)\sigma} \rightarrow 0$ at $z' \rightarrow 0$. For any other word $w=\letter{\sigma_1}\!\!\cdots\letter{\sigma_n}$ ($n \geq 2$) we can use $\Hyper{\not\letter{\sigma_1}\!\cdots}(z) \rightarrow 0$ at $z \rightarrow 0$ to compute
	\begin{align*}
		\dd \Hyper{w}(z)
		&= \Hyper{\not\letter{\sigma_1}\!\cdots}(z) \frac{\dd z}{z - \sigma_1}
			+ \int_0^z \left[
						\frac{\Hyper{\not\letter{\sigma_1}\!\cdots}(z_1)}{(z_1-\sigma_1)^2} \dd \sigma_1
						+ \frac{1}{z_1 - \sigma_1} \dd \Hyper{\not\letter{\sigma_1}\!\cdots}(z_1)
			\right]
			\wedge \dd z_1 
		\\
		&= \Hyper{\not\letter{\sigma_1}\!\cdots}(z) \cdot \dd \log (z-\sigma_1)
			+ \int_0^z \frac{1}{z_1-\sigma_1} \left[ (\dd + \dd \sigma_1 \cdot \partial z_1) \Hyper{\letter{\sigma_2}\!\cdots}(z_1) \right] \wedge \dd z_1
		\tag{$\ast$}
	\end{align*}
	and apply induction to obtain the derivative of the weight $n-1$ hyperlogarithm $\Hyper{\not\letter{\sigma_1}\!\cdots}$:
	\begin{align*}
		\left( \dd + \dd \sigma_1 \cdot \partial_z \right) \Hyper{\not\letter{\sigma_1}\!\cdots}(z)
		&= \sum_{k=3}^n \dd \log \frac{\sigma_k-\sigma_{k-1}}{\sigma_{k+1}-\sigma_k} \cdot \Hyper{\not\letter{\sigma_1}\!\cdots\,\not\letter{\sigma_k}\!\cdots}(z)
		\\&\quad
		+ \left[
				\frac{\dd z + \dd \sigma_1 - \dd \sigma_2}{z-\sigma_2} 
				-\dd \log (\sigma_2-\sigma_3)
			\right]
			\Hyper{\not\letter{\sigma_1}\not\letter{\sigma_2}\!\cdots}(z)
		.
	\end{align*}
	Inserting this expression into $(\ast)$ readily shows
	\begin{multline*}
		\dd \Hyper{w}(z)
		= \Hyper{\not\letter{\sigma_1}\!\cdots}(z) \cdot \dd \log (z-\sigma_1)
			+ \sum_{k=3}^n \dd \log \frac{\sigma_k - \sigma_{k-1}}{\sigma_{k+1}-\sigma_k} \cdot \int_0^z \frac{\dd z'}{z' - \sigma_1} \Hyper{\not\letter{\sigma_1}\!\cdots\,\not\letter{\sigma_k}\!\cdots}(z')
		\\
		+ \dd \log(\sigma_1 - \sigma_2) \int_0^z \left[\frac{\dd z'}{z'-\sigma_1} - \frac{\dd z'}{z' - \sigma_2} \right] \Hyper{\not\letter{\sigma_1}\not\letter{\sigma_2}\!\cdots}(z')
			- \dd \log(\sigma_2 - \sigma_3) \int_0^z \frac{\dd z'}{z' - \sigma_1} \Hyper{\not\letter{\sigma_1}\not\letter{\sigma_2}\!\cdots}(z_1)
	\end{multline*}
	and we are done by integrating
	$
		\int_0^z \frac{\dd z'}{z'-\sigma_1} \Hyper{\not\letter{\sigma_1}\!\cdots\,\not\letter{\sigma_k}\!\cdots}(z') 
		=
		\Hyper{\cdots\,\not\letter{\sigma_k}\!\cdots}(z)
		$ with \eqref{eq:def-Hlog-noshuffle}.\footnote{If $w$ has only one non-zero letter $\sigma_k\neq 0$ ($k=1$ was treated separately before, so $k>1$), this formula does not apply to the term where $\letter{\sigma_k}$ is deleted. But then $\dd \log\frac{\sigma_k - \sigma_{k-1}}{\sigma_{k+1}-\sigma_k} = \dd \log \frac{\sigma_k}{-\sigma_k} = 0$ anyway.}
\end{proof}
In particular, this lemma shows that when $\sigma(t) \in \Sigma \subset \C(t)$ are considered as rational functions of a parameter $t$ say, then $\Hyper{w}(z)$ with $w \in \Sigma^{\times}$ is a hyperlogarithm in $t$ over an alphabet consisting of the zeros of $\sigma_i(t)-\sigma_j(t)$ and $\sigma_i(t) - z$.
For our application we need this result in the form of 
\begin{lemma}
	\label{lem:reginf-differential} %
	The total differential of any regularized limit at infinity is given by
	\begin{multline}
		\dd \AnaReg{z}{\infty}\Hyper{\letter{\sigma_1}\!\cdots\,\letter{\sigma_n}}(z)
		= 
			\sum_{k=2}^{n-1} \dd \log \frac{\sigma_{k}-\sigma_{k-1}}{\sigma_{k+1}-\sigma_k} \cdot \AnaReg{z}{\infty} \Hyper{\cdots\,\not\letter{\sigma_k}\!\cdots}(z)
		\\
			+\dd \log(\sigma_1 - \sigma_2) \cdot \AnaReg{z}{\infty} \Hyper{\not\letter{\sigma_1}\!\cdots}(z)
			-\dd \log (\sigma_n )\cdot \AnaReg{z}{\infty} \Hyper{\cdots\,\not\letter{\sigma_n}}(z)
		.
		\label{eq:reginf-total-differential} %
	\end{multline}
\end{lemma}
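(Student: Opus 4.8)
The plan is to read off the result directly from the total differential of a hyperlogarithm (lemma~\ref{lem:hlog-total-differential}) and then commute the parameter-differential with the regularized limit $\AnaReg{z}{\infty}$. Concretely, view $\Hyper{\letter{\sigma_1}\!\cdots\letter{\sigma_n}}(z)$ as a function of $z$ together with the letters $\sigma_1,\dots,\sigma_n$, and split the exterior derivative as $\dd = \dd_z + \dd_\sigma$, where $\dd_\sigma$ differentiates only in the $\sigma_i$. Since the left-hand side $\AnaReg{z}{\infty}\Hyper{\letter{\sigma_1}\!\cdots\letter{\sigma_n}}(z)$ no longer depends on $z$, we have $\dd\,\AnaReg{z}{\infty}\Hyper{\cdot}(z)=\dd_\sigma\,\AnaReg{z}{\infty}\Hyper{\cdot}(z)$, so it suffices to apply $\AnaReg{z}{\infty}$ to the $\sigma$-part of the compact differential formula~\eqref{eq:hlog-total-differential-compact}, namely $\dd_\sigma\Hyper{w}(z)=\sum_{k=1}^{n}\Hyper{\cdots\not\letter{\sigma_k}\!\cdots}(z)\cdot\bigl(\dd_\sigma\log\frac{\sigma_k-\sigma_{k-1}}{\sigma_{k+1}-\sigma_k}\bigr)$ with the conventions $\sigma_0\defas z$ and $\sigma_{n+1}\defas 0$.

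The first key point is that differentiation in the parameters commutes with the regularized limit. By proposition~\ref{prop:hlog-divergences} the divergence of $\Hyper{w}(z)$ at $z\to\infty$ is purely logarithmic, with an expansion~\eqref{eq:log-laurent-expansion} whose coefficients $A_{n,m}$ are analytic in the $\sigma_i$; since the basis functions $\log^n(z)\,z^{-m}$ do not depend on the letters, $\dd_\sigma$ acts only on the $A_{n,m}$ and $\AnaReg{z}{\infty}$ extracts $A_{0,0}$, giving $\dd_\sigma\AnaReg{z}{\infty}\Hyper{w}(z)=\AnaReg{z}{\infty}(\dd_\sigma\Hyper{w}(z))$. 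The second, decisive observation is that the only term in $\dd_\sigma\Hyper{w}(z)$ that still involves $z$ is the $k=1$ contribution, where $\sigma_0=z$ produces the one-form $\dd_\sigma\log(\sigma_1-z)=\dd\sigma_1/(\sigma_1-z)$. This factor multiplies $\Hyper{\not\letter{\sigma_1}\!\cdots}(z)$, whose growth is at most logarithmic, so the product behaves like $\bigo{\log^{n-1}(z)/z}$ as $z\to\infty$; its regularized limit vanishes because there is no $\log^0(z)\,z^0$ term. Hence the $z$-coupled piece drops out entirely.

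All remaining one-forms depend only on the $\sigma_i$, so by linearity and the multiplicativity~\eqref{eq:reglim-product} of $\AnaReg{z}{\infty}$ they pull out of the limit and each $\Hyper{\cdots\not\letter{\sigma_k}\!\cdots}(z)$ is simply replaced by $\AnaReg{z}{\infty}\Hyper{\cdots\not\letter{\sigma_k}\!\cdots}(z)$. For the interior indices $2\le k\le n-1$ this reproduces the factor $\dd\log\frac{\sigma_k-\sigma_{k-1}}{\sigma_{k+1}-\sigma_k}$ verbatim, and collecting these gives the sum in~\eqref{eq:reginf-total-differential}. The main obstacle is the careful book-keeping at the two boundary letters: at $k=1$ the substitution $\sigma_0=z$ forces the $\dd\log(\sigma_1-z)$ part to disappear under the limit, leaving a single $\sigma$-only logarithmic derivative tied to $\sigma_1-\sigma_2$, while at $k=n$ the convention $\sigma_{n+1}=0$ turns $\dd\log(\sigma_{n+1}-\sigma_n)$ into a derivative of $\log\sigma_n$. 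Tracking the orientation of each difference $\sigma_k-\sigma_{k-1}$ and the resulting signs through these two degenerate cases is exactly where the proof must be done with precision, and it is these boundary contributions that account for the terms $\dd\log(\sigma_1-\sigma_2)\cdot\AnaReg{z}{\infty}\Hyper{\not\letter{\sigma_1}\!\cdots}(z)$ and $-\dd\log(\sigma_n)\cdot\AnaReg{z}{\infty}\Hyper{\cdots\not\letter{\sigma_n}}(z)$ in the statement.
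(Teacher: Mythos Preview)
Your approach is essentially the same as the paper's: apply $\AnaReg{z}{\infty}$ to the total differential formula~\eqref{eq:hlog-total-differential-compact} and justify commuting the parameter differential with the regularized limit. You are in fact more explicit than the paper on one point---you spell out why the $z$-coupled piece $\dd\sigma_1/(\sigma_1-z)\cdot\Hyper{\not\letter{\sigma_1}\!\cdots}(z)$ from $k=1$ disappears under $\AnaReg{z}{\infty}$, whereas the paper folds this into the terse statement that the formula is ``just the regularized limit.''

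One place where the paper is sharper is the commutation argument itself. You invoke the expansion~\eqref{eq:log-laurent-expansion} and say the coefficients $A_{n,m}$ are analytic in the $\sigma_i$, so $\dd_\sigma$ acts on them alone. That is correct in spirit, but you are implicitly differentiating a power series term by term and taking a limit; the paper justifies this more concretely by appealing to lemma~\ref{lemma:reginf-expansion-convolution}, which writes each $f_{w,\infty}^{(k)}(z)$ as a combination of iterated integrals $\int_0^{1/z}$ that are jointly analytic in $t$ and $z$. From this one sees directly that $\partial_t f_{w,\infty}^{(k)}(z)$ remains analytic at $z\to\infty$, so the ordinary limit and $\partial_t$ commute. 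Your argument would be completed by citing this representation.
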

\begin{proof}
	This formula is just the regularized limit of \eqref{eq:hlog-total-differential-compact} at $z \rightarrow \infty$, so we must prove that a partial derivative commutes with this limit:
	$\partial_t \AnaReg{z}{\infty} L_w(z)
	= \AnaReg{z}{\infty} \partial_t L_w(z) $.
	The regularization \eqref{eq:hlog-divergences} means that we can write
	\begin{equation*}
		\AnaReg{z}{\infty} \partial_t \Hyper{w}(z)
		= \sum_k \AnaReg{z}{\infty} \left[ \log^k(z) \cdot \partial_t f_{w,\infty}^{(k)} (z) \right]
		\urel{\eqref{eq:reglim-product}}
			\AnaReg{z}{\infty} \partial_t f_{w,\infty}^{(0)}(z)
		\urel{\eqref{eq:lim=reglim}}
			\lim_{z\rightarrow\infty} \partial_t f_{w,\infty}^{(0)}(z),
	\end{equation*}
	because each $f_{w,\infty}^{(k)}(z)$ is analytic jointly in $t$ and $z$ since it is a linear combination of iterated integrals $\int_0^{1/z}$ by \eqref{eq:reginf-expansion-convolution}. The key implication is that $\partial_t f_{w,\infty}^{(k)}(z)$ remains analytic around $z\rightarrow \infty$ and we may interchange
	\begin{equation*}
		\AnaReg{z}{\infty} \partial_t \Hyper{w}(z)
		= \lim_{z \rightarrow \infty} \partial_t f_{w,\infty}^{(0)}(z)
		= \partial_t \lim_{z \rightarrow \infty} f_{w,\infty}^{(0)}(z)
		= \partial_t f_{w,\infty}^{(0)}(\infty)
		= \partial_t \AnaReg{z}{\infty} \Hyper{w}(z). \qedhere
	\end{equation*}
\end{proof}
\begin{example}
	\label{ex:reginf-as-hlog}%
	A single non-zero letter integrates to the logarithm
	\begin{equation*}
		\Hyper{\letter{-\sigma}}(z)
		= \log \left( \frac{z+\sigma}{\sigma} \right)
		= \log(z) + \log \left(\frac{1}{z}+\frac{1}{\sigma}\right)
		,\quad\text{so}\quad
		\AnaReg{z}{\infty} \Hyper{\letter{-\sigma}}(z)
		= \log\left( \frac{1}{\sigma}\right)
		= -\Hyper{\letter{0}}(\sigma).
	\end{equation*}
	For two letters, equation~\eqref{eq:reginf-total-differential} dictates that
	\begin{equation*}
		\dd \AnaReg{z}{\infty} \Hyper{\letter{-\sigma}\letter{-\tau}}(z)
		= \dd \log (\sigma-\tau) \cdot \AnaReg{z}{\infty} \Hyper{\letter{-\sigma}-\letter{-\tau}}(z)
		- \dd \log(-\tau) \cdot \AnaReg{z}{\infty} \Hyper{\letter{-\sigma}}(z).
	\end{equation*}
	In the case $\tau=1$ this simplifies with the above to
	$
		-\dd \log(\sigma-1) \cdot \Hyper{\letter{0}}(\sigma)
	$ and we conclude that 
	$
		\AnaReg{z}{\infty} \Hyper{\letter{-\sigma}\letter{-1}}(z)
		= - \Hyper{\letter{1}\letter{0}}(\sigma) + C
	$ for some constant $C$.
\end{example}
An iteration of lemma~\ref{lem:reginf-differential} allows us to write any $\AnaReg{z}{\infty} \Hyper{w}(z)$, which implicitly depends on a parameter $t$, in terms of explicit hyperlogarithms in $t$. This is all we need to integrate such a function with respect to $t$, so the following conclusion of this section is very useful in practice.
\begin{proposition}
	\label{prop:reginf-as-hlog} %
	Suppose the letters $\Sigma = \set{\sigma_1(t),\ldots\sigma_N(t)} \subset \C(t)$ depend rationally on a parameter $t$ and let $0 \in \Sigma$. Define the alphabet $\Sigma_t$ by
	\begin{equation}
		\Sigma_t
		\defas
		\bigcup_{1\leq i < j \leq N}
		\set{\text{zeros and poles of $\sigma_i(t) - \sigma_j(t)$}}
		\subset
		\C,
		\label{eq:def-alphabet-parameter-reduced} %
	\end{equation}%
\nomenclature[Sigma t]{$\Sigma_t$}{points where letters in $\Sigma$ coincide or get infinite, equation~\eqref{eq:def-alphabet-parameter-reduced}\nomrefpage}%
	then for any $w \in T(\Sigma)$ there exist $w_u \in T(\Sigma)$, indexed by words $u \in \Sigma_t^{\times}$, such that $\AnaReg{z}{\infty} \Hyper{w}(z) = \sum_u \Hyper{u}(t) \cdot \AnaReg{t}{0}\AnaReg{z}{\infty} \Hyper{w_u}(z)$ is a finite linear combination of hyperlogarithms $\Hyper{u}(t)$ in $t$ (with $t$-independent coefficients). In other words,
	\begin{equation}
		\AnaReg{z}{\infty}
			\HlogAlgebra(\Sigma)(z)
		\subseteq
		\HlogAlgebra(\Sigma_t)(t) \tp \AnaReg{t}{0} \AnaReg{z}{\infty} \HlogAlgebra(\Sigma)(z)
		.
		\label{eq:reginf-as-hlog} %
	\end{equation}
\end{proposition}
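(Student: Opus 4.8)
The plan is to establish the inclusion \eqref{eq:reginf-as-hlog} by induction on the weight $n = \abs{w}$ of the word $w$. Since both sides are linear in $w$, it suffices to treat homogeneous words $w = \letter{\sigma_1}\!\cdots\letter{\sigma_n}$ with each $\sigma_i \in \Sigma$. The base case $n=0$ is immediate, as $\AnaReg{z}{\infty}\Hyper{\emptyWord}(z) = 1 = \AnaReg{t}{0}\AnaReg{z}{\infty}\Hyper{\emptyWord}(z)$, which is the asserted form with the single index $u=\emptyWord$. Throughout I regard $g(t) \defas \AnaReg{z}{\infty}\Hyper{w}(z)$ as a function of $t$; by the argument in the proof of lemma~\ref{lem:reginf-differential} it is analytic on the complement of $\Sigma_t$ and its $t$-derivative is the specialization of the total differential \eqref{eq:reginf-total-differential}.

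For the inductive step I would first read off from lemma~\ref{lem:reginf-differential} that $\dd g$ is a sum of terms $\dd\log(\sigma_i-\sigma_j)\cdot\AnaReg{z}{\infty}\Hyper{w'}(z)$, where each residual $w'$ is obtained from $w$ by deleting one letter, hence has weight $n-1$, and where the $\sigma_i-\sigma_j$ run through the differences $\sigma_k-\sigma_{k\pm1}$ and the single letter $\sigma_n = \sigma_n - 0$ (recall $0\in\Sigma$). The crucial observation, and the whole point of the definition \eqref{eq:def-alphabet-parameter-reduced}, is that in the variable $t$ each such form is a $\Z$-linear combination $\dd\log(\sigma_i(t)-\sigma_j(t)) = \sum_{\tau\in\Sigma_t} m_\tau\,\letter{\tau}$ of the elementary forms $\letter{\tau} = \dd t/(t-\tau)$, because the zeros and poles of $\sigma_i(t)-\sigma_j(t)$ lie in $\Sigma_t$ by construction. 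Collecting contributions by $\tau$ and using linearity therefore gives
\[
	\partial_t g(t)
	= \sum_{\tau\in\Sigma_t} \frac{1}{t-\tau}\, g_\tau(t),
	\qquad
	g_\tau(t) = \AnaReg{z}{\infty}\Hyper{w_\tau}(z),
\]
with weight-$(n-1)$ elements $w_\tau\in T(\Sigma)$.

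Next I would apply the induction hypothesis to each $g_\tau$, writing $g_\tau(t) = \sum_{v\in\Sigma_t^\times}\Hyper{v}(t)\cdot d_{\tau,v}$ with constant coefficients $d_{\tau,v} = \AnaReg{t}{0}\AnaReg{z}{\infty}\Hyper{(w_\tau)_v}(z)$ and $(w_\tau)_v\in T(\Sigma)$. Lemma~\ref{lemma:Hlog-differential}, in the form $\partial_t\Hyper{\letter{\tau}v}(t) = \Hyper{v}(t)/(t-\tau)$, then identifies the right-hand side above as $\partial_t\sum_{\tau,v} d_{\tau,v}\Hyper{\letter{\tau}v}(t)$. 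Hence $g(t) - \sum_{\tau,v} d_{\tau,v}\Hyper{\letter{\tau}v}(t)$ has vanishing differential and equals a constant $K$; applying $\AnaReg{t}{0}$, which annihilates every non-empty word by corollary~\ref{corollary:hlog-reg0} (so $\AnaReg{t}{0}\Hyper{\letter{\tau}v}(t)=\counit(\letter{\tau}v)=0$), yields $K = \AnaReg{t}{0}g(t) = \AnaReg{t}{0}\AnaReg{z}{\infty}\Hyper{w}(z)$. Reindexing the sum by the word $u=\letter{\tau}v\in\Sigma_t^\times$ (which recovers $\tau$ and $v$ uniquely) and taking $u=\emptyWord$ for the constant gives precisely the claimed expansion, with $w_{\emptyWord}\defas w$ and $w_u\defas (w_\tau)_v$. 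This closes the induction. If $0\notin\Sigma_t$ we simply adjoin it, consistent with the standing convention that hyperlogarithm alphabets contain $0$.

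The genuinely substantive input is lemma~\ref{lem:reginf-differential}, which is already available; granted that, the argument is a clean weight recursion and I do not expect a serious conceptual obstacle. The two points that demand care are both bookkeeping: verifying that $\dd\log(\sigma_i-\sigma_j)$ is an $\Sigma_t$-integrable one-form in $t$ — which is exactly what \eqref{eq:def-alphabet-parameter-reduced} ensures — and fixing the integration constant at each weight, for which the identity $\AnaReg{t}{0}\Hyper{u}(t)=\counit(u)$ is precisely the right tool. The only mild delicacy is organizing the recursion so that the coefficients emerge already in the normalized form $\AnaReg{t}{0}\AnaReg{z}{\infty}\Hyper{w_u}(z)$ rather than as unprocessed integration constants.
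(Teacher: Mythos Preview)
Your proof is correct and follows essentially the same approach as the paper: induction on the weight, using lemma~\ref{lem:reginf-differential} to compute $\partial_t\,\AnaReg{z}{\infty}\Hyper{w}(z)$, factoring each $\dd\log(\sigma_i-\sigma_j)$ into the forms $\dd t/(t-\tau)$ with $\tau\in\Sigma_t$, applying the induction hypothesis to the residual weight-$(n-1)$ words, integrating back via lemma~\ref{lemma:Hlog-differential}, and fixing the constant by $\AnaReg{t}{0}$ through corollary~\ref{corollary:hlog-reg0}. Your write-up is in fact more explicit about the bookkeeping (the reindexing by $u=\letter{\tau}v$ and the role of $0\in\Sigma$ in interpreting $\sigma_n$ as a difference), but the argument is the same.
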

\begin{proof}
	A straightforward induction of \eqref{eq:reginf-total-differential} over the length $\abs{w}$ suffices: Suppose we have recursively written the right-hand side of \eqref{eq:reginf-total-differential} in the form of \eqref{eq:reginf-as-hlog}, namely
	\begin{equation*}
		\partial_t \AnaReg{z}{\infty} \Hyper{w}(z)
		=\!\!\!\!\!
		\sum_{u \in \Sigma_t^{\times}, \tau \in \Sigma_t} \frac{\lambda_{\tau,u}}{t-\tau} \Hyper{u}(t) \cdot c_u
		\quad\text{such that}\quad
		\AnaReg{z}{\infty} \Hyper{w}(z)
		=
		C + \sum_{\mathclap{u \in \Sigma_t^{\times}, \tau \in \Sigma_t}} \lambda_{\tau,u} \Hyper{\letter{\tau}u}(t) \cdot c_u.
	\end{equation*}
	Here we factored $\partial_t \log (\sigma_{k+1}-\sigma_k) = \prod_{\tau}(t-\tau)^{\lambda_{\tau}} \in \regulars(\Sigma_t)(t)$ completely (so $\lambda_{\tau,u} \in \Z$). By corollary~\ref{corollary:hlog-reg0}, the constant of integration is of the advertised form
	\begin{equation*}
		C = \AnaReg{t}{0} \AnaReg{z}{\infty} \Hyper{w}(z). \qedhere
	\end{equation*}
\end{proof}
\begin{example}\label{ex:reginf-oneletter-log}
	If $\Sigma = \set{0,t}$ contains just one non-zero element $t$, only the letter $\Sigma_{t} = \set{0}$ remains and every regularized limit is a polynomial in $\log(t)$:
	\begin{equation*}
		\AnaReg{z}{\infty} T(\set{0,t})(z)
		= \HlogAlgebra(\set{0})(t) \tp \AnaReg{t}{0} \AnaReg{z}{\infty} T(\Sigma)(z)
		= \Q[\log(t)] \tp \AnaReg{t}{0} \AnaReg{z}{\infty} T(\set{0,t})(z)
		.
	\end{equation*}
\end{example}

\subsection{Regularized limits of regularized limits}
\label{sec:reglim-reginf} %
In this section we present a simple algorithm to compute limits of the form
\begin{equation}
	\AnaReg{t}{0} \AnaReg{z}{\infty} \Hyper{w} (z)
	\quad\text{for words}\quad
	w \in \Sigma^{\times}, \quad \Sigma \subset \C(t)
	\label{eq:reglim-reginf-setup} %
\end{equation}
with letters $\sigma(t) \in \Sigma$ that are rational functions of the parameter $t$. These appear as the integration constants in proposition~\ref{prop:reginf-as-hlog} and need to be understood in particular for the computation of multiple integrals in section~\ref{sec:multiple-integrals}. In the simplest case, we can apply
\begin{lemma}
	\label{lem:reginf-reglim-finite} %
	Suppose that all letters in the alphabet $\Sigma$ have finite, non-positive limits
	$
		\lim_{t\rightarrow 0} \Sigma
		\defas \setexp{\lim_{t\rightarrow 0} \sigma(t)}{\sigma \in \Sigma} 
		\subset \C \setminus (0,\infty)
	$.
	Then any word $w=\letter{\sigma_1}\!\!\cdots\letter{\sigma_n} \in T(\Sigma)$ that ends in a letter with $\lim_{t\rightarrow 0} \sigma_n(t) \neq 0$ fulfils
	\begin{equation}
		\AnaReg{t}{0} \AnaReg{z}{\infty} \Hyper{w}(z)
		=
		\AnaReg{z}{\infty} \Hyper{\lim\limits_{t \rightarrow 0} w}(z)
		.
		\label{eq:reglim-reginf-finite} %
	\end{equation}%
\nomenclature[limtw]{$\displaystyle\lim_{t\rightarrow 0} w$}{letter-wise limit $\letter{\sigma(t)} \mapsto \letter{\sigma(0)}$, lemma~\ref{lem:reginf-reglim-finite}\nomrefpage}%
	Here we use the notation $\lim_{t \rightarrow 0} w \defas \letter{\lim_{t\rightarrow 0} \sigma_1(t)}\!\cdots \letter{\lim_{t\rightarrow 0} \sigma_n(t)}$.
\end{lemma}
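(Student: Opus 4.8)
The plan is to regard $g(t) \defas \AnaReg{z}{\infty}\Hyper{w}(z)$ as a function of the parameter $t$ and to identify its regularized limit $\AnaReg{t}{0} g(t)$ with the constant (empty-word) term of its expansion at $t = 0$. By proposition~\ref{prop:reginf-as-hlog}, $g$ is a hyperlogarithm in $t$ over the alphabet $\Sigma_t$ of collision points, so by corollary~\ref{corollary:hlog-reg0} the operator $\AnaReg{t}{0}$ simply extracts this constant term. The concrete handle on $g$ is corollary~\ref{cor:reginf-from-word}, which writes $g(t) = \Hyper{\WordReg{0}{\infty}(w)}(\infty) = \int_0^{\infty} \WordReg{0}{\infty}(w)$ as an absolutely convergent iterated integral along the positive real axis whose integrand depends on $t$ only through the letters $\sigma_i(t)$.

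First I would classify the letters of $w$ by their limit. Those with $\sigma_i(0) \neq 0$ are \emph{regular}: since the hypothesis places all limits in $\C \setminus (0,\infty)$, the corresponding poles stay off the integration contour as $t \to 0$, so the integrand depends analytically on these letters near $t = 0$ and one may substitute $t = 0$ by dominated convergence. The letters with $\sigma_i(0) = 0$ are \emph{vanishing}: their poles are pulled onto the lower endpoint $x = 0$ of the contour and are the sole source of the (purely logarithmic) divergence of $g$ as $t \to 0$. The assumption $\sigma_n(0) \neq 0$ enters here. It keeps the innermost one-form $\dd x_n/(x_n - \sigma_n)$ non-singular at the lower endpoint, so that the frozen word $\lim_{t \to 0} w$ still ends in a non-zero letter and $g^{(0)} \defas \AnaReg{z}{\infty} \Hyper{\lim_{t\to0} w}(z) = \int_0^{\infty} \WordReg{0}{\infty}(\lim_{t\to0} w)$ is itself finite and well defined.

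The core claim is then that annihilating $\log t$ in $g(t)$ reproduces exactly the intrinsic $\letter{0}$-regularization $\WordReg{0}{}$ built into the definition of $\Hyper{\lim_{t\to0} w}$. I would prove this by induction on the number of vanishing letters. For zero vanishing letters there is no divergence, $\AnaReg{t}{0} g = \lim_{t\to0} g$ is the honest limit, and it equals $g^{(0)}$ by the dominated-convergence remark above; the single-letter computation $\AnaReg{z}{\infty}\Hyper{\letter{\sigma_1}}(z) = -\log(-\sigma_1)$ with $\sigma_1(0) \neq 0$ of example~\ref{ex:reginf-as-hlog} is the weight-one instance. For the inductive step I would isolate one vanishing letter $\letter{\sigma_i}$ using the shuffle decomposition of lemma~\ref{lemma:shuffle-decomposition} (equivalently via the total differential of lemma~\ref{lem:reginf-differential}, which in the variable $t$ exhibits the only poles at $t = 0$ as coming from collisions $\sigma_i(0) = \sigma_j(0)$, never from the last-letter form $\dd\log\sigma_n$). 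Peeling this letter off with the regularization identities of corollary~\ref{corollary:shuffle-regularization} and then taking $\AnaReg{t}{0}$ turns the $\log t$ produced by $\sigma_i(t) \to 0$ into precisely the counterterm that $\WordReg{0}{}$ subtracts in $\Hyper{\lim_{t\to0} w}$, while the remaining factor has one fewer vanishing letter and is handled by the induction hypothesis.

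The main obstacle is this last step: showing cleanly that the logarithmic counterterms generated by several \emph{nested} vanishing letters are annihilated by $\AnaReg{t}{0}$ in exactly the same combinatorial pattern as the $\WordReg{0}{}$-regularization defining $\Hyper{\lim_{t\to0} w}$, with no spurious constants surviving. The total differential pins down $g$ only up to its integration constant — which is the very quantity $\AnaReg{t}{0} g$ we are after — so the matching of constants must be supplied by the integral representation together with the explicit single-letter expansion, not by the differential alone. The hypothesis $\sigma_n(0) \neq 0$ is exactly what makes this matching constant-free: it prevents the innermost integration from contributing its own lower-endpoint divergence, so that no additional $\log t$, and hence no additional regularization ambiguity, is introduced beyond those already accounted for by the vanishing interior letters.
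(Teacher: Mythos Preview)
Your proposal rests on a false premise: you expect the vanishing letters ($\sigma_i(0)=0$ for $i<n$) to produce logarithmic divergences $\sim\log t$ in $g(t)=\AnaReg{z}{\infty}\Hyper{w}(z)$, and you set up an induction to match these with the $\WordReg{0}{}$-counterterms in $\Hyper{\lim_{t\to0}w}$. But there is no such divergence. The integral $g(t)=\int_0^\infty\WordReg{0}{\infty}(w)$ has a \emph{finite} ordinary limit as $t\to0$, so $\AnaReg{t}{0}g=\lim_{t\to0}g$ by \eqref{eq:lim=reglim} and there is nothing to regularize on the $t$-side.

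The paper's argument is correspondingly direct. After $\WordReg{}{\infty}$ (no preceding $\WordReg{0}{}$ is needed since $\sigma_n\neq0$), every term is of the form $(\letter{\sigma_1}-\letter{-1})\letter{\sigma_2}\!\cdots\letter{\sigma_n}$ with last letter either $\sigma_n$ or $-1$, both bounded away from $[0,\infty)$. The explicit integrand \eqref{eq:hlog-finite-infinity-extended} is then dominated uniformly in $t$ by the fixed integrable integrand of $\Hyper{(\letter{0}-\letter{-1})\letter{0}^{n-2}\letter{-1}}(\infty)$: for each intermediate $i$ one can take $C_i\geq z/\abs{z-\sigma_i(t)}$, and when $\Realteil\sigma_i\leq0$ (in particular when $\sigma_i\to0$) the choice $C_i=1$ works regardless of how small $\sigma_i$ is. Dominated convergence then gives $\lim_{t\to0}\Hyper{\WordReg{0}{\infty}(w)}(\infty)=\Hyper{\lim_{t\to0}\WordReg{0}{\infty}(w)}(\infty)$, and since the letter-wise limit is a substitution that keeps the last letter non-zero it commutes with $\WordReg{0}{\infty}$, yielding the claim.

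The reason a vanishing \emph{intermediate} letter is harmless is exactly that the iterated integral $\int_0^{z_{i-1}}\tfrac{\dd z_i}{z_i}\,(\cdots)$ remains convergent: the inner integrations supply a factor vanishing at $z_i=0$ (ultimately because the \emph{last} letter is non-zero). Your ``main obstacle''---matching nested $\log t$ counterterms---thus dissolves, and the induction you outline never gets off the ground because its premise is vacuous.
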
%
\begin{proof}
	We first consider combinations of the form $w=(\letter{\sigma_1} - \letter{-1})\letter{\sigma_2}\!\!\cdots\letter{\sigma_n}$ such that $\Hyper{w}(\infty)$ is the absolutely convergent integral from \eqref{eq:hlog-finite-infinity}, given explicitly by
	\begin{equation}\label{eq:hlog-finite-infinity-extended}
		\int_{0<z_1<\cdots < z_n < \infty}
		\ 
		\frac{1+\sigma_1(t)}{(z_1+1)(z_1 - \sigma_1(t))}
		\frac{1}{z_2 - \sigma_2(t)}
		\cdots
		\frac{1}{z_n - \sigma_n(t)}
		\ 
		\dd z_1 \cdots\, \dd z_n
		.%
	\end{equation}
	The limit at $t \rightarrow 0$ of the integrand is still absolutely integrable and the theorem of dominated convergence applies\footnote{%
		Explicitly, the integrand in \eqref{eq:hlog-finite-infinity-extended} is absolutely bounded by the (integrable) integrand of $L_{(\letter{0}-\letter{-1})\letter{0}^{n-2}\letter{-1}}$ times a constant $C=\abs{1+\sigma_1}\cdot C_1\cdots C_n$ such that for all $z \in \R_+$, $C_n \geq \frac{z+1}{\abs{z-\sigma_n}}$ and $C_i \geq \frac{z}{\abs{z-\sigma_i}}$ ($i<n$).
		For $i<n$ we can set $C_i=1$ whenever $\Realteil \sigma_i \leq 0$ and otherwise $C_i=\frac{\abs{\sigma_i}}{\abs{\Imaginaerteil \sigma_i}}$, which remains finite even if $\sigma_i(t) \rightarrow 0$ (the only divergent case is when $\arg\sigma_i \rightarrow 0$ and can be circumvented by a small perturbation of the direction from which we take $t\rightarrow 0$). A similar consideration for $C_n$ finally shows that the constant $C$ can be chosen uniformly in $t$.
}%
	to show $\lim_{t\rightarrow 0} \Hyper{w}(\infty) = \Hyper{\lim_{t\rightarrow 0} w}(\infty)$. Since this is a finite limit, it coincides with $\AnaReg{t}{0} \Hyper{w}(\infty)$. Now for the given word $w = \letter{\sigma_1}\!\!\cdots\letter{\sigma_n}$, by $\sigma_n \neq 0$ all contributions to $\WordReg{0}{\infty}(w) = \WordReg{}{\infty}(w)$ in \eqref{eq:def-reginf-word} are of the form just discussed. Therefore we can conclude with
	\begin{equation*}
		\AnaReg{t}{0} \AnaReg{z}{\infty} \Hyper{w}(z)
		\urel{\eqref{eq:reginf-from-word}}
		\AnaReg{t}{0} \Hyper{\WordReg{0}{\infty} (w)} (\infty)
		=
		\Hyper{\lim\limits_{t\rightarrow 0} \WordReg{0}{\infty} (w)}(\infty)
		=
		\Hyper{\WordReg{0}{\infty} \lim\limits_{t \rightarrow 0} w}(\infty)
		\urel{\eqref{eq:reginf-from-word}}
		\AnaReg{z}{\infty} \Hyper{\lim\limits_{t \rightarrow 0} w}(z),
	\end{equation*}
	where $\WordReg{0}{\infty}$ and $\lim_{t \rightarrow 0}$ commute because the latter just substitutes individual letters and keeps the last letter (which is either $\sigma_n$ or $-1$) non-zero.
\end{proof}
\begin{example}\label{ex:reglim-reginf-simple}%
	In example~\ref{ex:reginf-as-hlog}, the constant is 
	$C = \AnaReg{\sigma}{0} \AnaReg{z}{\infty} \Hyper{\letter{-\sigma}\letter{-1}}(z) 
		 = \AnaReg{z}{\infty} \Hyper{\letter{0}\letter{-1}}(z)
		 = \mzv{2}$
	as determined in example~\ref{ex:dilog-reginf}.
\end{example}
But in cases when $\lim_{t\rightarrow 0} \sigma_n(t) = 0$, the limiting integrand contains $\frac{\dd z_n}{z_n}$ and is not integrable, so the lemma does not apply. This naive method also fails when some letter $\lim_{t\rightarrow 0} \sigma_k(t) = \infty$ diverges. We may avoid both of these situations with the help of
\begin{lemma}
	\label{lem:reglim-reginf-rescaling} %
	For $\alpha \in \Z$ and any word $w = \letter{\sigma_1}\!\!\cdots\letter{\sigma_n}$ with rational letters $\sigma_k \in \C(t)$, let $\sigma_k' (t) \defas t^{\alpha} \cdot \sigma_k(t)$ denote rescaled letters and write $w' \defas \letter{\sigma_1'}\!\cdots\letter{\sigma_n'}$. Then
	\begin{equation}
		\AnaReg{t}{0} \AnaReg{z}{\infty} \Hyper{w}(z)
		=
		\AnaReg{t}{0} \AnaReg{z}{\infty} \Hyper{w'}(z)
		.
		\label{eq:reglim-reginf-rescaling} %
	\end{equation}
\end{lemma}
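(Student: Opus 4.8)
The statement is a scaling invariance, so I would realize the passage $\sigma_k \mapsto t^{\alpha}\sigma_k$ as a change of variables $z \mapsto z/\lambda$ with $\lambda \defas t^{\alpha}$, in the spirit of lemma~\ref{lemma:Moebius-transformation}, and then show that all the discrepancies it produces are explicit powers of $\log\lambda = \alpha\log t$, which are killed by the outer regularized limit $\AnaReg{t}{0}$. The first step is to write $w$ in its shuffle-regularized form $w = \sum_k \letter{0}^k \shuffle w_k$ from definition~\ref{def:Hlog}, with each $w_k$ not ending in $\letter{0}$. Since the letter-relabelling $\letter{\sigma}\mapsto\letter{\lambda\sigma}$ fixes $\letter{0}$ (because $\lambda\neq0$) and is an algebra homomorphism for both concatenation and $\shuffle$, the rescaled word decomposes compatibly as $w' = \sum_k \letter{0}^k \shuffle w_k'$, where $w_k'$ is $w_k$ with rescaled letters and again does not end in $\letter{0}$. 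For such words the substitution $z_i = \lambda u_i$ in the iterated integral \eqref{eq:def-Hlog-from-to} gives cleanly
\begin{equation*}
	\Hyper{w_k'}(z) = \Hyper{w_k}(z/\lambda),
\end{equation*}
the interchange being legitimate because $w_k$ convergent at the lower endpoint $0$ by lemma~\ref{lemma:hlog-zsum}. Isolating the $\letter{0}^k$ factors as the explicit prefactors $\log^k(z)/k!$ is exactly what sidesteps the interaction of the scaling with the tangential basepoint at $0$. Thus $\Hyper{w'}(z) = \sum_k \frac{\log^k z}{k!}\,\Hyper{w_k}(z/\lambda)$.

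\textbf{The inner limit $\AnaReg{z}{\infty}$.} Next I would expand both hyperlogarithms at $z\rightarrow\infty$ using proposition~\ref{prop:hlog-divergences}: write $\Hyper{w_k}(u) = \sum_j \log^j(u)\, h_{k,j}(u)$ with each $h_{k,j}$ analytic at $u\rightarrow\infty$. Setting $u = z/\lambda$, so that $\log u = \log z - \log\lambda$ and $h_{k,j}(z/\lambda)\rightarrow h_{k,j}(\infty)$ in the constant-in-$1/z$ sense, I collect the $\log^0(z)$, $(1/z)^0$ term according to \eqref{eq:def-reglim}. For the original word only $k=j=0$ survives, giving $\AnaReg{z}{\infty}\Hyper{w}(z) = h_{0,0}(\infty)$, while for $w'$ the binomial expansion of $(\log z - \log\lambda)^j$ forces $k=0$ and contributes its constant part $(-\log\lambda)^j$, so that $\AnaReg{z}{\infty}\Hyper{w'}(z) = \sum_j (-\log\lambda)^j h_{0,j}(\infty)$. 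Here the $h_{0,j}(\infty)$ are the coefficients of $\log^j z$ at infinity of $\Hyper{\WordReg{0}{}(w)}$, depending on $t$ through the letters. Subtracting, the whole difference is
\begin{equation*}
	\AnaReg{z}{\infty}\Hyper{w'}(z) - \AnaReg{z}{\infty}\Hyper{w}(z) = \sum_{j\geq 1} (-\alpha\log t)^j\, h_{0,j}(\infty).
\end{equation*}

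\textbf{The outer limit $\AnaReg{t}{0}$.} Finally I would invoke proposition~\ref{prop:reginf-as-hlog} (and the expansion \eqref{eq:reginf-expansion-convolution}, which exhibits every $\log^j z$-coefficient at infinity as a $\Q$-combination of regularized limits at infinity) to conclude that each $h_{0,j}(\infty)$ is a hyperlogarithm in $t$, hence a polynomial in $\log t$ with Laurent-series coefficients involving only \emph{non-negative} powers of $\log t$. Therefore $(\log t)^j\, h_{0,j}(\infty)$ carries only powers $\log^{\geq j} t$, so for $j\geq 1$ it has no $\log^0 t$ contribution and is annihilated by $\AnaReg{t}{0}$ via \eqref{eq:def-reglim}. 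Hence $\AnaReg{t}{0}$ of the displayed difference vanishes, which is precisely the claimed identity \eqref{eq:reglim-reginf-rescaling}.

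\textbf{Main obstacle.} The genuinely delicate point is not the algebra of the shift but controlling the higher coefficients $h_{0,j}(\infty)$ for $j\geq 1$: one must be sure they are log-Laurent expandable at $t\rightarrow 0$ with no negative powers of $\log t$, for otherwise the explicit factors $(\log t)^j$ could conspire with a $\log^{-j} t$ singularity to leave a constant term. This is exactly what proposition~\ref{prop:reginf-as-hlog}, applied not only to $h_{0,0}$ but to the entire expansion at infinity, guarantees; the remaining work is the careful bookkeeping of the two regularizations, which is routine once the scaling discrepancy has been pinned to powers of $\log\lambda$.
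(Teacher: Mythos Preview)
Your proof is correct and follows essentially the same approach as the paper: realize the rescaling as the change of variables $z\mapsto z/\lambda$ in the iterated integral, compare the $\log z$-expansions at infinity, and observe that the discrepancy is a polynomial in $\log t$ (without constant term) which $\AnaReg{t}{0}$ kills. The only cosmetic differences are that the paper reduces to $\sigma_n\neq 0$ in one sentence rather than via the full shuffle decomposition $w=\sum_k\letter{0}^k\shuffle w_k$, and that it swaps the roles of $w$ and $w'$ (expressing $\AnaReg{z}{\infty}\Hyper{w}$ in terms of the $f_{w',\infty}^{(k)}(\infty)$ instead of the other way around); your explicit invocation of proposition~\ref{prop:reginf-as-hlog} for the higher coefficients is a point the paper leaves implicit.
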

\begin{proof}
	Since $\AnaReg{z}{\infty} \Hyper{w}(z) = \AnaReg{z}{\infty} \Hyper{\WordReg{0}{}(w)}(z)$ we can restrict to words with $\letter{\sigma_n} \neq 0$. Then $\Hyper{w}(z)$ denotes an iterated integral \eqref{eq:def-II}, and the change $f(z) = z \, t^{\alpha}$ of variables proves $\Hyper{w}(z) = \Hyper{w'}\left(z \, t^{\alpha} \right)$.
	With regularizations \eqref{eq:hlog-divergences} at $\tau = \infty$ this reads
	\begin{equation*}
		\sum_k \log^k (z) \cdot f_{w,\infty}^{(k)}(z)
		=
		\sum_k \log^k \left( z\, t^{\alpha} \right) \cdot f_{w',\infty}^{(k)}\left( z\, t^{\alpha} \right).
	\end{equation*}
	As $f_{w',\infty}^{(k)}(z\, t^{\alpha})$ is analytic at $z \rightarrow \infty$, we can compare the coefficients of $\log(z)$ to deduce
	\begin{equation*}
		f_{w,\infty}^{(k)}(z)
		=
		\sum_{j \geq k} \binom{j}{k} \log^{j-k} \left( t^{\alpha} \right) \cdot f_{w',\infty}^{(j)} \left( z \, t^{\alpha} \right)
	\end{equation*}
	and in particular $\AnaReg{z}{\infty} \Hyper{w}(z) = f_{w,\infty}^{(0)}(\infty) = \sum_k (\alpha \log t)^k \cdot f_{w',\infty}^{(k)}(\infty)$. Acting with $\AnaReg{t}{0}$ annihilates every $\log(t)$ and leaves behind only 
	$\AnaReg{t}{0} f_{w',\infty}^{(0)}(\infty)
	$.
\end{proof}
\begin{definition}
	\label{def:letter-leading-coefficient} %
	For any rational $0 \neq \sigma(t) \in \C(t)$, the Laurent series
	\begin{equation}
		\sigma(t)
		=
		\sum_{n=N}^{\infty} a_n t^n
		\quad\text{at}\quad
		t \rightarrow 0
		\quad\text{with}\quad
		a_N \neq 0
		\label{eq:letter-laurent-series} %
	\end{equation}%
	defines the \emph{vanishing degree} $\deg_t(\sigma) \defas N \in \Z$ and a \emph{leading coefficient} $\leadCoeff_t(\sigma) \defas a_N$. 
	We set $\deg_t(0) \defas \infty$ and
	$\deg_t(\letter{\sigma_1}\!\!\cdots\letter{\sigma_n})
		\defas \min \setexp{\deg_t(\sigma_i)}{1 \leq i \leq n}$
	for any word.
\end{definition}%
\nomenclature[deg]{$\deg_t$}{vanishing degree of Laurent series $\sigma(t)$, minimum vanishing degree over all letters for words $w$, definition~\ref{def:letter-leading-coefficient}\nomrefpage}%
\nomenclature[lead t s]{$\leadCoeff_t(\sigma)$}{leading coefficient of $\sigma(t)$ as $t\rightarrow 0$, definition~\ref{def:letter-leading-coefficient}\nomrefpage}%
Whenever the final letter $\sigma_n \neq 0$ of a word $w = \letter{\sigma_1}\!\!\cdots\letter{\sigma_n}$ is of smallest vanishing degree $\deg_t(\sigma_n) = \deg_t(w)$, rescaling $\sigma_i'(t) \defas \sigma_i(t) \cdot t^{-\deg_t(w)}$ ensures finiteness of all $\sigma_k'(t)$ at $t \rightarrow 0$ and furthermore that $\sigma_n'(t) \rightarrow \leadCoeff_t(\sigma_n) \neq 0$. Hence in this case, lemmata \ref{lem:reglim-reginf-rescaling} and \ref{lem:reginf-reglim-finite} together prove that
\begin{equation}
	\AnaReg{t}{0} \AnaReg{z}{\infty} \Hyper{w}(z)
	=
	\AnaReg{z}{\infty} \Hyper{\ReglimWord{t}{0} (w) }(z)
	\quad\text{for}\quad
	\ReglimWord{t}{0} (w)
	\defas
	\lim_{t\rightarrow 0} \letter{\sigma_1'}\!\cdots\letter{\sigma_n'}.
	\label{eq:reglim-word-end-minimal}%
\end{equation}%
\nomenclature[reg t 0 w]{$\displaystyle\ReglimWord{t}{0}(w)$}{regularized limit of words, \eqref{eq:reglim-word-end-minimal} and \eqref{eq:def-reglim-word}\nomrefpage}%
Beware that we use three different notions of regularized limits:
\begin{itemize}
	\item The upper case operators \eqref{eq:def-reglim} like $\AnaReg{z}{\tau}$ act on functions, while
	\item lower case is reserved for combinatorial actions on words. We distinguish shuffle regularization \eqref{eq:def-shuffle-regularization} ($\WordReg{\sigma}{\tau}$) from limits of letters \eqref{eq:def-reglim-word} ($\ReglimWord{t}{0}$).
\end{itemize}
\begin{example}
	Rescaling $w = \letter{-1}\letter{-1/t}$ by $t$ gives $w' = \letter{-t}\letter{-1}$ and $\ReglimWord{t}{0} (w) = \letter{0}\letter{-1}$. So for the regularized limit, we get the same result as in example~\ref{ex:reglim-reginf-simple}:
	\begin{equation*}
		\AnaReg{t}{0} \AnaReg{z}{\infty} \Hyper{\letter{-1}\letter{-1/t}}(z)
		=
		\AnaReg{z}{\infty} \Hyper{\letter{0}\letter{-1}}(z)
		=
		\mzv{2}.
	\end{equation*}
\end{example}
The case when $\deg_t(\sigma_n) > \deg_t(w)$ can be dealt with using shuffle algebra. Let
\begin{equation*}
	k
	\defas
	\max \setexp{i}{\deg_t(\sigma_i) = \deg_t(w)}
	< n
\end{equation*}
denote the position of the last letter in $w$ with minimal vanishing degree. Using \eqref{eq:shuffle-decomposition-formula-tail} we can rewrite $w = \sum_i w_i \shuffle a_i$ such that each $w_i$ ends in $\letter{\sigma_k}$ and $a_i$ is a suffix of $\letter{\sigma_{k+1}}\!\!\cdots\letter{\sigma_n}$. Note that $\deg_t(w_i) = \deg_t(\sigma_k)$ and since $\abs{a_i} \leq n - k < n$ is shorter than $w$, a recursion of this process allows to write $w$ as a (not necessarily unique) finite sum
\begin{equation}
	w 
	= \sum_i \left( w_{i,1} \shuffle \ldots \shuffle w_{i,r_i} \right)
	\quad\text{such that}\quad
	\deg_t(\sigma_{i,j}) = \deg_t(w_{i,j}),
	\label{eq:word-rescaling-shuffle-decomposition} %
\end{equation}
by which we mean that all $w_{i,j} \in T(\Sigma)$ that occur have a final letter $\sigma_{i,j}$ of minimal vanishing degree.\footnote{The existence of such a decomposition also follows from $T(\Sigma) \isomorph \Q[\Lyndons{\Sigma}]$ if we extend the vanishing degree to a total order on $\Sigma$ such that the corresponding Lyndon words all end in a letter of minimal vanishing degree.} This extends \eqref{eq:reglim-word-end-minimal} to a multiplicative map
	\begin{equation}
		\ReglimWord{t}{0}\colon T(\Sigma) \longrightarrow T(\leadCoeff_t(\Sigma)),
		\quad
		w \mapsto \sum_i \ReglimWord{t}{0}(w_{i,1}) \shuffle \ldots \shuffle \ReglimWord{t}{0}(w_{i,r_i})
		\label{eq:def-reglim-word}%
	\end{equation}
	on all of $T(\Sigma)$, where we set $\ReglimWord{t}{0}(\letter{0}^n) \defas 0$ for any $n \in \N$ (we could also choose $\ReglimWord{t}{0}(\letter{0}^n) = \letter{0}^n$ here because we always apply $\WordReg{0}{\infty}$ afterwards). It is well-defined (see remark~\ref{rem:reglim-word-uniqueness}) and its image only contains words with letters in the alphabet
	\begin{equation}
		\leadCoeff_t(\Sigma)
		\defas
		\set{0} \cupdot \setexp{\leadCoeff_t(\sigma)}{0 \neq \sigma \in \Sigma}
		\label{eq:def-leading-coefficient-alphabet} %
	\end{equation}%
\nomenclature[lead t Sigma]{$\leadCoeff_t(\Sigma)$}{leading coefficients of the letters $\sigma(t) \in \Sigma$, equation~\eqref{eq:def-leading-coefficient-alphabet}\nomrefpage}%
of leading coefficients. Because all three operators $\Hyper{\cdot}(z)$, $\AnaReg{z}{\infty}(\cdot)$ and $\AnaReg{t}{0}(\cdot)$ are multiplicative, their application to \eqref{eq:word-rescaling-shuffle-decomposition} proves
\begin{proposition}
	\label{prop:reglim-reginf-from-word} %
	If $\leadCoeff_t(\Sigma) \subset \C \setminus (0,\infty)$, then any word $w \in T(\Sigma)$ fulfils
	\begin{equation}
		\AnaReg{t}{0} \AnaReg{z}{\infty} \Hyper{w}(z)
		=
		\AnaReg{z}{\infty} \Hyper{\ReglimWord{t}{0}(w)}(z).
		\label{eq:reglim-reginf-from-word} %
	\end{equation}
\end{proposition}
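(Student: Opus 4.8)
The plan is to reduce the claim for a general word $w$ to the special case already established in \eqref{eq:reglim-word-end-minimal}, where the final letter carries the minimal vanishing degree, and then to propagate this through the multiplicativity of all three operators involved. The crucial input is the shuffle decomposition \eqref{eq:word-rescaling-shuffle-decomposition}, which expresses $w = \sum_i (w_{i,1} \shuffle \cdots \shuffle w_{i,r_i})$ as a finite sum of shuffle products of words $w_{i,j}$ each of whose final letter $\sigma_{i,j}$ realizes the minimal vanishing degree $\deg_t(w_{i,j})$. Since \eqref{eq:def-reglim-word} defines $\ReglimWord{t}{0}$ precisely by applying the base case factor-wise, the task becomes to interchange $\AnaReg{t}{0}\AnaReg{z}{\infty}\Hyper{\cdot}(z)$ with the shuffle product.

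First I would apply $\Hyper{\cdot}(z)$ to \eqref{eq:word-rescaling-shuffle-decomposition}; by lemma~\ref{lemma:Hlog-character} this turns each shuffle product into an ordinary product, so $\Hyper{w}(z) = \sum_i \prod_{j=1}^{r_i} \Hyper{w_{i,j}}(z)$. Next, because every hyperlogarithm has at worst logarithmic singularities at $z \rightarrow \infty$ (proposition~\ref{prop:hlog-divergences}), the regularized limit $\AnaReg{z}{\infty}$ is multiplicative by \eqref{eq:reglim-product} and gives $\AnaReg{z}{\infty}\Hyper{w}(z) = \sum_i \prod_j \AnaReg{z}{\infty}\Hyper{w_{i,j}}(z)$. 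I would then take $\AnaReg{t}{0}$ and, again using multiplicativity, distribute it over each product. On every factor the final letter has minimal vanishing degree, so the base case \eqref{eq:reglim-word-end-minimal} applies and yields $\AnaReg{t}{0}\AnaReg{z}{\infty}\Hyper{w_{i,j}}(z) = \AnaReg{z}{\infty}\Hyper{\ReglimWord{t}{0}(w_{i,j})}(z)$. Reassembling with the multiplicativity of $\AnaReg{z}{\infty}$ and $\Hyper{\cdot}$, and recalling that \eqref{eq:def-reglim-word} sets $\ReglimWord{t}{0}(w) = \sum_i \ReglimWord{t}{0}(w_{i,1}) \shuffle \cdots \shuffle \ReglimWord{t}{0}(w_{i,r_i})$, then produces exactly $\AnaReg{z}{\infty}\Hyper{\ReglimWord{t}{0}(w)}(z)$.

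The step that requires care is the multiplicativity of $\AnaReg{t}{0}$ on the functions $\AnaReg{z}{\infty}\Hyper{w_{i,j}}(z)$: the identity \eqref{eq:reglim-product} needs each factor to have at most logarithmic singularities (that is, $M \geq 0$) as $t \rightarrow 0$. I would supply this by invoking proposition~\ref{prop:reginf-as-hlog}, which writes $\AnaReg{z}{\infty}\Hyper{w_{i,j}}(z)$ as a finite combination of genuine hyperlogarithms $\Hyper{u}(t)$ in the variable $t$ over the alphabet $\Sigma_t$; by proposition~\ref{prop:hlog-divergences} these can only diverge logarithmically at $t \rightarrow 0$, so indeed $M = 0$ and \eqref{eq:reglim-product} is legitimate. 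The hypothesis $\leadCoeff_t(\Sigma) \subset \C \setminus (0,\infty)$ enters only through the base case \eqref{eq:reglim-word-end-minimal}, where it guarantees that after rescaling by $t^{-\deg_t(w_{i,j})}$ the limiting letters again lie in $\C \setminus (0,\infty)$, as required by lemma~\ref{lem:reginf-reglim-finite}. Finally I would note that the well-definedness of $\ReglimWord{t}{0}$, i.e.\ independence of the chosen decomposition \eqref{eq:word-rescaling-shuffle-decomposition}, is exactly the content of remark~\ref{rem:reglim-word-uniqueness}, so no ambiguity remains and the displayed identity follows.
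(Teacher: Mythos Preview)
Your proof is correct and follows exactly the same approach as the paper: apply the three operators $\Hyper{\cdot}(z)$, $\AnaReg{z}{\infty}$ and $\AnaReg{t}{0}$ to the shuffle decomposition \eqref{eq:word-rescaling-shuffle-decomposition}, use their multiplicativity, and invoke the base case \eqref{eq:reglim-word-end-minimal} on each factor. In fact you spell out more carefully than the paper does why $\AnaReg{t}{0}$ distributes over the products (via proposition~\ref{prop:reginf-as-hlog} ensuring only logarithmic singularities), which the paper's one-line proof simply takes for granted.
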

\begin{example}\label{ex:vanishing-degree-decomposition}
	For the word $w = \letter{a}\letter{bt}\letter{ct^2}$, one decomposition \eqref{eq:word-rescaling-shuffle-decomposition} is
	\begin{align*}
		w
		&= \letter{a} \shuffle \letter{bt} \letter{ct^2} - \letter{bt}\letter{a}\letter{ct^2} - \letter{bt} \letter{ct^2} \letter{a}
		\\
		&= \letter{a} \shuffle \left[ \letter{bt} \shuffle \letter{ct^2} - \letter{ct^2} \letter{bt} \right] - \letter{bt}\letter{a} \shuffle \letter{ct^2} + \letter{bt}\letter{ct^2}\letter{a} + \letter{ct^2}\letter{bt} - \letter{a} \letter{bt}\letter{ct^2}\letter{a}
		\\
		&= \letter{a} \shuffle \letter{bt} \shuffle \letter{ct^2}
			- \letter{a} \shuffle \letter{ct^2} \letter{bt}
			- \letter{bt}\letter{a} \shuffle \letter{ct^2}
			+ \letter{ct^2}\letter{bt}\letter{a}
	\end{align*}
	such that $\ReglimWord{t}{0} (w) = \letter{a}\shuffle \letter{b} \shuffle \letter{c} - \letter{a} \shuffle \letter{0} \letter{b} - \letter{c} \shuffle \letter{0}\letter{a} + \letter{0}\letter{0}\letter{a}$.
\end{example}
\begin{example}\label{ex:reg0-reginf-MZV}
	For one non-zero letter $-t$ in $\Sigma=\set{0,-t}$, we find $\leadCoeff_t(\Sigma) = \set{0,-1}$ and obtain in the limit $t \rightarrow 0$ the multiple zeta values
	\begin{equation*}
		\AnaReg{t}{0} \AnaReg{z}{\infty} \HlogAlgebra(\set{0,-t})
		= \AnaReg{z}{\infty} \HlogAlgebra(\set{0,-1})
		\urel{\eqref{eq:reginf-gives-mzv}}
			\MZV.
	\end{equation*}
\end{example}
\begin{remark}\label{rem:reglim-word-uniqueness}
	It is not immediately clear that $\ReglimWord{t}{0}(w)$ is fixed uniquely, because there can be different decompositions \eqref{eq:word-rescaling-shuffle-decomposition} (if several letters have the same vanishing degree) and the substitution of letters $\letter{\sigma}$ performed by $\ReglimWord{t}{0}$ depends on the word $w$ in which they appear:
$ \letter{\sigma} \mapsto \letter{\leadCoeff_t(\sigma)}$ if $\deg_t(\sigma) = \deg_t(w)$ and $\letter{\sigma} \mapsto \letter{0}$ otherwise. So we distinguish different degrees with
\begin{align}
	\Sigma^{(d)}
	&\defas \setexp{\sigma \in \Sigma}{\deg_t(\sigma) = d}
	\quad\text{and let}
	\label{eq:letters-by-vanishing-degree}\\
	\alg^{(d)}
	&\defas \Q \oplus T\left( \bigcupdot_{k \geq d} \Sigma^{(k)}\right) \Sigma^{(d)}
	\subset T(\Sigma)
	\nonumber%
\end{align}
denote the sub algebra generated (and spanned) by all words $w$ with final letter of minimal vanishing degree $d = \deg_{t}(w)$. Since $w_{i,j} \in \alg^{(\deg_t(w_{i,j}))}$ for every factor, any decomposition \eqref{eq:word-rescaling-shuffle-decomposition} can be regrouped according to
\begin{equation*}
	T(\Sigma)
	\isomorph
	\Q[\letter{0}] \tp
	\alg^{(D)} \tp \alg^{(D-1)} \tp \cdots \tp \alg^{(d)}
	\quad\text{for}\quad
	D = \max_{0\neq \sigma \in \Sigma} \deg_t(\sigma),\ 
	d = \min_{\sigma \in \Sigma} \deg_t(\sigma)
\end{equation*}
which follows from repeated application of lemma~\ref{lemma:shuffle-decomposition}. On each $\alg^{(k)}$, the regularization $\ReglimWord{t}{0}$ is a simple substitution:
$\letter{\sigma} \mapsto \letter{\leadCoeff_t(\sigma)}$ if $\sigma \in \Sigma^{(k)}$ and $\letter{\sigma} \mapsto \letter{0}$ otherwise, independent of the word in which this letter appears. This proves that $\ReglimWord{t}{0}$ is indeed well-defined, multiplicative and independent of any choices.
\end{remark}
\begin{corollary}
	\label{corollary:reglim-degree-decouple}%
	Suppose $\Sigma\setminus \set{0} = \Sigma^{(D)} \cupdot \ldots \cupdot \Sigma^{(d+1)} \cupdot \Sigma^{(d)} \subset \C(t)$ denotes an alphabet of rational letters, partitioned according to the vanishing degree \eqref{eq:letters-by-vanishing-degree}. Then the regularized limit $t\rightarrow 0$ takes values in the subalgebra
	\begin{equation}
	\ReglimWord{t}{0} T(\Sigma)
	= \prod_{k=d}^{D} T\left( \leadCoeff_t\big( \Sigma^{(k)} \big) \right)
	\subseteq T\big( \leadCoeff_t(\Sigma) \big)
	.%
	\label{eq:reglim-degree-decouple}%
\end{equation}
	Consequently, if $\leadCoeff_t(\Sigma) \cap (0,\infty) = \emptyset$, the corresponding hyperlogarithms decompose as
	\begin{equation}
		\AnaReg{t}{0} \AnaReg{z}{\infty} \HlogAlgebra(\Sigma)(z)
		= \prod_{k=d}^{D} \AnaReg{z}{\infty} \HlogAlgebra\left( \leadCoeff_t\big( \Sigma^{(k)} \big) \right)(z).
		\label{eq:reglim-degree-decouple-real}%
	\end{equation}
\end{corollary}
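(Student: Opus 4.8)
The plan is to reduce the statement entirely to the combinatorial content of the already-established proposition~\ref{prop:reglim-reginf-from-word} together with the structural decomposition recorded in remark~\ref{rem:reglim-word-uniqueness}. The two displayed equations \eqref{eq:reglim-degree-decouple} and \eqref{eq:reglim-degree-decouple-real} are of different natures: the first is a purely algebraic identity about the image of the combinatorial map $\ReglimWord{t}{0}$, while the second is the corresponding analytic statement about hyperlogarithms. I would prove the algebraic identity first and then transport it across the map $w \mapsto \AnaReg{z}{\infty}\Hyper{w}(z)$ using proposition~\ref{prop:reglim-reginf-from-word}.

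First I would establish \eqref{eq:reglim-degree-decouple}. Remark~\ref{rem:reglim-word-uniqueness} already supplies the key isomorphism
\begin{equation*}
	T(\Sigma) \isomorph \Q[\letter{0}] \tp \alg^{(D)} \tp \cdots \tp \alg^{(d)},
\end{equation*}
obtained by iterating lemma~\ref{lemma:shuffle-decomposition}, and the crucial observation that on each factor $\alg^{(k)}$ the map $\ReglimWord{t}{0}$ acts as the plain substitution $\letter{\sigma} \mapsto \letter{\leadCoeff_t(\sigma)}$ for $\sigma \in \Sigma^{(k)}$ and $\letter{\sigma}\mapsto \letter{0}$ otherwise. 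From this I read off that $\ReglimWord{t}{0}$ sends $\alg^{(k)}$ into $T\big(\leadCoeff_t(\Sigma^{(k)})\big)$, because a word with final letter of vanishing degree exactly $k$ keeps precisely its degree-$k$ letters (mapped to their leading coefficients) while all strictly-higher-degree letters collapse to $\letter{0}$; the multiplicativity of $\ReglimWord{t}{0}$ (also from remark~\ref{rem:reglim-word-uniqueness}) then turns the tensor-product decomposition of $T(\Sigma)$ into the product $\prod_{k=d}^{D} T\big(\leadCoeff_t(\Sigma^{(k)})\big)$. The factor $\Q[\letter{0}]$ is annihilated since $\ReglimWord{t}{0}(\letter{0}^n) = 0$, so it contributes nothing beyond the constant. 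The inclusion into $T\big(\leadCoeff_t(\Sigma)\big)$ is immediate because each $\leadCoeff_t(\Sigma^{(k)}) \subseteq \leadCoeff_t(\Sigma)$.

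Next I would derive \eqref{eq:reglim-degree-decouple-real} from \eqref{eq:reglim-degree-decouple}. Under the hypothesis $\leadCoeff_t(\Sigma)\cap(0,\infty) = \emptyset$, proposition~\ref{prop:reglim-reginf-from-word} applies and gives $\AnaReg{t}{0}\AnaReg{z}{\infty}\Hyper{w}(z) = \AnaReg{z}{\infty}\Hyper{\ReglimWord{t}{0}(w)}(z)$ for every $w$. Applying the linear span $\Hyper{\cdot}$ and then $\AnaReg{z}{\infty}$ to both sides of \eqref{eq:reglim-degree-decouple}, and using that all three operators $\Hyper{\cdot}(z)$, $\AnaReg{z}{\infty}$ and $\AnaReg{t}{0}$ are morphisms of algebras (the multiplicativity of $\AnaReg{z}{\infty}$ on hyperlogarithms is \eqref{eq:reglim-product} together with lemma~\ref{lemma:Hlog-character}), the product of subalgebras on the right of \eqref{eq:reglim-degree-decouple} is carried to the product of the corresponding hyperlogarithm algebras $\prod_{k=d}^{D}\AnaReg{z}{\infty}\HlogAlgebra\big(\leadCoeff_t(\Sigma^{(k)})\big)(z)$. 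Here each $\leadCoeff_t(\Sigma^{(k)}) \subseteq \leadCoeff_t(\Sigma)$ again avoids $(0,\infty)$, so $\AnaReg{z}{\infty}$ is genuinely defined and multiplicative on each factor.

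The main obstacle I anticipate is purely book-keeping rather than conceptual: one must verify that the factorization of the \emph{image} really is a direct product and not merely a spanning set with relations, i.e.\ that the regularization respects the grading by vanishing degree without mixing the factors. This is exactly what the substitution description in remark~\ref{rem:reglim-word-uniqueness} guarantees—because on $\alg^{(k)}$ the action of $\ReglimWord{t}{0}$ is a word-independent letter substitution—so the delicate non-uniqueness of the decomposition \eqref{eq:word-rescaling-shuffle-decomposition} does not propagate to the value. I would make this point explicit, since it is the only place where the well-definedness argument of the preceding remark is actually being used, and then the two equations follow with no further computation.
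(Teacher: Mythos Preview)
Your proposal is correct and follows essentially the same route the paper takes: the corollary is stated immediately after remark~\ref{rem:reglim-word-uniqueness} without a separate proof precisely because it is meant to be read off from that remark's tensor decomposition $T(\Sigma)\isomorph\Q[\letter{0}]\tp\alg^{(D)}\tp\cdots\tp\alg^{(d)}$ and the observation that on each $\alg^{(k)}$ the map $\ReglimWord{t}{0}$ is a word-independent letter substitution into $T\big(\leadCoeff_t(\Sigma^{(k)})\big)$, while the analytic statement \eqref{eq:reglim-degree-decouple-real} then follows by transporting through proposition~\ref{prop:reglim-reginf-from-word} and the multiplicativity of $\Hyper{\cdot}(z)$ and $\AnaReg{z}{\infty}$.
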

\begin{example}
	In example~\ref{ex:vanishing-degree-decomposition}, all letters in $\Sigma = \set{0,a,bt,ct^2}$ have different vanishing degrees. We saw explicitly that $\ReglimWord{t}{0}(w)$ is a linear combination of shuffle products of words over the alphabets $\leadCoeff_t(\Sigma^{(0)}) = \set{0,a}$, $\leadCoeff_t(\Sigma^{(1)}) = \set{0,b}$ and $\leadCoeff_t(\Sigma^{(2)}) = \set{0,c}$. So no matter which $w \in T(\Sigma)$ we choose, the regularized limit is a polynomial in logarithms and multiple zeta values (see examples~\ref{ex:reginf-oneletter-log} and \ref{ex:reg0-reginf-MZV}):
	\begin{equation*}
		\AnaReg{t}{0} \AnaReg{z}{\infty} \Hyper{w}(z)
		\in \prod_{-\sigma \in \set{a,b,c}}  \AnaReg{z}{\infty} \HlogAlgebra\left( \set{0,-\sigma} \right)
		= {\MZV}[\log(-a),\log(-b),\log(-c)].
	\end{equation*}
	When $a$, $b$ and $c$ are negative, these are real numbers. But note that for positive values, imaginary parts appear as we discuss in the next section.
\end{example}

\subsection{Analytic continuation and singularities on the path}
As homotopy invariant functionals, hyperlogarithms $\Hyper{w}(z)$ with $w \in T(\Sigma)$ are multivalued on $\C \setminus \Sigma$ and depend on the homotopy class of the path $\gamma\colon (0,1) \longrightarrow \C \setminus \Sigma$ of integration. It must be specified to give the limits
\begin{equation*}
	\AnaReg{z}{\infty} \Hyper{w}(z)
	= \lim_{z \rightarrow \infty} \Hyper{\WordReg{0}{\infty}(w)}(z)
\end{equation*}
a well-defined meaning. We adopt the simplest choice of a straight path on the positive real axis $\im(\gamma) = \R_+$, represented for example by $\gamma(t)=t/(1-t)$. This definition means that the quantity $\AnaReg{z}{\infty} \Hyper{w}(z)$ is completely described by $w$ alone, which greatly simplifies its implementation on a computer.

But it also requires the absence $\Sigma \cap (0,\infty) = \emptyset$ of any positive letters, because otherwise the analytic continuation of $\Hyper{w}(z)$ past such a point on $\R_+$ is ambiguous. For precisely this reason we had to require $\leadCoeff_t(\Sigma) \cap (0,\infty) = \emptyset$ in proposition~\ref{prop:reglim-reginf-from-word}.
\begin{example}
	\label{ex:reginf-log-cut}%
	For arbitrary $0 \neq \sigma \in \C$, the logarithm $\Hyper{\letter{\sigma}}(z) = \log \left( 1-z/\sigma \right)$ is analytic on $\abs{z}< \abs{\sigma}$ and can be continued (in $z$) along all of the positive real axis $\R_+$ if and only if $\sigma \notin \R_+$. In this case,
	\begin{equation}
		\AnaReg{z}{\infty} \Hyper{\letter{\sigma}}(z) 
		=
		\AnaReg{z}{\infty} \left[ \log (z) + \log\left( \frac{1}{z} - \frac{1}{\sigma} \right) \right]
		= \log\left( -\frac{1}{\sigma} \right)
		= -\ln \abs{\sigma} - \imag \arg(-\sigma)
		\label{eq:reginf-log-cut}%
	\end{equation}
	where $\arg(z) \in (-\pi, \pi)$ denotes the branch of the argument that is analytic on $z \in \C \setminus (-\infty,0]$. In particular, $f(\sigma) \defas \AnaReg{z}{\infty} \Hyper{\letter{\sigma}}(z)$ has a branch cut along $[0,\infty)$.

	Technically, $f(\sigma)$ is a homotopy invariant functional of paths $\gamma\colon (0,1) \longrightarrow \RSphere\setminus\set{0,\sigma,\infty}$ with tangential base points $\gamma(0) = 0$, $\gamma(1) = \infty$ and $\dot{\gamma}(0) = \dot{\gamma}(1) = 1$. The straight line $\R_+$ is not of this type when $\sigma \in \R_+$ and the homotopy classes it represents in the two cases $\sigma \in \Halfplane^{\pm} \defas \setexp{z \in \C}{\pm \Imaginaerteil z > 0}$ are different, wherefore $f(\sigma)$ jumps on $\R_+$.
\end{example}%
We can still give such limits a well-defined meaning when we specify a particular path $\mu\colon [0,1) \longrightarrow \C \setminus (0,\infty)$ along which the parameter $t = \mu(s)$ approaches zero. We require
\begin{equation}
	\lim_{s\rightarrow 1} \mu(s) = 0,
	\quad\text{negative real}\quad
	\lim_{s\rightarrow 1} \dot{\mu}(s) < 0
	\quad\text{and}\quad
	\im(\mu) \subset \Halfplane^{+}
	\label{eq:regzero-parameter-path-conditions} %
\end{equation}
such that $t$ approaches $0$ from the right, tangent to the real axis but from the upper half-plane like in figure~\ref{fig:zero-limit-path}. So $t$ has a small positive imaginary part ($t \rightarrow 0 + \imag \varepsilon$) which implies that any rational letter $\sigma(t) \in \C(t)$ with positive limit $\sigma(0) \in (0,\infty)$ will acquire a non-vanishing imaginary part as well, for small enough $t$ on the path $t = \mu(s)$.

Starting at a suitably small $t = \mu(0)$, we may therefore assume that all such letters $\sigma(t)=\sigma(\mu(s)) \in \Halfplane^{\pm}$ stay tied to some half-plane, for all $s<1$, wherefore $\Sigma \cap (0,\infty) = \emptyset$ and $\AnaReg{z}{\infty} \Hyper{w}(z)$ is well-defined at each point $t=\mu(s)$.
\begin{definition}
	\label{def:alphabet-halfplane-partition} %
	Any alphabet (set of rational functions) $\Sigma \subset \C(t) \setminus (0,\infty)$ depending on a parameter $t$ and not including positive constants is partitioned uniquely into
	\begin{equation}
		\Sigma
		=
		\widetilde{\Sigma} \cupdot \Sigma^{+} \cupdot \Sigma^{-}
		\subset \C(t) \setminus (0,\infty)
		\label{eq:alphabet-halfplane-partition} %
	\end{equation}%
\nomenclature[Sigma tilde pm]{$\widetilde{\Sigma}, \Sigma^{\pm}$}{partition of letters $\sigma(t)$ by their limits at $t\rightarrow 0$, equation~\eqref{eq:alphabet-halfplane-partition}\nomrefpage}%
	such that $\leadCoeff_t(\widetilde{\Sigma}) \cap (0,\infty) = \emptyset$ and $\leadCoeff_t(\Sigma^{\pm}) \subset (0,\infty)$. Here the letters with positive leading coefficient are separated by $\Sigma^{\pm} \subset \Halfplane^{\pm}$ for sufficiently small $t = \mu(s)$ ($s$ close to $1$).
	In particular we note that whenever $\leadCoeff_t(\sigma) \in (0,\infty)$ is positive for $\sigma \in \Sigma$,
	\begin{equation}
		\deg_t(\sigma) < 0
		\Rightarrow
		\sigma \in \Sigma^{-}
		\quad\text{and}\quad
		\deg_t(\sigma) > 0
		\Rightarrow
		\sigma \in \Sigma^{+}
		\label{eq:alphabet-halfplane-nonzero-degree} %
	\end{equation}
	because $\sigma(t) = t^{\deg_t(\sigma)}\big( \leadCoeff_t(\sigma) + \bigo{t} \big)$.
	We denote the finite positive limits by
	\begin{equation}
		\Sigma^{\pm}_0
		\defas
		\setexp{\lim_{t \rightarrow 0} \sigma(t)}{\sigma \in \Sigma^{\pm} \ \text{and}\ \deg_t(\sigma) = 0}
		=
		\leadCoeff_t \left( \Sigma^{(0)} \cap \Sigma^{\pm} \right)
		\subset (0,\infty)
		.
		\label{eq:def-alphabet-finite-limits} %
	\end{equation}
\end{definition}%
\nomenclature[Sigma pm 0]{$\Sigma^{\pm}_0$}{finite positive limits ($t\rightarrow 0$) of letters $\sigma(t)\in\Sigma$, equation~\eqref{eq:def-alphabet-finite-limits}\nomrefpage}%
\begin{figure}\setcapindent{1em}%
	\begin{minipage}[b]{0.4\textwidth}\centering%
		\includegraphics[width=0.8\textwidth]{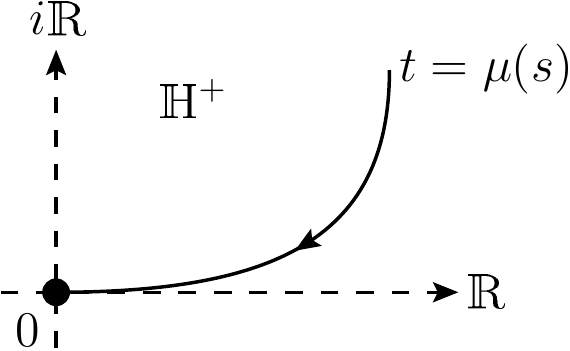}%
		\caption[Path for the limit $t\rightarrow 0$]{The limit $t \rightarrow 0$ is taken from the positive half-plane, corresponding to \eqref{eq:regzero-parameter-path-conditions}.}%
		\label{fig:zero-limit-path}%
	\end{minipage}\hfill
	\begin{minipage}[b]{0.58\textwidth}\centering%
		\includegraphics[width=\textwidth]{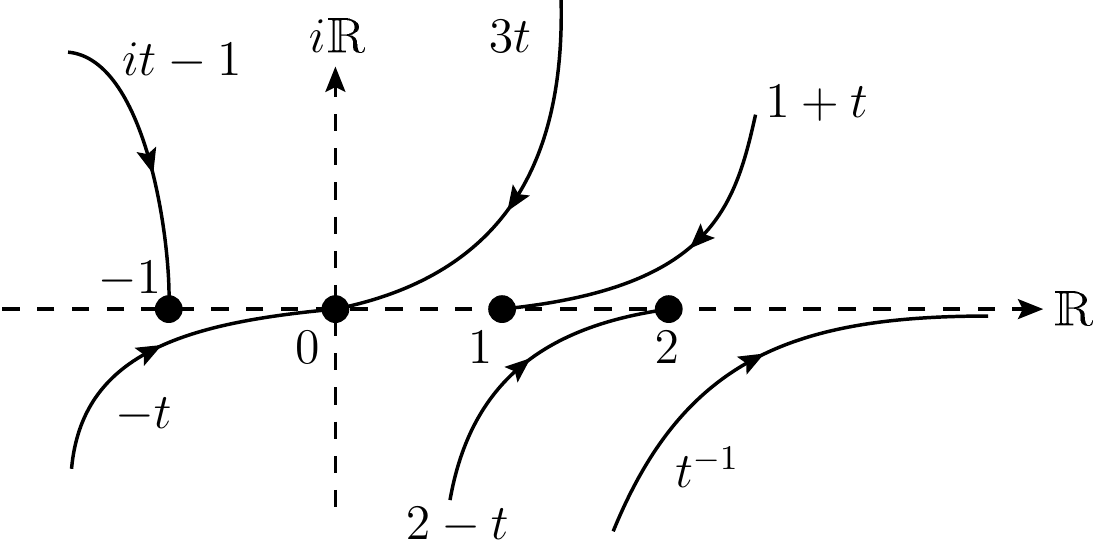}%
		\caption{Paths of the letters $\sigma(t) \in \Sigma$ of example~\ref{ex:alphabet-halfplane-partition} in the limit $t \rightarrow 0 + \imag \varepsilon$.}%
		\label{fig:alphabet-halfplane-partition} %
	\end{minipage}%
\end{figure}
\begin{example}
	\label{ex:alphabet-halfplane-partition}%
	The paths of the letters $\Sigma = \set{-1+\imag t, -t, 3t, 1+t, 2-t, t^{-1}}$ are shown in figure~\ref{fig:alphabet-halfplane-partition} for the limit $t = \mu(s) \rightarrow 0 + \imag\varepsilon$. The decomposition \eqref{eq:alphabet-halfplane-partition}  reads
	\begin{equation*}
		\widetilde{\Sigma}
		=
		\set{-1+\imag t, 0, -t}
		,\quad
		\Sigma^{+}
		=
		\set{3t,1+t}
		\quad\text{and}\quad
		\Sigma^{-}
		=
		\set{2-t, t^{-1}}
		.%
	\end{equation*}
\end{example}
Consider a word $w = \letter{\sigma_1}\!\!\cdots\letter{\sigma_n}$ with a final letter of minimal vanishing degree $d \defas \deg_t(\sigma_n) = \deg_t(w)$.
The rescaled letters $\sigma_k'(t) \defas \sigma_k(t) \cdot t^{-\deg_t(w)}$ all vanish at $t \rightarrow 0$ when $\deg_t(\sigma_k)>\deg_t(w)$, but the surviving contributions
\begin{equation*}
	\lim_{t \rightarrow 0} \sigma_k'(t)
	=
	\leadCoeff_t(\sigma_k)
	\neq 0
	\quad\text{of the letters}\quad
	\sigma_k
	\in
	\Sigma^{(d)}
	\defas
	\setexp{\sigma \in \Sigma}{\deg_t(\sigma)  = d}
\end{equation*}
can introduce positive letters (when $\Sigma^{\pm} \cap \Sigma^{(d)} \neq \emptyset$) into the word $w' = \ReglimWord{t}{0}(w)$ which prohibit a direct application of lemma~\ref{prop:reglim-reginf-from-word}.
Instead we can exploit homotopy invariance to deform the straight line $\R_+$ continuously (without crossing any letters) into a path $\gamma$ that avoids all letters $\leadCoeff_t(\Sigma)$ such that
\begin{figure}
	\centering
	\includegraphics[width=0.6\columnwidth]{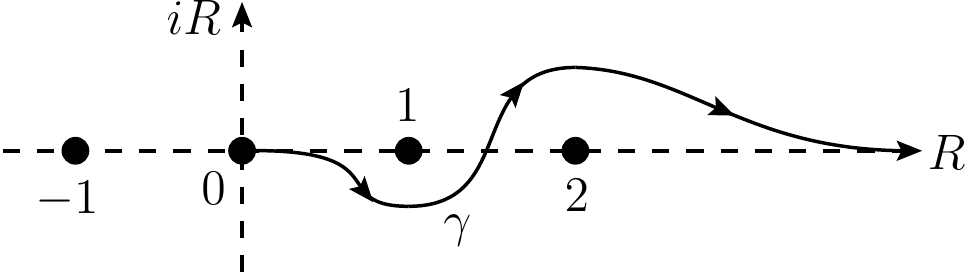}%
	\caption[Deformation of the straight integration path due to positive limits of letters]{The letters $\set{1+t,2-t} \subset \Sigma$ in example~\ref{ex:alphabet-halfplane-partition} induce a deformation of the real integration path $[0,\infty)$ towards $\gamma$, which avoids the positive limits in passing below $\Sigma_0^{+} = \set{1}$ and above $\Sigma_0^{-} = \set{2}$.}%
	\label{fig:contour-deformation}%
\end{figure}
\begin{equation}
	\AnaReg{t}{0+ \imag\varepsilon} \AnaReg{z}{\infty} \Hyper{w}(z)
	=
	\AnaReg{t}{0+ \imag\varepsilon} 
		\int_{\R_+} \WordReg{0}{\infty} (w)
	=
	\int_{\gamma} \WordReg{0}{\infty} (w')
	\label{eq:reglim-reginf-contour-deformation} %
\end{equation}
as a convergent iterated integral. This contour $\gamma$ is determined by the distribution of the letters $\Sigma^{(d)} \cap \Sigma^{\pm}$ among the half-planes, as illustrated for example~\ref{ex:alphabet-halfplane-partition} in figure~\ref{fig:contour-deformation}. In general we distinguish three cases:
\begin{itemize}
	\item[$d < 0$:]
		From \eqref{eq:alphabet-halfplane-nonzero-degree} we see $\Sigma^{(d)} \cap \Sigma^{\pm} \subset \Sigma^{-}$\!\!\!,\, all positive letters in $w'$ stem from $\sigma \in \Halfplane^{-}$ below $\R_+$. So after rescaling by $t^{-d}$, $\gamma$ must pass above all these letters $\leadCoeff_t\big(\Sigma^{(d)}\big) \cap (0,\infty) = \leadCoeff_t\big(\Sigma^{(d)} \cap \Sigma^{\pm}\big)$.

	\item[$d > 0$:]
		This case is the reflection of the earlier, so $\Sigma^{(d)} \cap \Sigma^{\pm} \subset \Sigma^{+}$ and $\gamma$ must pass below all positive letters $\leadCoeff_t(\Sigma^{(d)}) \cap (0,\infty)$ of $\ReglimWord{t}{0}(w)$.

	\item[$d=0$:]
		No rescaling is involved and $\lim_{t \rightarrow 0} \big( \Sigma^{(0)} \cap \Sigma^{\pm} \big) \subset \Sigma_0^- \cup \Sigma_0^+$ can approach the positive axis from both half-planes. So $\gamma$ must pass above $\Sigma_0^{-}$ and below $\Sigma_0^{+}$ as illustrated in figure~\ref{fig:contour-deformation}.
\end{itemize}
In the last case, we must require $\Sigma_0^{-} \cap \Sigma_0^{+} = \emptyset$ as otherwise $\gamma$ is pinched between letters from $\Sigma^{-}$ and $\Sigma^{+}$ that approach the same positive limit as $t \rightarrow 0$ from both half-planes. This situation is discussed in detail in the next section.

For now we rephrase the non-trivial contour integral \eqref{eq:reglim-reginf-contour-deformation} in terms of iterated integrals along straight lines. To this end denote the positive letters of $w'$ by
\begin{equation}
	\leadCoeff_t(\Sigma) \cap (0, \infty)
	= \set{\tau_1,\ldots,\tau_N}
	\quad\text{with}\quad
	0<\tau_1<\cdots<\tau_N<\infty.
	\label{eq:positive-split-letters}%
\end{equation}
So $\gamma$ is determined by a choice of sign $\delta_k = \pm 1$ for each $1\leq k \leq N$ encoding whether $\gamma$ passes below ($\delta_k = 1$) or above ($\delta_k = -1$) the letter $\tau_k$; we can encode this by adding an infinitesimal imaginary $\delta_k \imag \varepsilon$ to $\tau_k$ (then taking the straight path $\R_+$). Using lemma~\ref{lemma:hlog-path-concatenation-singular} we can write 
$\Hyper{w'}(z) = \sum_{(w')} \int_{\tau_1}^z w'_{(1)} \cdot \Hyper{\WordReg{}{\tau_1}(w'_{(2)})}(\tau_1)$ and compute the branch of $\log(z-\tau_1)$ in the definition of the first factor according to $\gamma$ for $z>\tau_1$ as%
\nomenclature[delta tau]{$\delta_{\tau}$}{the half-plane $\Halfplane^{\delta_{\tau}}$ from which the positive letter $\tau$ is approached, equation~\eqref{eq:reglim-reginf-branch}, page~\pageref{eq:reglim-reginf-branch}}%
\begin{equation}
	\int_{\tau_1}^z \letter{\tau_1}
	\urel{\eqref{eq:hlog-singular-concatenation-branch}}
		\Hyper{\tau_1}(z)
	= \log\left(1-\frac{z}{\tau_1} \right)
	= \log (z-\tau_1) - \log(\tau_1) + \imag \pi \delta_{\tau_1}
	.
	\label{eq:reglim-reginf-branch}%
\end{equation}
Taking the regularized limit $z\rightarrow \infty$, this maps to just $\imag\pi\delta_{\tau_1} - \Hyper{\letter{0}}(\tau_1)$. The remaining iterated integrals $\int_{\tau_1}^z u$ ($u$ not ending in $\tau_1$) can be transformed by $f(z)=z-\tau_1$ to hyperlogarithms $\Hyper{\WordTransformation{f}(u)}(z-\tau_1)$ to iterate this procedure with the next positive letter $\tau_2$ (now shifted to $\tau_2 - \tau_1$) and so on.
\begin{corollary}
	\label{corollary:reglim-reginf-splitted}%
	Suppose $w \in T(\Sigma)$ has rational letters $\Sigma \subset \C(t) \setminus (0,\infty)$ and ends with a letter of minimal vanishing degree $d\defas \deg_t(w)$. Furthermore assume that no pinching occurs at $t \rightarrow 0$, that means $d \neq 0$ or $\Sigma^{+}_{0} \cap \Sigma^{-}_{0} = \emptyset$ shall be fulfilled.

	Then we can explicitly compute a finite sum representation
	\begin{equation}
		\AnaReg{t}{0+\imag \varepsilon} \AnaReg{z}{\infty} \Hyper{w}(z)
		= \sum_k
			(\imag \pi)^{\lambda_k}
			\Hyper{w_{k}^{N}}(\infty)
			\Hyper{w_{k}^{N-1}}(\tau_N-\tau_{N-1})
			\cdots
			\Hyper{w_{k}^{1}}(\tau_2 - \tau_1)
			\Hyper{w_{k}^{0}}(\tau_1)
		\label{eq:reglim-reginf-splitted}%
	\end{equation}
	where $\lambda_k \in \Z$ and the words have letters $w_k^{r} \in T(\setexp{\sigma-\tau_r}{\sigma \in \leadCoeff_t(\Sigma)})$. Furthermore all hyperlogarithms that appear are finite because all words are regularized: $w_k^{r}$ does not begin with $\letter{\tau_{r+1}-\tau_r}$ for $r<N$ and $w_k^{N} \in \im\big(\WordReg{0}{\infty}\big)$.
\end{corollary}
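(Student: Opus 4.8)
The plan is to reduce everything, via the contour representation \eqref{eq:reglim-reginf-contour-deformation}, to the explicit evaluation of a single convergent iterated integral $\int_{\gamma} \WordReg{0}{\infty}(w')$ for the word $w' \defas \ReglimWord{t}{0}(w) \in T(\leadCoeff_t(\Sigma))$ computed by \eqref{eq:def-reglim-word}, and then to carry out this evaluation by induction on the number $N$ of positive letters in \eqref{eq:positive-split-letters}. Since $w'$ generally contains positive letters, proposition~\ref{prop:reglim-reginf-from-word} does not apply directly, which is precisely why the contour $\gamma$ is needed. The hypothesis that no pinching occurs (either $d \neq 0$, or $\Sigma^{+}_{0} \cap \Sigma^{-}_{0} = \emptyset$) is exactly what guarantees, by the case analysis preceding the corollary, that the straight ray $\R_+$ can be deformed into a contour $\gamma$ separating the letters $0 < \tau_1 < \cdots < \tau_N$ to one side or the other without being trapped; each $\tau_k$ is thereby assigned a sign $\delta_{\tau_k} = \pm 1$ recording whether $\gamma$ passes below or above it, which I would encode by the displacement $\tau_k \mapsto \tau_k + \delta_{\tau_k}\imag\varepsilon$ and a straight ray, so that the problem becomes purely combinatorial once the branches of the logarithms are fixed.

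For the inductive step I would peel off the smallest positive letter $\tau_1$ (writing $\tau_0 \defas 0$). Applying the singular path-concatenation lemma~\ref{lemma:hlog-path-concatenation-singular} at $\tau_1$ expresses $\Hyper{w'}(z)$ as a sum over the deconcatenation coproduct of products $\int_{\tau_1}^z w'_{(1)} \cdot \Hyper{\WordReg{}{\tau_1}(w'_{(2)})}(\tau_1)$, where the second factor is finite by lemma~\ref{lemma:hlog-single-divergence} together with the regularization $\WordReg{}{\tau_1}$, and furnishes a factor $\Hyper{w_k^0}(\tau_1)$ with $w_k^0 = \WordReg{}{\tau_1}(w'_{(2)}) \in T(\leadCoeff_t(\Sigma))$ not beginning with $\letter{\tau_1}$. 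The only divergence in the first factor comes from $\int_{\tau_1}^z \letter{\tau_1} = \Hyper{\letter{\tau_1}}(z)$, whose branch is dictated by \eqref{eq:reglim-reginf-branch} and contributes, under $\AnaReg{z}{\infty}$, the constant $\imag\pi\delta_{\tau_1} - \Hyper{\letter{0}}(\tau_1)$; this is the mechanism generating the powers $(\imag\pi)^{\lambda_k}$ with their integer signs. The remaining regularized integrals $\int_{\tau_1}^z u$, with $u$ not ending in $\letter{\tau_1}$, are transformed by the change of variables $f(z) = z - \tau_1$ of lemma~\ref{lemma:Moebius-transformation} into hyperlogarithms $\Hyper{\WordTransformation{f}(u)}(z - \tau_1)$ over the shifted alphabet $\setexp{\sigma - \tau_1}{\sigma \in \leadCoeff_t(\Sigma)}$, whose positive letters are now $\tau_2 - \tau_1 < \cdots < \tau_N - \tau_1$. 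The induction hypothesis applied to this shifted word (with $N-1$ positive letters) then produces the remaining factors $\Hyper{w_k^N}(\infty), \ldots, \Hyper{w_k^1}(\tau_2 - \tau_1)$.

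The base case $N = 0$ is immediate: $w'$ has no positive letters, so $\AnaReg{z}{\infty}\Hyper{w'}(z) = \Hyper{\WordReg{0}{\infty}(w')}(\infty)$ by corollary~\ref{cor:reginf-from-word}, giving the single factor $\Hyper{w_k^N}(\infty)$ with $w_k^N \in \im(\WordReg{0}{\infty})$. Assembling the peeling steps reproduces exactly the claimed finite sum \eqref{eq:reglim-reginf-splitted}, each intermediate word being regularized at its evaluation point. I expect the main obstacle to be the careful justification that $\AnaReg{z}{\infty}$ commutes through both the path splitting at $\tau_1$ and the subsequent shift $z \mapsto z - \tau_1$, that is, that the finite second factors and the translated alphabet do not interfere with the logarithmic extraction at infinity. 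This rests on the multiplicativity \eqref{eq:reglim-product} of $\AnaReg{z}{\infty}$ and on the fact that a finite translation leaves the $\log z$-expansion at infinity intact, so that $\AnaReg{z}{\infty}\Hyper{\cdot}(z - \tau_1)$ agrees with the regularized limit taken in the variable $z - \tau_1$; both facts must be invoked at every stage of the recursion to keep the bookkeeping of branches and regularizations consistent.
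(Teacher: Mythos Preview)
Your proposal is correct and follows essentially the same approach as the paper: the text immediately preceding the corollary already outlines exactly this procedure (contour deformation \eqref{eq:reglim-reginf-contour-deformation}, splitting at $\tau_1$ via lemma~\ref{lemma:hlog-path-concatenation-singular}, the branch computation \eqref{eq:reglim-reginf-branch}, the shift $f(z)=z-\tau_1$, and iteration), so the corollary is stated without a separate proof. The obstacle you anticipate---that $\AnaReg{z}{\infty}$ is compatible with the shift $z\mapsto z-\tau_N$---is precisely the content of remark~\ref{remark:reginf-shift-invariance}, which the paper places right after the corollary.
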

\begin{example}
	In example~\ref{ex:dilog-past-one} we computed the splitting for $z>\tau_1 = 1$ as
	\begin{equation*}
		\Li_2(z)
		= \mzv{2} + \int_1^z \letter{1}\letter{0}
		-\Big[ \imag \pi + \log(z-1) \Big] \int_1^z \letter{0} 
	\end{equation*}
	for $\gamma$ passing below one ($\delta_1 = 1$). The regularized limits of $\log(z-1)$ and $\int_1^z \letter{0} = \Hyper{\letter{-1}}(z-1) = \log(z)$ at $z \rightarrow \infty$ vanish, so we do not get imaginary parts in the limit
	\begin{equation*}
		\AnaReg{z}{\infty} \Li_2(z)
		= \mzv{2} + \AnaReg{z}{\infty} \Hyper{\letter{0}\letter{-1}}(z-1) - \imag\pi \AnaReg{z}{\infty} \Hyper{\letter{-1}}(z-1)
		= \mzv{2} + \Hyper{(\letter{0}-\letter{-1})\letter{-1}}(\infty) = 2 \mzv{2}.
	\end{equation*}
\end{example}
\begin{remark}
	\label{remark:reginf-shift-invariance}%
	In the last step of the path decomposition we must compute the regularized limit $\AnaReg{z}{\infty}\Hyper{w}(z-\tau_N)$ of some word $w \in T(\setexp{\sigma-\tau_N}{\sigma \in \leadCoeff_t(\Sigma)})$. Since the argument is not $z$ but $z-\tau_N$, \eqref{eq:reginf-from-word} does not apply directly. But the regularized limit of $\log(z-\tau_N) = \log(z) + \log(1-\tau_N/z)$ at $z\rightarrow \infty$ is zero and therefore
\begin{equation*}
	\AnaReg{z}{\infty} \Hyper{w}(z-\tau_N)
	\urel{\eqref{eq:hlog-divergences}}
		\AnaReg{z}{\infty}
		\sum_k \log^k(z-\tau_N) \cdot f_{w,\infty}^{(k)}(z-\tau_N)
	\urel{\eqref{eq:reglim-product}}
		f_{w,\infty}^{(0)}(\infty)
	= \AnaReg{z}{\infty} \Hyper{w}(z).
\end{equation*}
\end{remark}

\subsection{Pinching the path of integration}
\begin{figure}
	\centering
	\includegraphics[width=0.4\textwidth]{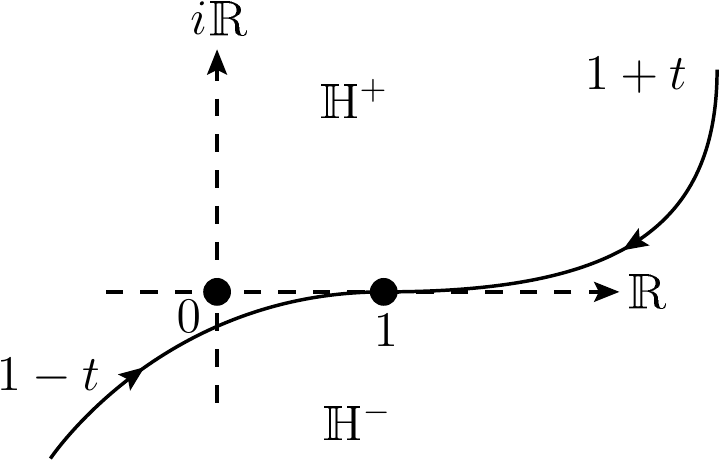}
	\caption[Pinching of the integration path between two letters]{In the limit when $t\rightarrow 0$, the integration path along the positive real axis $\R$ is pinched at $1$ by the two letters $1\pm t$. Example~\ref{ex:pinch-reglim-simple} computes such a limit for the word $\letter{1+t}\letter{1-t}$.}%
	\label{fig:reglim-pinch}%
\end{figure}
We now investigate the particular situation where the contour $\gamma$ of integration gets pinched.\footnote{In all our explicit computations of Feynman integrals so far, this actually never happened. But our solution of this technicality of pinching renders our algorithms applicable in general and might be needed for other applications or maybe even for calculations of (more complicated) Feynman integrals.} So we consider words with $\deg_t(w) = 0$, wherefore all involved letters $\sigma(t) \in \Sigma \subset \C(t)$ have a finite limit $\sigma(0)$ at $t\rightarrow 0$.

As in \eqref{eq:positive-split-letters} we split the integration at the positive limits
$	\big[
		\lim_{t \rightarrow 0} \Sigma\,
	\big]
	\cap
	(0,\infty)
	=
	\set{\tau_1,\ldots,\tau_N}$
and consider the corresponding decomposition
\begin{equation}
	\Hyper{\cdot}(z)
	=
	\int_{\tau_N}^z \convolution \int_{\tau_{N-1}}^{\tau_N} 
	\convolution \cdots \convolution
	\int_{\tau_1}^{\tau_2} \convolution \int_{0}^{\tau_1},
	\label{eq:positive-letter-split}%
\end{equation}
but \emph{before} taking the limit $t \rightarrow 0$. Thus no letter $\sigma(t)$ has yet become positive and all factors in equation~\eqref{eq:positive-letter-split} denote absolutely convergent iterated integrals, along straight paths from $\tau_k$ to $\tau_{k+1}$. It is convenient to transform them into $[0,\infty)$ using
\begin{equation}
	\int_{\tau_k}^{\tau_{k+1}} \!\!\!\!\letter{\sigma_1}\!\!\cdots\letter{\sigma_n}
	= \AnaReg{z}{\infty} \Hyper{\WordTransformation{f_k}(w)}(z)
	\quad\text{with}\quad
	f_k(z)
	\defas \begin{cases}
		\frac{z-\tau_k}{\tau_{k+1}-z} & \text{for $0 \leq k <N$ and} \\
		z-\tau_N & \text{when $k=N$.}\\
	\end{cases}
	\label{eq:positive-split-transformation}%
\end{equation}
Here we set $\tau_0 \defas 0$, $\tau_{N+1} \defas \infty$ and have already applied $\AnaReg{z}{\infty}$ to $\int_{\tau_N}^z$, following remark~\ref{remark:reginf-shift-invariance}, while $\Hyper{\WordTransformation{f}(w)}(\infty)$ is already finite for $k<N$.
Because $\AnaReg{t}{0}$ is linear and multiplicative, we may take this limit on each factor $\int_{\tau_k}^{\tau_{k+1}}$ individually.

The crucial point is that by construction, no letter $\sigma(t)$ approaches the interior of any of the intervals $[\tau_k,\tau_{k+1}]$ as $t \rightarrow 0$. So after the above transformation, $f_k(\sigma(t))$ does not land on $(0,\infty)$ at $t\rightarrow 0$ and we cannot have any pinches on the right-hand side of
\begin{equation}
	\AnaReg{t}{0} \AnaReg{z}{\infty} \Hyper{\cdot}(z)
	=
	\left[ \AnaReg{t}{0} \AnaReg{z}{\infty} \Hyper{\WordTransformation{f_0}(\cdot)}(z) \right]
	\convolution \cdots \convolution
	\left[ \AnaReg{t}{0} \AnaReg{z}{\infty} \Hyper{\WordTransformation{f_k}(\cdot)}(z) \right],
	\label{eq:reglim-reginf-pinch-decomposition}%
\end{equation}
which can therefore be evaluated with corollary~\ref{corollary:reglim-reginf-splitted}.
\begin{example}
	\label{ex:pinch-reglim-simple} %
	Consider $w=\letter{1+t}\letter{1-t}$ and the function $\AnaReg{z}{\infty} \Hyper{w}(z)$, which is well-defined only for $t \in \C \setminus \R$. In the limit $\Halfplane^{+} \ni t \rightarrow 0 + \imag\varepsilon$, the letters $1 \pm t \in \Halfplane^{\pm}$ approach $\tau_1 =1$ from both half-planes and thus pinch the integration contour $(0,\infty)$ as shown in figure~\ref{fig:reglim-pinch}. The subdivision \eqref{eq:positive-letter-split} reads 
	\begin{equation*}
		\AnaReg{z}{\infty} \Hyper{w}(z)
		=
		\int_0^1 \letter{1+t}\letter{1-t}
		+ \int_0^1 \letter{1-t} \cdot \AnaReg{z}{\infty} \int_1^{z} \letter{1+t}
		+ \AnaReg{z}{\infty} \int_1^{z} \letter{1+t}\letter{1-t}.
	\end{equation*}
	The last term is $\AnaReg{z}{\infty} \Hyper{\letter{t}\letter{-t}}(z)$ and becomes $\AnaReg{z}{\infty} \Hyper{\letter{1 + \imag\varepsilon}\letter{-1}}(z) = \frac{3}{2} \mzv{2} + \imag\pi\ln 2$ under $t \rightarrow 0 + \imag\varepsilon$ after a straightforward application of corollary~\ref{corollary:reglim-reginf-splitted}. The middle term contributes $\AnaReg{z}{\infty} \Hyper{\letter{t}}(z) \mapsto \imag\pi$ at $t \rightarrow 0+\imag\varepsilon$ from example~\ref{ex:reginf-log-cut}, multiplied with
	\begin{equation*}
		\AnaReg{t}{0}
		\int_0^1 \letter{1-t}
		= \AnaReg{t}{0}
			\int_0^{\infty} \left( \letter{-1+1/t} - \letter{-1} \right)
		=
		\AnaReg{z}{\infty} \Hyper{\letter{1-\imag\varepsilon}}(z)
		= -\imag\pi.
	\end{equation*}
	Finally, the first summand is
	$
		\int_0^{\infty} w
	$
	for $w = \WordTransformation{f}(\letter{1+t}\letter{1-t})$ with $f(z) = \frac{z}{1-z}$, one finds
	\begin{equation*}
		w=
		\letter{-1-1/t}\letter{-1+1/t} - \letter{-1}\letter{-1+1/t} - \letter{-1-1/t} \shuffle \letter{-1} + \letter{-1}\letter{-1-1/t}.
	\end{equation*}
	The term $\AnaReg{z}{\infty} \Hyper{\letter{-1-1/t} \shuffle \letter{-1}}(z)$ vanishes and all other words in $w$ end on a letter with minimal vanishing degree $\deg_t(-1 \pm 1/t) = -1$, so there is no pinch and we get
	\begin{equation*}
		\AnaReg{t}{0} \int_0^1 \letter{1+t}\letter{1-t}
		=
		\AnaReg{z}{\infty} \Hyper{\letter{-1}\letter{1-\imag\varepsilon} - \letter{0}\letter{1-\imag\varepsilon} + \letter{0}\letter{-1}}(z)
		= \frac{3}{2} \mzv{2} + \imag\pi\ln 2
	\end{equation*}
	using corollary~\ref{corollary:reglim-reginf-splitted} again.
	When we add up all three contributions we obtain the constant $\AnaReg{t}{0} \AnaReg{z}{\infty} \Hyper{\letter{1+t}\letter{1-t}}(z) = \tfrac{3}{2}\pi^2 + 2\pi\imag\ln 2$, whereafter
	\begin{equation*}
		F(t)
		\defas
		\AnaReg{z}{\infty} \Hyper{\letter{1+t}\letter{1-t}}(z)
		=
		\Hyper{\letter{0}\letter{1} + \letter{1}\letter{-1} - \letter{0}\letter{-1}}(t)
		+ \tfrac{3}{2} \pi^2
		+ \imag\pi\left[ 2 \Hyper{\letter{0}}(t) - \Hyper{\letter{1}}(t) + 2\ln 2 \right]
	\end{equation*}
	for $t \in \Halfplane^{+}$ is easily computed with proposition~\ref{prop:reginf-as-hlog}. Crossing the branch cut on the positive real axis swaps $\imag\pi$ with $-\imag\pi$ in this equation. The pinch manifests itself through the logarithmic divergence of $F(t)$ when $t \rightarrow 0$ coming from $2\pi\imag \Hyper{\letter{0}}(t) = 2\pi\imag \log(t)$.

	Note how the constant $\ln 2$ appeared, which is a period of the Riemann sphere $\RSphere\setminus \set{-1,0,1,\infty}$ with four punctures, even though the limit $\lim_{t\rightarrow 0} w = \letter{1}\letter{1}$ suggests a period of $\RSphere\setminus=\set{0,1,\infty}$ only. This phenomenon can only occur through a pinch.
\end{example}
\begin{lemma}
	After the transformation \eqref{eq:positive-split-transformation}, the letters $\sigma\in\Sigma$ behave as
\begin{equation}
	\deg_t(f_k(\sigma))
	=
	\deg_t\left( \frac{\sigma-\tau_k}{\tau_{k+1} - \sigma} \right)
	=
	\begin{cases}
		\phantom{-}\deg_t(\sigma)>0 & \text{if $\sigma(0) = \tau_k = 0$ with $k=0$,} \\
		\phantom{-}\sdeg_t(\sigma)>0 & \text{if $\sigma(0) = \tau_k$ and $k > 0$,} \\
		-\sdeg_t(\sigma)<0 & \text{if $\sigma(0) = \tau_{k+1}$ and} \\
		\phantom{-}0 & \text{when $\sigma(0) \notin \set{\tau_k,\tau_{k+1}}$.} \\
	\end{cases}
	\label{eq:vanishing-degree-transformed}%
\end{equation}
The \emph{subleading vanishing degree} $\sdeg_t(\sigma) \defas N+s$ of the power series $\sigma(t) = \sum_{n=N}^{\infty} a_n t^n$ with leading coefficient $0 \neq a_N = \leadCoeff_t(\sigma)$ and $N = \deg_t(\sigma)$ is defined such that $a_{N+1}=\cdots=a_{N+s-1}=0$ and the \emph{subleading coefficient} is $\subleadCoeff_t(\sigma) \defas a_{N+s} \neq 0$. Furthermore,
\begin{equation}
	\leadCoeff_t \left( f_k(\sigma) \right)
	=
	\leadCoeff_t\left( \frac{\sigma-\tau_k}{\tau_{k+1} - \sigma} \right)
	=
	\begin{cases}
		\phantom{-}\frac{\leadCoeff_t (\sigma)}{\tau_1} & \text{if $\sigma(0) = \tau_k = 0$ with $k=0$,} \\
		\phantom{-}\frac{\subleadCoeff_t (\sigma)}{\tau_{k+1} - \tau_k} & \text{if $\sigma(0) = \tau_k$ and $k> 0$,} \\
		-\frac{\tau_{k+1} - \tau_k}{\subleadCoeff_t(\sigma)} & \text{if $\sigma(0) = \tau_{k+1}$ and} \\
		\frac{\sigma(0) - \tau_k}{\tau_{k+1} - \sigma(0)} & \text{when $\sigma(0) \notin \set{\tau_k, \tau_{k+1}}$.} \\
	\end{cases}
	\label{eq:leadCoeff-transformed}%
\end{equation}
\end{lemma}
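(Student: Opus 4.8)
The plan is to treat this as pure bookkeeping with the $t$-adic valuation, since both \eqref{eq:vanishing-degree-transformed} and \eqref{eq:leadCoeff-transformed} only record how the leading Laurent monomial of $\sigma(t)$ transforms under the Möbius map $f_k$ of \eqref{eq:positive-split-transformation}. First I would recall from definition~\ref{def:letter-leading-coefficient} that $\deg_t$ is the $t$-adic valuation on $\C((t))$, so it is additive on products and changes sign on inverses, while $\leadCoeff_t$ is the associated multiplicative leading-coefficient character ($\leadCoeff_t(fg) = \leadCoeff_t(f)\leadCoeff_t(g)$ and $\leadCoeff_t(1/f) = 1/\leadCoeff_t(f)$). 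Writing $f_k(\sigma) = (\sigma - \tau_k)/(\tau_{k+1} - \sigma)$ with the conventions $\tau_0 \defas 0$ and $\tau_{N+1} \defas \infty$ (so that $f_N(\sigma) = \sigma - \tau_N$ is the numerator alone), it therefore suffices to compute $\deg_t$ and $\leadCoeff_t$ of the numerator $\sigma - \tau_k$ and of the denominator $\tau_{k+1} - \sigma$ in isolation and then subtract degrees and divide leading coefficients. I would also note at the outset that in the pinching regime $\deg_t(\sigma) \geq 0$ for every letter, so each $\sigma(0)$ is finite and the only infinite node among the $\tau$'s is $\tau_{N+1}$.

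The heart of the argument is the numerator/denominator analysis, whose only delicate feature is the cancellation of the constant term. For the numerator $\sigma(t) - \tau_k$ there are three possibilities. If $\tau_k = 0$ (necessarily $k=0$) then $\sigma - \tau_k = \sigma$ and nothing cancels, so $\deg_t(\sigma - \tau_k) = \deg_t(\sigma)$ and $\leadCoeff_t(\sigma - \tau_k) = \leadCoeff_t(\sigma)$. If $\tau_k \neq 0$ and $\sigma(0) = \tau_k$, then $\sigma$ has nonzero constant term $a_0 = \tau_k$ which is exactly cancelled, so the leading surviving monomial is the subleading monomial of $\sigma$: by the definition of $\sdeg_t$ and $\subleadCoeff_t$ this gives $\deg_t(\sigma - \tau_k) = \sdeg_t(\sigma)$ and $\leadCoeff_t(\sigma - \tau_k) = \subleadCoeff_t(\sigma)$. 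Otherwise $\sigma(0) \neq \tau_k$ and the constant term $\sigma(0) - \tau_k$ survives, giving $\deg_t = 0$ and $\leadCoeff_t = \sigma(0) - \tau_k$. The denominator $\tau_{k+1} - \sigma$ is handled identically, up to the sign coming from the minus in front of $\sigma$: its constant term vanishes precisely when $\sigma(0) = \tau_{k+1}$, in which case $\deg_t(\tau_{k+1} - \sigma) = \sdeg_t(\sigma)$ and $\leadCoeff_t(\tau_{k+1} - \sigma) = -\subleadCoeff_t(\sigma)$, and otherwise $\deg_t = 0$ with $\leadCoeff_t = \tau_{k+1} - \sigma(0)$.

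To assemble the four displayed cases I would use the strict ordering $0 = \tau_0 < \tau_1 < \cdots < \tau_N < \tau_{N+1} = \infty$, which guarantees that the finite number $\sigma(0)$ equals at most one of $\tau_k, \tau_{k+1}$; hence the conditions "$\sigma(0) = \tau_k = 0$ with $k=0$", "$\sigma(0) = \tau_k$ with $k>0$", "$\sigma(0) = \tau_{k+1}$", and "$\sigma(0) \notin \{\tau_k,\tau_{k+1}\}$" are mutually exclusive and exhaustive. Plugging the numerator and denominator data into the quotient rule then reproduces \eqref{eq:vanishing-degree-transformed} and \eqref{eq:leadCoeff-transformed} line by line; in the fourth case the leading coefficient collapses to $(\sigma(0)-\tau_k)/(\tau_{k+1}-\sigma(0)) = f_k(\sigma(0))$, a useful consistency check. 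The positivity claims are immediate: $\sigma(0) = \tau_k$ with $\tau_k \neq 0$ forces $\deg_t(\sigma) = 0$ and hence $\sdeg_t(\sigma) \geq 1 > 0$, while $\sigma(0) = \tau_0 = 0$ forces $\deg_t(\sigma) \geq 1 > 0$. The main (and really only) obstacle is keeping the constant-term cancellation bookkeeping straight --- in particular separating the boundary case $\tau_k = 0$, where subtraction removes nothing and one keeps $\deg_t,\leadCoeff_t$, from the case $\tau_k \neq 0$, where the cancellation forces the passage to $\sdeg_t,\subleadCoeff_t$; this distinction is exactly what the definitions of subleading degree and coefficient were designed to encode.
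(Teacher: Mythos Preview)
Your proof is correct and is exactly the ``simple calculation'' the paper alludes to without writing out: factor $f_k(\sigma)$ into numerator $\sigma-\tau_k$ and denominator $\tau_{k+1}-\sigma$, use additivity of the $t$-adic valuation $\deg_t$ and multiplicativity of $\leadCoeff_t$, and track which constant terms cancel. The case distinction and the positivity claims are handled cleanly; there is nothing to add.
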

The important point of this simple calculation is that a pinch in $f(\Sigma)$ can only occur involving letters with $\deg_t(f(\sigma)) = 0$, but for those the limit $\frac{\sigma(0)-\tau_k}{\tau_{k+1}-\sigma(0)}$ is by construction non-positive. So indeed a pinch is impossible after the decomposition \eqref{eq:positive-letter-split}.
\begin{example}
	\label{ex:pinch-reglim-general} %
	We generalize example~\ref{ex:pinch-reglim-simple} to $w = \letter{1+at}\letter{1+bt}$ for arbitrary (non-zero) $a$ and $b$. The same calculation of $\AnaReg{t}{0} \AnaReg{z}{\infty} \Hyper{w}(z)$ results, after transforming all integrations to $[0,\infty)$, in
	\begin{multline*}
		\AnaReg{t}{0+\imag\varepsilon} \AnaReg{z}{\infty}
		\int_0^z \left(
			\letter{at} \shuffle \letter{-1-1/(bt)}
			+ \letter{at}\letter{bt}
			+ \letter{-1-1/(at)}\letter{-1-1/(bt)}
			- \letter{-1}\letter{-1-1/(bt)}
			+ \letter{-1}\letter{-1-1/(at)}
		\right)
		\\
		=\AnaReg{z}{\infty} \int_0^{\infty} \left( 
			\letter{a+\imag\varepsilon} \shuffle \letter{-1/b - \imag\varepsilon}
			+ \letter{a + \imag\varepsilon} \letter{b + \imag\varepsilon}
			+ \letter{-1/a - \imag\varepsilon} \letter{-1/b - \imag\varepsilon}
			- \letter{0} \letter{-1/b - \imag\varepsilon}
			+ \letter{0} \letter{-1/a - \imag\varepsilon}
		\right)
		.
	\end{multline*}
	This form is indeed completely general: When $\sigma \notin (0,\infty)$, the $\pm \imag\varepsilon$ in the letter $\letter{\sigma \pm \imag\varepsilon}$ can be ignored, but when $\sigma \in (0,\infty)$ becomes positive it denotes from which half-plane $\Halfplane^{\pm}$ this letter approaches its limit.

	The whole point of the algorithm in this section is that the resulting decomposition never mixes letters with $+\imag\varepsilon$ and $-\imag\varepsilon$ in the same word, such that they can all be computed along a non-pinched deformed contour \eqref{eq:reglim-reginf-contour-deformation} using corollary \ref{corollary:reglim-reginf-splitted}.

	In this particular example we can further simplify the calculation by the help of an inversion $f(z)=z^{-1}$ applied to all words containing $-\imag\varepsilon$ letters. The result simplifies to
	\begin{equation*}
		C(a,b)
		\defas
		\AnaReg{t}{0+\imag\varepsilon} \AnaReg{z}{\infty} \Hyper{\letter{1+at}\letter{1+bt}}(z)
		= \AnaReg{z}{\infty} \Hyper{\letter{a+\imag\varepsilon}\letter{b+\imag\varepsilon} - \letter{a+\imag\varepsilon}\shuffle \letter{-b+\imag\varepsilon} + \letter{-b+\imag\varepsilon}\letter{-a+\imag\varepsilon}}(z).
	\end{equation*}
	We know that this must be a constant in the non-pinched case $a,b \in \Halfplane^+$. First we compute $\Hyper{\letter{a}-\letter{-a}}(\infty) = \Hyper{\letter{1+\imag\varepsilon} - \letter{-1}}(\infty) = \imag\pi$ and the same for $b$, so
$\dd C(a,b) = \dd \log(a-b) \cdot \Hyper{\letter{a}-\letter{b} + \letter{-b}-\letter{-a}}(\infty) = 0$
indeed. But when $a>0$ and $b<0$ say, then the sign flips for $b$ and we get $\dd C(a,b) = 2\pi\imag \cdot \dd \log(a-b)$ such that $C(a,b) = 2\pi\imag \log(a-b) + \frac{3}{2} \pi^2$ is not a constant but really a function of $a$ and $b$.

This formalism can apparently be employed to deal with positive letters, associated branch choices and pinches in an automated way.
\end{example}
\begin{remark}
While this way of computation might seem very cumbersome (and in simple cases like the above shortcuts are certainly possible), it is guaranteed to work in all cases and can be automatized (we include it in our program {\HyperInt} of chapter~\ref{chap:hyperint}).
\end{remark}
\begin{corollary}\label{corollary:reglim-reginf-pinch}
	Suppose the rational letters $\Sigma \subset \C(t) \setminus (0,\infty)$ have $N$ pinch points $\set{\tau_1,\ldots,\tau_N} = \Sigma^{+}_0 \cap \Sigma^{-}_0$, ordered $0 < \tau_1 < \cdots < \tau_N$, and write
	\begin{equation}
		\subleadCoeff^{(\tau)}(\Sigma)
		\defas
		\set{0} \cup
		\setexp{\subleadCoeff_t(\sigma)}{\sigma \in \Sigma^{\pm}\ \text{such that}\ \lim_{t \rightarrow 0} \sigma(t) = \tau}
	\end{equation}
	for the subleading coefficients of the letters pinching at $\tau$. Then any $w \in T(\Sigma)$ admits a decomposition of $\AnaReg{t}{0+\imag\varepsilon} \AnaReg{z}{\infty} \Hyper{w}(z)$ like in \eqref{eq:reglim-reginf-splitted}, but with additional factors $\AnaReg{z}{\infty} \Hyper{v_k^{j}}(z)$ on the right where $v_k^j \in T(\subleadCoeff^{(\tau_j)})$. Explicitly, \eqref{eq:reglim-reginf-pinch-decomposition} and corollary~\ref{corollary:reglim-reginf-splitted} give an algorithm to express any regularized limit according to
	\begin{equation}
		\AnaReg{t}{0+\imag\varepsilon} \AnaReg{z}{\infty} \HlogAlgebra(\Sigma)(z)
		\subseteq
			\prod_{k=0}^{N}
			\AnaReg{z}{\infty} \HlogAlgebra\left( f_k(\Sigma) \right)
			\times \prod_{k=1}^N \Q[\log(\tau_k-\tau_{k-1})]\AnaReg{z}{\infty} \HlogAlgebra \left(\subleadCoeff^{(\tau_k)}(\Sigma) \right),
		\label{eq:reginf-pinch-product-algebra} %
	\end{equation}
	where possibly occurring positive letters in $\subleadCoeff^{(\tau_k)}(\Sigma)$ do not pinch. Hence these can be rewritten themselves using corollary~\ref{corollary:reglim-reginf-splitted}.
\end{corollary}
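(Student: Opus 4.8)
The plan is to assemble the corollary from two components that are already in place: the path-splitting decomposition \eqref{eq:reglim-reginf-pinch-decomposition}, which rewrites a regularized limit over a pinched alphabet as a convolution of regularized limits over the \emph{transformed} alphabets $f_k(\Sigma)$, and corollary~\ref{corollary:reglim-reginf-splitted}, which evaluates each such factor \emph{provided it is unpinched}. The entire content is therefore to certify that the transformed factors are genuinely free of pinches and to track which hyperlogarithm algebras their values inhabit. Since all the analytic work has been done in the preceding lemmata, this is essentially a synthesis plus careful bookkeeping.

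First I would recall the decomposition \eqref{eq:positive-letter-split}, which expresses $\Hyper{\cdot}(z)$ as a convolution of iterated integrals $\int_{\tau_k}^{\tau_{k+1}}$ taken \emph{before} the limit $t\rightarrow 0$, so that every factor is an absolutely convergent iterated integral along a straight segment into whose interior no letter $\sigma(t)$ has yet moved. Applying the Möbius maps $f_k$ through \eqref{eq:positive-split-transformation} recasts each factor as $\AnaReg{z}{\infty}\Hyper{\WordTransformation{f_k}(\cdot)}(z)$, an object over the alphabet $f_k(\Sigma)$; because $\AnaReg{t}{0}$ is a linear character it may be pushed onto each factor separately, producing \eqref{eq:reglim-reginf-pinch-decomposition}. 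At this stage the problem is reduced to computing $N+1$ individual regularized limits over the alphabets $f_0(\Sigma),\ldots,f_N(\Sigma)$.

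The decisive step is the no-pinch verification. By the lemma computing \eqref{eq:vanishing-degree-transformed}, a letter $\sigma\in\Sigma$ with $\sigma(0)\notin\set{\tau_k,\tau_{k+1}}$ transforms to one of vanishing degree $\deg_t(f_k(\sigma))=0$ and finite limit $(\sigma(0)-\tau_k)/(\tau_{k+1}-\sigma(0))$, which by the construction of the subdivision at the pinch points is never positive; the letters pinching at $\tau_k$ or $\tau_{k+1}$ instead acquire strictly positive or strictly negative vanishing degree, so by \eqref{eq:alphabet-halfplane-nonzero-degree} they are sorted into $\Sigma^{+}$ or $\Sigma^{-}$ and no longer approach a common positive limit. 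Hence $f_k(\Sigma)$ is unpinched and corollary~\ref{corollary:reglim-reginf-splitted} applies verbatim to each factor; its output — a finite sum of products of regularized limits $\AnaReg{z}{\infty}\Hyper{\cdot}$ weighted by powers of $\imag\pi$ — assembles under the convolution into a representation of the shape \eqref{eq:reglim-reginf-splitted}.

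It remains to identify the two product factors of \eqref{eq:reginf-pinch-product-algebra}. Reading off \eqref{eq:leadCoeff-transformed}, the generic letters of the $k$-th factor yield hyperlogarithms over $f_k(\Sigma)$, giving the first product $\prod_{k=0}^{N}\AnaReg{z}{\infty}\HlogAlgebra(f_k(\Sigma))$, whereas a letter pinching at $\tau_k$ contributes a leading coefficient built from its subleading coefficient $\subleadCoeff_t(\sigma)$ normalized by the gap $\tau_{k+1}-\tau_k$ (equivalently $\tau_k-\tau_{k-1}$ after reindexing), producing precisely the alphabet $\subleadCoeff^{(\tau_k)}(\Sigma)$ together with the constants $\log(\tau_k-\tau_{k-1})$. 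The main obstacle I expect is exactly this bookkeeping, and in particular justifying that the positive letters appearing in $\subleadCoeff^{(\tau_k)}(\Sigma)$ do \emph{not} themselves pinch: for this I would invoke the structural feature emphasized in example~\ref{ex:pinch-reglim-general}, namely that the decomposition never places letters approaching a common positive value from opposite half-planes ($+\imag\varepsilon$ versus $-\imag\varepsilon$) into the same word, so within each resulting word all positive letters lie in a single half-plane. This is what prevents a renewed pinch at the subleading level, allows each subleading factor to be rewritten once more via corollary~\ref{corollary:reglim-reginf-splitted}, and thereby closes the argument.
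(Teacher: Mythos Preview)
Your proposal is correct and follows essentially the same route as the paper: split via \eqref{eq:reglim-reginf-pinch-decomposition}, verify the transformed alphabets $f_k(\Sigma)$ are unpinched through \eqref{eq:vanishing-degree-transformed}, then invoke corollary~\ref{corollary:reglim-reginf-splitted} and read off the subleading alphabets from \eqref{eq:leadCoeff-transformed}. The one point where the paper is a bit more explicit is the origin of the constants $\log(\tau_{k+1}-\tau_k)$: rather than saying the subleading coefficients are ``normalized by the gap'', the paper observes that in a decoupled word of fixed nonzero degree all surviving letters share the common factor $(\tau_{k+1}-\tau_k)^{\pm 1}$ and removes it by a simultaneous rescaling as in lemma~\ref{lem:reglim-reginf-rescaling}, which is precisely what generates the logarithms; your ``normalization'' phrasing is the same mechanism but would benefit from citing that lemma.
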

\begin{proof}
	We only need to comment on the contributions of words with $\deg_t(\WordTransformation{f_k}(w)) \neq 0$: By the different vanishing degrees \eqref{eq:vanishing-degree-transformed}, letters with $\sigma(0) =\tau_k$ and $\sigma(0) = \tau_{k+1}$ do not mix. So say we take a word with $\deg_t(\WordTransformation{f_k}(w)) > 0$, then from \eqref{eq:leadCoeff-transformed} all its letters are of the form $\subleadCoeff_t(\sigma)/(\tau_{k+1}-\tau_k)$ and we can rescale them simultaneously: Following the proof of lemma~\ref{lem:reglim-reginf-rescaling}, this only introduces explicit terms $\log(\tau_{k+1}-\tau_k)$.
\end{proof}
\begin{remark}
	The pinching letters with different subleading degrees decouple, which gives a more refined characterization of this representation just as in corollary~\ref{corollary:reglim-degree-decouple}.
\end{remark}
\begin{example}
	For $\Sigma=\set{0,1+t,1-t}$, the pinch at $\tau=1$ has $\subleadCoeff^{(1)}(\Sigma) = \set{0,\pm 1}$ and shows that alternating sums can contribute to the limit, as we explicitly saw in example~\ref{ex:pinch-reglim-simple}. More generally, a word over the alphabet $\Sigma=\set{0,1+a_1 t,\ldots,1+a_N t}$ can contribute a complicated period of $\C\setminus\set{0,a_1,\ldots,a_N}$ to the regularized limit $t \rightarrow 0$, even though the letters of the word itself have just two different limits $\set{0,1}$ and might suggest, naively, that multiple zeta values (and $\imag \pi$) should suffice.
\end{example}

\section{Polylogarithms}
\label{sec:polylogarithms}%
Unfortunately, many different names are currently used for hyperlogarithms and various special classes of them. To avoid confusion and to aid comparison with other sources, we like to briefly collect these terms and relate them with each other.
\subsection{Multiple polylogarithms}
The \emph{classical polylogarithms} $\Li_n$ of \emph{weight} $n\in\N$ are defined by the power series%
\footnote{%
While \eqref{eq:def-Li} extends to arbitrary $n\in\Z$, for $n \leq 0$ it only defines rational functions $\Li_n(z)\in\Q[\frac{1}{1-z}]$.%
}%
	\begin{equation}
		\Li_n (z)
		\defas
		\sum_{1\leq k}
		\frac{z^k}{k^n}
		,\quad\text{convergent when}\quad
		\abs{z} < 1
		\label{eq:def-Li}%
	\end{equation}
and go back to Euler \cite{Euler:MZV}. Lewin's wonderful book \cite{Lewin:PolylogarithmsAssociatedFunctions} became a standard reference on these functions. The (iterated) integral representations
\begin{equation}
	\Li_1(z) 
	= \int_0^z \frac{\dd z'}{1-z'}
	= -\log (1-z)
	\quad\text{and}\quad
	\Li_{n+1}(z)
	=\int_0^z \frac{\Li_{n}(z')}{z'}\  \dd z'
	\label{eq:Li-iterated-integral}%
\end{equation}
reveal them as the hyperlogarithms
$
	\Li_n(z)
	=
	-\Hyper{\letter{0}^{n-1}\letter{1}}(z)
$.
These are special cases of
\begin{definition}
	\label{def:MPL}%
	The multiple polylogarithm \cite{Zagier:MZVApplications,Goncharov:MplCyclotomyModularComplexes} (MPL) of \emph{weight} $n_1+\cdots+n_r$ and \emph{depth} $r$ associated to a sequence $n_1,\ldots,n_r \in \N$ is defined by the power series
	\begin{equation}
		\Li_{\vec{n}}(\vec{z})
		=
		\Li_{n_1,\ldots,n_r}(z_1,\ldots,z_r)
		\defas
		\sum_{1\leq k_1<\cdots<k_r}
		\frac{z_1^{k_1} \!\cdots z_r^{k_r}}{k_1^{n_1}\!\cdots k_r^{n_r}}
		\label{eq:def-Mpl}%
	\end{equation}
	in $r$ complex variables $z_i$.	Absolute convergence of \eqref{eq:def-Mpl} is assured when
	$
		\abs{z_k \!\cdots z_r} < 1
	$
	for all $1\leq k \leq r$.
	We adopt the convention $\Li_{\vec{n}}(z) \defas \Li_{n_1,\ldots,n_r}(1,\ldots,1,z)$ to identify MPLs of a single variable.
\end{definition}
\begin{lemma}%
	\label{lemma:Hlog-as-Mpl}%
	Any hyperlogarithm can be expressed in terms of multiple polylogarithms according to $\Hyper{\letter{0}}(z) = \log( z) = -\Li_1(1-z)$ and the formula
	\begin{equation}
		\Hyper{w} (z)
		= (-1)^r
			\Li_{n_1,\ldots,n_r}\left(
				\frac{\sigma_2}{\sigma_1}, \cdots\!, \frac{\sigma_r}{\sigma_{r-1}}, \frac{z}{\sigma_r} 
			\right)
			\ \text{where}\ 
			w = \letter{0}^{n_r-1}\letter{\sigma_r}^{}\!\!\cdots\letter{0}^{n_1-1}\letter{\sigma_1}^{},
		\label{eq:Hlog-as-Mpl}%
	\end{equation}
	which holds for any $r,n_1,\ldots,n_r \in \N$ and $\sigma_1,\ldots,\sigma_r \neq 0$ when $\abs{z} < \min \set{\abs{\sigma_1},\ldots,\abs{\sigma_r}}$.
	Conversely, in the domain $\bigcap_{k=1}^r \set{\abs{z_k\!\cdots z_r}<1}$ of convergence, any MPL \eqref{eq:def-Mpl} is equal to the hyperlogarithm
$		\Li_{\vec{n}} (\vec{z})
		= \Hyper{w}(1)
$
	associated to the word
	\begin{equation}
		\label{eq:Mpl-as-Hlog}%
		w=(-1)^r \letter{0}^{n_r-1}\letter{1/z_r}\letter{0}^{n_{r-1}-1}\letter{1/(z_r z_{r-1})}\!\cdots\letter{0}^{n_1-1}\letter{1/(z_r\cdots z_1)}.
	\end{equation}
\end{lemma}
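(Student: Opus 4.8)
The plan is to obtain both directions of the lemma as bookkeeping consequences of the explicit power-series expansion in lemma~\ref{lemma:hlog-zsum}, equation~\eqref{eq:hlog-zsum}, combined with the definitions of $Z$-sums \eqref{eq:zsum} and of multiple polylogarithms \eqref{eq:def-Mpl}. First I would dispose of the degenerate word $\letter{0}$, which is not covered by \eqref{eq:Hlog-as-Mpl} since it carries no nonzero letter: by \eqref{eq:def-Hlog-noshuffle} one has $\Hyper{\letter{0}}(z) = \log(z)$, and since $\Li_1(w) = -\log(1-w)$ (equation~\eqref{eq:Li-iterated-integral}) this equals $-\Li_1(1-z)$, establishing the stated special case.

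For the forward formula \eqref{eq:Hlog-as-Mpl} I start from \eqref{eq:hlog-zsum} applied to $w = \letter{0}^{n_r-1}\letter{\sigma_r}\cdots\letter{0}^{n_1-1}\letter{\sigma_1}$ with all $\sigma_i \neq 0$, which expresses $\Hyper{w}(z)$ as $(-1)^r$ times an outer sum over $k \geq r$ of $(z/\sigma_r)^k / k^{n_r}$ multiplied by the $Z$-sum $\Zsum{k-1}{n_1,\ldots,n_{r-1}}{\sigma_2/\sigma_1,\ldots,\sigma_r/\sigma_{r-1}}$. Unfolding this $Z$-sum through its definition \eqref{eq:zsum} and renaming the outer summation index $k = k_r$, the upper constraint $k_{r-1} \leq k-1$ becomes the strict inequality $k_{r-1} < k_r$, while the lower bound $k \geq r$ is automatic from the chain $0 < k_1 < \cdots < k_{r-1} < k_r$. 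The resulting double sum collapses exactly into the nested sum \eqref{eq:def-Mpl} for $\Li_{n_1,\ldots,n_r}$ evaluated at the arguments $(\sigma_2/\sigma_1,\ldots,\sigma_r/\sigma_{r-1},z/\sigma_r)$, which is the claim; the series converges precisely when $|z| < \min_i |\sigma_i|$, the region in which each geometric factor stays below modulus one.

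For the converse \eqref{eq:Mpl-as-Hlog} I specialize \eqref{eq:Hlog-as-Mpl} to $z = 1$ and solve the identifications $\sigma_{i+1}/\sigma_i = z_i$ for $1 \leq i < r$ together with $1/\sigma_r = z_r$ recursively, obtaining $\sigma_i = 1/(z_r z_{r-1} \cdots z_i)$. Substituting these back into the word and absorbing the prefactor $(-1)^r$ by linearity of $w \mapsto \Hyper{w}$ reproduces exactly the word of \eqref{eq:Mpl-as-Hlog}. Here the convergence hypothesis $|z_k \cdots z_r| < 1$ for all $1 \leq k \leq r$ is equivalent to $|\sigma_i| > 1$ for all $i$, which is precisely the domain $|z| = 1 < \min_i |\sigma_i|$ demanded in the forward direction, so the two series represent the same function there.

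The whole computation is essentially mechanical substitution; the only place genuinely demanding care is the index bookkeeping, because the word in \eqref{eq:hlog-zsum} lists its letters in the reverse of the summation order appearing in \eqref{eq:def-Mpl}, and one must correctly track the off-by-one between the $Z$-sum's upper bound $k-1$ and the strict inequality $k_{r-1} < k_r$. I expect this reindexing — and the accompanying check that the convergence domain of the $Z$-sum expansion coincides with that of the MPL series — to be the main (though modest) obstacle, while everything else follows by direct identification of the arguments $\sigma_{i+1}/\sigma_i$ and $z/\sigma_r$.
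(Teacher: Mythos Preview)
Your proposal is correct and matches the paper's approach: the paper proves this lemma with the single sentence ``This formula follows from \eqref{eq:hlog-zsum},'' and what you have written is exactly the unpacking of that reference, tracing the $Z$-sum from \eqref{eq:hlog-zsum} through its definition \eqref{eq:zsum} back to the MPL series \eqref{eq:def-Mpl}. Your attention to the index reversal and the off-by-one between $k_{r-1}\leq k-1$ and $k_{r-1}<k_r$ is precisely the bookkeeping the paper leaves implicit.
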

This formula follows from \eqref{eq:hlog-zsum}. The special case of just one argument captures the hyperlogarithms with letters $\letter{0}$ and $\letter{1}$: $\HlogAlgebra(\set{0,1})(z) = \Q[\log(z),\Li_{\vec{n}}(z)\colon \vec{n} \in \N^{\times}]$,
\begin{equation}
	\dd \Li_{n_1,\ldots,n_r}(z)
	= \begin{cases}
			\frac{\dd z}{z} \Li_{n_1, \ldots, n_r - 1}(z)	&	\text{if $n_r>1$ and} \\
			\frac{\dd z}{1-z} \Li_{n_1,\ldots, n_{r-1}}(z) & \text{if $n_r=1$.} \\
		\end{cases}
	\label{eq:Mpl-single-variable-differential}%
\end{equation}
For several variables, the total differential \eqref{eq:hlog-total-differential} can be read off directly from \eqref{eq:def-Mpl} and takes the form 
\begin{equation}
	\dd \Li_{n_1,\ldots,n_r}\left( z_1,\ldots,z_r \right)
	=
	\sum_{j=1}^r
		\Li_{n_1,\ldots,n_j-1,\ldots,n_r}\left( z_1,\ldots,z_r \right)
		\ 
		\frac{\dd z_j}{z_j}.
	\label{eq:Mpl-total-differential}%
\end{equation}
Any initial $n_j = 1$ will contribute an index $n_j-1=0$ on the right-hand side, but
\begin{equation*}
	\sum_{k_j = k_{j-1} + 1}^{k_{j+1} - 1}
	\frac{z_j^{k_j}}{k_j^0}
	= \frac{z_j^{k_{j-1} + 1}- z_j^{k_{j+1}}}{1-z^j}
\end{equation*}
shows that such MPL with a zero index $n_j=0$ can be rewritten as
\begin{equation}\begin{split}
	\Li_{n_1,\ldots,0,\ldots,n_r}(z_1,\ldots,z_r)
	=&
	\frac{z_j}{1-z_j} \Li_{n_1,\ldots,\not 0,\ldots,n_r}\left( z_1,\ldots, z_{j-1} z_j, z_{j+1},\ldots, z_r \right)
	\\
	-& \frac{1}{1-z_j} \Li_{n_1,\ldots,\not 0,\ldots,n_r}\left( z_1,\ldots, z_{j-1}, z_j z_{j+1}, \ldots, z_r \right).
	\label{eq:Mpl-zero-index}%
\end{split}\end{equation}
Here the first term reads $\frac{z_1}{1-z_1}\Li_{n_2,\ldots,n_r}(z_2,\ldots,z_r)$ when $j=1$ and the second summand is absent of $j=r$.
\subsubsection{Alternative names}
In the physics literature, the notation $G(w;z) \defas \Hyper{w}(z)$ for hyperlogarithms was introduced in \cite{GehrmannRemiddi:Numerical2dHpl} and is widely used. They are referred to as \emph{Goncharov polylogarithms} (GPL) and also as \emph{generalized harmonic polylogarithms} \cite{AblingerBluemleinSchneider:GeneralizedHarmonicSumsAndPolylogarithms}. But note that in Goncharov's articles \cite{Goncharov:MplCyclotomyModularComplexes,Goncharov:MplMixedTateMotives}, hyperlogarithms are written as
\begin{equation}
	I_{n_1,\ldots,n_r}\left( a_1 : \cdots : a_{r+1} \right)
	\defas
	\Hyper{\letter{0}^{a_r-1} \letter{a_r} \cdots\, \letter{0}^{a_1-1}\letter{a_1}}\left( a_{r+1} \right)
	\quad\text{for}\quad
	a_1,\ldots,a_r \neq 0
	\label{eq:Gonchariv-iterated-integral}%
\end{equation}
or just $I\left( a_1: \cdots :a_{r+1} \right) \defas \Hyper{\letter{a_1}\!\cdots\,\letter{a_r}}\left( a_{r+1} \right)$.

\subsection{Special classes}
\subsubsection{Nielsen's generalized polylogarithms}
Already in \cite{Nielsen:DilogarithmusVerallgemeinerungen}, Nielsen studied generalizations of the classical polylogarithms. Among others, he introduced the functions
\begin{equation}
	S_{n,p}(z)
	\defas
	\frac{(-1)^{n+p-1}}{(n-1)! p!}
	\int_0^1 \log^{n-1} (t)
	\ 
	\log^p(1-zt)
	\frac{\dd t}{t}
	\label{eq:def-Nielsen}%
\end{equation}
which later appeared in perturbative quantum field theory and were rediscovered by particle physicists \cite{KoelbigMignacoRemiddi:NielsenNumerical,Koelbig:Nielsen}. These functions are however nothing but the hyperlogarithms
\begin{equation}
	S_{n,p}(z)
	=
	(-1)^p
	\Hyper{\letter{0}^n \letter{1}^p}(z)
	=
	\Li_{1^{(p-1)},n+1}(z).
	\label{eq:Nielsen-as-Hlog}%
\end{equation}

\subsubsection{Harmonic polylogarithms}
In \cite{RemiddiVermaseren:HarmonicPolylogarithms}, the \emph{harmonic polylogarithms} (HPL) $H(\vec{m}; x)$ of a single variable $x$ where defined for index strings $\vec{m} \in \set{-1,0,1}^{\times}$. Namely, with $0^{(w)}$ abbreviating a sequence of $w$ zeros,
\begin{equation}
	H\left(0^{(w)}; x\right)
	\defas
	\frac{1}{w!} \log^w (x)
	\quad\text{and}\quad
	H\left( a,\vec{m}; x \right)
	\defas
	\int_0^{x}
	f\left(a;x'\right) \,H\!\left( \vec{m}; x' \right) \ \dd x'
	\label{eq:def-Hpl}%
\end{equation}
where $f(0;x) \defas \frac{1}{x}$, $f(1;x) \defas \frac{1}{1-x}$ and $f(-1;x) \defas \frac{1}{1+x}$. Apparently these are hyperlogarithms over the alphabet $\Sigma=\set{-1,0,1}$, explicitly
\begin{equation}
	H(\vec{m}; x)
	=
	(-1)^{\abs{\setexp{k}{m_k=1}}}
		\cdot
		\Hyper{\vec{m}}(x).
	\label{eq:Hpl-as-Hlog}%
\end{equation}
Often a short-hand notation $H_{\vec{m}}(x)$ is used, where the indices $\vec{m} \in \Z^{\times}$ may be arbitrary integers. Then $\pm n \defas 0^{(n-1)},\pm 1$ encodes a sequence of $n-1$ zeros and a single letter $\pm 1$, e.g.\ $H_{3,-2}(x) \defas H(0,0,1,0,-1;x)$. Then
\begin{equation}
	H_{m_1,\ldots,m_r}(x)
	=
	\Hyper{\underline{m_1}\ldots\underline{m_r}}(x)
	\quad\text{where}\quad
	\underline{0} \defas \letter{0}
	\quad\text{and}\quad
	\underline{\pm m} \defas \mp \letter{0}^{m-1} \letter{\pm 1}
	\quad\text{for}\quad
	m\in\N.
	\label{eq:compressed-Hpl-as-Hlog}%
\end{equation}

\subsubsection{Cyclotomic harmonic polylogarithms}
Hyperlogarithms with letters $\sigma \in \Sigma = \set{0} \cupdot \setexp{e^{2\pi k/n}}{0 \leq k < n}$ that are roots of unity have attracted special interest (see also section~\ref{sec:Periods}). They have been called \emph{cyclotomic harmonic polylogarithms} in \cite{AblingerBluemleinSchneider:Cyclotomic}.

\subsubsection{Two-dimensional harmonic polylogarithms}
The two-loop four-point functions with one leg off-shell were calculated in \cite{GehrmannRemiddi:TwoLoopMasterIntegralsPlanar,GehrmannRemiddi:TwoLoopMasterIntegralsNonPlanar}. These depend not on one, but two dimensionless variables called $y$ and $z$. To express their results, the authors introduced the special family of hyperlogarithms
\begin{equation}
	G(w; y)
	\defas
	\Hyper{w}(y)
	\quad\text{with}\quad
	w \in \set{0,1,1-z,-z}^{\times}
	\label{eq:def-2dHpl}%
\end{equation}
called \emph{two-dimensional harmonic polylogarithms} (2dHPLs). This notation was fixed in \cite{GehrmannRemiddi:Numerical2dHpl}, were one also finds explicit formulas that express these functions for weight less than four in terms of Nielsen's generalized polylogarithms \eqref{eq:def-Nielsen}.

\subsection{Single-valued polylogarithms}
The monodromies of hyperlogarithms $\Hyper{w}(z)$ when $z$ encircles a singularity $\sigma \in \Sigma$ can be removed through suitable combinations with hyperlogarithms of the complex conjugate $\conjugate{z}$. The Bloch-Wigner dilogarithm $\BlochWigner$ from \eqref{eq:BlochWigner} is a very important example, and Francis Brown studied the full subalgebra of $\HlogAlgebra(\set{0,1})(z) \tp \HlogAlgebra(\set{0,1})(\bar{z}) \tp \MZV$ characterized by this property of single-valuedness on $\C \setminus \set{0,1}$ when $\bar{z}=\conjugate{z}$ are conjugate \cite{Brown:Uniformes}. They have been called \emph{single-valued multiple polylogarithms} (SVMP(L)) and also \emph{single-valued harmonic polylogarithms} (SVHPL).

Such functions occur naturally in certain Feynman integrals and provide a very efficient tool for practical computations \cite{Drummond:Ladders,Schnetz:GraphicalFunctions} (for example, they are a key ingredient to the proof of the zigzag conjecture \cite{BrownSchnetz:ZigZag}). Their special values at one are well understood \cite{Brown:SingleValuedMZV}.

But Feynman integrals demand more general functions, as was realized for the first time in \cite{ChavezDuhr:Triangles} and later for example also in \cite{DDEHPS:LeadingSingularitiesOffShellConformal}. In both cases, the differential form $\dd \log(z-\bar{z})$ had to be adjoined to form more general (but still single-valued) integrals. Our calculation of graphical functions (together with Oliver Schnetz) revealed even more general (single-valued) iterated integrals, involving also the forms $\dd \log(z \bar{z} - z - \bar{z})$, $\dd \log(1-z\bar{z})$ and $\dd \log(1-z-\bar{z})$. This stimulated a wide extension of the concept of single-valued hyperlogarithms, which is currently actively developed by Oliver Schnetz. Some examples of our results can be found in \cite{Panzer:DivergencesManyScales}, and we briefly comment on them also in section~\ref{sec:ex-3pt}.

\section{Periods}
\label{sec:Periods}%
In the previous section, we introduced the absolutely convergent integrals $\Hyper{\WordReg{0}{\infty}(w)}(\infty)$ as basic building blocks for our algorithms. These depend on the position of the letters $\sigma \in \Sigma$ as described by proposition~\ref{prop:reginf-as-hlog}.

But for fixed points $\Sigma \subset \C$, they just define constants which belong to the class of \emph{periods} \cite{Periods} when $\Sigma \subset \overline{\Q}$ is algebraic. This is obvious from \eqref{eq:hlog-finite-infinity-extended}, because periods may be defined as numbers that admit an integral representation with rational integrands over integration domains determined by rational inequalities, like $0<z_1<\cdots<z_n$.

Since such numbers appear en masse when we employ hyperlogarithms for integration, it is crucial to understand them very well. The most important aspects for practical applications are:
\begin{enumerate}
	\item For a fixed alphabet $\Sigma \subset \C$, the number of words $w \in \Sigma^n$ grows exponentially with the weight $n=\abs{w}$. But the constants $\Hyper{\WordReg{0}{\infty}(w)}(\infty)$ typically obey a huge number of relations, such that we can express all of them as $\Q$-linear combinations of only a few suitably chosen generators, which considerably reduces the size of the output.

	\item One wants to be able to detect if an expression is zero, in an automated way, preferably without resorting to numeric evaluations. This is only possible when the set of generators is linearly independent over $\Q$.
\end{enumerate}
Unfortunately, a basis over $\Q$ is not known in any case of interest; for example it is still not ruled out that $\MZV = \Q[\pi^2]$: all multiple zeta values could be polynomials in $\pi^2$ \cite{Waldschmidt:LecturesMZV}. On the other hand, the development of \emph{motivic periods} recently blossomed into several complete characterizations of the algebra of these analogues of our actual periods, for particular choices of $\Sigma$. Bases of motivic periods map to generating sets of the associated actual periods, and the main conjectures of the theory of periods imply that indeed they should stay linearly independent and form a basis over $\Q$.

We cannot go into detail on this fascinating and very active subject, but only collect results and references which are particularly important for our applications.
\subsection{Multiple zeta values and alternating sums}
\begin{definition}\label{def:MZV-N}
	Let $\mu_N \defas \big\{\xi\colon \xi^N = 1\big\}$ denote the $N$'th roots of unity and set
	\begin{equation}
		\MZV[N]
		\defas
		\lin_{\Q} \setexp{
			\Li_{n_1,\ldots,n_r}(\xi_1,\ldots,\xi_r)
		}{
			\text{every $n_i \in \N$, $\xi_i \in \mu_N$ and $(n_r,\xi_r) \neq (1,1)$}
		}
		\label{eq:def-MZV-N}%
	\end{equation}
	to be all rational linear combinations of multiple polylogarithms evaluated at such roots.\footnote{The condition $(n_r,\xi_r) \neq (1,1)$ is equivalent to the convergence of the series \eqref{eq:def-Mpl}.} It is filtered by the weight $n_1+\cdots+ n_r$ and the depth $r$.
\end{definition}
Equivalently we can characterize this space as the algebra of special values that hyperlogarithms over the alphabet $\set{0} \cup \mu_N$ take at one: From \eqref{eq:Mpl-as-Hlog},
\begin{equation}
	\AnaReg{z}{1} \HlogAlgebra(\set{0} \cup \mu_N)(z)
	\urel{\eqref{eq:hlog-path-concatenation-singular}}
	\int_0^1 \WordReg{0}{1} T(\set{0} \cup \mu_N)
	= \MZV[N]
	\label{eq:MZV-N-as-hlog}%
\end{equation}
where the integration path is the straight line from zero to one.
\begin{remark}
	More generally, we can consider all convergent integrals along smooth paths $\gamma\colon(0,1) \longrightarrow \C\setminus\Sigma$ with endpoints $\gamma(0),\gamma(1) \in \Sigma$, the \emph{convergent periods} of $\C \setminus \Sigma$:
\begin{equation}
	\period(\C \setminus \Sigma)
	\defas
	\sum_{\gamma}
	\setexp{\int_{\gamma} w}{
		\text{$w \in T(\Sigma)$ neither starts with $\letter{\gamma(1)}$ nor ends in $\letter{\gamma(0)}$}
	}.
	\label{eq:def-relative-periods}%
\end{equation}
From \eqref{eq:Moebius-transformation} and \eqref{eq:hlog-path-concatenation-singular} it follows that for roots of unity $\Sigma = \set{0} \cup \mu_N$, the only new period we can get is $\imag\pi$, so $\period(\C\setminus\Sigma) = \MZV[N][\imag\pi]$.
\end{remark}
The case of multiple zeta values $\MZV \defas \MZV[1]$ was studied and finally understood motivically by Francis Brown \cite{Brown:MixedTateMotivesOverZ}. With his results, the conjectures on periods would imply the existence of an isomorphism $\MZV \isomorph \Q[\pi^2,\Lyndons(\set{3,5,7,\ldots})]$ of weight-graded algebras. A concrete result settles a conjecture of Hoffman and provides a small set of generators.
\begin{theorem}
	\label{theorem:HoffmanBasis}%
	Multiple zeta values are spanned by the Hoffman elements
	\begin{equation}
		\MZV = \Q + \lin_{\Q} \setexp{\mzv{n_1,\ldots,n_r}}{\text{all}\ n_1,\ldots,n_r \in \set{2,3}}.
		\label{eq:HoffmanBasis}%
	\end{equation}
\end{theorem}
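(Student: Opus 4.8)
The plan is to lift the identity to the level of \emph{motivic} multiple zeta values and prove the spanning statement there, then descend via the period map. Write $\mathcal{Z}^{\mathfrak{m}}$ for the graded $\Q$-algebra of motivic MZVs of Goncharov and Brown, equipped with its weight grading $\mathcal{Z}^{\mathfrak{m}} = \bigoplus_N \mathcal{Z}^{\mathfrak{m}}_N$; there is a surjective, weight-respecting period homomorphism $\mathcal{Z}^{\mathfrak{m}} \to \MZV$ sending $\zeta^{\mathfrak{m}}(\vec{n}) \mapsto \mzv{\vec{n}}$. Because this map is onto and $\mathcal{Z}^{\mathfrak{m}}$ is a direct sum of its graded pieces, it suffices to show that the motivic Hoffman elements $\zeta^{\mathfrak{m}}(n_1,\ldots,n_r)$ with all $n_i \in \set{2,3}$ span $\mathcal{Z}^{\mathfrak{m}}_N$ over $\Q$ for every $N$. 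The main structural tool is the coaction: $\mathcal{Z}^{\mathfrak{m}}$ is a comodule over the Hopf algebra $\mathcal{A} = \mathcal{Z}^{\mathfrak{m}}/(\zeta^{\mathfrak{m}}(2))$, and dualizing its infinitesimal part yields derivations $D_{2r+1}\colon \mathcal{Z}^{\mathfrak{m}}_N \to \mathcal{L}_{2r+1}\otimes \mathcal{Z}^{\mathfrak{m}}_{N-2r-1}$ for $3 \le 2r+1 < N$, where $\mathcal{L} = \mathcal{A}_{>0}/(\mathcal{A}_{>0})^2$ are the indecomposables.

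First I would invoke Brown's key injectivity result: in weight $N$ the joint map $D_{<N} \defas \bigoplus_{3\le 2r+1<N} D_{2r+1}$ has kernel exactly $\Q\cdot\zeta^{\mathfrak{m}}(N)$. This reduces spanning to an inductive statement over the weight. Assuming every weight below $N$ is Hoffman-spanned, I obtain from induction a Hoffman basis of the targets $\mathcal{Z}^{\mathfrak{m}}_{N-2r-1}$, with respect to which $D_{<N}$ is represented by an explicit matrix acting on the weight-$N$ Hoffman elements. If that matrix has rank $d_N - 1$, where $d_N = d_{N-2}+d_{N-3}$ is the Padovan-type count that equals the number of $\set{2,3}$-words of weight $N$, then the $d_N$ Hoffman elements span $\mathcal{Z}^{\mathfrak{m}}_N$ modulo $\ker D_{<N} = \Q\zeta^{\mathfrak{m}}(N)$.

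The heart of the argument is evaluating $D_{2r+1}$ on a Hoffman word explicitly. The combinatorial formula for the coaction on iterated integrals shows that the left tensor factor of $D_{2r+1}\zeta^{\mathfrak{m}}(\vec{n})$ collects precisely the motivic special values of the shape $\zeta^{\mathfrak{m}}(2^{a},3,2^{b})$. The decisive arithmetic input is Zagier's evaluation of $\zeta(2^a,3,2^b)$ as an explicit $\Q$-linear combination of the products $\pi^{2a+2b+2}\,\zeta(2c+1)$ with coefficients built from binomial numbers. Feeding this into the matrix of $D_{<N}$ and reducing modulo $2$, one shows that the resulting change-of-basis matrix is invertible over $\Q$ (equivalently, a specific matrix of binomial coefficients is non-singular mod $2$). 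I expect this invertibility to be the main obstacle: it is where genuinely hard arithmetic—Zagier's formula together with the $2$-adic non-degeneracy—enters, whereas everything else is formal manipulation of the coaction and induction on weight.

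Finally I would close the induction and descend. The rank computation gives spanning modulo $\zeta^{\mathfrak{m}}(N)$, and a separate check (for odd $N$ from the known reducibility of $\zeta^{\mathfrak{m}}(N)$ into depth-two $\set{2,3}$-values, and for even $N$ from $\zeta^{\mathfrak{m}}(N)$ being a rational multiple of a power of $\zeta^{\mathfrak{m}}(2) = \tfrac{3}{4}\zeta^{\mathfrak{m}}(2,2)^{\ast}$-type all-twos elements) places $\zeta^{\mathfrak{m}}(N)$ itself in the Hoffman span. Applying the period map and using its surjectivity then yields $\MZV = \Q + \lin_{\Q}\setexp{\mzv{n_1,\ldots,n_r}}{\text{all}\ n_i \in \set{2,3}}$, as claimed. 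The matching Goncharov--Deligne upper bound $\dim \mathcal{Z}^{\mathfrak{m}}_N \le d_N$ would upgrade spanning to a genuine basis, but this is not needed for the stated spanning identity.
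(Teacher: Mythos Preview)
The paper does not supply a proof of this theorem; it is quoted as a result of Francis Brown with a citation to \cite{Brown:MixedTateMotivesOverZ}. Your outline is in fact a faithful sketch of Brown's own argument: the lift to motivic MZVs, the infinitesimal coaction and the derivations $D_{2r+1}$, the kernel statement $\ker D_{<N}=\Q\,\zeta^{\mathfrak m}(N)$, Zagier's explicit evaluation of $\zeta(2,\ldots,2,3,2,\ldots,2)$, and the $2$-adic invertibility of the resulting matrix are exactly the ingredients he assembles.

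There is, however, a genuine logical gap in your final paragraph. You assert that the Deligne--Goncharov upper bound $\dim_{\Q}\mathcal Z^{\mathfrak m}_N\le d_N$ is ``not needed for the stated spanning identity''. This is backwards: the entire machinery you describe establishes \emph{linear independence} of the motivic Hoffman elements, not spanning. Knowing that the matrix of $D_{<N}$ on the $d_N$ Hoffman elements has rank $d_N-1$ tells you only that their image under $D_{<N}$ is $(d_N-1)$-dimensional; to conclude they span $\mathcal Z^{\mathfrak m}_N$ modulo $\ker D_{<N}$ you must know that this quotient has dimension at most $d_N-1$, i.e.\ that $\dim\mathcal Z^{\mathfrak m}_N\le d_N$. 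That is precisely the Deligne--Goncharov bound, and without it nothing prevents $\mathcal Z^{\mathfrak m}_N$ from being strictly larger than the Hoffman span. Your ``separate check'' that $\zeta^{\mathfrak m}(N)$ for odd $N$ lies in the Hoffman span is likewise not independently available: there is no known direct formula expressing $\zeta(2k+1)$ as a $\Q$-combination of $\{2,3\}$-indexed MZVs that does not already presuppose the theorem. In Brown's proof the logic runs the other way---linear independence plus the upper bound gives a basis, and spanning is a corollary.
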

For our automated computations it is important to make such a statement effective, which means that we need an efficient method to determine explicitly a reduction of any multiple zeta value to this (or any other) conjectural basis. This can be achieved by the coproduct-based algorithm \cite{Brown:DecompositionMotivicMZV} which is available as a program \cite{Schnetz:ZetaProcedures}.

But the generators of Hoffman are not optimal in that they have very high depth for a given weight, and we like to express results with smallest depth possible. The conjecture due to Broadhurst and Kreimer \cite{BroadhurstKreimer:MZVPositiveKnots} on the depth filtration of MZV is still open even motivically \cite{Brown:DepthGraded}. For our practical purposes though, we only consider small weights (so far we did not exceed weight $11$ in any of our computations) and can therefore harvest the \emph{data mine} \cite{BluemleinBroadhurstVermaseren:Datamine}, which provides proven reductions to a (conjecturally) depth-minimal set of generators.

The data mine also covers alternating sums $\MZV[2]$, for which we know \cite{Deligne:GroupeFondamentalMotiviqueN}
\begin{theorem}
	\label{theorem:Deligne-N=2}%
	Every alternating sum is a rational linear combination of products of $\pi^{2p}$ and $\Li_{n_1,\ldots,n_r}(1,\ldots,1,-1)$ for Lyndon words with odd indices $n_i \in \OddN \defas \set{1,3,5,\ldots}$, with the same weight and at most the same depth.\footnote{Note that any power $\pi^{2p}$ ($p>0$) has depth 1 here (because it is a rational multiple of $\Li_{2p}(1)$).} In particular we can write
	\begin{equation}
		\MZV[2] = \Q\left[
			\pi^2,
			\Li_{n}(1,\ldots,1,-1)\colon n=(n_1,\ldots,n_r) \in \Lyndons(\OddN)
		\right].
		\label{eq:Alternating-Generators}%
	\end{equation}
\end{theorem}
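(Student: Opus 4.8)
The plan is to lift the identity to the level of motivic periods and read it off from the structure of the motivic Galois group of mixed Tate motives over $\Z[1/2]$, then descend via the period map. Recall that the alternating sums in $\MZV[2]$ are precisely, by \eqref{eq:MZV-N-as-hlog}, the regularised values at $1$ of hyperlogarithms with letters in $\set{0,1,-1}$, i.e.\ periods of the pro-unipotent fundamental groupoid of $\Projective^1 \setminus \set{0,\pm 1,\infty}$. First I would introduce the corresponding \emph{motivic} alternating sums $\Li^{\mathfrak{m}}_{\vec{n}}(\dots,-1)$ in the graded ring $\mathcal{P}^{\mathfrak{m}}$ of motivic periods of the Tannakian category $MT(\Z[1/2])$ of mixed Tate motives over $\Z[1/2]$; the period homomorphism $\mathrm{per}\colon \mathcal{P}^{\mathfrak{m}} \to \C$ sends them to the corresponding complex numbers, so it suffices to prove the statement for the motivic lifts and apply $\mathrm{per}$ at the end.

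The heart of the argument is the structure of the motivic Galois group $\mathcal{G} = \mathbb{G}_m \ltimes \mathcal{U}$ of $MT(\Z[1/2])$, whose unipotent radical $\mathcal{U}$ has Lie algebra $\mathfrak{u}$. Because the category is mixed Tate, $\operatorname{Ext}^2$ between Tate objects vanishes and hence $\mathfrak{u}$ is a \emph{free} graded Lie algebra; its generators are dual to $\operatorname{Ext}^1(\Q(0),\Q(n)) \cong K_{2n-1}(\Z[1/2]) \otimes \Q$. Borel's computation gives $\dim_\Q K_{2n-1}(\Z[1/2])\otimes\Q = 1$ exactly when $n$ is odd (the extra unit $2$, i.e.\ $K_1(\Z[1/2])\otimes\Q = \Z[1/2]^\times \otimes \Q$, supplies the new generator in weight $1$ that is absent for $\Z$) and $0$ otherwise. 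Thus $\mathfrak{u}$ is freely generated by one element in each odd weight $1,3,5,\dots$, and dually the Hopf algebra $\mathcal{O}(\mathcal{U})$ is cofree (a shuffle algebra) on the same data. The second nontrivial input, due to Deligne, is that the motivic fundamental group of $\Projective^1\setminus\set{0,\pm1,\infty}$ \emph{surjects} onto $\mathcal{U}$, so that every period of $MT(\Z[1/2])$ is a $\Q$-combination of alternating sums; this makes the motivic lifts span all of $\mathcal{P}^{\mathfrak{m}}$.

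It then remains to match the abstract generating set of $\mathcal{P}^{\mathfrak{m}}$ with the explicit family in \eqref{eq:Alternating-Generators}. Adjoining the Tate line $\mathbb{G}_m$, which contributes the single even generator $\pi^2$ in weight $2$, a Hilbert-series count gives $\sum_n \dim (\mathcal{P}^{\mathfrak{m}})_n\, t^n = 1/(1-t-t^2)$; on the other hand, since the tensor algebra on letters of weights $1,3,5,\dots$ is the free polynomial algebra on Lyndon words (as in \eqref{eq:def-lyndon-words}), the polynomial ring $\Q[\pi^2,\Li^{\mathfrak{m}}_{\vec{n}}(\dots,-1)\colon \vec{n}\in\Lyndons(\OddN)]$ has exactly the same Hilbert series. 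Hence a weight-by-weight dimension count forces these Lyndon generators to be algebraically independent and to span, giving the motivic identity; applying $\mathrm{per}$ yields \eqref{eq:Alternating-Generators}. The refinement ``same weight and at most the same depth'' follows because the coaction of $\mathcal{G}$ respects both the weight grading and the depth filtration, so the change of basis to Lyndon words cannot raise the depth.

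The main obstacle I expect is establishing the two motivic inputs cleanly: the freeness-and-dimension statement for $\mathfrak{u}$, which packages Borel's theorem together with the vanishing of $\operatorname{Ext}^2$, and, above all, Deligne's \emph{unramifiedness/generation} result that the motivic $\pi_1$ of $\Projective^1\setminus\set{0,\pm1,\infty}$ exhausts $MT(\Z[1/2])$. The parity phenomenon --- that only odd-weight generators survive --- is exactly the subtle point: it originates in the vanishing of the even $K$-groups and, period-theoretically, in the action of the real Frobenius (complex conjugation), which is the same mechanism behind the parity theorem for sixth roots of unity established later in this section.
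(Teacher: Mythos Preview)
The paper does not give its own proof of this theorem; it states the result with a citation to Deligne. Your outline does follow Deligne's strategy: lift to motivic periods, use that $\mathfrak{u}$ is free on one generator per odd weight (Borel plus vanishing of $\mathrm{Ext}^2$), and invoke the generation theorem for $\Projective^1\setminus\set{0,\pm 1,\infty}$.

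There is, however, a real gap in your final step. You establish that $\mathcal{P}^{\mathfrak{m}}$ has Hilbert series $1/(1-t-t^2)$ and that the polynomial ring on $\pi^2$ and the Lyndon words in odd letters has the same Hilbert series \emph{provided} those generators are algebraically independent. But equal Hilbert series cannot ``force'' both independence and spanning out of nothing: a dimension count gives you one only once you already have the other. Knowing that \emph{all} motivic alternating sums span $\mathcal{P}^{\mathfrak{m}}$ does not tell you that the particular subset $\Li^{\mathfrak{m}}_{\vec{n}}(1,\ldots,1,-1)$ for $\vec{n}\in\Lyndons(\OddN)$ spans; a priori these specific elements could collapse. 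The missing ingredient is exactly the heart of Deligne's argument: one must verify that the depth-one elements $\Li^{\mathfrak{m}}_1(-1)=-\log^{\mathfrak{m}}2$ and $\Li^{\mathfrak{m}}_{2k+1}(-1)$ have non-zero image in $\mathfrak{u}^{\mathrm{ab}}$ in every odd weight. Once that is established, freeness of $\mathfrak{u}$ implies these elements generate, the Lyndon monomials span, and then the dimension count delivers independence (and the depth claim follows, since the change of basis is then triangular for the depth filtration). What you cite as ``Deligne's surjectivity'' is in fact \emph{proved} via this non-vanishing computation, so it cannot be used as a black box separate from it.
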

Conjecturally, there are no further relations and $\MZV[2] \isomorph \Q[\pi^2] \tp T(\OddN)$ is an isomorphism of weight-graded and depth-filtered algebras. Note that \eqref{eq:Alternating-Generators} and \eqref{eq:Mpl-as-Hlog} show that each alternating sum is a hyperlogarithm over $\set{0,1}$ at $z=-1$:
\begin{equation*}
	\MZV[2] 
	= \AnaReg{z}{-1} \HlogAlgebra(\set{0,1})(z)
	= \int_0^{-1} T(\set{0,1}) \letter{1}.
\end{equation*}
We encounter alternating sums in many Feynman integral computations, but in the final answers for massless integrals these always combined to just multiple zeta values $\MZV$. We give an example of this phenomenon in section~\ref{sec:P79}.

\subsection{Primitive sixth roots of unity}
For one period we computed (see section~\ref{sec:P711}), the space $\MZV[2]$ was not enough and we needed sixth roots of unity. Far less data is available on these sums than in the previous cases and no table of reductions to a conjectural basis exists for high weights.\footnote{Very recently, a data mine for the Deligne subalgebra $\Deligne$ became available \cite{Broadhurst:Aufbau}.}
A detailed analysis up to weight four was performed by David Broadhurst in \cite{Broadhurst:SixthRoots}, who observed that Feynman integrals tend to lie in very special subspaces of $\MZV[6]$. One of them is by now well understood due to Deligne \cite{Deligne:GroupeFondamentalMotiviqueN}.
Let $\xi_6 \defas e^{\imag\pi/3}$ denote a primitive sixth root of unity and $\conjugate{\xi_6} = \xi_6^{-1} = 1-\xi_6$ its conjugate. We quote
\begin{theorem}
	\label{theorem:Deligne-N=6}%
	Define $\Deligne$ as the subalgebra of $\MZV[6]$ generated by $\Li_{n_1,\ldots,n_r}(z_1,\ldots,z_r)$ with $z_1,\ldots,z_r \in \set{1,\xi_6,\conjugate{\xi_6}}$ such that all products $\setexp{z_k\cdots z_r}{1\leq k \leq r}$ are contained either in $\set{1,\xi_6}$ or in $\set{1,\conjugate{\xi_6}}$, and $(n_r,z_r)\neq(1,1)$.

	Then each element of $\Deligne$ is a rational linear combination of products of $\imag\pi$ and $\Li_{n_1,\ldots,n_r}(1,\ldots,1,\xi_6)$ for Lyndon words (with $n_1,\ldots,n_r>1$), with at most the same total weight and depth (any power $(\imag\pi)^p$ with $p>0$ has depth $1$, see below):
	\begin{equation}
		\Deligne
	= \Q\left[ (\imag\pi), \Li_n(1,\ldots,1,\xi_6)\colon n=(n_1,\ldots,n_r) \in \Lyndons(\N \setminus \set{1}) \right].
		\label{eq:Deligne-Generators}%
	\end{equation}
\end{theorem}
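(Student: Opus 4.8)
The plan is to recognise this as a statement about motivic periods rather than about the transcendental numbers directly, and to prove the stronger motivic analogue from which the displayed presentation follows by applying the period map. The strategy closely parallels the alternating case in Theorem~\ref{theorem:Deligne-N=2}: one shows that the relevant graded Hopf (comodule) algebra of motivic periods is \emph{free} as a commutative algebra, exhibits explicit algebra generators, and then transcribes them into the MPL/hyperlogarithm language of Lemma~\ref{lemma:Hlog-as-Mpl}. Concretely, I would first replace each $\Li_{n_1,\ldots,n_r}(z_1,\ldots,z_r)$ occurring in the definition of $\Deligne$ by its motivic lift, obtaining a sub-comodule-algebra $\Deligne^{\mathfrak m}$ of the motivic periods of the mixed Tate motives over $\mathcal{O}_6 \defas \Z[\xi_6][1/6]$. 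The condition that every partial product $z_k\cdots z_r$ lie entirely in $\set{1,\xi_6}$ or entirely in $\set{1,\conjugate{\xi_6}}$ is exactly what guarantees, via Lemma~\ref{lemma:Hlog-as-Mpl}, that the associated word has all letters in $\set{0,1,\xi_6}$ or all in $\set{0,1,\conjugate{\xi_6}}$, i.e.\ that it represents a period of a single admissible configuration and so lands in this Deligne subalgebra rather than in the full $\MZV[6]$.

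The heart of the argument is the structure of the motivic Galois group $\mathcal{G}$ of $\mathrm{MT}(\mathcal{O}_6)$, an extension $1\to U\to\mathcal{G}\to\mathbb{G}_m\to 1$ with $U$ pro-unipotent and graded Lie algebra $\mathfrak{u}=\bigoplus_{n\geq 1}\mathfrak{u}_{-n}$. By the Tannakian dictionary the generators of $\mathfrak{u}$ in degree $-n$ are counted by $\operatorname{Ext}^1_{\mathrm{MT}(\mathcal{O}_6)}(\Q(0),\Q(n))$ and the relations by $\operatorname{Ext}^2$; I would compute these via Borel's regulator theorem and the arithmetic of the imaginary quadratic field $\Q(\xi_6)=\Q(\sqrt{-3})$. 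The decisive input, which is Deligne's theorem for the exceptional list $N\in\set{1,2,3,4,6,8}$ \cite{Deligne:GroupeFondamentalMotiviqueN}, is that for $N=6$ the Lie algebra $\mathfrak{u}$ is \emph{free}, so that there are no relations to account for. I would then fix explicit period representatives: the weight-one direction contributes the class whose period is $\imag\pi$, while in each higher weight the depth-one polylogarithm $\Li_n(\xi_6)$ is shown, via its motivic coaction, to be a nonzero indecomposable realising a free generator.

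Granting freeness, the remaining steps are formal. Dualising the free Lie algebra $\mathfrak{u}$ and using the Milnor--Moore and Poincaré--Birkhoff--Witt theorems, the coordinate ring of $U$, and hence $\Deligne^{\mathfrak m}$, is a free commutative (polynomial) algebra whose generators are dual to a homogeneous basis of $\mathfrak{u}$. Exactly as in the Radford identity $T(\Sigma)\isomorph\Q[\Lyndons(\Sigma)]$ used throughout the shuffle-algebra discussion, a Lyndon basis turns this into polynomial generators indexed by Lyndon words; separating the Tate ($\mathbb{G}_m$) direction into the factor $\Q[(\imag\pi)]$ leaves Lyndon words $n=(n_1,\ldots,n_r)$ in the weights $n_i>1$, precisely mirroring the structure $\MZV[2]\isomorph\Q[\pi^2]\tp T(\OddN)$ of Theorem~\ref{theorem:Deligne-N=2}. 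Under the MPL/hyperlogarithm translation of Lemma~\ref{lemma:Hlog-as-Mpl} these generators become the iterated integrals $\Li_{n_1,\ldots,n_r}(1,\ldots,1,\xi_6)$, which yields the presentation \eqref{eq:Deligne-Generators}; algebraic independence of the family (so that it is a basis and not merely a spanning set) again descends from freeness on the motivic side, the period map being a morphism of comodule algebras.

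The main obstacle is concentrated entirely in the freeness of $\mathfrak{u}$, that is, in the vanishing of the relation space $\operatorname{Ext}^2_{\mathrm{MT}(\mathcal{O}_6)}(\Q(0),\Q(n))$ together with the precise dimension count for $\operatorname{Ext}^1$. This is the genuinely arithmetic core of Deligne's result and is special to $N\in\set{1,2,3,4,6,8}$: for generic $N$ the Lie algebra acquires further generators and relations, and no comparably clean polynomial presentation survives. A secondary difficulty is the realisation step, namely verifying that the depth-one values $\Li_n(\xi_6)$ are nonzero and independent modulo products and lower weight, which I would settle by extracting the leading term of their motivic coaction and matching it against the coaction on a chosen basis of $\mathfrak{u}^{\vee}$. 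Since all of this is carried out in \cite{Deligne:GroupeFondamentalMotiviqueN}, the proof in practice reduces to quoting that structural theorem and performing the Lyndon-word bookkeeping above.
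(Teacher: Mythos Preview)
The paper does not prove this theorem; it is quoted from Deligne \cite{Deligne:GroupeFondamentalMotiviqueN} (the sentence immediately preceding the statement is ``We quote''). Your outline is essentially a sketch of Deligne's own argument---the Tannakian description of $\mathrm{MT}(\mathcal{O}_6)$, the $\mathrm{Ext}$-computation showing the graded Lie algebra $\mathfrak{u}$ is free for $N=6$, and the passage to Lyndon-word generators via Radford/PBW---so in that sense you and the paper agree: the content lives entirely in \cite{Deligne:GroupeFondamentalMotiviqueN}.

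One point to tighten: your final sentence claims that algebraic independence of the $\Li_{\vec{n}}(1,\ldots,1,\xi_6)$ ``descends from freeness on the motivic side, the period map being a morphism of comodule algebras.'' Being a morphism gives surjectivity onto the image, hence the spanning statement and the equality \eqref{eq:Deligne-Generators} read as \emph{generation}. Algebraic independence of the actual periods would require injectivity of the period map, which is the period conjecture and remains open. The paper makes exactly this distinction in the sentence following the theorem: ``Again the main conjectures imply an isomorphism $\Deligne \isomorph \Q[\imag\pi] \tp T(\N \setminus \set{1})$.'' So what Deligne proves unconditionally is the spanning/generation statement with weight and depth control, not a basis statement; your sketch is correct for that, but drop the independence claim.
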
%
\nomenclature[Z 6 D]{$\Deligne$}{Deligne's subalgebra of $\MZV[6]$, theorem~\ref{theorem:Deligne-N=6}\nomrefpage}%
Again the main conjectures imply an isomorphism $\Deligne \isomorph \Q[\imag\pi] \tp T(\N \setminus \set{1})$, respecting weight and depth. As for alternating sums, this is an algebra of special values of hyperlogarithms (integrating along the straight line from $0$ to $\xi_6$):
\begin{equation}
	\Deligne = \HlogAlgebra(\set{0,1})(\xi_6)
	= \Q[\imag\pi] \int_0^{\xi_6} T(\set{0,1}) \letter{1}.
	\label{eq:Deligne-as-Mpl}%
\end{equation}
Through its definition, $\Deligne \supseteq \MZV$ contains the multiple zeta values and is closed under complex conjugation. For our application we needed to determine the real and imaginary parts of $\Deligne$ separately. In depth one we know \cite[chapter VII, section 5.3]{Lewin:PolylogarithmsAssociatedFunctions}
\begin{equation}
	\Li_n \left( e^{2\pi \imag x} \right)  + (-1)^n \Li_n\left( e^{-2\pi\imag x} \right)
	= - \frac{(2\pi\imag)^n}{n!} B_n(x)
	\label{eq:Li-bernoulli}%
\end{equation}
in terms of the rational Bernoulli polynomials $B_n(x)$, so any power of $\imag\pi$ has depth one and $\Realteil(\Li_{2n}(\xi_6))$ and $\imag\Imaginaerteil(\Li_{2n+1}(\xi_6))$ lie in $\Q[\imag\pi]$. On the other hand, the complementary $\Imaginaerteil(\Li_{2n}(\xi_6))$ and $\Realteil(\Li_{2n+1}(\xi_6))$ are expected to be transcendental constants independent of $\pi$. Indeed, already Lewin noticed that \cite[chapter VII, section 3.3]{Lewin:PolylogarithmsAssociatedFunctions}
\begin{equation}
	\Realteil \Big( \Li_{2n+1}(\xi_6) \Big)
	= \frac{1}{2} \left(1-2^{-2n}\right)\left( 1-3^{-2n} \right) \mzv{2n+1}.
\end{equation}
We generalize this parity result to all multiple polylogarithms in
\begin{proposition}\label{prop:Deligne-oddeven}%
	Abbreviate $\Li_{\vec{n}}(\xi_6) \defas \Li_{n_1,\ldots,n_r}(1,\ldots,1,\xi_6)$ and write $\abs{\vec{n}} \defas n_1+\cdots+ n_r$ for its weight. Then Deligne's subalgebra coincides with
	\begin{equation}
		\Deligne
		= \Q\left[(\imag\pi),
			\imag^{r+\abs{\vec{n}}}\Realteil\Big(\imag^{r+\abs{\vec{n}}}\Li_{\vec{n}}(\xi_6)\Big)\colon
			\vec{n}=(n_1,\ldots,n_r) \in \Lyndons\left( \N\setminus\set{1} \right)
		\right]
		\label{eq:Deligne-oddeven}%
	\end{equation}
	and every $\Li_{\vec{n}}(\xi_6)$ has a representation as a polynomial in these generators with less or equal weight and depth (where any power $(\imag\pi)^p$, $p>0$, is understood to have depth $1$).\footnote{So in particular, $\Realteil(\Li_{\vec{n}}(\xi_6))$ is expressible in terms of words with lower depth than $\vec{n}$ and products of lower weight, if $\vec{n}$ has weight and depth of different parity. The analogue holds for the imaginary parts, see also \eqref{eq:mpl-xi6-ReIm-reducible}.}
\end{proposition}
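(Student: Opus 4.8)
The plan is to deduce Proposition~\ref{prop:Deligne-oddeven} from the generation statement of Theorem~\ref{theorem:Deligne-N=6} by analysing how complex conjugation acts on the generators $\Li_{\vec{n}}(\xi_6)$. First I would record that the proposed generator is merely a parity-dependent projection: writing $w \defas \abs{\vec{n}}$, $r$ for the depth and $c$ for complex conjugation, a short computation with $\imag^{r+w} \in \set{\pm 1,\pm\imag}$ gives
\begin{equation*}
  \imag^{r+w}\Realteil\bigl(\imag^{r+w}\Li_{\vec{n}}(\xi_6)\bigr)
  = \begin{cases}
      \Realteil\bigl(\Li_{\vec{n}}(\xi_6)\bigr) & \text{if } r+w \text{ is even,}\\
      -\imag\,\Imaginaerteil\bigl(\Li_{\vec{n}}(\xi_6)\bigr) & \text{if } r+w \text{ is odd.}
    \end{cases}
\end{equation*}
Since $\conjugate{\xi_6} = 1-\xi_6$ and the indices are real, $c\,\Li_{\vec{n}}(\xi_6) = \Li_{\vec{n}}(\conjugate{\xi_6})$, and $\Deligne$ is closed under $c$, so both projections $\Realteil = \tfrac12(1+c)$ and $\imag\Imaginaerteil = \tfrac12(1-c)$ preserve $\Deligne$. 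Starting from \eqref{eq:Deligne-Generators}, the whole proposition then follows from the single parity claim
\begin{equation*}
  c\,\Li_{\vec{n}}(\xi_6) \equiv (-1)^{r+w}\,\Li_{\vec{n}}(\xi_6)
  \pmod{\text{products},\ \text{lower depth},\ \Q[\imag\pi]}
  \tag{$\star$}
\end{equation*}
because $(\star)$ says precisely that the \emph{complementary} projection (the imaginary part when $r\equiv w$, the real part otherwise) equals $\mp\tfrac12$ of the correction terms, hence is a polynomial in $\imag\pi$ and generators of strictly lower depth or weight; this is the reduction asserted in the footnote and delivers the weight-$w$, depth-$\leq r$ bound automatically.

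The base case, depth $r=1$, is exactly the Bernoulli reflection \eqref{eq:Li-bernoulli}: with $\xi_6 = e^{2\pi\imag/6}$ it gives $\Li_n(\xi_6) + (-1)^n\Li_n(\conjugate{\xi_6}) \in \Q[\imag\pi]$, so $c\,\Li_n(\xi_6)\equiv(-1)^{n+1}\Li_n(\xi_6)$ modulo $\Q[\imag\pi]$, and here $n+1=r+w$. (Consistently, $\Li_1(\xi_6)=\imag\pi/3$ lies in $\Q[\imag\pi]$, which is why $\imag\pi$ is adjoined separately rather than forced into the parity pattern.)

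For the inductive step I would pass to motivic periods, lifting each $\Li_{\vec{n}}(\xi_6)$ to a motivic period $\Li^{\mathfrak m}_{\vec{n}}(\xi_6)$ in the motivic Deligne algebra $\mathcal{Z}^{\mathfrak m}_D$ furnished by Deligne's freeness theorem \cite{Deligne:GroupeFondamentalMotiviqueN} (of which Theorem~\ref{theorem:Deligne-N=6} is the period realisation), realise $c$ as the real Frobenius $F_\infty$, and work in the Lie coalgebra $\mathcal{L}$ of indecomposables (positive-weight periods modulo products), filtered by depth. The engine is the infinitesimal coaction $D=\bigoplus_{n\geq2}D_n$, compatible with the real structure via $D\circ F_\infty=(F_\infty\otimes F_\infty)\circ D$, which maps a depth-$r$ element into depth-one, weight-$n$ pieces tensored with strictly-lower-depth pieces. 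For a depth-$r$, weight-$w$ element $x$, each summand of $Dx$ has a depth-one left factor of some weight $n\geq2$ and a depth-$(r-1)$, weight-$(w-n)$ right factor; by the base case $F_\infty$ acts on the left factor by $(-1)^{1+n}$ and, by the depth induction hypothesis, on the right factor by $(-1)^{(r-1)+(w-n)}$, whose product is $(-1)^{r+w}$ uniformly. Hence $D\bigl(F_\infty x-(-1)^{r+w}x\bigr)\equiv0$, and injectivity of $D$ on $\mathcal{L}$ — the analogue for $\mathcal{Z}^{\mathfrak m}_D$ of Brown's kernel theorem \cite{Brown:MixedTateMotivesOverZ}, again a consequence of Deligne's freeness — upgrades this to $F_\infty x\equiv(-1)^{r+w}x$ modulo products, which is $(\star)$. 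Applying the period homomorphism, surjective onto $\Deligne$, descends $(\star)$ to actual periods.

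The main obstacle I anticipate is the weight-one Tate contribution: $F_\infty(\imag\pi)=-\imag\pi$ violates the naive sign $(-1)^{1+1}=+1$, so the weight-one left factors produced by the coaction must be quarantined. Concretely I would run the induction in the reduced coaction valued in the quotient $\mathcal{A}=\mathcal{Z}^{\mathfrak m}_D/(\imag\pi)$, where the Tate generator dies and only genuine depth-one factors of weight $\geq2$ survive on the left, then verify that $F_\infty$ descends to $\mathcal{A}$ and that the $\Q[\imag\pi]$-indeterminacy in $(\star)$ absorbs exactly the discarded pieces; establishing $D\circ F_\infty=(F_\infty\otimes F_\infty)\circ D$ from the functoriality of the motivic coaction under the real structure is the accompanying technical point. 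An alternative avoiding motivic machinery would derive $(\star)$ from the $z\mapsto1-z$ functional equation for one-variable hyperlogarithms, since $c\,\Hyper{w}(\xi_6)=\Hyper{w}(1-\xi_6)$ by \eqref{eq:Deligne-as-Mpl}; but reading the parity sign off the resulting shuffle-regularised combination, rather than off the coaction, looks combinatorially heavier.
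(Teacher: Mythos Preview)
Your reduction of the proposition to the parity claim $(\star)$ is exactly right, and your proposed motivic proof via the real Frobenius and the infinitesimal coaction is sound in outline (it is essentially the independent approach of Schnetz alluded to in the introduction). But this is \emph{not} how the paper proceeds.

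The paper's proof is entirely elementary and avoids motivic machinery. The key observation you overlooked is that $\conjugate{\xi_6}=\xi_6^{-1}$ (not only $1-\xi_6$), so complex conjugation is realised by the \emph{inversion} $f(z)=1/z$ rather than $z\mapsto 1-z$. Under $\WordTransformation{f}$ one has $\letter{0}\mapsto-\letter{0}$ and $\letter{1}\mapsto\letter{1}-\letter{0}$, hence for $w=\letter{0}^{n_r-1}\letter{1}\cdots\letter{0}^{n_1-1}\letter{1}$ the leading term is $\WordTransformation{f}(w)=(-1)^{r+\abs{\vec n}}w$ plus words of strictly lower depth. Writing $\Li_{\vec n}(\conjugate{\xi_6})=(-1)^r\int_0^{1/\xi_6}w=(-1)^r\int_{\infty}^{\xi_6}\WordTransformation{f}(w)$ and splitting the path at $0$ via lemma~\ref{lemma:hlog-path-concatenation-singular} yields $(\star)$ directly, once one knows that the boundary constants $\AnaReg{z}{\infty}\int_0^z w$ are reducible. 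That last fact is precisely the corollary the paper derives from the MZV parity theorem (Theorem~\ref{theorem:MZV-parity}), which thus plays the role your argument assigns to the injectivity of the coaction.

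In short: your approach works but imports Deligne's freeness and the motivic real structure; the paper's approach needs only a M\"obius transformation of hyperlogarithms together with the classical MZV parity theorem. The trade-off is that the motivic argument generalises more readily (it never uses the specific geometry of $\xi_6$), whereas the paper's argument is self-contained within the hyperlogarithm toolkit already developed in the chapter.
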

These generators $\Realteil\left( \Li_{\vec{n}}(\xi_6) \right)$ ($r+\abs{\vec{n}}$ even) and $\imag \Imaginaerteil\left( \Li_{\vec{n}}(\xi_6) \right)$ ($r+\abs{\vec{n}}$ odd) have the benefit that their products split into generators (conjecturally bases) of the subspaces $\Deligne = \Realteil \Deligne \oplus \imag \Imaginaerteil \Deligne$.

For the proof we need the well-known parity theorem on multiple zeta values \cite{Tsumura:CombinatorialEulerZagier,IharaKanekoZagier:DerivationDoubleShuffle}.
\begin{theorem}
	\label{theorem:MZV-parity}%
	Any multiple zeta value $\mzv{n_1,\ldots,n_r}$ with $r+\abs{\vec{n}}$ odd (except $\mzv{2}$) is a rational linear combination of MZV of depth $<r$ and products of MZV of weight $<\abs{\vec{n}}$.
\end{theorem}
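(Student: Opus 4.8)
The plan is to derive the parity theorem from the two product structures that multiple zeta values carry, as iterated integrals and as nested sums, combined with the duality coming from the Möbius involution $z\mapsto1-z$. Throughout I write $\mzv{\vec n}=\Hyper{w}(1)$ for the convergent word $w=\letter{0}^{n_r-1}\letter{1}\cdots\letter{0}^{n_1-1}\letter{1}$ as in \eqref{eq:Mpl-as-Hlog}, so weight is $\abs{w}$ and depth $r$ is the number of letters $\letter{1}$. First I would put both multiplication rules on the same span of words. The iterated-integral side gives the shuffle product of Lemma \ref{lemma:iint-shuffle}, so $\mzv{\vec a}\cdot\mzv{\vec b}$ is the value at one of $u\shuffle v$ for the corresponding words $u,v$. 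The nested-sum side, read off from the Z-sum expansion of Lemma \ref{lemma:hlog-zsum}, gives the stuffle (harmonic) product, which resolves the same product into interleavings of the arguments $\vec a,\vec b$ together with the diagonal terms where indices are merged. Subtracting the two expressions yields a supply of $\Q$-linear relations in which the top-depth terms on both sides are interleavings that differ only by which entries were added; this is the regularized double-shuffle input.

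Second I would bring in duality. Applying Lemma \ref{lemma:Moebius-transformation} with $f(z)=1-z$ swaps the alphabet, $\letter 0\leftrightarrow\letter 1$ (since $f(\infty)=\infty$ and $\letter{\infty}\defas0$), and sends the path $0\to1$ to $1\to0$; composing with the path-reversal antipode \eqref{eq:def-antipode}, which reverses a word and contributes $(-1)^{\abs w}$, gives $\Hyper{w}(1)=(-1)^{\abs w}\Hyper{\widetilde{\Phi_f(w)}}(1)$, i.e.\ the duality relation $\mzv{\vec n}=\mzv{\vec n^{\vee}}$ for the dual composition $\vec n^{\vee}$. This relation is weight-preserving but exchanges the roles of weight and depth in a controlled way and carries the explicit sign $(-1)^{\abs w}$ before the reversal is absorbed. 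It is this sign, tracked against the parity of $\abs{\vec n}+r$, that will force cancellations.

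Third, and this is the technical heart, I would organize everything through a generating function $\Phi(x_1,\dots,x_r)=\sum_{\vec n}\mzv{\vec n}\,x_1^{n_1-1}\cdots x_r^{n_r-1}$, where the divergent $n_r=1$ words are handled by the shuffle regularization $\WordReg{0}{1}$ of Definition \ref{def:shuffle-regularization} (equivalently by \eqref{eq:MZV-N-as-hlog}). The duality relation translates into a functional equation for $\Phi$ under reversal of the variables accompanied by $x_i\mapsto -x_i$, while the stuffle relations translate into a second, partial-fraction functional equation producing binomial coefficients. Combining the two, the components with $\abs{\vec n}+r$ odd appear with coefficient $1-(-1)^{\abs{\vec n}+r}=2$ whereas the even components cancel; solving for the odd part expresses every $\mzv{\vec n}$ of odd $\abs{\vec n}+r$ through interleaving sums of strictly smaller depth (the merged terms) together with products of lower-weight MZVs from the double-shuffle remainder. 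An induction on the pair $(\text{weight},\text{depth})$ then closes the argument, with base cases Euler's evaluations of the depth-one $\mzv{2k}$ (a rational multiple of $\mzv2^{\,k}$) and the depth-two $\mzv{a,b}$ of odd weight. The single genuine exception $\mzv2$ is exactly the regularization anomaly where the $n_r=1$ boundary term is non-trivial, so it is excluded.

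The hardest part will be the regularization bookkeeping in the third step: reconciling the shuffle and stuffle regularizations, which differ by the $\log$-monomials of the tangential-base-point expansion (the $\letter0$-decomposition of Definition \ref{def:Hlog}), so that the remainder really does land in lower depth and lower weight, and verifying that the binomial sums collapse precisely when $\abs{\vec n}+r$ is odd. Making the depth-lowering explicit, rather than merely weight-lowering, is what separates the genuine parity theorem from the weaker duality relation, and is where the parity hypothesis is actually consumed.
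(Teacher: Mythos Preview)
The paper does not prove this theorem; it is quoted from the literature (Tsumura; Ihara--Kaneko--Zagier) and used as input. So there is no paper's proof to compare against. Nonetheless, your sketch contains a genuine gap.

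Your central mechanism --- duality from $f(z)=1-z$ together with the antipode sign $(-1)^{\abs{w}}$ --- does not produce any parity constraint. Once the signs are unwound, duality is simply $\mzv{\vec n}=\mzv{\vec n^{\vee}}$ with no sign: the $(-1)^{\abs w}$ from path reversal is exactly cancelled by the $(-1)^r$ in $\mzv{\vec n}=(-1)^r\Hyper{w}(1)$ applied on both sides (the dual word has depth $\abs w-r$, and $(-1)^{\abs w}(-1)^{r}(-1)^{\abs w-r}=1$). Moreover, duality sends depth $r$ to $\abs{\vec n}-r$, so it does not act on a fixed-depth generating function $\Phi(x_1,\ldots,x_r)$ at all, let alone by the reversal-and-sign-flip $x_i\mapsto-x_i$ you describe.

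The sign $(-1)^{\abs{\vec n}+r}$ that actually encodes parity comes from a different involution: the inversion $z\mapsto 1/z$. Under $\WordTransformation{1/z}$ one has $\letter 0\mapsto-\letter 0$ and $\letter 1\mapsto\letter 1-\letter 0$, so a word of weight $N$ and depth $r$ maps to $(-1)^{N+r}w$ plus words of strictly smaller depth. (This is precisely the computation the paper itself uses in the proof of the Corollary immediately following the theorem.) The standard proofs combine this inversion behaviour with the regularized double shuffle relations; your double-shuffle setup in the first paragraph is a reasonable starting point, but you must feed in inversion rather than duality to extract both the parity sign and the depth drop.
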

If we write $\WeightDepth{N}{d} \MZV \defas \lin_{\Q}\setexp{\mzv{n_1,\ldots,n_r}}{\abs{\vec{n}} \leq N\ \text{and}\ r \leq d} \subset \MZV$ for the subspace of MZV with weight $\leq N$ and depth $\leq d$, this theorem says $\WeightDepth{N}{d}\MZV \subseteq \WeightDepth{N}{d-1} \MZV + (\WeightDepth{N-1}{d} \MZV)^2$ for $d+N$ odd.
In fact, the statement is more precise and every time we write just $(\WeightDepth{N-1}{d} \MZV)^2$ we actually mean the more refined combined weight-depth filtration
\begin{equation*}
	\sum_{\substack{N'+N'' \leq N \\ N', N'' < N}} \sum_{d' + d'' \leq d} \Big(\WeightDepth{N'}{d'}\MZV \Big) \cdot \Big( \WeightDepth{N''}{d''} \MZV \Big).
\end{equation*}
The product terms that occur in our proofs manifestly obey this strong form.
\begin{corollary}
	Let $ w = \letter{0}^{n_r-1}\letter{1}\!\cdots\letter{0}^{n_1-1}\letter{1}$ with weight $N=n_1+\cdots+n_r$. Then
	\begin{equation}
		\AnaReg{z}{\infty} \int_0^z \!\!w
		\in \WeightDepth{N}{r-1} \MZV + \Big( \WeightDepth{N-1}{r} {\MZV}[\imag \pi] \Big)^2.
		\label{eq:MZV-parity-0-inf}%
	\end{equation}
\end{corollary}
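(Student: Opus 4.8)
The plan is to reduce the regularized limit to an explicit combination of multiple zeta values and powers of $\imag\pi$, and then to invoke the parity theorem~\ref{theorem:MZV-parity}. First I would note that by \eqref{eq:Hlog-as-Mpl} the word encodes $\int_0^z w = \Hyper{w}(z) = (-1)^r \Li_{n_1,\ldots,n_r}(z)$, so the object of study is $\AnaReg{z}{\infty}\Hyper{w}(z)$ for a word $w \in T(\set{0,1})$ ending in $\letter{1}$. Along the straight path $[0,\infty)$ the only singularity encountered is $\sigma = 1$, and lemma~\ref{lemma:hlog-path-concatenation-singular} lets me split the integration there:
\begin{equation*}
	\AnaReg{z}{\infty}\Hyper{w}(z)
	= \sum_{(w)} \left[ \AnaReg{z}{\infty} \int_1^z w_{(1)} \right] \cdot \Hyper{\WordReg{}{1}(w_{(2)})}(1).
\end{equation*}

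Next I would evaluate each factor. The value $\Hyper{\WordReg{}{1}(w_{(2)})}(1)$ is a convergent hyperlogarithm at $1$ over $\set{0,1}$, hence a multiple zeta value of weight $\abs{w_{(2)}}$ and depth at most the number of $\letter{1}$'s in $w_{(2)}$. For the first factor I would apply the shift $f(z) = z-1$ (lemma~\ref{lemma:Moebius-transformation}), which sends $\letter{1} \mapsto \letter{0}$ and $\letter{0} \mapsto \letter{-1}$ and thereby turns $\int_1^z w_{(1)}$ into a hyperlogarithm over $\set{0,-1}$; by remark~\ref{remark:reginf-shift-invariance} and \eqref{eq:reginf-gives-mzv} its regularized limit at infinity is again in $\MZV$ of weight $\abs{w_{(1)}}$, except that the powers $w_{(1)} = \letter{1}^k$ contribute the branch factor $\imag\pi$ through \eqref{eq:reglim-reginf-branch}. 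This already establishes $\AnaReg{z}{\infty}\Hyper{w}(z) \in {\MZV}[\imag\pi]$ of weight $\leq N$; moreover every term in which both deconcatenation factors $w_{(1)}, w_{(2)}$ are nonempty is a product of two multiple zeta values of weight $< N$, and so lies in $\big(\WeightDepth{N-1}{r}{\MZV}[\imag\pi]\big)^2$.

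It then remains to treat the two terms with an empty factor, which are the only single (non-product) contributions of full weight $N$ and depth $r$: the term $w_{(1)} = \emptyWord$ produces $\mzv{\vec{n}}$, while the term $w_{(2)} = \emptyWord$ produces the regularized limit at infinity of the reflected word $\WordTransformation{f}(w)$. Using the reflection symmetry of iterated integrals under $z \mapsto 1/z$ I would identify the sum of these two contributions as exactly the $\Q$-combination governed by the parity theorem~\ref{theorem:MZV-parity}, which rewrites it through multiple zeta values of depth $< r$ together with products of strictly smaller weight; both summands land in $\WeightDepth{N}{r-1}\MZV + \big(\WeightDepth{N-1}{r}{\MZV}[\imag\pi]\big)^2$, completing the argument.

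The hard part will be this last step: pinning down precisely which depth-$r$, weight-$N$ multiple zeta values the empty-factor terms assemble into, and verifying that their combination is the image of the parity projection to which theorem~\ref{theorem:MZV-parity} applies. This demands careful bookkeeping of the signs and $\imag\pi$-powers arising from \eqref{eq:reglim-reginf-branch} and from the antipode hidden inside $\WordReg{}{1}$, as well as confirmation that the depth and weight filtrations are respected in the refined (combined) sense required by the statement. By contrast, the ${\MZV}[\imag\pi]$-valued factorization and the overall weight bound $\leq N$ are comparatively routine once the splitting at $\sigma = 1$ is in place.
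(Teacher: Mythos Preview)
Your opening moves are right and match the paper: split the path at $\sigma=1$ via lemma~\ref{lemma:hlog-path-concatenation-singular}, observe that every term with both $w_{(1)}$ and $w_{(2)}$ nonempty lands in the product part $\big(\WeightDepth{N-1}{r}{\MZV}[\imag\pi]\big)^2$, and reduce the problem to the two extreme summands. So far so good.

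The gap is in your ``hard part'', and it is more serious than bookkeeping. You propose to evaluate $\AnaReg{z}{\infty}\int_1^z w$ via the \emph{shift} $z\mapsto z-1$ and then appeal to the parity theorem. But theorem~\ref{theorem:MZV-parity} only applies when $r+N$ is odd; when $r+N$ is even it says nothing, and the two full-depth terms $\int_0^1\WordReg{}{1}(w)$ and $\AnaReg{z}{\infty}\int_1^z w$ are each individually of depth $r$ and weight $N$. The paper's argument hinges on a different transformation: apply the \emph{inversion} $f(z)=z^{-1}$ directly to $\int_1^z$, which sends $\letter{0}\mapsto-\letter{0}$ and $\letter{1}\mapsto\letter{1}-\letter{0}$, so that $\WordTransformation{f}(w)=(-1)^{r+N}w+R$ with $R$ of depth $<r$. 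This identifies the limit-at-infinity factor (modulo $\WeightDepth{N}{r-1}\MZV$) as $(-1)^{r+N}\int_1^0\WordReg{1}{0}(w)$.

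Now the case split becomes visible. If $r+N$ is odd, both $\int_0^1\WordReg{}{1}(w)$ and $\int_1^0\WordReg{1}{0}(w)$ are MZV of odd weight-plus-depth, and theorem~\ref{theorem:MZV-parity} disposes of each separately. If $r+N$ is even, the two terms carry the same sign and their sum is (modulo products) exactly $\big(\int_1^0\WordReg{1}{0}\convolution\int_0^1\WordReg{}{1}\big)(w)=\AnaReg{z}{0}\int_0^z w$, which vanishes by corollary~\ref{corollary:hlog-reg0}. This cancellation is the crux of the even case and is invisible from your shift-based description. Your proposal as written would stall precisely here.
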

\begin{proof}
	From lemma~\ref{lemma:hlog-path-concatenation-singular} we know $\int_0^z = \int_{1}^{z} \convolution \int_0^1 \WordReg{}{1}$ where $\int_{1}^{z} \letter{1} = \log(1-z)$ maps into $\Q[\imag \pi]$ under $\AnaReg{z}{\infty}$. So up to products, we can replace $\int_1^z$ with $\int_1^z \WordReg{1}{}$ and find
	\begin{equation*}
		\AnaReg{z}{\infty} \int_0^z \!\! w
	\equiv \left[ \AnaReg{z}{\infty} \int_1^z \!\!\!\WordReg{1}{} \convolution \int_0^1 \!\!\!\WordReg{}{1}\right] (w)
		\equiv \AnaReg{z}{\infty}\int_1^z \!\!\!\WordReg{1}{} (w) + \int_0^1 \!\!\!\WordReg{}{1}(w)
		\mod \Big(\WeightDepth{N-1}{r} {\MZV}[\imag\pi]\Big)^2.
	\end{equation*}
	We apply the inversion $f(z) = z^{-1}$ to the first summand, which maps $w$ to $\WordTransformation{f}(w) = (-\letter{0})^{n_r-1}(\letter{1}-\letter{0})\cdots(-\letter{0})^{n_1-1}(\letter{1}-\letter{0}) = (-1)^{r+N}w + R$, where all words in $R$ have depth (number of letters $\letter{1}$) less than $r$. Therefore
	\begin{equation*}
		\AnaReg{z}{\infty} \int_0^z \!\! w
		\equiv (-1)^{r+N} \int_1^{0} \WordReg{1}{0} (w) + \int_0^1 \WordReg{}{1} (w)
		\mod \WeightDepth{}{} 
		\defas \WeightDepth{N}{r-1}\MZV + \Big(\WeightDepth{N-1}{r} {\MZV}[\imag\pi]\Big)^2.
	\end{equation*}
	If $r+N$ is odd, theorem~\ref{theorem:MZV-parity} applies to both summands and we are done. Otherwise, we use $\AnaReg{z}{0} \int_0^z = \int_1^0 \WordReg{1}{0} \convolution \int_0^1 \WordReg{}{1} \equiv \int_1^0 \WordReg{1}{0} + \int_0^1 \WordReg{}{1} \mod \WeightDepth{}{}$ to conclude
	\begin{equation*}
		\AnaReg{z}{\infty} \int_0^z w
		\equiv \AnaReg{z}{0} \left[ \int_1^z\!\! \WordReg{1}{} \convolution \int_0^1 \!\!\WordReg{}{1} \right] (w)
		\equiv \AnaReg{z}{0} \int_0^z \! w
		\urel[\equiv]{\ref{corollary:hlog-reg0}} 0
		\mod \WeightDepth{}{}. \qedhere
	\end{equation*}
\end{proof}
This says that a regularized limit at infinity $\AnaReg{z}{\infty} \Hyper{w}(z)$ of a multiple polylogarithm, $w \in T(\set{0,1})$, is always reducible into products of MZV and MZV of lower depth. Note that our proof applies also to the alphabet $\set{-1,0}$, with the only change that we split the integration at $-1$ instead of $1$. In this case we can dispose of the $\imag\pi$ in \eqref{eq:MZV-parity-0-inf}, as we understand the limit $z \rightarrow \infty$ to the right:\footnote{Note that $\AnaReg{z}{\infty} \Hyper{\letter{0}\letter{-1}}(z) = \mzv{2}$ is not considered a counterexample here, because $\mzv{2} = \pi^2/6$ is a product (even though $\pi \notin \MZV$, conjecturally).}
	\begin{equation}
		\AnaReg{z}{\infty} \int_0^z \!\! \letter{0}^{n_r-1}\letter{-1}\!\cdots\letter{0}^{n_1-1}\letter{-1}
		\in \WeightDepth{N}{r-1} \MZV + \left( \WeightDepth{N-1}{r} {\MZV} \right)^2
		\ \text{where}\ 
		N=n_1+\cdots+n_r.
		\label{eq:MZV-parity-0-inf-1}%
	\end{equation}
\begin{proof}[Proof of proposition~\ref{prop:Deligne-oddeven}]
	Consider any $\Li_{\vec{n}}(\xi_6) =(-1)^{r} \int_0^{\xi_6} w$ where $w = \letter{0}^{n_r-1}\letter{1}\!\cdots\letter{0}^{n_1-1}\letter{1}$ of weight $\abs{\vec{n}}$ and let again $\WeightDepth{}{} \defas \WeightDepth{\abs{\vec{n}}}{r} \Deligne + \left( \WeightDepth{\abs{\vec{n}}-1}{r} \Deligne \right)^2$ denote the subspace of elements with smaller depth and products of elements with lower weight. We apply the inversion $f(z)= z^{-1}$ to the complex conjugate and then split with lemma~\ref{lemma:hlog-path-concatenation-singular} at zero:
	\begin{equation*}
		(-1)^r\Li_{\vec{n}}(\conjugate{\xi_6})
		= \int_0^{1/\xi_6} \!\!\!w
		= \int_{\infty}^{\xi_6} \WordTransformation{f}(w)
		= \sum_{(w)} \int_0^{\xi_6} \WordTransformation{f}\left(w_{(1)}\right) \AnaReg{z}{0} \int_{\infty}^{z} \WordTransformation{f}\left(w_{(2)} \right).
	\end{equation*}
	Lemma~\ref{lemma:hlog-path-concatenation-singular} applies to the second factor,%
\footnote{An auxiliary split 
		$\int_{\infty}^z = \int_u^z \convolution \int_{\infty}^u$
		shows $\AnaReg{z}{0} \int_{\infty}^z = \int_u^0 \WordReg{}{0} \convolution \int_{\infty}^u$,
		then consider $z\defas u \longrightarrow \infty$ to conclude $\AnaReg{z}{0} \int_{\infty}^z = \AnaReg{z}{\infty} \int_z^0 \WordReg{}{0} = \AnaReg{z}{\infty} \int_0^z \WordReg{0}{} \antipode$ for the antipode $S$.}
		so 	\begin{equation*}
		\Li_{\vec{n}}(\conjugate{\xi_6})
		\equiv (-1)^r \int_0^{\xi_6} \WordTransformation{f}(w)
		\equiv (-1)^r (-1)^{r+\abs{\vec{n}}} \int_0^{\xi_6} w
		\equiv (-1)^{r+\abs{\vec{n}}} \Li_{\vec{n}}(\xi_6)
		\mod \WeightDepth{}{}
	\end{equation*}
	since $\WordTransformation{f}(w) \equiv (-1)^{r+\abs{\vec{n}}} w$ up to words of depth $< r$. Therefore, depending on the parity of $r+\abs{\vec{n}}$ either the real- or imaginary parts are reducible:
	\begin{equation}
		\WeightDepth{}{}
		\ni
		\Li_{\vec{n}}(\xi_6) - (-1)^{r+\abs{\vec{n}}} \Li_{\vec{n}}(\conjugate{\xi_6})
		= \begin{cases}
			2 \imag \Imaginaerteil \Li_{\vec{n}}(\xi_6) & \text{when $r+\abs{\vec{n}}$ is even and} \\
			2 \Realteil \Li_{\vec{n}}(\xi_6) & \text{when $r+\abs{\vec{n}}$ is odd.} \\
		\end{cases}
		\label{eq:mpl-xi6-ReIm-reducible}%
	\end{equation}
	The claim follows by induction over the weight and depth. 
\end{proof}

\section{Multiple integrals of hyperlogarithms}
\label{sec:multiple-integrals} %
The techniques presented in section~\ref{sec:hlog-algorithms} suffice to compute multivariate integrals of polylogarithms in special cases. We review this idea, which was developed by Francis Brown \cite{Brown:MZVPeriodsModuliSpaces,Brown:TwoPoint,Brown:PeriodsFeynmanIntegrals}, and comment on some differences to the univariate case. In particular, with several variables the class of polylogarithms with rational arguments and prefactors is not closed any more under indeterminate integration.

We are therefore forced to study the condition of \emph{linear reducibility} and recall the method of compatibility graphs. For our purposes, we introduce an adapted algorithm for polynomial reduction that is particularly apt to handle recursive integral representations like the ones we derived in sections~\ref{sec:3pt-recursions} and \ref{sec:ladderboxes}. As an application we prove the main theorems of this thesis in section~\ref{sec:recursion-reducibility}.

Introductions to the topics of this section are available in \cite{Brown:ModuliSpacesFeynmanIntegrals,Brown:TwoPoint} and the thesis \cite{Golz:EvaluationTechniques}. An experimental study of linear reducibility for $4$-point Feynman integrals in the on-shell case was reported in \cite{BognerLueders:MasslessOnShell,Lueders:LinearReduction} and further insights into iterated integrals in several variables (with their application to integration) may be found in \cite{BognerBrown:GenusZero,BognerBrown:SymbolicIntegration}.

\subsection{Partial integrals}
Suppose we want to compute an absolutely convergent, $N$-dimensional integral
\begin{equation}
	I_N \defas \int_{(0,\infty)^N} I_0\ \dd z_1 \wedge \ldots \wedge \dd z_N < \infty
	\label{eq:total-integral}%
\end{equation}
of an analytic integrand $I_0(z_1,\ldots,z_N)$. By Fubini's theorem, we may iterate the \emph{partial integrals}
\begin{equation}
	I_k (z_{k+1},\ldots,z_N)
	\defas
	\int_{(0,\infty)^{k}} I_0\ \dd z_1 \wedge \ldots \wedge \dd z_k
	= \int_0^{\infty} I_{k-1}\ \dd z_k,
	\label{eq:partial-integral}%
\end{equation}
which are (locally) analytic\footnote{%
In discussions we realized that this fact is not taught everywhere. We learned it from \cite[theorem~11]{Sauvigny:PDE1}.}
in the remaining variables.
For our methods to apply, these must be hyperlogarithms in the next integration variable. More precisely, we need
\begin{definition}\label{def:lin-reducible-integrand}
	An integrand $I_0(z_1,\ldots,z_N)$ is called \emph{linearly reducible} if \eqref{eq:total-integral} is finite and (after rearranging the variables if necessary) there exists a family $0 \in \Sigma_k \subset \Q(z_{k+1},\ldots,z_N)$ of rational alphabets such that for each $\ 0 \leq k < N$, $I_k$ is a product of hyperlogarithms of the form
\begin{equation}
	I_{k}
	\in
	\alg_{>k}
	\defas
		\C \tp
		\bigotimes_{j=k+1}^N
		\alg_j
	\ \text{where}\ 
	\alg_j
	\defas
		\regulars(\Sigma_j)(z_{j}) 
		\tp \HlogAlgebra(\Sigma_j)(z_{j}).
	\label{eq:partial-integral-hlog-product}%
\end{equation}
\end{definition}
These conditions precisely ensure that we can carry out each of the integrals $I_k = \int_{0}^{\infty} I_{k-1} (z_k)\ \dd z_k$ with the procedures of section~\ref{sec:hlog-algorithms}, along the following steps:
\begin{enumerate}
	\item
		Compute an antiderivative $F \in \regulars^{+}(\Sigma_{k-1})(z_{k}) \tp \HlogAlgebra(\Sigma_{k-1})(z_{k}) \tp \alg_{>k}$ of $I_{k-1}$ with lemma~\ref{lemma:hlog-primitives}, such that $\partial_{z_k} F = I_{k-1}$.
		
	\item 
		Expand $F$ at $z_k\rightarrow 0$ ($\infty$) as the series \eqref{eq:log-laurent-expansion} in $z_k$ ($z_k^{-1}$) and $\log (z_k)$, using \eqref{eq:hlog-zero-logseries} and \eqref{eq:reginf-expansion-convolution}. Since convergence is granted, all divergent terms must cancel and we obtain an explicit representation of the integral of the form
\begin{equation*}
	I_k
	\in \Q\left[
	\sigma_i,\frac{1}{\sigma_i-\sigma_j}\colon \sigma_i,\sigma_j \in \Sigma_{k-1}\ \text{and}\ \sigma_i \neq \sigma_j
			\right]
			\tp \AnaReg{z_k}{\infty} \HlogAlgebra(\Sigma_k)(z_k)
			\tp \alg_{>k}.
\end{equation*}

	\item
		Use proposition~\ref{prop:reginf-as-hlog} to write this limit as a hyperlogarithm in $z_{k+1}$,
\begin{equation*}
	\AnaReg{z_k}{\infty} \HlogAlgebra(\Sigma_k)(z_k)
	\subseteq
	\HlogAlgebra \Big( (\Sigma_k)_{z_{k+1}} \Big)(z_{k+1})
	\tp \AnaReg{z_k}{\infty} \HlogAlgebra\Big(\leadCoeff_{z_{k+1}} (\Sigma_k) \Big)(z_k)
\end{equation*}
with the rational leading coefficients\footnote{If a pinch occurs, we have to include further rational letters from subleading coefficients according to corollary~\ref{corollary:reglim-reginf-pinch}.}
$\Sigma_{k,k+1} \defas \leadCoeff_{z_{k+1}}(\Sigma_k) 
	\subset \Q(z_{k+2},\ldots,z_N)$.
Iterate
$\Sigma_{k,j+1} 
	\defas \leadCoeff_{z_{j+1}}(\Sigma_{k,j}) 
	\subset \Q(z_{j+2},\ldots,z_N)$
until this limit is decomposed into a product of hyperlogarithms in all variables, giving an element of
	\begin{equation}
		\HlogAlgebra \Big( (\Sigma_k)_{z_{k+1}} \Big)(z_{k+1})
		\tp \cdots \tp
		\HlogAlgebra \Big( (\Sigma_{k,N-1})_{z_{N}} \Big)(z_{N})
		\tp \AnaReg{z_k}{\infty} \HlogAlgebra\Big(\Sigma_{k,N} \Big)(z_k).
		\label{eq:integration-step-hlog-basis}%
	\end{equation}

\item[3.'] Project each of these hyperlogarithm algebras $\HlogAlgebra( (\Sigma_{k,j})_{z_{j+1}})$ onto $\HlogAlgebra( \Sigma_{j+1} \cap (\Sigma_{k,j})_{z_{j+1}})$ by mapping each word which contains a letter not in $\Sigma_{j+1}$ to zero.

\item Use the shuffle product to multiply these two elements $\alg_{>k} \tp \alg_{>k} \longrightarrow \alg_{>k}$, where the first factor is the one constructed in steps 2 and 3' from the integration of $z_k$ and the second factor is the part of $I_{k-1} \in \alg_k \tp \alg_{>k}$ we ignored so far.
\end{enumerate}
After these steps, we have computed an explicit representation of the partial integral $I_k$ as an element of the algebra $\alg_{>k}$. It is step 3. where we needed the rationality of $\Sigma_k$.

Note that in general the letters $\Sigma_{k,j} \subset \overline{\Q(z_{j+1},\ldots,z_N)}$ from \eqref{eq:def-alphabet-parameter-reduced} will be algebraic and not rational, which is where the constraint \eqref{eq:partial-integral-hlog-product} of linear reducibility comes into play. It guarantees that all such non-rational letters must drop out of the result. Given the linear independence of lemma~\ref{lemma:hyperlog-independence}, the projection in step 3' is therefore the identity map and seems superfluous at first. But in practice a lot of time and effort is saved by projecting out all such contributions immediately when they arise (see chapter~\ref{chap:hyperint}).

In the same way, the prefactors $\Q[\sigma_i,1/(\sigma_i - \sigma_j)] \subseteq \Q(z_{k+1},\ldots,z_N)$ are rational functions that do not in general factorize with respect to each variable, so
\begin{equation}
	\Q(z_{k+1},\ldots,z_N)
	\subset
	\regulars (\overline{\Q(z_{k+2},\ldots,z_N)})(z_{k+1})
	\tp \regulars(\overline{\Q(z_{k+3},\ldots,z_N)})(z_{k+2})
	\tp\cdots \tp \regulars(\overline{\Q})(z_N)
	\label{eq:rationals-product-basis}%
\end{equation}
can have irrational (but algebraic) poles with respect to $z_{k+1}$ and so on. Again, the criterion \eqref{eq:partial-integral-hlog-product} asserts that the integrand $I_k$ will have its rational prefactors inside the subalgebra $\regulars(\Sigma_{k+1})(z_{k+1}) \tp \cdots \tp \regulars(\Sigma_N)(z_N)$ of \eqref{eq:rationals-product-basis}.

For the period of the wheel $\WS{3}$, the above algorithm was worked out in detail in \cite{Brown:TwoPoint} and furthermore in the thesis \cite{Golz:EvaluationTechniques}. Their method differs though in that the limits $\AnaReg{z_{k+1}}{0} \AnaReg{z_k}{\infty} \Hyper{w}(z_k)$ are not replaced according to proposition~\ref{prop:reglim-reginf-from-word} but rather treated symbolically and kept till the very end.

\subsubsection{The final period}
For a linearly reducible integrand, the above procedure provides an upper bound on the space of periods that the integral $I_N$ must evaluate to. In the $k$'th integration step, the decomposition \eqref{eq:integration-step-hlog-basis} into products of hyperlogarithms leaves a period in the algebra
\begin{equation*}
	\AnaReg{z_N}{0} \cdots \AnaReg{z_{k+1}}{0} \AnaReg{z_k}{\infty} \HlogAlgebra(\Sigma_k)(z_k)
	\subseteq \AnaReg{z}{\infty} \HlogAlgebra(\Sigma_{k,N}).
\end{equation*}
If the initial integrand is defined over $\Q$ (remove the product with $\C$ in \eqref{eq:partial-integral-hlog-product} for $k=0$), we collect these integration constants to represent the total integral as the period
\begin{equation}
	I_N
	\in \prod_{k=1}^N \AnaReg{z_k}{\infty} \HlogAlgebra(\Sigma_{k,N})(z_k).
	\label{eq:final-period-algebra}%
\end{equation}
This characterization can be refined if we iterate corollary~\ref{corollary:reglim-degree-decouple} at each integration step to disentangle letters with different vanishing degrees.

\subsubsection{Testing the criterion}
To check if an integrand $I_0$ is linearly reducible, we can run the above algorithm without the projection step 3': Linear reducibility is disproven as soon as a non-rational letter remains in a hyperlogarithm in the representation \eqref{eq:integration-step-hlog-basis} or if a non-rational pole occurs in the form \eqref{eq:rationals-product-basis} of the rational factor of $I_k$.

Clearly this method is not practical for general results and we strive for criteria that provide linear reducibility for a wide class of integrands which are simple to describe and identify. This is the main aim of the remainder of this section.

\subsection{Iterated integrals of several variables}
\label{sec:ii-several-variables}%
Since we are using hyperlogarithms to represent multivariate functions, we pick up our discussion from section~\ref{sec:multiple-variables} and work out this relation in more detail.
Recall our setup from definition~\ref{def:bar-objects}, where we assigned differential forms $\letter{f} \defas \log f(z) \in \Omega_S \subset \Omega^1(X_S)$ on $X_S \defas \Affine^n \setminus \bigcup_{f \in S} \Vanishing(f)$ to irreducible polynomials $f \in S \subset \Q[z_1,\ldots,z_N]$. We will always assume $\set{z_1,\ldots,z_N} \subseteq S$ from now on and write
\begin{equation}
	\regulars(S)
	\defas \Q\left[ 
		z_1,\ldots,z_N,f^{-1}\colon f \in S
	\right]
	\label{eq:def-regulars-multi}%
\end{equation}%
\nomenclature[O S]{$\regulars(S)$}{regular functions on $X_S$, equation~\eqref{eq:def-regulars-multi}\nomrefpage}%
for the regular functions on $X_S$.
To integrable words $w \in \BarObjects(S) \subseteq T(\Omega_S)$ we can assign homotopy invariant iterated integrals $\int_b^z w$ which span the algebra
\begin{equation}
	\BarIntegrals[b](S)(z) \defas \int_b^z \BarObjects(S),
	\quad\text{and we know that}\quad
	\BarIntegrals(S)(z) \defas \C \tp \BarIntegrals[b](S) (z)
	\label{eq:def-barintegrals}%
\end{equation}%
\nomenclature[Bb(S)]{$\BarIntegrals[b](S)$}{homotopy invariant iterated integrals on $X_S$ with base point $b$, equation~\eqref{eq:def-barintegrals}\nomrefpage}%
is independent\footnote{An isomorphism $\C \tp \BarIntegrals[b](S)(z) \isomorph \C \tp \BarIntegrals[b'](S)(z)$ is determined by a choice of homotopy class (relative to the endpoints) for paths (in $X_S$) from $b$ to $b'$.} of the base point $b \in X_S$ by \eqref{eq:path-concatenation}. In section~\ref{sec:multiple-variables} we anticipated that these are hyperlogarithms if we vary only one variable at a time. If
\begin{equation}
	\ZerosWrt{S}{i}
	\defas \bigcup_{f \in S} \setexp{\sigma}{ 0 = \restrict{f}{z_i=\sigma}}
	\subset
	\overline{\Q(\setexp{z_j}{j \neq i})}
	\label{eq:def-zeros-wrt}%
\end{equation}%
\nomenclature[Sigma]{$\ZerosWrt{S}{i}$}{zeros of the polynomials $S$ with respect to $z_i$, equation~\eqref{eq:def-zeros-wrt}\nomrefpage}%
denotes the union of all zeros of any $f \in S$ with respect to $z_i$ and we write
\begin{equation}
	\HlogProjection{z_i}\colon T(\Omega_S) \longrightarrow T(\ZerosWrt{S}{i})
	\quad\text{for the map}\quad
	\letter{f} \mapsto
	\HlogProjection{z_i}(\letter{f})
	\defas
	\sum_{0=\restrict{f}{z_i=\sigma}} \letter{\sigma}
	\label{eq:def-HlogProjection}%
\end{equation}
such that $[\partial_{z_i} \log(f) ] \dd z_i = \HlogProjection{z_i}(\letter{f})$, we already know that 
$
	\int_b^z
	= \Hyper{\HlogProjection{z_i}(\cdot)}(z_i) \convolution \AnaReg{z_i}{0} \int_b^z
$
and expressed the regularized limit through convergent integrals in \eqref{eq:divergences-barobjects}.
	Recall that $\restrict{S}{z_i=0} \subset \Q[\setexp{z_j}{j \neq i}]$ denotes the irreducible factors of $\setexp{\restrict{f}{z_i=0}}{z_i \neq f \in S}$.
\begin{lemma}
	Let $b' \defas \restrict{b}{b_i = 0}$ and $z' \defas \restrict{z}{z_i = 0}$, then in terms of the map
	\begin{equation}
		\ProjectOn{z_i=0}\colon \BarObjects(S) \longrightarrow \BarObjects(\restrict{S}{z_i=0})
		,\quad
		\letter{f} \mapsto
		\begin{cases}
			0 & \text{if $f=z_i$ and otherwise}\\
			\sum_j \lambda_j \letter{g_j} & \text{where $\restrict{f}{z_i=0} = \prod_j g_j^{\lambda_j}$}\\
		\end{cases}
		\label{eq:def-BarObject-ProjectOn}%
	\end{equation}
	we can write $
		\AnaReg{z_i}{0} \BarIntegrals(S) (z)
		\subseteq \BarIntegrals(\restrict{S}{z_i=0}) (z')
	$ explicitly as
	\begin{equation}
		\AnaReg{z_i}{0} \int_b^z w
		= \sum_{(w)} \int_{b'}^{z'} \ProjectOn{z_i=0} \left(w_{(1)} \right)
			\cdot \AnaReg{y}{b'} \int_b^{y} w_{(2)}.
		\label{eq:BarObject-reglim}%
	\end{equation}
\end{lemma}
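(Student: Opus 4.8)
The plan is to establish \eqref{eq:BarObject-reglim} by reducing it to the one-variable path-concatenation machinery already developed in section~\ref{sec:tangential-base-points}, applied along the single coordinate direction $z_i$. The key observation is that the whole statement is really a statement about iterated integrals that live in $\BarObjects(S)$, combined with the decomposition \eqref{eq:divergences-barobjects} governing the logarithmic behaviour as the endpoint approaches the boundary hyperplane $\Vanishing(z_i)$.

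First I would set up the geometry: consider the path $\gamma$ from $b$ to $z$ and deform it (using homotopy invariance of $\int_\gamma w$ for integrable $w \in \BarObjects(S)$) into the concatenation of a path $\gamma_0$ that moves only within the fibre $\set{z_j = z_j \colon j\neq i}$ in the $z_i$-direction from $z_i$ down to $0$, followed by a path $\eta_0$ running in the base hyperplane $\set{z_i=0}$ from $b'$ to $z'$, exactly as in figure~\ref{fig:fibration-contour-deformation} and \eqref{eq:barobjects-decomposition}. By lemma~\ref{lemma:path-concatenation} this yields $\int_b^z = \int_{\eta_0} \convolution \int_{\gamma_0}$, so that $\int_b^z w = \sum_{(w)} \int_{\gamma_0} w_{(1)} \cdot \int_{\eta_0} w_{(2)}$. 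Along the fibre $\gamma_0$ the forms reduce to the hyperlogarithmic forms $\HlogProjection{z_i}(\letter{f})$ in the single variable $z_i$, while along $\eta_0$ only the restricted forms $\ProjectOn{z_i=0}(\letter{f})$ survive (any $\letter{z_i}$ restricts to $0$, and a general $\letter{f}$ restricts to $\dd\log(\restrict{f}{z_i=0}) = \sum_j \lambda_j \letter{g_j}$, which is precisely the definition \eqref{eq:def-BarObject-ProjectOn}).

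Next I would handle the divergence. As $\gamma_0$ reaches $z_i = 0$, the fibre integrals $\int_{\gamma_0} w_{(1)}$ develop at worst logarithmic singularities in $\log(z_i)$ by the one-variable analysis of lemma~\ref{lemma:hlog-zsum} and \eqref{eq:hlog-zero-logseries}; these exactly compensate the singular behaviour needed so that the whole $\int_b^z w$ stays finite where it should. Applying $\AnaReg{z_i}{0}$, which by definition~\ref{def:reglim} annihilates positive powers of $\log(z_i)$ and extracts the analytic constant term, the only fibre contributions that survive are those $w_{(1)}$ whose hyperlogarithm has a genuine limit at $z_i\to 0$. By corollary~\ref{corollary:hlog-reg0} this means the fibre factor collapses to $\ProjectOn{z_i=0}(w_{(1)})$ contributing $\int_{b'}^{z'}$, and the remaining factor is the regularized fibre integral $\AnaReg{y}{b'}\int_b^y w_{(2)}$. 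Combining the two halves through the deconcatenation coproduct gives precisely the right-hand side of \eqref{eq:BarObject-reglim}; the multiplicativity \eqref{eq:reglim-product} of $\AnaReg{z_i}{0}$ and its commutation with the base-direction integration (which is independent of $z_i$) justify distributing the regularized limit across the product.

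The main obstacle I anticipate is making the regularized limit commute cleanly with the iterated base integral $\int_{b'}^{z'}$, i.e. proving that $\AnaReg{z_i}{0}$ acts only on the fibre factor and passes through the base factor untouched. This requires knowing that the base integrals $\int_{\eta_0}\ProjectOn{z_i=0}(w_{(2)})$ are genuinely independent of $z_i$ (so that their $\log^0(z_i)$-coefficient is themselves), and that the expansion \eqref{eq:log-laurent-expansion} separates into a product expansion in which only the fibre carries the $z_i$-dependence. The argument here parallels the proof of lemma~\ref{lem:reginf-differential}, where analyticity of the coefficient functions $f^{(k)}$ in the \emph{remaining} variables let us interchange a derivative with the regularized limit; here the analogous input is that each coefficient in the $\log(z_i)$-expansion is itself an iterated integral in $\BarIntegrals(\restrict{S}{z_i=0})(z')$ and hence $z_i$-independent. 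Verifying the bookkeeping of the coproduct splitting — that every term $w_{(1)}\tp w_{(2)}$ with $w_{(1)}$ not purely a power of $\letter{z_i}$ either restricts correctly or is annihilated — is the careful part, but it is the natural several-variable analogue of lemma~\ref{lemma:hlog-path-concatenation-singular} and should follow by the same regularization identities.
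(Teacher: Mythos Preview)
Your high-level strategy—split the path via lemma~\ref{lemma:path-concatenation} into a fibre piece and a base piece, then pass to the limit $z_i\to 0$—is correct and is precisely what the paper does. But your concrete decomposition is incoherent: the pieces you describe, $\gamma_0$ running in the fibre over $z'$ from $z$ down to $z'$, followed by $\eta_0$ in the base from $b'$ to $z'$, do not concatenate (they share no endpoint) and neither touches $b$. Your Sweedler formula also has the factors swapped relative to $\int_{\eta_0}\convolution\int_{\gamma_0}$. More to the point, to match~\eqref{eq:BarObject-reglim}, where $w_{(1)}$ lands in the base integral and $w_{(2)}$ in the regularized piece $\AnaReg{y}{b'}\int_b^y$, the fibre must be traversed \emph{first} (starting from $b$), so it has to be the fibre over $b'$, not over $z'$. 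Your appeal to figure~\ref{fig:fibration-contour-deformation} is misleading here: that picture concerns the tangential basepoint $b=0$, a degenerate case.

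The paper's split is at $y\defas\restrict{b}{b_i=z_i}$, with $\eta$ from $y$ to $z$ chosen so that its $i$-th coordinate is held constant at $z_i$ (not at $0$). This single choice dissolves the obstacle you anticipate. Because $\eta_i$ is constant, $\eta^\ast(\letter{z_i})=0$, so any word containing $\letter{z_i}$ contributes zero to $\int_\eta w_{(1)}$; the remaining forms $\letter{f}$ with $f\neq z_i$ are analytic in the parameter $z_i$ (since $\restrict{f}{z_i=0}$ is not identically zero). Hence $\int_\eta w_{(1)}$ has an \emph{ordinary} limit $\int_{b'}^{z'}\ProjectOn{z_i=0}(w_{(1)})$ as $z_i\to 0$, and all logarithmic singularity is confined to the factor $\int_b^y w_{(2)}$, which becomes $\AnaReg{y}{b'}\int_b^y w_{(2)}$ as $y\to b'$. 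No interchange of $\AnaReg{z_i}{0}$ with a base integral is needed.
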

\begin{proof}
	Split the integral $\int_b^z = \int_{\eta} \convolution \int_b^y$ at $y \defas \restrict{b}{b_i = z_i}$ and choose the path $\eta$ from $y$ to $z$ such that $\eta_i(t) = z_i$ stays constant. Then $\int_{\eta} w$ vanishes for all words $w$ which contain a letter $z_i$, because $\eta^{*} (\letter{z_i}) = 0$. All other letters are analytic at $z_i\rightarrow 0$, so we can take the limit inside the integral: $\lim_{z_i \rightarrow 0} \int_{\eta} w = \int_{b'}^{z'} \ProjectOn{z_i=0} (w)$.
\end{proof}
This proves the analytic counterpart to the decomposition \eqref{eq:barobjects-decomposition-full}, namely that
\begin{equation}
	\BarIntegrals(S)
	\subseteq \HlogAlgebra(\Sigma_1)(z_1) \tp \BarIntegrals(\restrict{S}{z_1=0})
	\subseteq \HlogAlgebra(\Sigma_1)(z_1) \tp \cdots \tp \HlogAlgebra(\Sigma_N)(z_N) \tp \C
	\label{eq:barintegrals-hlog-products}%
\end{equation}
for the zeros $\Sigma_k$ of $S_k \defas \restrict{S_{k-1}}{z_k=0}$ in $z_k$. In this way we can define the iterated integrals $\int_b^z w$ also for the singular base point $b = 0$ through
\begin{equation}
	\int_0^z w
	\defas \AnaReg{b_N}{0} \cdots \AnaReg{b_1}{0} \int_b^z w
	\quad\text{and we set}\quad
	\BarIntegrals[0](S)
	\defas \AnaReg{b_N}{0} \cdots \AnaReg{b_1}{0} \BarIntegrals[b](S)
	\label{eq:barobjects-0basepoint}%
\end{equation}
which carries a $\Q$-structure again.
\begin{remark}
	This definition depends on the chosen order of the variables, because different regularized limits do not commute. For a simple example consider
	\begin{equation*}
		0
		= \AnaReg{y}{0} \log(y)
		= \AnaReg{y}{0} \AnaReg{x}{0} \log(2x + y)
		\neq \AnaReg{x}{0} \AnaReg{y}{0} \log(2x+y)
		= \AnaReg{x}{0} \log(2x)
		= \ln 2.
	\end{equation*}
	Geometrically this phenomenon is a consequence of the fact that the boundary $\partial X_S$ (the complement of $X_S$), which here is the divisor given by $S=\set{x,y,2x+y}$, is not normal crossing.\footnote{Here we include $x,y \in S$ to cover for our choice of tangential base point, even though the logarithm is singular only on $\Vanishing(2x+y)$. Alternatively we can also take $S = \set{2x+y}$ smooth, then the two orders of limits simply correspond to the (different) tangents $(0,-1)$ and $(-1,0)$ at $(x,y)=(0,0)$.}
	This may be resolved in a suitable model (blowup of the origin) of $X_S$, such that the different orders of limits will indeed define two distinct points in the model. In general such a desingularization is very hard to obtain (a deep result going back to Hironaka), but in the case of linear fibrations this is feasible. For a detailed discussion of the case of the moduli space $\Moduli{0}{n}$ we refer to \cite{Brown:MZVPeriodsModuliSpaces,BognerBrown:GenusZero}.
\end{remark}

\subsection{Linear reducibility}\label{sec:linear-reducibility}
Linear reducibility of an integrand is determined by its singularities, so we can abstract from its concrete form and consider at once a huge class of iterated integrals instead. 
\begin{definition}\label{def:lin-reducible-polys}
	A set $S \subset \C[z_1,\ldots,z_N]$ of irreducible polynomials is called \emph{linearly reducible} if, after rearranging the variables if necessary, $S = S^0$ can be extended to a family ($0 \leq k < N$) of irreducible polynomials $S^{k} \subset \C[z_{k+1},\ldots,z_N]$ linear in $z_{k+1}$ such that every integrable integrand $I_0 \in \regulars(S) \tp \BarIntegrals(S)$ is linearly reducible and furthermore $I_{k} \in \regulars(S^{k}) \tp \BarIntegrals(S^{k})$ for all $k < N$.
\end{definition}
Such a family $(S^k)_{k<N}$ is called a \emph{linear reduction} of $S$. If it exists, we can set
	\begin{equation}
		\Sigma_k 
		= \ZerosWrt{S}{k} 
		= \setexp{-f_k/f^k}{ f = f^k z_k + f_k \in S^{k-1}\ \text{with}\ f^k \neq 0}
		\subset \C(z_{k+1},\ldots,z_N).
		\label{eq:linear-reduction-zeros}%
	\end{equation}
	in definition~\ref{def:lin-reducible-integrand} by \eqref{eq:barintegrals-hlog-products}.

The goal of a \emph{polynomial reduction algorithm} is to construct linear reductions for as many sets $S$ as possible. The simplest such method was introduced in \cite{Brown:TwoPoint}.
\begin{definition}
	Let $S \subset \Q[z_1,\ldots,z_N]$ denote a set of irreducible polynomials. If all $ f = f^i z_i + f_i \in S$ are linear in $z_i$, the \emph{simple reduction} $S_i \subset \Q[\setexp{z_j}{j \neq i}]$ of $S$ with respect to $z_i$ is defined as the set of irreducible factors of the polynomials
	\begin{equation}
		\setexp{f^i, f_i}{f \in S}
		\cup
		\setexp{f^i g_i - g^i f_i}{f,g \in S}.
		\label{eq:all-resultants}%
	\end{equation}
\end{definition}
We do not consider the constants $\Q$ as irreducible polynomials. Note that monomials $z_j \in S$ do not influence the outcome $S_i$ of a reduction.
\begin{lemma}\label{lemma:simple-reduction}%
	If every polynomial in $S$ is linear in $z_i$ (so $S_i$ exists), then $\int_0^{\infty} I \ \dd z_i \in \regulars(S_i) \tp \BarIntegrals(S_i)$ for any integrand $I \in \regulars(S) \tp \BarIntegrals(S)$ such that the integral converges.
\end{lemma}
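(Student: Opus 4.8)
The plan is to reduce the assertion to the univariate theory of section~\ref{sec:hlog-algorithms} by fibering over the integration variable $z_i$. First I would fix all variables other than $z_i$ and view the integrand $I$ as a function of $z_i$ alone. Since every $f = f^i z_i + f_i \in S$ is linear in $z_i$, it has the single zero $\sigma_f \defas -f_i/f^i$ with respect to $z_i$, so the set $\ZerosWrt{S}{i} = \set{\sigma_f : f \in S}$ of hyperlogarithm letters consists of rational functions in the remaining variables. The fibration \eqref{eq:barintegrals-hlog-products}, in its analytic form \eqref{eq:BarObject-reglim}, expresses every $\BarIntegrals(S)$-element as a hyperlogarithm in $z_i$ over the alphabet $\Sigma \defas \ZerosWrt{S}{i}$, while a partial-fraction expansion in $z_i$ writes each denominator $f \in S$ as $f^i(z_i - \sigma_f)$. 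Consequently $I$, regarded as a function of $z_i$, lies in $\regulars(\Sigma)(z_i) \tp \HlogAlgebra(\Sigma)(z_i)$ tensored with an algebra depending only on the other variables, i.e.\ it is exactly of the shape to which the integration algorithm applies.

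Next I would perform $\int_0^{\infty} \cdot\ \dd z_i$ by the steps of section~\ref{sec:hlog-algorithms}. Lemma~\ref{lemma:hlog-primitives} produces a primitive $F$ in the same class, and because the integral converges we have $\int_0^{\infty} I\ \dd z_i = \AnaReg{z_i}{\infty} F - \AnaReg{z_i}{0} F$, the regularized limits coinciding with the genuine ones by \eqref{eq:lim=reglim}. The endpoint $z_i \rightarrow 0$ is controlled by corollary~\ref{corollary:hlog-reg0}: the hyperlogarithm part collapses to $\counit$, and the prefactors merely get evaluated at $z_i = 0$, contributing denominators $\sigma_f$, i.e.\ factors of $f_i$. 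The only substantial contribution is therefore the regularized limit $\AnaReg{z_i}{\infty} \HlogAlgebra(\Sigma)(z_i)$ together with the rational prefactors, which are polynomials in the $\sigma_f$ and in $1/(\sigma_f - \sigma_g)$.

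It then remains to track the resulting alphabet and denominators and to match them with the simple reduction. The rational prefactors are immediate: $\sigma_f = -f_i/f^i$ has numerator $f_i$ and denominator $f^i$, and $1/(\sigma_f - \sigma_g) = f^i g^i/(f^i g_i - g^i f_i)$, so all denominators are among $f^i$ and $f^i g_i - g^i f_i$, whose irreducible factors lie in $S_i$ by \eqref{eq:all-resultants}. For the hyperlogarithm part I would invoke proposition~\ref{prop:reginf-as-hlog}, equivalently iterate lemma~\ref{lem:reginf-differential}: the new letters form the alphabet \eqref{eq:def-alphabet-parameter-reduced} of zeros and poles of the differences $\sigma_f - \sigma_g$. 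Crucially, since $z_i \in S$ forces $0 = \sigma_{z_i} \in \Sigma$, taking $\sigma_g = 0$ also captures the zeros and poles of each individual $\sigma_f$, consistently with the $\dd\log(\sigma_n)$ term of \eqref{eq:reginf-total-differential}. These zeros and poles are precisely the irreducible factors of $f_i$, $f^i$ and $f^i g_i - g^i f_i$, exactly the generators of $S_i$ listed in \eqref{eq:all-resultants}. Hence every letter and every pole of the integrated expression lies in $S_i$, placing it in $\regulars(S_i) \tp \BarIntegrals(S_i)$ as claimed, the letters being the $\Sigma_i$ of \eqref{eq:linear-reduction-zeros}.

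The main obstacle I anticipate is not the combinatorial bookkeeping of letters, which is forced by the resultant identities above, but rather justifying that the convergence hypothesis genuinely eliminates every divergent boundary contribution at $z_i \rightarrow 0$ and $z_i \rightarrow \infty$, so that the difference of regularized limits really computes the integral and no spurious terms, mismatched branches or pinches survive. This is precisely the content guaranteed by the algorithm of section~\ref{sec:hlog-algorithms}; once it is in place, the identification of the output alphabet with $S_i$ is routine.
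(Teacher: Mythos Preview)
Your proposal is correct and follows essentially the same route as the paper: fibre over $z_i$ via \eqref{eq:barintegrals-hlog-products} to reduce $I$ to $\regulars(\Sigma)(z_i)\tp\HlogAlgebra(\Sigma)(z_i)$ tensored with $\BarIntegrals(\restrict{S}{z_i=0})$, then identify the rational prefactors $\sigma_f-\sigma_g=(f^ig_i-g^if_i)/(f^ig^i)$ with elements of $\regulars(S_i)$ and use the total differential formula of lemma~\ref{lem:reginf-differential} (iterated over the weight) to show $\AnaReg{z_i}{\infty}\HlogAlgebra(\Sigma)(z_i)\subseteq\BarIntegrals(S_i)$. The paper's proof is more terse and does not dwell on the $z_i\to 0$ endpoint or the convergence subtleties you flag at the end, but the substance is identical.
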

\begin{proof}
	Note that $\regulars(S) \subseteq \regulars(\ZerosWrt{S}{i})(z_i) \tp \regulars(S')$ if we let $S'$ denote the irreducible factors of the leading coefficients
$
	\setexp{f^i}{f \in S\ \text{and}\ f^i \neq 0} 
	\cup \setexp{f_i}{f \in S \ \text{and}\ f^i=0}
$. Also recall $\BarIntegrals(S) \subseteq \HlogAlgebra(\ZerosWrt{S}{i})(z_i) \tp \BarIntegrals(\restrict{S}{z_i=0})$ from \eqref{eq:barintegrals-hlog-products}, so
	\begin{equation*}
		\int_0^{\infty} I\ \dd z_i
		\in \Q\left[ 
			\sigma,\frac{1}{\sigma - \tau}\colon \sigma, \tau \in \ZerosWrt{S}{i}
		\right]
		\tp \regulars(S')
		\tp \AnaReg{z_i}{\infty} \HlogAlgebra(\ZerosWrt{S}{i})(z_i)
		\tp \BarIntegrals(\restrict{S}{z_i=0}).
		\tag{$\ast$}
	\end{equation*}
	Two zeros $\sigma=-f_i/f^i$, $\tau = -g_i/g^i \in \ZerosWrt{S}{i}$ have the difference $\sigma-\tau = (f^i g_i - g^i f_i)/(f^i g^i)$, hence the rational factors in $(\ast)$ lie in $\regulars(S_i)$ (note $S' \subseteq S_i$). Furthermore we see that
	$ 
		\dd \log(\sigma-\tau) = \dd \log (f^i g_i - g^i f_i) - \dd \log (f^i) - \dd \log (g^i)
	$, which implies
	\begin{equation}
		\AnaReg{z_i}{\infty} \HlogAlgebra(\ZerosWrt{S}{i})(z_i)
		\subseteq \BarIntegrals(S_i)
		\label{eq:reginf-hlog-as-bar}%
	\end{equation}
	via induction over the weight, appealing to lemma~\ref{lem:reginf-differential}.
\end{proof}
\begin{remark}\label{remark:integration-constants-basepoint-reglims}
	If $I \in \regulars(S) \tp \BarIntegrals[0](S)$, we can work over $\Q$ (rather than $\C$) and have
	\begin{equation*}
		\int_0^{\infty} I \ \dd z_1 
		\in \regulars(S_1) \tp \BarIntegrals[0](S_1) \tp \AnaReg{z_N}{0} \cdots \AnaReg{z_2}{0} \AnaReg{z_1}{\infty} \HlogAlgebra(\ZerosWrt{S}{1})(z_1)
		,
	\end{equation*}
	which is a stronger statement in that it discloses the periods which can occur as integration constants: They are of the form $\AnaReg{z_1}{\infty} \HlogAlgebra\big[\leadCoeff_{z_N}\!\!\cdots \leadCoeff_{z_2}\,\ZerosWrt{S}{1}\big](z_1)$.
\end{remark}
\begin{corollary}
	If all iterated reductions $S^{k} \defas (S^{k-1})_{k}$ of $S^0 \defas S$ up to $S^{N-1}$ exist, then $S$ is linearly reducible and the sets $S^k$ form a linear reduction.
\end{corollary}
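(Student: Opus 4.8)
The plan is to prove the statement by induction on the reduction index $k$, using Lemma~\ref{lemma:simple-reduction} as the engine for each step. The goal is to verify the three requirements of Definition~\ref{def:lin-reducible-polys}: that each $S^k$ is a set of irreducible polynomials in $\C[z_{k+1},\ldots,z_N]$ linear in $z_{k+1}$, that $I_k \in \regulars(S^k) \tp \BarIntegrals(S^k)$ for all $k<N$, and that consequently every integrable $I_0 \in \regulars(S)\tp\BarIntegrals(S)$ is linearly reducible in the sense of Definition~\ref{def:lin-reducible-integrand}.

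First I would record the structural properties of the sets $S^k$ that are immediate from the definition of the simple reduction. By construction $S^k = (S^{k-1})_k$ consists of the irreducible factors of the leading coefficients, the constant terms, and the expressions $f^k g_k - g^k f_k$ formed from the elements $f = f^k z_k + f_k$ of $S^{k-1}$; since none of these involve $z_k$ any more, we have $S^k \subset \C[z_{k+1},\ldots,z_N]$, and irreducibility holds by definition. The linearity of $S^k$ in $z_{k+1}$ is exactly the hypothesis that the next reduction $(S^k)_{k+1}=S^{k+1}$ exists, which holds for $0\le k \le N-2$; for the final index $k=N-1$ the required linearity of $S^{N-1}$ in $z_N$ is the condition under which the last integration over $z_N$ produces the period \eqref{eq:final-period-algebra}, and I would make explicit that ``all iterated reductions up to $S^{N-1}$ exist'' is to be read as furnishing precisely this linearity at every stage $1\le k \le N$.

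The heart of the argument is the induction establishing $I_k \in \regulars(S^k)\tp\BarIntegrals(S^k)$. The base case $k=0$ is the hypothesis $I_0 \in \regulars(S)\tp\BarIntegrals(S)$. For the inductive step, I would note that since the total integral \eqref{eq:total-integral} converges absolutely, Fubini's theorem guarantees that each partial integral $I_k = \int_0^\infty I_{k-1}\,\dd z_k$ of \eqref{eq:partial-integral} converges, so the convergence hypothesis of Lemma~\ref{lemma:simple-reduction} is met; combined with $I_{k-1}\in\regulars(S^{k-1})\tp\BarIntegrals(S^{k-1})$ and the linearity of $S^{k-1}$ in $z_k$, the lemma yields $I_k \in \regulars((S^{k-1})_k)\tp\BarIntegrals((S^{k-1})_k) = \regulars(S^k)\tp\BarIntegrals(S^k)$. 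This proves the second requirement of Definition~\ref{def:lin-reducible-polys}.

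Finally I would translate this into the concrete linear reducibility of Definition~\ref{def:lin-reducible-integrand}. Setting $\Sigma_k$ as in \eqref{eq:linear-reduction-zeros}, the zeros of $S^{k-1}$ with respect to $z_k$, these lie in $\Q(z_{k+1},\ldots,z_N)$ precisely because every element of $S^{k-1}$ is linear in $z_k$, so the alphabets are rational as required. Applying the factorization \eqref{eq:barintegrals-hlog-products} to $\BarIntegrals(S^k)$ and splitting $\regulars(S^k)$ variable by variable shows that $I_k$ lies in $\alg_{>k}=\C\tp\bigotimes_{j>k}\alg_j$, which is the defining condition of a linearly reducible integrand. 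The main point requiring care is not any single deep step but the careful matching of hypotheses at each stage: checking that convergence of the intermediate integrals genuinely descends from absolute convergence of the full integral, that the rationality of the $\Sigma_k$ (and hence the applicability of the whole hyperlogarithm machinery of section~\ref{sec:hlog-algorithms}) is preserved under iteration, and that the abstract reducibility of the polynomial set $S$ is faithfully reflected in Definition~\ref{def:lin-reducible-integrand}; the edge case of the final variable $z_N$ is the only place where the reading of the hypothesis must be pinned down.
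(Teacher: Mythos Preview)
Your proposal is correct and follows exactly the approach the paper intends: the corollary is stated without proof because it is the obvious induction on $k$ using Lemma~\ref{lemma:simple-reduction} at each step, and your write-up spells out precisely that induction together with the bookkeeping needed to match Definitions~\ref{def:lin-reducible-polys} and~\ref{def:lin-reducible-integrand}. Your caution about the final variable $z_N$ is well placed but harmless, since $S^{N-1}\subset\C[z_N]$ consists of univariate irreducible polynomials, which are automatically linear (cf.\ the footnote to Definition~\ref{def:Fubini-reduction}).
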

\begin{example}[moduli space $\Moduli{0}{N+3}$]\label{ex:moduli-space-mzv}
Consider $N$ variables $z_1,\ldots,z_N$ and set
	\begin{equation*}
		S^{k} \defas \setexp{z_{i}+\cdots+z_N+1}{k< i \leq N} 
		\cup \setexp{z_i + \cdots + z_j}{k<i \leq j \leq N},
	\end{equation*}
	then $(S^{k})_{k+1} = S^{k+1}$ and conclude linear reducibility of $S^0$ with alphabets 
	\begin{equation*}
		\ZerosWrt{S^{k-1}}{k} 
		= \set{0,-1-z_{k+1}-\cdots-z_N}
		\cup \setexp{-z_i-\cdots-z_j}{k<i \leq j \leq N}.
	\end{equation*}
	These have only coefficients $0$ and $-1$, so from \eqref{eq:final-period-algebra} we obtain
	\begin{equation*}
		\int_0^{\infty} \dd z_1 \cdots \int_0^{\infty} \dd z_N \ F(z)
		\in \MZV
		\quad\text{for all}\quad
		F \in \regulars(S^0) \tp \BarIntegrals[0](S^0)
	\end{equation*}
	such that the integral converges. The variables $z_i$ parametrize the moduli space $\Moduli{0}{N+3}$, which we can also view as the configuration space $\tsetexp{x \in X^N}{x_i \neq x_j\ \text{for all}\ i \neq j}$ of $N$ distinct points on $X = \Affine^1 \setminus \set{0,1}$, upon setting $x_i = z_i + \cdots + z_N + 1$. In these coordinates, the positive hypercube $z \in (0,\infty)^N$ corresponds to the connected component (cell) of the real points $\Moduli{0}{N+3}(\R)$ where $x_N> \cdots > x_1 > 1$. 
	
	We just proved that all such integrals, in particular for rational integrands $F \in \regulars(S^0)$, evaluate to MZV. This was the original application of hyperlogarithmic integration by Francis Brown \cite{Brown:MZVPeriodsModuliSpaces}.
\end{example}

\subsubsection{Fubini reduction}
The linear independence of hyperlogarithms (lemma~\ref{lemma:hyperlog-independence}) in the representation \eqref{eq:barintegrals-hlog-products} and the basis \eqref{eq:rationals-product-basis} of rational functions mean that
\begin{equation}
	\regulars(S) \tp \BarIntegrals(S)
	\cap
	\regulars(S') \tp \BarIntegrals(S')
	= \regulars(S \cap S') \tp \BarIntegrals(S \cap S').
	\label{eq:barintegrals-intersection}%
\end{equation}
If the simple reductions $(S_i)_j$ and $(S_j)_i$ are defined, we can therefore obtain a possibly smaller bound $S_{\set{i,j}} \defas (S_i)_j \cap (S_j)_i$ on the singularities of convergent integrals
$\int_{\R_+^2} I\ \dd z_i \wedge \dd z_j \in \regulars(S_{\set{i,j}}) \tp \BarIntegrals(S_{\set{i,j}})$ with integrands $I \in \regulars(S) \tp \BarIntegrals(S)$. More generally, we can permute the order of integration at will due to Fubini's theorem in any dimension. The resulting reduction algorithm was introduced in \cite{Brown:TwoPoint}.
\begin{definition}\label{def:Fubini-reduction}
	Starting with irreducible polynomials $S_{\emptyset} \subset \Q[z_1,\ldots,z_N]$, the sets
	\begin{equation}
		S_{I} \defas \bigcap_{
			\substack{i \in I \\ S_{I \setminus \set{i}} \ \text{is defined and linear in $z_i$}}
		}
		\Big(S_{I \setminus \set{i}} \Big)_{z_i}
		\label{eq:def-Fubini-reduction}%
	\end{equation}
	are defined recursively for $I \subseteq [N] \defas \set{1,\ldots,N}$ if at least one $i \in I$ is admitted to the intersection. We call $S_{\emptyset}$ \emph{Fubini reducible} if $S_{I}$ is defined for some set of $\abs{I}=N-1$ elements.\footnote{This suffices since the univariate polynomials $S_I$ factorize and thus $S_{[N]}$ will be defined as well, at least over $\overline{\Q}$.}
\end{definition}
A Fubini reducible set is clearly linearly reducible, because there must be some permutation $\sigma$ of $[N]$ such that all sets $S_{\set{\sigma(1),\ldots,\sigma(k)}}$ are defined, which then provide an upper bound on the singularities of the corresponding partial integral.

In practice, this Fubini algorithm is very effective for low-dimensional problems. For example, it is known to suffice to prove linear reducibility of many vacuum graphs up to six loops \cite{Brown:TwoPoint} and also for plenty of massless on-shell four-point functions up to three loops \cite{Lueders:LinearReduction}.

But theoretically it seems very hard to describe and keep track of all polynomials that appear in the reduction explicitly. To obtain results for infinite families of graphs, perfect control of the reduction will be necessary though. 

\subsection{Compatibility graphs}
One observes that many of the polynomials in a Fubini reduction $S_I$ do not actually occur when a particular integral is calculated. Typically, the sets $S_I$ are gross overestimates of the singularities of a high-dimensional partial integral. An extremely powerful tool to address this problem is the concept of \emph{compatibility graphs}.
\begin{definition}
	Let $S \subset \C[z_1,\ldots,z_N]$ denote a set of irreducible polynomials and $C \subseteq \binom{S}{2}$ a set of undirected edges (\emph{compatibilities}) between them, then we call $(S,C)$ a \emph{compatibility graph} and two polynomials $f,g \in S$ are said to be \emph{compatible} if they are adjacent ($\set{f,g} \in C$).
\end{definition}
The idea is that instead of computing all resultants in \eqref{eq:all-resultants}, we only need to take compatible pairs into account. We introduce the abbreviations
\begin{equation}
	\resultant{f}{0}{i} \defas f_i, \qquad
	\resultant{f}{\infty}{i} \defas \begin{cases}
		f^i & \text{if $f^i \neq 0$,}\\
		f_i & \text{otherwise} \\
	\end{cases}
	\qquad\text{and}\qquad
	\resultant{f}{g}{i} 
	\defas	f^i g_i - f_i g^i
	\label{eq:def-linear-resultants}%
\end{equation}%
\nomenclature[f g i]{$\resultant{f}{g}{i}$}{resultant of $f$ and $g$ with respect to $z_i$, equation~\eqref{eq:def-linear-resultants}\nomrefpage}%
for the constant term, leading term and the resultant $\resultant{f}{g}{i} = - \resultant{g}{f}{i}$ of two linear polynomials $f=f^i z_i + f_i$ and $g=g^i z_i + g_i$ with respect to $z_i$.
\begin{definition}\label{def:cg-reduction}
	If $(S,C)$ is a compatibility graph where all $f\in S \subset \Q[z_1,\ldots,z_N]$ are linear in $z_i$, we define the \emph{reduced} compatibility graph $(S,C)_i \defas (S',C')$ as follows. Its vertices $S' \subset \Q[z_j\colon j \neq i]$ are the irreducible factors of the polynomials
\begin{equation}
	\setexp{\resultant{f}{0}{i},\resultant{f}{\infty}{i}}{f \in S}
	\cup
	\setexp{\resultant{f}{g}{i}}{\text{compatible pairs}\ f,g \in S}.
	\label{eq:cg-reduction-vertices}%
\end{equation}
The compatibilities $C' \subseteq \binom{S'}{2}$ are defined between all pairs of (distinct) irreducible factors of $\resultant{f}{g}{i}\cdot\resultant{g}{h}{i}\cdot \resultant{h}{f}{i}$ for all those triples $f,g,h \in \set{0,\infty} \cupdot S$ which are mutually compatible. Here we consider $0$ and $\infty$ as compatible with each other and every $f \in S$. 

In other words, $p,q \in S'$ are compatible ($\set{p,q} \in C'$) precisely if there exist $f,g,h \in S \cupdot \set{0,\infty}$ such that $p \divides \resultant{f}{g}{i}$, $q \divides \resultant{g}{h}{i}$ and each of the pairs $\set{f,g}, \set{g,h}, \set{h,f} \in C$ is compatible.
\end{definition}
\begin{remark}
	Since $\resultant{f}{z_i}{i} = -\resultant{f}{0}{i}$ and $\resultant{f}{z_j}{i} = z_j\resultant{f}{\infty}{i}$ when $j \neq i$, the resultants with monomials do not introduce any additional vertices (irreducible factors) nor edges (compatibilities). For example note that the compatibility between $\resultant{f}{0}{i}$ and $\resultant{f}{\infty}{i}$ (which comes from the mutually compatible triple $\set{f,z_i,z_j} \subseteq S$) is already taken into account through the triple $\set{f,0,\infty}$. In the same way, all compatible triples that involve a monomial generate only compatibilities that arise already from triples without monomials.

	For this reason, all monomials $\set{z_1,\ldots,z_N} \subseteq S$ can be dropped from $S$ without changing the reductions. We will therefore not show them when we draw a compatibility graph, but keep in mind that they always belong to $S$ and are compatible with each other and with all other polynomials in $S$.
\end{remark}
\begin{figure}
	\centering
	$ (S,C)\colon \quad \Graph[1.1]{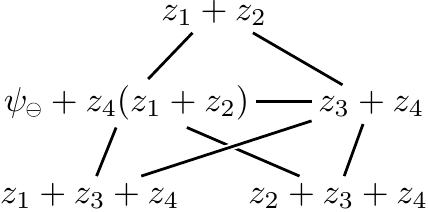}
		\quad
		\mapsto
		\quad
		(S^{\StarSymbol},C^{\StarSymbol})
		\defas
		(S,C)_4\colon  \Graph[1.1]{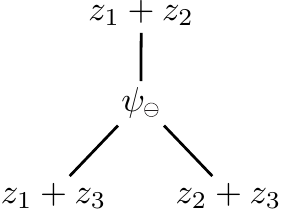}
	$
	\caption[Reduction of a simple compatibility graph]{Reduction of a compatibility graph $(S,C)$ with respect to $z_4$. The polynomial $\SunrisePsi = z_1 z_2 + z_1 z_3 + z_2 z_3$ was introduced in \eqref{eq:SunrisePsi}.}%
	\label{fig:cg-reduction-3pt}%
\end{figure}%
\begin{example}\label{ex:cg-reduction-3pt-step}
	Consider the compatibility graph $(S,C)$ in figure~\ref{fig:cg-reduction-3pt}. The constant coefficients $\resultant{f}{0}{4}$ already deliver all polynomials in $S'$, because all other resultants factorize into monomials except for
	$\resultant{\SunrisePsi + z_4(z_1 + z_2)}{z_1+z_2}{4} = (z_1+z_2)^2$ and 
	\begin{equation*}
		\resultant{\SunrisePsi + z_4(z_1 + z_2)}{\infty}{4}
		= \resultant{z_1+z_2}{\infty}{4}
		= \resultant{z_1+z_2}{z_3+z_4}{4}
		= z_1 + z_2.
	\end{equation*}
	Apart from the compatibilities between $\resultant{f}{0}{4}$ and $\resultant{g}{0}{4}$ (for compatible $f$ and $g$), any further compatibilities would have to be between $z_1+z_2 = \resultant{f}{g}{4}$ and a resultant of the form $\resultant{f}{0}{4}$ where $\set{f,g}$ is one of $\set{\SunrisePsi+z_4(z_1+z_2),z_1+z_2}$, $\set{z_1+z_2,\infty}$, $\set{\SunrisePsi+z_4(z_1+z_2),\infty}$ or $\set{z_1+z_2,z_3+z_4}$.
	But these only contribute the compatibility between $z_1 + z_2$ and $\SunrisePsi$, so the reduced compatibility graph $(S,C)_4$ is obtained by simply replacing each polynomial $f \in S$ by its constant term $\resultant{f}{0}{4}$ and keeping the original compatibilities.
\end{example}
\begin{proposition}\label{prop:cg-reduction}
	Given a set $S \subset \Q[z_1,\ldots,z_N]$ of irreducible polynomials (containing the monomials), let $C \defas \binom{S}{2}$ such that $(S,C)$ is a complete graph.

	If (after a permutation of the variables $\set{z_1,\ldots,z_N}$ where required) all iterated reductions $(S^k,C^k) \defas (S^{k-1},C^{k-1})_{k}$ exist from $(S^0,C^0) \defas (S,C)$ up to $k=N$, then $S$ is linearly reducible and the sets $S^k$ form a linear reduction of $S$.
\end{proposition}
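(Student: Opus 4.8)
The plan is to prove the proposition by induction on the integration variable, carrying along a strengthened invariant that records the compatibility structure. For a compatibility graph $(S,C)$ I would introduce the subspace
\[
\mathcal{B}(S,C) \subseteq \regulars(S) \tp \BarIntegrals(S)
\]
spanned by those integrands each of whose terms is \emph{supported on a clique of $(S,C)$}: in every summand, the irreducible polynomials of $S$ that occur in the denominator of the rational prefactor, together with those whose vanishing defines a letter (singularity) of the iterated-integral factors, are required to be pairwise compatible in $C$. Because the initial graph $(S^0,C^0)$ is complete, every integrand built from $S$ lies in $\mathcal{B}(S^0,C^0)$, so the strengthening is free at the start and the ordinary space $\regulars(S)\tp\BarIntegrals(S)$ is recovered.

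The technical heart is the following compatibility-aware refinement of Lemma~\ref{lemma:simple-reduction}: if every polynomial of $S$ is linear in $z_i$ and $I \in \mathcal{B}(S,C)$ has a convergent integral over $z_i$, then
\[
\int_0^{\infty} I\ \dd z_i \in \mathcal{B}\big( (S,C)_i \big).
\]
Granting this, the proposition is immediate. Writing $I_0$ for the initial integrand and $I_k=\int_0^\infty I_{k-1}\,\dd z_k$, one applies the refinement once per variable: since $I_0 \in \mathcal{B}(S^0,C^0)$ and $(S^{k-1},C^{k-1})_k = (S^k,C^k)$ exists by hypothesis, it follows inductively that $I_k \in \mathcal{B}(S^k,C^k) \subseteq \regulars(S^k)\tp\BarIntegrals(S^k)$. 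This last containment is exactly the condition of Definition~\ref{def:lin-reducible-polys} certifying that the sets $S^k$ form a linear reduction of $S$, just as in the corollary to Lemma~\ref{lemma:simple-reduction}.

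To prove the refinement I would run the integration algorithm of Section~\ref{sec:multiple-integrals} on a single clique-supported term and verify two claims. First, that only \emph{compatible} resultants are ever produced: partial-fractioning the rational prefactor in $z_i$ (Lemma~\ref{lemma:hlog-primitives}) and collecting the hyperlogarithm letters $\ZerosWrt{S}{i}$ pairs up only polynomials that already lie in a common clique, so that every new rational denominator, leading term, constant term, and ---through Lemma~\ref{lem:reginf-differential} and Proposition~\ref{prop:reginf-as-hlog}--- every new letter is an irreducible factor of one of the resultants $\resultant{f}{g}{i}$, $\resultant{f}{0}{i}$, $\resultant{f}{\infty}{i}$ with $\set{f,g}\in C$ listed in~\eqref{eq:cg-reduction-vertices}. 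This already yields $\int_0^\infty I\,\dd z_i \in \regulars(S')\tp\BarIntegrals(S')$ and refines Lemma~\ref{lemma:simple-reduction} from the full simple reduction $S_i$ down to the possibly smaller vertex set $S' \subseteq S_i$. Second, and more delicately, that the output is again clique-supported, now in $(S',C')$.

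This clique-propagation is the main obstacle. It is exactly here that the triple condition of Definition~\ref{def:cg-reduction} must be matched against the integration algorithm: when the regularized limit $\AnaReg{z_i}{\infty}\Hyper{w}$ of a clique-supported word $w$ is re-expanded as a hyperlogarithm in the subsequent variable by iterating Proposition~\ref{prop:reginf-as-hlog}, one produces new words whose letters are factors of resultants $\resultant{f}{g}{i}$ built from letters $f,g$ of $w$, and one must control precisely which of these letters can co-occur so as to generate a further singularity in the next reduction step. Establishing that any pair of singular factors which can genuinely appear together in one term of $\int_0^\infty I\,\dd z_i$ is declared compatible in $C'$ ---that is, that $C'$ is a sound over-approximation--- requires a careful induction through the total differential of Lemma~\ref{lem:reginf-differential}, tracking the shared middle polynomial $g$ that ties the resultants $\resultant{f}{g}{i}$ and $\resultant{g}{h}{i}$ of a mutually compatible triple $\set{f,g,h}$, and exploiting cancellations among the produced words where naive counting would suggest more co-occurrences than actually survive the regularization. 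I expect this bookkeeping, rather than the rational-function or convergence parts, to be where the real effort lies.
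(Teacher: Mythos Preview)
Your approach is genuinely different from the paper's. The paper does \emph{not} track clique-supported representations through the integration algorithm; instead it works geometrically with Landau varieties (Definition~\ref{def:landau-variety}) and proves the stronger inductive bound \eqref{eq:landau-clique-bound}, namely that the Landau variety of any further projection is contained in the union of Landau varieties of cliques of $(S^k,C^k)$. The argument then analyzes strata of intersections $\bigcap_j \Vanishing(f_j)$ for cliques $\{f_1,\ldots,f_r\}$ and their critical sets under projection, distinguishing whether some leading coefficient $a_i$ is nonzero at a critical point (case~1) or all vanish (case~2). This bypasses any need to control cancellations in hyperlogarithm representations, because the Landau variety depends only on the function, not on how it is written.

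Your proposal has a genuine gap precisely at the step you flag as the obstacle. Consider a single term supported on a clique $\{f_1,f_2,f_3,f_4\}\subset S$. After integrating out $z_i$ and re-expanding via Lemma~\ref{lem:reginf-differential} and Proposition~\ref{prop:reginf-as-hlog}, nothing prevents a single output word from simultaneously carrying letters that are factors of $\resultant{f_1}{f_2}{i}$ and of $\resultant{f_3}{f_4}{i}$ with no shared middle polynomial. Such a pair is \emph{not} compatible in $C'$ under Definition~\ref{def:cg-reduction}, which demands a common $g$ in a mutually compatible triple. Thus the output need not lie in your $\mathcal{B}(S',C')$ term by term; you would have to exhibit a different representation in which every term is clique-supported, i.e.\ prove that the offending combinations cancel. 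You acknowledge this but give no mechanism for the cancellation, and the paper explicitly remarks (after the proof) that a direct combinatorial argument along these lines ``should'' morally exist but was not found---the authors resorted to the Landau-variety machinery instead. So as it stands your plan identifies the right difficulty but does not overcome it.
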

This result is strong enough to allow for simple proofs of linear reducibility of some infinite families of Feynman integrals, as we exemplify in the next section.
The proof of proposition~\ref{prop:cg-reduction} will be given in the separate section~\ref{sec:landau-varieties}, because it requires a lot of special terminology and setup.
\begin{remark}
	Our definition~\ref{def:cg-reduction} is different from its original formulation in \cite{Brown:PeriodsFeynmanIntegrals}, where the compatibilities $C'$ where defined between factors of resultants $\resultant{f}{g}{i}$ and $\resultant{g}{h}{i}$ for \emph{all triples} $f,g,h \in S$ (not restricting to mutually compatible triples only). This results in more compatibilities and thus more polynomials after subsequent reductions (in example~\ref{ex:cg-reduction-3pt-step}, all polynomials would become compatible through the triples $\resultant{f}{0}{4}$ and $\resultant{g}{0}{4}$, as $f$ and $g$ would not be required anymore to be compatible themselves). 
	
	To remedy this surplus, \cite{Brown:PeriodsFeynmanIntegrals} applied the strategy to intersect compatibility graphs for different orders of reductions, just like in the Fubini algorithm \eqref{eq:def-Fubini-reduction}. We find it hard to justify this method in the general case and discussed several technical issues with Francis Brown. A thorough investigation of these problems needs to be addressed separately and is not contained in this thesis, we only give a comment at the end of section~\ref{sec:cg-intersection}.
\end{remark}

\subsection{Linear reducibility from recursion formulas}\label{sec:recursion-reducibility}
We apply proposition~\ref{prop:cg-reduction} to the recursion formulas from section~\ref{sec:vw3}. The idea is very simple: It suffices to keep only the last compatibility graph $(S^k,C^k)$ of a forest function of a graph $G_{k}$. We can add the edge $k+1$ directly on the compatibility graph and compute the effect of the integration by a reduction following definition~\ref{def:cg-reduction}.
Let us abbreviate \hypertarget{inline:barregulars}{rational linear combinations of iterated integrals}\label{inline:barregulars} on $X_S$ with 
$\BarIntegralsRegulars(S) \defas \regulars(S) \tp \BarIntegrals(S)$.
\begin{proposition}\label{prop:GfunStarTriangle-reduction}%
	Suppose the graph $G$ is $3$-constructible (with three external vertices). Then the forest-, star- and triangle functions of $G$ are linearly reducible and of type
	$
		\GfunTriangle{G}
		\in
		\BarIntegralsRegulars[0](S^{\TriangleSymbol}) \tp \MZV
	$ and
	$
		\GfunStar{G}, \GfunForest{G}
		\in
		\BarIntegralsRegulars[0](S^{\StarSymbol}) \tp \MZV
	$ for the polynomials
	\begin{equation}\begin{split}\label{eq:GfunStarTriangle-reduction}%
		S^{\TriangleSymbol} 
		&\defas \set{z_1,z_2,z_3,z_1+z_2, z_1+z_3, z_2 + z_3, z_1 + z_2 + z_3}
		\quad\text{and}
		\\
		S^{\StarSymbol} &\defas \set{z_1,z_2,z_3,z_1+z_2, z_1+z_3, z_2 + z_3, z_1z_2 + z_1 z_3 + z_2 z_3}.
	\end{split}\end{equation}
\end{proposition}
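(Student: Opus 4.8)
The plan is to run a simultaneous induction over a construction of $G$ by vertex-width-three moves, using compatibility graphs (definition~\ref{def:cg-reduction}) to keep tight control of the singularities produced by each integration, and then to invoke proposition~\ref{prop:cg-reduction}. The heart of the argument is to exhibit two fixed compatibility graphs $(S^{\StarSymbol},C^{\StarSymbol})$ and $(S^{\TriangleSymbol},C^{\TriangleSymbol})$ — the ones depicted in figure~\ref{fig:cg-reduction-3pt} — that are \emph{invariant} under the reductions induced by the recursion formulas of section~\ref{sec:3pt-recursions}, and to show by induction that the forest-, star- and triangle functions of every $3$-constructible graph stay bounded by them. The decisive structural feature of both graphs is that the three two-sums $z_1+z_2$, $z_1+z_3$, $z_2+z_3$ are \emph{pairwise incompatible}; everything else is bookkeeping that hinges on this.

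First I would fix the base cases. By examples~\ref{ex:GfunForest-3edge} and \ref{ex:StarTriangle-3edge} the forest functions of the star and triangle, and the star/triangle functions of a single vertex, are rational: they lie in $\regulars(S^{\StarSymbol})$ (the star forest function uses only the $z_i$; the triangle forest function and the single-vertex star function use the $z_i$ and $\SunrisePsi$), respectively in $\regulars(S^{\TriangleSymbol})$, so each base graph sits inside the corresponding compatibility graph with no compatibilities yet present among its two-sums. For the inductive step I would treat the three recursions in parallel. The key observation is that every move acts on the singular polynomials by one elementary operation: appending a vertex or adding the opposite edge substitutes $z_1\mapsto z_1-x$ (forest, lemmata~\ref{lemma:Gfun-Forest-vertex} and \ref{lemma:Gfun-Forest-edge}) or $z_1\mapsto z_1+x$ (star, lemma~\ref{lemma:Gfun-star-triangle-recursion}) into $S^{\StarSymbol}$, adjoins the monomial $x$, and — for the edge move — an extra power of $\SunrisePsi$ that is already present; the triangle moves substitute the Möbius maps $z_1\mapsto z_1/(1+x)$ and $z_1\mapsto xz_1/(x+z_1)$ into $S^{\TriangleSymbol}$. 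In each case one carries out the reduction with respect to the new integration variable $x$ following definition~\ref{def:cg-reduction} and checks that the vertex set returns to $S^{\StarSymbol}$ (resp.\ $S^{\TriangleSymbol}$).

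Concretely, writing the lifted polynomials $C=z_1-x+z_2$ and $D=z_1-x+z_3$ (the images of $z_1+z_2$ and $z_1+z_3$) together with $B=z_1-x$ and $E=-x(z_2+z_3)+\SunrisePsi$, the constant terms of the lifted two-sums and of the lifted $\SunrisePsi$ reproduce $z_1+z_2$, $z_1+z_3$ and $\SunrisePsi$, the leading coefficient of $E$ reproduces $z_2+z_3$, and $z_2,z_3,z_2+z_3$ pass through, so $S^{\StarSymbol}$ is recovered. The only resultant that could leave $S^{\StarSymbol}$ is $\resultant{C}{D}{x}=z_2-z_3$, and this is exactly the resultant that is \emph{not} taken because $z_1+z_2$ and $z_1+z_3$ are incompatible; all remaining compatible resultants ($\resultant{C}{E}{x}=z_2^2$, $\resultant{B}{E}{x}=z_2z_3$, and so on) factor into monomials or into elements of $S^{\StarSymbol}$. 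The same computations for the star shift $z_1\mapsto z_1+x$ and for the two triangle Möbius substitutions go through in the same way, each time producing the single dangerous factor $z_2-z_3$ (or $z_1(z_2-z_3)$, $z_1^2(z_2-z_3)$) guarded by the identical incompatibility, and recovering $z_1+z_2+z_3$ from the new leading term in the triangle case. Since the moves act symmetrically on the three external vertices, the analysis under every relabelling forces all three two-sums to be pairwise incompatible, which is consistent with a single symmetric choice of $C^{\StarSymbol}$ and $C^{\TriangleSymbol}$. For the $\tp\MZV$ claim I would invoke remark~\ref{remark:integration-constants-basepoint-reglims}: the integration constants of each partial integral are iterated regularized limits over the leading-coefficient alphabet of $\ZerosWrt{S^{\StarSymbol}}{x}$ (resp.\ $S^{\TriangleSymbol}$), whose entries are negatives of sums of the remaining variables with coefficients in $\set{0,1}$; iterating $\leadCoeff$ as in example~\ref{ex:moduli-space-mzv} collapses this to the alphabet $\set{0,-1}$, so the constants lie in $\MZV$.

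I expect the genuine obstacle to be the \emph{compatibility-set} part of the invariance rather than the vertex set: one must verify that the edge set produced by each reduction is contained in the fixed $C^{\StarSymbol}$ (resp.\ $C^{\TriangleSymbol}$), which requires enumerating every mutually compatible triple $f,g,h\in\set{0,\infty}\cup S$ and tracking the factors of $\resultant{f}{g}{x}\resultant{g}{h}{x}\resultant{h}{f}{x}$ — this is where the precise shape of $C^{\StarSymbol}$ (beyond the incompatible two-sums) must be pinned down and shown stable under all moves and relabellings. A secondary difficulty is the triangle case, where the Möbius substitutions produce inhomogeneous denominators such as $1+x$ and $x+z_1$; I would need to confirm that these contribute no new vertices or compatibilities and that the resulting leading-coefficient alphabet still reduces to $\set{0,-1}$, so the constants remain multiple zeta values. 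The star-triangle duality of lemma~\ref{lemma:star-triangle-duality} can serve as an independent consistency check linking the two invariant graphs.
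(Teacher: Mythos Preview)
Your approach is correct and essentially coincides with the paper's proof: induction over the construction moves, invariance of a fixed compatibility graph under the single-variable reductions, pairwise incompatibility of the two-sums blocking the dangerous resultant, and the $\{0,-1\}$ leading-coefficient argument for the constants. Two efficiency notes. First, the paper derives the triangle case \emph{for free} via star--triangle duality (lemma~\ref{lemma:star-triangle-duality}), so you need not verify a separate invariant graph $(S^{\TriangleSymbol},C^{\TriangleSymbol})$; this disposes of your ``secondary difficulty'' entirely. Second, for the forest recursions (which integrate over $(0,z_1)$ rather than $(0,\infty)$) the paper first applies the change of variables $\SP_e=z_1/(1+x)$ (figure~\ref{fig:cg-variable-inversion}) before reducing---you mention this manoeuvre only for the triangle case, but it is needed for the forest too; remark~\ref{remark:cg-reduction-recursion} confirms that the resulting reduced compatibility graph agrees with what your direct substitution $z_1\mapsto z_1-x$ produces. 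The precise shape of $C^{\StarSymbol}$ that you anticipate having to pin down is exactly the star $\bigl\{\{\SunrisePsi,f\}:f\neq\SunrisePsi\bigr\}$ of figure~\ref{fig:cg-reduction-3pt}, whose invariance is verified in example~\ref{ex:cg-reduction-3pt-step}.
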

\begin{proof}
	Order the edges of $G$ according to a $3$-construction and write $G_k$ for the subgraph formed by the first $k$ edges. We prove inductively that the final compatibility graphs $(S^k,C^k)$ of $\GfunStar{G_{k}}$ and $\GfunForest{G_{k}}$ are contained in the star-shaped compatibility graph $(S^{\StarSymbol},C^{\StarSymbol})$ with 
	$C^{\StarSymbol} 
		\defas \setexp{\set{\SunrisePsi,f}}{\SunrisePsi \neq f \in S^{\StarSymbol}}
		$ of figure~\ref{fig:cg-reduction-3pt}.\footnote{The statement for $\GfunTriangle{G}$ can be derived analogously, but it follows immediately from the result on $\GfunStar{G}$ through the star-triangle duality \eqref{eq:star-triangle-psipol-transformation}.}
		The initial cases are given by \eqref{eq:GfunForest-3edge} and \eqref{eq:Gfun-StarTriangle-K1}. Now suppose $(S^{k-1},C^{k-1}) \subseteq (S^{\StarSymbol},C^{\StarSymbol})$, $\GfunStar{G_{k-1}} \in \BarIntegralsRegulars[0](S^{\StarSymbol}) \tp \MZV$ and let $G_k$ be constructed by appending a new vertex $v_1'$ to the external vertex $v_1 \in \vertices_{\Text}(G_{k-1})$. Using \eqref{eq:GfunStar-appendvertex}, we can write $\GfunStar{G_k} = \int_0^{\infty} I\ \dd \SP_k$ with the integrand
		\begin{equation*}
			I
			= \GfunStar{G}(z_1+\SP_k,z_2,z_3) \SP_k^{\EP_{k} - 1} \in \BarIntegralsRegulars[0](S) \tp \MZV
			\quad\text{where}\quad
			S \defas \set{\SP_k} \cupdot \setexp{\restrict{f}{z_1 = z_1 + \SP_k}}{f \in S^{k-1}}
		\end{equation*}
		is simply obtained from $S^{k-1}$ by replacing $z_1$ with $z_1 + \SP_k $. Indeed, the penultimate compatibility graph $(S,C)$ of $G_k$ just arises through this replacement (and adding the monomial $\SP_k$) from $(S^{k-1},C^{k-1})$, because the first $k-1$ reductions of Schwinger parameters do not involve $\SP_k$ at all. So the final reduction $(S^k, C^k) = (S^{k-1},C^{k-1})_k$ is example~\ref{ex:cg-reduction-3pt-step} (with $z_3$ and $z_1$ interchanged) and yields $(S^{\StarSymbol},C^{\StarSymbol})$ again (or a subgraph). The same reasoning applies also if the edge $k$ connects two external vertices of $G_{k-1}$, since \eqref{eq:GfunStar-addedge} only requires us to introduce $\restrict{\SunrisePsi}{z_1 = z_1+\SP_k}$ into the integrand, which is already compatible in $(S,C)$ with all other components.

		In the same way we can proceed for $\GfunForest{G_k}$. The only difference is that the integral representations \eqref{eq:Gfun-Forest-vertex} and \eqref{eq:Gfun-Forest-edge} are of the form $\int_0^{z_1} \restrict{I}{z_1 = z_1 - \SP_k} \dd \SP_k$. So we first change variables $\SP_k = z_1/(1+x)$ to integrate $x$ over $(0,\infty)$. Then we can compute the reduction of $x$ with the same outcome $(S^{\StarSymbol},C^{\StarSymbol})$ as before. Note that under a change of variables, each vertex $f \in S$ is replaced by (the clique on) its irreducible factors as shown in figure~\ref{fig:cg-variable-inversion}.
		
		Finally we need to check that the constants of integration are multiple zeta values. Since the polynomials in $S$ only have monomials with the coefficient $1$,\footnote{Recall figures~\ref{fig:cg-reduction-3pt} and \ref{fig:cg-variable-inversion} corresponding to the two different cases when we substituted $z_1 \mapsto z_1+\SP_k$ or $z_1 \mapsto z_1 x/(1+x)$.} the leading coefficients of the letters of the integrand $I$ give
		$\leadCoeff_{z_i} \leadCoeff_{z_j} \leadCoeff_{z_k} \ZerosWrt{S}{\SP_k} = \set{0,-1}$. So we can apply proposition~\ref{prop:reglim-reginf-from-word} to deduce that only MZV remain (see also \eqref{eq:reginf-gives-mzv} and remark~\ref{remark:integration-constants-basepoint-reglims}). Note that this holds for any order ($\set{i,j,k}=\set{1,2,3}$) of the three limits  $\AnaReg{z_i}{0}$ chosen in \eqref{eq:barobjects-0basepoint} to fix the singular base point $0$ in $\BarIntegrals[0]$.
\end{proof}
\begin{figure}
	\centering
	$
		\Graph[1.2]{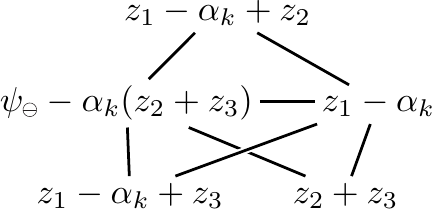}
		\quad\xrightarrow{\SP_k = z_1/(1+x)}\quad
		\Graph[1.2]{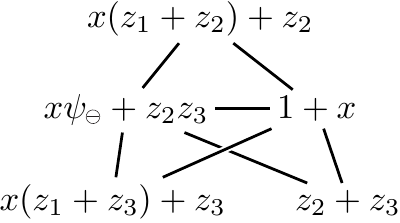}
	$
	\caption{Transformation of a compatibility graph under a change of variables.}%
	\label{fig:cg-variable-inversion}%
\end{figure}%
\begin{remark}\label{remark:cg-reduction-recursion}
	We can formulate the reduction of a shifted compatibility graph with $S' \defas \set{x} \cupdot \restrict{S}{z=z+x}$ (each edge $e \in C$ induces edges in $C'$ between all irreducible factors of its endpoints after the transformation) with respect to $x$ directly in terms of the original variable.
	Let $f' \defas \restrict{f}{z=z+x} \in S'$ with $f \in S$, then
	\begin{equation*}
		\resultant{f'}{0}{x} = f
		,\quad
		\resultant{f'}{\infty}{x} = \resultant{f}{\infty}{z}
		,\quad
		\resultant{f'}{z+x}{x} = -\resultant{f}{0}{z}
		\quad \text{and}\quad
		\resultant{f'}{g'}{x} = \resultant{f}{g}{z}
	\end{equation*}
	means that $(S',C')_x = (S'',C'')$ has the vertices $S''$ from the plain reduction \eqref{eq:cg-reduction-vertices} but also the original $ S \subseteq S''$. The only additional compatibilities with respect to the reduction of definition~\ref{def:cg-reduction} are between original polynomials that are compatible $\set{f,g} \in C\subseteq C''$, their resultants $\set{f,\resultant{f}{g}{z}} \in C''$ and $\set{f,\resultant{f}{0}{z}}, \set{f,\resultant{f}{\infty}{z}} \in C''$.

	Such a reduction fulfils $S \subseteq S''$, $C\subseteq C''$ and provides $\int_0^{\infty} \restrict{I}{z=z+x}\ \dd x \in \BarIntegralsRegulars(S'')$ for any integrand $I \in \BarIntegralsRegulars(S)$.
	The analogous consideration for the transformation $z\mapsto z-z/(1+x)=zx/(1+x)$ yields precisely the same reduced compatibility graph $(S'',C'')$ and shows $\int_0^z \restrict{I}{z=z-x}\ \dd x \in \BarIntegralsRegulars(S'')$.
\end{remark}
Proposition~\ref{prop:GfunStarTriangle-reduction} thus boils down to the fact that the compatibility graph $(S^{\StarSymbol}, C^{\StarSymbol})$ from figure~\ref{fig:cg-reduction-3pt} is stable (does not change) with respect to these reductions (of remark~\ref{remark:cg-reduction-recursion}). On the analytic side this means that the forest- and star functions remain iterated integrals in the family $\BarIntegralsRegulars(S^{\StarSymbol})$, no matter how many recursions of the integral formulae of section~\ref{sec:3pt-recursions} are applied.
\begin{theorem}
	\label{theorem:vw3-3pt} %
	Assume a graph $G$ is $3$-constructible with final vertices $\set{v_1,v_2,v_3} = \vertices_{\Text}(G)$ and massless propagators.
	Then $\int I_G\ \Omega$ from \eqref{eq:feynman-integral-projective} is linearly reducible via the recursions of its forest functions and the final integrations \eqref{eq:vw3-3pt-projective}.

	If the external momenta $p_i$ (entering $G$ at $v_i$) are parametrized by $p_2^2 = z\bar{z} p_1^2$ and $p_3^2 = (1-z)(1-\bar{z}) p_1^2$, then each coefficient in the $\varepsilon$-expansion of $p_3^{-2\sdd}\int I_G\ \Omega$ (with respect to the dimension $\dimension \in 2\N - 2\varepsilon$ and indices $\EP_e \in \Z + \EPE_e \varepsilon$) is of the form
	\begin{equation}
			\MZV
			\tp \BarIntegralsRegulars[0]\left( \set{z,\bar{z},1-z, 1-\bar{z}, z-\bar{z}, 1-z-\bar{z},1-z\bar{z},z\bar{z}-z-\bar{z}} \right)
		.
		\label{eq:vw3-3pt-expansion} %
	\end{equation}
\end{theorem}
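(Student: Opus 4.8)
The plan is to reduce the whole integral to the last two integrations of the projective representation \eqref{eq:vw3-3pt-projective} and to control their kinematic singularities by one further compatibility-graph reduction. By proposition~\ref{prop:GfunStarTriangle-reduction}, the forest function of any $3$-constructible, massless graph $G$ is linearly reducible and satisfies $\GfunForest{G}\in\BarIntegralsRegulars[0](S^{\StarSymbol})\tp\MZV$, where $S^{\StarSymbol}$ is the star alphabet of \eqref{eq:GfunStarTriangle-reduction}; in particular, as an integrand in the variables $x_1,x_2,x_3$ its singularities lie in $S^{\StarSymbol}$ and its periods are multiple zeta values. Lemma~\ref{lemma:vw3-momenta} then expresses $p_3^{-2\sdd}\int I_G\,\Omega$, up to the rational $\Gamma$-prefactors of \eqref{eq:feynman-integral-projective}, as the projective integral of $\GfunForest{G}(x)$ against the reciprocal power of the kinematic linear form $z\bar{z}\,x_1+(1-z)(1-\bar{z})\,x_2+x_3$ (after relabelling the external vertices so that the kinematics match the stated parametrization). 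It therefore remains to carry out the two integrations over $\RP_+^2$ and to prove that no singularities in $z,\bar{z}$ outside the claimed eight-element alphabet can arise.

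First I would augment the stable star-shaped compatibility graph $(S^{\StarSymbol},C^{\StarSymbol})$ of figure~\ref{fig:cg-reduction-3pt} by the single new vertex given by the kinematic form above, treating its coefficients $z\bar{z}$, $(1-z)(1-\bar{z})$ and $1$ as constants for the purpose of the $x$-integrations, and declare this vertex compatible with the central polynomial $\SunrisePsi$ and with all the monomials and two-term polynomials of $S^{\StarSymbol}$. Since every polynomial of this augmented set --- in particular $\SunrisePsi$ and the kinematic form --- is linear in each $x_i$, the first two reduction steps of proposition~\ref{prop:cg-reduction} apply: fixing the Cheng--Wu hyperplane so that one of $x_1,x_2,x_3$ equals $1$, I would reduce with respect to the remaining two variables and read off the final Landau alphabet in $z$ and $\bar{z}$ from the irreducible factors of the constant terms $\resultant{f}{0}{i}$, the leading terms $\resultant{f}{\infty}{i}$ and the pairwise resultants $\resultant{f}{g}{i}$ of \eqref{eq:def-linear-resultants}. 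This simultaneously establishes the asserted linear reducibility via the recursions and the final integrations.

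For the $\varepsilon$-expansion I would substitute $\dimension\in 2\N-2\varepsilon$ and $\EP_e=\EPZ_e+\EPE_e\varepsilon$ and expand the integrand. At each order this produces, by the expansion formulas of section~\ref{sec:hlog-algorithms}, integrands that are polynomials in the logarithms and inverse powers of the polynomials of $S^{\StarSymbol}$ together with the kinematic form, multiplied by forest-function factors; these are homogeneous integrands of the form covered by remark~\ref{rem:more-general-integrands}, whose singular locus does not depend on the exponents. Hence the very same reduction controls every coefficient and linear reducibility holds order by order. The integration constants are regularized limits at the endpoints of the two $x$-integrations; because all non-zero coefficients of the Symanzik and forest polynomials equal $1$, their leading coefficients lie in $\set{0,-1}$, so proposition~\ref{prop:reglim-reginf-from-word} (exactly as in \eqref{eq:reginf-gives-mzv} and in the proof of proposition~\ref{prop:GfunStarTriangle-reduction}) forces these constants into $\MZV$, giving the tensor factor $\MZV$ of \eqref{eq:vw3-3pt-expansion} while the $z,\bar{z}$-dependence lands in $\BarIntegralsRegulars[0]$ of the final alphabet.

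The hard part is the computation underlying the second paragraph: verifying that the reduction of the augmented graph with respect to $x_1,x_2,x_3$ closes on precisely $\set{z,\bar{z},1-z,1-\bar{z},z-\bar{z},1-z-\bar{z},1-z\bar{z},z\bar{z}-z-\bar{z}}$ and introduces no foreign irreducible factor. The delicate point is the interplay of the two quadratic coefficients $z\bar{z}$ and $(1-z)(1-\bar{z})$ with the sunrise polynomial $\SunrisePsi$, which is where the single-valued letters $z-\bar{z}$, $1-z\bar{z}$, $z\bar{z}-z-\bar{z}$ and $1-z-\bar{z}$ originate; one must also confirm that the contour is never pinched in the regularized limits, so that corollary~\ref{corollary:reglim-reginf-pinch} cannot introduce periods beyond $\MZV$. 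Finally, it remains to check that the restriction to compatible pairs in definition~\ref{def:cg-reduction} keeps the augmented graph stable under both substitution moves of remark~\ref{remark:cg-reduction-recursion}, so that the eight-element alphabet is genuinely closed under the recursion rather than merely under the single final reduction.
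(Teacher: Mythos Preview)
Your approach is essentially identical to the paper's: invoke proposition~\ref{prop:GfunStarTriangle-reduction} for the forest function, adjoin the kinematic linear form $z\bar z\,x_1+(1-z)(1-\bar z)\,x_2+x_3$ to the star compatibility graph as a vertex compatible with every other polynomial, fix one $x_i=1$ by projectivity and reduce the remaining two, then read off the eight-element alphabet. The paper's proof is in fact rather terse; your discussion of why the integration constants stay in $\MZV$ and your identification of the alphabet verification as the computational core are welcome elaborations.

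One point you should add: the paper explicitly treats the case where $\int I_G\,\Omega$ diverges at $\varepsilon=0$, which prevents expanding the integrand under the integral sign. It invokes corollary~\ref{cor:finite-anareg-as-Feynman} (or \ref{cor:convergent-anareg-integrand}) to rewrite the integral as a finite linear combination of convergent Feynman integrals with shifted integer indices and dimensions, each of which is then handled by your argument. Since these shifts do not alter the underlying Symanzik polynomials, the same compatibility-graph reduction applies verbatim and the claimed alphabet survives. Without this step your $\varepsilon$-expansion paragraph tacitly assumes convergence.
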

\begin{proof}
	Using proposition~\ref{prop:GfunStarTriangle-reduction} to compute the forest function of $G$, we need to add the polynomial $\phipol = z\bar z x_1 + (1-z)(1-\bar{z}) x_2 + x_3$ to $S^{\StarSymbol}$ such that it is compatible with every other polynomial. Reducing two of the variables $x_i$ (recall that \eqref{eq:vw3-3pt-projective} is a projective integral, so one variable is fixed to one) we check that we arrive precisely at the polynomials \eqref{eq:vw3-3pt-expansion}.

	We assumed that $\FR(G)$ is convergent (such that we may expand the integrand in $\varepsilon$ and integrate each coefficient).
	But if it diverges (at $\varepsilon=0$), we can apply corollaries~\ref{cor:finite-anareg-as-Feynman} or \ref{cor:convergent-anareg-integrand} to express it in terms of convergent integrals (with shifted $\dimension$ and $\EP_e$) and process each of those as above.
\end{proof}

\begin{remark}
	Our result \eqref{eq:vw3-3pt-expansion} is not optimal, because we know (from direct polynomial reduction in Schwinger parameters) that the letters $\set{z,\bar{z},1-z,1-\bar{z},z-\bar{z}}$ suffice to express these functions. This is also clear from the position space viewpoint of graphical functions \cite{Schnetz:GraphicalFunctions}, at least in exactly $\dimension=4$ dimensions.

	This must be a consequence of a special property of the forest functions $\GfunForest{G}$, because there are integrands $f \in \BarIntegralsRegulars(S^{\StarSymbol})$ (even with compatibilities constrained to $C^{\StarSymbol}$) that integrate under \eqref{eq:vw3-3pt-projective} to polylogarithms that do involve the additional singularities $\set{1-z-\bar{z},1-z\bar{z},z\bar{z}-z-\bar{z}}$. Hence our setup apparently does not capture all relevant information of the forest functions.
\end{remark}

\subsubsection{Ladder box integrals}
\begin{proposition}\label{prop:GfunForestBox-reduction}
	Let $G$ be any minor of a box ladder with four external vertices. Then its forest function 
	$\GfunForestBox{G} 
		\in \BarIntegralsRegulars[0](S^{\ForestDeltaBoxSymbol})
		\tp \MZV
	$ is linearly reducible with compatibility graph $(S^{\ForestDeltaBoxSymbol},C^{\ForestDeltaBoxSymbol})$ shown in figure~\ref{fig:cg-box-ladder} (or a subgraph of it).
\end{proposition}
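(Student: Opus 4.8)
The plan is to follow the blueprint of proposition~\ref{prop:GfunStarTriangle-reduction} almost verbatim, replacing the three-point construction moves by the three four-point moves of figure~\ref{fig:boxladder-addedge}. First I would observe that every minor of a box ladder can be assembled edge by edge using only these moves (replacing an external vertex $v_3$ or $v_4$ by a new one joined through an edge, or inserting the edge $e=\set{v_3,v_4}$): a deletion or contraction merely shortens or prunes the construction, and any remaining cases are absorbed by the ``or a subgraph of it'' clause, since removing an edge can only drop polynomials and compatibilities from the reduction. Throughout I keep the hypothesis $f_{13}=0$ of lemma~\ref{lemma:forest-identity-ladderbox}, valid for the planar box topology with $v_1,v_2,v_3,v_4$ on the outer face, so that the quadratic forest identity \eqref{eq:forest-identity-ladderbox} and hence the edge-adding recursion of lemma~\ref{lemma:GfunForestBox-add-edge} are available.

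The induction would maintain the invariant that the final compatibility graph of $\GfunForestBox{G_k}$ is a subgraph of $(S^{\ForestDeltaBoxSymbol},C^{\ForestDeltaBoxSymbol})$. The base case is the one-loop box $B_1$, whose forest function is computed explicitly in example~\ref{ex:GfunForestBox-1loop}; its denominators are exactly the monomials together with the quadratic $\BoxPoly$ of \eqref{eq:def-BoxPoly}, so its compatibility graph is (a subgraph of) $(S^{\ForestDeltaBoxSymbol},C^{\ForestDeltaBoxSymbol})$. For the inductive step I would invoke the two recursion formulas: appending a vertex (lemma~\ref{lemma:GfunForestBox-add-vertex}) yields $\int_0^{z_4}\restrict{I}{z_4=z_4-x}\,\dd x$, while inserting $e=\set{v_3,v_4}$ (lemma~\ref{lemma:GfunForestBox-add-edge}) yields $\BoxPoly^{\EP_e+\sdd-\dimension}\int_0^{z_{12}}x^{\dimension/2-\EP_e-1}\,\restrict{\big(\BoxPoly^{\dimension/2-\sdd}\GfunForestBox{G}\big)}{z_{12}=z_{12}-x}\,\dd x$. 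Both are of the shape $\int_0^z\restrict{I}{z=z-x}\,\dd x$, up to the overall prefactor $\BoxPoly^{\EP_e+\sdd-\dimension}$ which merely multiplies the final rational factor and whose base $\BoxPoly$ is already a vertex of the graph. Hence remark~\ref{remark:cg-reduction-recursion} applies after the change of variables that turns the $[0,z]$ integral into one over $(0,\infty)$, and proposition~\ref{prop:cg-reduction} then delivers linear reducibility once the reduced graph is shown to be stable.

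The heart of the argument, and the main obstacle, is the stability computation: I must verify that reducing $(S^{\ForestDeltaBoxSymbol},C^{\ForestDeltaBoxSymbol})$ with respect to the freshly integrated Schwinger variable again produces a subgraph of $(S^{\ForestDeltaBoxSymbol},C^{\ForestDeltaBoxSymbol})$. For the purely linear vertices this is routine, exactly as in example~\ref{ex:cg-reduction-3pt-step} and figure~\ref{fig:cg-variable-inversion}. The delicate part is the quadratic $\BoxPoly$, which in the four-point setting appears both as a vertex and, through lemma~\ref{lemma:GfunForestBox-add-edge}, inside the integrand. I would compute the resultants $\resultant{\BoxPoly}{0}{z_i}$, $\resultant{\BoxPoly}{\infty}{z_i}$, $\resultant{\BoxPoly}{g}{z_i}$ against each compatible linear $g$, and $\resultant{\BoxPoly}{\BoxPoly'}{z_i}$ against the shifted copy $\BoxPoly'$, and check that every one of them factors into polynomials already present in $S^{\ForestDeltaBoxSymbol}$, so that no genuinely new irreducible quadratic is spawned and the edge set $C^{\ForestDeltaBoxSymbol}$ is not enlarged. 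This is the four-point analogue of the three-point stability of $(S^{\StarSymbol},C^{\StarSymbol})$, and it is precisely where the special structure of $\BoxPoly$ (its reducibility against the box forest polynomials) is essential; I expect it to require the most care, in particular tracking which shifts $z_i\mapsto z_i-x$ leave the compatibilities intact.

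Finally, the statement about the field of constants follows as in proposition~\ref{prop:GfunStarTriangle-reduction}: since all polynomials occurring in the reduction carry monomial coefficients equal to $1$, the leading coefficients of the alphabet of each integrand reduce to $\set{0,-1}$, so proposition~\ref{prop:reglim-reginf-from-word} forces every integration constant into $\MZV$. This gives $\GfunForestBox{G}\in\BarIntegralsRegulars[0](S^{\ForestDeltaBoxSymbol})\tp\MZV$, independently of the order in which the base-point limits of \eqref{eq:barobjects-0basepoint} are taken.
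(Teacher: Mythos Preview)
Your overall strategy matches the paper's: induct along the construction moves, invoke remark~\ref{remark:cg-reduction-recursion} for the shifted integrals, and verify that $(S^{\ForestDeltaBoxSymbol},C^{\ForestDeltaBoxSymbol})$ is a fixed point of the reduction. The paper indeed carries out the stability check by listing the nontrivial resultants for $z_3$ and $z_{12}$ explicitly; you would still need to do that computation, but nothing surprising happens there.

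The genuine gap is your MZV argument. It is not true that ``all polynomials occurring in the reduction carry monomial coefficients equal to $1$'': the vertex $z_{12}-z_{14}\in S^{\ForestDeltaBoxSymbol}$ has a minus sign. Concretely, when the edge $e=\set{v_3,v_4}$ is added via lemma~\ref{lemma:GfunForestBox-add-edge} and you substitute $z_{12}\mapsto xz_{12}/(1+x)$, the alphabet picks up the letter $z_{14}/(z_{12}-z_{14})$, which is \emph{positive} for $z_{12}>z_{14}$ and sits on the integration contour. So proposition~\ref{prop:reglim-reginf-from-word} does not apply directly, and naively you would only conclude $\MZV[2][\imag\pi]$ rather than $\MZV$.

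The paper closes this gap in two steps. First, corollary~\ref{corollary:reglim-degree-decouple}: the offending letter has vanishing degree $+1$ in $z_{14}$ (provided the limit $z_{14}\to 0$ is taken before $z_{12}\to 0$), which is distinct from the degrees of all other letters, so it decouples into a separate factor over $\set{0,1/z_{12}}$. Second, the forest function is by construction analytic on the integration domain, so the apparent singularity at $x=z_{14}/(z_{12}-z_{14})$ is spurious and lemma~\ref{lemma:singular-expansion-analytic} rules out any $\imag\pi$. One also has to observe that the order of the four base-point limits matters here, and check the remaining orders separately. Without this analysis your last paragraph does not go through.
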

\begin{figure}
	\centering
	$\Graph[1.0]{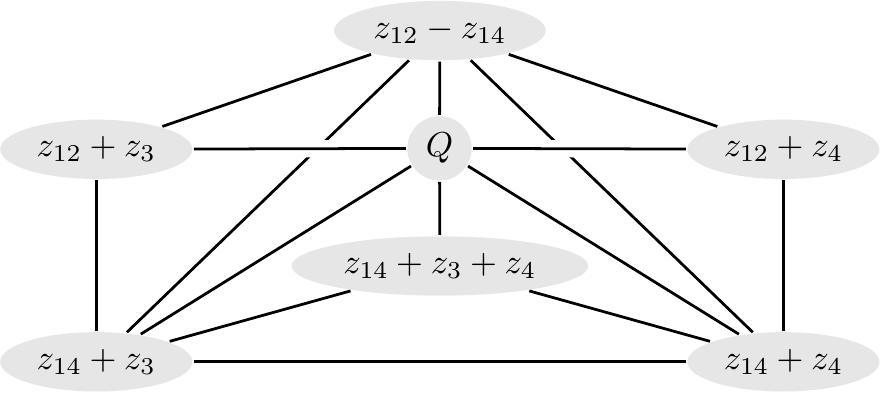}$
	\caption[Compatibility graph of box-ladder forest functions]{Compatibility graph $(S^{\ForestDeltaBoxSymbol}, C^{\ForestDeltaBoxSymbol})$ of box-ladder forest functions. The polynomial $\BoxPoly = z_{12}(z_{14}+z_3+z_4) + z_3 z_4$ was defined in \eqref{eq:def-BoxPoly}.}%
	\label{fig:cg-box-ladder}%
\end{figure}%
\begin{proof}
	Order the edges along a construction of $G$ according to the moves of figure~\ref{fig:boxladder-addedge}. We show the statement inductively, starting from the one-loop box $B_1$ (figure~\ref{fig:boxladders-forest}). Its forest integral \eqref{eq:GfunForestBox-1loop} lies in $\GfunForestBox{B_1} \in \BarIntegralsRegulars[0](\set{\BoxPoly,z_{12},z_{14},z_3,z_4})$, so the initial compatibility graph consists only of $\BoxPoly$ and the monomials.
	For each edge we add, we apply the formula~\eqref{eq:GfunForestBox-add-vertex} or \eqref{eq:GfunForestBox-add-edge} as appropriate and can proceed as in remark~\ref{remark:cg-reduction-recursion} to compute the compatibility graph of the next graph (forest function). Only \eqref{eq:GfunForestBox-add-edge} introduces an explicit polynomial into the integrand, but this is just $\BoxPoly$ which is anyway already compatible with every other polynomial in $S^{\ForestDeltaBoxSymbol}$.

	Therefore we only need to check that the reduction of $(S^{\ForestDeltaBoxSymbol}, C^{\ForestDeltaBoxSymbol})$ (after replacing $z_i$ by $x z_i/(1+x)$) with respect to $z_i$ (giving the compatibility graph of $\int_0^{z_i} \restrict{I}{z_i = z_i - x}\ \dd x$) reproduces the original graph, for each $i \in \set{12,3,4}$. This is easily checked; if for instance $i=3$, then the only non-trivial resultants are
	\begin{gather*}
		\resultant{\BoxPoly}{\infty}{z_3} = z_{12} + z_4
		\qquad
		\resultant{\BoxPoly}{0}{z_3} = z_{12}(z_{14} + z_4)
		\qquad
		\resultant{z_{14}+z_3+z_4}{0}{z_3} = z_{14} + z_4
		\\
		\resultant{z_{12}+z_3}{\BoxPoly}{z_3} = -z_{12}(z_{12}-z_{14})
		\qquad
		\resultant{z_{14}+z_3}{z_{12}+z_3}{z_3} = z_{12} - z_{14}
		\\
		\resultant{z_{14} + z_3}{\BoxPoly}{z_3} = z_4(z_{12}-z_{14})
		\qquad
		\resultant{\BoxPoly}{z_{14}+z_3+z_4}{z_3} = z_4 (z_{14}+z_4)
	\end{gather*}
	and can thus only lead to additional compatibilities incident to $z_{12}-z_{14}$, $z_{12}+z_4$ or $z_{14}+z_4$. For example, the mutual compatibility of $0$, $\infty$ and $\BoxPoly$ means that we must include $\set{z_{12}+z_4,z_{14}+z_4}$ from the first two resultants above, but this compatibility is already in $C^{\ForestDeltaBoxSymbol}$.
	One checks that indeed no new compatibilities arise this way. The case $i=4$ is covered by symmetry, and for $i=12$ the only non-trivial resultants to check are
	\begin{gather*}
		\resultant{z_{12}-z_{14}}{z_{12}+z_3}{z_{12}} = z_{14}+z_3
		\qquad
		\resultant{z_{12}-z_{14}}{z_{12}+z_4}{z_{12}} = z_{14}+z_4
		\\
		\resultant{\BoxPoly}{z_{12}+z_3}{z_{12}} = z_3 (z_{14} + z_3)
		\qquad
		\resultant{\BoxPoly}{z_{12}+z_4}{z_{12}} = z_4 (z_{14} + z_4)
	\end{gather*}
	and $	\resultant{z_{12}-z_{14}}{\BoxPoly}{z_{12}} = (z_{14}+z_3)(z_{14}+z_4)$. Again a simple check of all mutually compatible triples verifies that no new compatibilities are introduced in the reduction step.
	
	Finally we must examine the integration constants that appear. When a vertex is appended (say $i = 3$), the letters of the integrand are
	\begin{equation*}
		\ZerosWrt{\restrict{S^{\ForestDeltaBoxSymbol}}{z_3 = x z_3/(x+1)}}{x}
		= \set{
			0, -1, -\frac{z_{12}}{z_3+z_{12}}, - \frac{z_{14}}{z_3+z_{14}},-\frac{z_{14}+z_4}{z_{14}+z_3+z_4},
			-\frac{z_{12}(z_{14}+z_4)}{\BoxPoly}
		}
	\end{equation*}
	and we obtain only the letters $\set{0,-1}$ after taking the four limits $\AnaReg{z_k}{0}$ in \eqref{eq:barobjects-0basepoint} through $\leadCoeff_{z_k}$ (no matter in which order). In the case of $i=12$, the polynomial $z_{12}-z_{14}$ introduces a letter which lies on the integration path (of $x$) when $z_{12}>z_{14}$:
	\begin{equation*}
		\ZerosWrt{\restrict{S^{\ForestDeltaBoxSymbol}}{z_{12} = x z_{12}/(x+1)}}{x}
		= \set{
			0, -1, - \frac{z_3}{z_{12}+z_3}, -\frac{z_{4}}{z_{12}+z_4}, -\frac{z_3 z_4}{\BoxPoly}, \frac{z_{14}}{z_{12}-z_{14}}
		}.
	\end{equation*}
	If we let $z_{12} \rightarrow 0 $ \emph{before} $z_{14} \rightarrow 0$, we only obtain MZV by the same argument. But if the limit $z_{14} \rightarrow 0$ is applied first, $\leadCoeff_{z_{14}}(\Sigma_x)$ contains $1/z_{12}$ and thus $+1$ after application of $\leadCoeff_{z_{12}}$.
	We thus might encounter alternating sums $\MZV[2]$, since $-1$ and $0$ are present as well. However, corollary~\ref{corollary:reglim-degree-decouple} tells us that the letter $1/z_{12}$ decouples from all others, because $\deg_{z_{14}}\left( \frac{z_{14}}{z_{12}-z_{14}} \right)= 1$ is unique among the vanishing degrees in $\Sigma_x$ (which are otherwise either $0$ or negative in case one or both of the limits $z_3,z_4 \rightarrow 0$ had been applied before). Hence, apart from $\MZV$, we can only have periods $\AnaReg{z_{12}}{0}\AnaReg{x}{\infty} \HlogAlgebra(\set{0,1/z_{12}}) \subseteq {\MZV}[\imag\pi]$ by corollary~\ref{corollary:reglim-reginf-splitted} as no pinch occurs. Finally, we know from the definition of the forest function that the integrand must be analytic at $ x=\frac{z_{14}}{z_{12}-z_{14}}$ and no imaginary parts can appear (see also lemma~\ref{lemma:singular-expansion-analytic}). This concludes the proof that the integration constants lie in $\MZV$, no matter which order of the four limits $z_i \rightarrow 0$ is chosen to approach the base point $0$.
\end{proof}
\begin{remark}
	Not all minors of ladder boxes are extensions (via the steps of figure~\ref{fig:boxladder-addedge}) of the one-loop box $B_1$. A proper minor of $B_1$ (deletion or contraction of an edge) has less then four edges and does not define a forest function, because at least one of the four spanning forest polynomials \eqref{eq:def-forestpolynoms-ladderbox} will vanish identically and introduce the ill-defined $\delta(0)$ into \eqref{eq:def-GfunForestBox}.
	In section~\ref{sec:ex-ladderboxes} we demonstrate an extension of proposition~\ref{prop:GfunForestBox-reduction} which allows us to circumvent this problem completely.
	
	Note however that the above proof applies as-is to all ladder boxes themselves. This suffices in principle to conclude the full claim already, because linear reducibility is a minor-closed property of graphs \cite{BognerLueders:MasslessOnShell,Lueders:LinearReduction}.\footnote{Strictly speaking, the quoted result is formulated only for the Fubini reducibility from definition~\ref{def:Fubini-reduction}. We do not expect any difficulties though to apply this proof to our compatibility graph reduction.}
\end{remark}
\begin{example}
	Consider the graphs shown in figure~\ref{fig:boxladders-forest} and their forest functions \eqref{eq:ladderbox-forest-1'}, \eqref{eq:ladderbox-forest-1''} and \eqref{eq:ladderbox-forest-2}: Apart from the monomials, $\GfunForestBox{B_1}$ only has a singularity at $\BoxPoly = 0$. Next, $\GfunForestBox{B_1'}$ acquires a singularity at $z_{14}+z_4 = 0$ and $\GfunForestBox{B_1''}$ also introduces $z_{14}+z_3$ as well as $z_{12} - z_{14}$. The double box $\GfunForestBox{B_2}$ features the polynomial $z_{14}+z_3+z_4$ for the first time and we checked that for even more edges, eventually also the remaining $z_{12}+z_3$ and $z_{12} + z_4$ occur. Hence the set $S^{\ForestDeltaBoxSymbol}$ is minimal.
	
	The essential message of proposition~\ref{prop:GfunForestBox-reduction} is that the singularities do not continue to get more and more complicated beyond this point (when we add further edges), but are confined to $\bigcup_{f\in S^{\ForestDeltaBoxSymbol}} \Vanishing(f)$ in perpetuity.
\end{example}
Now let us apply this result to the Feynman integrals $\FR(G)$. We consider the projective integral $\int I_G\ \Omega = \Gamma^{-1}(\sdd)\big[ \prod_e \Gamma(\EP_e) \big] \FR(G)$ from \eqref{eq:feynman-integral-projective} to remove the Euler-Mascheroni constant $\gamma_E$ from the $\varepsilon$-expansion.
\begin{theorem}
	\label{theorem:ladderbox-kinematics} %
	Let $G$ be a minor of a ladder box with massless internal propagators and four external momenta $p_i$, entering $G$ at $v_i$ subject to $p_1^2 = p_2^2 = 0$.

	Then $\int I_G\ \Omega$ is linearly reducible. If the kinematics are parametrized by $p_3^2 = s z \bar{z}$, $p_4^2 = s(1-z)(1-\bar{z})$, $s = (p_1 + p_2)^2$ and $t = (p_1 + p_4)^2 = sx$, then each coefficient of the Laurent expansion of $s^{-\sdd(G)}\int I_G\ \Omega$ (in the dimension $\dimension \in 2\N - 2\varepsilon$ and/or the indices $\EP_e \in \Z + \EPE_e \varepsilon$) is a polylogarithm from the algebra
$\MZV \tp \BarIntegralsRegulars[0](S)$ for
	\begin{align}
		S &=
		\set{
			z,\bar{z},
			1-z, 1-\bar{z},
			z-\bar{z},
			1- z \bar{z},
			1-z-\bar{z},
			z\bar{z}-z-\bar{z},
			x+z-z\bar{z},
			x+\bar{z}-z\bar{z}
		}
		\nonumber\\
		&\quad\ \cup
		\setexp{x+\alpha - \beta z\bar{z} -\gamma (1-z)(1-\bar{z})}{\alpha,\beta,\gamma \in \set{0,1}}
		.
		\label{eq:ladderbox-full-kinematics} %
	\end{align}
	If only one leg $p_3^2 = s+t+u$ is off-shell (and $p_1^2 = p_2^2 = p_4^2 = 0$), then $\int I_G\ \Omega$ has coefficients that lie in the class
	\begin{equation}
		\MZV \tp \BarIntegralsRegulars[0](\set{s,t,u,s+t,s+u,t+u,s+t+u}).
		\label{eq:ladderbox-oneoff-kinematics}%
	\end{equation}
	When all external momenta are lightlike ($p_1^2 = \cdots = p_4^2 = 0$), the coefficients are rational linear combinations of harmonic polylogarithms and multiple zeta values:
	\begin{equation}
		\MZV \tp \BarIntegralsRegulars[0](\set{x,x+1})
		= \MZV \tp \HlogAlgebra(\set{0,-1})(x) \tp \Q\left[ x,\frac{1}{x}, \frac{1}{x+1} \right].
		\label{eq:ladderbox-onshell-kinematics}%
	\end{equation}
\end{theorem}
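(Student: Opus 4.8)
The plan is to mirror the proof of theorem~\ref{theorem:vw3-3pt}: proposition~\ref{prop:GfunForestBox-reduction} already controls the forest function $\GfunForestBox{G}$, placing it in $\BarIntegralsRegulars[0](S^{\ForestDeltaBoxSymbol}) \tp \MZV$ with the stable compatibility graph $(S^{\ForestDeltaBoxSymbol},C^{\ForestDeltaBoxSymbol})$ of figure~\ref{fig:cg-box-ladder}. What remains is to carry out the final integrations over the forest variables $z_{12},z_{14},z_3,z_4$ supplied by the projective representations of theorem~\ref{theorem:ladderbox}, and to verify that the compatibility graph closes under these reductions into exactly the kinematic alphabets claimed in \eqref{eq:ladderbox-full-kinematics}, \eqref{eq:ladderbox-oneoff-kinematics} and \eqref{eq:ladderbox-onshell-kinematics}. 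First I would assume $\FR(G)$ convergent so that the integrand may be expanded in $\varepsilon$ coefficient by coefficient; the divergent case is reduced to this one afterwards through corollaries~\ref{cor:finite-anareg-as-Feynman} and \ref{cor:convergent-anareg-integrand}, which rewrite $\FR(G)$ as a finite combination of convergent integrals with integer-shifted $\dimension$ and $\EP_e$, each of the same forest type.

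For the fully off-shell case I would use \eqref{eq:ladderbox-projective} and, after dividing by $s^{\sdd}$, adjoin the single denominator polynomial $D = z_{12} + x\, z_{14} + z\bar{z}\, z_3 + (1-z)(1-\bar{z})\, z_4$ to the compatibility graph, declaring it compatible with every other vertex; it enters each $\varepsilon$-coefficient only through powers of $1/D$ and through $\log D$, so that the integrand lies in $\BarIntegralsRegulars(S^{\ForestDeltaBoxSymbol} \cup \set{D})$. Since $D$ and every vertex of $S^{\ForestDeltaBoxSymbol}$ is linear in each $z_i$, the reduction of proposition~\ref{prop:cg-reduction} applies. I would then reduce three of the four projective variables and check, by computing the resultants $\resultant{f}{g}{z_i}$ for the mutually compatible triples exactly as in the proof of proposition~\ref{prop:GfunForestBox-reduction}, that the surviving factors in $z,\bar{z},x$ are precisely those of \eqref{eq:ladderbox-full-kinematics} and that no further compatibilities (hence no further polynomials) are produced.

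The two remaining cases follow the same pattern with simpler denominators. For a single off-shell leg I would set $p_4^2 = 0$ and use momentum conservation $p_3^2 = s+t+u$, giving the denominator $s\, z_{12} + t\, z_{14} + (s+t+u)\, z_3$; reducing the forest variables with $s,t,u$ as the external parameters yields the alphabet $\set{s,t,u,s+t,s+u,t+u,s+t+u}$ of \eqref{eq:ladderbox-oneoff-kinematics}, which is structurally $S^{\TriangleSymbol}$ from \eqref{eq:GfunStarTriangle-reduction} and closes under reduction just as in example~\ref{ex:moduli-space-mzv}. For fully lightlike kinematics I would instead use the simpler projective form \eqref{eq:ladderbox-projective-onshell}, adjoining $z_{12} + x\, z_{14}$ to the graph, after which the reduction collapses to the two letters $\set{x,x+1}$ (with the monomials in $x$), giving harmonic polylogarithms in $x$ over $\set{0,-1}$ as in \eqref{eq:ladderbox-onshell-kinematics}. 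In each case the integration constants must still be shown to lie in $\MZV$: since the forest and denominator polynomials carry only the coefficient $1$ on their monomials, the relevant $\leadCoeff$ of the letters again reduce to $\set{0,-1}$ after the limits of \eqref{eq:barobjects-0basepoint}, and where a positive letter threatens to appear from a coefficient such as $1/z_{12}$ (precisely as in the $z_{12}$-reduction of proposition~\ref{prop:GfunForestBox-reduction}) corollary~\ref{corollary:reglim-degree-decouple} decouples it by its distinct vanishing degree so that only $\MZV$ survives.

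The main obstacle I expect is the bookkeeping of the compatibility-graph reduction in the fully off-shell case: one must verify that adjoining the quadratic-in-$(z,\bar{z})$ denominator and reducing three forest variables produces exactly the polynomials of \eqref{eq:ladderbox-full-kinematics} and nothing more, a finite but laborious enumeration of all mutually compatible triples. A secondary difficulty is guaranteeing that neither the $\varepsilon$-expansion nor the subsequent limits pinch the contour of integration; here I would appeal to the analyticity of the forest function inside its domain, so that, as in the closing argument of proposition~\ref{prop:GfunForestBox-reduction} together with lemma~\ref{lemma:singular-expansion-analytic}, no spurious imaginary parts or alternating sums can enter the integration constants.
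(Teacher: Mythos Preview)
Your proposal is correct and follows essentially the same route as the paper: adjoin the kinematic denominator $\phipol/\psipol$ to the stable compatibility graph $(S^{\ForestDeltaBoxSymbol},C^{\ForestDeltaBoxSymbol})$ of proposition~\ref{prop:GfunForestBox-reduction}, then reduce the forest variables via proposition~\ref{prop:cg-reduction}, treating the three kinematic situations separately in order of increasing complexity. The paper confirms your anticipated obstacle: in the fully off-shell case it does not carry out the resultant enumeration by hand but instead verifies with \texttt{HyperInt} that the reduction along $z_3,z_{14},z_{12}$ (setting $z_4=1$) terminates in exactly the fifteen polynomials of \eqref{eq:ladderbox-full-kinematics}, and likewise checks computationally that the integration constants lie in $\MZV$ for every order of the kinematic limits.
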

\begin{figure}
	\centering
	$
		\Graph[1.1]{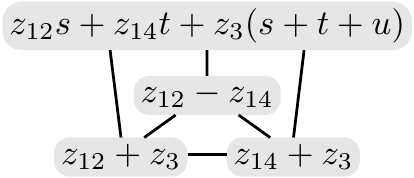}
			\hspace{-5mm}\xrightarrow{z_{14} \rightarrow 0}\hspace{-5mm}
		 \Graph[1.05]{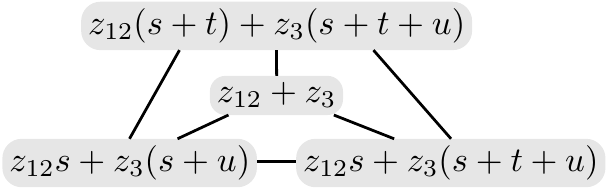}
			\hspace{-5mm}{\xrightarrow{z_{12} \rightarrow 0}} \hspace{-2mm}
		 \Graph[1.1]{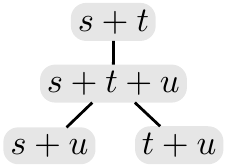}
	$
	\caption{Compatibility graph reductions for ladder boxes with one off-shell momentum $p_3^2 = s + t + u$.}%
	\label{fig:cg-box-oneoff}%
\end{figure}%
\begin{proof}
	If $\int I_G\ \Omega$ is divergent at the expansion point, we apply corollary~\ref{cor:convergent-anareg-integrand} to express it in terms of convergent integrals (with shifted $\dimension$ and $\EP_e$) and process each of those as follows.
	Using \eqref{eq:ladderbox-projective}, we can express $\int I_G\ \Omega$ as a projective integral with an integrand that has a compatibility graph 
	\begin{equation*}
		(S,C)
		=\left(
			S^{\ForestDeltaBoxSymbol} \cupdot \set{\phipol/\psipol},
			C^{\ForestDeltaBoxSymbol} \cupdot \setexp{\set{\phipol/\psipol, f}}{f \in S^{\ForestDeltaBoxSymbol}}
		\right)
	\end{equation*}
	obtained by adjoining the polynomial $\phipol/\psipol = z_{12}s + z_{14}t + z_3 p_3^2 + z_4 p_4^2$ and endowing it with compatibilities to all other polynomials.
	
	In the case where $p_4^2 = 0$, the reduction with respect to $z_4$ is simple because it does not interact with the $z_4$-independent $\phipol/\psipol = z_{12} s + z_{14} t + z_3(s+t+u)$. We obtain the leftmost graph of figure~\ref{fig:cg-box-oneoff} and compute the subsequent reductions of $z_{14}$ and $z_{12}$ as shown. Note that the final compatibility graph is isomorphic to $S^{\TriangleSymbol}$ of a triangle function $\GfunTriangle{G}$. The same argument as used in the proof before shows that the integration constants are in $\MZV$ (only $z_{12}-z_{14}$ could potentially introduce alternating sums, but the zero $z_{12}$ in $z_{14}$ decouples through its positive vanishing degree).

	When all momenta $p_1^2=\cdots=p_4^2=0$ hit the light cone, $\phipol/\psipol = s(z_{12} + x z_{14})$ is independent of both $z_3$ and $z_4$ and the situation becomes even simpler. The reductions result in complete graphs on the vertices (we do not list the monomials here)
	\begin{gather*}
		(S,C)_{z_4}
		= (\set{ z_{12}+z_{3}, z_{12}-z_{14}, z_{12}+z_{14} x, z_{14}+z_{3}}, \ldots)
		,\\
		(S,C)_{z_4,z_3}
		= (\set{z_{12}-z_{14}, z_{12}+z_{14} x}, \ldots)
		\qquad\text{and}\qquad
		(S,C)_{z_4,z_3,z_{14}}
		= (\set{1+x}, \emptyset).
	\end{gather*}
	The most general case covered by the theorem ($p_3^2, p_4^2 \neq 0$) goes through completely analogously, only with the complication that the compatibility graphs are bigger. We checked that we obtain linear reductions for the sequence $z_{3}$, $z_{14}$, $z_{12}$ of integration (setting $z_4 = 1$) and end up with $15$ irreducible polynomials (apart from the monomials) that lead to \eqref{eq:ladderbox-full-kinematics}. Finally we also verified that the integration constants are multiple zeta values, for every order of the limits $\AnaReg{z}{0}$, $\AnaReg{\bar{z}}{0}$ and $\AnaReg{x}{0}$. For these computations we used our program {\HyperInt}.
\hide{	\begin{multline*}
		z_{12}+z_{4}, 
		z_{12}-z_{14}, 
		z_{14}+z_{4}, 
		(1-z)(1-\bar{z})z_{4}+z_{12}+xz_{14}, 
		\\
		(1-z-\bar{z})z_{4}+(x-z\bar{z})z_{14}+z_{12},
		(1-z)(1-\bar{z})z_{4}+(1-z\bar{z})z_{12}+xz_{14},
		\\
		(1-z)(1-\bar{z})z_{4}+(x-z\bar{z})z_{14}+z_{12},
		(1-z)(1-\bar{z})z_{4}(z_4+z_{12})-z\bar{z}z_{12}(z_4+z_{14})+(z_{12}+z_4)(z_{12}+xz_{14})
	\end{multline*}}%
\end{proof}
This approach was tested in practice (using our program {\HyperInt}) and produced new results as we shall recall in section~\ref{sec:ex-ladderboxes}. There we also present some generalizations of theorem~\ref{theorem:ladderbox-kinematics} to wider classes of Feynman graphs.

\subsection{Landau varieties}\label{sec:landau-varieties}
To understand the origin of compatibility constraints to linear reductions, we follow the approach based on algebraic geometry that was developed in \cite[sections $5$--$6$]{Brown:PeriodsFeynmanIntegrals}. Our exposition here is extremely short and we must refer to the excellent original presentation, which includes many more details, proofs and illuminating examples. We realized that several properties of compatibility graphs are not yet completely understood and demand a thorough analysis in the future. 

Here we can only give a glimpse of this interesting and important subject, but still like to explain how proposition~\ref{prop:cg-reduction} comes about.

\subsubsection{Singularities of integrals}
Our aim is to study the singularities of partial integrals $I_k$, so the main reference is the book \cite{Pham:Singularities}. Following \cite{Brown:PeriodsFeynmanIntegrals}, we pass from the affine ambient space $\C^N$ to the compact manifold $P \defas \Projective^1_1 \times \cdots \times \Projective^1_N$ by projectivization of each variable $z_i$, which we view as coordinate $\Projective^1_i \setminus \set{\infty} \longrightarrow \C$, $[z_i:1] \mapsto z_i$ on its individual copy $\Projective^1_i \defas \CP^1$ of projective space.
Any subset $K \subset [N] \defas \set{1,\ldots,N}$ of variables induces the natural projection $\pi_{K^c}\colon P \longrightarrow \Projective^1_K \defas \prod_{i \in K} \Projective^1_i$ with fibre $\prod_{i \notin K} \Projective^1_i$.

\begin{definition}\label{def:landau-variety}
	Fix a projection $\pi \defas \pi_{K^c}$ and let $X^1 \subset P$ denote a closed analytic subset with an associated Whitney stratification $P=X^0 \supsetneq X^1 \supsetneq \cdots \supsetneq X^N$ of closed analytic subsets $X^k$ of codimension $k$ such that $X^k \setminus X^{k+1}$ are smooth manifolds. The \emph{critical set} $c A$ of an irreducible component $A \subset X^k \setminus X^{k+1}$ (open stratum) consists of the points where $\restrict{\pi}{A}$ does not submerse on $\Projective^1_K$:
\begin{equation}
	c A \defas \setexp{x \in A}{ 
			\rank D_x (\restrict{\pi}{A})
			< \abs{K}
		}.
	\label{eq:def-critical-set}%
\end{equation}
The \emph{Landau variety} $L(X,\pi)$ is the codimension $1$ part of $\pi\left( \bigcup_A c A \right)$, where the union runs over all strata $A$ of $X$.
\end{definition}
\begin{remark}
	Every irreducible polynomial $f \in \Q[z_1,\ldots,z_N]$ admits a unique irreducible lift $f' \in \Q[z_1^{},\prdu{z}_1,\ldots,z_N^{},\prdu{z}_N]$ with $f = \restrict{f'}{\prdu{z}_1=\cdots=\prdu{z}_N=1}$ such that $f'$ is homogeneous in each pair $\set{z_i^{},\prdu{z}_i}$ of coordinates (multiply each power $z_i^k$ in $f$ with ${\prdu{z}_i}^{d-k}$ where $d=\deg_{i} f$) and therefore defines a unique hypersurface $\Vanishing(f) \defas \Vanishing(f') \subset P$. Thus we can describe Landau varieties with irreducible polynomials in $z$, except for the hyperplanes $B_i^{\infty} \defas \Vanishing(\prdu{z}_i)$ at infinity. We also write $B_i^0 \defas \Vanishing(z_i)$, $B_i \defas B_i^0 \cupdot B_i^{\infty}$ and $B \defas \bigcup_{i=1}^N B_i$.
\end{remark}
Suppose $S \subset \Q[z_1,\ldots,z_N]$ is a set of irreducible polynomials and we consider an integrand $I_0 \in \BarIntegralsRegulars(S)$. It is (locally) analytic on $P\setminus X^1$, the complement of the codimension $1$ subset $X^1 \defas \bigcup_{f \in S} \Vanishing(f) \cup \bigcup_{i=1}^N B_i$. The Landau variety precisely describes the singularities of integrals of $I_0$. We cite\footnote{The proof \cite[theorem~58]{Brown:PeriodsFeynmanIntegrals} is formulated for logarithms $I_0 \in \regulars(S) \tp \Q[\log(z_i),\log(f)\colon 1 \leq i \leq N,f \in S]$ only, but it is clear that it generalizes to all iterated integrals $\BarIntegrals(S)$.}
\begin{theorem}\label{eq:theorem-analytic-outside-landau}%
	If the integrand $I_0 \in \BarIntegralsRegulars(S)$ is analytic on $(0,\infty)^N$ and \eqref{eq:total-integral} finite ($I_0$ is integrable), then any partial integral $I_K \defas \prod_{i \in K} [\int_0^{\infty} \dd z_i]\ I_0$ defines a multivalued analytic function on $\Projective^1_{K^c} \setminus L(X, \pi_K)$. It is free of singularities on $\prod_{i \notin K} (0,\infty)$.
\end{theorem}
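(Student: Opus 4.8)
The plan is to follow the classical theory of singularities of parameter integrals presented in \cite{Pham:Singularities}, adapting it to our iterated-integral integrands. The guiding principle is that the integral $I_K(t) = \int_{\gamma} I_0$, where $t$ ranges over the non-integrated coordinates in $\Projective^1_{K^c}$ and $\gamma = \prod_{i\in K}(0,\infty)$ is a fixed relative chain in the fibre of $\pi_K$, can be analytically continued in $t$ as long as $\gamma$ can be deformed continuously inside the fibre so as to avoid the singular locus $X^1$ of the integrand. A genuine singularity of $I_K$ can therefore arise only when such a deformation becomes impossible, that is, when the integration cycle is pinched by two or more branches of $X^1$ that collide, or when a branch of $X^1$ runs into the boundary of $\gamma$. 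By definition~\ref{def:landau-variety}, these pinch configurations are exactly the points of the fibre at which some stratum $A$ of $X$ fails to submerse under $\pi_K$, and their images constitute the Landau variety $L(X,\pi_K)$.

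First I would record that $I_0$, being an element of $\BarIntegralsRegulars(S)$, extends to a (multivalued) locally analytic function on $P\setminus X^1$: the rational prefactors from $\regulars(S)$ are regular there by construction, and the iterated-integral factors are homotopy invariant, hence locally analytic, by Chen's integrability condition established earlier. The heart of the argument is then to apply Thom's first isotopy lemma to the Whitney-stratified pair $(P, X^1)$ together with the proper projection $\pi_K$. Over any simply connected open $U \subset \Projective^1_{K^c}\setminus L(X,\pi_K)$---where, by the very definition of the Landau variety, $\pi_K$ restricts to a submersion on every stratum---the isotopy lemma furnishes a stratum-preserving local trivialization $\pi_K^{-1}(U)\isomorph U\times F$ with $(F, X^1\cap F)$ the typical fibre pair. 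This trivialization transports the chain $\gamma$ continuously with $t$, so that differentiation under the integral sign becomes legitimate and $I_K$ is analytic on $U$; its multivaluedness on $\Projective^1_{K^c}\setminus L$ then reflects the monodromy of $\gamma$ under loops encircling the components of $L$ (Picard--Lefschetz).

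The main obstacle is to make the preceding step genuinely rigorous in our non-compact, boundary-laden setting. The integration domain $(0,\infty)^K$ is not a closed cycle but a chain with boundary on the coordinate divisors $B_i^0$ and $B_i^{\infty}$; consequently the trivialization must be compatible with the boundary strata, and potential pinches of $\gamma$ against its own endpoints $0$ and $\infty$ have to be shown to contribute only along $L(X,\pi_K)$ as well. This is precisely why the definition includes the divisors $B=\bigcup_{i=1}^N B_i$ inside $X^1$, and why a \emph{Whitney} stratification---not merely a decomposition into smooth pieces---is indispensable: Whitney regularity is what guarantees that the local triviality extends across adjacent strata, including those at infinity, so that the deformation of $\gamma$ respects its boundary throughout.

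Finally, for the assertion that $I_K$ is free of singularities on $\prod_{i\notin K}(0,\infty)$, I would argue directly rather than through the deformation picture. On this real-positive locus the integrand is analytic on the open cube $(0,\infty)^N$ by hypothesis and the iterated integral converges absolutely by integrability, so each hypersurface $\Vanishing(f)$ with $f\in S$ avoids the interior of $\gamma$. A pinch along the positive reals would require two real branches of $X^1$ approaching $\gamma$ from opposite sides simultaneously, which cannot occur while $I_0$ remains analytic and absolutely integrable over the whole positive cube; hence the chain $\gamma$ need not be deformed at all, differentiation under the integral sign applies verbatim, and $I_K$ is analytic on $\prod_{i\notin K}(0,\infty)$.
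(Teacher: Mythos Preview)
The paper does not prove this theorem: it explicitly \emph{cites} it from \cite[theorem~58]{Brown:PeriodsFeynmanIntegrals}, with a footnote remarking that Brown's proof is stated for logarithmic integrands $I_0 \in \regulars(S) \tp \Q[\log(z_i),\log(f)]$ and extends without difficulty to the full $\BarIntegrals(S)$. So there is no ``paper's own proof'' to compare against here.

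Your sketch is the right one for what such a proof actually looks like, and is in the spirit of the Pham--Leray--Thom machinery that Brown's argument rests on: trivialize the stratified fibration away from the critical values via Thom's isotopy lemma, transport the chain, and differentiate under the integral sign. You correctly identify the key technical point that the integration domain is a chain with boundary on the coordinate divisors $B_i^0 \cup B_i^\infty$, which is why these are included in $X^1$ from the outset. The one place where your argument is thinner than it should be is the treatment of the boundary at infinity: the integrand $I_0$ does not extend continuously across $B_i^\infty$ (it has logarithmic growth there), so the isotopy-lemma step really requires working with the compactification and controlling the behaviour of the iterated integrals near infinity---this is where the logarithmic bounds on hyperlogarithms (proposition~\ref{prop:hlog-divergences}) enter. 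Your final paragraph on regularity over $\prod_{i\notin K}(0,\infty)$ is also slightly loose: absolute integrability alone does not preclude a pinch, but the hypothesis that $I_0$ is analytic on the full open cube does, since no component of $X^1$ meets the interior of the domain and the endpoint contributions are absorbed into $B \subset X^1$.
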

It follows that $S$ is linearly reducible if all Landau varieties $S^k \defas L(X,\pi_{\set{1,\ldots,k}})$ are linear in $z_{k+1}$, as they constitute a linear reduction of $S$ by themselves. 
However, the converse is not necessarily true: A Landau variety $L(X,\pi_{\set{1,\ldots,k}})$ could contain components which are spurious in the sense that those never actually occur as singularities of the partial integral $I_k$ of any convergent integrand $I_0$. If such a spurious component is non-linear, we would not be able to detect linear reducibility from the computation of the Landau varieties.\footnote{We are not aware of an explicit example where this happens, because it is in general very difficult to compute Landau varieties exactly. This remains an interesting project for future research.}

Now we will focus on the computation of the Landau varieties of the initial set $S$.

\subsubsection{Approximations}
It is difficult to compute Landau varieties $L(X,\pi)$ exactly, because in our applications they are very degenerate and typically have many components. In particular, explicit formulas for resultants of several polynomials cannot be applied in practice (the coefficients of the polynomials are not general enough but satisfy algebraic relations) and in any case involve exceedingly complicated expressions. The exact knowledge of Landau varieties in our setting is basically limited to the first steps in the reduction of a graph hypersurface $X = B \cup \Vanishing(\psipol_G)$ as computed in \cite{Brown:PeriodsFeynmanIntegrals}.
\begin{proposition}\label{prop:reduction-dodgsons}
	If $S=\set{\psipol}$ is the first Symanzik polynomial $\psipol_G$ of a graph $G$, then the Landau varieties
	$
		L(X, \pi_{\set{1,\ldots,k}})
	$
	of $X = X_S \cup B$ are linear in all Schwinger variables for $k \leq 4$ and given by Dodgson polynomials \eqref{eq:def-dodgson}. Explicitly, $L(X, \pi_{\set{1,\ldots,k}})$ is
	\begin{equation*}
		\bigcup_{i>k} B_i \ \cup\ \text{irreducible factors of}\ 
		\setexp{\dodgson^{I,J}_K}{\abs{I} = \abs{J}\ \text{and}\ (I \cup J) \cupdot K = \set{1,\ldots,k}}.
	\end{equation*}
	This still holds when $k = 5$, except that apart from Dodgson polynomials, $L(X,\pi_{\set{1,\ldots,5}})$ contains also the five-invariant $\fiveinv{1,2,3,4,5}$ which can be non-linear.
\end{proposition}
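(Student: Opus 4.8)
The plan is to follow Brown's inductive strategy, computing the Landau varieties one projection at a time and identifying at each stage the irreducible polynomials they produce. The first thing I would record is a linearity lemma: since the graph matrix $\GM$ carries each Schwinger parameter $\SP_e$ only in the single diagonal entry of the block $\DM$ indexed by $e$, every minor determinant has degree at most one in $\SP_e$. Consequently every Dodgson polynomial $\dodgson^{I,J}_{K}$ of definition~\ref{def:dodgson} is linear in each variable $\SP_e$ with $e \notin I \cup J \cup K$. Thus the linearity assertions for $k \leq 4$ will follow automatically once I show that the only irreducible factors occurring in $L(X,\pi_{\set{1,\ldots,k}})$ are Dodgson polynomials, while the possible non-linearity at $k=5$ will be traced entirely to the five-invariant $\fiveinv{1,2,3,4,5}$, which is a $2\times 2$ determinant of Dodgsons and hence generically quadratic.

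For the base case I would compute $L(X,\pi_1)$ directly from the definition. Writing $\psipol = \SP_1 \dodgson^{1,1} + \dodgson_1$ by the contraction–deletion formula \eqref{eq:psi-contraction-deletion}, the fibre of $\pi_1$ over a generic point is the single root $\SP_1 = -\dodgson_1/\dodgson^{1,1}$, which degenerates exactly where $\dodgson^{1,1}=0$ (the root escaping to $B_1^{\infty}$), where $\dodgson_1=0$ (the root meeting $B_1^{0}$), or over the boundary divisors $B_i$ with $i>1$. Working out the critical set \eqref{eq:def-critical-set} of the stratum $\Vanishing(\psipol)$ together with its intersections with $B$ then yields $L(X,\pi_1) = \Vanishing(\dodgson^{1,1}) \cup \Vanishing(\dodgson_1) \cup \bigcup_{i>1} B_i$, which matches the claimed list for the index data $(I\cup J)\cupdot K = \set{1}$.

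The inductive step is where the real work lies. Passing from $\pi_{\set{1,\ldots,k}}$ to $\pi_{\set{1,\ldots,k+1}}$ introduces, besides the leading and constant parts of the stage-$k$ polynomials in $\SP_{k+1}$, the resultants $\resultant{\dodgson^{I,J}_{K}}{\dodgson^{I',J'}_{K'}}{\SP_{k+1}}$ of the Dodgson polynomials already present. The heart of the argument is to show that each such resultant factorises into Dodgson polynomials whose combined index set stays inside $\set{1,\ldots,k+1}$. I would establish this using the Dodgson identity \eqref{eq:dodgson-identity} together with its generalisation, Jacobi's determinant formula of lemma~\ref{lemma:dodgson-jacobi}: these identities express the relevant resultants as products of Dodgson minors, up to extraneous factors that are themselves Dodgson polynomials already recorded in the reduction. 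Keeping careful track of which index triples $(I,J,K)$ arise and verifying that the bookkeeping closes up through $k=4$ is the main obstacle; it is a finite but delicate combinatorial verification, and it is precisely at the fifth step that the pattern breaks. One specific resultant there does not collapse to a single Dodgson minor but instead equals the five-invariant $\fiveinv{1,2,3,4,5}$ of \eqref{eq:def-fiveinv}, which need not factor further.

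To finish I would confirm that no resultant arising at $k=5$ produces anything beyond Dodgson polynomials and this single five-invariant, so that $L(X,\pi_{\set{1,\ldots,5}})$ is as stated. The subtle point to guard against throughout is that the Landau variety is by definition the locus of \emph{genuine} singularities, whereas iterated resultants only furnish an upper bound (cf. theorem~\ref{eq:theorem-analytic-outside-landau}); following Brown, one must argue that each listed polynomial actually occurs as a critical value of some stratum, so that the resultant-based bound and the true Landau variety coincide in these first five steps.
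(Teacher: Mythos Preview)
The paper does not supply its own proof of this proposition: it is stated as a quotation of Brown's computation in \cite{Brown:PeriodsFeynmanIntegrals}, introduced by the remark that ``the exact knowledge of Landau varieties in our setting is basically limited to the first steps in the reduction of a graph hypersurface $X = B \cup \Vanishing(\psipol_G)$ as computed in \cite{Brown:PeriodsFeynmanIntegrals}.'' Your outline follows Brown's original strategy (also foreshadowed in \cite[section~8]{BlochEsnaultKreimer:MotivesGraphPolynomials}): exploit contraction--deletion for the base step, then use the Jacobi/Dodgson determinant identities of lemma~\ref{lemma:dodgson-jacobi} to show that iterated resultants of Dodgson polynomials again factor into Dodgsons through $k\leq 4$, with the five-invariant appearing as the first obstruction at $k=5$. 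That is the right skeleton, and your closing caveat about the distinction between the resultant upper bound and the actual Landau variety is exactly the point Brown handles carefully; the ``finite but delicate combinatorial verification'' you allude to is indeed the bulk of the argument and is not reproduced in this thesis.
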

The practical approach computes upper bounds on $L(X,\pi_{\set{1,2}}) \subseteq L(L(X,\pi_{\set{1}}), \pi_{\set{2}})$ by iteration of one-dimensional projections. If we just set $S^k \defas L(S^{k-1} \cup B, \pi_k)$, this is precisely the \emph{simple reduction} of lemma~\ref{lemma:simple-reduction}. Taking intersections over different representations of $\pi_K = \pi_{k_{\sigma(1)}} \circ \cdots \circ \pi_{k_{\sigma(r)}}$ as iterated one-dimensional projections into account ($K = \set{k_1,\ldots,k_{r}}$) yields the \emph{Fubini reduction} of definition~\ref{def:Fubini-reduction}.

These algorithms already fail to reproduce proposition~\ref{prop:reduction-dodgsons} and contain more and more \emph{spurious} components (polynomials which are not present in the actual Landau variety) as the dimension of the projection increases. When these become non-linear, linear reducibility can not be detected (nor disproven) with this method. We also need to avoid spurious polynomials to put tighter constraints on the periods that an integral can evaluate to (see section~\ref{sec:propagator-questions} for an example).

Francis Brown analyzed the Landau variety $L(X, \pi_{\set{i,j}})$ of a two-dimensional projection $\pi_{\set{i,j}} = \pi_j \circ \pi_i$ of type $(1,1)$-hypersurfaces $X \ni f = f^{ij} z_i z_j + f^i_j z_i + f_j^i z_j + f_{ij}$ in \cite{Brown:PeriodsFeynmanIntegrals}. He observed that not all pairs of iterated resultants $\resultant{\resultant{f_1}{f_2}{i}}{\resultant{f_3}{f_4}{i}}{j}$ need to be taken into account, but it suffices to only consider those where only three different polynomials (\emph{grandparents}) appear (say $f_2 = f_3$). In our definition~\ref{def:cg-reduction} we further restrict those compatibilities between $\resultant{f_1}{f_2}{i}$ and $\resultant{f_2}{f_3}{i}$ to the case when $\set{f_1,f_2,f_3}$ are mutually compatible with each other.\footnote{In the original formulation \cite{Brown:PeriodsFeynmanIntegrals}, the compatibilities $C$ do not influence the compatibilities $C_i$ at all.}

\subsubsection{Proof of proposition~\ref{prop:cg-reduction}}
We prove a stronger statement by induction: Suppose all iterated reductions $(S^k,C^k)$ exist and let $K \subset [N]$ be any set of variables disjoint from $[k]$. Then
\begin{equation}
	L(S \cup B, \pi_{[k] \cup K})
	\subseteq
	\bigcup_{\text{cliques}\ H \subseteq S^k \cup B_{>k}} L\left( H, \pi_{K} \right),
	\quad\text{where}\quad
	B_{>k} \defas\bigcup_{i>k} B_i,
	\label{eq:landau-clique-bound}%
\end{equation}
is bounded by the union of Landau varieties (with respect to $\pi_K$) of those subsets $H$ of polynomials that form a clique\footnote{A clique is a complete graph, so we require mutual compatibilities $\set{f,g} \in C^k$ for all $f,g\in H$.} in $\left( S^k,C^k \right)$. This restriction to cliques is essential, as otherwise \eqref{eq:landau-clique-bound} is just the simple bound $L(S\cup B, \pi_{[k]\cup K}) \subseteq L(L(S\cup B,\pi_{[k]}), \pi_K)$.

Since we start with the complete graph $C^0 = \binom{S}{2}$, \eqref{eq:landau-clique-bound} is trivial for $k=0$. So let us assume \eqref{eq:landau-clique-bound} holds for some $k$ and let $K \cap [k+1] = \emptyset$. Then
\begin{equation*}
	L\left(S\cup B, \pi_{[k+1]\cup K}\right)
	\subseteq L\left( L(S\cup B, \pi_{[k]}), \pi_{K \cup \set{k+1}} \right)
	\subseteq \bigcup_{\text{cliques}\ H \subseteq S^k \cup B_{>k}} L\left(H,\pi_{K \cup \set{k+1}}\right)
\end{equation*}
is granted and we approximate this further with $L(H,\pi_{K \cup \set{k+1}}) \subseteq L( L(H,\pi_{k+1}), \pi_K)$. The one-dimensional projection is easy to compute and corresponds to the simple reduction \eqref{eq:all-resultants}. Following \cite[lemma~76]{Brown:PeriodsFeynmanIntegrals}, we obtain
\begin{equation}
	L(H,\pi_{k+1})
	= \setexp{\resultant{f}{g}{k+1}}{f,g \in H \cup B_{k+1}}
	\subseteq S^{k+1}
	\label{}
\end{equation}
as all polynomials in $H$ are compatible with each other. This shows
$L(H,\pi_{K \cup \set{k+1}}) \subseteq L(S^{k+1}\cup B_{>k+1}, \pi_K)$, but we need to improve this bound and replace it by the union of $L(H',\pi_K)$ for cliques $H' \subseteq S^{k+1}\cup B_{>k+1}$.
	
So let $A$ denote a stratum of the stratification $X$ generated by $X^1 = H \subset \Projective^1_{[k]^c}$. After passing to the smallest subset of $H$ that still generates $A$, we may assume that $A \subseteq \bigcap_{i=1}^r \Vanishing(f_i)$ for $H=\set{f_1,\ldots,f_r}$ (beware that $A$ may have any codimension above or equal to $r$).
Let $f_i = a_i z_{k+1} + b_i$ denote the coefficients of $f_i$ and consider an arbitrary point $x \in cA$ in the critical set of $A$ (relative to $\pi_{K\cup\set{k+1}}$). We distinguish two cases:
\begin{enumerate}
	\item For some $1 \leq i \leq r$, we have $a_i(x) \neq 0$. In this situation, 
	\begin{equation}
		\left[ \bigcap_{j=1}^r \Vanishing(f_j) \right] \setminus \Vanishing(a_i)
		= \left[ \Vanishing(f_i) \cap \bigcap_{j \neq i} \Vanishing(\resultant{f_i}{f_j}{k+1}) \right] \setminus \Vanishing(a_i)
		\label{}
	\end{equation}
	has precisely one point in each fibre $\Projective^1_{k+1}$. Hence over the complement of $\Vanishing(a_i)$, the image $A' \defas \pi_{k+1}(A)$ is a smooth stratum of $H' \defas \bigcup_{j\neq i} \Vanishing(\resultant{f_i}{f_j}{k+1}) \subset \Projective^1_{[k+1]^c}$ with the same dimension. 
	The trivial fibration of the tangent spaces
	\begin{equation*}
		\setexp{v=\sum_{j=k+1}^N v^j \partial_j}{0 = v(f_1) =\cdots = v(f_r)}
		= \setexp{-v\left(\frac{b_i}{a_i}\right)\partial_{k+1} + v}{v \in T_{x'} H'}
	\end{equation*}
	over $x' \defas \pi_{k+1} (x)$ passes on to $A$, in particular $\pi_{k+1} (T_x A) = T_{x'} A'$. This proves $\pi_{K \cup\set{k+1}}(T_x A) = \pi_{K} (T_{x'} A')$ and therefore $\pi_{K \cup \set{k+1}}(cA) \subseteq \pi_K (cA')$ over the complement of $\Vanishing(a_i)$. Thus $\pi_{K\cup\set{k+1}}(x) \in L(H',\pi_K)$ for the clique $H'$.\footnote{Each pair $\resultant{f_i}{f_j}{k+1}$, $\resultant{f_i}{f_{j'}}{k+1}$ is compatible in definition~\ref{def:cg-reduction} because $\set{f_i,f_j,f_{j'}} \subseteq H$ form a triangle in the complete graph $H$.}

	\item For all $1\leq i \leq r$, the coefficients $a_i(x)=0$ vanish (at $x$). At these points,
\begin{equation*}
	\bigcap_{i=1}^r \Vanishing(f_i) \cap \bigcap_{i=1}^{r} \Vanishing(a_i)
	= \Projective^1_{k+1} \times \left( \bigcap_{i=1}^r \Vanishing(a_i) \cap \bigcap_{i=1}^{r} \Vanishing(b_i) \right)
\end{equation*}
is parallel to the fibre and the stratification reduces to the coefficients. Hence we find $\pi_{K\cup\set{k+1}}(x)\in L(H', \pi_K)$ for $H' \defas \setexp{a_i,b_i}{1 \leq i \leq r} \cup B_{>k+1} \subseteq S^{k+1}$. But generically, this set $H'$ is not a clique because $a_i = \resultant{f_i}{\infty}{k+1}$ and $b_j = \resultant{f_j}{0}{k+1}$ are not compatible for $i \neq j$. So unless all $a_1 = \cdots = a_r = 0$ vanish identically (then only $b_i$ appear and $H'$ is indeed a clique), we must find another set.

Then choose some $1 \leq i \leq r$ such that $a_i \neq 0$ is not identically zero and consider
\begin{equation}
	H'
	\defas
	\set{a_i,b_i}
	\cup
	\setexp{\resultant{f_i}{f_j}{k+1}}{j \neq i}
	\cup B_{>k+1},
	\label{eq:landau-clique-proof-parallel}%
\end{equation}
which indeed defines a clique (recall that $a_i = \resultant{f_i}{\infty}{k+1}$ and $b_i = \resultant{f_i}{0}{k+1}$). Since $\Vanishing(a_i) \cap \Vanishing(\resultant{f_i}{f_j}{k+1}) = \Vanishing(a_i) \cap \Vanishing(a_i b_j - a_j b_i) = \Vanishing(a_i) \cap \left( \Vanishing(a_j) \cup \Vanishing(b_i) \right)$and analogously
$\Vanishing(b_j) \cap \Vanishing(b_i) \subseteq \Vanishing(b_i) \cap \Vanishing(\resultant{f_i}{f_j}{k+1})$, we see that
\begin{equation*}
	\left( \bigcup_{f \in H'} \Vanishing(H') \right) \cap \Vanishing(a_i) \cap \Vanishing(b_i)
	\supseteq \left( \bigcup_{j=1}^{r} \big[\Vanishing(a_j) \cup \Vanishing(b_j)\big] \right) \cap \Vanishing(a_i) \cap \Vanishing(b_i).
\end{equation*}
Therefore also the stratification of the clique \eqref{eq:landau-clique-proof-parallel} generates the stratum $A$ and $\pi_{K\cup\set{k+1}}(x) \in L(H',\pi_K)$.
\end{enumerate}
This discussion completes the proof: For each point $x\in cA$ in the critical set of a stratum $A$ that is generated by a clique $H=\set{f_1,\ldots,f_r} \subseteq S^k\cup B_{>k}$, we showed that $S^{k+1}\cup B_{>k+1}$ contains a clique $H'$ such that $\pi_{K\cup\set{k+1}}(x) \in L(H',\pi_K)$. Hence we covered
\begin{equation*}
	L(H,\pi_{K\cup\set{k+1}})
	= \bigcup_A
	\pi_{K\cup\set{k+1}}(cA)
	\subseteq \bigcup_{\text{cliques}\ H' \subseteq S^{k+1} \cup B_{>k+1}} 
	L(H',\pi_K).
\end{equation*}
\begin{remark}[Interpretation via partial fractioning]
	If we only consider rational functions (and ignore the hyperlogarithms), the clique-rule in the definition~\ref{def:cg-reduction} is very easy to understand: Suppose the rational function $f \in \regulars(S)$ combines only denominators that are mutually compatible:
	\begin{equation*}
		F \in \sum_{\text{clique}\ H\subseteq S} \Q\left[ \setexp{z_i}{1\leq i \leq N} \cup \setexp{f^{-1}}{f \in H} \right].
	\end{equation*}
	The decomposition of 
$
	\frac{1}{f_i f_j} 
	= \left( \frac{\resultant{f_i}{\infty}{k}}{f_i}
		- \frac{\resultant{f_j}{\infty}{k}}{f_j} \right)
	\frac{1}{\resultant{f_i}{f_j}{k}}
$ into partial fractions shows that
\begin{equation*}
	\frac{1}{f_{1}\cdots f_{r}}
	= 
	\sum_{\mu=1}^r
	\frac{1}{f_{\mu}} \prod_{\nu\neq \mu} \frac{\resultant{f_{\mu}}{\infty}{k}}{\resultant{f_{\mu}}{f_{\nu}}{k}}
\end{equation*}
involves only coefficients (constant with respect to the integration variable $z_k$) whose denominators are cliques of resultants. This generalizes to higher powers of the polynomials $f_i$ in the denominator. Clearly, if $F$ does not involve a denominator that mixes the polynomials $f_i$ and $f_j$ say, then the partial fraction decomposition of $F$ (with respect to some $z_k$) can not introduce a resultant $\resultant{f_i}{f_j}{k}$ into the denominator.

Morally, there should exist a direct (combinatoric) way to prove proposition~\ref{prop:cg-reduction} by carrying over this very simple argument, avoiding the rather heavy machinery we resorted to above.
\end{remark}
\subsubsection{Intersecting compatibility graphs}\label{sec:cg-intersection}
The construction of compatibility graphs $(S^{K,k_{r+1}},C^{K,k_{r+1}}) \defas(S^K,C^K)_{k_{r+1}}$ is defined along some order $K=(k_1,\ldots,k_r)$ of the variables. It is tempting to construct a finer bound on the Landau varieties by intersecting 
\begin{equation}
	\left(S^{K}, C^{K}\right) 
	\defas \bigcap_{e \in K} \left(S^{K\setminus e}, C^{K\setminus e}\right)_e,
	\label{eq:cg-intersection}%
\end{equation}
just as in the Fubini algorithm \eqref{eq:def-Fubini-reduction}. But in fact it is not obvious that this is permissible. For example, consider the following situations:
\begin{enumerate}
	\item Suppose that the orders of reduction give the same polynomials $S^{1,2} = S^{2,1}$ but different compatibilities. Say $\set{f_1,f_2} \in C^{1,2}\setminus C^{2,1}$ and $\set{f_3,f_4} \in C^{2,1}\setminus C^{1,2}$ are compatible in the graph for one of the reduction orders, but not in the other. These compatibilities will drop out in the intersection $C^{1,2} \cap C^{2,1}$.

		But it may still be that the resultants $\resultant{f_1}{f_2}{3} \in S^{1,2,3}$ and $\resultant{f_3}{f_4}{3} \in S^{2,1,3}$ coincide and give a contribution to the Landau variety $L(X,\pi_{\set{1,2,3}})$, which might be missed if the compatibilities are removed through the intersection.

	\item More drastically, it might be that $S^{1,2} \neq S^{2,1}$ contain different polynomials. Then all resultants $\resultant{f_1}{f_2}{3}$ with $f_1 \in S^{1,2}\setminus S^{2,1}$ will be lost after the intersection, even if they contribute polynomials to $S^{1,2,3}$ that could also occur in $S^{2,1,3}$.
\end{enumerate}
Of course we know that $S^{1,2} \cap S^{2,1}$ is indeed an upper bound for the Landau variety $L(X,\pi_{1,2})$ by \eqref{eq:barintegrals-intersection}, but the problem with \eqref{eq:cg-intersection} is that in general the identity
\begin{equation*}
	(S,C)_k \cap (S',C')_k
	= (S \cap S', C\cap C')_k
\end{equation*}
does not hold for arbitrary graphs (we can construct counterexamples following the observations above). Interestingly, in our applications this intersection still always computed an upper bound on the Landau varieties which suggests that it might indeed hold for compatibility graphs (then they must obey further structure that we missed so far).

However, since a complete proof is not available yet we avoided such intersections in our applications of section~\ref{sec:recursion-reducibility}. While these intersections of compatibilities lie at heart of the original formulation of \cite{Brown:PeriodsFeynmanIntegrals}, they are not necessary for our improved reduction algorithm of definition~\ref{def:cg-reduction}.

\chapter{The {\Maple} program \HyperInt}
\label{chap:hyperint}%

\section{Introduction}
We implemented the algorithms of section~\ref{sec:hlog-algorithms} in the computer algebra system {\Maple} \cite{Maple}. 
Our aim was to compute Feynman integrals, but we designed the program such that it suits much more general calculations with polylogarithms. It is open source and may be obtained from \cite{Panzer:HyperIntAlgorithms}.

Facilities for numeric evaluations of hyper- and polylogarithms are not included, because such are not necessary for the integration algorithms and secondly there are already established programs available for this task \cite{Ginac:Introduction,VollingaWeinzierl:NumericalMpl}.

Not all features of {\HyperInt} are discussed here in full detail. Some additional functions are listed in appendix~\ref{chap:hyperint-reference} and demonstrated in \Filename{Manual.mw}.
\begin{remark}
	The program uses the \code{remember} option of {\Maple}, which creates lookup tables to avoid recomputations of functions. But some of these functions depend on global parameters as explained for instance in section~\ref{sec:factorization}.
Therefore, whenever such a parameter is changed, the function \mbox{\code{forgetAll}$()$} must be called to invalidate those lookup tables. Otherwise the program might behave inconsistently.
\end{remark}

\section{Installation and files}
\label{sec:installation}%
The program requires no installation. It is enough to load it during a {\Maple}-session by
\begin{MapleInput}
read "HyperInt.mpl";
\end{MapleInput}
if the file \code{HyperInt.mpl} is located in the current directory or another place in the search paths of {\Maple}. 
All together, we supply the following main files:
\begin{description}
	\item[\Filename{HyperInt.mpl}]
		\hfill\\
		Our implementation of the algorithms in section~\ref{sec:hlog-algorithms} together with supplementary procedures to handle Feynman graphs and Feynman integrals.

	\item[\Filename{periodLookups.m}]
		\hfill\\
		This table stores a reduction of multiple zeta values up to weight $12$ to a (conjectured) basis and similarly for alternating Euler sums up to weight $8$. It is not required to run the program, but necessary for efficient calculations involving high weights. A detailed explanation follows in section~\ref{sec:periods}.

	\item[\Filename{Manual.mw}]
		\hfill\\
		This {\Maple} worksheet explains the practical usage of {\HyperInt}. In particular it demonstrates plenty of explicit Feynman integral computations with details, explanations and further comments.

	\item[\Filename{HyperTests.mpl}]
		\hfill\\
		A series of various test cases for the program, see section~\ref{sec:hyperint-tests}. Calling {\Maple} with \code{maple HyperTests.mpl} must run without any error messages, otherwise the author will appreciate a notification of errors that occurred. These tests require that \code{periodLookups.m} can be found and loaded by {\HyperInt}.

		Since this test file entails various applications of {\HyperInt}, it supplements the manual and might support learning how to use the program.
\end{description}

\section{Representation of polylogarithms and conversions}
\label{sec:representations-conversions}
Words
$w_i=\letter{\sigma_1}\!\!\cdots\letter{\sigma_r} \safed [\sigma_1,\ldots,\sigma_r] \in \Sigma^r$
are written as lists and combined with rational prefactors $g_i$ to encode hyperlogarithms $f$ of some variable $z$
\begin{equation}
	\label{eq:maple-internal-hyperlog} %
	f 
	= [ 
				[g_1, w_1],
				[g_2, w_2],
				\ldots
		]
	\defas
	\sum_{i} 
		g_i(z) \Hyper{w_i}(z)
	.%
\end{equation}
But the building blocks for the integration algorithm are the regularized limits
\begin{equation}
	\label{eq:maple-internal-reginf} %
	f
	=	
	[ 
		[g_1,[ w_{1,1}, \ldots, w_{1,r_1}] ],
		[g_2,[ w_{2,1}, \ldots, w_{2,r_1}] ],
		\ldots
	]
	\defas
	\sum_i g_i
	\prod_{j=1}^{r_i} \Hyper{\WordReg{}{\infty}(w_{i,j})}(\infty),
\end{equation}
encoded by a list of pairs of rational prefactors $g_i$ and lists of words $w_{i,j} = \WordReg{0}{}\left(w_{i,j}\right)$ not ending on $\letter{0}$.
The products in this apparently wasteful representation are \emph{not} auto\-matically shuffled out and replaced by $\Hyper{\WordReg{}{\infty}(\shuffle_j w_{i,j})}(\infty)$, as suggested by lemma~\ref{lemma:iint-shuffle}, but instead kept separated for two reasons:
\begin{enumerate}
	\item Empirically, this expansion tends to increase the number of terms considerably.

	\item Our algorithm from section~\ref{sec:reglim-reginf} to compute $\AnaReg{t}{0}$ generates products of words with \emph{different sets of letters} distinguished by their vanishing degrees (corollary~\ref{corollary:reglim-degree-decouple}). Mixing such letters via shuffles introduces spurious letters in subsequent integration steps which we want to avoid.
\end{enumerate}
These representations make the implementation of the algorithms of section~\ref{sec:hlog-algorithms} straightforward, but to improve readability, {\HyperInt} understands the notations
\begin{align*}
	\code{Hlog}\left(z, [\sigma_1,\ldots,\sigma_r]\right)
	& \defas
	\Hyper{\letter{\sigma_1}\!\!\cdots\letter{\sigma_r}}(z),
	\\
	\code{Hpl}\left([n_1,\ldots,n_r],z\right)
	&\defas
	H_{n_1,\ldots,n_r}(z)
	=
	\Hyper{\underline{n_1},\ldots,\underline{n_r}}(z)
	\quad\text{and}
	\\
	\code{Mpl}\left( [n_1,\ldots,n_r], [z_1,\ldots,z_r] \right)
	& \defas
	\Li_{n_1,\ldots,n_r}\left(z_1,\ldots,z_r\right)
\end{align*}
for hyperlogarithms \eqref{eq:def-Hlog}, \eqref{eq:def-Hlog-noshuffle}, harmonic polylogarithms \eqref{eq:compressed-Hpl-as-Hlog} and multiple polylogarithms \eqref{eq:def-Mpl}. {\HyperInt} extends the native function \code{convert}$(f,\code{form})$ such that it can transform expressions $f$  which may contain any of the functions
\begin{equation*}
\set{
	\code{log},
	\code{ln},
	\code{polylog},
	\code{dilog},
	\code{Hlog},
	\code{Mpl},
	\code{Hpl}
}
\end{equation*}
into one of the possible target formats:
\begin{description}
	\item[$\code{form} \in \set{\code{Hlog}, \code{Hpl}, \code{Mpl}}$:]
	\hfill\\
	Expresses $f$ in terms of \code{form}-functions, using \eqref{eq:Hlog-as-Mpl}, \eqref{eq:Mpl-as-Hlog} and \eqref{eq:compressed-Hpl-as-Hlog}.

	\item[$\code{form}=\code{HlogRegInf}$:]
	\hfill\\
	Writes $f$ in the list representation \eqref{eq:maple-internal-reginf} through a M\"{o}bius transformation \eqref{eq:Moebius-transformation}.

	\item[$\code{form} = \code{i}$]
	\hfill\\
	Equal to $\code{form} = \code{Hlog}$, but produces the notation
	\begin{equation*}
		i[0,\sigma_n/z,\ldots,\sigma_1/z,1]
		=
		i[0,\sigma_n,\ldots,\sigma_1,z]
		\defas
		\code{Hlog}\left( z, [\sigma_1,\ldots,\sigma_n] \right)
	\end{equation*}
	that is used in {\zetaprocedures} \cite{Schnetz:ZetaProcedures}. In this program, the result can for example be evaluated numerically with $\code{evalz}\left( \cdot \right)$.
\end{description}
\begin{example}The dilogarithm $\Li_2(z)$ has representations
\begin{MapleInput}
convert(polylog(2,z), Hlog);
\end{MapleInput}
\begin{MapleMath}
	-\Hlog{1}{0, 1/z}
\end{MapleMath}
\begin{MapleInput}
convert(polylog(2,z), HlogRegInf);
\end{MapleInput}
\begin{MapleMath}
	[[1, [[-1+z, -1]]], [-1, [[-1, -1]]]]
\end{MapleMath} 
\begin{MapleInput}
convert(polylog(2,z), Mpl);
\end{MapleInput}
\begin{MapleMath}
	\Mpl{2}{z}
\end{MapleMath} 
\begin{MapleInput}
convert(polylog(2,z), i);
\end{MapleInput}
\begin{MapleMath}
	-i[0, 1/z, 0, 1]
\end{MapleMath}
\end{example}
Due to the many functional relations, a general polylogarithm $f(\vec{z})$ has many different representations. In particular, the representation \eqref{eq:maple-internal-reginf} is far from being unique.

It is therefore crucial to be able to express polylogarithms in a basis in order to simplify results and to detect relations. Such a representation is furnished by \eqref{eq:barintegrals-hlog-products} (the iteration of proposition~\ref{prop:reginf-as-hlog}) and implemented as the function
\begin{equation*}
	\code{fibrationBasis}\left(f, [z_1,\ldots,z_r], F\right)
	=
	\sum_i
		\Hyper{w_{i,1}}(z_1)
		\cdot
		\ldots
		\cdot
		\Hyper{w_{i,n}}(z_n)
		\cdot
		c_i,
	\label{eq:fibration-basis} %
\end{equation*}
which writes a polylogarithm $f(\vec{z})$ as the unique linear combination of products of hyperlogarithms such that each $w_{i,j} \in T(\Sigma_j)$ has algebraic letters 
$
	\Sigma_j
	\subset
	\overline{\C(z_{i+1},\ldots,z_n)}
$.
The result depends on the order $\vec{z} = [z_1,\ldots,z_r]$ of variables and entails constants
\begin{equation}
	\label{eq:fibration-basis-constants} %
	c_i
	\in
	\AnaReg{z_n}{0} \ldots \AnaReg{z_1}{0}
	\AnaReg{z}{\infty} \HlogAlgebra(\Sigma)(z).
\end{equation}
If the optional table $F$ is supplied, the result will be stored as $F_{[w_{i,1},\ldots,w_{i,n}]} = c_i$.
\begin{example}
	This function can be used to obtain functional relations between polylogarithms. For example,
	\begin{MapleInput}
fibrationBasis(polylog(2,1-z), [z]);
convert(
	\end{MapleInput}
	\begin{MapleMath}
		-\Hlog{z}{1, 0} + \mzv{2}
		\\
		-\Mpl{2}{z} + \ln(z) \Mpl{1}{z} + \mzv{2}
	\end{MapleMath}%
	reproduces the classic identity $\Li_2(1-z) = \mzv{2} - \Li_2(z) - \log z \log(1-z)$. Similarly, we obtain the inversion relation for $\Li_5\left(-\frac{1}{x} \right) = \frac{1}{120} \ln^5 x + \frac{\mzv{2}}{6} \ln^3 x + \frac{7}{10}\mzv[2]{2} \ln x + \Li_5(-x)$:
	\begin{MapleInput}
fibrationBasis(polylog(5, -1/x), [x]):
convert(
	\end{MapleInput}
	\begin{MapleMath}
		\frac{1}{6}\mzv{2}\ln(x)^3+\frac{1}{120} \ln(x)^5 + \Mpl{5}{-x} + \frac{7}{10} \mzv[2]{2} \ln(x)
	\end{MapleMath}
	As an example involving multiple variables, the five-term relation of the dilogarithm is recovered as
\begin{MapleInput}
polylog(2,x*y/(1-x)/(1-y))-polylog(2,x/(1-y))-polylog(2,y/(1-x)):
fibrationBasis(
\end{MapleInput}
	\begin{MapleMath}
		\Hlog{y}{0, 1}+\Hlog{x}{0, 1}-\Hlog{x}{1}\Hlog{y}{1}
	\end{MapleMath}
\end{example}
Note that for more than one variable, each choice $\vec{z}$ of order defines a different basis and a function may take a much simpler form in one basis than in another. 
For example, $\Li_{1,2}(y,x) + \Li_{1,2}(\frac{1}{y}, xy)$ is just
\begin{MapleInput}
f:=Mpl([1,2], [y,x])+Mpl([1,2], [1/y,y*x]):
fibrationBasis(f, [x,y]);
\end{MapleInput}
\begin{MapleMath}
	\Hlog{x}{0, 1/y, 1}+\Hlog{x}{0, 1, 1/y}
\end{MapleMath}
but in another basis takes the form
\begin{MapleInput}
fibrationBasis(f, [y,x]);
\end{MapleInput}
\begin{MapleMath}
	\Hlog{y}{0, 1, 1/x}+\Hlog{y}{0, 1/x}\Hlog{x}{1}
	\\
	-\Hlog{y}{0, 0, 1/x}-\Hlog{y}{0, 1}\Hlog{x}{1}
\end{MapleMath}
We like to emphasize that every order $\vec{z}$ defines a true basis without relations. In particular this means that $f=0$ if and only if $\code{fibrationBasis}(f, \vec{z})$ returns $0$, no matter which order $\vec{z}$ was chosen.

Analytic continuation in a variable $z$ is performed along a straight path, therefore the result can be ambiguous when this line contains a point where the function is not analytic. In this case, branches above and below the real axis will be distinguished by an auxiliary variable
\begin{equation}
	\delta_z
	=
	\begin{cases}
		+1 & \text{when $z \in \Halfplane^{+}$,} \\
		-1 & \text{when $z \in \Halfplane^{-}$.} \\
	\end{cases}
	\label{eq:which-half-plane}
\end{equation}
\begin{example}
	The path in figure~\ref{fig:path-concatenation-singular} is homotopic to the straight line for $z\in\Halfplane^{-}$. Hence our calculation in example~\ref{ex:dilog-past-one} agrees with {\HyperInt}'s result
\begin{MapleInput}
fibrationBasis(polylog(2, 1+z), [z]);
\end{MapleInput}
\begin{MapleMath}
	I\pi\delta_{z} \Hlog{z}{-1}-\Hlog{z}{-1, 0} + \mzv{2}
\end{MapleMath}
\end{example}
\section{Periods}
\label{sec:periods}%
Our algorithms express constants like \eqref{eq:fibration-basis-constants} through iterated integrals $\AnaReg{0}{\infty} \Hyper{w}(z)$ of words $w \in \overline{\Q}^{\times}$ with algebraic letters. These are transformed into special values $\Hyper{u}(1)$ of hyperlogarithms by $u=\code{zeroInfPeriod}(w)$.
{\HyperInt} can read lookup tables to write such periods in terms of a basis over $\Q$.

The file \Filename{periodLookups.m} provides such reductions (taken from the data mine project \cite{BluemleinBroadhurstVermaseren:Datamine}) for multiple zeta values up to weight 12 and alternating Euler sums ($u\in\set{-1,0,1}^{\times}$) up to weight $8$ in the notation
\begin{equation}
	\mzv{n_1,\ldots,n_r}
	\defas
	\Li_{\abs{n_1},\ldots,\abs{n_r}}\left( \frac{n_1}{\abs{n_1}},\ldots \frac{n_r}{\abs{n_r}} \right),
	\label{eq:def:euler-sum} %
\end{equation}
with indices $n_1,\ldots,n_r \in \Z \setminus\set{0}$, $n_r\neq 1$.
When $u \in\set{0,a,2a}^{\times} \cup \set{-a,0,a}^{\times}$, M\"{o}bius transformations are used to express $\Hyper{u}(1)$ in terms of alternating Euler sums and $\log (a)$.

\begin{example}
	{\HyperInt} automatically attempts to load \Filename{periodLookups.m}, but can run without it. With its help,
	\begin{MapleInput}
fibrationBasis(Mpl([3], [1/2]));
	\end{MapleInput}
	\begin{MapleMath}
		\frac{1}{6}\ln(2)^3-\frac{1}{2}\ln(2)\mzv{2}+\frac{7}{8}\mzv{3}
	\end{MapleMath}
	is reduced to MZV and $\ln 2$. But if \Filename{periodLookups.m} is not available, we obtain merely
	\begin{MapleInput}
fibrationBasis(Mpl([3], [1/2]));
	\end{MapleInput}
	\begin{MapleMath}
		-\mzv{-3}-\mzv{2,-1}-\mzv{1,-2} + \frac{1}{6}\ln(2)^3
	\end{MapleMath}
\end{example}

The user can define a different basis reduction or provide bases for periods involving higher weights\footnote{For MZV and alternating sums, \cite{BluemleinBroadhurstVermaseren:Datamine} provides reductions up to weights 22 and 12, respectively.}, or additional letters. These must be defined as a table,
\begin{equation}
	\code{zeroOnePeriods}[u]
	\defas
	\Hyper{u}(1),
\end{equation}
and saved to a file $f$. To read it one must call $\code{loadPeriods}(f)$.
\begin{example}
	Polylogarithms $\Li_{\vec{n}}(\vec{z})$ at fourth roots of unity $\vec{z} \in \set{\pm 1, \pm \imag}^{\abs{n}}$, like
	\begin{MapleInput}
f := Mpl([1,1],[I,-1])+Mpl([1,1],[-1,I]):
fibrationBasis(f);
	\end{MapleInput}
	\begin{MapleMath}
		\Hlog{1}{-I, I}+\Hlog{1}{-1, I}
	\end{MapleMath}
	are initially not known to {\HyperInt}. Up to weight $\abs{n}\leq 2$ they are expressible with $\ln 2$, $\imag$, $\pi$ and Catalan's constant $G \defas \Imaginaerteil\Li_2(\imag)$ as supplied in \Filename{periodLookups4thRoots.mpl}:
	\begin{MapleInput}
loadPeriods("periodLookups4thRoots.mpl"):
fibrationBasis(f);
	\end{MapleInput}
	\begin{MapleMath}
		\frac{1}{8}\mzv{2}+\frac{1}{2}\ln(2)^2-\frac{1}{4}I\pi\ln(2)+I\Catalan
	\end{MapleMath}
\end{example}

\section{Integration of hyperlogarithms}
\label{sec:integration}%
The most important function provided by {\HyperInt} is
\begin{equation}
	\label{eq:integrationStep}%
	\code{integrationStep}(f, z)
	\defas
	\int_0^{\infty} f(z)\ \dd z
\end{equation}
and computes the integral of a polylogarithm $f$, which must be supplied in the form \eqref{eq:maple-internal-reginf}.
First it explicitly rewrites $f(z) \in L(\Sigma)(z)$ following proposition~\ref{prop:reginf-as-hlog} as a hyperlogarithm in $z$. Then a primitive $F=\code{integrate}(f,z)$ is constructed with lemma~\ref{lemma:hlog-primitives} finally expanded at the boundaries $z\rightarrow 0, \infty$.
\begin{example}
	To compute $\int_0^{\infty} \frac{\Li_{1,1}\left(-x/y, -y \right)}{y(1+y)}\ \dd y$, one can enter
	\begin{MapleInput}
convert(Mpl([1,1],[-x/y,-y])/y/(y+1), HlogRegInf):
integrationStep(
fibrationBasis(
	\end{MapleInput}
	\begin{MapleMath}
		\mzv{2}\Hlog{x}{1}
		 + \Hlog{x}{1,0,1}
		 - \Hlog{x}{0,0,1}
	\end{MapleMath}
\end{example}
A more convenient and flexible interface is provided through the function
\begin{equation}
	\code{hyperInt}\left( f, [z_1=a_1..b_1,\ldots,z_r=a_r..b_r] \right)
	\defas
	\int_{a_r}^{b_r}
	\cdots
	\left[\int_{a_1}^{b_1} f\ \dd z_1 \right]
	\cdots
	\dd z_r
	\label{eq:hyperInt}%
\end{equation}
which computes multi-dimensional integrals by repeated application of \eqref{eq:integrationStep} in the order $z_1,\ldots,z_r$ as specified. It automatically transforms the domains $(a_k,b_k)$ of integration to $(0,\infty)$ and furthermore, $f$ can be given in any form that is understood by $\code{convert}\left( \cdot, \code{HlogRegInf} \right)$. If a variable $z_i$ is specified without a range, $a_i=0$ and $b_i=\infty$ is assumed.
\begin{example}
	\label{ex:moduli-space}%
	A typical integral studied in the origin \cite{Brown:MZVPeriodsModuliSpaces} of the algorithm is the period $I_2$ of $\mathfrak{M}_{0,6}(\R)$ computed in equation~(8.6) therein:
	\begin{MapleInput}
I2 := 1/(1-t1)/(t3-t1)/t2:
hyperInt(I2, [t1=0..t2, t2=0..t3, t3=0..1]):
fibrationBasis(
	\end{MapleInput}
	\begin{MapleMath}
		2\mzv{3}
	\end{MapleMath}
\end{example}
\begin{example}
The integrals $E_n$ of the ``Ising-class'' were defined and studied in \cite{BaileyBorweinCrandall:IsingClass}: Let $u_1 \defas 1$, $u_k \defas \prod_{i=2}^k t_i$ for any $k\geq 2$ and set
\begin{equation}
	E_n
	\defas
	2 \int_0^{1} \dd t_2 \ldots \int_0^1 \dd t_n
	\left( 
		\prod_{1 \leq j< k \leq n}
		\frac{u_j - u_k}{u_j+u_k}
	\right)^2.
	\label{eq:def:IsingE}%
\end{equation}
Because the denominators $u_j+u_k = (1+\prod_{i={j}}^{k-1} t_i) \prod_{i=k}^{n} t_i$ have very simple factors, it is easy to prove linear reducibility along the sequence $t_2,\ldots,t_n$ of variables and to show that all $E_n$ are rational linear combinations of alternating Euler sums.

We included a simple procedure \mbox{\code{IsingE}$\left( n \right)$} to evaluate them in the attached manual. In particular we can confirm the conjecture on $E_5$ made in \cite{BaileyBorweinCrandall:IsingClass}:
\begin{MapleInput}
IsingE(5);
\end{MapleInput}
\begin{MapleMath}
2\mzv{3}
\left(
 - 37
 + 232\ln (2) 
\right)
 - 4\mzv{2}
\left(
31
 - 20\ln  ( 2 ) 
 + 64 \ln^2  ( 2 )
\right)
\\
 - \frac{318}{5}\mzv[2]{2}
 + 42
 - 992\mzv{1,-3}
 - 40 \ln ( 2 )
 + 464 \ln^2 ( 2 )
 + \frac{512}{3} \ln^4 ( 2 )
\end{MapleMath}
Further exact results for $E_n$ up to $n=8$ are tabulated in appendix~\ref{sec:IsingE}. The time- and memory-requirements of their computations are summarized in table~\ref{tab:IsingE-resources}.
\end{example}
\begin{table*}
	\centering
	\begin{tabular}{rcccccccc}
		\toprule
		$n$ & 1 & 2 & 3 & 4 & 5 & 6 & 7 & 8 \\
		\midrule
		time & \SI{10}{\milli\second} & \SI{41}{\milli\second} & \SI{52}{\milli\second} & \SI{235}{\milli\second} & \SI{2.0}{\second} & \SI{40.6}{\second} & \SI{29.3}{\minute} & \SI{28}{\hour}\\
		RAM & \SI{35}{\mebi\byte} & \SI{51}{\mebi\byte} & \SI{51}{\mebi\byte} & \SI{76}{\mebi\byte} & \SI{359}{\mebi\byte} & \SI{1.6}{\gibi\byte} & \SI{1.9}{\gibi\byte} & \SI{30}{\gibi\byte}	\\
		\bottomrule
	\end{tabular}
	\caption[Time and memory requirements for the computation of Ising integrals $E_n$]{Resources consumed during computation of the Ising-type integrals $E_n$ of \eqref{eq:def:IsingE} running on Intel\textsuperscript{\textregistered} Core{\texttrademark} i7-3770 CPU @ \SI{3.40}{\giga\hertz}. The column with $n=1$ (when $E_n \defas 1$) requires no actual computation and shows the time and memory needed to load \Filename{periodLookups.m}.}%
	\label{tab:IsingE-resources}%
\end{table*}

\subsection{Singularities in the domain of integration}
\label{sec:integration-contour-deformation}%
The integration \eqref{eq:integrationStep} requires that $f(z) \in \HlogAlgebra(\Sigma)(z)$ is a hyperlogarithm without any letters $\Sigma_+ \defas \Sigma \cap (0,\infty) = \emptyset$ inside the domain of integration, which ensures that $f(z)$ is analytic on $(0,\infty)$.

Otherwise $f(z)$ can have poles or branch points on $\Sigma_+$ and the integration is then performed along a deformed contour $\gamma$ (like figure~\ref{fig:contour-deformation}) as in \eqref{eq:reglim-reginf-contour-deformation} by a splitting analogous to \eqref{eq:reglim-reginf-splitted}.\footnote{The only difference is that the limit $\AnaReg{t}{0}$ does not need to be taken.} The dependence on $\gamma$ is encoded in the variables
\begin{equation}
	\label{eq:contour-deformation-deltas}%
	\delta_{z,\sigma}
	=
	\begin{cases}
		+1 & \text{when $\gamma$ passes below $\sigma$},\\
		-1 & \text{when $\gamma$ passes above $\sigma$}.\\
	\end{cases}
\end{equation}
\begin{example}
	\label{ex:contour-deformation-integral}%
	The integrand $f(z)=\frac{1}{1-z^2}$ has a pole at $z\rightarrow 1$ and is not integrable over $(0,\infty)$. Instead, {\HyperInt} computes the finite contour integrals that avoid $1$:
	\begin{MapleInput}
hyperInt(1/(1-z^2), z): fibrationBasis(
	\end{MapleInput}
	\begin{MapleOutput}
Warning, Contour was deformed to avoid potential singularities at {1}.
	\end{MapleOutput}
	\begin{MapleMath}
		-\frac{1}{2} \cdot I \pi \delta_{z, 1}
	\end{MapleMath}
\end{example}
\begin{remark}
	Even if positive letters $\Sigma_{+}$ occur, $f(z)$ can be analytic on $(0,\infty)$ nonetheless. In this case the dependence on any $\delta_{z,\sigma}$ drops out in the result by lemma~\ref{lemma:singular-expansion-analytic}.
\end{remark}
\begin{example}
	The integrand $f(z)=\frac{\ln(z)}{1-z^2}$ is analytic at $z\rightarrow 1$ and absolutely integrable over $(0,\infty)$. Its integral is correctly computed by {\HyperInt}:
	\begin{MapleInput}
hyperInt(ln(z)/(1-z^2), z):
fibrationBasis(
	\end{MapleInput}
	\begin{MapleOutput}
Warning, Contour was deformed to avoid potential singularities at {1}.
	\end{MapleOutput}
	\begin{MapleMath}
		-\frac{3}{2} \mzv{2}
	\end{MapleMath}
\end{example}

\subsection{Detection of divergences}
\label{sec:divergence-detection}%
By default, the option $\code{\_hyper\_check\_divergences}=\code{true}$ is activated and triggers, after each integration $\int_0^{\infty} f(z)\ \dd z$, a test of convergence. The expansion \eqref{eq:log-laurent-expansion} of the primitive $F(z)$ of the integrand $f(z)=\partial_z F(z)$ is explicitly computed as
\begin{equation}
	\label{eq:check-divergences}%
	F(z)
	= \sum_{i=0}^{N} \log^i(z) \sum_{j=M}^{-\infty} z^{-j} F_{i,j}
	\quad\text{at}\quad
	z \rightarrow 0
\end{equation}
and all polylogarithms $F_{i,j}$ with $i>0$ or $j>0$ are explicitly checked to vanish $F_{i,j}=0$ using \code{fibrationBasis}; the limit $z \rightarrow \infty$ is treated analogously. This method is time-consuming and we recommend to deactivate it for any involved calculations, presuming that convergence is granted by the problem at hand (for instance by corollary~\ref{corollary:projective-convergence-euclidean}).
\begin{example}
	\label{ex:divergence-check}%
	An endpoint divergence at $z\rightarrow \infty$ is detected for $\int_0^{\infty} \frac{\ln z}{1+z}\dd z = \lim\limits_{z\rightarrow\infty} \Hyper{\letter{-1}\letter{0}}(z)$:
	\begin{MapleInput}
hyperInt(ln(z)/(1+z), z);
	\end{MapleInput}
	\begin{MapleError}
Error, (in integrationStep) Divergence at z = infinity of type ln(z)^2
	\end{MapleError}
\end{example}
The expansions \eqref{eq:check-divergences} are only performed up to $i,j\leq\code{\_hyper\_max\_pole\_order}$ (default value is $10$). If higher order expansions are needed, an error is reported and this variable must be increased.

Note that the expansion \eqref{eq:check-divergences} is only computed at the endpoints $z\rightarrow 0,\infty$. Polar singularities inside $(0,\infty)$ are not detected, e.g.\ $\code{hyperInt}\left(\frac{1}{(1-z)^2}, z\right) = \restrict{\frac{1}{1-z}}{0}^{\infty} = 1$ calculates the integral along a contour evading $z=1$ just as discussed in section~\ref{sec:integration-contour-deformation}.
One can split the integration
\begin{equation}
	\label{eq:integral-splitted}%
	\int_0^{\infty} f(z) \ \dd z
	=
	\sum_{i=0}^{k} \int_{\tau_i}^{\tau_{i+1}} f(z)\ \dd z
\end{equation}
at such critical points $\Sigma_+ = \set{\tau_1<\ldots<\tau_k}$ with $\tau_0 \defas 0$, $\tau_{k+1} \defas \infty$ with the effect that all singularities now lie at endpoints and will be properly analyzed by the program.

A problem arises if calculations involve periods for which no basis reduction is known to {\HyperInt}, because the vanishing $F_{i,j}=0$ of a potential divergence might not be detected. One can then set $\code{\_hyper\_abort\_on\_divergence} \defas \code{false}$ to continue with the integration. All $F_{i,j}\neq 0$ of \eqref{eq:check-divergences} are stored in the table \code{\_hyper\_divergences}.
\begin{example}
	\label{eq:divergence-period-relations}%
	When \Filename{periodLookups.m} is not loaded, the convergent integral
	\begin{MapleInput}
hyperInt(polylog(2,-1/z)*polylog(2,-z)/z,z);
	\end{MapleInput}
	\begin{MapleError}
Error, (in integrationStep) Divergence at z = infinity of type ln(z)
	\end{MapleError}
	is inadvertently classified as divergent. Namely, $F_{1,0}$ of \eqref{eq:check-divergences} is
	\begin{MapleInput}
entries(_hyper_divergences, pairs);
	\end{MapleInput}
	\begin{MapleMath}
		\left(z=\infty, \ln \left( z \right) \right)
		=
		4\mzv{1,3} + 2 \mzv{2,2} - \frac{1}{36} \pi^{4}
	\end{MapleMath}
	and its vanishing corresponds to an identity of MZV.
\end{example}
\begin{remark}
	In this way, the mere quality of convergence of an integral implies non-trivial rational relations between periods.
\end{remark}

\section{Factorization of polynomials}
\label{sec:factorization}%
Since we are working with hyperlogarithms throughout, it is crucial that all polynomials occurring in the calculation factor linearly with respect to the integration variable $z$. For example,
\begin{MapleInput}
integrationStep([[1/(1+z^2), []]], z);
\end{MapleInput}
\begin{MapleError}
Error, (in partialFractions) 1+z^2 is not linear in z
\end{MapleError}
fails because factorization is initially only attempted over the rationals $\K=\Q$. Instead we can allow for an algebraic extension $\K=\Q(R)$ by specification of a set $R=\code{\_hyper\_splitting\_field}$ of radicals:
\begin{MapleInput}
_hyper_splitting_field := {I}:
integrationStep([[1/(1+z^2), []]], z);
fibrationBasis(
\end{MapleInput}
\begin{MapleMath}
	\left[
		\left[
			\frac{1}{2}I, [[-I]]
		\right],
		\left[
			-\frac{1}{2}I, [[I]]
		\right]
	\right]
	\\
	\frac{1}{2} \pi
\end{MapleMath}
We can also go further and factorize over the full algebraic closure $\K=\overline{\Q(\vec{z})}$ by setting $\code{\_hyper\_algebraic\_roots}\defas\code{true}$. Over $\K$, all rational functions $\Q(\vec{z})$ factor linearly such that we can integrate any $f \in \AnaReg{t}{\infty} \HlogAlgebra(\Sigma)(t)$ as long as we start with rational letters $\Sigma\subset \Q(\vec{z})$.

This feature is to be considered experimental and only applied in \code{transformWord} which implements proposition~\ref{prop:reginf-as-hlog}: Given an irreducible polynomial $P \in \Q[\vec{z}]$ and a distinguished variable $z$, the symbolic notation
\begin{equation}
	\label{eq:root-letter}%
	\letter{\code{Root}(P,z)}
	\defas
	\sum
	\setexp{\letter{z_0}}{\restrict{P}{z=z_0} = 0}
\end{equation}
sums the letters corresponding to all the roots of $P$.
\begin{example}
	\label{ex:algebraic-letters}%
	A typical situation looks like this:
	\begin{MapleInput}
f,g:=Hlog(x,[-z,x+x^2]),Hlog(x,[x+x^2,-z]):
fibrationBasis(f+g, [x, z]);
	\end{MapleInput}
	\begin{MapleError}
Error, (in linearFactors) z+x+x^2 does not factor linearly in x
	\end{MapleError}
	To express $f+g$ as a hyperlogarithm in $x$, the roots $R=\code{Root}(P,x)=\set{-\frac{1\pm\sqrt{1-4z}}{2}}$ of $P=z+x+x^2$ seem necessary. After allowing for such algebraic letters, we obtain:
	\begin{MapleInput}
_hyper_algebraic_roots := true:
fibrationBasis(f+g, [x, z]);
	\end{MapleInput}
	\begin{MapleMath}
		- \Hlog{x}{-1,-z}
		 - \Hlog{x}{-z,-1}
		 \\
		 + \Hlog{x}{-z,0}
		 + \Hlog{x}{0,-z}
	\end{MapleMath}
	Since this result actually does not involve $\letter{R}$ at all one might wonder why it was necessary in the first place. The reason is that the individual contributions $f$ and $g$ indeed need $\letter{R}$. Only in their sum this letter drops out:\footnote{%
	In this extremely simple example this is clear since by lemma~\ref{lemma:Hlog-character},
	$
		f+g
		=
		\Hyper{\letter{-z}}(x)
		\cdot
		\Hyper{\letter{x(x+1)}}(x)
	$
	factorizes into $\log \frac{x+z}{z} \cdot \log \frac{x}{1+x}$. We thus see why our representation \eqref{eq:maple-internal-reginf} is preferable to one where all products of words are multiplied out (as shuffles).%
}%
	\begin{MapleInput}
alias(R = Root(z+x+x^2, x)):
fibrationBasis(f, [x, z]);
	\end{MapleInput}
	\begin{MapleMath}
		\Hlog{x}{R,-z}
		 + \Hlog{x}{R,-1}
		 - \Hlog{x}{R,0}
		\\
		 + \Hlog{x}{-z,0}
		 - \Hlog{x}{-z,-1}
		\\
		 - \Hlog{x}{-1,-z}
		 - \Hlog{x}{0,-z}
	\end{MapleMath}
\end{example}
Further processing of functions with such non-rational letters\footnote{%
These are sometimes referred to as generalized harmonic polylogarithms with \emph{nonlinear weights}.%
} is not supported by {\HyperInt} as their integrals are in general not hyperlogarithms anymore. The cancellation of example~\ref{ex:algebraic-letters} occurs often, so an option \code{\_hyper\_ignore\_nonlinear\_polynomials} (default value is \code{false}) is available to ignore all algebraic letters in the first place. That is, all words containing such a letter are immediately dropped when it is set to \code{true}.

In the example above this gives the correct result for $f+g$, but will provoke false answers when \code{fibrationBasis} is applied to $f$ or $g$ alone. Hence this option should only be used when linear reducibility is granted.

\section{Performance}
\label{sec:performance}%
During programming we focussed on correctness and we are aware of considerable room for improvement of the efficiency of {\HyperInt}.
But we hope that our code and the details provided in section~\ref{sec:hlog-algorithms} will inspire further, streamlined implementations, even outside the regime of computer algebra systems.
This is possible since apart from the factorization of polynomials (which is anyway computed during the linear reduction, prior to the actual integration), all operations boil down to elementary manipulations of words (lists) and computations with rational functions.

Ironically, often just decomposing into partial fractions becomes a severe bottleneck in practice, as was also noted in \cite{AblingerBluemleinRaabSchneiderWissbrock:Hyperlogarithms}. This happens when an integrand contains denominator factors to high powers or very large polynomials in the numerator.

We observed that {\Maple} consumes a lot of main memory, in very challenging calculations the demand grew beyond $\SI{100}{\gibi\byte}$. Often this turns out to be the main limitation in practice.

Our program uses some functions that are not thread-safe and can therefore not be parallelized automatically. But by linearity, a manual parallelization is straightforward: Multiple instances of {\Maple} can run $\code{hyperInt}\left( f_i,z \right)$ on disjoint portions of the integrand $f = \sum_i f_i$ and the results added afterwards (see also the manual).

\section{Application to Feynman integrals}
\label{sec:feynman-integrals}%
In chapter~\ref{chap:hyperlogs} we investigated hyperlogarithms on their own, but the algorithms were originally developed for the computation of Feynman integrals \cite{Brown:TwoPoint}. Essential results on their linear reducibility (including counterexamples) and the geometry of Feynman graph hypersurfaces were obtained in \cite{Brown:PeriodsFeynmanIntegrals} and extended in section~\ref{sec:multiple-integrals}.
Some further discussions on multi-scale and subdivergent integrals in the parametric representation are also given in \cite{BognerLueders:MasslessOnShell,BrownKreimer:AnglesScales,Kreimer:WheelsInWheels}.

In \cite{Panzer:MasslessPropagators,Panzer:DivergencesManyScales} we successfully applied our implementation to compute many non-trivial examples, including some massless propagators with up to six loops and also divergent integrals depending on up to seven kinematic invariants. All results\footnote{These can be downloaded from \url{http://www.math.hu-berlin.de/~panzer/}.} presented in these papers were computed using this program {\HyperInt}.

\subsection{\texorpdfstring{$\varepsilon$}{epsilon}-expansion}
Consider analytic regularization in the dimension $\dimension=4-2\varepsilon$ and the indices $\EP_e = \EPZ_e + \varepsilon \EPE_e$ near integers $\EPZ_e \in \Z$. By corollary~\ref{cor:finite-anareg-as-Feynman} we may assume that the projective integral \eqref{eq:feynman-integral-projective} converges absolutely for $\varepsilon=0$. Thus we can expand the integrand in $\varepsilon$ and obtain each coefficient $c_n$ of the Laurent series $\Phi(G) = \sum_{n} c_n \varepsilon^n$ as the period integral
\begin{equation}
	c_n
	= \Gamma(\sdd)
	\left[
	\prod_{e\in E}
	\int_0^{\infty}
	\frac{\dd \SP_e}{\Gamma(\EP_e)}
	\right]
	\frac{f^{(n)}}{Q^{(n)}}
	\delta(1-\SP_{e_N})
	\label{eq:expansion-coefficients}%
\end{equation}
where $Q^{(n)} \in \Q[\psipol^{-1},\phipol^{-1}]$ denotes a polynomial and $f^{(n)} \in \Q[\vec{\SP}, \log\vec{\SP},\log\phipol,\log\psipol]$. Whenever the Symanzik polynomials $S=\set{\psipol,\phipol}$ are linearly reducible, this integral can be computed with {\HyperInt}.

\subsection{Additional functions in {\HyperInt}}
Appendix~\ref{sec:function-list-feynman} lists auxiliary functions that support the calculation of Feynman integrals. These entail simple routines to construct the graph polynomials $\psipol$ and $\phipol$.

For divergent integrals, the parametric integrands \eqref{eq:feynman-integral-projective} cannot be integrated directly. The manual shows how {\HyperInt} can implement the analytic regularization through partial integrations \eqref{eq:def-anapartial} to construct a convergent integral representation.
\begin{figure}
	\centering
	\includegraphics[width=0.5\columnwidth]{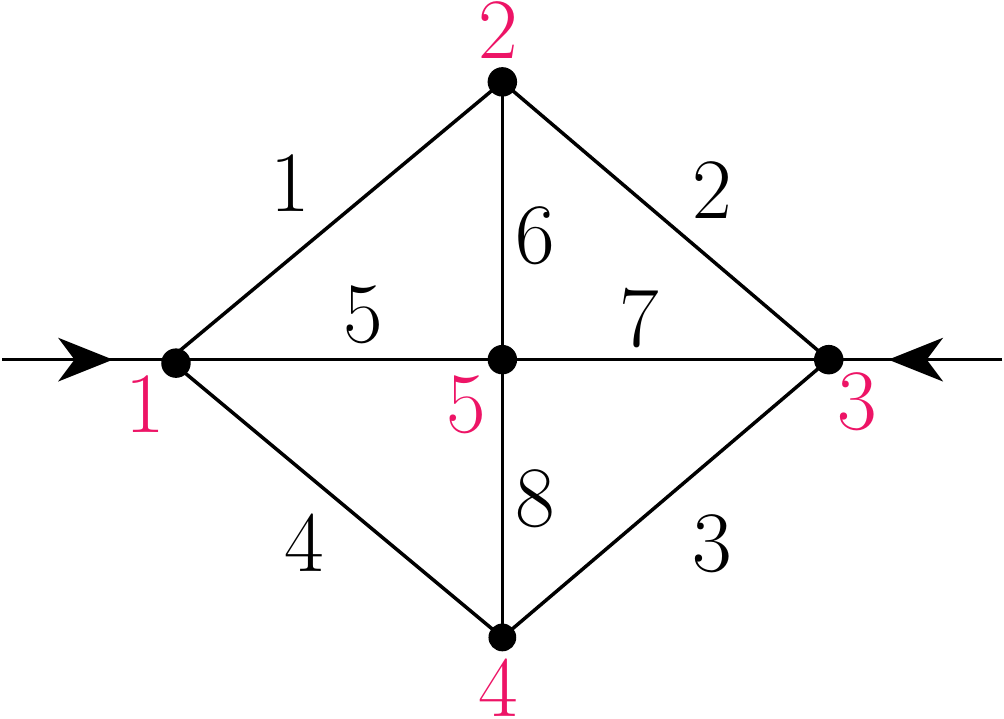}
	\caption[Four-loop massless propagator computed in section~\ref{sec:example-propagator}]{Four-loop massless propagator computed in section~\ref{sec:example-propagator}. This one is called $M_{3,6}$ in \cite{Panzer:MasslessPropagators}. Edges are labelled in black, vertices in red.}%
	\label{fig:propagator-graph}%
\end{figure}
\subsection{Examples}
\label{sec:example-propagator}%
Plenty of examples are provided in the {\Maple} worksheet \Filename{Manual.mw}, including graphs with subdivergences, massive propagators and more than two external momenta. Here we content ourselves with a simple four-loop massless propagator.

First we define the graph of figure~\ref{fig:propagator-graph} by its edges $E$ and specify two external momenta of magnitude one entering the graph at the vertices $1$ and $3$. The Symanzik polynomials $\psipol$ and $\phipol$ can be computed with
\begin{MapleInput}
E:=[[1,2],[2,3],[3,4],[4,1],[5,1],[5,2], [5,3],[5,4]]:
psi:=graphPolynomial(E):
phi:=secondPolynomial(E, [[1,1], [3,1]]):
\end{MapleInput}
This graph has vertex-width three and is therefore linearly reducible \cite{Brown:PeriodsFeynmanIntegrals}. Still let us calculate a polynomial reduction to verify this claim:
\begin{MapleInput}
L:=table(): S:=irreducibles({phi,psi}):
L[{}]:=[S, {S}]: cgReduction(L):
\end{MapleInput}
The function $\code{irreducibles}\left( \cdot \right)$ breaks up all polynomials into their irreducible factors and could be omitted here. The third instruction initializes $L_{\emptyset}$ with the complete graph on the two vertices $\set{\psi,\phi}$. We can check the linearity along some order $\vec{z}$:
\begin{MapleInput}
z:=[x[1],x[2],x[6],x[5],x[3],x[4],x[7],x[8]]:
checkIntegrationOrder(L, z[1..7]):
\end{MapleInput}
\begin{MapleOutput}
1. (x[1]): 2 polynomials, 2 dependent
2. (x[2]): 5 polynomials, 4 dependent
3. (x[6]): 8 polynomials, 4 dependent
4. (x[5]): 7 polynomials, 4 dependent
5. (x[3]): 6 polynomials, 6 dependent
6. (x[4]): 4 polynomials, 3 dependent
7. (x[7]): 1 polynomials, 1 dependent
Final polynomials:
\end{MapleOutput}
\begin{MapleMath}
	\set{}
\end{MapleMath}
The integrand is assembled according to \eqref{eq:projective-delta-form-integrand} which in this case is already convergent as-is. We expand to second order in $\varepsilon$ with
\begin{MapleInput}
sdd := nops(E)-(1/2)*4*(4-2*epsilon):
f := series(psi^(-2+epsilon+sdd)*phi^(-sdd), epsilon=0):
f := add(coeff(f,epsilon,n)*epsilon^n,n=0..2):
\end{MapleInput}
Now we integrate out all but the last Schwinger parameter
\begin{MapleInput}
hyperInt(f, z[1..-2]):
\end{MapleInput}
and reduce the result into a basis of MZV:
\begin{MapleInput}
fibrationBasis(eval(f, z[-1]=1)):
collect(
\end{MapleInput}
\begin{MapleMath}
	\left(
		254\mzv{7}
		+780 \mzv{5}
		-200 \mzv{2}\mzv{5}
		-196 \mzv[2]{3}
		+80 \mzv[3]{2}
		-\frac{168}{5}\mzv[2]{2}\mzv{3}
	\right)
	\varepsilon^2
	\\
+ \left(
		-28 \mzv[2]{3}
		+140 \mzv{5}
		+\frac{80}{7} \mzv[3]{2}
	\right)
	\varepsilon
+20 \mzv{5}.
\end{MapleMath}

\section{Tests of the implementation}
\label{sec:hyperint-tests}%
As with any computer program, exhaustive testing is of supreme importance and we spent a considerable amount of time on this duty.
The file \Filename{HyperTests.mpl} contains a small subset of such test runs. These feature diverse applications of {\HyperInt} and therefore also supplement the manual.
The majority of our tests either verify
\begin{enumerate}
	\item functional equations between polylogarithms: If two expressions $A$ and $B$ represent the same function of $\vec{z}=[z_1,\ldots,z_r]$, then $\code{fibrationBasis}\left(A-B,\vec{z}\right)$ must evaluate to zero;\footnote{Care is required because of multi-valuedness. The order $\vec{z}$ must be chosen such that both expressions $A$ and $B$ are well-defined power series in the limit $ 0 \ll z_1 \ll \cdots \ll z_r \ll 1$. Otherwise one might end up comparing different analytic continuations to this region.}
	\item sequences of parametric integrals with known analytic results,
	\item analytic results for integrals obtained with other methods or
	\item numeric approximations of integrals.
\end{enumerate}
Probably the strongest tests of our implementation are the computations of $\varepsilon$-expansions of various single-scale \cite{Panzer:MasslessPropagators} and multi-scale \cite{Panzer:DivergencesManyScales} Feynman integrals, including all $3$- and some $4$-loop massless propagators (see also section~\ref{sec:propagators}).
We cross-checked those results that were known before with the available references, verified that symmetries of the Feynman graphs are reflected in the obtained $\varepsilon$-expansions and in some cases used established programs \cite{BognerWeinzierl:ResolutionOfSingularities,SmirnovTentyukov:FIESTA} to obtain numeric evaluations that confirmed our analytic formulas.

Further checks of {\HyperInt} include:
\begin{itemize}
	\item Plenty of functional and integral equations of polylogarithms given in the excellent books \cite{Lewin:PolylogarithmsAssociatedFunctions,Lewin:StructuralPropertiesPolylogarithms}.\footnote{In doing so we revealed a very few misprints, listed in appendix~\ref{chap:LewinTypos}.}

	\item Examples of transformations of polylogarithms into hyperlogarithms given in \cite{BroedelSchlottererStieberger:PolylogsMZVSuperstringAmplitudes}.

	\item The expansion (at $x,y\rightarrow 0$) of Euler's beta function $B(1-x,1-y)$ in the form\footnote{For this test we expand the integrand in $x$ and $y$, integrate each coefficient (of monomials up to total degree $12$) with {\HyperInt} and compare the result with the straightforward expansion of the left-hand side.}
\begin{equation*}
	\frac{\Gamma(1-x)\Gamma(1-y)}{\Gamma(2-x-y)}
	=
	\frac{\exp\left[ \sum\limits_{n=2}^{\infty} \frac{\zeta(n)}{n}( x^n + y^n-(x+y)^n)  \right]}{1-x-y}
	=
	\int_0^{\infty} \frac{z^{-x}\ \dd z}{(1+z)^{2-x-y}}.
\end{equation*}

	\item The identity of lemma~\ref{lemma:bubble-integral-helper-function} ($n\in\N$):
\begin{equation*}
	\int_{0}^{\infty} \left[ 
		\left( \frac{1}{x} - \frac{1}{x+z} \right) \Li_n(-x-z)
		-\frac{1}{x} \Li_n\left( -\frac{z}{x+1} \right)
	\right] \dd x
	=
	n \Li_{n+1}(-z).
\end{equation*}

\item
The periods of the bubble chain graphs in figure~\ref{fig:bubble-graphs-def} are 
$\period\left( \BBr{n}{m} \right) = (n+m)!$
and computed explicitly in lemma~\ref{lemma:bubble-graphs-transcendental-period} for $\period\left( \BBt{n}{m} \right)$. These results were reproduced by applying {\HyperInt} to the integral representations \eqref{eq:bubble-graph-rational-parametric} and \eqref{eq:BBt-projective}, for $n+m \leq 6$. 

\item Examples of period integrals on the moduli space $\mathfrak{M}_{0,n}(\R)$ given in \cite{Brown:MZVPeriodsModuliSpaces}.

\item
	Our results for the Ising integrals $E_n$ of \eqref{eq:def:IsingE} match the analytic results up to $n=4$ given in \cite{BaileyBorweinCrandall:IsingClass} and the numeric values obtained therein for $E_5$ and $E_6$ agree with our exact results collected in appendix~\ref{sec:IsingE}.

\item
We confirmed the matrix elements $\hat{I}_{1a}$, $\hat{I}_{1b}$, $\hat{I}_{2a}$, $\hat{I}_{2b}$, $\hat{I}_{4}$ of massive ladder graphs with operator insertions computed in \cite{Wissbrock:Massive3loopLadder} and checked the Benz graphs $I_1$, $I_2$ and $I_3$ of \cite{AblingerBluemleinRaabSchneiderWissbrock:Hyperlogarithms}. The integrals $\hat{I}_4$ and $I_1$ are part of our manual, where we correct misprints in the corresponding equations (3.18) and (3.1) of these articles.

\item
The massless hexagon integral \cite{DelDucaDuhrSmirnov:OneLoopMasslessHexagon,DixonDrummondHenn:OneLoopMasslessHexagon} as included in the manual.

\item We computed the exact on-shell results for ladder boxes $B_n$ up to $n=6$ in $\dimension=6$ dimensions. Their limits \eqref{eq:lbox2-zero} through \eqref{eq:lbox6-zero} when $t\rightarrow 0$ agree with the analytic ($n\in\set{2,3}$) and numeric ($n\in\set{4,5,6}$) results published in \cite{Kazakov:MultiBoxSix}.

\item All vacuum periods of $\phi^4$-theory that we computed with {\HyperInt} match the numeric values given in \cite{BroadhurstKreimer:KnotsNumbers,Schnetz:Census}, see also section~\ref{sec:ex-periods}.
\end{itemize}

\chapter{Applications and examples}
\label{chap:examples}%
The method of hyperlogarithms was first formulated for the computation of massless vacuum graphs and propagators. We give a short summary of the tremendous progress in this field that was recently possible due to our program and work of Oliver Schnetz. This includes the exact computation of all primitive $\phi^4$-periods up to seven loops and in particular we present the first period in massless $\phi^4$ theory which is not a multiple zeta value, but a polylogarithm at a primitive sixth root of unity.
We will also summarize our work \cite{Panzer:MasslessPropagators}, in which we computed $4$-loop massless propagators with arbitrarily $\epsilon$-dependent indices $\EP_e$ (before, this was only possible for at most two loops \cite{BierenbaumWeinzierl:TwoPoint}).

In section~\ref{sec:ex-renormalized-parametric} we argue that (at least in the absence of infrared divergences) hyperlogarithms are suitable for the exact calculation of renormalized Feynman integrals.
While ``renormalization under the integral sign'' (which does not require any regularization) has always been an important theoretical tool, its practical use in computations of individual Feynman integrals so far mostly restricted to numerical integration.
We advertise the approach of exact integration with several examples, including our new results \eqref{eq:w3-in-w3} and \eqref{eq:ws3-in-ws4}.

Phenomenology of particle physics entails much more than pure numbers and Feynman integrals with complicated dependence on kinematics must be computed. In particular, there have been many computations of massless $3$- and $4$-point integrals in the past. We shall recall our main results on massless $3$-constructible (section \ref{sec:ex-3pt}) and ladder-box graphs (section~\ref{sec:ex-ladderboxes}) and extend them to even broader families of graphs. Note that:
\begin{enumerate}
	\item The indices $\EP_e$ are not restricted to integers, but can be expanded in $\epsilon$ around integers. Propagator insertions therefore factorize into simpler graphs.

	\item Our method (based on the forest functions of sections~\ref{sec:vw3} and \ref{sec:ladderboxes}) applies to infinitely many Feynman graphs and not only in fully massless kinematics, but it allows for up to 2 massive external legs.
\end{enumerate}
So for this particular class of integrals, the parametric integration with hyperlogarithms exceeds the possibilities of other methods that are currently applied (complete results have so far only been published up to 3 loops). As an explicit example, we compute the 6-loop ladder box in six dimensions in \eqref{eq:lbox6-zero}.

Note that we found many linearly reducible Feynman integrals with internal masses. However, we have not yet a good understanding of this class and will not attempt a premature discussion here. The interested reader may refer to \cite{Panzer:DivergencesManyScales} and appendix~\ref{sec:massive-integrals}, where we show analytic results for a six-scale massive two-loop diagram.

\begin{figure}\centering
	$
	P_{6,2} = \Graph[0.3]{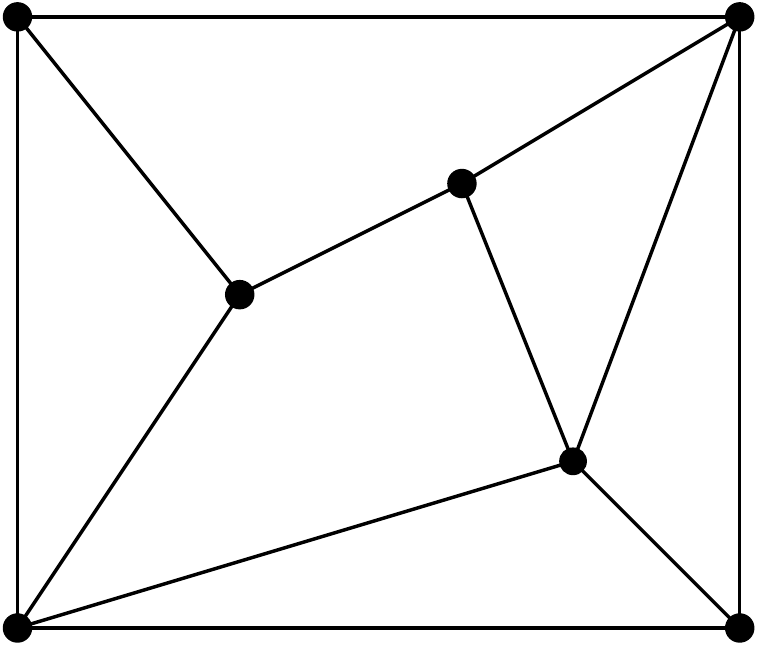}
		\quad
		P_{6,3} = \Graph[0.35]{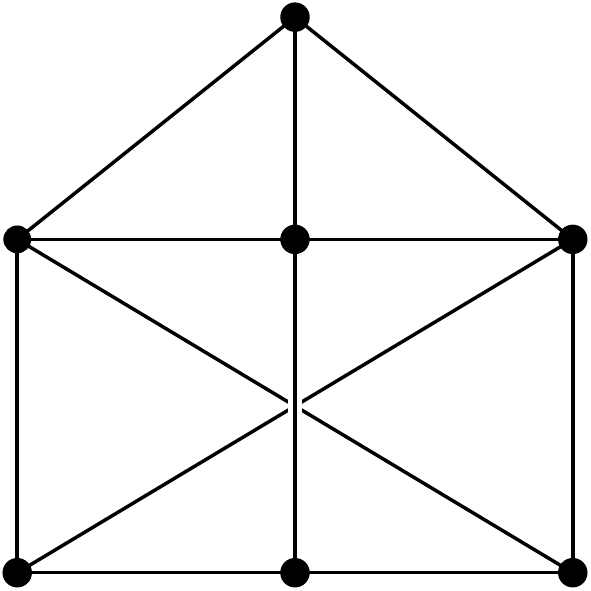}
		\quad
		P_{7,2} = \Graph[0.3]{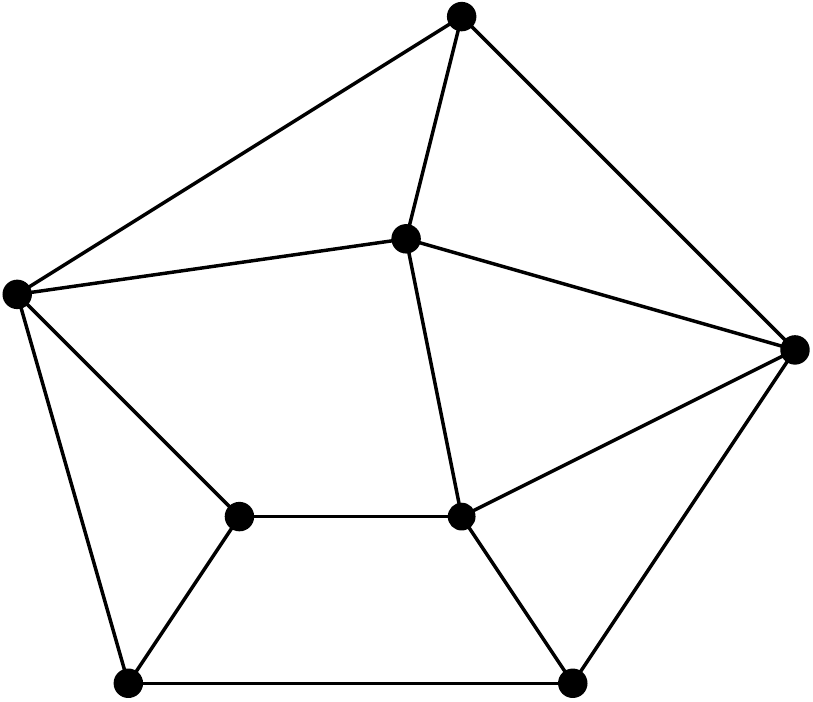}
		\quad
		P_{7,5} = \Graph[0.25]{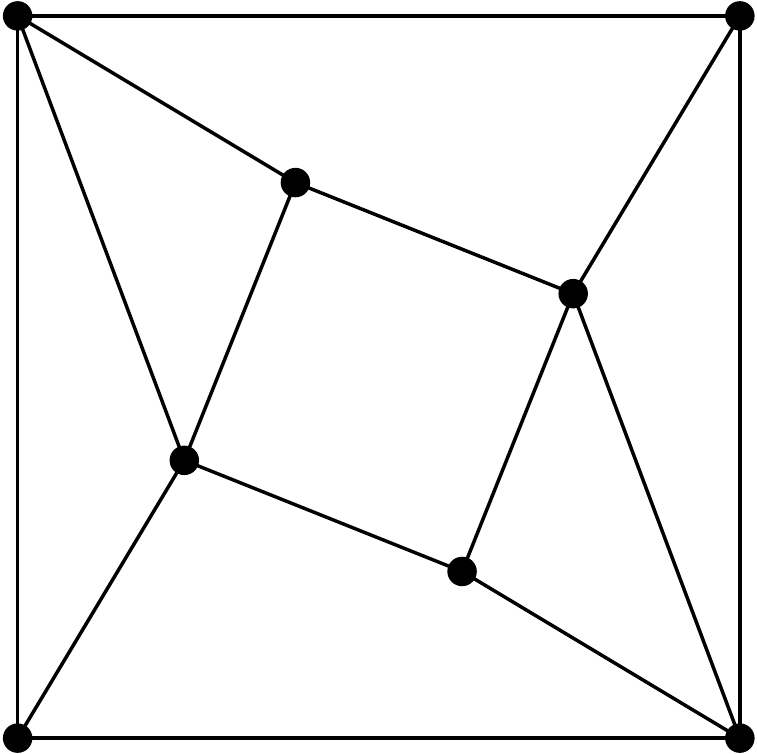}
	$
	\caption{%
		Some six and seven loop primitive $\phi^4$ vertex graphs from the census \cite{Schnetz:Census}.}%
	\label{fig:periods-examples}%
\end{figure}
\section{Periods of massless \texorpdfstring{$\fieldphi^4$}{phi\textasciicircum4} theory}
\label{sec:ex-periods}%
The structurally simplest Feynman integrals are associated to logarithmically divergent graphs $G$ without subdivergences. Their associated period $\period(G) = \int \psipol^{-\dimension/2}\ \Omega$ from \eqref{eq:period-primitive-projective} is a pure number, independent of any kinematics and completely describes the scaling behaviour of the actual Feynman integral (which \emph{does} depend on the kinematics).

A physically interesting case is $\phi^4$ theory ($4$-regular graphs) in $\dimension=4$ dimensions, where each vertex graph (that means $4$ external momenta) is logarithmically divergent.\footnote{Since the external momenta are irrelevant for the period, we do not draw them and find four three-valent vertices in the vertex graphs.} Because these periods determine the contribution of a graph to the $\beta$-function (renormalization of the coupling constant), their evaluation is of high interest.

The contributions of primitive graphs are distinguished in that they are
\begin{itemize}
	\item independent of the renormalization scheme and
	\item conjectured to contribute, at each order in the perturbative expansion, the periods with highest weight.
\end{itemize}
Various symmetries relate the periods of different graphs such that, despite the huge number of graphs at high loop orders, the periods of primitive graphs can be reduced to a relatively small set. There remain nine periods at seven loops \cite{BroadhurstKreimer:KnotsNumbers,Schnetz:Census} (not counting products), six of them were identified as multiple zeta values through high-precision numeric evaluations. The examples in figure~\ref{fig:periods-examples} (computed analytically in \cite{Panzer:MasslessPropagators}) are:
\begin{equation}
	\begin{alignedat}{2}
	\period(P_{6,2})
	&= 8\mzv[3]{3} + \tfrac{1063}{9} \mzv{9}
	&
	\period(P_{6,3})
	&= 252 \mzv{3} \mzv{5}
		+ \tfrac{432}{5} \mzv{3,5}
		- \tfrac{\numprint{25056}}{875} \mzv[4]{2} 
	\\
	\period(P_{7,5})
	&= 450 \mzv[2]{5} -189 \mzv{3} \mzv{7}
	\quad&
	\period(P_{7,2})
	&= \tfrac{\numprint{62957}}{192} \mzv{11}
		-9 \left( 
				\mzv{3,5,3}
				- \mzv{3}\mzv{3,5}
			\right)
		+ 35 \mzv[2]{3} \mzv{5}
	\end{alignedat}
	\label{eq:periods-examples}%
\end{equation}
The census \cite{Schnetz:Census} extended the enumeration of primitives to eight loops and supplied many more evaluations, all in terms of multiple zeta values (further complemented by even higher weight results \cite{BroadhurstKreimer:MZVPositiveKnots}). These works raised many questions on this apparent interaction of quantum field theory and number theory. The two we want to answer here are:
\begin{enumerate}
	\item Are these findings correct? Being based on numeric evaluations and integer relation detectors like PSLQ \cite{FergusonBaileyArno:PSLQ}, though extremely likely, they are not proven. Though instead of questioning their correctness, assuming their validity leaves us with the task to compute them rigorously. How far can we get with hyperlogarithms?

	\item Three periods at seven loops could not be identified originally. What are they? In particular, do multiple zeta values suffice for massless $\phi^4$ theory at seven loops?
\end{enumerate}

\subsection{Linear reducibility to seven loops}
A detailed analysis of linear reducibility for vacuum graphs was carried out in \cite{Brown:PeriodsFeynmanIntegrals}. In particular it was shown that the first counterexample occurs at six loops as the complete bipartite graph $K_{3,4}$. However, its non-reducibility can be circumvented by a subdivision of the integrand as explained in \cite{Brown:PeriodsFeynmanIntegrals}. We did not carry this out though, because this graph is particularly simple to evaluate with graphical functions by Oliver Schnetz.\footnote{An earlier computation of the period of $K_{3,4}$ was suggested in \cite{Schnetz:K34}.}

We computed the polynomial reduction of each graph at seven loops with our program and confirmed the expectation that all of them are linearly reducible, with the single very notable exception of $P_{7,11}$ which we discuss in detail in section~\ref{sec:P711}.
In the other cases we applied {\HyperInt} to actually compute the corresponding periods and confirmed all of the results conjectured in \cite{BroadhurstKreimer:KnotsNumbers}. For example, we reported \eqref{eq:periods-examples} in \cite{Panzer:MasslessPropagators}.

Together with the computation of $K_{3,4}$ and $P_{7,8}$ by Oliver Schnetz, our results below combine to prove
\begin{theorem}
	The periods $\int \Omega/\psipol^{2}$ of all primitive logarithmically divergent $\phi^4$ graphs with at most seven loops, except $P_{7,11}$, are rational linear combinations of multiple zeta values. Furthermore, the period of $P_{7,11}$ is a linear combination of imaginary parts of multiple polylogarithms evaluated at a primitive sixth root of unity, divided by $\sqrt{3}$.
\end{theorem}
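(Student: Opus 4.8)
The plan is to treat the two assertions separately, reducing both to the linear-reducibility machinery of section~\ref{sec:linear-reducibility} together with explicit computation by {\HyperInt}. Since the period $\period(G) = \int \Omega/\psipol^{2}$ from \eqref{eq:period-primitive-projective} is invariant under the well-known graph symmetries (completion, product relations, planar duality), the first step is to collapse the infinitely many primitive $\fieldphi^4$ vertex graphs with $\loops{G} \leq 7$ into the finite list of representatives enumerated in \cite{BroadhurstKreimer:KnotsNumbers,Schnetz:Census} (nine independent periods at seven loops). It then suffices to settle each representative individually.

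For every representative $G$ other than $P_{7,11}$ and $K_{3,4}$, I would run the compatibility-graph reduction of definition~\ref{def:cg-reduction} on $S = \{\psipol\}$ (equivalently the denominator reduction of section~\ref{sec:denominator-reduction}) and confirm, via proposition~\ref{prop:cg-reduction}, that $\{\psipol\}$ is linearly reducible with polynomials linear in each successive Schwinger variable. By \eqref{eq:final-period-algebra} the integral then evaluates into a product of regularized limits of hyperlogarithms. The decisive arithmetic input is that $\psipol = \sum_T \prod_{e\notin T}\SP_e$ carries only unit coefficients, so every alphabet arising is rational with leading coefficients in $\{0,-1\}$; by corollary~\ref{corollary:reglim-degree-decouple} and \eqref{eq:reginf-gives-mzv} the ensuing constants therefore lie in $\MZV$. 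Running {\HyperInt} on each integrand $\psipol^{-2}$ produces the period explicitly (for instance the values \eqref{eq:periods-examples}), which simultaneously proves it is a multiple zeta value and confirms the numerical identifications of \cite{BroadhurstKreimer:KnotsNumbers}. The graphs $K_{3,4}$ (the first non-reducible vacuum graph, at six loops) and $P_{7,8}$ lie outside this direct reduction, and for them I would import the graphical-function computations of Oliver Schnetz, which likewise return multiple zeta values.

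For $P_{7,11}$ the Schwinger-parameter reduction over $\Q$ fails, since an irreducible quadratic denominator appears that does not split rationally. The plan is to factor it over $\Q(\xi_6)$ with $\xi_6 = e^{\imag\pi/3}$ a primitive sixth root of unity, to adjoin the corresponding algebraic letters as in section~\ref{sec:multiple-variables}, and to verify that the reduction then closes with all relevant leading coefficients among $\{0,1,\xi_6,\conjugate{\xi_6}\}$ and, moreover, that every product of these stays inside $\{1,\xi_6\}$ or $\{1,\conjugate{\xi_6}\}$ as required by theorem~\ref{theorem:Deligne-N=6}. By \eqref{eq:MZV-N-as-hlog} this places $\period(P_{7,11})$ in Deligne's subalgebra $\Deligne \subset \MZV[6]$. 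Because the period is a convergent integral of a positive function it is a real number, so only the real part of $\Deligne$ contributes. Invoking the parity result proposition~\ref{prop:Deligne-oddeven} at weight $11$, I would reduce this real part to the surviving generators $\imag\Imaginaerteil\Li_{\vec{n}}(1,\ldots,1,\xi_6)$ (modulo products and powers of $\imag\pi$, which must cancel against each other in the real answer) and normalise by the common factor $\Imaginaerteil(\xi_6) = \sqrt{3}/2$, yielding the stated shape: a rational combination of imaginary parts of weight-$11$ polylogarithms at $\xi_6$, divided by $\sqrt{3}$. A final {\HyperInt} run fixes the explicit coefficients.

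The routine-but-laborious part is the graph-by-graph reduction and the memory-intensive {\HyperInt} integrations, which I would treat as mechanical. The genuine obstacle is $P_{7,11}$: one must first \emph{recognise} that sixth roots of unity — rather than some less tractable algebraic extension — suffice to complete the reduction, which is not predicted by the mere failure of rational linear reducibility; and second, one must \emph{control} the landing of the period inside $\Deligne$ and then inside its real part, so that proposition~\ref{prop:Deligne-oddeven} applies cleanly and the spurious powers of $\imag\pi$ together with the real generators cancel, leaving only the $\sqrt{3}$-normalised imaginary parts.
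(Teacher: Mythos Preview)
Your plan for the MZV part is broadly aligned with the paper, but one claim is too optimistic: it is \emph{not} true that for every representative the leading coefficients stay in $\{0,-1\}$. For the non-planar seven-loop graphs (in particular $P_{7,9}$) the polynomial reduction unavoidably contains polynomials of both signs, so the final alphabet is $\{-1,0,1\}$ and the raw output lands in the alternating sums $\MZV[2]$. The proof that these particular combinations collapse to $\MZV$ only comes from the explicit reduction to a basis via the data mine; your appeal to \eqref{eq:reginf-gives-mzv} alone does not suffice for those cases.

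The genuine gap is your treatment of $P_{7,11}$. The irreducible quadratic $d_{10}$ that obstructs the reduction is a polynomial in \emph{four} remaining Schwinger variables; its roots with respect to any one of them are algebraic \emph{functions} of the others (square roots of a multivariate discriminant), not elements of $\Q(\xi_6)$. Adjoining $\xi_6$ does not factor $d_{10}$, and adjoining the actual algebraic-function roots would take you outside the hyperlogarithm framework, so the subsequent integrations could not be carried out. The paper's essential idea, which you are missing, is a carefully designed \emph{rational change of variables} \eqref{eq:P711-variable-change} that simultaneously linearises $d_{10}$ \emph{and} all other polynomials in the reduction (corollary~\ref{corollary:P711-reducible}). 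Only after this change does the integration proceed, and sixth roots of unity enter solely at the very last one-dimensional step through the univariate factor $1-\SP_1'+{\SP_1'}^2$. The $1/\sqrt{3}$ and the imaginary parts then come directly from the partial-fraction identity \eqref{eq:P711-imag-denominator}, not from an a priori landing in $\Deligne$ via theorem~\ref{theorem:Deligne-N=6}; the parity result proposition~\ref{prop:Deligne-oddeven} is used afterwards, to cut the PSLQ search space in half for the explicit identification \eqref{eq:P711}.
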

All these periods are now known explicitly and proven and this result is optimal in the sense that at $8$ loops, non-polylogarithmic constants emerge \cite{BrownSchnetz:K3Phi4,BrownDoryn:FramingsForGraphHypersurfaces}. Apart from these exceptions, the computation of the remaining periods is currently in progress by Oliver Schnetz, building in several cases on our computation of graphical functions.
\begin{figure}\centering%
$
		P_{7,8}
		=
		\Graph[0.42]{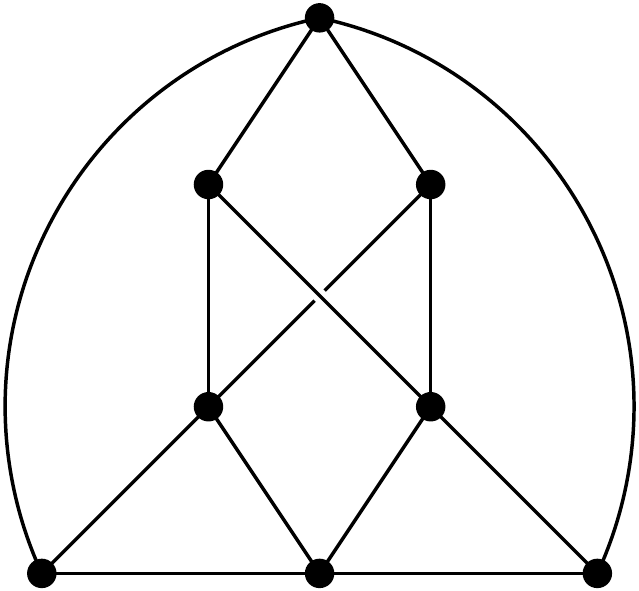}
		\qquad
		P_{7,9}
		=
		\Graph[0.5]{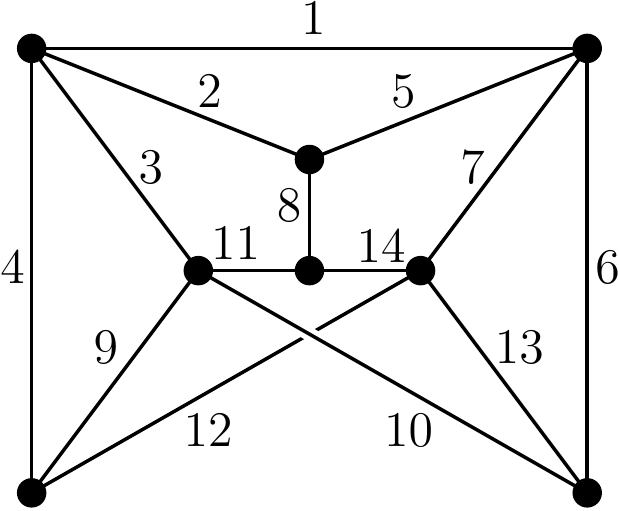}
		\qquad
		P_{7,11}
		=
		\Graph[0.5]{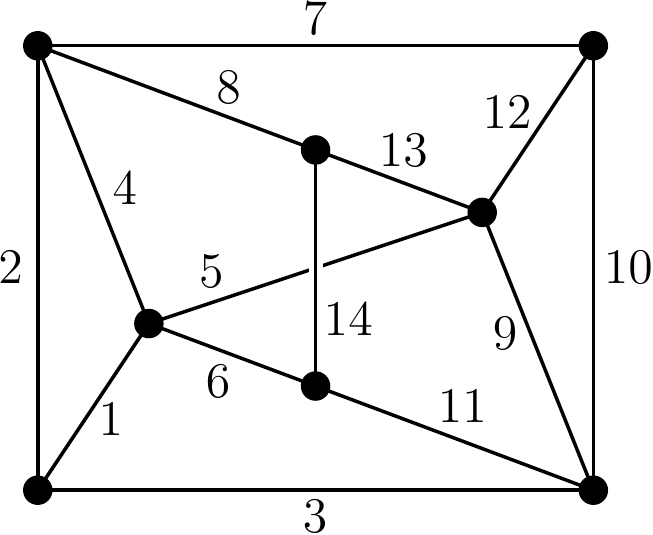}
$
	\caption[The $\phi^4$ graphs $P_{7,8}$, $P_{7,9}$ and $P_{7,11}$]{The most complicated primitive periods in $\phi^4$-theory at seven loops are given by the three graphs $P_{7,8}$, $P_{7,9}$ and $P_{7,11}$ in the notation of the census \cite{Schnetz:Census}.}%
	\label{fig:P78911}%
\end{figure}%

\subsection{No alternating sums: \texorpdfstring{$P_{7,9}$}{P 7,9}}
\label{sec:P79}%
The three graphs that could not be determined in \cite{BroadhurstKreimer:KnotsNumbers} are the \emph{complicated graphs} shown in figure~\ref{fig:P78911}. They behave unlike the other graphs in several respects.

One powerful tool to study the complexity of a Feynman integral is the $c_2$ invariant, or more generally the point counting function \cite{BrownSchnetzYeats:PropertiesC2,BrownSchnetz:K3Phi4,Doryn:InvariantInvariant}
\begin{equation}
	\set{\text{prime powers $q = p^n$}} \longrightarrow \N,
	\quad
	q \mapsto
	\PointCount{G}{q}
	\defas \abs{\setexp{\SP \in \F_q^{\edges(G)}}{\psipol_G(\SP) = 0}}.
	\label{eq:def-point-count}%
\end{equation}
It enumerates the number of zeros of the graph polynomial over a finite field $\F_q$ with $q$ elements. For all but the complicated graphs, this function is a polynomial in $q$. It was noticed in \cite{Schnetz:Fq} that $P_{7,8}$ (also mentioned by \cite{Doryn:CounterKontsevich}) and $P_{7,9}$ have exceptional prime $2$, so one needs two different polynomials to describe $\PointCount{G}{q}$ depending on whether $q$ is a power of $2$ or not.

Recently David Broadhurst \cite{Broadhurst:Radcor2013} obtained $22$ significant digits for both periods and was able to produce a convincing fit to the multiple zeta values
\begin{equation}\begin{split}
	P_{7,8}
	&= \tfrac{\numprint{22383}}{20} \mzv{11}
		+ \tfrac{4572}{5} \left( 
				\mzv{3,5,3}
				- \mzv{3}\mzv{3,5}
			\right)
			-700 \mzv[2]{3} \mzv{5}
			+ 1792 \mzv{3}\left( 
				\tfrac{27}{80}\mzv{3,5}
  	   + \tfrac{45}{64} \mzv{3} \mzv{5}
	     - \tfrac{261}{320} \mzv{8}
			\right),
	\\
	P_{7,9}
	&= \tfrac{\numprint{92943}}{160} \mzv{11}
     + \tfrac{3381}{20} \left(
		 		\mzv{3,5,3}
  	   - \mzv{3}\mzv{3,5} 
			 \right)
     - \tfrac{1155}{4} \mzv[2]{3} \mzv{5}
     + 896 \mzv{3}
    \left(
	    \tfrac{27}{80} \mzv{3,5}
     + \tfrac{45}{64} \mzv{3} \mzv{5}
     - \tfrac{261}{320} \mzv{8}
    \right).
\label{eq:P789}%
\end{split}\end{equation}
These are indeed more involved because of the last summand: All previous periods of weight $11$ are rational linear combinations of only the first three MZV; the product of $\mzv{3}$ and the weight $8$ combination is needed only for $P_{7,8}$ and $P_{7,11}$ (up to seven loops).

We confirmed (and thus proved) the result for $P_{7,9}$ with our program {\HyperInt}, while at the same time Oliver Schnetz verified $P_{7,8}$ independently.
\begin{remark}
	Strikingly, both the hyperlogarithm calculation of $P_{7,9}$ and the computation of $P_{7,8}$ with graphical functions need alternating sums. No matter which order of integration we choose, negative signs occur in the polynomials of the linear reduction, which result in regularized limits of the form $\int_0^{\infty} T(\set{0,\pm 1})$.

	Only after we reduced those sums to the data mine basis \cite{BluemleinBroadhurstVermaseren:Datamine}, we see that their combination lies in the subalgebra $\MZV \subseteq \MZV[2]$ of multiple zeta values. This phenomenon is ubiquitous in massless vacuum graphs (see section~\ref{sec:propagators} below) and raises at least the following interesting questions:
	\begin{enumerate}
		\item Do alternating sums appear at all as periods of massless vacuum graphs?
		\item Even if alternating sums (which are not MZV) occur at some stage, how can one understand the many cases when a period in $\MZV[2]$ is expressible as MZV?
		\item More generally, if a period of a graph lies in $\MZV$, is it possible to choose adapted variables for the integration such that the computation never introduces alternating sums in the first place?
	\end{enumerate}
	Oliver Schnetz very recently answered the first question in the affirmative\footnote{Personal communication on current research, to be published in the near future.}, but the others remain open.
\end{remark}

\subsubsection{Computational details}
To get a feel of the practical side of our calculations, we supply some details on our computation of $P_{7,9}$. First note that all of the complicated graphs are non-planar, so theorem~\ref{theorem:vw3-MZV} does not apply to them. We computed the linear reduction of section~\ref{sec:linear-reducibility} and chose the sequence
$e = (1, 2, 5, 8, 11, 14, 7, 6, 13, 10, 3, 12, 9, 4)$ of edges, with the labels of figure~\ref{fig:P78911}, for integration. So we set $I_0 \defas \restrict{\psipol}{\SP_4 = 1}^{-2}$ and compute $13$ integrals $I_k \defas \int_0^{\infty} I_{k-1} \dd \SP_{e_k}$ one after the other.

According to the linear reduction, each integrand is a hyperlogarithm in the next integration variable. We write them in the product basis \eqref{eq:barintegrals-hlog-products},
\begin{equation*}
	I_k \in
	\HlogAlgebra(\Sigma_{k,1})(\SP_{e_{k+1}}) \tp \cdots \tp \HlogAlgebra(\Sigma_{k,13-k})(\SP_{e_{13}})
	\quad\text{and set}\quad
	\Sigma_k \defas \Sigma_{k,1}\cupdot \cdots \cupdot \Sigma_{k,13-k}
\end{equation*}
to the full alphabet of letters. Each $I_k$ is homogeneous in weight as indicated in table~\ref{tab:P79-stats}. The complexity of each integration is dominated by three aspects:
\begin{itemize}
	\item the size of the alphabet $\Sigma_k$,
	\item the weight of the integrand and
	\item the size of the polynomials in the denominator and the letters $\Sigma_k$.
\end{itemize}
The first two determine the dimension of the ambient vector space of hyperlogarithms for an integration and strongly affect the runtime and memory requirements.
One might wonder why the first integrations take several seconds, even though they are elementary and involve only logarithms. The reason is that the graph polynomial to begin with has $1219$ monomials (so $P_{7,9}$ has as many spanning trees) and the program must compute partial fraction decompositions and factorizations of resultants of several polynomials with comparable size during the first integration steps. Note that {\HyperInt} does not exploit the advance knowledge of the factorizations \eqref{eq:dodgson-identity} and \eqref{eq:dodgson-jacobi} for Dodgson polynomials.

After a few integrations, the polynomials are very simple and the runtime is dominated by the recursion in the algorithm for \eqref{eq:reginf-as-hlog}.
Given that {\HyperInt} was not particularly designed towards efficiency, it was not expected by the author that this first implementation in a computer algebra system would allow for these interesting calculations at all.
Luckily we could resort to ample computing power and run up to fifty processes in parallel. Note that a single core would have needed two years for the eleventh integration only.

The denominator $d_{12}=(1+\SP_9)(1-\SP_9)$ of the last integrand $I_{12} = L_{12}(\SP_9)/d_{12}$, which is a hyperlogarithm $L_{12}(\SP_9) \in \HlogAlgebra(\set{-1,0})(\SP_9) \tp \MZV$, means that the primitive $\int I_{12}\ \dd \SP_9 \in \HlogAlgebra(\set{\pm 1,0})(\SP_9) \tp \MZV$ is a harmonic polylogarithm. The limit at $\SP_9 \rightarrow \infty$ is therefore computed as an alternating sum $\MZV[2]$.
\begin{remark}[Efficiency]
	This example shows a particularly bad choice of integration order. A careful investigation and preparation can drastically reduce the runtime required for such a calculation. For instance, a suitable transformation of the last four Schwinger parameters completely trivializes the $11$th integration (see remark~\ref{remark:last-integrations-3pt} below), which is excessively complex in the original variables (see table~\ref{tab:P79-stats}).

	However, the optimal approach to compute such periods was introduced by Oliver Schnetz. With graphical functions \cite{Schnetz:GraphicalFunctions}, he completed the calculation of $P_{7,8}$ in $15$ \emph{minutes} on a single core.
\end{remark}
\begin{table}
	\centering
	\begin{tabular}{rcccccccccccccc}
		Integrated $k$ & 1 & 2 & 3 & 4 & 5 & 6 & 7 & 8 & 9 & 10 & 11 & 12  \\\toprule
		Weight of $I_k$ & 0 & 1 & 1 & 2 & 3 & 4 & 5 & 6 & 7 & 8 & 9 & 10 \\
		$\abs{\Sigma_{k,1}}$ & $0$ & $4$ & $3$ & $8$ & $17$ & $16$ & $17$ & $15$ & $10$ & $5$ & $4$ & $2$\\
		$\abs{\Sigma_k}$ & 0 & 12 & 18 & 21 & 31 & 28 & 39 & 22 & 18 & 8 & 6 & 2 \\
Terms in $I_k$
& $1 $
& $12 $
& $18 $
& $125$
& $799$
& $8\Kilo$
& $48\Kilo$ 
& $100\Kilo$
& $280\Kilo$
& $220\Kilo$
& $27\Kilo$
& $589 $
\\
\midrule
Time $[\si{\second}]$
& $24$ 
& $12$ 
& $10$ 
& $28$ 
& $124$ 
& $5\Kilo$
& $400\Kilo$
& $4\Mega$
& $29\Mega$
& $62\Mega$
& $97\Kilo$
& $136$ 
\\
Mem $[\si{\gibi\byte}]$
& $1$
& $1$
& $1$
& $2$
& $2$
& $4$
& $20$
& $45$
& $17$
& $23$
& $8$
& $7$
\\
\bottomrule
	\end{tabular}%
	\caption{Details on the integration of $P_{7,9}$. We abbreviate $\Kilo=10^3$ and $\Mega=10^6$.}%
	\label{tab:P79-stats}%
\end{table}

\subsection{Primitive sixth roots of unity: \texorpdfstring{$P_{7,11}$}{P 7,11}}
\label{sec:P711}%
The by far most interesting and complicated period at seven loops is $P_{7,11}$. Its point counting function is unique (below eight loops) because it has the exceptional prime $3$, saying that $\PointCount{P_{7,11}}{q} = P_{q \mod 3}$ is given by three different polynomials depending on the remainder of $q$ by $3$ \cite{Schnetz:Fq}. Its period was computed to $11$ significant digits in \cite{BroadhurstKreimer:KnotsNumbers} and could not be assessed due to this lack of accuracy.

This graph is not linearly reducible because a quadratic denominator appears at some point for any possible order of integration \cite{Brown:PeriodsFeynmanIntegrals}. We choose $e=(8,13, 14, 6, 11, 5, 9, 12, 7, 10, 3, 4, 1, 2)$ with respect to the edge labels of figure~\ref{fig:P78911} (so we set $\SP_2 = 1$) which allows us to perform the first ten integrations. At this stage, the partial integral $I_{10} = L_{10} / d_{10}$ consists of a hyperlogarithm $L_{10}$ of weight $8$ and the irreducible, totally quadratic denominator
\begin{equation}
d_{10}
= \SP_{3}^2(\SP_{1}+\SP_{2}+\SP_{4})^2 + \SP_{2} \SP_{3} (\SP_{1}+\SP_{2}+\SP_{4}) (2 \SP_{1}-\SP_{4}) +\SP_{1} \SP_{2} (\SP_{1} \SP_{2}-\SP_{2} \SP_{4}-\SP_{4}^2).
	\label{eq:P711-d10}%
\end{equation}
Further integration of a Schwinger parameter would force us to take square roots of the discriminant of this polynomial and therefore escape the space of hyperlogarithms with rational letters (prohibiting any subsequent integration with our algorithms).
\subsubsection{Changing variables to extend linear reducibility}
But \eqref{eq:P711-d10} can be linearized: The way we presented it suggests to replace the variable $\SP_3$ by $\SP_3'$ such that $\SP_3 (\SP_1 + \SP_2 + \SP_4)=\SP_3' \SP_1$. Then one power of $\SP_1$ factors out of $d_{10}$ and the remaining factor is linear in $\SP_1$. However, it is not enough to only consider the denominator; the full polynomial reduction must be considered.

By computation we find $11$ polynomials in the reduction at this stage, which are all linear in each of the remaining variables (except $d_{10}$). The change of variables just described turns some of these polynomials into quadratic ones. We finally found that we should adhere two further replacements 
$\SP_4 = \SP_4'(\SP_2 + \SP_3')$ and then $\SP_1 = \SP_1' \SP_4'$ to obtain overall the change of variables
\begin{equation}
	\begin{aligned}
		\SP_1 &= \SP_1' \SP_4' \\
		\SP_4 &= \SP_4' (\SP_2 + \SP_3')
	\end{aligned}
	\quad
	\SP_3 = \frac{\SP_1' \SP_3' \SP_4'}{\SP_2 + \SP_4'(\SP_1' + \SP_2 + \SP_3')}
	\quad
	J = 
\frac{
\SP'_1 {\SP'_4}^2 (\SP_{2}+\SP'_{3})
}{
	\SP_{2}+\SP'_{4} (\SP'_1 + \SP_{2}+\SP'_{3})
}
	\label{eq:P711-variable-change}%
\end{equation}
with Jacobian $\dd[3] \SP = J\, \dd[3] \SP'$ ($\SP_2$ is left untouched). In these variables, not only the new denominator
\begin{equation}
	d_{10}'
	= d_{10}/J
	= (\SP_2 + \SP_3') (\SP_2 + \SP_2 \SP_4' - \SP_1') (\SP_1' \SP_4' + \SP_2 + \SP_2 \SP_4' + \SP_3' \SP_4')
	\label{eq:P711-d10'}%
\end{equation}
linearizes completely, but surprisingly also all other polynomials in the reduction. Their irreducible factors are given by the monomials and the linear (in each variable)
\begin{equation*}
	\Big\{
		\SP'_{1} \pm 1, 
		\SP'_{3} + 1, 
		\SP'_{4} \pm 1, 
		\SP'_{1} \SP'_{4} \pm 1, 
		1+\SP'_{1} \SP'_{4}+\SP'_{1},
		1+\SP'_{1} \SP'_{4}+\SP'_{4},
		\SP'_{1}-\SP'_{4}-1,
		1 + \SP'_{4}(1+\SP'_{1}+ \SP'_{3})
\Big\}.
\end{equation*}
Here we have already set $\SP_2 = 1$, which we are now forced to do: The non-linear transformation \eqref{eq:P711-variable-change} means that the $\delta$-constraint in the projective form \eqref{eq:projective-delta-form-integrand} becomes non-linear in our new variables. To avoid this and keep the integration domain simple, we just choose to fix the untouched $\SP_2$. Note that only $\SP_3'+1$ and $1+\SP_4'(1+\SP_1'+\SP_3')$ depend on $\SP_3'$, so after integrating $\SP_4'$ there can only be linear polynomials in $\SP_3'$.
\begin{corollary}\label{corollary:P711-reducible}
	After the change of variables \eqref{eq:P711-variable-change}, $P_{7,11}$ is linearly reducible.
\end{corollary}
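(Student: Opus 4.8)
The plan is to finish the argument begun in the text by invoking the compatibility-graph criterion of Proposition~\ref{prop:cg-reduction}. Linear reducibility is understood here in the extended sense of Definition~\ref{def:lin-reducible-integrand} that allows the birational change of variables \eqref{eq:P711-variable-change}: we want to exhibit a hybrid integration scheme — the ten integrations $\SP_{e_1},\ldots,\SP_{e_{10}}$ in the original Schwinger parameters, followed by \eqref{eq:P711-variable-change}, followed by the remaining integrations in $\SP_1'$, $\SP_3'$, $\SP_4'$ — in which every step has a singular locus whose polynomials factor linearly in the next integration variable. Linearity through step ten is already established by the reduction computed before the corollary, which produces the eleven polynomials $S_{10}$ bounding the singularities of $I_{10}\in\BarIntegralsRegulars(S_{10})$; the only obstruction is the totally quadratic $d_{10}$ of \eqref{eq:P711-d10}. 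It therefore remains to treat the transformed problem.

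First I would record the effect of \eqref{eq:P711-variable-change} on the alphabet. The map is birational with Jacobian $J$, so pulling each $f\in S_{10}$ back, clearing the denominator coming from $J$, and extracting irreducible factors yields a transformed set $S_{10}'\subset\Q[\SP_1',\SP_3',\SP_4']$ (with $\SP_2=1$ fixed). Following the prescription of remark~\ref{remark:cg-reduction-recursion} for how a change of variables acts on a compatibility graph — each vertex is replaced by the clique on its irreducible factors — this furnishes a compatibility graph $(S_{10}',C')$. The central point, which the text asserts after direct computation, is that $S_{10}'$ consists exactly of the monomials together with the explicitly listed linear polynomials; in particular $d_{10}$ becomes $d_{10}'=d_{10}/J$, which splits into the three linear factors of \eqref{eq:P711-d10'}. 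The verification I would carry out is simply to check that each member of $S_{10}'$ has degree one in each of $\SP_1'$, $\SP_3'$, $\SP_4'$.

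With $(S_{10}',C')$ entirely linear, I would then run the reduction of Definition~\ref{def:cg-reduction} for the three remaining variables, choosing the order that keeps the intermediate alphabets linear — starting with $\SP_4'$. Because only $\SP_3'+1$ and $1+\SP_4'(1+\SP_1'+\SP_3')$ involve $\SP_3'$, the resultants $\resultant{f}{g}{\SP_4'}$, $\resultant{f}{0}{\SP_4'}$ and $\resultant{f}{\infty}{\SP_4'}$ of the $\SP_4'$-step cannot create a polynomial quadratic in $\SP_3'$, and one checks that $(S_{10}',C')_{\SP_4'}$ and then its reduction with respect to $\SP_3'$ stay linear in the respective next variable. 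Once all iterated reductions exist (are linear in the next variable), Proposition~\ref{prop:cg-reduction} yields linear reducibility of $I_{10}$ in the variables $\SP_1',\SP_3',\SP_4'$, and hence of $P_{7,11}$ in the hybrid scheme. For the bare statement only the linearity of the reduction is needed; if one additionally wants to pin down the class of the resulting period, the integration constants that arise are analysed by the regularized-limit machinery of section~\ref{sec:reglim-reginf}.

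The main obstacle is not this final mechanical reduction but the input to it: the discovery and verification that \eqref{eq:P711-variable-change} simultaneously linearizes all eleven polynomials of $S_{10}$. Linearizing $d_{10}$ in isolation is immediate — its presentation in \eqref{eq:P711-d10} suggests $\SP_3(\SP_1+\SP_2+\SP_4)=\SP_3'\SP_1$ — but preserving the linearity of the other ten polynomials forced the two further substitutions $\SP_4=\SP_4'(\SP_2+\SP_3')$ and $\SP_1=\SP_1'\SP_4'$, and that the whole alphabet collapses to the listed linear forms is a priori unexpected. This is a finite but delicate computation, performed with {\HyperInt}. A secondary subtlety I would address is the legitimacy of fixing $\SP_2=1$ before applying a nonlinear change of variables: this is admissible precisely because $\SP_2$ is the single Schwinger variable left untouched by \eqref{eq:P711-variable-change}, so the projective $\delta$-constraint in \eqref{eq:projective-delta-form-integrand} remains linear and the integration domain stays a simplex.
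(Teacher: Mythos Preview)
Your proposal is correct and follows essentially the same approach as the paper: the corollary is stated without a separate proof environment, as it is an immediate consequence of the preceding computational verification that all eleven polynomials of $S_{10}$ become linear under \eqref{eq:P711-variable-change}, together with the observation that only $\SP_3'+1$ and $1+\SP_4'(1+\SP_1'+\SP_3')$ carry the variable $\SP_3'$ so that reducing $\SP_4'$ first preserves linearity. You have simply made the invocation of Proposition~\ref{prop:cg-reduction} and the handling of the $\delta$-constraint more explicit than the paper does.
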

\begin{table}
	\centering
	\begin{tabular}{rccccccccccccc}
	Integrated $k$ &	1 & 2 & 3 & 4 & 5 & 6 & 7 & 8 & 9 & 10 & 11 & 12 \\
\toprule
	Weight of $I_k$ & 0 & 1 & 1 & 2 & 3 & 4 & 5 & 6 & 7 & 8 & 9 & 10 \\
	$\abs{\Sigma_{k,1}}$ & $0$ & $3$ & $5$ & $17$ & $22$ & $19$ & $21$ & $18$ & $10$ & $0$ & $4$ & $3$ \\
	$\abs{\Sigma_k}$ & $0$ & $13$ & $18$ & $43$ & $31$ & $34$ & $49$ & $28$ & $18$ & $6$ & $6$ & $3$ \\
	Terms in $I_k$ &
1 & 13 & 18 & 182 &  733 &
$11\Kilo$ &
$92\Kilo$ &
$354\Kilo$ &
$101\Kilo$ &
8723 &
$58\Kilo$ &
3894 \\
\bottomrule
	\end{tabular}
	\caption{Details on the partial integrals of $P_{7,11}$.}%
	\label{tab:P711-statistics}%
\end{table}%
\subsubsection{Integration}
When we actually do the computation, we find that $I_{10}' = L_{10}'/d_{10}'$ is a hyperlogarithm
\begin{equation*}
	L_{10}' \in \HlogAlgebra\left( \set{0,-1,-\frac{1}{\SP_1'}, -\frac{1}{1+\SP_1'}} \right)(\SP_4')
	\tp \HlogAlgebra\left( \set{\pm 1, 0} \right)(\SP_1')
	\tp \Q[\mzv{2},\mzv{3},\mzv{5}]
\end{equation*}
involving only the small subset $\SP_4'+1$, $\SP_1'\SP_4'+1$, $\SP_1' \SP_4' + \SP_4' + 1$ and $\SP_1' \pm 1$ of polynomials. This behaviour is expected because we are computing the particular period $\int \Omega/\psipol^2$; further letters will occur for more general periods (see also the remarks \cite[\S4.5]{Brown:TwoPoint}). Interestingly, $L_{10}'$ does not depend on $\SP_3'$ at all (as we explain in remark~\ref{remark:last-integrations-3pt} below), so the next integration is elementary:
\begin{equation}
	I'_{11}
	= L_{10}' \int_0^{\infty} \frac{\dd \SP_3'}{d_{10}'}
	= \frac{L_{10}'}{(1+\SP_1' \SP_4')(1+\SP_4' - \SP_1')}
	\ln \frac{1+\SP_1' \SP_4' + \SP_4'}{\SP_4'}.
	\label{eq:P711-I11}%
\end{equation}
When we integrate out $\SP_4'$, we are finally left with an integrand
\begin{equation}
	I'_{12} = \frac{L_{12}'}{1 - \SP_1' + {\SP_1'}^2}
	\quad\text{and find that}\quad
	L_{12}' \in \HlogAlgebra\left( \set{0,\pm 1} \right)(\SP_1') \tp \Q[\mzv{2},\mzv{3},\mzv{5}]
	\label{eq:P711-I12}%
\end{equation}
is a harmonic polylogarithm of uniform weight $10$, given by $3894$ individual words. Again we see a simplification compared to a general period, because already at this stage we could in principle have the roots $\xi_6^{\pm 1} = e^{\pm \imag \pi/3}$ of $1-\SP_1'+{\SP_1'}^2$ as letters in $L_{12}'$. We know that the partial integral $I_{12}'(\SP_1')$ is analytic on $\SP_1' \in (0,\infty)$, which implies the non-trivial constraint that $L_{12}'(\SP_1')$ must be analytic at $\SP_1' \rightarrow 1$. We verified this explicitly. Since
\begin{equation}
	\frac{1}{1-\SP_1'+{\SP_1'}^2}
	=\frac{1}{\imag\sqrt{3}}\left( \frac{1}{\SP_1'-\xi_6} - \frac{1}{\SP_1'-\conjugate{\xi_6}} \right)
	= \frac{2}{\sqrt{3}} \Imaginaerteil \frac{1}{\SP_1'-\xi_6},
	\label{eq:P711-imag-denominator}%
\end{equation}
the final integration amounts just to prepending the letter $\letter{\xi_6}$ to each word in $L_{12}'$ and taking the imaginary part. After splitting the domain into $\SP_1' \in (0,1)$ and $\SP_1' \in (1,\infty)$ we can thus compute the period of $P_{7,11}$ as a $\MZV$-linear combination of multiple polylogarithms of the form $\Imaginaerteil(\Li_{n_1,\ldots,n_r}(\pm 1, \cdots, \pm 1, \xi_6))$. There are $\numprint{39366} = 2\cdot 3^9$ of these objects at weight $11$ and our explicit result uses only $4589$ out of these.

This algebra of periods (harmonic polylogarithms evaluated at $\xi_6$) was already investigated in \cite{Broadhurst:SixthRoots}, but no reduction to a conjectured basis (like the data mine \cite{BluemleinBroadhurstVermaseren:Datamine}) exists to the weight $11$ we require. But at least our exact result in this form can be evaluated numerically to very high precision with standard methods. We used Oliver Schnetz's {\zetaprocedures}\cite{Schnetz:ZetaProcedures} to compute $5000$ significant digits, which begin with
\begin{equation*}
	\period(P_{7,11})
	\approx \numprint{%
200.357566429275446967634590990100073795036337663163840606
}\ldots
	\label{eq:P711-numeric}%
\end{equation*}
This was enough (using PSLQ in the implementation \cite{Bailey:ARPREC}) to disprove that $\period(P_{7,11})$ could be in $\MZV$ with any reasonable size of integer coefficients.
\begin{remark}[Independence of $L_{10}'$ of $\SP_3'$]\label{remark:last-integrations-3pt}
	The graph $H \defas G_{10}$ built from the first $10$ edges ($5$ through $14$) in the integration order connects to the last four edges ($H' \defas G^{10}$) only through three vertices $v_i$. Let $v_i$ denote the vertex incident to edge $i$ ($1 \leq i \leq 3$), so edge $4$ connects $v_1$ and $v_2$. Then
	$\psipol_G = f_{123} \psipol_{H'} + f_{123}' \psipol_H + \sum_{i<j} f_i^{} f'_j$ in terms of
	$
		\psipol_{H'} = \SP_1 + \SP_2 + \SP_4
	$
	and the spanning forest polynomials \eqref{eq:3pt-forest-shorthand} of $H$ and $H'$, which are
	\begin{equation*}
		f_1' = \SP_1 \SP_4 
		,\ 
		f_2' =  \SP_2 \SP_4 
		,\ 
		f_3' = \SP_1 \SP_2 + \SP_1 \SP_3 + \SP_2 \SP_3 + \SP_3 \SP_4
		,\ 
		f_{123}' = \SP_4 (\SP_1\SP_2 + \SP_1\SP_3 + \SP_2 \SP_3).
	\end{equation*}
	Thus the dependence of $I_{10}$ is only through the three ratios $f_i'/\psipol_{H'}$, explicitly
	\begin{equation*}
		I_{10}
		= \frac{1}{\psipol_{H'}^2}
			\int_{\R_+^3} \GfunForest{H}(z)
		\left[ \sum_{1 \leq i< j \leq 3} \left( z_i + \frac{f_i'}{\psipol_{H'}}\right) \left( z_j +  \frac{f_j'}{\psipol_{H'}} \right)\right]^{-2}
		\ \dd[3] z.
	\end{equation*}
	We can exploit the homogeneity \eqref{eq:GfunForest-scaling} to reduce it further down to only two variables. After such a rescaling and the change of variables \eqref{eq:P711-variable-change}, we find that
	\begin{equation*}
		I_{10}
		= \frac{1}{{\SP_4'}^2 (1+\SP_3')^2} 
		\int_{\R_+^3}\frac{\GfunForest{H}(z)\ \dd[3] z}{
			\left[(z_1 + \SP_1' \SP_4') (1+\SP_1'+ z_2 + z_3) + (\SP_1' + z_3) (1+ z_2)\right]^{2}
		}
	\end{equation*}
	has a simple (rational) dependence on $\SP_3'$.
\end{remark}

\subsubsection{Identification}
Further progress was possible through Oliver Schnetz's intuition from his independent, amazing achievements. Using graphical functions he recently evaluated a graph with $8$ loops ($P_{8,33}$) to a multiple polylogarithm at sixth roots of unity. More precisely, the product of this period with $\imag\sqrt{3}$ lies in the algebra $\Deligne$ of rational linear combinations of $\Li_{\vec{n}}(\xi_6) = \Li_{n_1,\ldots,n_r}(1,\ldots,1,\xi_6)$, which we discussed in section \ref{sec:Periods}. He conjectured that $\period(P_{7,11})$ should also be of this form.

That would imply that it can be expressed as a linear combination of the conjectural basis elements \eqref{eq:Deligne-Generators} given by products of $\Li_{\vec{n}}(\xi_6)$ for Lyndon words $\vec{n} \in \Lyndons(\N \setminus \set{1})$ and powers of $\imag\pi$. This conjectural basis has $144$ elements at weight $11$, which was still too big to find a fit with PSLQ before our $5000$ digits of precision were exhausted.

The final ingredient was our decomposition \eqref{eq:Deligne-oddeven} of Deligne's basis into real- and imaginary parts: We take $\Realteil \Li_{\vec{n}}(\xi_6)$ for Lyndon words $\vec{n}=(n_1,\ldots,n_r)$ with $\abs{\vec{n}}+r$ even and $\imag\Imaginaerteil \Li_{\vec{n}}(\xi_6)$ otherwise. For the following extreme precision integer relation detections we used the PSLQ implementation \cite{Bailey:ARPREC}. Note that it works with real numbers and we do not keep track of powers of $\imag$ for now; so actually we consider the algebra $\Realteil \Deligne + \Imaginaerteil \Deligne \subset \R$.

To gain confidence with our setup, we first computed all $144$ basis elements at weight $11$ to $\numprint{10000}$ digits and checked:
\begin{itemize}
	\item There are no integer relations between these basis elements with coefficients of less than $67$ digits (further verification requires higher precision evaluations for the basis).
	\item The complementary pieces $\Realteil \Li_{\vec{n}}(\xi_6)$ for $\abs{\vec{n}}+r$ odd (and the imaginary part otherwise) fulfil an integer relation with the basis elements.
\end{itemize}
We proved that the latter relations \eqref{eq:mpl-xi6-ReIm-reducible} exist, but the question was whether these are reliably detected by the program. As a general rule, one needs at least $n\cdot d$ digits of precision to detect a relation among $n$ elements with integer coefficients of $d$ digits or less. In our case, these coefficients are very large and for some Lyndon words, the relation entailed integer coefficients with up to $50$ digits and required $7000$ digits of accuracy to be detected. This explains why our attempt to find a relation with $P_{7,11}$ with only $5000$ digits failed.

Considering \eqref{eq:P711-imag-denominator} we expect that this period should in fact belong to the imaginary subspace $\sqrt{3}\period(P_{7,11}) \in \Imaginaerteil \Deligne$. Proposition~\ref{prop:Deligne-oddeven} provides an explicit $\Q$-basis of this space, which has only $72$ elements such that $5000$ digits should be ample to detect our expected relation. Only $3000$ were exhausted when it was found (and confirmed to the full available accuracy of $5000$ digits), with integer coefficients of up to $40$ digits. It only uses $30$ out of the $72$ basis elements (the other coefficients in the relation are zero).

In our final result, we replaced some periods with more familiar multiple zeta values and shortened $\Li_{\vec{n}}(\xi_6)$ to just $\Li_{\vec{n}}$:
\begin{align}
	\sqrt{3} \, \period(P_{7,11}) &= 
	\Imaginaerteil \Big(
		\tfrac{\numprint{19285}}{6} \mzv{9} \Li_{2}
		-\tfrac{1029}{2} \mzv{7} \Li_{4}
		+240 \mzv[2]{3}(9 \Li_{2,3}-7 \mzv{3} \Li_{2})
	\Big)
		-\tfrac{\numprint{93824}}{9675}\pi^3 \mzv{3,5} 
	\nonumber\\ &
		+ \tfrac{2592}{215} \Imaginaerteil\Big(
				36  \Li_{2,2,2,5}
				+27 \Li_{2,2,3,4}
				+9  \Li_{2,2,4,3}
				+9  \Li_{2,3,2,4}
				+3  \Li_{2,3,3,3}
	\nonumber\\&\qquad\qquad\quad
				-43 \mzv{3}(\Li_{2,3,3}+3 \Li_{2,2,4})
		\Big)
		-\tfrac{\numprint{96393596519864341538701979}}{\numprint{790371465315684594157620000}} \pi^{11}
	\nonumber\\&
	+\tfrac{216}{\numprint{14755731798995}} \Imaginaerteil\Big(
			\numprint{2539186130125890} \Li_{8} \mzv{3}
			-\numprint{1269593065062945} \Li_{2,9}
	\nonumber\\&\qquad\qquad\qquad\qquad\quad\!
			-\numprint{413965317054502} \Li_{6} \mzv{5}\,
			-\ \,\numprint{996412983391539} \Li_{3,8}
	\nonumber\\&\qquad\qquad\qquad\qquad\quad\!
			-\numprint{546306741059841} \Li_{4,7}\ \ 
			-\ \ \numprint{156228639992955} \Li_{5,6}
	\Big)
	\nonumber\\&
	+\tfrac{2592}{\numprint{10945435}} \pi^2\Imaginaerteil\Big(
				\numprint{287205} \Li_{2,7}
				-\numprint{574410} \Li_{6} \mzv{3}
				+\numprint{55687} \Li_{4,5}
				+\numprint{168941} \Li_{3,6}
		\Big)
	\nonumber\\&
	+\pi \left(
		\tfrac{\numprint{11613751}}{9030} \mzv[2]{5}
		+\tfrac{\numprint{267067}}{602} \mzv{3,7}
		-\tfrac{\numprint{31104}}{215} \Realteil(	3 \Li_{4,6}+10 \Li_{3,7} )
	\right).
	\label{eq:P711}%
\end{align}
\begin{remark}
	After our computation, David Broadhurst investigated the source of the huge prime factors in the denominators of the depth two contributions to \eqref{eq:P711}. He discovered that Deligne's basis in terms of Lyndon words is far from optimal and hand-crafted an alternative basis which behaves much better ($2$ and $3$ are the only prime factors of the denominators).
	His data mine \cite{Broadhurst:Aufbau} extends \cite{BluemleinBroadhurstVermaseren:Datamine} to sixth roots of unity and provides interesting conjectures on $\Deligne$.
\end{remark}
\begin{remark}
	The strongest form of a conjecture due to Oliver Schnetz states that periods of massless $\phi^4$ theory are closed under the motivic coaction. This is a highly non-trivial condition and constrains the type of periods that may appear drastically. Oliver Schnetz confirmed, based on our numeric result and his extension of the coproduct-based decomposition algorithm presented in \cite{Brown:DecompositionMotivicMZV}, that $P_{7,11}$ indeed supports this conjecture.
	A much less stringent form of it was very recently proved by Francis Brown \cite{Brown:Houches14Coaction} and we are looking forward to further development of this fascinating idea.
\end{remark}
\begin{remark}
	Our integration of $P_{7,11}$ that produced the exact expression in terms of $\Li_{\vec{n}}(\pm 1,\ldots,\pm 1, \xi_6)$ was completed in one week. We point this out to show that the runtime for such a computation can be reduced drastically if a good integration order is chosen and variables are transformed suitably. Compare the tables~\ref{tab:P711-statistics} and \ref{tab:P79-stats}: The most demanding integration for $P_{7,9}$ was the eleventh (integrating $I_{10}$ to get $I_{11}$), while this step was trivial for $P_{7,11}$ due to our choice of variables (see remark~\ref{remark:last-integrations-3pt}).
\end{remark}

\section{Massless propagators}\label{sec:propagators}%
The renormalization group functions ($\beta$-functions and anomalous dimensions) are essential to understand a quantum field theory, as they describe their asymptotic behaviour. In the minimal subtraction scheme for dimensional regularization, these functions are determined by the counterterms which in turn can be computed from massless propagators (Feynman integrals with two external legs, also known as \emph{p-integrals}). Their calculation up to three loops in the seminal article \cite{ChetyrkinTkachov:IBP}, which promoted the technique of integration by parts (IBP), was a milestone in perturbative quantum field theory.

Given the strong demand for higher order corrections, it seems very surprising that it took almost thirty years before the required massless propagators at four loops could be computed \cite{BaikovChetyrkin:FourLoopPropagatorsAlgebraic,SmirnovTentyukov:FourLoopPropagatorsNumeric,LeeSmirnov:FourLoopPropagatorsWeightTwelve}. We shall recall how hyperlogarithms solve this problem of expanding any massless propagator up to four loops to arbitrary order \cite{Panzer:MasslessPropagators}, which extends to the infinite family of propagators obtained by iterated insertions of $\leq 4$-loop propagators into each other. This approach is due to Francis Brown \cite{Brown:TwoPoint}.

\subsection{One-scale insertions}\label{sec:one-scale-insertions}
Our goal is to compute the \emph{$\varepsilon$-expansion} (the Laurent series at $\varepsilon \rightarrow 0$) of dimensionally regulated Feynman rules $\FR(G) \in \R[\varepsilon^{-1},\varepsilon]]$ in $\dimension = 4-2\varepsilon$ dimensions.\footnote{We could as well expand around any other even dimension.}  The second Symanzik polynomial $\phipol = q^2 \restrict{\phipol}{q^2=1}$ is proportional to the only available scale, the square of the external momentum. Hence by \eqref{eq:feynman-integral-projective}, this dependence is just a power law $\FR(G) = q^{-2\sdd} \restrict{\FR(G)}{q^2=1}$ and we will set $q^2 = 1$ henceforth.

This shows that whenever a graph $G$ contains a subgraph $\gamma$ which has only two external momenta, it can be replaced by a single propagator raised to the power $\sdd$ (and the constant $\FR(\gamma)$). As a special case, any pair of edges meeting at a two-valent vertex is equivalent to a single propagator with the sum of the corresponding indices. 
\begin{figure}
	$ \Graph[0.7]{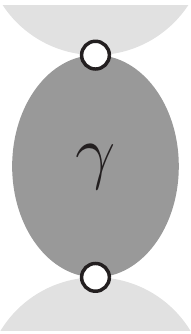} \mapsto \Graph[0.7]{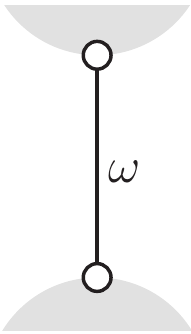} $
	\hfill
	$ \Graph[0.7]{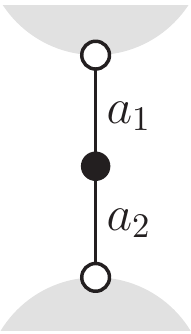} \mapsto \Graph[0.7]{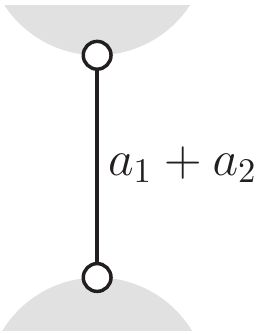} $
	\hfill
	$ F' = \Graph[0.5]{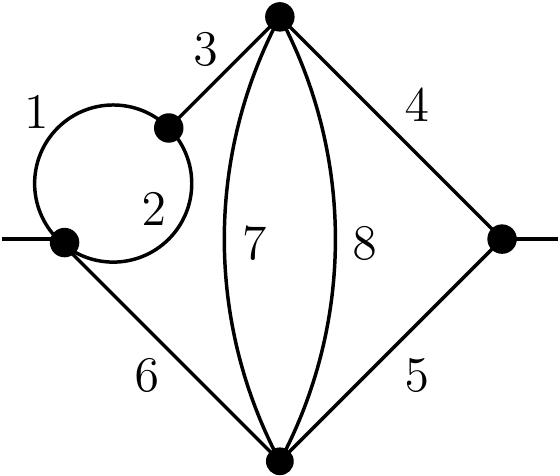} $
	\caption[Reduction rules for one-scale subgraphs]{Reduction rules for one-scale subgraphs. The white circles show the only two vertices where the subgraph $\gamma\subset G$ connects to the remainder of $G$. $F'$ is computed in example~\ref{ex:onescale-sub}.}%
	\label{fig:1scale-reductions}%
\end{figure}%
\begin{example}\label{ex:onescale-sub}
	The propagator $F'$ of figure~\ref{fig:1scale-reductions} can be computed in terms of the one- and two-loop master integrals from equation \eqref{eq:oneloop-master} and figure~\ref{fig:low-loop-masters}:
	\begin{equation*}
		\FR(F',\EP_1,\ldots,\EP_9)
		= \left[\onemaster{1}{1} \right]^2 \FR(F,\EP_1+\EP_2+\EP_3 - 2+\varepsilon, \EP_4,\EP_5,\EP_6,\EP_7+\EP_8 - 2+\varepsilon).
	\end{equation*}
\end{example}
These reduction rules are sketched in figure~\ref{fig:1scale-reductions} and drastically decrease the number of graphs that must be considered (we say $G$ is $3$-connected if no reduction rule can be applied to it). However, this requires the expansion of Feynman integrals not only in $\dimension$, but also in the indices $\EP_e$ as they can become $\varepsilon$-dependent if the corresponding edge involved a reduction of a graph with loops.

Standard methods cannot compute such general expansions analytically. Apart from the elementary one-loop master integral \eqref{eq:oneloop-master}, only the two-loop master integral had heretofore been computable in this general sense to arbitrary order \cite{BierenbaumWeinzierl:TwoPoint}. Just this single graph occupied physicists for decades \cite{Grozin:TwoLoop}.

Below we will summarize our findings in \cite{Panzer:MasslessPropagators}, namely that hyperlogarithms suffice to compute the simultaneous expansions (in $\varepsilon$ and $\EP_e$) of all massless propagators up to four loops, to arbitrary order. To clarify the significance of this, to our mind, striking progress let us briefly compare to the earlier results.
\begin{figure}\centering
	\begin{tabular}{ccccc}
		$\Graph[0.5]{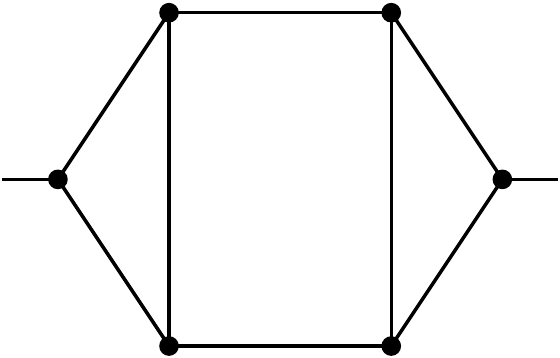}$&
		$\Graph[0.5]{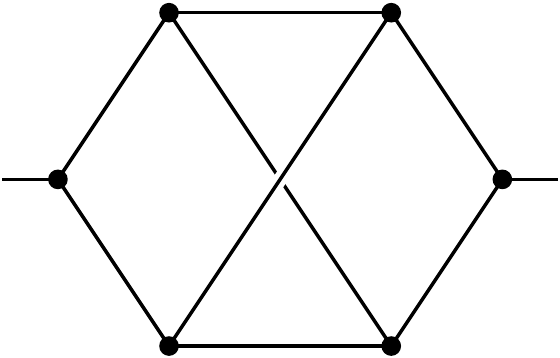}$&
		$\Graph[0.5]{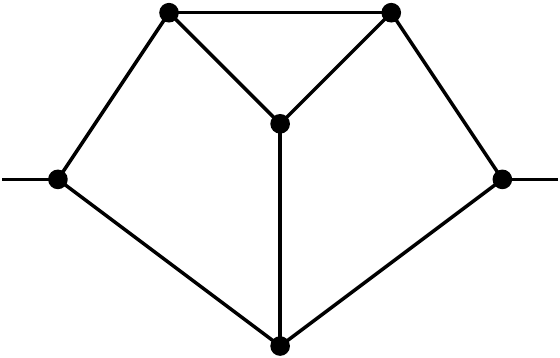}$&
		$\Graph[0.5]{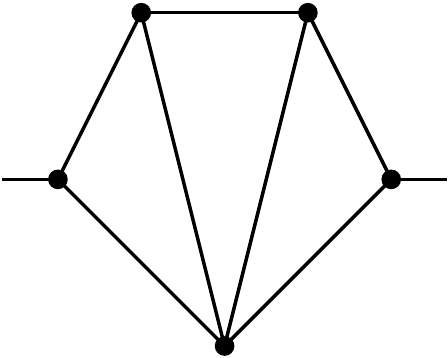}$&
		$\Graph[0.5]{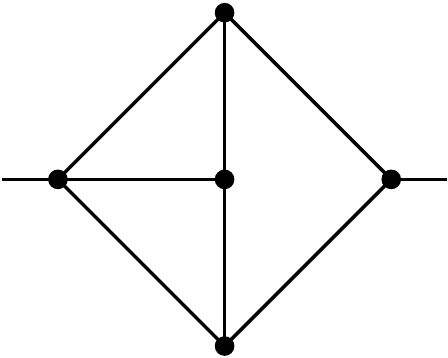}$ \\
		$L$ & $N$ & $M$ & $V$ & $Q$ \\
	\end{tabular}
	\caption{The five $3$-connected $3$-loop propagators as named in \cite{ChetyrkinTkachov:IBP}.}%
	\label{fig:3loop-propagators}%
\end{figure}

All $3$-connected $3$-loop propagators are shown in figure~\ref{fig:3loop-propagators} as determined in \cite{ChetyrkinTkachov:IBP}. The method of integration by parts provides relations between Feynman integrals $\FR(G;\vec{\EP}+\vec{z})$ with integer shifts $\vec{z} \in \Z^{\edges}$ of indices. It was applied to prove that the $\varepsilon$-expansion of any $3$-loop propagator \emph{with integer indices $\vec{\EP} \in \Z^{\edges}$} can be expressed through $\Gamma$-functions and the $\varepsilon$-expansions of $\FR(N;1,\ldots,1)$ and $\FR(L;1,\ldots,1)$. While the latter is rather simple to compute, the calculation of $N$ proceeded very slowly. Of the expansion
\begin{gather}
	\frac{\FR(N;1,\ldots,1)}{\Gscheme^3 (1-2\varepsilon)^2}
	= 20 \mzv{5}
+\left(\tfrac{80}{7} \mzv[3]{2} + 68 \mzv[2]{3}\right)\varepsilon
+\left(\tfrac{408}{5} \mzv{3} \mzv[2]{2} + 450 \mzv{7}\right)\varepsilon^{2} 
+\left(\tfrac{
	\numprint{102228}}{125} \mzv[4]{2}
	-2448 \mzv{3} \mzv{5}
	\right. \nonumber\\ \left.
	-\tfrac{9072}{5} \mzv{3,5}
\right)\varepsilon^{3}
+\left(
	\tfrac{\numprint{88036}}{9} \mzv{9}
	-\tfrac{4640}{3} \mzv[3]{3} 
	-\tfrac{\numprint{10336}}{7} \mzv[3]{2} \mzv{3} 
	+\tfrac{\numprint{19872}}{5} \mzv[2]{2} \mzv{5}
\right)\varepsilon^{4}
+\bigo{\varepsilon^{5}}
,
	\label{eq:N-expansion}%
\end{gather}
where $\Gscheme \defas \varepsilon\onemaster{1}{1}$ is a common prefactor, only the first three coefficients had been proven before. Similarly, the calculation \cite{BaikovChetyrkin:FourLoopPropagatorsAlgebraic} of all master integrals for four-loop massless propagators determines a finite number of coefficients for each graph (with all $\EP_e = 1$), just sufficient for four-loop computations. A reducible $5$-loop graph is in general not computable with this data, since the resulting $\leq 4$-loop propagator would need to be expanded in the index of an edge.

An alternative approach to exact computation is the numeric evaluation of the integrals to very high precision, which has recently become feasible and very effective through dimensional recurrence relations \cite{LeeSmirnov:EasyWay,LeeSmirnov:FourLoopPropagatorsWeightTwelve}. This method allows to obtain convincing fits of the numeric coefficients to multiple zeta values and suggests, tested to very high weight, that this might hold to all orders. A slightly weaker result can be proved with hyperlogarithms, even for expansion in the indices.
\begin{figure}
	$	Y_3 = \CloseProp{L} = \CloseProp{M} = \Graph[0.45]{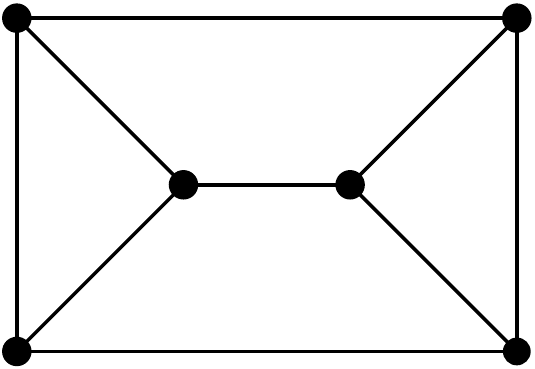} $ \hfill
	$	\WS{4} = \CloseProp{Q} = \CloseProp{V} = \Graph[0.4]{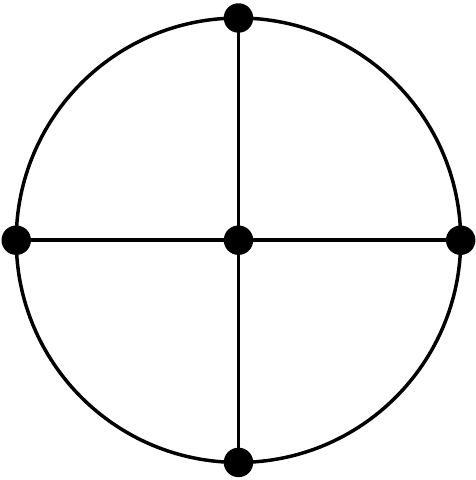} $ \hfill
	$	K_{3,3} = \CloseProp{N} = \Graph[0.3]{k3_3}	$
	\caption[Glueings of three-loop propagators]{Glueing the external edges of $L$ and $M$ gives the triangular prism $Y_3$, while $Q$ and $V$ yield the wheel with four spokes $\WS{4}$ and $N$ results in the complete bipartite graph $K_{3,3}$.}%
	\label{fig:4loop-vacuum}%
\end{figure}%
\subsection{Parametric integration}
Propagators $G$ can be transformed into vacuum integrals $\CloseProp{G} \defas G \cupdot \set{e_0}$ through addition of an edge $e_0$ that connects the two vertices where the external momenta enter (see figure~\ref{fig:4loop-vacuum}). This operation is called \emph{glueing} and conversely, \emph{cutting} any edge of a vacuum graph gives back a propagator. It is well-known that this symmetry relates the associated Feynman integrals, which is easily seen in Schwinger parameters \cite{Brown:TwoPoint}. Namely, the first Symanzik $\psipol_{\CloseProp{G}} = \psipol_G^{} \SP_{e_0} + \phipol_G^{}$ combines both polynomials of the propagator such that\footnote{%
	The proof is just $
	\int_0^{\infty} \SP_{e_0}^{\dimension/2 - \sdd -1}/ (\psipol \SP_{e_0} + \phipol)^{\dimension/2} \ \dd \SP_{e_0} 
		= \phipol^{-\sdd} \psipol^{\sdd - \dimension/2}\Gamma(\dimension/2-\sdd)\Gamma(\sdd)/\Gamma(\dimension/2)
	$.}
\begin{equation}
	\FR(G) 
	\urel{\eqref{eq:feynman-integral-projective}}
	\frac{
		\Gamma\left(\dimension/2 \right)
	}{
		\prod_{e\in \edges(\CloseProp{G})} \Gamma(\EP_{e})
	}
	\int \Omega\ I_{\CloseProp{G}}
	\quad\text{when we set}\quad
	\EP_{e_0}
	\defas \frac{\dimension}{2} - \sdd_{G}.
	\label{eq:projective-glueing}%
\end{equation}
We can therefore transform the expansions of two propagators with the same glueing into each other, and relations among vacuum graphs correlate different propagators that had so far been computed separately (we discuss explicit examples in \cite{Panzer:MasslessPropagators}). The expansion of \eqref{eq:projective-delta-form-integrand} in $\dimension = 4-2\varepsilon$ and $\EP_e = \EP_e' + \EPE_e$ around integers $\EP_e' \in \Z$ reads
\begin{equation}
	\label{eq:projective-index-expansion}%
	\begin{split}
	I_G
	&= I_G'
	\sum_{n,n_1,\ldots,n_{\edges(G)} \geq 0} \frac{\varepsilon^n \EPE_1^{n_1}\!\cdots\EPE_{\edges}^{n_{\edges}}}{n! n_1!\cdots n_{\edges}!} \ln^n \frac{\psipol_G^{1+\loops{G}}}{\phipol_G^{\loops{G}}}
	\prod_{e \in \edges(G)} \ln^{n_e} \frac{\psipol_G \SP_e}{\phipol_G}
	,\quad\text{for the glued}\hspace{-5mm} \\
	I_{\CloseProp{G}}
	&= I_{\CloseProp{G}}'
	\sum_{n,n_1,\ldots,n_{\edges(G)} \geq 0} \frac{\varepsilon^n \EPE_1^{n_1}\!\cdots\EPE_{\edges}^{n_{\edges}}}{n! n_1!\cdots n_{\edges}!} \ln^n \frac{\psipol_{\CloseProp{G}}}{\SP_{e_0}^{1+\loops{G}}}
		\prod_{e \in \edges(G)} \ln^{n_e} \frac{\SP_e}{\SP_{e_0}}
	\end{split}
\end{equation}
where $I_G'$ and $I_{\CloseProp{G}}'$ denote the integrands at the limits $\varepsilon=0$ and $\EP_e = \EP_e'$. Following section~\ref{sec:anareg} we may assume that $\int \Omega\ I_G' < \infty$ converges and integrate each coefficient in this expansion separately. They can be computed with hyperlogarithms if $G$ (equivalently $\CloseProp{G}$) is linearly reducible: Note that the logarithm powers do not influence the linear reducibility at all, since the graph polynomials are already considered in the reduction and the monomials $\log(\SP_e) = \Hyper{\letter{0}}(\SP_e)$ correspond to the letter $\letter{0}$, which is always assumed to be in the alphabet $\Sigma$ anyway.

Hence it suffices to consider the $3$-connected vacuum graphs with $4$ and $5$ loops (one more than the propagators have) shown in figures~\ref{fig:4loop-vacuum} and \ref{fig:5loop-vacuum}. The graphs $Y_3$, $\WS{4}$ and ${_5 P_1}$ to ${_5 P_6}$ have vertex-width three and fall under theorem~\ref{theorem:vw3-MZV}. In the remaining cases (recall that $\vw(G) \leq 3$ requires planarity; the cube $C = {_5P_7}$ is one of the forbidden minors in theorem~\ref{theorem:vw3-minors})  we checked linear reducibility through explicit computation of the polynomial reduction with {\HyperInt}. These contained some polynomials with different signs, which means that in the end we may get alternating sums. 
\begin{figure}\centering%
	\begin{tabular}{cccccc}
		$\Graph[0.25]{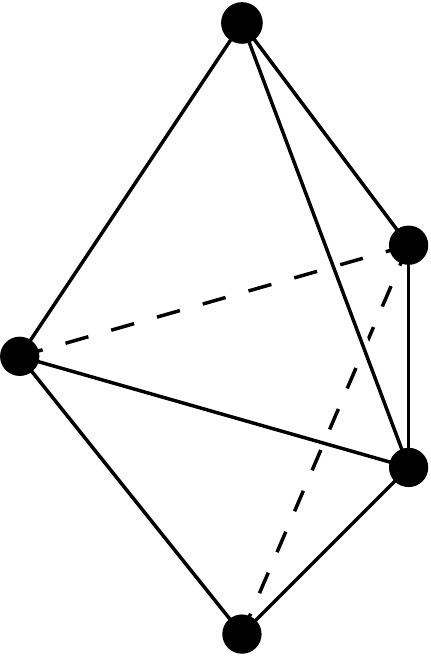}$ &
		$\Graph[0.3]{w5}$ &
		$\Graph[0.4]{zz5}$ &
		$\Graph[0.4]{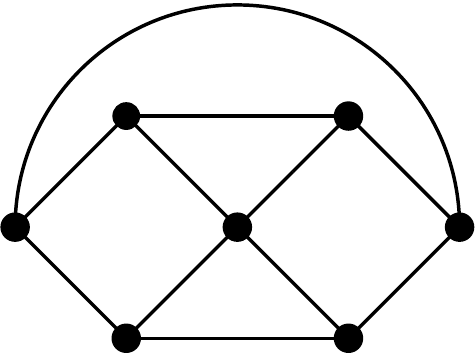}$ &
		$\Graph[0.4]{5P4}$ &
		$\Graph[0.4]{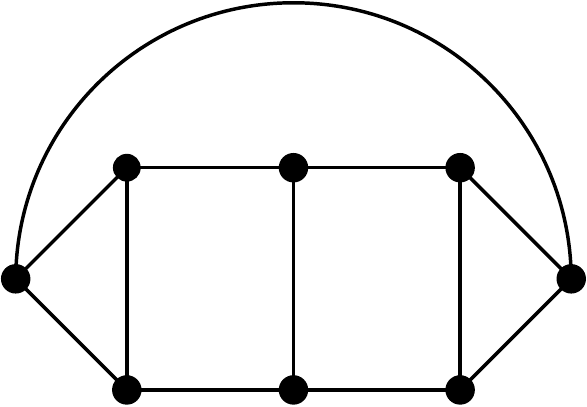}$ \\
		${_5 P_1}$ &  ${_5 P_2}$ & ${_5 P_3}$ & ${_5 P_4}$ & $ {_5 P_5}$ & ${_5 P_6}$\\
	\end{tabular}\\%
	\begin{tabular}{ccccc}
		$\Graph[0.37]{cube}$ &
		$\Graph[0.4]{5N}$ &
		$\Graph[0.4]{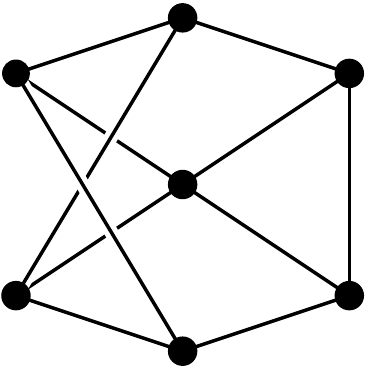}$ &
		$\Graph[0.35]{U}$ &
		$\Graph[0.38]{c8_4}$\\
		${_5 P_7}$ & ${_5 N_1}$ &$ {_5 N_2} $ & ${_5 N_3}$ & ${_5 N_4}$ \\
	\end{tabular}%
	\caption[All three-connected five-loop vacuum graphs]{%
	All three-connected five-loop vacuum graphs \cite{Panzer:MasslessPropagators}, divided into planar ($P$) and non-planar ($N$) ones. 
	The zigzag ${_5 P_3} = \ZZ{5}$ and ${_5 N_1}$ were considered in \cite{Brown:TwoPoint}. 
	Cutting any edge produces a propagator with four loops, deleting a three-valent vertex creates a three-loop three-point graph.}%
	\label{fig:5loop-vacuum} %
\end{figure}%
\begin{theorem}%
	\label{theorem:propagators}%
	All three- and four-loop massless propagators are linearly reducible. Each coefficient in their expansion of $I_G$ in $\varepsilon$ and the indices $\EP_e$ is a rational linear combination of multiple zeta values, except for the possibility that alternating sums might occur in the expansion of $N$ and the cuts of ${_5 N_1}, {_5 N_2}, {_5 N_3}, {_5 N_4}$ or ${_5 P_7}$.
\end{theorem}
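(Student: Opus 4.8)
The plan is to reduce the statement to two separate claims: first, that each of the relevant vacuum graphs is linearly reducible, and second, that the integration constants which arise are multiple zeta values (with the stated exceptions possibly requiring alternating sums). The glueing relation \eqref{eq:projective-glueing} is the conceptual engine here: it identifies the $\varepsilon$- and index-expansion of a propagator $\FR(G)$ with a projective integral over the glued vacuum graph $\CloseProp{G}$, and the expansion \eqref{eq:projective-index-expansion} shows that the coefficients in this expansion are integrals of integrands built from $\psipol_{\CloseProp{G}}$, $\phipol_{\CloseProp{G}}$, their inverse powers, and logarithms. Crucially, as observed after \eqref{eq:projective-index-expansion}, the extra logarithm factors $\log(\SP_e)$ contribute only the letter $\letter{0}$ and so do not affect linear reducibility — so the whole question of computability reduces to the linear reducibility of the Symanzik polynomials $S = \set{\psipol_{\CloseProp{G}}, \phipol_{\CloseProp{G}}}$ of the vacuum graph, exactly the input required by corollary~\ref{cor:convergent-anareg-integrand} (to reach a convergent integrand) and by definition~\ref{def:lin-reducible-integrand}.

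\textbf{Step 1 (enumeration and reduction by one-scale moves).} First I would invoke the reduction rules of figure~\ref{fig:1scale-reductions} to argue that it suffices to treat only the $3$-connected propagators. Because every propagator with a one-scale subgraph factorizes into a product of a lower-loop propagator (with shifted, generally $\varepsilon$-dependent indices) and a constant like \eqref{eq:oneloop-master}, the computation of arbitrary three- and four-loop propagators reduces to that of the finitely many $3$-connected representatives: at three loops the five graphs $L,N,M,V,Q$ of figure~\ref{fig:3loop-propagators}, and at four loops the propagators obtained by cutting edges of the $3$-connected five-loop vacuum graphs of figure~\ref{fig:5loop-vacuum}. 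This is why the vacuum graphs to examine are exactly those of figures~\ref{fig:4loop-vacuum} and \ref{fig:5loop-vacuum} (one loop more than the propagators), since cutting any edge of such a vacuum graph returns a four-loop propagator.

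\textbf{Step 2 (linear reducibility, split into two cases).} For the graphs $Y_3$, $\WS{4}$ and ${_5P_1}$ through ${_5P_6}$ I would simply verify that they have vertex-width at most three and appeal to theorem~\ref{theorem:vw3-Brown} (equivalently, the conjecture of Crump and the splitting result underlying theorem~\ref{theorem:split-minors}), which guarantees linear reducibility with all reduction polynomials being Dodgson polynomials. For the remaining graphs — the cube $C = {_5P_7}$ (a forbidden minor by theorem~\ref{theorem:vw3-minors}) and the non-planar ${_5N_1},\ldots,{_5N_4}$, together with $K_{3,3} = \CloseProp{N}$ — vertex-width three fails, so here I would run the polynomial reduction algorithm of section~\ref{sec:linear-reducibility} (either the Fubini reduction of definition~\ref{def:Fubini-reduction} or the compatibility-graph reduction of definition~\ref{def:cg-reduction}) explicitly with {\HyperInt} and check that some order of integration yields a linear reduction. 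This part is a finite computation: for each of the finitely many graphs one exhibits one admissible integration order, and definition~\ref{def:lin-reducible-polys} does the rest.

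\textbf{Step 3 (the integration constants) and the main obstacle.} Given linear reducibility, each expansion coefficient is computed by the partial-integration procedure of section~\ref{sec:multiple-integrals}, and the final period lands in the algebra \eqref{eq:final-period-algebra}, namely a product of regularized limits $\AnaReg{z_k}{\infty}\HlogAlgebra(\Sigma_{k,N})$. The hard part will be controlling which periods appear, i.e.\ pinning down the leading coefficients $\leadCoeff$ of the final alphabets. When all these leading coefficients lie in $\set{0,-1}$, proposition~\ref{prop:reglim-reginf-from-word} together with \eqref{eq:reginf-gives-mzv} forces the constants into $\MZV$, and this is what happens for the vertex-width-three cases. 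For $N$ and the cuts of ${_5N_1},\ldots,{_5N_4},{_5P_7}$, however, the explicit reductions contain polynomials with both signs, so positive leading coefficients can arise and the limit of definition~\ref{def:leading-coefficient-alphabet} may include $+1$; by corollary~\ref{corollary:reglim-degree-decouple} and the discussion of section~\ref{sec:tangential-base-points} this introduces at worst alternating sums $\MZV[2]$ — hence the stated exceptional clause. I would therefore phrase the theorem conservatively: prove membership in $\MZV$ whenever the leading-coefficient alphabet reduces to $\set{0,-1}$, and merely assert membership in $\MZV[2]$ (possible alternating sums) for the five exceptional graphs. The genuine obstacle — and the reason these cases are left as exceptions rather than resolved — is that even when the \emph{final numerical answer} for such a graph turns out to lie in $\MZV$, the intermediate regularized limits genuinely require the alphabet $\set{0,\pm 1}$, so one cannot a priori rule out alternating sums without either a more refined choice of variables (as in remark~\ref{remark:last-integrations-3pt}) or an explicit reduction of the resulting $\MZV[2]$-expression to the data-mine basis \cite{BluemleinBroadhurstVermaseren:Datamine}, which is exactly the phenomenon flagged in the remark of section~\ref{sec:P79}.
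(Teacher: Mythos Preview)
Your proposal is correct and matches the paper's approach essentially verbatim: reduce to $3$-connected vacuum graphs via glueing and one-scale moves, invoke theorem~\ref{theorem:vw3-Brown} for the vertex-width-three cases $Y_3$, $\WS{4}$, ${_5P_1},\ldots,{_5P_6}$, and for $K_{3,3}$, ${_5P_7}$ and ${_5N_1},\ldots,{_5N_4}$ verify linear reducibility by an explicit polynomial reduction (with {\HyperInt}), noting that the mixed signs in those reductions are what leaves open the possibility of alternating sums. One small wording quibble: theorem~\ref{theorem:split-minors} is Crump's \emph{theorem}, not a conjecture, and it is not logically equivalent to theorem~\ref{theorem:vw3-Brown}; but you do not actually need it here, since the vertex-width-three criterion alone suffices.
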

Complete results for all $3$-loop graphs to order $\varepsilon^4$ are available digitally \cite{Panzer:MasslessPropagatorsData} and discussed in great detail in \cite{Panzer:MasslessPropagators}, including several $4$-loop propagators as well. Further data can be computed anytime with {\HyperInt}.

For illustration we still give two examples. If we expand near unit indices $\EP_e = 1 + \varepsilon \EPE_e$, the first terms of $N$ read
\begin{equation}\begin{split}
	\frac{\FR(N)}{\Gscheme^3(1-2\varepsilon)^2}
	&= 20 \mzv{5}
+\varepsilon \left\{
	\tfrac{80}{7} \mzv[3]{2}
	+ \mzv[2]{3} \left( 68+6 p_1 \right)
\right\}
\\&\quad
+\varepsilon^2 \left\{
	\tfrac{6}{5} \mzv[2]{2} \mzv{3} \left( 68+6 p_1 \right)
	+ \mzv{7} \left( 450 - 14 p_2 \right)
\right\}
+\bigo{\varepsilon^3}

	\label{eq:N-full}%
\end{split}\end{equation}
where the dependence on the indices is encoded into the polynomials
\begin{equation}\begin{split}
	p_1 &= 2\,\EPE_{1346}+3\,\EPE_{2578}
,\\
	p_2 &= 2 \left(
	\EPE_{2578}\EPE_{1346}
	+ \EPE_{3}\EPE_{457}
	+ \EPE_{4}\EPE_{238}
	+ \EPE_{6}\EPE_{127}
	+ \EPE_{1}\EPE_{568}
\right)
+3 (\EPE_{25}\EPE_{78}+\EPE_{16}\EPE_{34})
\\&\quad
+4 (\EPE_{7}\EPE_{8} + \EPE_{2}\EPE_{5})
+ ( 1-p_1 ) \EPE_{12345678}
-p_1 ( p_1 + 90 )/16 

	\label{eq:N-polynomials}%
\end{split}\end{equation}
with the abbreviation $\EPE_{1346} = \EPE_1 + \EPE_3 + \EPE_4 + \EPE_6$ and so on. The edge labellings are as shown in figure~\ref{fig:propagators-labelled} and setting $\EPE_e = 0$ for all edges reproduce \eqref{eq:N-expansion}, for brevity we do not give the (long) expressions for the coefficients of $\varepsilon^3$ and $\varepsilon^4$ here. A four-loop example with infrared subdivergences is (a cut of $\CloseProp{M}_{5,1} = {_5 N_2}$)
\begin{align}
	&\frac{\FR(M_{5,1})\cdot(1+\varepsilon[3+\EPE_{345678910}])(4+\EPE_{345678910})}{\Gscheme^4 (1-2\varepsilon)^3}
	=	-20 \mzv{5}\varepsilon^{-1}
-\tfrac{80}{7}\mzv[3]{2}
-\mzv[2]{3} \left( 68+6\,p_1 \right)
\nonumber\\&\quad
-\varepsilon \left\{
		\tfrac{1}{5} \mzv[2]{2}\mzv{3} \left( 408+36\,p_1 \right)
		+\mzv{7} \left( 170 - 7\,p_2 \right) 
\right\} 
+\bigo{\varepsilon^2}
,
	\label{eq:M51-full}%
\end{align}
where the polynomials $p_1,p_2 \in \Q[\EPE_1,\ldots,\EPE_{10}]$ are given by
\begin{equation}\begin{split}
	p_1 &=	2\,\EPE_{36810}+3\,\EPE_{4579}
 \qquad\text{and} \\
	p_2 &= 2 \left( \EPE_{8}-\EPE_{10} \right)  \left( \EPE_{4510}-\EPE_{789} \right)
+2 \left( \EPE_{3}-\EPE_{6} \right)  \left( \EPE_{567}-\EPE_{349} \right)
-\tfrac{5}{2} \EPE_{36810}
-\tfrac{55}{4} \EPE_{4579}
\\&\quad
+8 \EPE_{12}
-\tfrac{1}{8} p_1^2
+2 (\EPE_{12}\EPE_{345678910}
		+\EPE_{36}\EPE_{810}
		-\EPE_{47}\EPE_{59} )
-4 \left( \EPE_{4}^{2}
		+ \EPE_{5}^{2}
		+ \EPE_{7}^{2}
		+ \EPE_{9}^{2} \right)
.\hspace{-4mm}
	\label{eq:M51-polynomials}%
\end{split}\end{equation}
\begin{remark}
	In \cite{Panzer:MasslessPropagators} we were not yet aware of the general method of section~\ref{sec:anareg} for analytic regularization. Instead, we computed the periods of subdivergent graphs like $M_{5,1}$ with the help of carefully constructed counterterms.
\end{remark}
\begin{figure}
	\centering
	$N = \Graph[0.4]{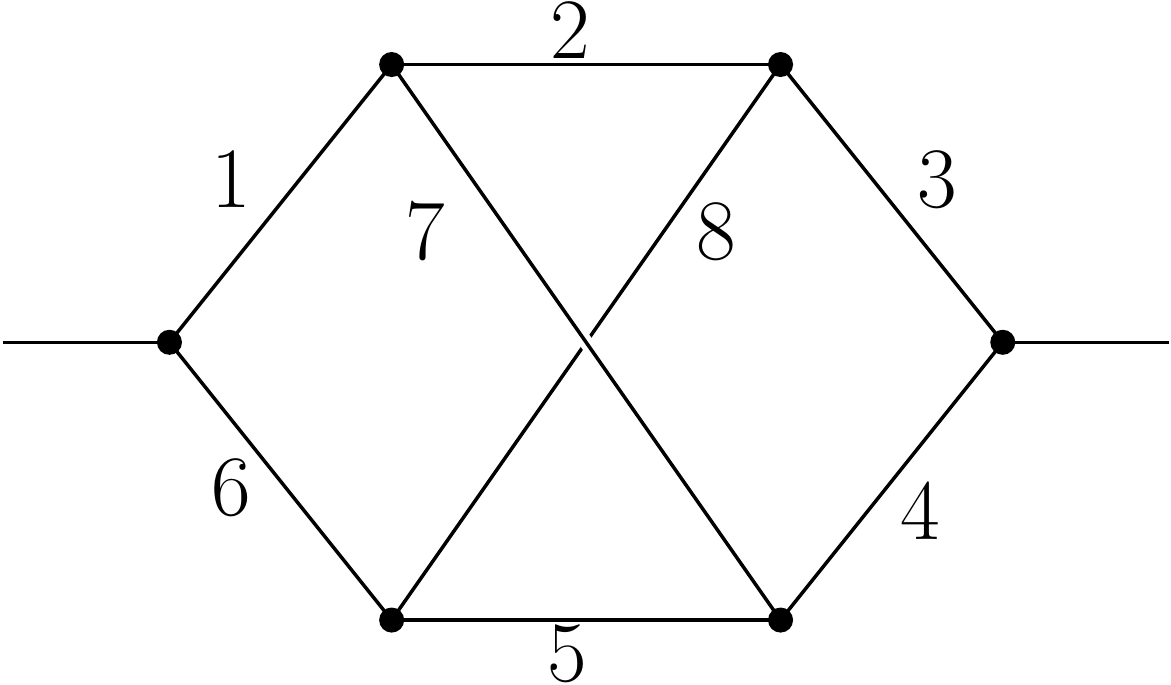}
		\qquad
	M_{5,1} = \Graph[0.4]{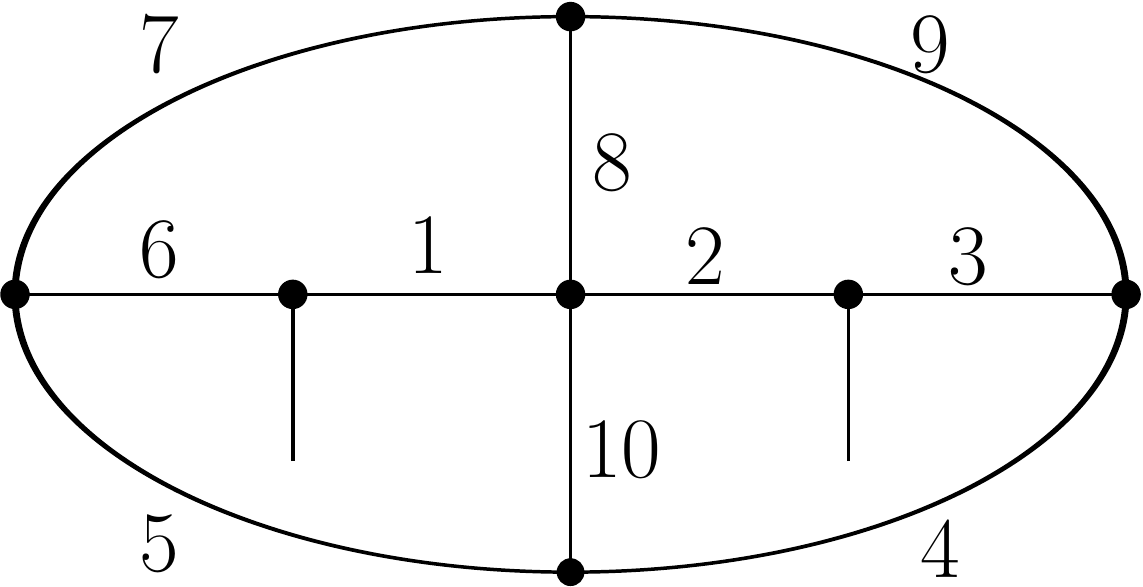}
	$
	\caption{The non-planar $3$-loop and a subdivergent $4$-loop propagator.}%
	\label{fig:propagators-labelled}%
\end{figure}

\subsection{Open questions}
\label{sec:propagator-questions}%
\subsubsection{Ramification of $4$-loop propagators}
The graphs $\ZZ{5} = {_5 P_3}$ and ${_5 N_1}$ are primitive $\phi^4$ vertices and were considered in \cite{Brown:TwoPoint} already, where it was proven that the expansion of $\ZZ{5}$ involves only multiple zeta values. The result for ${_5 N_1}$ assured linear reducibility, but the appearance of multiple polylogarithms at sixth roots of unity in its $\varepsilon$-expansion could not be excluded.
Such numbers had hitherto not appeared in massless $\phi^4$ periods and a numerical search for non-MZV in $\varepsilon$-expansions was undertaken in \cite{LeeSmirnov:FourLoopNonPlanarPropagators}. But no such number could be found, all evidence suggested that indeed multiple zeta values should suffice.

With compatibility graphs for polynomial reduction, we saw for the graphs above that indeed sixth roots of unity are spurious. But still we are left with the puzzle of the absence of alternating sums in the non-planar graphs and the cube $_5 P_7$: From the polynomial reduction we expect alternating sums, but all known results of their expansions are given by multiple zeta values.

The conceptual difference lies in the fact that we know that the polynomials involving negative signs (responsible for alternating sums in the result) are not spurious as in the previous case (sixth roots of unity in $_5 N_1$), because in the last integration step we do obtain harmonic polylogarithms using all three letters $\set{-1,0,1}$ (see also section~\ref{sec:P79}). Only when we reduce the total result as an alternating sum in a basis we observe the subtle cancellations needed to obtain just multiple zeta values.

Given that even the sixth roots of unity finally entered the scene (section~\ref{sec:P711}) of massless $\phi^4$ theory, we expect the same for alternating sums.\footnote{Very recently, Oliver Schnetz indeed identified alternating sums (that are not in $\MZV$) in periods of $8$-loop massless propagators.}
But for a particular graph, the most striking example being $N$ (with $3$ loops), we would like to understand if alternating sums can ever appear in its $\varepsilon$-expansion or if they always lie in the subspace of multiple zeta values.

A similar question concerns not the $\varepsilon$-expansion, but the periods for different integer powers $\EP_e\in \Z$. These correspond to convergent integrals $\int F\ \Omega$ for more general integrands $F \in \Q[\psipol_{\CloseProp{G}}^{-1},\SP_e\colon e \in \edges]$. Computations for $\CloseProp{G} = \WS{3}$ and $\CloseProp{G} = \WS{4}$ have shown that all such periods lie in $\Q \oplus \Q\mzv{3}$ and $\Q +\Q\mzv{3} + \Q \mzv{5}$, respectively. Remarkably, no even zeta values appeared so far.

One step to address this question is to generate a finite number of integrands whose expansions generate all periods under consideration, such that only these need to be computed. This work is in progress.

\subsubsection{Reducibility of $5$- and $6$-loop propagators}
Also already in \cite{Brown:TwoPoint}, the six-loop vertex graphs (carrying the periods of $5$-loop massless propagators) of $\phi^4$ theory had been analyzed with a similar separation into graphs with expansions provably in $\MZV$ and others known to be contained at least in $\MZV[6]$. It seems likely that these could be further constrained to alternating sums $\MZV[2]$ with explicit computation of the polynomial reduction.

However, more is needed for statements about a quantum field theory. Not only $\phi^4$ graphs occur, but also $3$-regular graphs which are ubiquitous in QED and part of QCD. Because linear reducibility is a minor closed property of graphs \cite{Brown:PeriodsFeynmanIntegrals,Lueders:LinearReduction}, it even suffices to only investigate the $3$-regular graphs at the loop order under consideration. It would be very interesting to carry out this analysis for $6$ loops, which is possible with the program we developed. Given that $3$-regular graphs with $6$ loops have $15$ edges and we successfully calculated $\phi^4$ periods at seven loops (where the graphs have $14$ edges), it even seems possible to practically compute such $5$-loop massless propagators with {\HyperInt}.

Even though the primitive periods are now known up to $7$ loops in $\phi^4$ theory, so far no $3$-regular graph at this loop order has been studied with respect to linear reducibility. It is therefore too early to speculate on all massless propagators with $6$ loops, but it is clear that a huge number of them can be computed (in the $\varepsilon$-expansion) with hyperlogarithms.

\section{Renormalized subdivergences}
\label{sec:ex-renormalized-parametric}%
Instead of calculating regularized integrals in the $\varepsilon$-expansion, physically meaningful (finite) quantities defined by renormalization can be computed directly. The forest formula \eqref{eq:feynman-renormalized-projective} provides a convergent integral representation (in the absence of infrared divergences) which can be computed without the need of any regularization.

We want to take advantage of this feature and evaluate renormalized integrals with subdivergences directly, using our tools for hyperlogarithms. We give two examples to show the feasibility of this approach:
\begin{itemize}
	\item One-scale graphs with many disjoint subdivergences.
	\item Cocommutative graphs.
\end{itemize}
In the first case we reproduce a known result, but without the use of dimensional regularization. Instead we give a complete and explicit calculation of the renormalized integrals in the parametric integration, using classical polylogarithms only.

The second subsection on cocommutative graphs shows some new results and is of particular interest because such graphs contribute renormalization point independent periods.

\subsection{Bubble chains}\label{sec:bubble-chains}
\begin{figure}\centering
	$
		\BBr{n}{m}
		\defas
		\Graph[0.4]{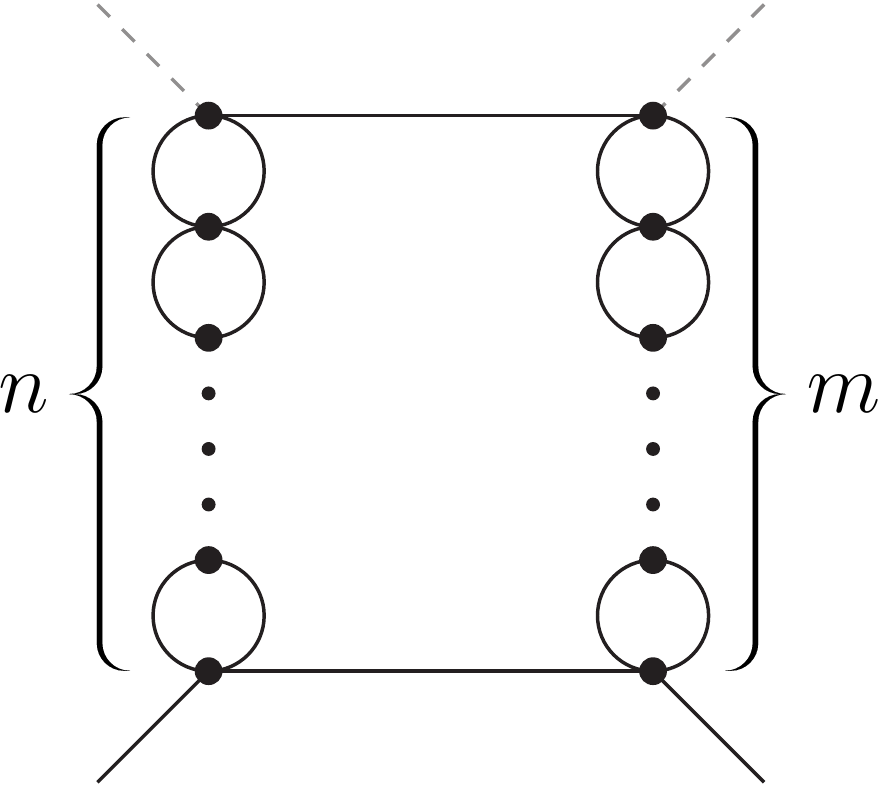}
		\qquad
		\BBt{n}{m}
		\defas
		\Graph[0.4]{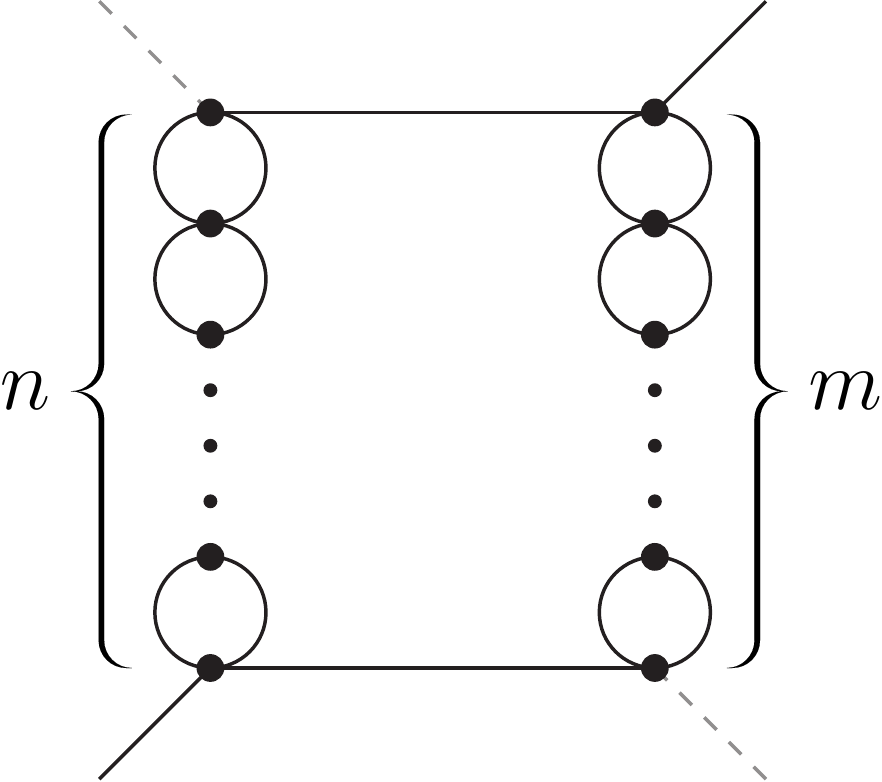}
		$\vspace{-2mm}
	\caption[Two series $\BBr{n}{m}$ and $\BBt{n}{m}$ of one-scale graphs with subdivergences]{Two series of one-scale graphs with subdivergences in four dimensions. These graphs arise as vertex graphs in $\phi^4$-theory upon nullification of two external momenta (dashed), incident to the two three-valent vertices.}%
	\label{fig:bubble-graphs-def}%
\end{figure}%
Figure \ref{fig:bubble-graphs-def} defines two families $\BBr{n}{m}$ and $\BBt{n}{m}$ ($n,m \in \N_0$) of massless, one-scale $\fieldphi^4$ vertex graphs in $\phi^4$-theory. These are logarithmically divergent in $\dimension=4$ dimensions, but they contain a series of bubbles $\gamma_i \isomorph \smallbubble$ as subdivergences. We denote $\gamma_I \defas \prod_{i \in I} \gamma_i$ for the (edge-disjoint) union of subdivergences indexed by a set $I$. The coproduct is
\begin{equation}
	\cop \BBr{n}{m}
	= \BBr{n}{m} \tp \1
	+ \sum_{I \subseteq [n]} \sum_{J \subseteq [m]} \gamma_I \gamma_J \tp \BBr{n}{m}/\left( \gamma_I \gamma_J \right)
	\label{eq:bubble-graphs-coprod}%
\end{equation}
where $[n] \defas \set{1,\ldots,n}$, we index bubbles in the left row with $I$ and on the right with $J$. Note that $\BBr{n}{m}/\left( \gamma_I \gamma_J \right) \isomorph \BBr{n-\abs{I}}{m-\abs{J}}$. The same formulas hold for $\BBt{n}{m}$ as well since these two families of graphs differ only by the choice of which of the four external momenta are nullified. One checks that $\BBr{n}{m}$ and $\BBr{n+m}{0} = \BBt{n+m}{0} = \BBt{0}{n+m}$ define identical Feynman integrals, so it suffices to compute $\BBt{n}{m}$.

Since $\BBr{n}{m}$ is not cocommutative for $n+m>1$, the associated renormalized Feynman rules depend on the renormalization scheme. To point this out we will rather think of $\BBr{n}{m}$ and $\BBt{n}{m}$ as the same graph, but with different renormalization schemes applied to them. The computation of their periods is elementary in dimensional regularization.
\begin{lemma}
	\label{lemma:bubble-graphs-transcendental-period}%
	The periods of $\BBt{n}{m}$ are given by the exponential generating function
	\begin{equation}
		\sum_{n,m\geq 0} \frac{x^n y^m}{n! m!} \period\left( \BBt{n}{m} \right)
		=	\frac{\exp \left\{
					-2 \sum_{r\geq 1} \frac{\mzv{2r+1}}{2r+1}
					\Big[ (x+y)^{2r+1} - x^{2r+1} - y^{2r+1} \Big]
				\right\}
				}{
					1-x-y
				}
		.%
		\label{eq:bubble-graphs-transcendental-period}%
	\end{equation}%
\end{lemma}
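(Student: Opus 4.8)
The plan is to compute the period of the single-scale graph $\BBt{n}{m}$ by exploiting the recursive structure of its subdivergences and the factorization of its graph polynomial into products of one-loop bubble polynomials. The family $\BBt{n}{m}$ is obtained by chaining $n+m$ bubbles in series, so I expect the first Symanzik polynomial to factorize as a product over the individual bubbles (each bubble $\gamma_i$ contributing a factor $\SP_{a_i}+\SP_{b_i}$), up to the overall connecting structure. The renormalized integrand is given by the forest formula \eqref{eq:feynman-renormalized-projective}, and since each bubble is itself logarithmically divergent with $\psipol_{\gamma_i} = \SP_{a_i}+\SP_{b_i}$, the forest sum over subsets $I \subseteq [n]$, $J\subseteq[m]$ factorizes into a product of independent contributions, one per bubble. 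First I would write down $\psipol_{\BBt{n}{m}}$ explicitly and verify this factorization, then insert it into the period formula \eqref{eq:period-projective} (or directly into \eqref{eq:feynman-renormalized-projective} at $\sdd=0$).

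The key computational step is to evaluate a single renormalized bubble chain integral. Using the parametric forest formula, each bubble contributes a factor which, after the projective rescaling that turns $\int_0^\infty \frac{\dd\Pscale}{\Pscale}[e^{-\Pscale A}-e^{-\Pscale B}]=-\ln(A/B)$, becomes a logarithm of a ratio of the chain's sub-polynomials. I would then perform the parametric integrations edge by edge with hyperlogarithms: because the bubble chain polynomials are linear in each Schwinger variable and factor into sums of the form $\SP_{a_i}+\SP_{b_i}+\cdots$, the integrand is manifestly linearly reducible (the relevant letters are $0$ and $-1$ after suitable rescalings, as in example~\ref{ex:moduli-space-mzv}), so all integrals evaluate to multiple zeta values. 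In fact, since each bubble in isolation integrates to a $\Gamma$-function quotient whose $\varepsilon$-expansion produces $\exp\{-2\sum_{r\geq 1}\frac{\mzv{2r+1}}{2r+1}t^{2r+1}\}$-type series (the standard expansion of $\Gamma(1-t)\Gamma(1+t)$-factors appearing in the one-loop master \eqref{eq:oneloop-master}), I would extract the generating function directly from the known expansion of ratios of $\Gamma$-functions rather than integrating term by term.

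The exponential generating function structure in \eqref{eq:bubble-graphs-transcendental-period} strongly suggests that the cleanest route is to recognize the unrenormalized chain as a product of one-loop master integrals \eqref{eq:oneloop-master} with shifted indices, take its $\varepsilon$-expansion, and then apply the renormalization (counterterm subtraction) combinatorially via \eqref{eq:counterterm}. Concretely, each added bubble shifts the index of its base propagator by $\varepsilon$, producing a factor whose logarithm contributes $-2\sum_{r}\frac{\mzv{2r+1}}{2r+1}(\cdots)$ to the exponent; the subtraction of subdivergences in the chosen scheme $\BBt{n}{m}$ precisely removes the $x^{2r+1}+y^{2r+1}$ pieces, leaving the combination $(x+y)^{2r+1}-x^{2r+1}-y^{2r+1}$ visible in \eqref{eq:bubble-graphs-transcendental-period}, while the denominator $1-x-y$ records the overall logarithmic divergence of the full chain of length $n+m$. \textbf{The main obstacle} I anticipate is bookkeeping the forest-formula subtractions correctly so that the generating function's exponent assembles into exactly the stated antisymmetric combination: one must track how the counterterms for nested and disjoint bubble subdivergences interact, and verify that the scheme difference between $\BBr{n}{m}$ and $\BBt{n}{m}$ manifests only through which external momenta are nullified (and hence which $\Gamma$-function factors appear). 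Checking this against the explicit low-order values computed by {\HyperInt} (as recorded among the program tests for $n+m\leq 6$) would confirm the generating function.
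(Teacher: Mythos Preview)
Your ``cleanest route'' in the third paragraph is essentially the paper's approach: evaluate the unrenormalized chain as a product of one-loop master integrals $\onemaster{1}{1}^{n+m}\onemaster{1+n\varepsilon}{1+m\varepsilon}$, apply the forest subtraction, and take $\varepsilon\to 0$. The first two paragraphs (parametric forest formula, hyperlogarithm integration edge by edge) are a detour---the paper treats the parametric computation separately in later lemmas precisely because it is harder; for this lemma dimensional regularization is the right tool.

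There is one genuine gap in your mechanism. You say the subtraction ``precisely removes the $x^{2r+1}+y^{2r+1}$ pieces'', but that is not how the combination $(x+y)^{2r+1}-x^{2r+1}-y^{2r+1}$ arises. The renormalization sum over $I\subseteq[n]$, $J\subseteq[m]$ is an alternating binomial sum $\sum_{i,j}\binom{n}{i}\binom{m}{j}(-1)^{n+m-i-j}$ applied to a power series $f(i\varepsilon,j\varepsilon,\varepsilon)$; the key identity $\sum_{i=0}^n\binom{n}{i}(-1)^{n+i}i^k=n!\,\delta_{k,n}$ (for $k\leq n$) projects this onto the single coefficient $a_{n,m,0}$ of $x^ny^m$ in $f(x,y,0)$. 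The exponential with $(x+y)^{2r+1}-x^{2r+1}-y^{2r+1}$ then comes purely from the $\Gamma$-function identity
\[
\frac{\Gamma(1-x)\Gamma(1-y)\Gamma(1+x+y)}{\Gamma(1+x)\Gamma(1+y)\Gamma(1-x-y)}
=\exp\Big\{-2\sum_{r\geq1}\tfrac{\mzv{2r+1}}{2r+1}\big[(x+y)^{2r+1}-x^{2r+1}-y^{2r+1}\big]\Big\},
\]
not from the subtraction selectively cancelling terms. Also note the subdivergences here are all \emph{disjoint} bubbles, never nested, so the forest combinatorics is simpler than you anticipate: the ``main obstacle'' you flag is really just recognizing this binomial projection identity.
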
%
\begin{proof}
	In $\dimension=4-2\varepsilon$ dimensions, repeated application of the one-loop master formula \eqref{eq:oneloop-master} evaluates the unrenormalized Feynman rules to
	\begin{equation*}
		\FR \left( \BBt{n}{m} \right)
		= \left[ \onemaster{1}{1}\right]^{n+m} \FR \left( \Graph[0.2]{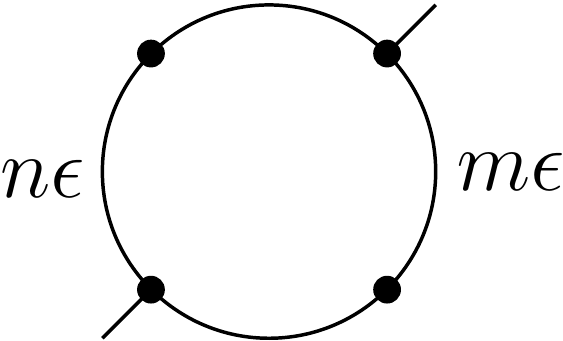} \right)
		= q^{-2(n+m+1)\varepsilon} \left[ \onemaster{1}{1}\right]^{n+m} \onemaster{1+n\varepsilon}{1+m\varepsilon},
	\end{equation*}
	in terms of the external momentum $q$. Now we renormalize by subtraction at $s\defas q^2 \mapsto 1$, so the counterterm of any bubble is just
	$
		\FR_- ( \gamma_i )
		= -\restrict{\FR}{s=1} ( \gamma_i)
		= -\onemaster{1}{1}
	$ and the multiplicativity of $\FR_-$ gives
	$
		\FR_-(\gamma_I \gamma_J)
		= [-\onemaster{1}{1} ]^{\abs{I} + \abs{J}}
	$. With the coproduct \eqref{eq:bubble-graphs-coprod}, the period \eqref{eq:def-period} becomes (in $\dimension = 4$)
	\begin{align*}
		\period\left( \BBt{n}{m} \right)
		&= \lim_{\epsilon\rightarrow 0}
		\big[ \onemaster{1}{1} \big]^{n+m}
		\sum_{I \subseteq [n]} \sum_{J \subseteq [m]}\!\!
		(-1)^{\abs{I} + \abs{J}}
		\cdot \varepsilon(1+\abs{I^c}+\abs{J^c})
		\onemaster{1+\abs{I^c}\varepsilon}{1+\abs{J^c}\varepsilon} \\
		&= \lim_{\varepsilon\rightarrow 0} \varepsilon^{-n-m}
		\sum_{i=0}^n \binom{n}{i} (-1)^{n+i} \sum_{j=0}^m  \binom{m}{j} (-1)^{m+j}
		\cdot
		f(\varepsilon i, \varepsilon j, \varepsilon),\tag{$\ast$}
	\end{align*}
	where we exploited $\lim_{\varepsilon\rightarrow 0} \left[ \varepsilon \onemaster{1}{1} \right] = 1$ and introduced the power series
	\begin{equation*}
		f(x,y,\varepsilon)
		\defas
		\frac{\Gamma(1-x-\varepsilon)\Gamma(1-y-\varepsilon)\Gamma(1+x+y+\varepsilon)}{\Gamma(1+x)\Gamma(1+y)\Gamma(2-x-y-2\varepsilon)}
		= \sum_{\mathclap{\nu,\mu,k\geq 0}} a_{\nu,\mu,k}\, x^{\nu} y^{\mu} \varepsilon^k
		\in \R [[x,y,\varepsilon]].
	\end{equation*}
	The sums over $i$ and $j$ in $(\ast)$ annihilate any term with $\nu<n$ or $\mu<m$ because
	\begin{equation*}
		\sum_{i=0}^{n} \binom{n}{i} (-1)^{n+i} \cdot i^k 
		= \begin{cases}
			0 &\text{whenever $k< n$ and} \\
			n! &\text{for $k=n$.}\\
		\end{cases}
	\end{equation*}
	But when $\mu+\nu+k > m+n$, then
	$
		\lim_{\varepsilon\rightarrow 0} \varepsilon^{-n-m} \cdot (i\varepsilon)^{\nu} (j\varepsilon)^{\mu} \varepsilon^k
		= 0
	$ vanishes as well, so the only contribution to $(\ast)$ left over is
	\begin{equation*}
		\period \left( \BBt{n}{m} \right)
		=
		a_{n,m,0} 
		\sum_{i=0}^{n} \binom{n}{i} (-1)^{n+i} \cdot i^{n}
		\sum_{j=0}^{m} \binom{m}{j} (-1)^{m+j} \cdot j^{m}
		= n! m! a_{n,m,0}.
	\end{equation*}
	To finish, expand $\Gamma(1-x) = \exp\big[ \gamma x + \sum_{n\geq 2} \mzv{n} x^n / n \big]$ in $f(x,y,0)$.
\end{proof}
The two different renormalization schemes give very different periods indeed: All
\begin{equation}
	\period\left( \BBr{n}{m} \right)
	= \period\left( \BBr{n+m}{0} \right)
	= \period\left( \BBt{n+m}{0} \right)
	= \restrict{\partial_x^{n+m} (1-x)^{-1}}{x=0}
	=	(n+m)!
	\label{eq:bubble-graphs-rational-period}%
\end{equation}
are integers, while the periods of $\BBt{n}{m}$ involve Riemann zeta values.
\begin{example}
	$\period( \BBt{1}{1} ) = 2$ is still rational, but for all other $n,m\geq 1$ we find zeta values like in
	$		\period( \BBt{1}{2} ) = 6 - 4\mzv{3}
	$. The values for $n+m \leq 6$ are:
	\begin{align*}
		\period( \BBt{1}{3}) &= 24 - 12\mzv{3}
		&
		\period( \BBt{1}{4}) &= 120 - 48\left( \mzv{3} +\mzv{5}\right)
		&
		\period( \BBt{1}{5})& = 720 -240\left(\mzv{3} + \mzv{5} \right)
		\\
		\period( \BBt{2}{2}) &= 24 - 16\mzv{3}
		&
		\period( \BBt{2}{3}) &= 120 - 72\mzv{3} - 48\mzv{5}
		&
		\period( \BBt{2}{4})& = 720 - 384\mzv{3} - 288\mzv{5} + 96\mzv[2]{3}
		\\
		& & & &
		\period( \BBt{3}{3} )& = 720 - 432\mzv{3} - 288\mzv{5} + 144\mzv[2]{3}
	\end{align*}
\end{example}
We do not want to discuss these particular numbers any further, but only remark that $\period(\BBt{n}{m})$ only contains products of at most $\min\set{n,m}$ zeta values and has integer coefficients.
\begin{lemma}
	The periods
	$
		\period\big( \BBt{n}{m} \big)
		\in \Z\left[\setexp{2(2r)!\mzv{2r+1}}{r\in \N}\right]
	$ are integer combinations of odd zeta values of weight at most $\leq n+m$.
\end{lemma}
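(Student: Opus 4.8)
The plan is to extract the claimed integrality and weight bound directly from the generating function \eqref{eq:bubble-graphs-transcendental-period} established in lemma~\ref{lemma:bubble-graphs-transcendental-period}. First I would observe that the exponent in that generating function is a power series in $x$ and $y$ whose coefficients are built from $\mzv{2r+1}$, so the only zeta values that can ever appear in any $\period\big(\BBt{n}{m}\big)$ are the odd ones $\mzv{3},\mzv{5},\ldots$; even zeta values are absent precisely because the bracket $(x+y)^{2r+1}-x^{2r+1}-y^{2r+1}$ picks out odd exponents only. This already settles the ``odd zeta values'' part of the claim.

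For the weight bound, I would argue that each factor of a $\mzv{2r+1}$ contributes weight $2r+1\geq 3$ and comes multiplied, in the exponent, by a homogeneous polynomial of degree $2r+1$ in $(x,y)$. Since $\period\big(\BBt{n}{m}\big)$ is $n!\,m!$ times the coefficient of $x^n y^m$ (as in the proof of lemma~\ref{lemma:bubble-graphs-transcendental-period}), any monomial in zeta values appearing must arise from a product $\prod_i \mzv{2r_i+1}$ whose total $(x,y)$-degree is at most $n+m$ (the prefactor $1/(1-x-y)$ only contributes nonnegative powers). Matching degrees forces $\sum_i (2r_i+1) \leq n+m$, which is exactly the total weight, so each product of zeta values has weight at most $n+m$.

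The integrality is the step I expect to require the most care. I would make precise the claim that the coefficients lie in $\Z\left[\setexp{2(2r)!\mzv{2r+1}}{r\in\N}\right]$ by rescaling: write the exponent of \eqref{eq:bubble-graphs-transcendental-period} in terms of the normalized generators $\zeta_r^{*} \defas 2(2r)!\mzv{2r+1}$, so that the coefficient $-2\mzv{2r+1}/(2r+1)$ becomes $-\zeta_r^{*}/(2r+1)!$. Then expanding $\exp$ and multiplying by $n!\,m!$ (from the definition of the period), I would track the factorials: the denominators $(2r+1)!$ and the multinomial denominators from $\exp$ must be absorbed by the $n!\,m!$ prefactor together with the binomial-type combinatorial factors coming from expanding the powers of $(x+y)$. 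The main obstacle is showing that these cancellations always leave an integer, i.e.\ that for every admissible partition of $n+m$ into odd parts the resulting rational coefficient is an integer.

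I would handle this obstacle by a careful bookkeeping argument rather than a slick one. Concretely, the coefficient of $x^n y^m$ in $\exp\big(\sum_r c_r[(x+y)^{2r+1}-x^{2r+1}-y^{2r+1}]\big)/(1-x-y)$, after multiplication by $n!\,m!$, is a sum over choices of how many bubbles on each side are ``consumed'' by each zeta factor; the factorials $n!,m!$ supply exactly the falling-factorial normalizations needed so that each term is $\binom{n}{\cdots}\binom{m}{\cdots}$ times an integer multiple of the $\zeta_r^{*}$. I would verify the base cases against the explicit table already computed in the excerpt (e.g.\ $\period(\BBt{1}{3})=24-12\mzv{3}$ has $12=\frac{1}{2}\cdot 4!\,\cdot\,\tfrac{1}{2}\cdot\ldots$ expressible through $\zeta_1^{*}=2\cdot 2!\,\mzv{3}=4\mzv{3}$, giving $-12\mzv{3}=-3\zeta_1^{*}$ with integer coefficient $-3$), which both confirms the normalization $2(2r)!$ and guides the general induction on $n+m$.
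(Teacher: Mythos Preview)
Your handling of the ``odd zetas only'' and weight-bound parts is fine and matches the paper: both are immediate from the shape of the exponent in \eqref{eq:bubble-graphs-transcendental-period}, since each $\mzv{2r+1}$ comes attached to a homogeneous piece of degree $2r+1$ in $(x,y)$.

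The integrality step is where your proposal has a gap. Writing the exponent as $F(x,y)=-\sum_r \zeta^*_r\sum_{i=1}^{2r} \frac{x^i y^{2r+1-i}}{i!(2r+1-i)!}$ with $\zeta^*_r=2(2r)!\,\mzv{2r+1}$, and then expanding $\exp(F)$ term by term, does \emph{not} produce integers term by term: a single ordered $k$-tuple from $F^k/k!$ contributes a rational of the form
\[
n!\,m!\cdot\frac{1}{k!}\prod_{i}\frac{1}{a_i!\,b_i!}\cdot\binom{c+d}{c}
=\binom{n}{a_1,\ldots,a_k,c}\binom{m}{b_1,\ldots,b_k,d}\cdot\frac{(c+d)!}{k!},
\]
and $(c+d)!/k!$ need not be an integer (take $c=d=0$, $k\geq 2$). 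So ``bookkeeping'' alone does not close the argument; one needs a reason why the \emph{sum} over tuples is integral. Your proposal does not supply that reason, and the vague ``induction on $n+m$'' does not indicate what the inductive step would be.

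The paper's device is exactly what is missing from your plan. One observes that $n!\,m!$ times the coefficient of $x^n y^m$ is the same as $\partial_x^n\partial_y^m$ evaluated at the origin, and then uses the operator identity $\partial_x\exp(F)=\exp(F)\,[(\partial_x F)+\partial_x]$. Iterating, $\partial_x^n\partial_y^m\exp(F)\big|_{0}$ becomes a polynomial with \emph{integer} coefficients in the numbers $\partial_x^a\partial_y^b F\big|_{0}$, and these are precisely $-2(a+b-1)!\,\mzv{a+b}=-\zeta^*_{(a+b-1)/2}$ (or zero). The factor $(1-x-y)^{-1}$ is then handled by Leibniz' rule, since its mixed derivatives at the origin are $(n+m)!\in\Z$. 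This is the Fa\`a di Bruno/Bell-polynomial mechanism in disguise; once you name it, your ``falling-factorial normalizations'' become the precise statement that Bell polynomials have integer coefficients, and the proof goes through without any case-checking or induction.
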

\begin{proof}
	Expand the binomial $(x+y)^{2r+1}$ to rewrite the exponent of \eqref{eq:bubble-graphs-transcendental-period} as
	\begin{equation*}
		F(x,y) \defas
				-2 \sum_{r\geq 1} (2r)!\mzv{2r+1}
				\sum_{i=1}^{2r}
				\frac{x^i y^{2r+1-i}}{i!(2r+1-i)!}.
	\end{equation*}
	Its derivatives $\partial_x^n \partial_y^m \restrict{F(x,y)}{x=y=0} = -2\mzv{n+m}(n+m-1)!$ are integer combinations of odd zeta values. This property is passed on to the exponential
	\begin{equation*}
		\restrict{
			\partial_x^n \partial_y^m \exp(F)
		}{x=y=0}
		= \restrict{
			\Big[ 
				( \partial_x F ) + \partial_x
			\Big]^n
			\Big[
				( \partial_y F ) + \partial_y
			\Big]^m
		}{x=y=0}
		\in \Z\left[\setexp{(2r)!\,2\mzv{2r+1}}{r \in \N} \right]
	\end{equation*}
	via the identity $\partial_x \exp(F) = \exp(F) \big[(\partial_x F) + \partial_x \big]$ of differential operators. Finally it also extends to the product with $(1-x-y)^{-1}$ by Leibniz' rule and $\partial_x^n \partial_y^m\restrict{ (1-x-y)^{-1}}{x=y=0} = (n+m)!$.
\end{proof}

\subsubsection{Parametric integration}
\label{sec:bubbles-parametric}%
Here we demonstrate how the periods $\period(\BBr{n}{m}) = (n+m)!$ may be computed with hyperlogarithms\footnote{In fact, our choice of variables allows us to employ only classical polylogarithms of a single variable.} in the parametric representation. 
Of course we already know the result and the above calculation in dimensional regularization might seem a lot simpler (in particular to a physicist familiar with dimensional regularization), but the point we want to make is that such a calculation is indeed possible without any regulator, even when many subdivergences are present. For a new result obtained this way, see section~\ref{sec:ex-cocommutative}.
\begin{lemma}
	\label{lemma:bubble-graph-rational-parametric}%
	In the parametric representation, the period of $\BBr{n}{0}$ can be reduced to a projective integral over $n$ variables $x_1,\ldots,x_n \in \R_+$ of the form
	\begin{equation}
		\period\left( \BBr{n}{0} \right)
		= \int \frac{\Omega}{x_1 \!\cdots x_n}
			\sum_{\emptyset\neq I \subseteq [n]} \!\!\! (-1)^{I} \frac{\Li_1( - z_I)}{z_I},
		\quad\text{where}\quad
		z_I \defas \frac{x_{I^c}}{x_I}.
		\label{eq:bubble-graph-rational-parametric}%
	\end{equation}
	For any subset $I\subseteq [n] \defas \set{1,\ldots,n}$ we abbreviate $x_I \defas \sum_{i\in I} x_i$ and $x_{I^c} = \sum_{i\notin I} x_i$. Note that the summand with $I=[n]$ gives $z_I=0$, its contribution is understood as $(-1)^n \lim_{z \rightarrow 0} \Li_1(-z)/z = (-1)^{n+1}$.
	Recall that $\Omega = \delta(1-\sum_{i=1}^n \lambda_i x_i) \bigwedge_{i=1}^n \dd x_i$ for arbitrary $\lambda_1,\ldots,\lambda_n \geq 0$ that do not all vanish.
\end{lemma}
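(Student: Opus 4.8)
The plan is to compute the period $\period(\BBr{n}{0})$ entirely in the parametric representation, using the renormalized forest formula \eqref{eq:feynman-renormalized-projective} (equivalently the period formula \eqref{eq:period-projective}) specialized to the graph $\BBr{n}{0}$ and the subtraction scheme at $q^2=1$. First I would set up the graph polynomials: $\BBr{n}{0}$ is a chain of $n$ bubbles $\gamma_1,\ldots,\gamma_n$ joined in series, so each bubble $\gamma_i$ carries two Schwinger parameters whose contraction-deletion behaviour is governed by remark~\ref{remark:symanzik-properties}. The combinatorics of the forests $\forests(\BBr{n}{0})$ is exactly the power set of $\set{\gamma_1,\ldots,\gamma_n}$ (every subset of disjoint bubbles is a forest, and there are no nestings), which matches the coproduct \eqref{eq:bubble-graphs-coprod} with $m=0$. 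Thus the sum $\sum_{F\in\forests}(-1)^{\abs F}$ in \eqref{eq:period-projective} becomes a sum over subsets $I\subseteq[n]$ of bubbles to be contracted, contributing the sign $(-1)^{\abs I}$.

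Next I would reduce the two internal parameters of each bubble to a single effective variable. Within each bubble the two Schwinger parameters appear homogeneously, and carrying out the inner (projective) integration over the relative parameter of every bubble produces a one-loop master-type factor; after using the Cheng-Wu freedom of section~\ref{sec:projective-integrals} to fix a convenient hyperplane, this leaves one variable $x_i\in\R_+$ per bubble. The key step is to show that, after these inner integrations and the rescaling argument behind \eqref{eq:period-primitive-projective}, the combination of first Symanzik polynomials appearing in the denominator of \eqref{eq:period-projective} collapses to the ratios $z_I=x_{I^c}/x_I$, with $x_I=\sum_{i\in I}x_i$. Concretely, contracting the bubbles in $I$ and keeping those in $I^c$ should turn the ratio $\widetilde{\phipol}_{G/F}/\psipol_{G/F}$ into something proportional to $z_I$, and the surviving one-scale integral over the overall scale parameter yields the logarithm $\Li_1(-z_I)=-\log(1+z_I)$ divided by $z_I$. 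Assembling the signs and the Jacobian $1/(x_1\cdots x_n)$ from the change of variables then gives precisely \eqref{eq:bubble-graph-rational-parametric}.

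The final point to verify is the boundary term: when $I=[n]$ all bubbles are contracted, $x_{I^c}=0$, hence $z_{[n]}=0$, and the summand must be interpreted as the limit $(-1)^n\lim_{z\to0}\Li_1(-z)/z=(-1)^{n+1}$, which I would obtain from the series $\Li_1(-z)=-z+\bigo{z^2}$ of \eqref{eq:def-Mpl}; this corresponds to the fully contracted cograph, which contributes the subtraction that renders the whole integral finite. I expect the main obstacle to be the bookkeeping of exactly which power of $\psipol$ and $\widetilde{\phipol}$ survives for each forest $F=\gamma_I$ after the per-bubble integrations—that is, confirming that the degree counting of remark~\ref{remark:symanzik-properties} and the homogeneity \eqref{eq:phipsi-degrees} conspire so that every forest contributes with the same simple structure $\Li_1(-z_I)/z_I$ and no leftover powers of the scale. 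Verifying that the signs $(-1)^{\abs I}$, the measure $\Omega$, and the $1/(x_1\cdots x_n)$ prefactor all emerge consistently from a single coherent change of variables (rather than needing an ad hoc normalization for each term) is where the care lies; once that is established, the identity \eqref{eq:bubble-graph-rational-parametric} follows directly, and the subsequent lemma would evaluate the integral to $(n+m)!$ via the auxiliary identity quoted in lemma~\ref{lemma:bubble-integral-helper-function}.
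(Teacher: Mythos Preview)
Your overall strategy (forest formula $\to$ sum over $I\subseteq[n]$ with sign $(-1)^{\abs I}$ $\to$ reduce each bubble to a single variable $x_i$) matches the paper, but there is a genuine gap in your setup of the graph itself. The graph $\BBr{n}{0}$ is not just ``a chain of $n$ bubbles joined in series'': the bubbles sit on one side of an outer one-loop propagator, so there are two further Schwinger parameters $s$ and $t$ for the backbone edges (see figure~\ref{fig:bubble-subcographs}). After the bubble reduction $x_i=y_iy_i'/(y_i+y_i')$ the quotient graph $G_{I^c}$ has $\psipol_{G_{I^c}}=s+t+x_{I^c}$ and $\phipol_{G_{I^c}}=t(s+x_{I^c})$, which still depend on $s$ and $t$. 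In particular your claim that $\widetilde\phipol_{G/F}/\psipol_{G/F}$ is ``something proportional to $z_I$'' cannot hold: this ratio is $t(s+x_{I^c})/(s+t+x_{I^c})$ and does not see $x_I$ at all---the $x_I$ enters only through the subgraph terms $\phipol_{\gamma_i}/\psipol_{\gamma_i}=x_i$ in the denominator of \eqref{eq:period-projective}.

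Consequently the logarithm does not arise from ``the surviving one-scale integral over the overall scale parameter''. The paper's proof proceeds by two further \emph{elementary} integrations: first the $t$-integral of each forest term (a rational function of $t$) produces a logarithm, and then the $s$-integral (via an integration by parts) yields precisely $\Li_1(-z_I)/z_I$ with $z_I=x_{I^c}/x_I$. The Jacobian $1/(x_1\cdots x_n)$ comes cleanly from $\int\frac{\dd y_i\,\dd y_i'}{(y_i+y_i')^2}\,\delta\bigl(x_i-\tfrac{y_iy_i'}{y_i+y_i'}\bigr)=\dd x_i/x_i$, not from a rescaling argument. Once you include the $s,t$ edges and carry out these two explicit integrations, the remaining bookkeeping (signs, boundary term at $I=[n]$) is exactly as you describe.
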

\begin{figure}\centering%
$		\gamma_i = \Graph[0.5]{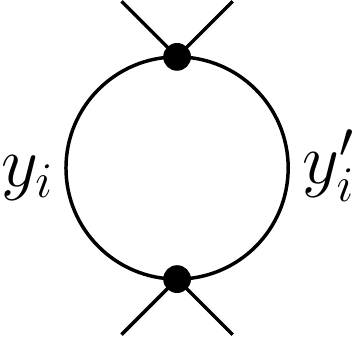}
		\qquad
		\Gamma_{\!I^c} = \Graph[0.45]{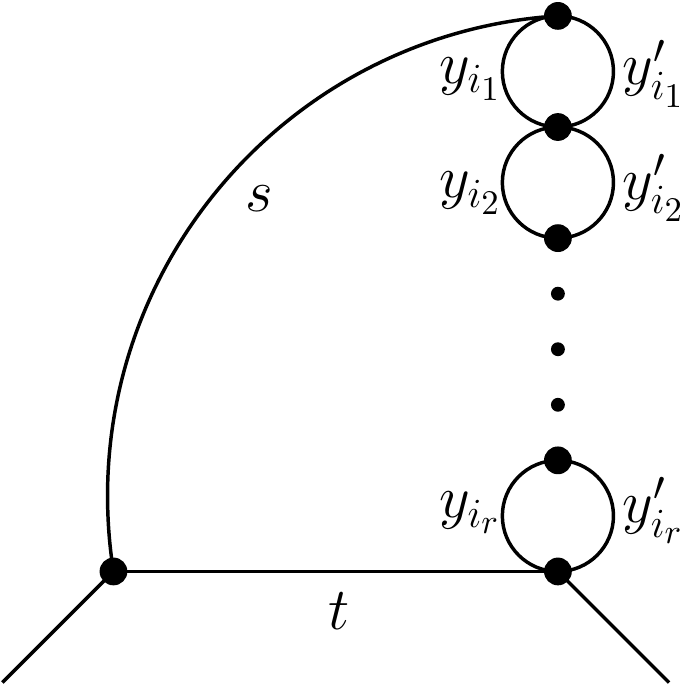}
		\qquad
		G_{\!I^c} = \Graph[0.45]{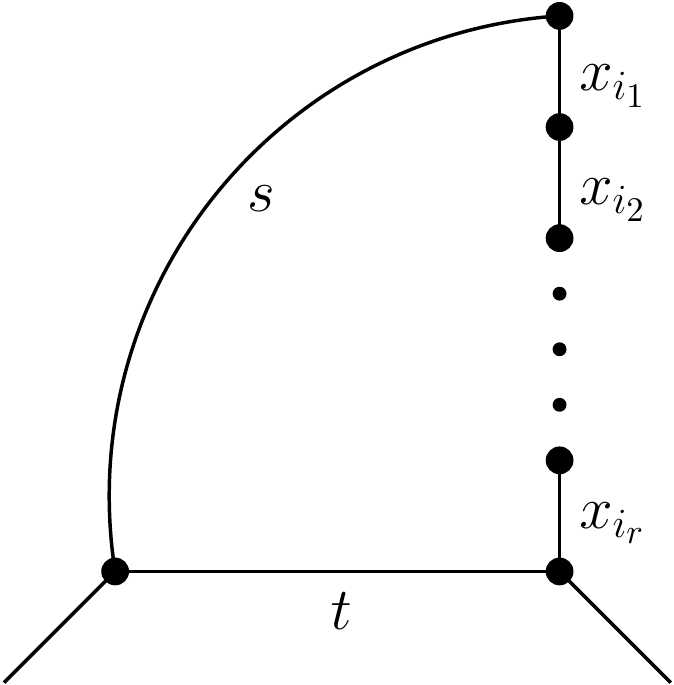}
$%
\caption[Sub- and quotient graphs of $\BBr{n}{0}$]{Subgraphs $\gamma_i$ and the quotient $\Gamma_{\!I^c} =\BBr{n}{0}/\prod_{i\in I} \gamma_i$ for $I^c = \set{i_1,\ldots,i_r}$, which becomes $G_{\!I^c}$ after reducing the parallel edge pairs $\set{y_i,y_i'}$ to a single edge.}%
	\label{fig:bubble-subcographs}%
\end{figure}%
\begin{proof}
	Let $\Gamma \defas \BBr{n}{0}$ and $\gamma_i$ denote the bubble-subgraph consisting of edges $y_i$ and $z_i$ as labelled in figure \ref{fig:bubble-subcographs}.
	The forest formula \eqref{eq:period-projective} for the period delivers
	\begin{equation*}
		\period\left( \Gamma \right)
		= \int \Omega_{\Gamma} \left\{ 
			\frac{1}{\psipol_{\Gamma}^2} + \sum_{\emptyset\neq I \subseteq [n]} (-1)^{I}
			\frac{
					\phipol_{\Gamma_{\!I^c}}/\psipol_{\Gamma_{\!I^c}}
			}{
					\psipol_{\gamma_I}^2 \psipol_{\Gamma_{\!I^c}}^2 
					\left[ 
						\phipol_{\Gamma_{\!I^c}}/\psipol_{\Gamma_{\!I^c}} 
						+ \sum_{i \in I}
							\phipol_{\gamma_i}/\psipol_{\gamma_i} 
					\right]
			}
		\right\},
	\end{equation*}
	where $\gamma_I \defas \prod_{i\in I} \gamma_i$ runs over the subdivergences and $\Gamma_{\!I^c} \defas \Gamma / \gamma_I \isomorph \BBr{n-\abs{I}}{0}$ is a shorthand for the corresponding cograph. Since a pair of parallel edges $y_i$ and $y'_i$ can not be contained in any spanning tree or forest, the graph polynomial
	\begin{equation*}
		\psipol_{\Gamma_{\!I^c}}(s,t,y,y')
		=	\psipol_{G_{\!I^c}}(s,t,x)
			\prod_{i \in I^c} (y_i+y'_i)
		\quad\text{and equally}\quad
		\phipol_{\Gamma_{\!I^c}}
		=	\phipol_{G_{\!I^c}}
			\prod_{i \in I^c} (y_i+y'_i)
	\end{equation*}
	can be expressed in terms of the graph $G_{I^c}$ of figure~\ref{fig:bubble-subcographs} where each pair $\set{y_i,y'_i}$ is replaced by a single edge, when we set $x_i = \frac{y_i y'_i}{y_i + y'_i}$. In particular,
	$	\phipol_{\Gamma_{\!I^c}}/\psipol_{\Gamma_{\!I^c}}
	= \phipol_{G_{\!I^c}}/\psipol_{G_{\!I^c}}
	$ depends only on $x$ (not individually on $y$ and $y'$) as does $\phipol_{\gamma_i} / \psipol_{\gamma_i} = x_i$ from
	\begin{equation*}
		\psipol_{\gamma_i} 
		= y_i + y'_i
		\quad\text{and}\quad
		\phipol_{\gamma_i}
		= y_i  y'_i.
	\end{equation*}
	The dependence of the integrand for $\period(\Gamma)$ above on $y$ and $y'$ is thus only through the prefactor $\prod_{i=1}^n (y_i + y_i')^{-2}$ and we can integrate them out using\footnote{In this step we choose the constraint $\delta(1-s)$ in $\Omega$ such that it does not depend of $y$ and $y'$.}
	\begin{equation*}
		\prod_{i=1}^n
		\int
		\frac{\dd y_i \,\dd y_i'}{(y_i+y_i')^2} \delta\left(x_i - \frac{y_i y_i'}{y_i+y'_i} \right)
		= \prod_{i=1}^n
			\frac{\dd x_i}{x_i}.
	\end{equation*}
	Together with $\psipol_{G_{\!I^c}} = s+t+x_{I^c}$ and $\phipol_{G_{\!I^c}} = t (s + x_{I^c})$, we have expressed $\period(\Gamma)$ as
	\begin{equation*}
		\int \!\widehat{\Omega}
			\,\Bigg\{ 
				\frac{1}{\psipol_{G_{\!\emptyset}}^2}
				+ \!\!\!\!\!\sum_{\emptyset \neq I \subseteq [n]}\!
					\frac{(-1)^I
							\phipol_{G_{\!I^c}}
					}{
						\psipol_{G_{\!I^c}}^2 \big[ 
								x_I \psipol_{G_{\!I^c}} + \phipol_{G_{\!I^c}}
						\big]
					}
			\Bigg\} \!
		= \!\!\int \!\!
			\sum_{I \subseteq [n]}
			\frac{\widehat{\Omega}\,
					(-1)^I t (s+x_{I^c})
			}{
				(s+t+x_{I^c})^2 \big[ 
					t (s+C) + x_I (s+x_{I^c})
				\big]
			}%
	\end{equation*}
	where $\widehat{\Omega} \defas \dd t \wedge \dd s \wedge \Omega/ \prod_i x_i$ and $C \defas x_{[n]}$. The integral over $t$ is elementary and gives
	\begin{equation*}
		\period(\Gamma)
		=
		\int \frac{\dd s \wedge \Omega}{x_1\!\cdots x_n}
		\Bigg\{ 
			\frac{1}{s+C}
			+ \sum_{\emptyset \neq I \subseteq [n]}
				\frac{(-1)^I}{s+x_{I^c}} \left[
					1 + \frac{x_I}{s+x_{I^c}}\log \left( \frac{x_I}{s + C} \right)
				\right]
		\Bigg\},
	\end{equation*}
	such that the integral over $s$ becomes elementary as well (with integration by parts) and proves the claim. Note that $\Li_1(-z_I)/z_I = x_I/x_{I^c} \cdot \log (x_I/C)$.
\end{proof}
\begin{lemma}
	For any $p\in\N$ and $z\in\C$ with $\Realteil(z)>-1$, the integral%
\footnote{%
The $\Li_p(z)$'s in the integrand are well-defined as the analytic continuation of \eqref{eq:def-Mpl} along the straight path from $0$ to $z$, since this never hits the singularity at $z=1$.
}%
	\begin{equation}
		f_p(z)
		\defas
		\int_{0}^{\infty}
		\left[ 
		\left( \frac{1}{x}-\frac{1}{x+z} \right) \Li_p(-x-z)
		-\frac{1}{x} \Li_p\left( -\frac{z}{x+1} \right)
		\right]
		\dd x
		\label{eq:bubble-integral-helper-function-def}
	\end{equation}
	converges absolutely and evaluates to $f_p(z) = p\Li_{p+1}(-z)$.%
	\label{lemma:bubble-integral-helper-function}%
\end{lemma}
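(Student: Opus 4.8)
The plan is to verify the claimed identity $f_p(z) = p\Li_{p+1}(-z)$ by differentiating both sides with respect to $z$ and checking that both vanish (or agree) at a convenient boundary point, most naturally $z=0$. First I would establish that $f_p(0)=0$: at $z=0$ the integrand becomes $\left(\tfrac1x - \tfrac1x\right)\Li_p(-x) - \tfrac1x \Li_p(0) = 0$ since $\Li_p(0)=0$, and $p\Li_{p+1}(0)=0$ as well, so the two sides match at the base point. This sidesteps worrying about an additive constant of integration.

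The main computation is then $\partial_z f_p(z)$. Here I would differentiate under the integral sign, which is justified because the integrand and its $z$-derivative decay quadratically in $x$ (up to the logarithmic growth of the polylogarithms, which does not spoil absolute convergence, exactly as in the proof of lemma~\ref{lemma:hlog-infinity-finite}). Using $\partial_z \Li_p(-x-z) = \tfrac{1}{x+z}\Li_{p-1}(-x-z)$ and $\partial_z \Li_p\!\left(-\tfrac{z}{x+1}\right) = \tfrac{1}{z}\Li_{p-1}\!\left(-\tfrac{z}{x+1}\right)$ from the defining differential relation \eqref{eq:Li-iterated-integral}, together with the $\partial_z$ of the rational prefactors, I would reorganize the derivative into a form that again integrates elementarily in $x$. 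The target is to show $\partial_z f_p(z) = \tfrac{p}{z}\Li_p(-z) = \partial_z\!\left[p\Li_{p+1}(-z)\right]$, again by \eqref{eq:Li-iterated-integral}.

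Concretely, I expect the cleanest route is to recognize the whole integrand as a total $x$-derivative plus a piece that reduces the weight, setting up an induction on $p$: after differentiating in $z$ one obtains an integral of the same shape but with $p$ replaced by $p-1$, i.e.\ $\partial_z f_p(z) = \tfrac{1}{z}f_{p-1}(z) + (\text{boundary/elementary terms})$, so that the inductive hypothesis $f_{p-1}(z)=(p-1)\Li_p(-z)$ feeds the recursion. The base case $p=1$ is purely a computation with logarithms and dilogarithms: here $\Li_1(-x-z)=-\log(1+x+z)$ and the integral can be evaluated directly, reducing to classical dilogarithm identities. Alternatively, the entire statement can be computed as an instance of the parametric integration algorithm of section~\ref{sec:hlog-algorithms}, since the integrand lies in $\regulars(\set{x,x+z,x+1})(x)\tp\HlogAlgebra(\set{-z,-z-x,\ldots})$, whose reduction with respect to $x$ is linear; this is in fact how the identity is listed among the program's test cases.

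The hard part will be controlling the boundary terms at $x\to 0$ and $x\to\infty$ that arise from the integration by parts implicit in evaluating $\partial_z f_p$. At $x\to 0$ the individual rational factors $\tfrac1x$ are singular, and only the specific combination in \eqref{eq:bubble-integral-helper-function-def} — where the two $\tfrac1x$ pieces cancel against each other because $\Li_p(-x-z)\to\Li_p(-z)$ and $\Li_p(-z/(x+1))\to\Li_p(-z)$ — renders the integrand integrable; I must make sure this cancellation is preserved after differentiation and after any rearrangement. At $x\to\infty$ the regularized-limit machinery of corollary~\ref{cor:reginf-from-word} guarantees finiteness, but I would need to track that no spurious $\log^k(x)$ survive, appealing to the same argument that gives \eqref{eq:hlog-finite-infinity}. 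Once these boundary contributions are shown to vanish, the induction closes and the identity follows.
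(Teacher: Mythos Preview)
Your proposal is correct and follows essentially the same route as the paper: differentiate under the integral sign in $z$, recognize that $\partial_z f_p(z) = \tfrac{1}{z}f_{p-1}(z) + \int_0^\infty \tfrac{\Li_p(-x-z)-\Li_{p-1}(-x-z)}{(x+z)^2}\,\dd x$, evaluate the remaining elementary integral by parts to get $\tfrac{1}{z}\Li_p(-z)$, and close the induction using $\lim_{z\to 0}f_p(z)=0$ to fix the constant. The paper's treatment of the boundary terms is exactly the Taylor-expansion argument at $x\to 0$ and the quadratic-decay argument at $x\to\infty$ that you anticipate.
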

\begin{proof}
	Taylor expanding $\Li_p(-x-z) = \Li_p(-z) + \frac{x}{z}\Li_{p-1}(-z) + \bigo{x^2}$ and $\Li_p\left( -\frac{z}{x+1} \right) = \Li_{p}(-z) - x\Li_{p-1}(-z) + \bigo{x^2}$ with \eqref{eq:Mpl-single-variable-differential} reveals the analyticity of the integrand at $x\rightarrow 0$. When $x\rightarrow\infty$, $\Li_p\left( -\frac{z}{x+1} \right) = \frac{z}{x^2} + \bigo{x^{-3}}$ is holomorphic and integrable. 
	Convergence of \eqref{eq:bubble-integral-helper-function-def} then follows from $\frac{1}{x}-\frac{1}{x+z} = \frac{z}{x^2} + \bigo{x^{-3}}$ since $\Li_p(-x-z)$ diverges at $x\rightarrow\infty$ only logarithmically.

	Due to absolute convergence we may interchange integration and differentiation to obtain, inductively, that
	\begin{align*}
		\partial_z f_p(z)
		&= \int_0^{\infty}  \left[ 
		\frac{\Li_{p}(-x-z)}{(x+z)^2} + \left( \frac{1}{x} - \frac{1}{x+z} \right) \frac{\Li_{p-1}(-x-z)}{x+z} - \frac{1}{xz} \Li_{p-1}\left( -\frac{z}{x+1} \right)
		\right]\dd x
		\\
		&= \frac{1}{z}f_{p-1}(z)
		+\int_0^{\infty} \frac{\Li_{p}(-x-z)-\Li_{p-1}(-x-z)}{(x+z)^2} \dd x
	 	\\
		&= \frac{1}{z}(p-1) \Li_{p}(-z) - \restrict{\frac{\Li_{p}(-x-z)}{x+z}}{x=0}^{\infty}
		= p \frac{\Li_{p}(-z)}{z}.
	\end{align*}
	Therefore $f_p(z) = p \Li_{p+1}(-z)$ with constant of integration $\lim_{z\rightarrow 0} f_p(z) = 0$, because we can take this limit on the integrand (which becomes zero).
\end{proof}
\begin{lemma}
	For any $2 \leq n \in \N$, $p \in \N$ and $x_1,\ldots,x_{n-1} > 0$, the integral
		\begin{equation}
			\int_0^{\infty}
				\frac{\dd x_n}{x_n} 
				\sum_{\emptyset \neq I \subseteq [n]} 
				\!\!\!\frac{(-1)^I\Li_p(-z_I)}{z_I}
				= \sum_{ \emptyset \neq I \subseteq [n-1]} 
					\!\!\!\frac{(-1)^I}{z_I}
					\left[ 
						p \Li_{p+1}(-z_I)
						+
						\sum_{k=1}^p \Li_k(-z_I)
					\right]
			\label{eq:bubble-lip-recursion}%
		\end{equation}
		is absolutely convergent. On the right-hand side, we have set $z_I = x_{I^c}/x_I$ for $I^c \defas [n-1] \setminus I$ (while on the left, $I^c = [n] \setminus I$).
\end{lemma}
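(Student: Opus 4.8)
The plan is to split the integral on the left-hand side of \eqref{eq:bubble-lip-recursion} according to whether the subset $I\subseteq[n]$ contains the final index $n$ or not. Writing $I'\defas I\setminus\set{n}$ when $n\in I$, each subset of $[n]$ is either a subset $I\subseteq[n-1]$ (not containing $n$) or is of the form $I'\cupdot\set{n}$ for some $I'\subseteq[n-1]$. The sign $(-1)^I$ flips between these two cases, and the ratios $z_I$ differ in how $x_n$ enters: for $I\subseteq[n-1]$ we have $z_I=(x_{I^c}+x_n)/x_I$ (since now $I^c$ includes $n$), whereas for $I'\cupdot\set{n}$ we get $z_{I'\cup\set n}=x_{{I'}^c}/(x_{I'}+x_n)$ where ${I'}^c=[n-1]\setminus I'$. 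The term $I=[n-1]$ (with $z=x_n/x_{[n-1]}$) and the term $I'=\emptyset$ (giving the contribution understood via the limit convention from lemma~\ref{lemma:bubble-graph-rational-parametric}) need a little care but pair up naturally.

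After this regrouping, for each fixed nonempty $I'\subseteq[n-1]$ I would collect the two contributions carrying opposite signs, substitute $x\defas x_n/x_{I'}$ (homogeneity lets me rescale so that everything depends only on the ratio $z_{I'}=x_{{I'}^c}/x_{I'}$), and recognize the resulting one-dimensional $x_n$-integral as precisely the structure of lemma~\ref{lemma:bubble-integral-helper-function}. Concretely, the piece from $I'\subseteq[n-1]$ contributes a factor involving $\Li_p(-z_{I'})/z_{I'}$ evaluated with the shifted argument $x_{{I'}^c}+x_n$, while the $I'\cup\set n$ piece supplies $\Li_p$ with the argument $z/(x+1)$-type shift; their combination $\left(\tfrac1x-\tfrac1{x+z}\right)\Li_p(-x-z)-\tfrac1x\Li_p\!\left(-\tfrac{z}{x+1}\right)$ is exactly $f_p(z)$, which lemma~\ref{lemma:bubble-integral-helper-function} evaluates to $p\Li_{p+1}(-z_{I'})$. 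This produces the first term $p\Li_{p+1}(-z_I)/z_I$ on the right.

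The second term, the sum $\sum_{k=1}^p\Li_k(-z_I)$, I expect to arise from the boundary/remainder pieces that do not fit the clean $f_p$ pattern — in particular from the logarithmic tails and from the terms where one of the two $\Li_p$ arguments degenerates. These should be handled by a subsidiary integration, most efficiently by differentiating under the integral sign in $z_{I'}$ as in the proof of lemma~\ref{lemma:bubble-integral-helper-function} and using the derivative rule \eqref{eq:Mpl-single-variable-differential}, $\partial_z\Li_k(-z)=\Li_{k-1}(-z)/z$, which lowers the index and generates the telescoping tower $\Li_p,\Li_{p-1},\ldots,\Li_1$. Absolute convergence throughout follows from the same local analysis as in lemma~\ref{lemma:bubble-integral-helper-function}: near $x_n\to0$ the integrand is analytic (the apparent $\dd x_n/x_n$ pole cancels because the bracketed sum over $I$ vanishes to first order in $x_n$, by the limit convention and the sign alternation), and near $x_n\to\infty$ each summand decays at least like $x_n^{-2}\log x_n$.

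The main obstacle I anticipate is bookkeeping the signs and the pairing of subsets correctly so that the shifted arguments line up exactly into the combination $f_p$ of lemma~\ref{lemma:bubble-integral-helper-function}, rather than merely something resembling it; the presence of the $I=[n]$ term (read via $\lim_{z\to0}\Li_p(-z)/z$) and the need to treat the $x_n$-independent summands separately make this a delicate but purely combinatorial matching. Once the pairing is pinned down, the analytic input is entirely supplied by the two preceding lemmata, so no genuinely new integration technique is required — the work is in verifying that every summand finds its partner and that the telescoping of lower-weight polylogarithms reproduces $\sum_{k=1}^p\Li_k(-z_I)$ with the correct coefficients.
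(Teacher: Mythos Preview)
Your overall strategy---split by whether $n\in I$, pair $I'$ with $I'\cup\{n\}$, rescale $x_n\mapsto x\cdot x_{I'}$, and invoke lemma~\ref{lemma:bubble-integral-helper-function}---is exactly the paper's. However, two points in your execution are not correct and would cause the argument to stall.

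First, the paired integrand after the substitution is \emph{not} the $f_p$ integrand. With $z=z_{I'}$ you actually get
\[
\frac{\dd x}{x}\left[\frac{\Li_p(-x-z)}{x+z}-\frac{x+1}{z}\,\Li_p\!\left(-\frac{z}{x+1}\right)\right],
\]
which equals $\tfrac{1}{z}$ times the $f_p$ integrand \emph{plus} the extra piece $-\tfrac{1}{z}\Li_p\!\left(-\tfrac{z}{x+1}\right)$. It is precisely this extra piece (not vague ``boundary/remainder'' contributions) that produces the tower $\sum_{k=1}^p\Li_k(-z_I)/z_I$, via the elementary evaluation
\[
\int_0^\infty\left[\Li_p\!\left(-\tfrac{1}{x}\right)-\Li_0\!\left(-\tfrac{1}{x}\right)\right]\dd x
=\left[x\sum_{k=1}^p\Li_k\!\left(-\tfrac{1}{x}\right)\right]_0^\infty.
\]
No differentiation under the integral sign is needed here.

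Second, your convergence claim at $x_n\to\infty$ is wrong for the \emph{individual} paired integrals: the term $-\tfrac{x+1}{xz}\Li_p(-z/(x+1))$ behaves like $1/x$ at infinity and is not integrable on its own. The paper handles this by adding the identically zero quantity $-\tfrac{1}{x+C}\sum_{k=0}^{n-1}\binom{n-1}{k}(-1)^k$ to the full integrand and distributing one copy of $-x/(x+C)$ into each paired summand; after rescaling this becomes $-1/(x+1+z_I)$ and exactly cancels the $1/x$ tail. The extreme cases $I=[n-1]$ and $I'=\emptyset$ are collected into a separate elementary integral that contributes the constant $p\bigl[1+(-1)^n\bigr]$, which together with the constant $p$ from each proper $I$ reconstitutes the $I=[n-1]$ term on the right via the limit convention.
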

\begin{proof}
	We write $x=x_n$, $C = x_1 + \cdots + x_{n-1}$ and collect the different summands according to whether $n \in I$ or not. After adding the zero $ -1/(x+C) \sum_{k=0}^{n-1} \binom{n-1}{k} (-1)^k$, the left-hand side of \eqref{eq:bubble-lip-recursion} becomes
	\begin{multline*}
		\sum_{\emptyset \neq I \subsetneq [n-1]} (-1)^I \int_0^{\infty} \frac{\dd x}{x}
		\left[ 
			\frac{x_I}{x + x_{I^c}} \Li_p\left( -\frac{x+x_{I^c}}{x_I} \right)
			- \frac{x+x_I}{x_{I^c}} \Li_p \left( -\frac{x_{I^c}}{x+x_I} \right)
			- \frac{x}{x+C}
		\right]
		\\
		+ \int_0^{\infty} \dd x \left\{ 
				\frac{(-1)^{n+1}}{x}
				- \frac{1}{C} \Li_p\left( -\frac{C}{x} \right) 
				+ (-1)^{n-1} \frac{C}{x^2} \Li_p \left(-\frac{x}{C} \right)
				-\frac{1+(-1)^{n-1}}{x+C}
			\right\},
	\end{multline*}
	where now $I^c \defas [n-1] \setminus I$. We will now see that the individual integrals are convergent and compute them separately. First we check with \eqref{eq:Li-iterated-integral} that the last one integrates to
	\begin{equation*}
		\left[
			(-1)^n \frac{C}{x} \sum_{k=1}^p \Li_k\left( -\frac{x}{C} \right)
			-\frac{x}{C} \sum_{k=1}^p \Li_k\left( - \frac{C}{x} \right)
		\right]_{\mathrlap{x\rightarrow 0}}^{\mathrlap{\infty}}
		= \big[- 1 - (-1)^n \big] \lim_{z \rightarrow 0} \sum_{k=1}^{p} \frac{\Li_k(-z)}{z}
		= p\big[ 1 + (-1)^n \big]
	\end{equation*}
	and substitute $x \mapsto x \cdot x_I$ in the remaining integrals to rewrite them as
	\begin{align*}
		&
		\int_0^{\infty} \frac{\dd x}{x} \left[ 
			\frac{\Li_p(-x-z_I)}{x+z_I}
			-\frac{x+1}{z_I} \Li_p\left( -\frac{z_I}{x+1} \right)
			-\frac{x}{x+1+z_I}
		\right]
		\\
		&= \int_0^{\infty} \!\!\!\dd x\left[ 
				\frac{\Li_p(-x-z_I)}{x (x+z_I)}
					-\frac{1}{x z_I} \Li_p\left( -\frac{z_I}{x+1} \right)
			\right]
			-\int_0^{\infty} \frac{\dd x}{z_I} \left[ 
					\Li_p\left( -\frac{z_I}{x+1} \right) 
					+\frac{z_I}{x+1+z_I}
			\right].
	\end{align*}
	The first term is just $f_p(z_I)/z_I$ from \eqref{eq:bubble-integral-helper-function-def}, the second evaluates to
	\begin{equation*}
		- \int_{1/z_I}^{\infty} \!\!\dd x
			\left[ \Li_p\left(-\frac{1}{x}\right) - \Li_0\left( -\frac{1}{x} \right)\right]
		= - \left[
					x\sum_{k=1}^{p} \Li_k\left( -\frac{1}{x} \right)
				\right]_{x \rightarrow 1/z_I}^{\infty}
		= p	+ \sum_{k=1}^p \frac{\Li_k(-z_I)}{z_I}.
	\end{equation*}
	To finish the proof we only need to add up all contributions and note that
	\begin{equation*}
		p \big[1 + (-1)^n \big] + \sum_{\mathclap{\emptyset \neq I \subsetneq [n-1]}} (-1)^I p
		= 2p (-1)^n
		= \lim_{z \rightarrow 0} \frac{(-1)^n}{z} \left[ 
				p \Li_{p+1}(-z) + \sum_{k=1}^p \Li_{k}(-z)
			\right]
	\end{equation*}
	corresponds to the term with $I = [n-1]$ on the right-hand side of \eqref{eq:bubble-lip-recursion} in our short-hand convention.
\end{proof}
\begin{corollary}
	For any $n, p \in \N$ we compute the following projective integrals (over positive variables $x_1$, \ldots, $x_n$), which generalize \eqref{eq:bubble-graph-rational-parametric}:
	\begin{equation}
		\int \frac{\Omega}{x_1\!\cdots x_n}
		\sum_{\emptyset \neq I \subseteq [n]} \!\!\!
					(-1)^I \frac{\Li_p(-z_I)}{z_I}
		= n! \binom{p+n-2}{p-1}.
		\label{eq:generalized-bubble-integral}%
	\end{equation}
	In particular, the case $p=1$ implies $\period(\BBr{n}{0}) = n!$ using lemma~\ref{lemma:bubble-graph-rational-parametric}.
\end{corollary}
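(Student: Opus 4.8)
The plan is to prove \eqref{eq:generalized-bubble-integral} by induction on $n$, using the recursion \eqref{eq:bubble-lip-recursion} to peel off one integration variable at a time. The base case $n=1$ is immediate: the projective constraint $\Omega$ collapses the single integral, the sum reduces to the $I=\set{1}$ term with $z_I = 0$, and the prescribed limit $\lim_{z\to 0}\Li_p(-z)/z = (-1)\cdot(-1) $ (more carefully, $\Li_p(-z)/z \to 1$ with the sign from $(-1)^{\abs{I}}$) yields the value $1 = 1!\binom{p-1}{p-1}$, matching the right-hand side. For the inductive step I would integrate out the variable $x_n$ first, exploiting the homogeneity of the integrand so that $\Omega$ can be taken to fix one of the \emph{other} variables; then the $x_n$-integral is precisely the left-hand side of \eqref{eq:bubble-lip-recursion}.

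The key observation making the induction close is the shape of the right-hand side of \eqref{eq:bubble-lip-recursion}: integrating $x_n$ turns a single polylogarithm $\Li_p$ into the combination
\begin{equation*}
	p\,\Li_{p+1}(-z_I) + \sum_{k=1}^{p} \Li_k(-z_I),
\end{equation*}
so that the $(n-1)$-variable integrand is again a sum over nonempty $I\subseteq[n-1]$ of terms of the same type $(-1)^{\abs{I}}\Li_q(-z_I)/z_I$, but now with a spectrum of weights $q\in\set{1,\ldots,p+1}$. Writing $A(n,p)$ for the integral in \eqref{eq:generalized-bubble-integral}, the recursion therefore gives
\begin{equation*}
	A(n,p) = p\,A(n-1,p+1) + \sum_{k=1}^{p} A(n-1,k).
\end{equation*}
It then remains to verify that the claimed closed form $A(n,p) = n!\binom{p+n-2}{p-1}$ satisfies this recurrence. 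Substituting the ansatz, this amounts to the binomial identity
\begin{equation*}
	(n-1)!\left[ p\binom{p+n-2}{p} + \sum_{k=1}^{p}\binom{k+n-3}{k-1} \right] = n!\binom{p+n-2}{p-1},
\end{equation*}
which I would establish by the hockey-stick identity $\sum_{k=1}^{p}\binom{k+n-3}{k-1} = \binom{p+n-2}{p-1}$ together with $p\binom{p+n-2}{p} = (n-1)\binom{p+n-2}{p-1}$; adding these gives the factor $n$ and the result follows.

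The main obstacle I anticipate is not the combinatorial bookkeeping but the justification that the iterated integration is legitimate, i.e.\ that each partial integral converges absolutely and that we may apply \eqref{eq:bubble-lip-recursion} at every stage. One must check that after integrating $x_n$ the resulting integrand still lies in the span of terms $\Li_q(-z_I)/z_I$ over \emph{nonempty} $I\subseteq[n-1]$ with the correct sign, including the delicate $I=[n-1]$ boundary term where $z_I=0$ and the polylogarithms must be interpreted via their regularized limits exactly as in the short-hand convention of \eqref{eq:bubble-lip-recursion}. Absolute convergence of the successive integrals is guaranteed by the same power-counting as in the proof of lemma~\ref{lemma:bubble-integral-helper-function} (analyticity at the origin and quadratic-in-$1/x$ decay at infinity, with only logarithmic growth of the polylogarithms), so once the algebraic form of the integrand is controlled the analytic hypotheses of the preceding lemmata apply verbatim. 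Finally, specializing to $p=1$ gives $A(n,1) = n!\binom{n-1}{0} = n!$, which by lemma~\ref{lemma:bubble-graph-rational-parametric} is exactly $\period(\BBr{n}{0})$, completing the verification advertised in \eqref{eq:bubble-graphs-rational-period}.
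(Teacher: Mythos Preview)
Your proposal is correct and follows essentially the same route as the paper: induction on $n$, base case $n=1$ by the regularized-limit convention, inductive step via the recursion \eqref{eq:bubble-lip-recursion}, and the closing binomial identity $p\binom{p+n-2}{p} + \sum_{k=1}^{p}\binom{k+n-3}{k-1} = n\binom{p+n-2}{p-1}$. Your base-case sign discussion is slightly garbled (in fact $\Li_p(-z)/z \to -1$, and the $(-1)^{\abs{I}}$ then gives $+1$), but the conclusion is right.
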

\begin{proof}
	We perform an induction over $n$: For $n=1$, the integrand is just $1/x_1$ by our convention and the projective integral tells us to evaluate at $x_1 = 1$. So indeed the left-hand side gives $1 = 1! \binom{p-1}{p-1}$ for all $p$. When $n>1$, we use \eqref{eq:bubble-lip-recursion} to integrate out $x_n$ and obtain, using the statement for smaller values of $n$, 
	\begin{align*}
		&\int \frac{\Omega}{x_1\!\cdots x_{n-1}}
		\sum_{\emptyset \neq I \subseteq [n-1]} \!\!\! (-1)^I
			\left[ 
				\frac{p\Li_{p+1}(-z_I)}{z_I}
				+
				\sum_{k=1}^{p} \frac{\Li_k(-z_I)}{z_I}
			\right]
		\\
		&= (n-1)! \left[ 
				p \binom{p+n-2}{p}
				+\sum_{k=1}^p \binom{k + n-3}{k-1}
			\right]
		=
		(n-1)! \big[ 
		(n-1) + 1
			\big]
		 \binom{p+n-2}{p-1}.
		\qedhere
	\end{align*}
\end{proof}
Note that the parametric calculation involves polylogarithms of weight up to $n$, even though the final result is rational.\footnote{We wonder if polylogarithms could be avoided altogether in this case.} We used \eqref{eq:generalized-bubble-integral} as a test for our implementation {\HyperInt}. Furthermore we used the explicit result \eqref{eq:bubble-graphs-transcendental-period} for $\period(\BBt{n}{m})$ to check the program on
\begin{lemma}\label{lemma:BBt-projective}
	The parametric representation for the period of $\BBt{n}{m}$ can be reduced to a projective integral over variables $x_1,\ldots,x_n$ and $y_1,\ldots,y_m$ of the form
	\begin{multline}
		\period\left( \BBt{n}{m} \right)
		= \int \frac{\Omega}{x_1\!\cdots x_n y_1\!\cdots y_m} \left\{ 
				(-1)^{n+m+1}
				+ \sum_{\mathclap{
						I \times J \subsetneq [n] \times [m]
					}}
				(-1)^{I+J} \left[ 
					\Li_1\!\left( \frac{x_{\!I^c}y_{\!J^c}(x_{\!I} + y_{\!J})}{\psi(x_{[n]}+y_{[m]})} \right)
		\right.\right.\\\left.\left.
				+\frac{x_{\!I} + y_{\!J}}{x_{\!I^c}} \Li_1\!\left( -\frac{x_{\!I^c}^2}{\psi} \right)
				+\frac{x_{\!I} + y_{\!J}}{y_{\!I^c}} \Li_1\!\left( -\frac{y_{\!J^c}^2}{\psi} \right)
				\right]
		\right\}.
		\label{eq:BBt-projective}%
	\end{multline}
	Here we set $I^c \defas [n] \setminus I$, $J^c \defas [m] \setminus J$ and
	$\psi \defas x_{\!I^c} y_{\!J^c} + (x_{\!I}+y_{\!J})(x_{\!I^c}+y_{\!J^c})$. When $I^c = \emptyset$, the term $\Li_1(-x_{\!I^c}^2/\psi)/x_{\!I^c}$ is understood as zero (its limit when $x_{I^c} \rightarrow 0$). The same convention applies for $J^c = \emptyset$.
\end{lemma}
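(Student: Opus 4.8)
The plan is to mimic precisely the structure of the proof of Lemma~\ref{lemma:bubble-graph-rational-parametric} (the $\BBr{n}{0}$ case), which is the natural template since $\BBt{n}{m}$ is the same graph with a different pair of external momenta nullified. First I would apply the forest formula \eqref{eq:period-projective} for the period to $\Gamma \defas \BBt{n}{m}$. The subdivergences are now two chains of bubbles $\gamma_i$ (indexed by $i \in [n]$ on one side and $j \in [m]$ on the other), so the forests are parametrized by pairs of subsets $I \subseteq [n]$ and $J \subseteq [m]$, giving a sum over $I \times J$. The quotient graph $\Gamma/(\gamma_I \gamma_J) \isomorph \BBt{n-\abs{I}}{m-\abs{J}}$ plays the role that $\Gamma_{I^c}$ played before.

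The key computational steps then parallel the earlier proof exactly. As there, each bubble $\gamma_i$ contributes $\psipol_{\gamma_i} = y_i + y_i'$ and $\phipol_{\gamma_i} = y_i y_i'$, so the dependence of the integrand on the doubled edge variables $\set{y_i,y_i'}$ enters only through the prefactor $\prod (y_i+y_i')^{-2}$ together with the combination $\phipol_{\gamma_i}/\psipol_{\gamma_i}$. I would integrate out all the $y_i, y_i'$ (and the corresponding variables on the second chain) using the same elementary identity $\int \dd y\, \dd y'\, (y+y')^{-2} \delta(x - yy'/(y+y')) = \dd x / x$, collapsing each parallel pair to a single Schwinger parameter $x_i$ (respectively $y_j$ for the second chain — note the unfortunate but harmless name clash with the integrated-out variables). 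This produces the measure $\Omega/(x_1\cdots x_n\, y_1\cdots y_m)$. The new feature relative to the $\BBr{n}{0}$ case is that the cograph $\BBt{n-\abs{I}}{m-\abs{J}}$ is a \emph{two-sided} vertex graph, so its reduced Symanzik polynomials $\psipol$ and $\phipol$ depend on the two sums $x_{I^c}$ and $y_{J^c}$ separately rather than on a single combination; this is the origin of the symmetric appearance of both $x_{I^c}$ and $y_{J^c}$ in \eqref{eq:BBt-projective} and of the quadratic polynomial $\psi = x_{I^c} y_{J^c} + (x_I + y_J)(x_{I^c} + y_{J^c})$.

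The two remaining scale integrations (over the two auxiliary parameters coming from the $\Gamma$-function exponents in the forest formula, analogous to the $s$- and $t$-integrations before) are elementary: each produces a logarithm, i.e. a weight-one polylogarithm $\Li_1$, after an integration by parts, which accounts for the three $\Li_1$ terms and the constant $(-1)^{n+m+1}$ boundary contribution (the analogue of the $I=[n]$ term). I would carry these out and match the arguments against the claimed $\Li_1\!\left(x_{I^c}y_{J^c}(x_I+y_J)/[\psi(x_{[n]}+y_{[m]})]\right)$ and the two terms $\Li_1(-x_{I^c}^2/\psi)$, $\Li_1(-y_{J^c}^2/\psi)$; the degenerate conventions for $I^c = \emptyset$ or $J^c = \emptyset$ simply record that the corresponding $\Li_1(\cdot)/x_{I^c}$ term has vanishing limit, exactly as the $z_I = 0$ convention functioned earlier. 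The main obstacle I anticipate is purely bookkeeping: correctly tracking the signs $(-1)^{I+J}$, the homogeneity degrees, and the precise form of $\phipol_{\BBt{n-\abs{I}}{m-\abs{J}}}/\psipol$ so that the two elementary scale integrations assemble into the stated symmetric combination of three logarithms. Since the statement is explicitly flagged as a test case verified numerically by {\HyperInt} for $n+m\le 6$, I would close by citing that cross-check, but the analytic derivation rests entirely on the forest formula plus the same elementary integrals already used in Lemma~\ref{lemma:bubble-graph-rational-parametric}, so no genuinely new technique is required.
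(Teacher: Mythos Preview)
Your proposal is correct and matches the paper exactly: the paper's own proof simply says ``The derivation is a straightforward extension of the arguments given in the proof of lemma~\ref{lemma:bubble-graph-rational-parametric}, so we omit it here.'' One small slip: the two remaining integrations are not over ``auxiliary parameters coming from the $\Gamma$-function exponents'' but over the Schwinger parameters $s,t$ of the two non-bubble edges of the quotient graph, exactly as in the $\BBr{n}{0}$ proof---your parenthetical shows you know this, so just fix the wording.
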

\begin{proof}
	The derivation is a straightforward extension of the arguments given in the proof of lemma~\ref{lemma:bubble-graph-rational-parametric}, so we omit it here.
\end{proof}

\subsection{Cocommutative graphs}
\label{sec:ex-cocommutative}%
The period \eqref{eq:def-period} of a graph with subdivergences usually depends on the chosen renormalization point, as we just exemplified above. But under special circumstances it may become independent of the renormalization scheme. The simplest examples where this phenomenon occurs are cocommutative graphs.
\begin{definition}
	For any $n \in \N$, the \emph{iterated coproduct}
	$\cop[n]\colon \FeynHopf \longrightarrow \FeynHopf^{\tp(n+1)}$ is defined by $\cop[1] \defas \cop$ and
	$\cop[n+1] \defas (\id^{\tp k} \tp \cop \tp \id^{\tp(n-k)}) \circ \cop[n]$ for any choice of $0 \leq k \leq n$.\footnote{This is well-defined because $\FeynHopf$ is coassociative \cite{CK:RH1}.} We write $\cop[n](x) = \sum_{(x)} x_{(1)} \tp \cdots \tp x_{(n+1)}$.

	An element $x \in \FeynHopf$ is called \emph{cocommutative} if and only if $\tau\circ \cop(x) = \cop(x)$ for the flip $\tau(a \tp b) \defas b\tp a$, which is equivalent to
	$\sum_{(x)} x_{(1)}  \tp x_{(2)} = \sum_{(x)} x_{(2)} \tp x_{(1)}$.
\end{definition}
\begin{lemma}
	\label{lemma:cocommutative-cyclic} %
	Let $x\in \FeynHopf$ be cocommutative. Then all iterated coproducts are invariant under cyclic permutations $\tau_n (a_1 \tp \cdots \tp a_n) \defas a_2 \tp \cdots \tp a_n \tp a_1$. This means that for arbitrary $n \in \N$ and $0 \leq k \leq n$ we have
	\begin{equation}
		\cop[n](x)
		= \tau_{n+1}^k \left[ \cop[n](x) \right]
		=	\sum_{(x)}
			x_{(k+1)} \tp \cdots \tp x_{(n+1)} \tp x_{(1)} \tp \cdots \tp x_{(k)}%
		.
		\label{eq:cocommutative-cyclic} %
	\end{equation}
\end{lemma}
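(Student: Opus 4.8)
The plan is to prove \eqref{eq:cocommutative-cyclic} by a double induction, establishing the single cyclic shift ($k=1$) first and then bootstrapping both the number of tensor factors $n$ and the shift amount $k$. The key algebraic inputs are coassociativity of $\cop$ (which makes $\cop[n]$ well-defined regardless of the bracketing, as noted in the excerpt) and the cocommutativity hypothesis $\tau \circ \cop(x) = \cop(x)$, i.e.\ $\sum_{(x)} x_{(1)} \tp x_{(2)} = \sum_{(x)} x_{(2)} \tp x_{(1)}$.

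First I would settle the base case $n=1$: here \eqref{eq:cocommutative-cyclic} for $k=1$ is \emph{exactly} the defining equation of cocommutativity, and for $k=0$ it is trivial. For the inductive step I would reduce the general statement to the single cyclic shift, since $\tau_{n+1}^k = (\tau_{n+1})^k$ and once $\cop[n](x) = \tau_{n+1}[\cop[n](x)]$ is known, iterating it $k$ times yields the claim for all $0 \le k \le n$. So the heart of the matter is to show
\begin{equation*}
	\sum_{(x)} x_{(1)} \tp x_{(2)} \tp \cdots \tp x_{(n+1)}
	= \sum_{(x)} x_{(2)} \tp \cdots \tp x_{(n+1)} \tp x_{(1)}.
\end{equation*}
To prove this I would exploit the freedom in defining $\cop[n]$ by choosing to apply the \emph{last} coproduct to the first tensor slot: writing $\cop[n] = (\cop \tp \id^{\tp(n-1)}) \circ \cop[n-1]$, the innermost factor $x_{(1)}$ in $\cop[n-1](x)$ is split via $\cop$ into $x_{(1)} \tp x_{(2)}$. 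The cocommutativity applied to this \emph{inner} coproduct swaps these two, turning $x_{(1)} \tp x_{(2)} \tp x_{(3)} \tp \cdots$ into $x_{(2)} \tp x_{(1)} \tp x_{(3)} \tp \cdots$, which is a transposition of adjacent factors rather than the full cyclic rotation I want.

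The main obstacle, then, is that naive use of cocommutativity only produces \emph{adjacent transpositions at the splitting site}, whereas \eqref{eq:cocommutative-cyclic} asks for a full cyclic permutation. The clean way around this is to induct so that the cyclic invariance is already available for $\cop[n-1]$ and then combine it with one application of cocommutativity. Concretely, I would start from $\cop[n](x) = (\id^{\tp(n-1)} \tp \cop) \circ \cop[n-1](x)$, i.e.\ split the \emph{last} factor, and use the inductive cyclic invariance of $\cop[n-1]$ to rotate $x_{(1)}$ into last position \emph{before} splitting; a single application of cocommutativity to the freshly created coproduct then moves $x_{(1)}$ past its neighbour and assembles the desired rotation. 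Checking that the two presentations of $\cop[n]$ (splitting the first versus the last slot) agree is exactly coassociativity, so no extra hypotheses are needed. I expect the only genuine subtlety to be careful bookkeeping of Sweedler indices to confirm that the composition of the inductive rotation with the single swap reproduces the full cyclic shift $\tau_{n+1}$, after which the general $k$ case follows by iteration as described.
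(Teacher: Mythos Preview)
Your plan contains a real gap: cocommutativity is a hypothesis on $x$, namely $\tau \circ \cop(x) = \cop(x)$, and says nothing about $\cop$ applied to the individual Sweedler components $x_{(i)}$. Both places where you invoke it on an ``inner'' or ``freshly created'' coproduct do exactly that. In your exploratory step you write $\cop[n] = (\cop \tp \id^{\tp(n-1)}) \circ \cop[n-1]$ and claim cocommutativity swaps the first two factors of $\cop[n](x)$; but that would amount to $(\tau \tp \id^{\tp(n-1)}) \circ \cop[n](x) = \cop[n](x)$, an adjacent transposition which is \emph{not} available---the remark right after the lemma exhibits a cocommutative $x$ whose $\cop[2](x)$ is only cyclically and not fully symmetric. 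The same issue recurs in your inductive step: after rotating $x_{(1)}$ into the last slot of $\cop[n-1](x)$ and splitting it, the swap you want on the two pieces of that coproduct would again require a Sweedler component to be cocommutative.

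The paper's argument avoids this by applying cocommutativity at the only place it is legitimate, the \emph{outermost} $\cop$ acting on $x$ itself. Coassociativity gives two extremal factorizations
\[
	\cop[n](x)
	= (\cop[n-1] \tp \id) \circ \cop(x)
	= (\id \tp \cop[n-1]) \circ \cop(x),
\]
and since $\cop(x) = \tau \circ \cop(x)$ one has
\[
	(\cop[n-1] \tp \id) \circ \cop(x)
	= (\cop[n-1] \tp \id) \circ \tau \circ \cop(x)
	= \tau_{n+1} \circ (\id \tp \cop[n-1]) \circ \cop(x),
\]
the last equality being the elementary operator identity $(\cop[n-1] \tp \id) \circ \tau = \tau_{n+1} \circ (\id \tp \cop[n-1])$. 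This yields $\tau_{n+1}$-invariance for every $n$ in one stroke, with no induction on $n$; your observation that the case of general $k$ then follows by iterating the single shift is correct.
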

\begin{proof}
	This is just the coassociativity
	$	\big[\id \tp \cop[n] \big]\circ \cop
	=	\big[\cop[n] \tp \id \big]\circ \cop
	= \cop[n]$:
	\begin{equation*}
		\left[ \cop[n] \tp \id \right] \circ \cop (x)
		= \left[\cop[n] \tp \id \right] \circ \tau \circ \cop(x)
		= \tau_{n+1} \circ \left[\id \tp \cop[n]\right] \circ \cop(x)
		. \qedhere
	\end{equation*}
\end{proof}
\begin{remark}
	We cannot deduce full symmetry of the iterated coproducts from cocommutativity alone. For example, the word
	$x = abc+bca+cab \in T\left( \set{a,b,c} \right)$ is cocommutative,
	but $\cop[2](x) = a \tp b \tp c + b\tp c \tp a + c \tp a \tp b + R$ is invariant only under permutations that are cyclic (all tensors in $R$ have at least one slot which is $\1$).
\end{remark}
\begin{corollary}
	If $x \in \FeynHopf$ is cocommutative, then its period $\period(x)$ is independent of the chosen renormalization point $\widetilde{\Kinematics}$.
\end{corollary}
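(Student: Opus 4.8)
The plan is to trace how the renormalization point $\widetilde{\Kinematics}$ enters the period and to show that cocommutativity forces all of this dependence to cancel. Recall from \eqref{eq:period-scheme-dependence} that two renormalization points $\widetilde{\Kinematics}$ and $\widetilde{\Kinematics}'$ yield periods related by the conjugation
\begin{equation*}
	\period'
	= \restrict{\FRren}{\widetilde{\Kinematics}'}^{\convolution - 1}
		\convolution
		\period
		\convolution
		\restrict{\FRren}{\widetilde{\Kinematics}'}.
\end{equation*}
So the claim $\period(x) = \period'(x)$ is equivalent to showing that the period map $\period$ \emph{commutes} (with respect to the convolution product $\convolution$) with any character of the form $\phi \defas \restrict{\FRren}{\widetilde{\Kinematics}'}$, at least when evaluated on the cocommutative element $x$. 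In other words, I would reduce the statement to proving $(\phi^{\convolution-1} \convolution \period \convolution \phi)(x) = \period(x)$, i.e.\ $(\period \convolution \phi)(x) = (\phi \convolution \period)(x)$.

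The key step is to expand both convolution products using the definition \eqref{eq:def-convolution-product} in terms of the reduced coproduct, so that
\begin{equation*}
	(\period \convolution \phi)(x)
	= \sum_{(x)} \period\big(x_{(1)}\big)\,\phi\big(x_{(2)}\big)
	\quad\text{and}\quad
	(\phi \convolution \period)(x)
	= \sum_{(x)} \phi\big(x_{(1)}\big)\,\period\big(x_{(2)}\big).
\end{equation*}
Since both $\period$ and $\phi$ take values in the commutative algebra $\R$, the individual products $\period(x_{(1)})\phi(x_{(2)})$ and $\phi(x_{(1)})\period(x_{(2)})$ are equal as numbers term-by-term; the two expressions differ only by the swap $\tau$ of the two tensor factors in $\cop(x)$. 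But cocommutativity is precisely the statement $\tau \circ \cop(x) = \cop(x)$, which guarantees $\sum_{(x)} x_{(1)} \tp x_{(2)} = \sum_{(x)} x_{(2)} \tp x_{(1)}$. Applying $\period \tp \phi$ to both sides and multiplying out therefore gives $(\period \convolution \phi)(x) = (\phi \convolution \period)(x)$, which is the desired commutativity.

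The remaining step is to feed this back into the conjugation formula: writing $\phi \defas \restrict{\FRren}{\widetilde{\Kinematics}'}$, the relation $\period \convolution \phi = \phi \convolution \period$ on $x$ gives
\begin{equation*}
	\period'(x)
	= \big(\phi^{\convolution-1} \convolution \period \convolution \phi\big)(x)
	= \big(\phi^{\convolution-1} \convolution \phi \convolution \period\big)(x)
	= \big(\counit \convolution \period\big)(x)
	= \period(x),
\end{equation*}
using that $\phi^{\convolution-1} \convolution \phi = \counit$ is the convolution identity. I expect the main subtlety to be bookkeeping rather than conceptual: one must be careful that the commutativity $\period \convolution \phi = \phi \convolution \period$ only needs to hold \emph{on the single cocommutative element} $x$ (and not as an identity of maps), and that the associativity rearrangement $\phi^{\convolution-1} \convolution (\phi \convolution \period) = (\phi^{\convolution-1} \convolution \phi) \convolution \period$ together with the neutrality of $\counit$ is legitimate for convolution on the Hopf algebra $\FeynHopf$. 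Both facts are standard, so the only real work is confirming that moving $\phi$ past $\period$ is justified by the coproduct symmetry encoded in the definition of cocommutativity, which is exactly the content flagged in the hypothesis.
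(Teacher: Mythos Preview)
There is a genuine gap in the step
\[
	\big(\phi^{\convolution-1} \convolution \period \convolution \phi\big)(x)
	= \big(\phi^{\convolution-1} \convolution \phi \convolution \period\big)(x).
\]
You have only established $(\period \convolution \phi)(x) = (\phi \convolution \period)(x)$ on the single element $x$. But the left-hand side above is $\sum_{(x)} \phi^{-1}(x_{(1)})\,(\period \convolution \phi)(x_{(2)})$, so to replace $\period \convolution \phi$ by $\phi \convolution \period$ inside this sum you would need the identity to hold on every piece $x_{(2)}$ of $\cop(x)$, not just on $x$ itself. Cocommutativity of $x$ does not make the individual $x_{(2)}$ cocommutative, and the paper explicitly warns (in the remark after lemma~\ref{lemma:cocommutative-cyclic}) that one cannot deduce full symmetry of the iterated coproducts from cocommutativity alone.

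There are two easy fixes. The first is the paper's route: use lemma~\ref{lemma:cocommutative-cyclic} (cyclic invariance of $\cop[2](x)$) to rewrite the triple convolution directly as $(\period \convolution \phi \convolution \phi^{-1})(x) = \period(x)$. The second stays closer to your idea: apply your two-factor swap to the \emph{outermost} factor rather than the inner ones. Cocommutativity gives $(f \convolution g)(x) = (g \convolution f)(x)$ for \emph{any} linear maps $f,g$, so with $f = \phi^{-1} \convolution \period$ and $g = \phi$ you get
\[
	\big((\phi^{-1} \convolution \period) \convolution \phi\big)(x)
	= \big(\phi \convolution (\phi^{-1} \convolution \period)\big)(x)
	= \big((\phi \convolution \phi^{-1}) \convolution \period\big)(x)
	= \period(x),
\]
which is a valid one-line argument.
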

\begin{proof}
	Changing the renormalization point from $\widetilde{\Kinematics}$ to $\widetilde{\Kinematics}'$ gives periods
	\begin{equation*}
		\period'(x)
		\urel{\eqref{eq:period-scheme-dependence}}
			\left( \Psi^{\convolution - 1} \convolution \period \convolution\Psi \right) (x)
		\urel{\eqref{eq:cocommutative-cyclic}}
			\left( \period \convolution\Psi\convolution\Psi^{\convolution-1} \right) (x)
		= \period(x)
		\quad\text{where}\quad
		\Psi = \restrict{\FRren}{\widetilde{\Kinematics}'}
		.\qedhere
	\end{equation*}
\end{proof}
It turns out that the independence on the renormalization point can be made manifest in the parametric representation. In the sequel we consider special types of cocommutative elements in the Hopf algebra of graphs (without edge labels) and lift them to cocommutative elements in the Hopf algebra of edge-labelled graphs, for example
\begin{equation*}
	\cop \left(\Graph[0.3]{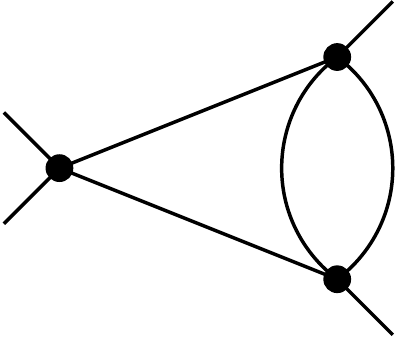}\right) = \Graph[0.25]{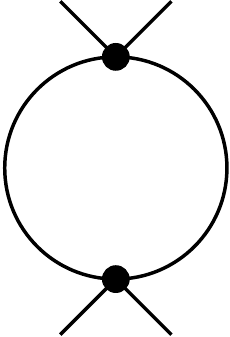} \tp \Graph[0.25]{bubble_vert}
	\quad\text{to}\quad
	\cop \left(\Graph[0.3]{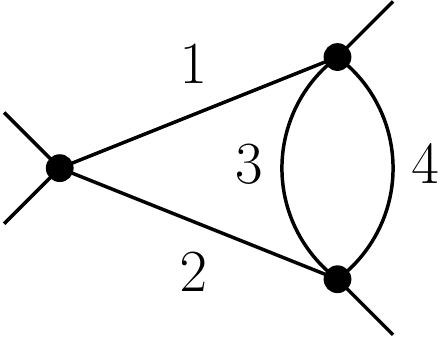}
	+\Graph[0.3]{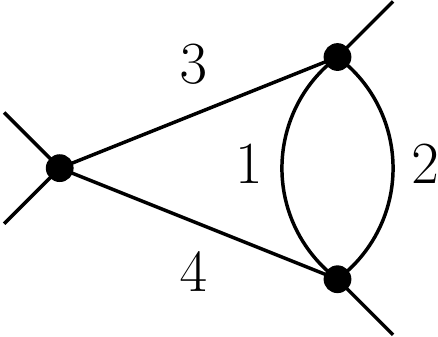} \right)
	= \Graph[0.3]{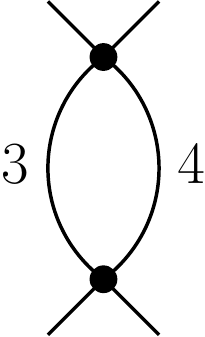} \tp \Graph[0.3]{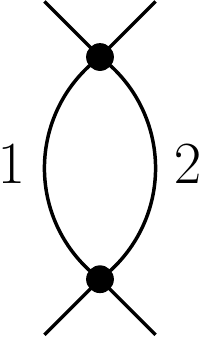} + \Graph[0.3]{bubble_12v} \tp \Graph[0.3]{bubble_34v}\ .
\end{equation*}
This allows us to completely cancel all second Symanzik polynomials $\phipol$ in the parametric representation of their period.
\begin{figure}
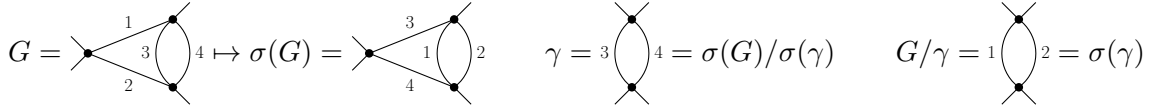

	$ G = \Graph[0.4]{dunce_numbered}
		\mapsto
		\sigma(G) = \Graph[0.4]{dunce_numbered2}
	$ \hfill
	$ \gamma = \Graph[0.4]{bubble_34v} = \sigma(G) / \sigma(\gamma)$
	\hfill
	$ G/\gamma = \Graph[0.4]{bubble_12v} = \sigma(\gamma)$
	\caption[Cocommutativity of dunce's cap]{The cocommutative dunce's cap $G$ and an isomorphism $\sigma$ to a relabelled graph $\sigma(G)$ that swaps the subdivergence $\gamma$ of $G$ with the quotient $\sigma(G)/\sigma(\gamma)$ of $\sigma(G)$ and vice versa.}%
	\label{fig:dunce-cocommutative-bijection}%
\end{figure}%
\subsubsection{Single graphs with a primitive subdivergence}
Consider a $\phi^4$ graph $G$ with a single subdivergence $\gamma$ such that both $G$ and $\gamma$ are logarithmically divergent. Then the period of $G$ can be written as
	\begin{equation}
		\period(G)
		\urel{\eqref{eq:period-projective}}
		\int \Omega \left[ \frac{1}{\psipol_G^2}
			- \frac{1}{\psipol_{\gamma}^2 \psipol_{G/\gamma}^2}
			\cdot \frac{
					\widetilde{\phipol}_{G/\gamma} \psipol_{\gamma}
				}{
					\widetilde{\phipol}_{G/\gamma} \psipol_{\gamma}
					+ \widetilde{\phipol}_{\gamma} \psipol_{G/\gamma}
				}
		\right]
		\label{eq:period-projective-singlesub}%
	\end{equation}
	and depends on $\widetilde{\Kinematics}$ through the second Symanzik polynomials $\widetilde{\phipol}_{G/\gamma}$ and $\widetilde{\phipol}_{\gamma}$.
	Now assume cocommutativity of $G$ (that is $\gamma \isomorph G/\gamma$), then we can find a relabelling (bijection)
	$
		\sigma\colon \edges(G) \longrightarrow \edges(G)
		$ of the edges of $G$ such that the subdivergence of $\sigma(G)$ is $\sigma(\gamma) = G/\gamma$ with quotient $\sigma(G)/\sigma(\gamma) = \gamma$. An example is shown in figure~\ref{fig:dunce-cocommutative-bijection}, where 
		$\sigma = \left(\begin{smallmatrix}
				1 & 2 & 3&4\\
				3 & 4 & 1 & 2\\
			\end{smallmatrix}\right)
		$. This construction interchanges the polynomials $\widetilde{\phipol}_{\sigma(\gamma)} = \widetilde{\phipol}_{G/\gamma}$ and $\widetilde{\phipol}_{\sigma(G)/\sigma(\gamma)} = \widetilde{\phipol}_{\gamma}$ of the sub- and cograph (analogously for the first Symanzik $\psipol$), when we replace $G$ with $\sigma(G)$. Thus the second Symanziks drop out in the sum
\begin{equation*}
			- \frac{1}{\psipol_{\gamma}^2 \psipol_{G/\gamma}^2}
			\frac{
					\widetilde{\phipol}_{G/\gamma} \psipol_{\gamma}
				}{
					\widetilde{\phipol}_{G/\gamma} \psipol_{\gamma}
					+ \widetilde{\phipol}_{\gamma} \psipol_{G/\gamma}
					}
			- \frac{1}{\psipol_{\sigma(\gamma)}^2 \psipol_{\sigma(G)/\sigma(\gamma)}^2}
			\frac{
					\widetilde{\phipol}_{\sigma(G)/\sigma(\gamma)} \psipol_{\sigma(\gamma)}
				}{
					\widetilde{\phipol}_{\sigma(G)/\sigma(\gamma)} \psipol_{\sigma(\gamma)}
					+ \widetilde{\phipol}_{\sigma(\gamma)} \psipol_{\sigma(G)/\sigma(\gamma)}
					}
		=
			- \frac{1}{\psipol_{\gamma}^2 \psipol_{G/\gamma}^2}
\end{equation*}
and we obtain a representation of the period that is manifestly independent of $\widetilde{\Kinematics}$:
\begin{equation}
	\period(G)
	= \period(\sigma(G))
	= \frac{\period(G) + \period(\sigma(G))}{2}
	= \frac{1}{2} \int \Omega \Bigg(
			\frac{1}{\psipol_G^2}
			+\frac{1}{\psipol_{\sigma(G)}^2}
			- \frac{1}{\psipol_{\gamma}^2 \psipol_{G/\gamma}^2}
		\Bigg).
	\label{eq:period-cocommutative-singlesub}%
\end{equation}
\begin{example}\label{ex:cocommutative-singlesub}
The simplest example in $\phi^4$ is dunce's cap (figure~\ref{fig:dunce-cocommutative-bijection}), which gives
\begin{align*}
	\period\left( \Graph[0.3]{dunce} \right)
	&=
	\int \frac{\Omega}{2} \left\{
		\frac{1}{[(\SP_1 + \SP_2)(\SP_3 + \SP_4) + \SP_3 \SP_4]^2}
		+ \frac{1}{[(\SP_1 + \SP_2)(\SP_3 + \SP_4) + \SP_1 \SP_2]^2}
		\right.\nonumber\\&\qquad\qquad
		-	\left.\frac{1}{(\SP_1 + \SP_2)^2(\SP_3 + \SP_4)^2}
			\right\}
	\nonumber\\
	&= \int \frac{\Omega}{2} \left\{
		\frac{1}{(\SP_1+\SP_2)(\SP_1 \SP_2 + \SP_1 \SP_3 + \SP_2 \SP_3)}
		- \frac{1}{(\SP_1 + \SP_2)^2(\SP_1 + \SP_2 + \SP_3)}\right\}
	\nonumber\\
	&= \int \frac{\Omega}{2(\SP_1 + \SP_2)^2} \ln \frac{(\SP_1 + \SP_2)^2}{\SP_1 \SP_2}
	= 1.
\end{align*}
\end{example}
\begin{remark}
	The representation \eqref{eq:period-cocommutative-singlesub} is very well suited for evaluation with hyperlogarithms, because only the first Symanzik polynomial occurs which gives plenty of factorization identities to aid linear reducibility. Each of its summands can be integrated separately in the sense of regularized limits: As long as we keep the same order of integration variables for each summand, the total sum of these regularized limits equals the overall (convergent) integral.
	
	In practice we can omit the term $\psipol_{\gamma}^{-2} \psipol_{G/\gamma}^{-2}$, because when we integrate the last edge $e$ of the subgraph $\gamma$, the integrand is proportional to $\dd \SP_e/\SP_e$ and integrates to $\log(\SP_e)$ which gets annihilated under $\AnaReg{\SP_e}{0,\infty}$.
\end{remark}
\begin{example}
	Since there is no primitive $\phi^4$ graph with two loops, the first interesting example appears at six loops, when the wheel $\WS{3} \isomorph \gamma \isomorph G/\gamma$ is inserted into itself. Our result, obtained with hyperlogarithms in \cite{Panzer:MasslessPropagators}, reads
\begin{equation}\label{eq:w3-in-w3}
	\period \left( \Graph[0.2]{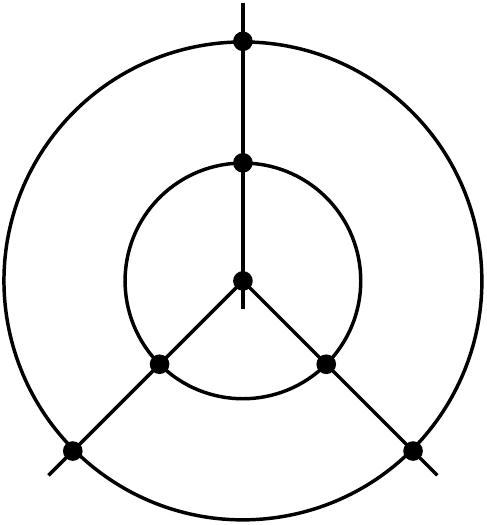} \right)
		= 72\mzv[2]{3} - \tfrac{189}{2}\mzv{7}.
\end{equation}
\end{example}
\subsubsection{Linear combinations of graphs}
	We can also construct cocommutative elements from several graphs. Let $G_1$ and $G_2$ both have one subdivergence $\gamma_i\subset G_i$ such that $\gamma_1 \isomorph G_2/\gamma_2$ and $\gamma_2 \isomorph G_1/\gamma_1$. Then $\cop(G_1 + G_2) = \gamma_1 \tp G_1/\gamma_1 + \gamma_2 \tp G_2/\gamma_2$ is cocommutative and again we can arrange for $\widetilde{\phipol}_{G_2/\gamma_2} = \widetilde{\phipol}_{\gamma_1}$ and so on with an adapted labelling of the edges such that we find
	\begin{equation}
		\period(G_1) + \period(G_2)
		= \period(G_1+G_2)
		= \int \Omega \left( \frac{1}{\psipol_{G_1}^2} + \frac{1}{\psipol_{G_2}^2} - \frac{1}{\psipol_{\gamma_1}^2\psipol_{\gamma_2}^2} \right).
		\label{eq:period-cocommutative-sum-projective}%
	\end{equation}%
\begin{figure}
		\centering
		\begin{tabular}{c@{\qquad}c@{\qquad}c@{\qquad}c}
			$ \Graph[0.5]{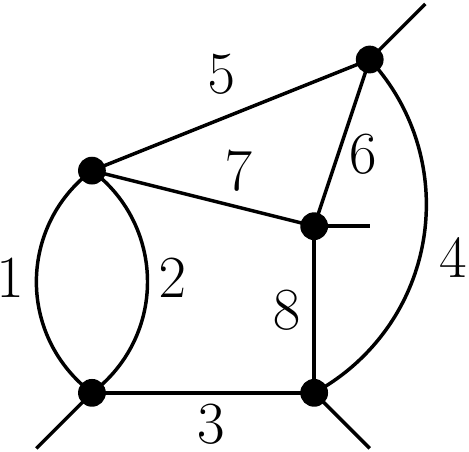} $ & $ \Graph[0.54]{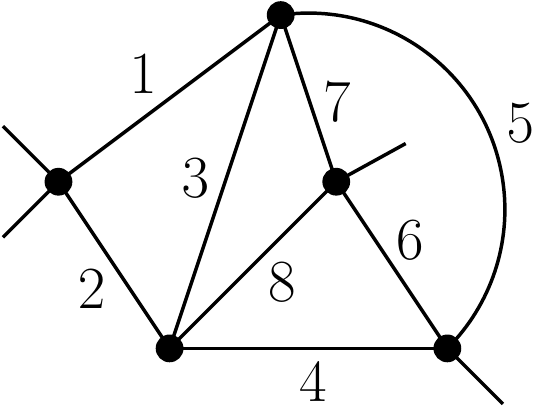}$ & $\Graph[0.5]{bubble_12v}$ & $\Graph[0.43]{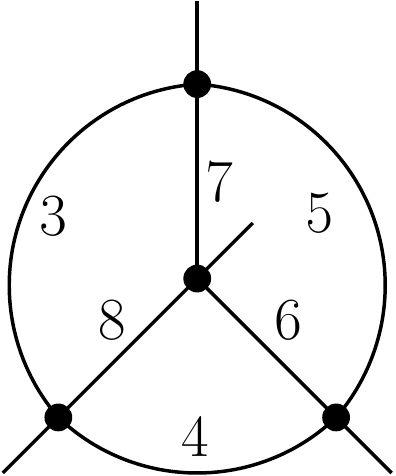}$\\
			$G_1$ & $G_2$ & $\gamma_1 = G_2/\gamma_2$ & $\gamma_2 = G_1/\gamma_1 = \WS{3}$\\
		\end{tabular}
		\caption[Insertion of the bubble into the wheel $\WS{3}$ and vice versa]{Insertions $G_1$ of the bubble $\gamma_1$ into $\WS{3}$ and $G_2$ of $\WS{3}$ into $\gamma_1$. The labelling ensures that $\gamma_1 = G_2/\WS{3}$ and $G_1/\gamma_1 = \WS{3}$ have the same induced labels.}%
		\label{fig:bubble-in-3spokes}%
\end{figure}%
\begin{example}
	The first example in $\phi^4$ theory occurs at four loops by inserting the bubble in the wheel $\WS{3}$ and vice versa, as shown in figure~\ref{fig:bubble-in-3spokes}.
	Note that for the convergence and correctness of \eqref{eq:period-cocommutative-sum-projective} it is crucial to label the edges carefully as required above. One such labelling is shown in the figure and the integration delivers
	\begin{equation*}
		\period(G_1) + \period(G_2)
		= 6\mzv{3},
		\label{eq:period-bubble-3spokes}%
	\end{equation*}
	but we will no longer indicate suitable labellings in the examples as they are straightforward to construct.
\end{example}
	Another possibility is to consider identical subdivergences $\gamma_1 = \gamma_2$ with the same cograph, then the difference $G_1-G_2$ is primitive. From \eqref{eq:period-projective-singlesub} we find
	\begin{equation}
		\period(G_1) - \period(G_2)
		= \period(G_1 - G_2)
		= \int \Omega \left( \frac{1}{\psipol_{G_1}^2} - \frac{1}{\psipol_{G_2}^2} \right).
		\label{eq:period-cocommutative-difference-projective}%
	\end{equation}
\begin{figure}
	\centering
	\begin{tabular}{ccccc@{\hspace{-1ex}}c}
		$ \Graph[0.38]{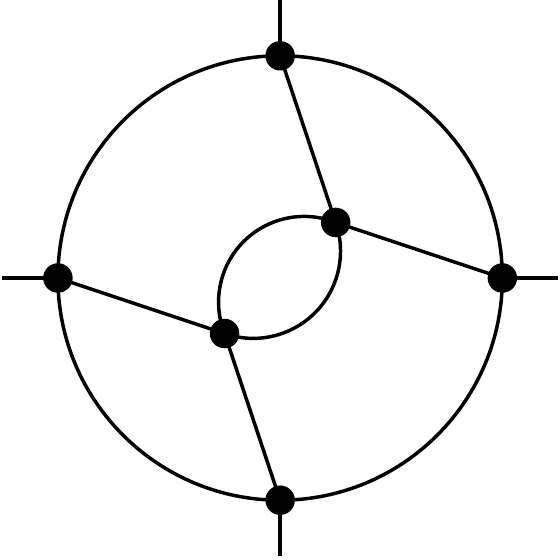} $ &
		$ \Graph[0.38]{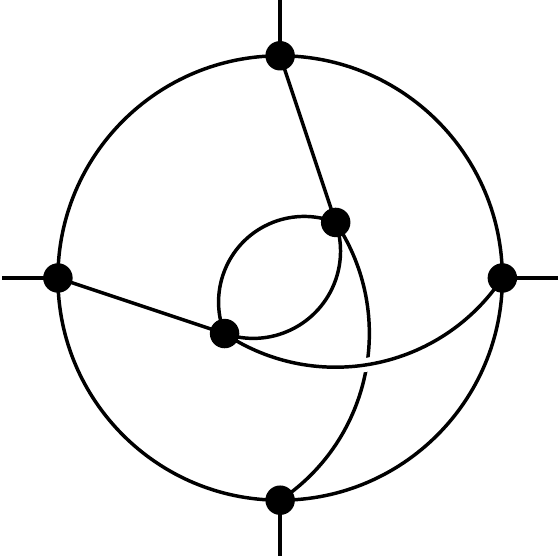} $ &
		$ \Graph[0.38]{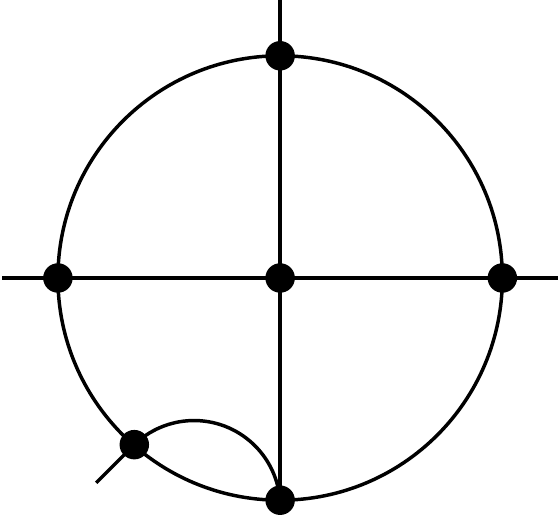} $ &
		$ \Graph[0.38]{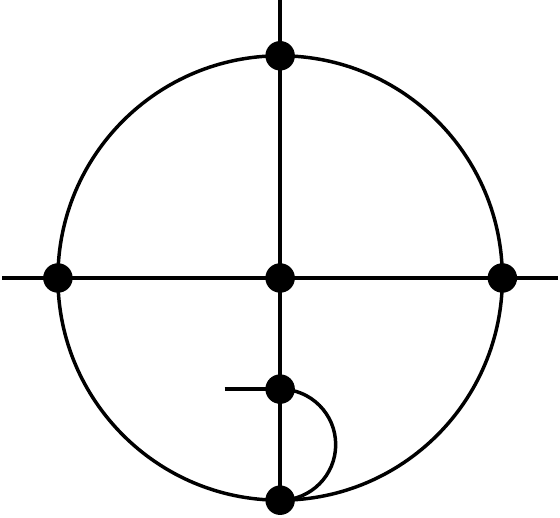} $ &
		$ \Graph[0.38]{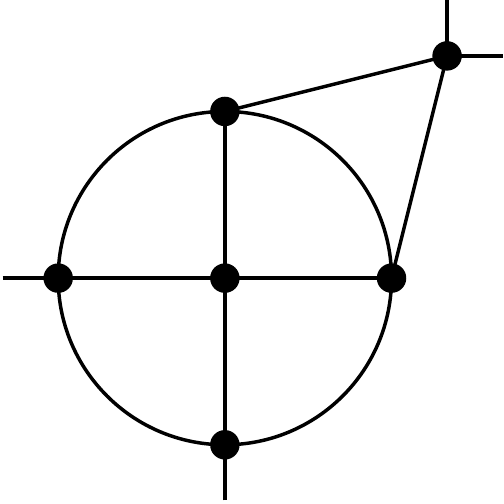} $ &
		$ \Graph[0.38]{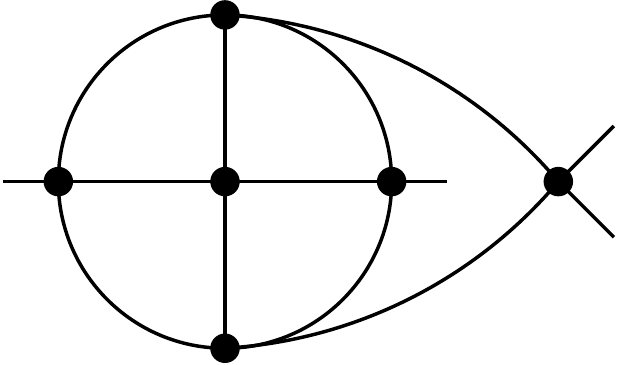} $ \\
		$ G_1$ & $G_2$ & $G_3$ & $G_4$ & $G_5$ & $G_6$ \\
	\end{tabular}
	\caption[Insertions of the bubble into $\WS{4}$ and vice versa]{Different insertions of the bubble into the wheel $\WS{4}$ with four spokes ($G_1$ to $G_4$) and insertions of $\WS{4}$ into the bubble ($G_5$ and $G_6$).}%
	\label{fig:bubble-in-4spokes}%
\end{figure}%
\begin{example}
	The first such case in $\phi^4$ theory are the four different ways to insert a bubble into $\WS{4}$. In figure~\ref{fig:bubble-in-4spokes} we also show the two different insertions of $\WS{4}$ into the bubble. Together with \eqref{eq:period-cocommutative-sum-projective}, we obtain five linearly independent relations among the periods of these six graphs. Explicitly we computed
	\begin{equation*}\begin{split}
		40 \mzv{5}
		&=\period(G_3) + \period(G_5)
		\qquad\text{and}
		\\
		6 \mzv[2]{3}
		&=\period(G_1) - \period(G_2)
		= \period(G_4) - \period(G_1)
		= \period(G_3) - \period(G_4)
		= \period(G_5) - \period(G_6).
	\end{split}\end{equation*}
\end{example}
Since a bubble is a one-scale subgraph, its insertion reduces to a period of the quotient graph (just as in example~\ref{ex:onescale-sub}). Therefore the really interesting situations are when both primitives $\gamma$ and $G/\gamma$ are different from the bubble. In $\phi^4$ theory, this requires at least seven loops.
\begin{figure}
	\centering
	$ \WS{3 \hookrightarrow 4} \defas \Graph[0.4]{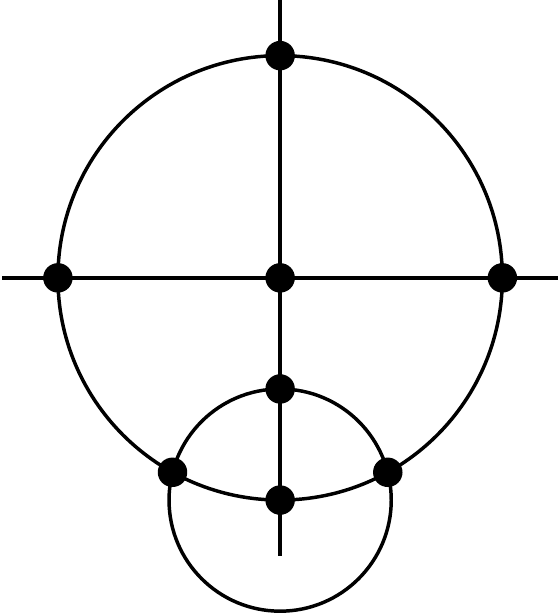} $
	\quad
	$ \WS{3 \hookrightarrow 4'} \defas \Graph[0.4]{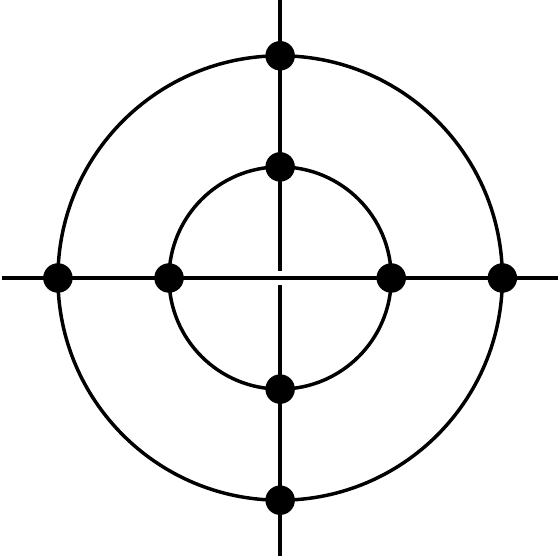} $
	\quad
	$ \WS{4 \hookrightarrow 3} \defas \Graph[0.4]{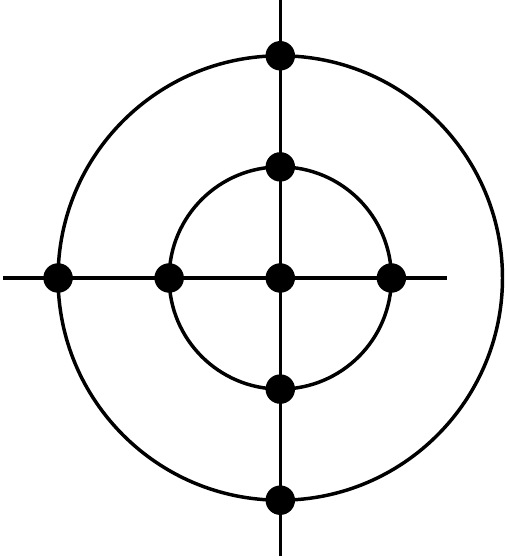} $
	\caption{Insertions of the wheels with $3$ and $4$ spokes into each other.}%
	\label{fig:ws3-in-ws4}%
\end{figure}%
\begin{example}
	The wheel $\WS{3}$ can be inserted into $\WS{4}$ in two different ways and there exists one insertion of $\WS{4}$ into $\WS{3}$, as shown in figure~\ref{fig:ws3-in-ws4}. They can be combined to define two linearly independent cocommutative elements \cite{Kreimer:WheelsInWheels} and the evaluation of their periods is of high interest \cite{Kreimer:LL2014}. Our new results read
	\begin{equation}\begin{split}\label{eq:ws3-in-ws4}%
		\period\big( \WS{3 \hookrightarrow 4} - \WS{3 \hookrightarrow 4'} \big)
		&= 72 \mzv[3]{3}
		\quad\text{and}\\
		\period\big( \WS{3 \hookrightarrow 4} + \WS{4 \hookrightarrow 3} \big)
		&= 480 \mzv{3} \mzv{5} - 40 \mzv[3]{3} - \frac{4730}{9} \mzv{9}.
	\end{split}\end{equation}
These show a double weight-drop (the generic weight for primitive $7$-loop $\phi^4$ periods is $11$) as expected by the analysis of the $c_2$ invariant \cite{BrownSchnetzYeats:PropertiesC2}. It occurs as follows: After integrating out the variables associated to the $\WS{3}$ sub- or cograph in \eqref{eq:period-cocommutative-sum-projective} or \eqref{eq:period-cocommutative-difference-projective}, the denominator of the partial integral is $\psipol_{\WS{4}}^2$.
\end{example}

\subsubsection{Iterated insertions of ladder type}
	A further source of cocommutative elements is supplied by series $(G_n)_{n \in \N}$ of \emph{ladder type}, by which we mean that their coproducts obey
	\begin{equation}
		\copred(G_n) = \sum_{k=1}^{n-1} G_k \tp G_{n-k}
		\quad\text{for all}\quad
		n \in \N.
		\label{eq:ladder-series}%
	\end{equation}
	These arise very naturally by iterated insertions of a primitive graph $\gamma \defas G_1$ into itself, such that $G_n/G_k \isomorph G_{n-k}$ for all $i<n$ and in particular we have $G_{n+1}/G_n \isomorph \gamma$. Example~\ref{ex:cocommutative-singlesub} considered just the start of such a series, which we indicate for the bubble $\gamma = \smallbubble$ in figure~\ref{fig:bubble-insertion-series}. It resembles the zigzag series (figure~\ref{fig:zigzags}) in that it defines an infinite sequence of renormalization point independent periods in $\phi^4$ theory.
\begin{figure}
	\centering
	$ G_1 = \Graph[0.3]{bubble_vert}\ 
		G_2 = \Graph[0.35]{dunce} \ 
		G_3 = \Graph[0.4]{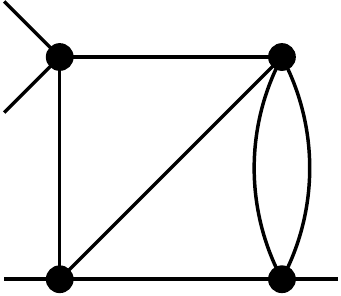} \ 
		G_4 = \Graph[0.4]{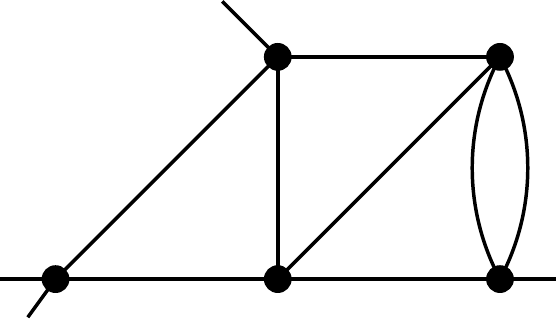} \ 
		G_5 = \Graph[0.4]{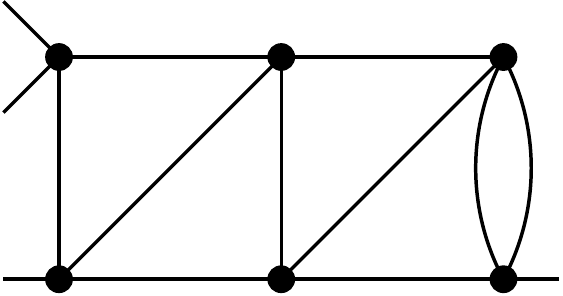} \ 
		\cdots$ 
	\caption{A series of iterated, cocommutative bubble self-insertions in $\phi^4$ theory.}%
	\label{fig:bubble-insertion-series}%
\end{figure}

	To cancel the second Symanzik polynomial in the parametric representation, we must now average over more graphs. Fix $n$ and consider a family
	$\sigma^i\colon G_n \longrightarrow G_n^i$ of isomorphisms that relabel the edges, where $0 \leq i < n$ and we set $\sigma^0 \defas \id$. We write $G_k^i \defas \sigma^i(G_k)$ for the subdivergences of $G_n^i$ (setting $G_0^i \defas \1$), so $\copred(G_n^i) = \sum_{k=1}^{n-1} G_k^i \tp G_{n-k}^i$. The cocommutativity hints that we can choose $\sigma^i$ such that
	$G^{i}_{n-k} = G_{n}^{i+k}/G_{k}^{i+k}$ and $G_n^i/G_k^i = G_{n-k}^{i+k}$ whenever $i+k < n$. The idea is that we shift the variables cyclically from one subquotient to the next:
	\begin{equation*}
		\gamma
		\isomorph
		G_{k+1}^{i} / G_k^i
		= \begin{cases}
			G_{k}^{i+1} / G_{k-1}^{i+1} &\text{for $k>0$ and}\\
			G_{n}^{i+1} / G_{n-1}^{i+1} & \text{when $k=0$}.\\
		\end{cases}
	\end{equation*}
	Since all subdivergences $\1 \neq G_1^i \subsetneq \cdots \subsetneq G_{n-1}^i \subsetneq G_n^i$ are nested, any subset
	$F^i_k = \set{G_{k_1}^i,\ldots,G_{k_r}^i}
		\in \forests\left( G_n^i \right)
	$ indexed by $1 \leq k_1 < \cdots< k_r < n$ defines a forest. By construction, the set of subquotients
	\begin{equation*}
		Q(F^i_k) \defas \set{
				G_{k_1}^i,
				G_{k_2}^i/G_{k_1}^i,
				\ldots,
				G_{k_r}^i/G_{k_{r-1}}^i,
				G_n^i / G_{k_r}^i
		}
	\end{equation*}
	that determine its contribution to the forest formula \eqref{eq:period-projective} is invariant under the shift
	\begin{equation*}
		\tau\left(F^i_k\right)
		\defas
		F^{i+k_1}_{\set{k_2-k_1,\ldots,k_r-k_1,n-k_1}}
			\quad\text{(replace $i+k_1$ with $i-n+k_1$ if $i+k_1\geq n$)}.
	\end{equation*} 
	This $\tau$ is a permutation of the set $\forests_n \defas \bigcupdot_{i=0}^{n-1} \forests(G_n^i)$ and each $F^i_k$ lies in a cycle $[F_k^i]$ of size $1 + r = \abs{Q(F_k^i)}$. We consider $\sum_{i=0}^{n-1} \period(G_n^i)$ and collect the contributions for each of the cycles $\forests_n/\tau \defas \setexp{[F]}{F \in \forests_n}$ such that the fractions with second Symanzik polynomials add up to unity. So finally,
	\begin{equation}
		\period(G_n)
		= \frac{1}{n} \sum_{k=0}^{n-1} \period\left(G_n^k\right)
		= \frac{1}{n} \int \Omega \hspace{-1ex}\sum_{[F] \in \forests_n / \tau}\hspace{-1ex} \frac{(-1)^{\abs{F}}}{\psipol_F^2}
		.
		\label{eq:ladder-period-projective-cyclic}%
	\end{equation}
\begin{figure}
	\centering
	$
		G_3^0 = \Graph[0.6]{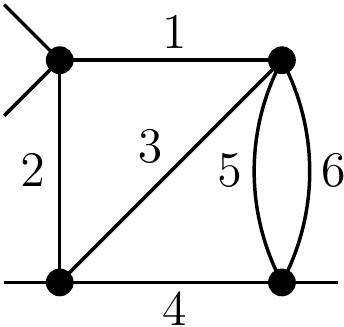} \qquad
		G_3^1 = \Graph[0.6]{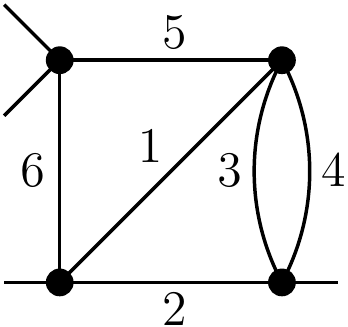} \qquad
		G_3^2 = \Graph[0.6]{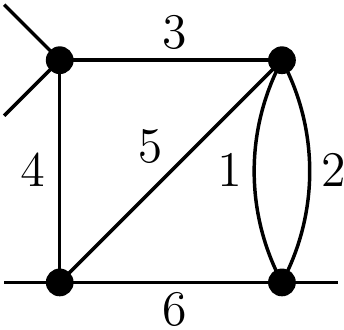} 
	$
	\caption{Cyclic relabellings $G_3^i = \sigma^i(G_3)$ of the same cocommutative graph.%
	}%
	\label{fig:dunce-ladder-3-labellings}%
\end{figure}%
\begin{example}
	For the series of bubble insertions (figure~\ref{fig:bubble-insertion-series}), explicit relabellings for $n=3$ are shown in figure~\ref{fig:dunce-ladder-3-labellings} where $\sigma = \left(\begin{smallmatrix} 1 & 2 & 3 & 4 & 5 & 6\\ 5 & 6 & 1 & 2 & 3 & 4\\ \end{smallmatrix}\right)$. The integration of
	\begin{equation*}
		\period(G_3) = \int \frac{\Omega}{3} \left( 
			\frac{1}{\psipol_{G_3^0}^2}
			+\frac{1}{\psipol_{G_3^1}^2}
			+\frac{1}{\psipol_{G_3^2}^2}
			- \frac{1}{\psipol_{G_{1}^0}^2 \psipol_{G_{2}^1}^2}
			- \frac{1}{\psipol_{G_{1}^1}^2 \psipol_{G_{2}^2}^2}
			- \frac{1}{\psipol_{G_{1}^2}^2 \psipol_{G_{2}^0}^2}
			+ \frac{1}{\psipol_{G_{1}^0}^2 \psipol_{G_{1}^1}^2 \psipol_{G_{1}^2}^2}
		\right)
	\end{equation*}
	is elementary in this case and we obtain $\period(G_3) = 2$. In fact, the bubble series evaluates to the Catalan numbers $\period(G_{n+1}) = \binom{2n}{n}/(n+1)$ for all $n\in\N$. The proof is simplest in momentum space (as in lemma~\ref{lemma:bubble-graphs-transcendental-period}) where we can exploit that in dimensional regularization, $\FRdim(G_{n+1}) = q^{-2\varepsilon} \FRdim(G_n) \onemaster{1}{1+n\varepsilon}$ if we render all graphs one-scale through nullification of the external momentum attached to the innermost subdivergence $G_1$.

If the Feynman rules are subject to such a recursion relation, the function $\onemaster{1}{1+n\varepsilon}$ is called \emph{Mellin transform} and the renormalized integrals can be computed explicitly in terms of this function. Moreover, the full generating function of all periods is related to a \emph{Dyson-Schwinger equation} and subject to a differential equation. We gave detailed expositions of these concepts in \cite{Panzer:Mellin,Panzer:Master}, where the reader will find also an essentially equivalent example resulting in the Catalan numbers as well.

Note that the parametric integration of \eqref{eq:ladder-period-projective-cyclic} is not at all trivial for higher $n$. We used this series of known periods as test cases for our program {\HyperInt}.
\end{example}
\begin{figure}
	\centering
	$
		G_1 = \Graph[0.4]{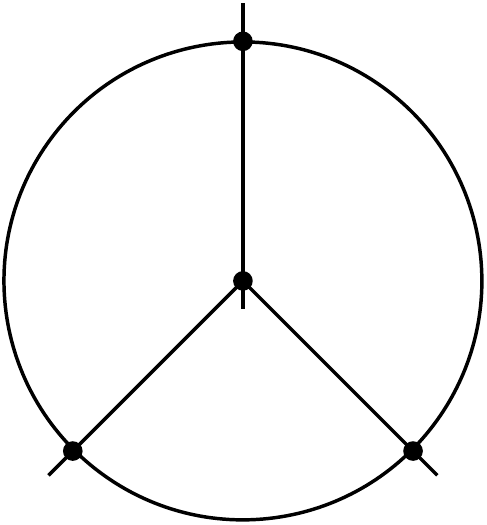}
		\quad
		G_2 = \Graph[0.4]{ws3_ws3}
		\quad
		G_3 = \Graph[0.4]{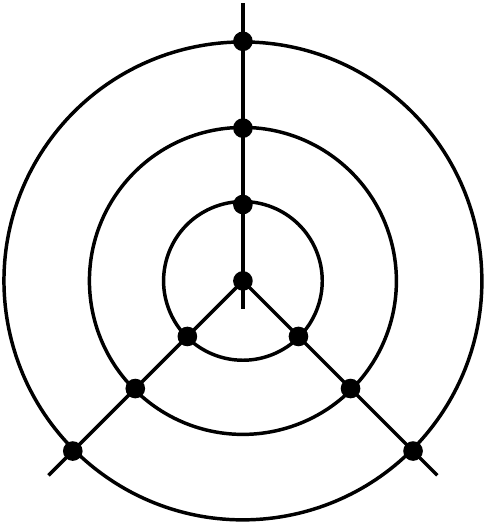}
		\quad
		G_4 = \Graph[0.4]{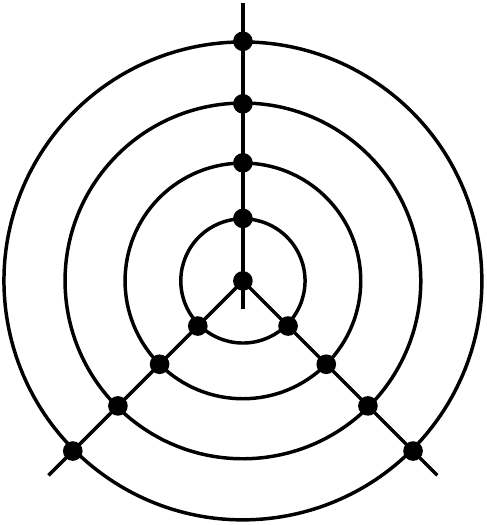}
		\quad
		\cdots
	$
	\caption{A ladder series from insertions into the wheel $G_1=\WS{3}$ with 3 spokes.}%
	\label{fig:ws3-ladder-series}%
\end{figure}%
\begin{remark}
	A much more interesting ladder series is provided by the iterated insertions of $\gamma = \WS{3}$ shown in figure~\ref{fig:ws3-ladder-series}. Apparently all these graphs have vertex-width $3$ and can thus be computed with hyperlogarithms. In particular we know $\period(G_n) \in \MZV$ for all $n \in \N$.
	Our computation \eqref{eq:w3-in-w3} of $\period(G_2)$ supplements the well-known $\period(G_1) = 6\mzv{3}$.
\end{remark}
\begin{remark}
	The coproduct \eqref{eq:ladder-series} of a ladder series implies that the graphs $G_n$ generate a Hopf subalgebra. The scaling behaviour of the renormalized Feynman rules from \eqref{eq:rge-infinitesimal} is thus completely determined by the periods of these graphs only. For example,
	\begin{align*}
		\restrict{\FRren}{\Kinematics_{\scalelog}}\left( \Graph[0.15]{ws3_ws3} \right)
		&= \frac{\scalelog^2}{2} \period^2\left( \Graph[0.15]{ws3} \right)
		- \scalelog \period\left( \Graph[0.15]{ws3_ws3} \right) 
			- \scalelog \period\left( \Graph[0.15]{ws3} \right) \FRren\left( \Graph[0.15]{ws3} \right)
			+ \FRren\left( \Graph[0.15]{ws3_ws3} \right)
		\\
		&=	18 \mzv[2]{3} \scalelog^2
				- \left( 
						72\mzv[2]{3} - \tfrac{189}{2}\mzv{7}
				\right) \scalelog
				- 6\mzv{3} \scalelog \FRren\left( \Graph[0.15]{ws3} \right)
				+ \FRren\left( \Graph[0.15]{ws3_ws3} \right)
		.
\end{align*}
\end{remark}

\section{Three-point functions}
\label{sec:ex-3pt}%
With the recursion formulas of section~\ref{sec:3pt-recursions} we provided an infinite family of massless three-point functions which can be computed with hyperlogarithms by theorem~\ref{theorem:vw3-3pt}. But there remain many graphs outside this family that are still known to evaluate to hyperlogarithms.

\subsection{Vertex-width 3 and graphical functions}
Whenever $G$ is $3$-constructible (with the external vertices active last), then so is its planar dual (remark~\ref{rem:planar-dual-construction}) and we have a complete symmetry between position and momentum space. So in particular the constructibility of $3$-loop graphs in the sense of graphical functions \cite{Schnetz:GraphicalFunctions} carries over to the position space. With our construction of forest functions, we found an explanation of these polylogarithmic results and a method to compute them in the parametric representation.

In the other direction we learn the following: By linear reducibility, we know that all coefficients in the Laurent expansion in $\varepsilon$ can be computed with hyperlogarithms. This suggests that the position space methods of single-valued integration for graphical functions should also be extendable to $\dimension=4-2\varepsilon$ dimensions.

\subsection{Reducibility up to three loops}
We tested explicitly that linear reducibility persists for all three-point functions up to three loops and computed a series of explicit examples \cite{Panzer:DivergencesManyScales}. Experiments at four loops revealed counterexamples, and also in position space (graphical functions) we studied counterexamples with Oliver Schnetz at four loops. But we first want to \emph{understand} the linear reducibility at three loops.

To classify $3$-point functions $G$ we consider their completions $\GComp{G}$, defined as the vacuum graph obtained by adding a $3$-valent vertex to $G$ which connects to the external vertices of $G$. Note that a similar construction proved very useful to study massless propagators (section~\ref{sec:propagators}).
\begin{figure}
	\centering
	\begin{tabular}{ccccc}
		$\Graph[0.5]{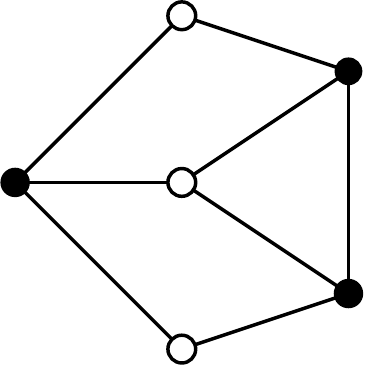}$ &
		$\Graph[0.5]{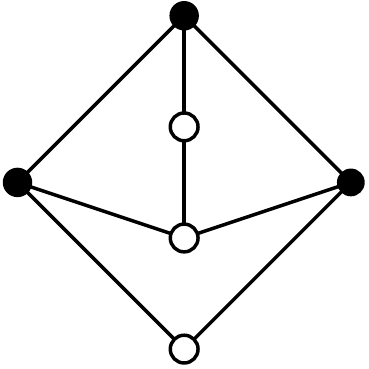}$ &
		$\Graph[0.5]{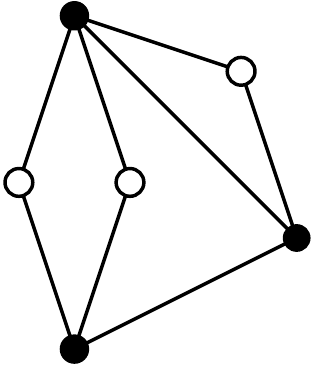}$ &
		$\Graph[0.5]{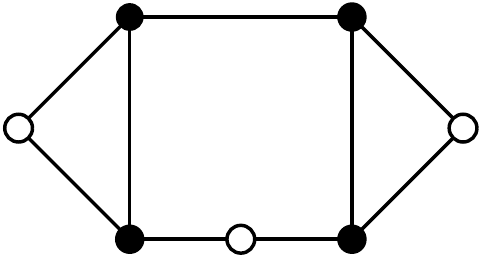}$ &
		$\Graph[0.5]{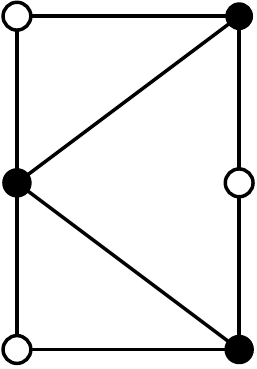}$ \\
		$\Delta_{3,3}$ &
		$\Delta_{3,4}$ &
		$\Delta_{3,5}$ &
		$\Delta_{3,13}$ &
		$\Delta_{3,18}$ \\
	\end{tabular}%
	\caption[Three-loop three-point graphs that are not $3$-constructible]{Three-loop three-point graphs that are not $3$-constructible (white vertices are external). The completions are $\GComp{\Delta_{3,3}} = \GComp{\Delta_{3,4}} = \GComp{\Delta_{3,5}} = {_5 N_2}$, $\GComp{\Delta_{3,13}} = {_5 P_6}$ and $\GComp{\Delta_{3,18}} = {_5 P_4}$ from figure~\ref{fig:5loop-vacuum}.}%
	\label{fig:3pt-nonvw3}%
\end{figure}
So we can use the list in figure~\ref{fig:5loop-vacuum} and enumerate the different three-point functions by deleting a $3$-valent vertex. This gives $23$ non-isomorphic graphs, eleven of which are $3$-constructible and thus not of interest any further. Note that $3$-constructibility of $G$ implies $\vw(\GComp{G}) \leq 3$ as well and can thus only occur if $\GComp{G}$ is one of $_5 P_1$ through $_5 P_6$ (recall that the cube $_5 P_7$ is excluded, despite being planar, by theorem~\ref{theorem:vw3-minors}). The converse is not true however: the graphs $\Delta_{3,13}$ and $\Delta_{3,18}$ from figure~\ref{fig:3pt-nonvw3} are not $3$-constructible (with the condition that the final active vertices of the construction coincide with the external vertices).

Some of the remaining graphs contained more polynomials in the final set of the reduction than $\set{1-z,1-\bar{z},z-\bar{z}}$, which indicates that indeed their expansion will involve more general polylogarithms than in the $3$-constructible case. For example, $\Delta_{3,5}$ features the polynomial $1-z\bar{z}$ \cite{Panzer:DivergencesManyScales}. This is not the case for $\Delta_{3,3}$ and $\Delta_{3,4}$ though, so we learn that for three-point functions, the class of functions appearing is not an invariant of the completion.

\section{Massless on-shell 4-point functions}\label{sec:ex-ladderboxes}
It is known that all $2$-loop massless on-shell $4$-point graphs (see figure~\ref{fig:2loop-4pt}) are linearly reducible (in Schwinger parameters), while counterexamples exist at $3$ loops \cite{BognerLueders:MasslessOnShell,Lueders:LinearReduction}. However, all known results for such graphs evaluate to harmonic polylogarithms, including also non-planar $3$-loop graphs which are not linearly reducible \cite{HennSmirnov:K4}. So while our technique for ladder boxes is already useful in itself, we will need extensions to cover more graphs.
\begin{figure}\centering
		$\Graph[0.33]{box2}$\quad
		$\Graph[0.33]{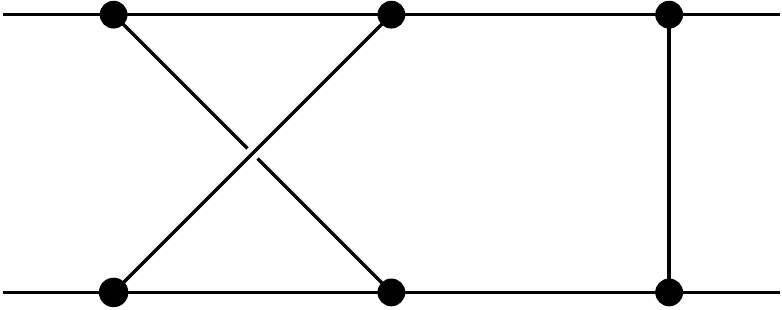}$\quad
		$\Graph[0.33]{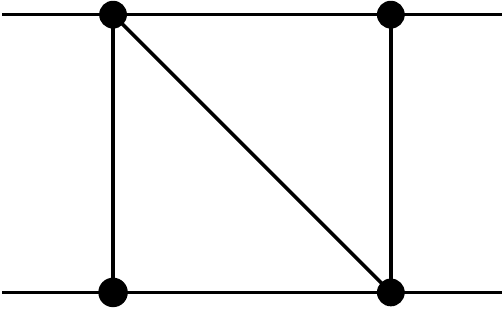}$\quad
		$\Graph[0.33]{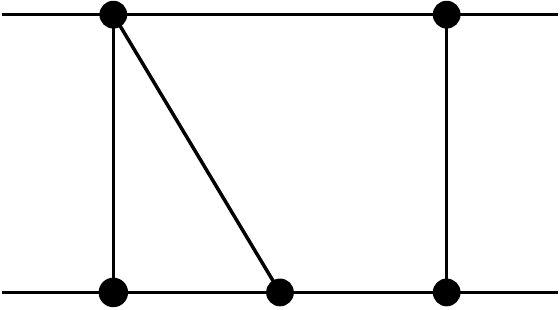}$\quad
		$\Graph[0.33]{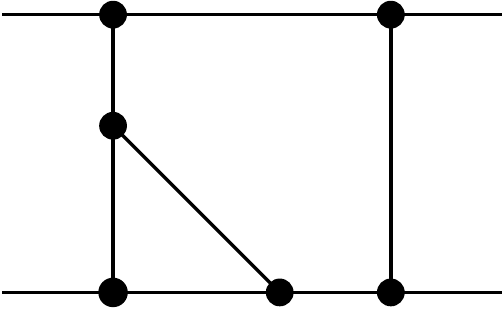}$\quad
		$\Graph[0.33]{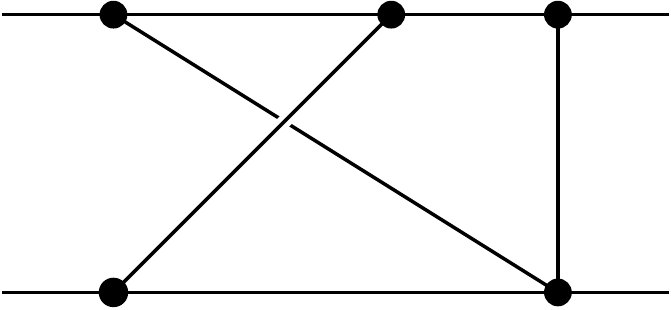}$
	\caption{Two-loop four-point graphs without self-energy (propagator) subgraphs.}%
	\label{fig:2loop-4pt}%
\end{figure}%

\subsection{Ladder boxes}
The simplest application of our recursive method from section~\ref{sec:ladderboxes} is in $\dimension=6$ dimensions with unit indices $\EP_e=1$, because then the ladder box integrals become finite (even when all $p_i^2=0$ are on-shell) and can be computed without the need of a preceding reduction to finite integrals. 

Recently there has been growing interest into precisely these integrals for the study of supersymmetric theories \cite{Kazakov:MultiBoxSix,BorkKazakovVlasenko:N11D6SYM}. From the literature results are only known for up to three loops. Our formalism turned out to be very efficient for this problem and we computed the ladder box integrals for on-shell kinematics up to $6$ loops.

For illustration we provide the full $4$-loop on-shell result ($p_1^2=\cdots=p_4^2=0$) in appendix~\ref{sec:lbox4-onshell}. However, the application in \cite{Kazakov:MultiBoxSix} requires only the special values $c_n \defas \restrict{\FR(B_n)}{s=1,t=0}$ at $s=(p_1+p_2)^2=1$ and $t=(p_1+p_4)^2 = 0$, which we list here:
\begin{align}
c_2 &= 2\mzv{2}
,\label{eq:lbox2-zero}\\
c_3 &= 4 \mzv[2]{3} 
	+ \tfrac{124}{35}\mzv[3]{2}
	- 8 \mzv{3}
	- 6\mzv{2}
,\label{eq:lbox3-zero}\displaybreak[0]\\
c_4 &=
 - 56\mzv{7}
 - 32\mzv{2}\mzv{5}
 + 32\mzv[2]{3}
 + \tfrac{8}{5}\mzv{3}
\left(
4\mzv[2]{2}
 - 15
\right)
 + \tfrac{992}{35}\mzv[3]{2}
 - 8\mzv[2]{2}
 - 18\mzv{2}
 ,\label{eq:lbox4-zero}\displaybreak[0]\\
 c_5 &= 56\mzv{7}
\left(
 \mzv{3} 
 - 5
\right)
 + 26\mzv[2]{5}
 + 4\mzv{5}
\left(
 8\mzv{2}\mzv{3}
 + 35\mzv{3}
 - 40\mzv{2}
 - 49
\right)
 + 4\mzv{3,7}
\nonumber\\&\quad
 + \tfrac{4}{5}\mzv[2]{3}
\left(
 140
 - 25\mzv{2}
 - 4\mzv[2]{2}
\right)
 + 4\mzv{2}
\left(
 2\mzv{3,5}
 - 21
\right)
 - \tfrac{1168}{385}\mzv[5]{2}
\nonumber\\&\quad
 + 20\mzv{3,5}
 - \tfrac{24}{7}\mzv[4]{2}
 + 8\mzv{3}
\left(
7\mzv{2}
 + 4\mzv[2]{2}
 - 14
\right)
 + \tfrac{496}{5}\mzv[3]{2}
\qquad\text{and}\label{eq:lbox5-zero}\displaybreak[0]\\
c_6 &=  \tfrac{\numprint{18864}}{35}\mzv[3]{2}
 + 336\mzv{3,5}
 - 12\mzv{9}
\left(
20\mzv{2}
 + 161
\right)
 + \tfrac{8}{5}\mzv{7}
\left(
104\mzv[2]{2}
 + 35\mzv{2}
 + 840\mzv{3}
 - 1120
\right)
\nonumber\\&\quad
 + 624\mzv[2]{5}
 + \tfrac{16}{35}\mzv{5}
\left(
1680\mzv{2}\mzv{3}
 - 3675
 - 12\mzv[3]{2}
 - 2240\mzv{2}
 + 490\mzv[2]{2}
 + 5145\mzv{3}
\right)
\nonumber\\&\quad
 + 96 \left( \mzv[2]{2} + \mzv{3,7} \right)
 - \tfrac{48}{5}\mzv[2]{3}
\left(
35\mzv{2}
 + 8\mzv[2]{2}
 - 60
\right)
 - \tfrac{32}{5}\mzv{3}
\left(
105
 - 32\mzv[2]{2}
 + 3\mzv[3]{2}
 - 75\mzv{2}
\right)
\nonumber\\&\quad
  +24\mzv{2}
	\left(
		 8\mzv{3,5}
		 - 21
	\right)
	- \tfrac{\numprint{28032}}{385}\mzv[5]{2}
   - \tfrac{288}{5}\mzv[4]{2}
  - 1320\mzv{11}
.\label{eq:lbox6-zero}
\end{align}
Furthermore note that phenomenological applications for ladder box integrals have recently reached the three-loop level \cite{HennSmirnov:PlanarThreeLoop} and the first computation with one leg off-shell has just been published \cite{DiVitaMastroliaSchubertYundin:ThreeLoopLadderBox}. It was obtained with the method of differential equations. Our results show that one can go much further and compute such integrals, to arbitrary loop order, also with two off-shell legs, with {\HyperInt}. 

Note that our theorem \ref{theorem:ladderbox-kinematics} proves that all integrals in the ladder box ``family'', that means including any minors with arbitrary (expanded around integer values) propagator powers, are expressible in terms of polylogarithms with alphabet \eqref{eq:ladderbox-full-kinematics}. The method of differential equations \cite{ArgeriMastrolia:FeynmanDiagramsDifferentialEquations,Henn:LecturesDE} generates a system of linear, first-order partial differential equations in terms of Feynman integrals in the family of the initial graph. Therefore our results show that this method is applicable (i.e. the differential equation can be solved in terms of polylogarithms) in principle to all ladder box graphs. However, for higher orders the derivation of the differential equation itself (which involves integration by parts reductions) is computationally very demanding, as is its subsequent solution. It therefore seems that direct parametric integration using hyperlogarithms might be a more efficient approach.\footnote{In particular, the computation of six-loop ladder boxes like \eqref{eq:lbox6-zero} currently seems to be far out of reach with the differential equations approach.}

\subsection{Extensions}
\begin{figure}
	\centering
	\begin{tabular}{c@{\qquad}c@{\qquad}c@{\qquad}c}
	$ \Graph[0.6]{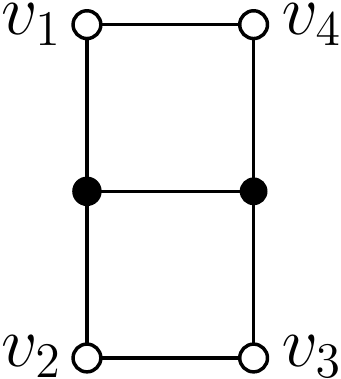} $ &
	$ \Graph[0.6]{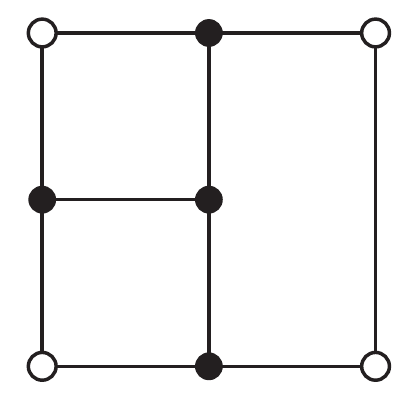} $ &
	$ \Graph[0.6]{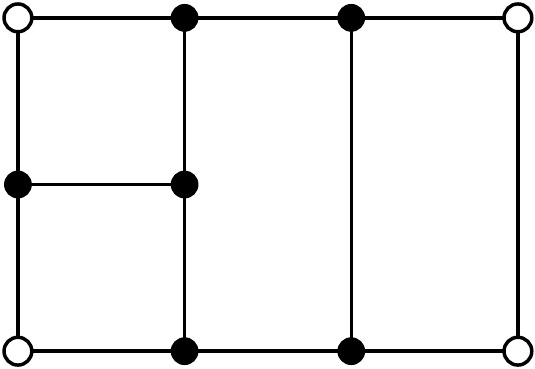} $ &
	$ \Graph[0.6]{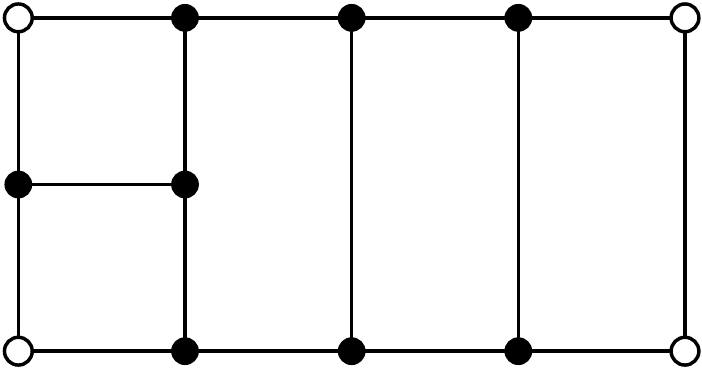} $ \\
	$ T_2 $ & $T_3$ & $T_4$ & $T_5$ \\
	\end{tabular}
	\caption{Vertical double box $T_2$, tennis court $T_3$ and generalizations.}%
	\label{fig:tennis-courts}%
\end{figure}%
\begin{example}
	We can extend theorem~\ref{theorem:ladderbox-kinematics} with every graph whose forest function has a compatibility graph bounded by $(S^{\ForestDeltaBoxSymbol}, C^{\ForestDeltaBoxSymbol})$. A classic example (apart from box ladders) is the \emph{tennis court} diagram $T_3$ shown in figure~\ref{fig:tennis-courts} for massless propagators and light-like (on-shell) external momenta.
	It was evaluated first in a very special case in \cite{BernDixonSmirnov:IterationPlanarAmplitudes} (using a Mellin-Barnes representation) in terms of harmonic polylogarithms and multiple zeta values. Recently its expansion in $\dimension = 4-2\varepsilon$ with unit indices $\EP_e=1$ was obtained to arbitrary order (with the differential equations method), also in terms of harmonic polylogarithms \cite{HennSmirnov:PlanarThreeLoop}.

	Therefore we expected it to be linear reducible, but this assumption fails in Schwinger parameters as was noted in \cite[figure~7.3 (b)]{Lueders:LinearReduction}. Our formalism of forest integrals (which uses different coordinates $f_i/\psipol$) does apply though: We could compute the forest function \eqref{eq:def-GfunForestBox} directly for the upright double box $T_2$ and obtained
	\begin{equation}
		\GfunForestBox{T_2}(z)
		= \frac{z_3 z_4}{(z_{14}+z_3 + z_4) \BoxPoly^2} \log \frac{(z_{14}+z_3)(z_{14}+z_4)}{z_3 z_4}
		\in \BarIntegralsRegulars[0]\left(S^{\ForestDeltaBoxSymbol}\right)
		\label{eq:GfunForestBox-T2}%
	\end{equation}
	in $\dimension=6$ dimensions with unit indices $\EP_e = 1$. Since all of the occurring polynomials $\set{\BoxPoly, z_{14}+z_3 + z_4,z_{14}+z_3, z_{14}+z_4}$ are mutually compatible in $C^{\ForestDeltaBoxSymbol}$, we immediately conclude that theorem~\ref{theorem:ladderbox-kinematics} extends to all graphs that we can construct from $T_2$ by iteration of the edge additions from figure~\ref{fig:boxladder-addedge}. This includes the original tennis court $T_3$ and the higher loop generalizations $T_n$. As a test we successfully calculated $\FR(T_5)$ in $\dimension=6$ dimensions.
\end{example}
\begin{figure}
	\centering
	$ G=\Graph[0.7]{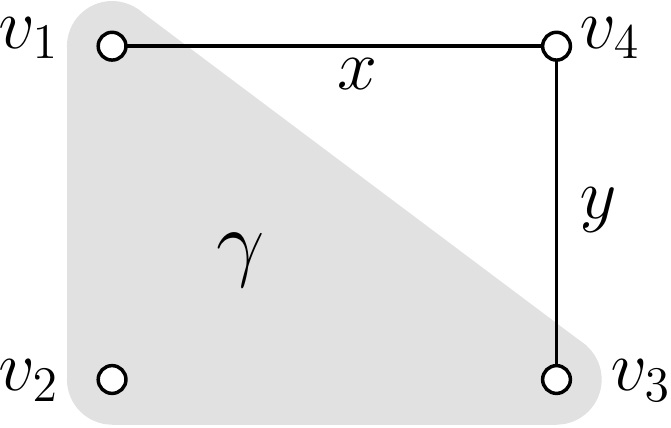} $ \qquad \qquad
	$ G_4=\Graph[0.5]{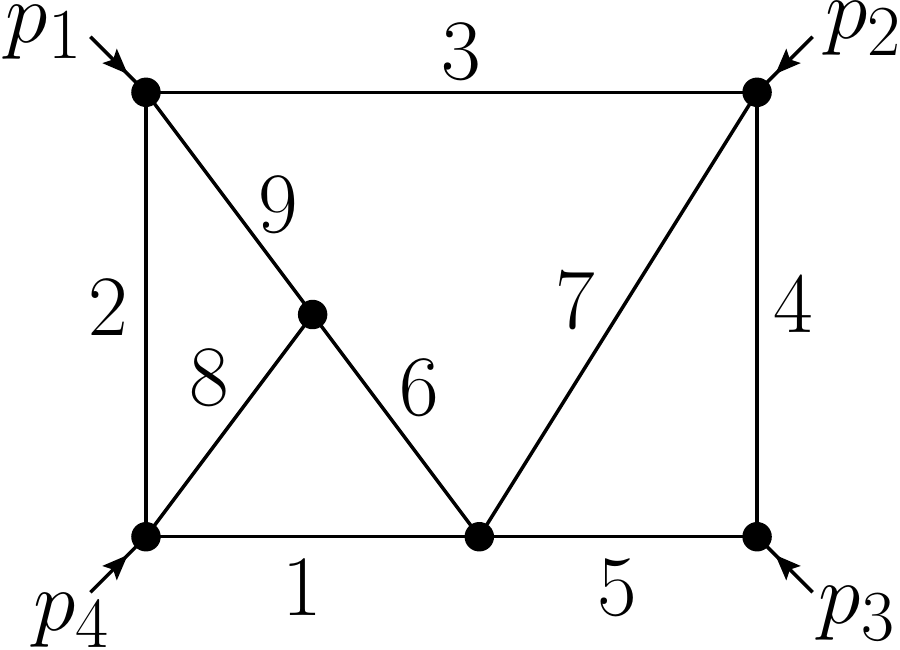} $
	\caption{Construction of a $4$-point function $G$ from a $3$-point function $\gamma$ and a linearly reducible $4$-loop $4$-point integral.}%
	\label{fig:3pt-to-4pt}%
\end{figure}%
\begin{example}\label{ex:vw3-box-reducibility}%
	In our experimental studies, we found several linearly reducible $4$-point functions which are not minors of box ladders. One example is the graph $G_4$ shown in figure~\ref{fig:3pt-to-4pt}, which we computed in \cite{Panzer:DivergencesManyScales} in terms of MZV and HPL. In $\dimension=4-2\varepsilon$ with unit indices $\EP_e=1$, the leading term\footnote{%
	Due to a subdivergence, we used a partial integration \eqref{eq:def-anapartial} to obtain a convergent parametric integral representation.}
$\FR(G_4)=f_{-1}/(s\varepsilon) + \bigo{\varepsilon^0}$ 
evaluates to the harmonic polylogarithms
	\begin{align}
		f_{-1} &=
 - \tfrac{79}{70}\mzv[3]{2}H_{{-1}}
 - \mzv{3}
\left(
15\mzv{2}H_{{-1,-1}}
 - 9\mzv{2}H_{{-1,0}}
 - H_{{-1,-2,-1}}
 + H_{{-1,-1,-2}}
 + 6H_{{-1,-1,0,0}}
\right)
\nonumber\\&\quad
 - 6\mzv[2]{3}H_{{-1}}
 - \tfrac{3}{2}\mzv{5}
\left(
11H_{{-1,-1}}
 - 5H_{{-1,0}}
\right)
 - \tfrac{3}{10}\mzv[2]{2}
\left(
H_{{-1,-2}}
 - 17H_{{-1,-1,0}}
 - 10H_{{-1,-1,-1}}
\right)
\nonumber\\&\quad
 - \mzv{2}
\Big(
 H_{{-1,-2,0,0}}
 - 2H_{{-1,-1,-2,0}}
 + 3H_{{-1,-1,-2,-1}}
 - H_{{-1,-1,-1,0,0}}
 + 6H_{{-1,-1,-3}}
\nonumber\\&\qquad\qquad
 - 3H_{{-1,-2,-1,-1}}
 - 2H_{{-1,-1,0,0,0}}
\Big)
 + H_{{-1,-2,-1,0,0,0}}
 - H_{{-1,-1,-2,-1,0,0}}
\nonumber\\&\quad
 + H_{{-1,-1,-2,0,0,0}}
 - 2H_{{-1,-1,-3,0,0}}
 + H_{{-1,-2,-1,-1,0,0}}

	\end{align}
	where $H_{\vec{n}} \defas H_{\vec{n}}(s/t)$ from \eqref{eq:compressed-Hpl-as-Hlog} with $s=(p_1+p_4)^2$ and $t=(p_1+p_4)^2$.
	We can now combine our propositions~\ref{prop:GfunStarTriangle-reduction} and \ref{prop:GfunForestBox-reduction} to extend our results on linear reducibility such that $G_4$ (and many more additional graphs) are covered.

	Consider a $3$-point graph $\gamma$ and add a fourth external vertex via edges $x = \set{v_4,v_1}$ and $y = \set{v_4,v_3}$ to define a $4$-point graph $G$ as illustrated in figure~\ref{fig:3pt-to-4pt}. Then one can show (with the same methods we used in section~\ref{sec:ladderbox-forestfunctions}) that
	\begin{equation}
		\GfunForestBox{G}(z)
		= \frac{\BoxPoly^4}{z_{12}^2 z_3^3 z_4^3} 
		\left( \frac{\BoxPoly}{z_3} \right)^{\EP_x - 1}
		\left( \frac{\BoxPoly}{z_{12}} \right)^{\EP_y - 1}
		\left( \frac{\BoxPoly^2}{z_{12} z_3 z_4} \right)^{-\dimension/2}
		\int_0^{\infty} \GfunForest{\gamma}\left( \frac{z_{14}\BoxPoly}{z_3 z_4}, u, \frac{\BoxPoly}{z_4} \right)\ \dd u.
		\label{eq:3pt-to-4pt}%
	\end{equation}
	If the inner forest function $\GfunForest{\gamma} \in \BarIntegralsRegulars(S^{\StarSymbol})$ has compatibilities in $C^{\StarSymbol}$, this formula shows via a linear reduction that $\GfunForestBox{G} \in \BarIntegralsRegulars(\set{z_{12},z_{14},z_3,z_4,\BoxPoly,z_{14}+z_3})$. In this case its compatibility graph is contained in $(S^{\ForestDeltaBoxSymbol}, C^{\ForestDeltaBoxSymbol})$ and we can append edges according to figure~\ref{fig:boxladder-addedge} without ever leaving this space of functions. If we apply this construction to $\gamma = \WS{3}^{-}$ (figure~\ref{fig:GfunForest-few}), we obtain first the subgraph $G$ of $G_4$ that consists of the edges $\set{1,2,3,6,7,8,9}$ and can then append $\set{4,5}$ to reach $G_4$.
\end{example}

\appendix
\chapter{Short reference of \HyperInt}
\label{chap:hyperint-reference}%

\section{Options and global variables}
\begin{description}
	\item[\code{\_hyper\_verbosity}]
		(default: $1$) \\
		The higher the value of this integer, the more progress information is printed during calculations. The value zero means no such output at all.

	\item[\code{\_hyper\_verbose\_frequency}]
		(default: $10$) \\
		Sets how often progress output is produced during integration or polynomial reduction.

	\item[\code{\_hyper\_return\_tables}]
		(default: \code{false}) \\
		When \code{true}, \code{integrationStep} returns a table instead of a list. This is useful for huge calculations, because {\Maple} can not work with long lists.
	\item[\code{\_hyper\_check\_divergences}]
		(default: \code{true}) \\
		When active, endpoint singularities at $z\rightarrow 0,\infty$ are detected in the computation of integrals $\int_0^{\infty} f(z)\ \dd z$.

	\item[\code{\_hyper\_abort\_on\_divergence}]
		(default: \code{true}) \\
		This option is useful when divergences are detected erroneously, as happens when periods occur for which no basis is supplied to the program.

	\item[\code{\_hyper\_divergences}]
		\hfill \\
		A table collecting all divergences that were detected.

	\item[\code{\_hyper\_max\_pole\_order}]
		(default: $10$) \\
		Sets the maximum values of $i$ and $j$ in \eqref{eq:check-divergences} for which the functions $F_{i,j}$ are computed to check for potential divergences $F_{i,j} \neq 0$.

	\item[\code{\_hyper\_splitting\_field}]
		(default: $\emptyset$) \\
		This set $R$ of radicals defines the field $\K=\Q(R)$ of constants over which all factorizations are performed.

	\item[\code{\_hyper\_algebraic\_roots}]
		(default: \code{false}) \\
		When \code{true}, all polynomials will be factored linearly by introducing algebraic functions as zeros whenever necessary. Further computations with such irrational letters are not supported. 

	\item[\code{\_hyper\_ignore\_nonlinear\_polynomials}]
		(default: \code{false}) \\
		If set to \code{true}, all non-linear polynomials (that would result in algebraic zeros as letters) will be dropped during integration. This is permissible when linear reducibility is granted.

	\item[\code{\_hyper\_restrict\_singularities}]
		(default: \code{false}) \\
		When \code{true}, the rewriting of $f$ as a hyperlogarithm in $z$ (performed during integration) projects onto the algebra $\HlogAlgebra(\Sigma)$ of letters $\Sigma$ specified by the roots of the set \code{\_hyper\_allowed\_singularities} (default: $\emptyset$) of irreducible polynomials. This can speed up the integration significantly.
\end{description}

\section{{\Maple} functions extended by {\HyperInt}}
\begin{description}
	\item[{\code{convert}$(f, \code{form})$}] with $\code{form} \in \set{\code{Hlog}, \code{Mpl}, \code{HlogRegInf}}$\\
		Rewrites polylogarithms $f$ in terms of hyper- or polylogarithms using lemma~\ref{lemma:Hlog-as-Mpl}. Choosing $\code{form}=\code{HlogRegInf}$ transforms $f$ into the list representation \eqref{eq:maple-internal-reginf}.

	\item[{\code{diff}$\left( f, z \right)$}] \hfill\\
		Computes the partial derivative $\partial_t f$ of hyperlogarithms $\code{Hlog}\left(z(t), w(t)\right)$ and multiple polylogarithms $\code{Mpl}\left( \vec{n},\vec{z}(t) \right)$ that occur in $f$, using \eqref{eq:hlog-total-differential} and \eqref{eq:def-Mpl}. Note that this works completely generally, i.e.\ also when a word $w(t)$ depends on $t$.

	\item[{\code{series}$\left(f, z=0 \right)$}] \hfill\\
		Implements the expansion of $f=\Hyper{w}(z)$ at $z\rightarrow 0$. To expand at different points, use $\code{fibrationBasis}$ first as explained in the manual.
\end{description}

\section{New functions provided by {\HyperInt}}
Note that there are further functions in the package, cf. the manual.
\begin{description}
	\item[{\code{hyperInt}$\left(f, \vec{z}\right)$}] with a list $\vec{z}=[z_1,\ldots,z_r]$ or single $\vec{z} = z_1$\\
		Computes $\int_0^{\infty} \dd z_r \ldots \int_0^{\infty} \dd z_1 f$ from right to left. Any variable can also be given as $z_i=a_i..b_i$ to specify the bounds $\int_{a_i}^{b_i} \dd z_i$ instead.

	\item[{\code{integrationStep}$\left( f, z \right)$}] \hfill\\
		Computes $\int_0^{\infty} f\ \dd z$ for $f$ in the form \eqref{eq:maple-internal-reginf}.

	\item[{\code{fibrationBasis}$\left(f, [z_1,\ldots,z_r], F, S\right)$}] \hfill\\
		Rewrites $f$ as an element of $\HlogAlgebra(\Sigma_1)(z_1) \tp \cdots \tp \HlogAlgebra(\Sigma_r)(z_r)\tp\C$ according to \eqref{eq:barintegrals-hlog-products}. Note that in general this will require algebraic alphabets $\Sigma_i \subset \overline{\C(z_{i+1},\ldots,z_r)}$ and the option \code{\_hyper\_algebraic\_roots = true} (even if in the final result all non-rational letters happen to cancel).\footnote{One can also use \code{\_hyper\_ignore\_nonlinear\_polynomials = true}, provided that one knows that only rational letters will remain.} 
		An optional table $F$ (with indexing function \code{sparsereduced}) may be supplied to store the result, otherwise \code{Hlog}-expressions are returned.
		
		If the optional fourth argument $S$ is supplied, it is assumed to be a table and for each defined key $z_i$ of $S$, the result is projected onto $\HlogAlgebra(\Sigma_i^S)(z_i)$ restricting to letters $\Sigma_i^S \defas \Sigma_i(S[z_i]) = \setexp{\text{zeros of $p(z_i)$}}{p \in S[{z_i}]}$. All words including other letters are dropped in the computation.

	\item[{\code{index/sparsereduced}}] \hfill\\
		This indexing function corresponds to {\Maple}'s \code{sparse}, but entries with value zero are removed from the table. It is used to collect coefficients of hyperlogarithms.

	\item[{\code{forgetAll}$()$}] \hfill\\
		Invalidates cache tables for internal functions and should be called whenever global options were changed.

	\item[{\code{transformWord}$(w, t)$}] \hfill\\
		Given a word $w=[\sigma_1,\ldots,\sigma_n] \in \Sigma^{\times}$ with letters $\Sigma \subset \C(t)$ that depend rationally on $t$, returns a list $[ [w_1, u_1], \ldots]$ of pairs such that
		\begin{equation*}
			\AnaReg{z}{\infty} \Hyper{w}(z)
			=
			\sum_i \Hyper{w_i}(t) \cdot \AnaReg{z}{\infty} \Hyper{u_i}(z)
		\end{equation*}
		following proposition~\ref{prop:reginf-as-hlog}. Each $u_i$ is given in the product form \eqref{eq:maple-internal-reginf}.

	\item[{\code{reglimWord}$(w, t)$}] \hfill \\
		Given a word $w=[\sigma_1,\ldots,\sigma_n] \in \Sigma^{\times}$ with rational letters  $\Sigma \subset \C(t)$ and $\sigma_n \neq 0$, it implements our algorithm from section~\ref{sec:reglim-reginf} and returns a linear combination $u$ of words in the representation \eqref{eq:maple-internal-reginf} such that
		\begin{equation*}
			\AnaReg{t}{0} \AnaReg{z}{\infty} \Hyper{w}(z)
			=
			\AnaReg{z}{\infty} \Hyper{u}(z).
		\end{equation*}

	\item[{\code{integrate}$(f, z)$}] \hfill \\
		Takes a hyperlogarithm $f(z)$ in the form \eqref{eq:maple-internal-hyperlog} and returns a primitive $F$ such that $\partial_z F(z) = f(z)$, which is computed following the proof of lemma~\ref{lemma:hlog-primitives}.

	\item[{\code{cgReduction}$\left( L, \code{todo}, d \right)$}] \hfill \\
		Computes compatibility graphs $L[I_i]=(S_{I_i},C_{I_i})$ (and stores them in the table $L$) for all sets $I_i$ of variables asked for in the list $\code{todo}=[I_1,\ldots]$. This implements the original algorithm presented in \cite{Brown:PeriodsFeynmanIntegrals} and considers only projections where each polynomial is of degree $d$ (default value $d=1$) or less in the reduction variable.
		
		One may also pass a set in the parameter \code{todo} (as opposed to a list). In this case reductions are computed for all sets $K$ that do not contain any element of $\code{todo}$. This is useful when one wants to compute all reductions of a Feynman graph with respect to Schwinger parameters (one would set $\code{todo} = \Kinematics$ to ignore all reductions which involve kinematic invariants).

	\item[{\code{checkIntegrationOrder}$\left( L, \vec{z} \right)$}] \hfill\\
		Tests whether for the order $\vec{z}=[z_1,\ldots]$ all polynomials in the reduction $L$ are linear in the corresponding $z_i$ and prints the number of polynomials.
\end{description}

\subsection{Functions related to Feynman integrals}
\label{sec:function-list-feynman}%
In the following functions, graphs $G=(\vertices,\edges)$ are always assumed to be connected and encoded only by their list $\edges = [e_1,\ldots, e_{\abs{\edges}}]$ of oriented edges $e=[\source(e),\target(e)]$ which are defined by a pair of vertices (the choice of orientation does not matter). All vertices $\vertices = \bigcup_{e\in\edges} e$ must be integers and numbered consecutively such that $\vertices=\set{1,\ldots,\abs{\vertices}}$.
\begin{description}
	\item[{\code{graphPolynomial}$(\edges)$}] \hfill\\
		Computes the first Symanzik polynomial $\psipol$ of the graph with edges $\edges$ using \eqref{eq:graph-polynomials}.

	\item[{\code{forestPolynomial}$(\edges, P)$}] \hfill\\
		Returns the spanning forest polynomial $\forestpolynom{P}$ from definition~\ref{def:forestpolynom} of the graph with edge list $\edges$. The partition $P=\set{P_1,\ldots,P_r}$ of a subset of vertices must consist of pairwise disjoint, non-empty parts $P_i$.

	\item[{\code{secondPolynomial}$(\edges, \ExMom, m)$}] \hfill\\
		Computes the second Symanzik polynomial $\phipol$ for the graph with edges $\edges$ that denote scalar propagators $P_e = k_e^2 + m_e^2$. External momenta $\ExMom(v_i)$ entering at vertex $v_i$ must be passed as a list $\ExMom=[ [v_1,\ExMom(v_1)^2],\ldots]$. The list $m=[m_1^2,\ldots,m_{\abs{\edges}}^2]$ of internal masses is optional. If it is omitted, the massless case $m_1=\cdots=m_{\abs{\edges}}=0$ will be assumed.

	\item[{\code{graphicalFunction}$(\edges,\vertices_{\text{ext}})$}] \hfill\\
		Returns the projective parametric integrand \eqref{eq:position-space-projective-dual} with \eqref{eq:phipol-graphical-function} for a \emph{graphical function} \cite{Schnetz:GraphicalFunctions} in $\dimension=4$ dimensions. The edge list $\edges=[e_1,\ldots,e_{\abs{\edges}}]$ can contain sets $e_i=\set{v_{i,1},v_{i,2}}$ to denote propagators (with $\EP_{e_i}=1$) and lists $e_i=[v_{i,1},v_{i,2}]$ for inverse propagators (in the numerator, that is $\EP_{e_i}=-1$) in compatibility with {\polylogprocedures}\footnote{This {\Maple} program for graphical functions by Oliver Schnetz is described in \cite{Schnetz:GraphicalFunctions} and can be obtained from \cite{Schnetz:ZetaProcedures}.}.

		The external vertices must be specified in the order $\vertices_{\text{ext}} = [v_z,v_0,v_1,v_{\infty}]$, where $v_{\infty}$ is optional. When present, $v_{\infty}$ is first deleted from the graph and the graphical function of the remainder is computed.

	\item[{\code{drawGraph}$(\edges, \ExMom, m, s)$}] \hfill\\
		Draws the graph defined by the edge list $\edges$. The remaining parameters are optional: $\ExMom$ and $m$ are as for \code{secondPolynomial} and highlight the external vertices and massive edges, while $s \in \set{\text{circle}, \text{tree}, \text{bipartite}, \text{spring}, \text{planar}}$ sets the style of the drawing as in \code{GraphTheory[DrawGraph]}.

	\item[{\code{findDivergences}$\left(I, \Kinematics \right)$}] \hfill\\
		For any scaling vector $\ScaleVec$ with $\ScaleVec_e \in\set{-1,0,1}$ the degree $\omega_{\ScaleVec}(I)$ of divergence is computed. The return value is a table indexed by sets $R$ and holds those $\sdd_\ScaleVec(I)$ that are $\leq 0$ when $\varepsilon=0$. Each $R$ contains only variables or their inverses and encodes the vector $\ScaleVec$ through $\ScaleVec_e = \pm 1$ when $\SP_e^{\pm 1} \in R$ and $\ScaleVec_e=0$ otherwise.

		The variables in the set $\Kinematics$ are considered fixed parameters (not to be integrated over), so only sets with $R \cap \setexp{z,z^{-1}}{z \in \Kinematics} = \emptyset$ will be considered.
		\begin{remark}
			This method is only guaranteed to detect divergences completely when $I$ is the parametric integrand of a Feynman integral with Euclidean kinematics (corollary~\ref{corollary:projective-convergence-euclidean}). Otherwise more general scaling vectors can be relevant and an algorithm as presented in \cite{PakSmirnov:GeometricApproachAsymptotic} should be used instead.
		\end{remark}
	\item[{\code{dimregPartial}$\left( I, R, \sdd \right)$}] \hfill\\
		Computes the new integrand $\anapartial{\ScaleVec}(I)$ after a partial integration according to \eqref{eq:def-anapartial}. The scaling vector is specified through the set $R$ which may contain variables and their inverses such that $\rho_e = \pm 1$ if $\SP_e^{\pm 1} \in R$ and $\ScaleVec_e = 0$ otherwise. The degree of divergence must be passed as $\sdd = \sdd_{\ScaleVec}(I)$.
\end{description}

\chapter{Explicit results}\label{chap:results}
Mainly for illustration of the different kind of results we obtained, a few explicit examples are shown in this chapter. A systematic and complete computation of massless propagators up to three loops (with some examples at four loops) was presented in \cite{Panzer:MasslessPropagators} and several multi-scale expansions are demonstrated in \cite{Panzer:DivergencesManyScales}.

\section{Integrals of the Ising class}\label{sec:IsingE}
The first values of the integrals $E_n$ from \eqref{eq:def:IsingE} are
\begin{align}
	E_2 &= 6 - 8\ln 2
,
	\label{eq:IsingE-2}\displaybreak[0]\\
	E_3 &= 32 \ln^2 2-12\mzv{2}-8 \ln 2 +10
,
	\label{eq:IsingE-3}\displaybreak[0]\\
	E_4 &= -\tfrac{256}{3} \ln^{3} 2
-82 \mzv{3}
+\left(
	96 \ln 2
	-44
\right)\mzv{2}
+176 \ln^{2} 2
-24 \ln 2
+22
,
	\label{eq:IsingE-4}\displaybreak[0]\\
	E_5 &= \tfrac{512}{3} \ln^{4} 2
-\tfrac{318}{5}\mzv[2]{2}
-992\mzv{1,-3}
+\left(
	464\mzv{3}
	-40
\right) \ln 2
\nonumber\\&\quad
+\left(
	80 \ln 2
	-124
	-256 \ln^{2} 2
\right)\mzv{2}
-74\mzv{3}
+464 \ln^{2} 2
+42
,
	\label{eq:IsingE-5}\displaybreak[0]\\
	E_6 &= -\tfrac{4096}{15} \ln^{5} 2
+768 \ln^{4} 2
+\left(
	\tfrac{1024}{3}\mzv{2}
	+\tfrac{704}{3}
\right) \ln^{3} 2
+\left(
	384\mzv{3}
	+512\mzv{2}
	+1360
\right) \ln^{2} 2
\nonumber\\&
-\left(
	\tfrac{3216}{5}\mzv[2]{2}
	-\numprint{11520}\mzv{1,-3}
	+2632\mzv{3}
	+272\mzv{2}
	88
\right)\ln 2
+\tfrac{\numprint{53775}}{2}\mzv{5}
\nonumber\\&
+830\mzv[2]{2}
-\left(
	\numprint{13964}\mzv{3}
	+348
\right)\mzv{2}
+\numprint{27904}\mzv{1,1,-3}
-6048\mzv{1,-3}
+134\mzv{3}
+86
,
	\label{eq:IsingE-6}\displaybreak[0]\\
	E_7 &= \numprint{63616}\mzv{1,1,-3}
-\numprint{575488}\mzv{1,1,1,-3}
+\tfrac{\numprint{16384}}{45} \ln^{6} 2
+\tfrac{4096}{15} \ln^{5} 2
+2432 \ln^{4} 2
\nonumber\\&
+\left(
	\tfrac{512}{3}\mzv{2}
	-\tfrac{\numprint{20992}}{3}\mzv{3}
+832\right) \ln^{3} 2
+\left(
	\tfrac{\numprint{69056}}{5}\mzv[2]{2}
	+6400\mzv{3}
	+2336\mzv{2}
	+3280
\right) \ln^{2} 2
\nonumber\\&
+\left(
	\numprint{161760}\mzv{2}\mzv{3}
	-\numprint{340588}\mzv{5}
	-688\mzv{2}
	-9304\mzv[2]{2}
	-168
	-\numprint{312320}\mzv{1,1,-3}
	-\numprint{12472}\mzv{3}
\right)\ln 2
\nonumber\\&
+\big(
	\numprint{19840} \ln 2
	-\numprint{21696}
	-\numprint{64000}\ln^{2} 2
	-8320 \mzv{2}
\big) \mzv{1,-3}
+\numprint{942624}\mzv{1,-5}
-\numprint{32624}\mzv{2}\mzv{3}
\nonumber\\&
+\tfrac{\numprint{149851}}{2}\mzv{5}
+\tfrac{\numprint{4209858}}{35}\mzv[3]{2}
+\tfrac{\numprint{18402}}{5}\mzv[2]{2}
-844\mzv{2}
-\numprint{380881}\mzv[2]{3}
+350\mzv{3}
+170
,
	\label{eq:IsingE-7}\displaybreak[0]\\
	E_8 &= \numprint{12926976} \mzv{1,1,1,1,-3} 
+\big(
	\numprint{211456} \mzv{2}
	+\numprint{1761280} \ln^{2} 2 
	-\numprint{1697792} \ln 2
	+\numprint{192128}
\big)
 \mzv{1,1,-3}
\nonumber\\&
+\Big(
	\numprint{282176} \mzv{3}
	+\left(
		\numprint{40960} \ln 2 
		+\numprint{32128}
	\right)
	\mzv{2}
	-\numprint{294912} \ln^{2} 2
	+\numprint{22656} \ln 2 
	-\numprint{84704}
\nonumber\\&\qquad
	+\tfrac{\numprint{655360}}{3} \ln^{3} 2
\Big)
 \mzv{1,-3}
-\tfrac{\numprint{62466560}}{17} \mzv{1,3,-3}
+\left(
	\numprint{7045120} \ln 2 
	-\numprint{3602432}
\right)
 \mzv{1,1,1,-3}
\nonumber\\&
+\Big(
	(
		\numprint{687888} \ln 2
		-\numprint{818624} \ln^{2} 2 
		-\numprint{62372}
	)
	 \mzv{2}
	+\tfrac{\numprint{77824}}{3} \ln^{4} 2
	-\tfrac{\numprint{8206978}}{17} \mzv[2]{2}
	-\tfrac{\numprint{210176}}{3} \ln^{3} 2
\nonumber\\&\qquad
	+\numprint{17072} \ln^{2} 2
	-\numprint{53064} \ln 2 
	+1790
\Big)
 \mzv{3}
-\tfrac{\numprint{230302165}}{136} \mzv{7}
+\tfrac{\numprint{1493504}}{17} \mzv{1,1,-5}
\nonumber\\&
+\left(
	\tfrac{\numprint{4034546}}{5}
	-\tfrac{\numprint{54575568}}{35} \ln 2
\right)
 \mzv[3]{2}
+\left(
	\numprint{4757064} \ln 2
	-\numprint{2434920}
\right)
 \mzv[2]{3}
+\numprint{6195680} \mzv{1,-5}
\nonumber\\&
+\left(
	-\tfrac{\numprint{1022464}}{15} \ln^{3} 2
	+\tfrac{\numprint{591744}}{5} \ln^{2} 2
	-\tfrac{\numprint{169624}}{5} \ln 2
	+\tfrac{\numprint{76958}}{5}
\right)
 \mzv[2]{2}
+\tfrac{\numprint{340095}}{2} \mzv{5}
+342
\nonumber\\&
+\Big(
	\tfrac{\numprint{33352925}}{17} \mzv{5}
	-\tfrac{\numprint{16384}}{15} \ln^{5} 2
	+\tfrac{\numprint{28672}}{3} \ln^{4} 2
	+256 \ln^{3} 2
	+\numprint{11424} \ln^{2} 2 
	-2960 \ln 2
	-2060
\Big)
 \mzv{2}
\nonumber\\&
-\tfrac{\numprint{131072}}{315} \ln^{7} 2
+\tfrac{\numprint{57344}}{45} \ln^{6} 2
+2048 \ln^{5} 2
+7488 \ln^{4} 2
+2752 \ln^{3} 2
\nonumber\\&
+\left(
	\numprint{1977632} \mzv{5}
	+8080
\right)
 \ln^{2} 2
-\left(
	\numprint{12015360} \mzv{1,-5} 
	+\numprint{1819522} \mzv{5}
	+344
\right)
 \ln 2
.
	\label{eq:IsingE-8}%
\end{align}

\begin{figure}\centering
	\scalebox{1.2}{$\Graph[0.4]{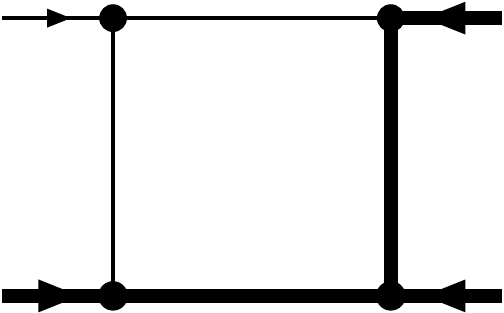}$
		\quad
		$\Graph[0.4]{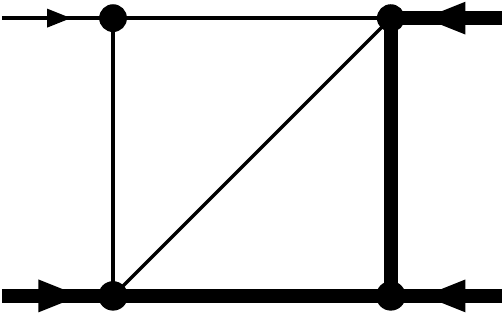}$
		\quad
		$\Graph[0.4]{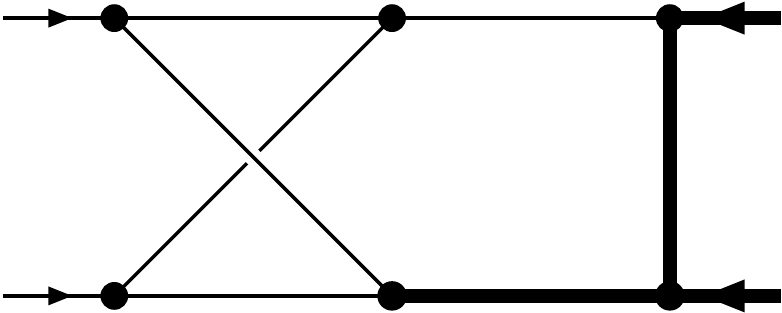}$
		\quad
		$\Graph[0.4]{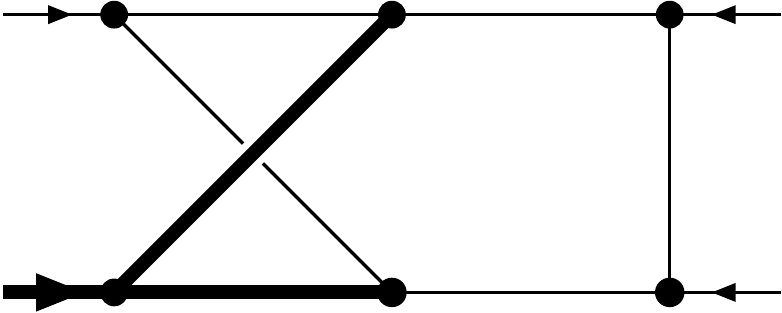}$}
		\\
		\scalebox{1.2}{$\Graph[0.3]{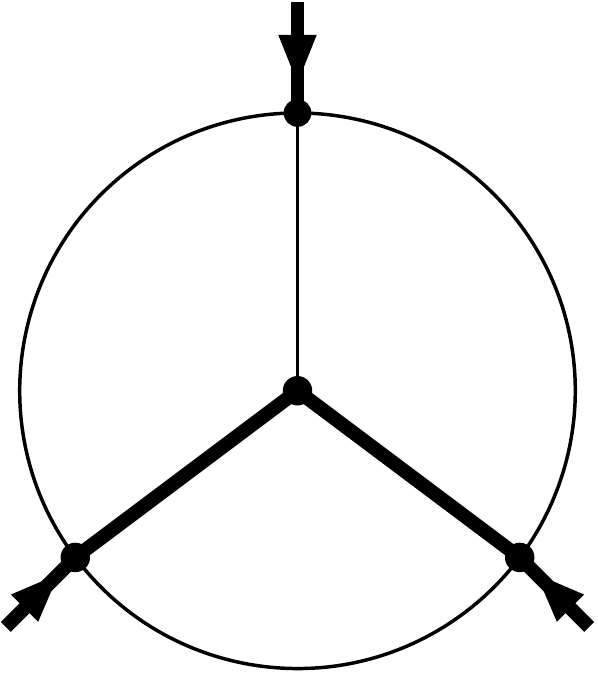}$
		\quad
		$\Graph[0.3]{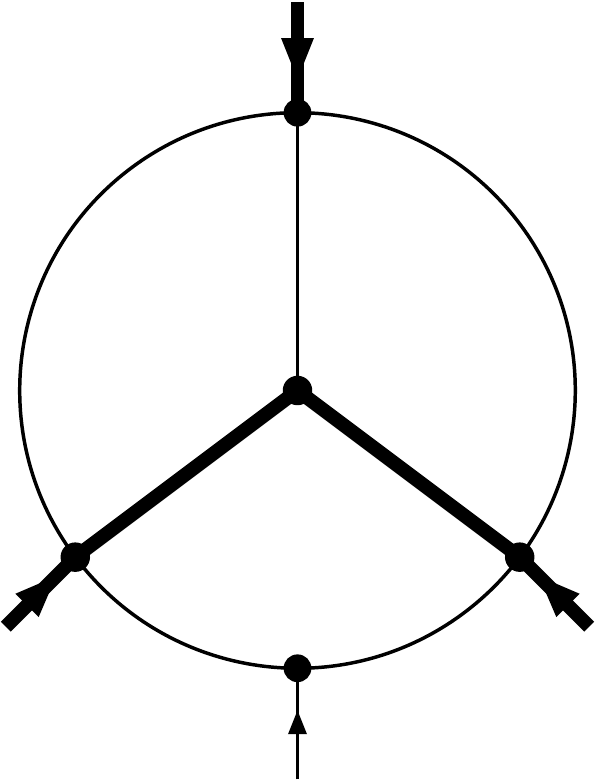}$
		\quad
		$\Graph[0.33]{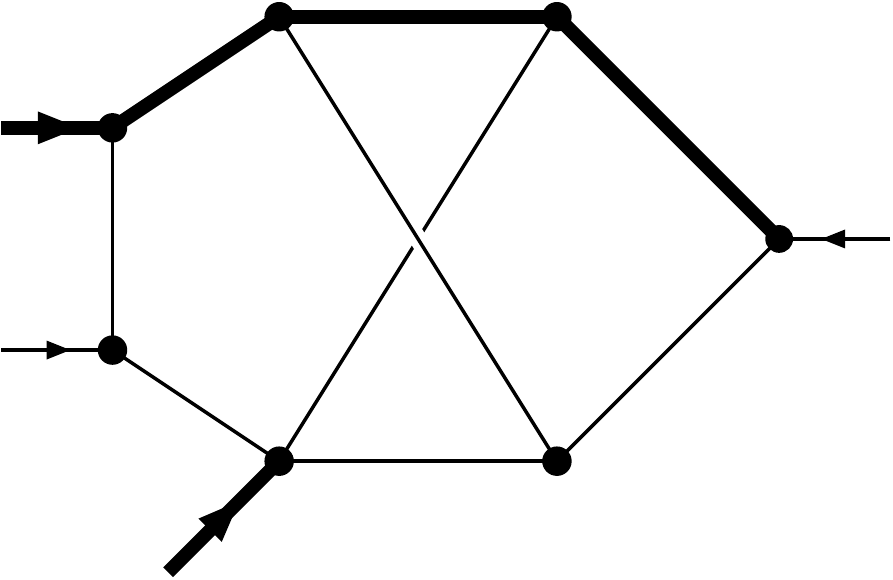}$
		\quad
		$\Graph[0.3]{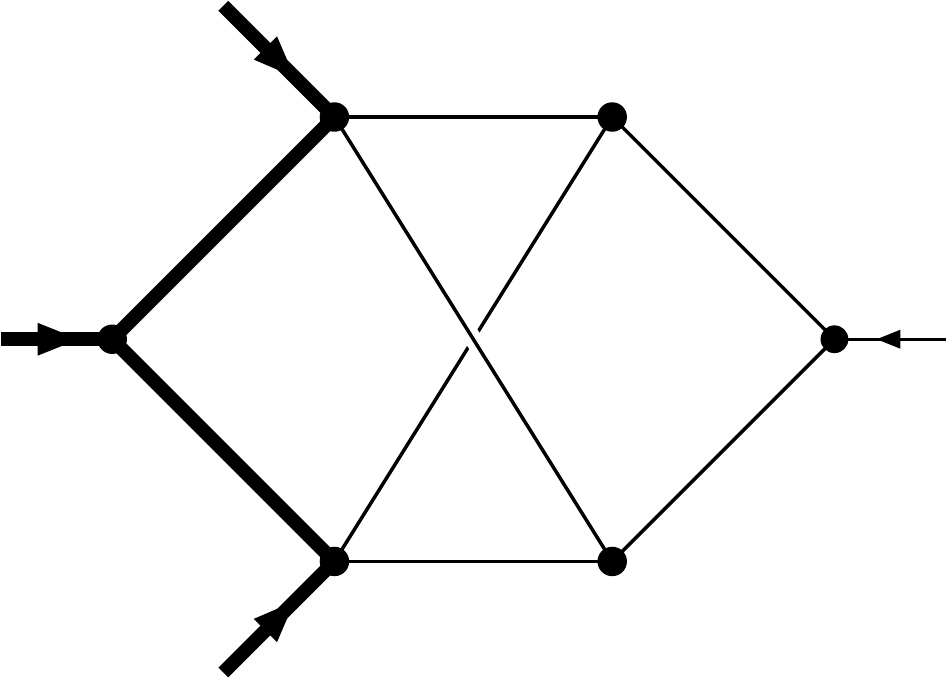}$}
		\caption[Linearly reducible graphs with masses]{%
Examples of linearly reducible graphs with some massive internal and off-shell external momenta (thick edges).}%
	\label{fig:massive}%
\end{figure}
\section{A massive 2-loop 6-scale integral}\label{sec:massive-integrals}
We found several Feynman integrals with massive internal propagators that are linearly reducible (in Schwinger parameters). Some examples are shown in figure~\ref{fig:massive} and a few explicit results were computed \cite{Panzer:DivergencesManyScales}, like the crossed box (or double-triangle)
\begin{equation}%
	\label{eq:diagbox-2mass2off-expansion}%
	\FR\left( \Graph[0.4]{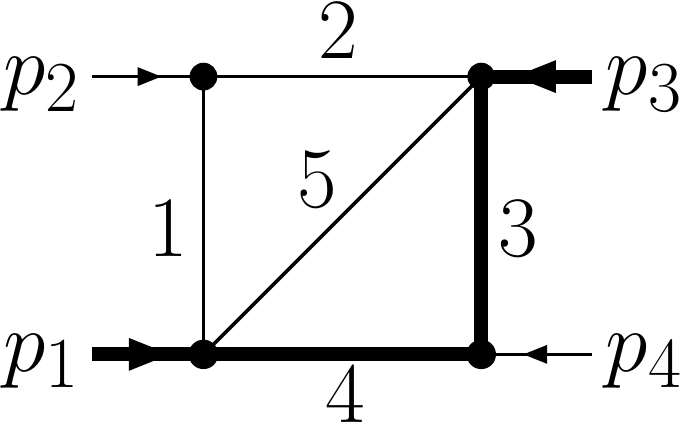} \right)
	= \frac{\Gamma(1+2\varepsilon)}{(p+q-s-u)m_3^{2+4\varepsilon}}
		\sum_{n=-1}^{\infty} f_n(p,s,u,q,m) \cdot \varepsilon^n
\end{equation}
in $\dimension=4-2\varepsilon$ dimensions with unit indices $\EP_e=1$.
It depends on the masses of propagators $3$ and $4$, the Mandelstam invariants and the off-shell momenta $p_3$ and $p_4$ (we assume $p_2^2 = p_4^2 = 0$). We scaled out $m_3^2$ and introduced the dimensionless variables
\begin{equation}%
	\label{eq:diagbox-2mass2off-variables}%
	s \defas \frac{(p_1+p_2)^2}{m_3^2},
	\quad
	u \defas \frac{(p_1+p_4)^2}{m_3^2},
	\quad
	p \defas \frac{p_1^2}{m_3^2},
	\quad
	q \defas \frac{p_3^2}{m_3^2}\
	\quad\text{and}\quad
	m \defas \frac{m_4^2}{m_3^2}.
\end{equation}
Note that \eqref{eq:diagbox-2mass2off-expansion} has a pole in $\varepsilon$ which reflects the infrared subdivergence $\gamma = \set{3,4,5}$. To compute the $\varepsilon$-expansion, we applied the corresponding operator \eqref{eq:def-anapartial} to obtain a convergent parametric integral that we expand before the integration.

The polynomial reduction shows linear reducibility leaves the singularities
\begin{align}\label{eq:diagbox-2mass2off-symbol}
	S_{\set{3,4,5,2}}
	=
	\Big\{
	&
		1-m,
		p+m,
		p-s,
		p-u,
		1+q,
		q-s,
		s+m,
		q-u,
		1+u,
		pq-us,
		s-qm,
	\nonumber\\&
		p-um,
		1-p-m+u,
		p-s-u+q,
		p-s+qm-um,
		s-pq-qm+us,
	\nonumber\\&
		1-s-m+q,
		p-us-um+pq,
		pq+p-us-s+qm-um
	\Big\}
\end{align}
which are linear in each variable, hence we can express the coefficient functions $f_n$ in terms of hyperlogarithms with rational letters. We choose the base point at $0\ll p \ll s \ll u \ll q \ll m$ and abbreviate $S_w \defas \Hyper{w} (s)$, $U_w \defas \Hyper{w} (u)$, $M_w \defas \Hyper{w} (m)$, $P_w \defas \Hyper{w} (p)$ and $Q_w \defas \Hyper{w} (q)$. Then the leading term  evaluates to
\begin{align}%
	f_{-1} &=
M_0 \left(
	Q_{{0,-1+m}}
	-P_{{u}}U_{{-1+m}}
	+S_{{q}}Q_{{-1+m}}
	-U_{{0,-1+m}}
\right)
-S_{{m \left( -u+q \right) ,qm,-m}}
+P_{{{\frac {us}{q}},s}}S_{{-m}}
\nonumber\\&\quad
-S_{{-m}}P_{{s+um-qm,s}}
+P_{{u}}U_{{-1+m,-1}}
-S_{{q}}Q_{{-1+m,-1}}
+P_{{u,-m+u+1}} \left( U_{{-1}}-M_{{0}} \right)
\nonumber\\&\quad
+U_{{-1}} \left(
	P_{{{\frac {us}{q}},um}}
	-P_{{s+um-qm,um}}
	-P_{{u,um}}
\right)
+ Q_{{-1}} \left(
	S_{{m \left( -u+q \right) ,qm}}
	-S_{{0,qm}}
	+S_{{q,qm}}
\right)
\nonumber\\&\quad
+P_{{s+um-qm,s,-m}}
-P_{{{\frac {us}{q}},um,-m}}
+S_{{0,0,-m}}
-P_{{{\frac {us}{q}},s,-m}}
+P_{{{\frac {us}{q}},0,-m}}-S_{{0,{\frac {m \left( -u+q \right) }{u+1}},-m}}
\nonumber\\&\quad
-S_{{q,qm,-m}}
+S_{{q,-m+q+1,-m}}
-P_{{u,-m+u+1,-m}}
+S_{{0,qm,-m}}
+P_{{s+um-qm,um,-m}}
\nonumber\\&\quad
+P_{{{\frac {us}{q}},-{\frac {-us-s-um+qm}{q+1}},-m}}
+U_{{0,-1+m,-1}}
+S_{{q,0,-m}}
-P_{{u,0,-m}}
-P_{{s,0,-m}}
-Q_{{0,-1+m,-1}}
\nonumber\\&\quad
-P_{{s+um-qm,-{\frac {-us-s-um+qm}{q+1}},-m}}
+S_{{m \left( -u+q \right),{\frac {m \left( -u+q \right) }{u+1}},-m}}
-S_{{m \left( -u+q \right) ,0,-m}}
+P_{{u,um,-m}}
\nonumber\\&\quad
+ \left( U_{{-1}}-Q_{{-1}} \right)\left(
	S_{{m \left( -u+q \right) ,{\frac {m \left( -u+q \right) }{u+1}}}} 
	-S_{{0,{\frac {m \left( -u+q \right) }{u+1}}}}
\right)
+S_{{m \left( -u+q \right) }} \left( -U_{{0,-1}}+Q_{{0,-1}} \right)
\nonumber\\&\quad
+\left(S_{{-m}}+U_{{-1}}-Q_{{-1}} \right) \left[
		P_{{s+um-qm,-{\frac {-us-s-um+qm}{q+1}}}} 
		-P_{{{\frac {us}{q}},-{\frac {-us-s-um+qm}{q+1}}}}
	\right]
+P_{{s}}S_{{0,-m}}
\nonumber\\&\quad
+\left( P_{{s+um-qm}} -P_{{{\frac {us}{q}}}} \right) \left[
	S_{{{\frac {m \left( -u+q \right) }{u+1}}}}
	\left( U_{{-1}}-Q_{{-1}} \right)
	-S_{{{\frac {m \left( -u+q \right) }{u+1}},-m}} 
	+S_{{qm}}Q_{{-1}}
	-S_{{qm,-m}}
\right]
\nonumber\\&\quad
+S_{{q,-m+q+1}} \left( -Q_{{-1}}+M_{{0}} \right) 
+P_{{s+um-qm}} 
\left( 
	Q_{{0,-1}}
	-S_{{0,-m}}
	-U_{{0,-1}}
\right) 
	\label{eq:diagbox-2mass2off-1}%
\end{align}
and the results for $f_0$, $f_1$ and $f_2$ are provided in \cite{Panzer:DivergencesManyScales}. To our knowledge, this is the first higher order calculation of a two-loop integral that involves $2$ masses and as many as $6$ kinematic scales in total.

\section{The 4-loop ladder box}\label{sec:lbox4-onshell}
In $\dimension=6$, the $4$-loop ladder box $B_4$ evaluates on-shell ($p_1^2=\cdots=p_4^2=0$) to harmonic polylogarithms $H_{\vec{n}} \defas H_{\vec{n}}(x)$ of the ratio $x=t/s$ of Mandelstam invariants $t = (p_1+p_4)^2$ and $s=(p_1+p_2)^2$. In the notation \eqref{eq:compressed-Hpl-as-Hlog} we find
\begin{align}
\FR(B_4) =&{} \frac{A}{s+t} + \frac{B}{t} \qquad\text{where}
\\
A=&{}
-\tfrac{8}{5} \left(
24 H_{-2}
-30 H_{-1}
+5 H_{0}
-18 H_{-2,-1}
+3 H_{-2,0}
-18 H_{-1,-2}
-6 \mzv{3}
+18\right)
 \mzv[2]{2}
\nonumber\\&{}
+ 6 H_{-2,0,0}
-\tfrac{248}{35} \left(
3 H_{-1}
-4\right)
 \mzv[3]{2}
-8 \left(
3 H_{-1}
-4\right)
 \mzv[2]{3}
-6 H_{-1,0,0}
+32 H_{-3,-1,0,0}
\nonumber\\&{}
+8 H_{-2,-2,0,0}
-10 H_{-2,-1,0,0}
-10 H_{-1,-2,0,0}
-6 H_{-2,-2,-1,0,0}
-6 H_{-2,-1,-2,0,0}
\nonumber\\&{}
+2 \Big(
9 H_{-2}
-9 H_{-1}
+3 H_{0}
+48 H_{-3,-1}
-16 H_{-3,0}
+12 H_{-2,-2}
-15 H_{-2,-1}
\nonumber\\&{}\qquad
+5 H_{-2,0}
-15 H_{-1,-2}
-6 H_{0,0}
-9 H_{-2,-2,-1}
+3 H_{-2,-2,0}
-9 H_{-2,-1,-2}
\nonumber\\&{}\qquad
-8 H_{-2,0,0}
-36 H_{-1,-3,-1}
+12 H_{-1,-3,0}
-9 H_{-1,-2,-2}
+10 H_{-1,0,0}
-24 \mzv{5}
\nonumber\\&{}\qquad
+6 H_{-2,-1,0,0}
+6 H_{-1,-2,0,0}
\Big)
 \mzv{2}
-12 \left(
3 H_{-2}
+5\right)
 \mzv{5}
-6 H_{-1,-2,-2,0,0}
\nonumber\\&{}
+2 \Big(
16 H_{-3}
-5 H_{-2}
-6 H_{0}
-3 H_{-2,-2}
-8 H_{-2,0}
-12 H_{-1,-3}
+10 H_{-1,0}
-3
\nonumber\\&{}\qquad
+6 H_{-2,-1,0}
+6 H_{-1,-2,0}
\Big)
 \mzv{3}
-84 \mzv{7}
-24 H_{-1,-3,-1,0,0}
\qquad\text{and}
\\ B =&{}
-\tfrac{8}{5} \Big(
2 H_{-1} \mzv{3}
-18 H_{-1}
+30 H_{-1,-1}
-5 H_{-1,0}
+6 H_{-1,-2,-1}
-H_{-1,-2,0}
 \nonumber\\&{}\qquad
+6 H_{-1,-1,-2}\Big)
 \mzv[2]{2}
+\tfrac{248}{35} H_{-1,-1} \mzv[3]{2}
+8 H_{-1,-1} \mzv[2]{3}
+28 H_{-1} \mzv{7}
-24 H_{-2,-1,0,0}
 \nonumber\\&{}
+18 H_{-1,-1,0,0}
+10 H_{-1,-2,-1,0,0}
+10 H_{-1,-1,-2,0,0}
+2 H_{-1,-2,-2,-1,0,0}
\nonumber\\&{}
+2 \Big(
8 H_{-1} \mzv{5}
-36 H_{-2,-1}
+12 H_{-2,0}
-9 H_{-1,-2}
+27 H_{-1,-1}
-9 H_{-1,0}
\nonumber\\&{}\qquad
+15 H_{-1,-2,-1}
-5 H_{-1,-2,0}
+15 H_{-1,-1,-2}
+6 H_{-1,0,0}
+3 H_{-1,-2,-2,-1}
\nonumber\\&{}\qquad
-H_{-1,-2,-2,0}
+3 H_{-1,-2,-1,-2}
+12 H_{-1,-1,-3,-1}
-4 H_{-1,-1,-3,0}
\nonumber\\&{}\qquad
+3 H_{-1,-1,-2,-2}
-10 H_{-1,-1,0,0}
-2 H_{-1,-2,-1,0,0}
-2 H_{-1,-1,-2,0,0}\Big)
 \mzv{2}
 \nonumber\\&{}
-2 \Big(
12 H_{-2}
-9 H_{-1}
-5 H_{-1,-2}
-6 H_{-1,0}
-H_{-1,-2,-2}
-4 H_{-1,-1,-3}
+10 H_{-1,-1,0}
\nonumber\\&{}\qquad
+2 H_{-1,-2,-1,0}
+2 H_{-1,-1,-2,0}\Big)
 \mzv{3}
+12 \left(
5 H_{-1}
+H_{-1,-2}\right)
 \mzv{5}
-6 H_{-1,-2,0,0}
 \nonumber\\&{}
+2 H_{-1,-1,-2,-2,0,0}
+2 H_{-1,-2,-1,-2,0,0}
+8 H_{-1,-1,-3,-1,0,0}
.
\end{align}

\chapter{Erratum to Lewin}\label{chap:LewinTypos}
Plenty of functional and integral equations of polylogarithms, taken from the excellent books \cite{Lewin:PolylogarithmsAssociatedFunctions,Lewin:StructuralPropertiesPolylogarithms}, were used as checks for our program {\HyperInt}. These tests revealed a very few misprints in \cite{Lewin:PolylogarithmsAssociatedFunctions}. Because this work is still frequently being referred to, we list our corrections here:
\begin{itemize}
	\item
	Equation (7.93): $-\frac{9}{4}\pi^2 \log^2(\xi)$ must be $-\frac{9}{12}\pi^2 \log^2(\xi)$.

	\item
	Equation (7.99), repeated as (44) in appendix~A.2.7: The second term $-\frac{9}{4}\pi^2 \log^3(\xi)$ of the last line must be replaced with $-\frac{3}{4}\pi^2 \log^3(\xi)$.

	\item
	Equation A.3.5. (9): The terms $-2 \Li_3\left( 1/x \right) + 2\Li_3(1)$ should read $+\Li_3(1/x) - \Li_3(1)$ instead.

	\item
	In equation (7.132), a factor $\frac{1}{2}$ in front of the second summand $D^n_{p=0} \frac{1}{p} \left\{ \cdots \right\}$ is missing (it is correctly given in 7.131).

	\item
	Equation (8.80): $(1-v)$ inside the argument of the fourth $\Li_2$-summand must be replaced by $(1+v)$, so that after including the corrections mentioned in the following paragraph, the correct identity reads
	\begin{equation}\begin{split}
		0=&{}
		\Li_2\left( \frac{(1+v)w}{1+w} \right)
		+ \Li_2\left( \frac{-(1-v)w}{1-w} \right)
		- \Li_2\left( \frac{-(1-v^2)w^2}{1-w^2} \right)
		\\
		+ &{} \Li_2\left( \frac{(1-v)w}{1+w} \right)
		+ \Li_2\left( \frac{-(1+v)w}{1-w} \right)
		+ \frac{1}{2} \log^2\left( \frac{1+w}{1-w} \right).
		\label{eq:lewin-8.80-corrected}%
	\end{split}\end{equation}

	\item
	Equation (16.46) of \cite{Lewin:StructuralPropertiesPolylogarithms}: $x^2$ must read $x^{-2}$.

	\item
	Equation (16.57) of \cite{Lewin:StructuralPropertiesPolylogarithms}: $\frac{\pi^4}{40}$ must read $\frac{\pi^4}{30}$.
\end{itemize}

\bibliography{../bib/qft}

\providecommand{\href}[2]{#2}\begingroup\begin{thebibliography}{100}

\bibitem{Wissbrock:Massive3loopLadder}
J.~Ablinger, J.~Bl{\"{u}}mlein, A.~Hasselhuhn, S.~Klein, C.~Schneider and
  F.~Wi{\ss}brock,
  \href{\detokenize{http://dx.doi.org/10.1016/j.nuclphysb.2012.06.007}}{\textit{Massive
  3-loop ladder diagrams for quarkonic local operator matrix elements}},
  \href{\detokenize{http://dx.doi.org/10.1016/j.nuclphysb.2012.06.007}}{\emph{Nucl.
  Phys. B} \textbf{864} (Nov., 2012) }pp.~52--84
  [\href{http://xxx.lanl.gov/abs/1206.2252}{\texttt{arXiv:1206.2252}}].

\bibitem{AblingerBluemleinRaabSchneiderWissbrock:Hyperlogarithms}
J.~Ablinger, J.~Bl{\"u}mlein, C.~Raab, C.~Schneider and F.~Wi{\ss}brock,
  \href{\detokenize{http://dx.doi.org/10.1016/j.nuclphysb.2014.04.007}}{\textit{Calculating
  massive 3-loop graphs for operator matrix elements by the method of
  hyperlogarithms}},
  \href{\detokenize{http://dx.doi.org/10.1016/j.nuclphysb.2014.04.007}}{\emph{Nucl.
  Phys. B} \textbf{885} (Aug., 2014) }pp.~409--447
  [\href{http://xxx.lanl.gov/abs/1403.1137}{\texttt{arXiv:1403.1137}}].

\bibitem{AblingerBluemleinRaabSchneider:IteratedBinomialSums}
J.~Ablinger, J.~Bl{\"u}mlein, C.~G. Raab and C.~Schneider,
  \href{\detokenize{http://dx.doi.org/10.1063/1.4900836}}{\textit{Iterated
  binomial sums and their associated iterated integrals}},
  \href{\detokenize{http://dx.doi.org/10.1063/1.4900836}}{\emph{Journal of
  Mathematical Physics} \textbf{55} (Nov., 2014) }p.~112301
  [\href{http://xxx.lanl.gov/abs/1407.1822}{\texttt{arXiv:1407.1822}}].

\bibitem{AblingerBluemleinSchneider:Cyclotomic}
J.~Ablinger, J.~Bl{\"u}mlein and C.~Schneider,
  \href{\detokenize{http://dx.doi.org/10.1063/1.3629472}}{\textit{Harmonic sums
  and polylogarithms generated by cyclotomic polynomials}},
  \href{\detokenize{http://dx.doi.org/10.1063/1.3629472}}{\emph{Journal of
  Mathematical Physics} \textbf{52} (Oct., 2011) }p.~102301
  [\href{http://xxx.lanl.gov/abs/1105.6063}{\texttt{arXiv:1105.6063}}].

\bibitem{AblingerBluemleinSchneider:GeneralizedHarmonicSumsAndPolylogarithms}
J.~Ablinger, J.~Bl{\"u}mlein and C.~Schneider,
  \href{\detokenize{http://dx.doi.org/10.1063/1.4811117}}{\textit{Analytic and
  algorithmic aspects of generalized harmonic sums and polylogarithms}},
  \href{\detokenize{http://dx.doi.org/10.1063/1.4811117}}{\emph{J. Math. Phys.}
  \textbf{54} (Aug., 2013) }p.~082301
  [\href{http://xxx.lanl.gov/abs/1302.0378}{\texttt{arXiv:1302.0378}}].

\bibitem{AdamsBognerWeinzierl:SunriseELI}
L.~Adams, C.~Bogner and S.~Weinzierl,
  \href{\detokenize{http://dx.doi.org/10.1063/1.4896563}}{\textit{The two-loop
  sunrise graph with arbitrary masses in terms of elliptic dilogarithms}},
  \href{\detokenize{http://dx.doi.org/10.1063/1.4896563}}{\emph{Journal of
  Mathematical Physics} \textbf{55} (Oct., 2014) }p.~102301
  [\href{http://xxx.lanl.gov/abs/1405.5640}{\texttt{arXiv:1405.5640}}].

\bibitem{AnastasiouDuhrDulatHerzogMistlberger:RealVirtualInclusiveHiggs}
C.~Anastasiou, C.~Duhr, F.~Dulat, F.~Herzog and B.~Mistlberger,
  \href{\detokenize{http://dx.doi.org/10.1007/JHEP12(2013)088}}{\textit{Real-virtual
  contributions to the inclusive {Higgs} cross-section at {$N^{3}LO$}}},
  \href{\detokenize{http://dx.doi.org/10.1007/JHEP12(2013)088}}{\emph{JHEP}
  \textbf{2013} (Dec., 2013) }p.~88
  [\href{http://xxx.lanl.gov/abs/1311.1425}{\texttt{arXiv:1311.1425}}].

\bibitem{AnastasiouDuhrDulatMistlberger:SoftTripleRealHiggs}
C.~Anastasiou, C.~Duhr, F.~Dulat and B.~Mistlberger,
  \href{\detokenize{http://dx.doi.org/10.1007/JHEP07(2013)003}}{\textit{Soft
  triple-real radiation for {Higgs} production at {$\text{N}_3\text{LO}$}}},
  \href{\detokenize{http://dx.doi.org/10.1007/JHEP07(2013)003}}{\emph{JHEP}
  \textbf{2013} (jul, 2013) }pp.~1--78
  [\href{http://xxx.lanl.gov/abs/1302.4379}{\texttt{arXiv:1302.4379}}].

\bibitem{ArgeriDiVitaMastroliaMirabellaSchlenkSchubertTancredi:MagnusDyson}
M.~{Argeri}, S.~{Di Vita}, P.~{Mastrolia}, E.~{Mirabella}, J.~{Schlenk},
  U.~{Schubert} and L.~{Tancredi},
  \href{\detokenize{http://dx.doi.org/10.1007/JHEP03(2014)082}}{\textit{Magnus
  and dyson series for master integrals}},
  \href{\detokenize{http://dx.doi.org/10.1007/JHEP03(2014)082}}{\emph{Journal
  of High Energy Physics} \textbf{2014} (Mar., 2014) }p.~82
  [\href{http://xxx.lanl.gov/abs/1401.2979}{\texttt{arXiv:1401.2979}}].

\bibitem{ArgeriMastrolia:FeynmanDiagramsDifferentialEquations}
M.~Argeri and P.~Mastrolia,
  \href{\detokenize{http://dx.doi.org/10.1142/S0217751X07037147}}{\textit{{Feynman
  Diagrams and Differential Equations}}},
  \href{\detokenize{http://dx.doi.org/10.1142/S0217751X07037147}}{\emph{Int. J.
  Mod. Phys. A} \textbf{22} (2007), no.~24 }pp.~4375--4436
  [\href{http://xxx.lanl.gov/abs/0707.4037}{\texttt{arXiv:0707.4037}}].

\bibitem{BaikovChetyrkin:FourLoopPropagatorsAlgebraic}
P.~A. Baikov and K.~G. Chetyrkin,
  \href{\detokenize{http://dx.doi.org/10.1016/j.nuclphysb.2010.05.004}}{\textit{Four
  loop massless propagators: {An} algebraic evaluation of all master
  integrals}},
  \href{\detokenize{http://dx.doi.org/10.1016/j.nuclphysb.2010.05.004}}{\emph{Nucl.
  Phys. B} \textbf{837} (Oct., 2010) }pp.~186--220
  [\href{http://xxx.lanl.gov/abs/1004.1153}{\texttt{arXiv:1004.1153}}].

\bibitem{BaileyBorweinCrandall:IsingClass}
D.~H. Bailey, J.~M. Borwein and R.~E. Crandall,
  \href{\detokenize{http://dx.doi.org/10.1088/0305-4470/39/40/001}}{\textit{Integrals
  of the {Ising} class}},
  \href{\detokenize{http://dx.doi.org/10.1088/0305-4470/39/40/001}}{\emph{J.
  Phys. A: Math. Gen.} \textbf{39} (2006), no.~40 }p.~12271.

\bibitem{Bailey:ARPREC}
D.~H. Bailey, Y.~Hida, X.~S. Li and B.~Thompson,
  \href{\detokenize{http://crd.lbl.gov/~dhbailey/mpdist/}}{\textit{{ARPREC}: An
  arbitrary precision computation package}},  Tech. Rep. DE2003-817634,
  Lawrence Berkeley National Laboratory, Berkeley, CA 94720, Oct., 2002.
\newblock available from \url{http://crd.lbl.gov/~dhbailey/mpdist/}.

\bibitem{Ginac:Introduction}
C.~Bauer, A.~Frink and R.~Kreckel,
  \href{\detokenize{http://dx.doi.org/10.1006/jsco.2001.0494}}{\textit{{Introduction
  to the GiNaC Framework for Symbolic Computation within the C++ Programming
  Language}}},
  \href{\detokenize{http://dx.doi.org/10.1006/jsco.2001.0494}}{\emph{J. Symb.
  Comput.} \textbf{33} (Jan., 2002) }pp.~1--12
  [\href{http://xxx.lanl.gov/abs/cs/0004015}{\texttt{cs/0004015}}].

\bibitem{BergereLam:BPAlpha}
M.~C. Berg{\`e}re and Y.~P. Lam,
  \href{\detokenize{http://dx.doi.org/10.1063/1.523078}}{\textit{{Bogolubov}-{Parasiuk}
  theorem in the {$\alpha$}-parametric representation}},
  \href{\detokenize{http://dx.doi.org/10.1063/1.523078}}{\emph{J. Math. Phys.}
  \textbf{17} (Oct., 1976) }pp.~1546--1557.

\bibitem{BergereZuber:RenormalizationFeynmanParametric}
M.~C. Berg{\`e}re and J.~B. Zuber,
  \href{\detokenize{http://dx.doi.org/10.1007/BF01646611}}{\textit{Renormalization
  of {Feynman} amplitudes and parametric integral representation}},
  \href{\detokenize{http://dx.doi.org/10.1007/BF01646611}}{\emph{Commun. Math.
  Phys.} \textbf{35} (1974), no.~2 }pp.~113--140.

\bibitem{BernDixonSmirnov:IterationPlanarAmplitudes}
Z.~Bern, L.~J. Dixon and V.~A. Smirnov,
  \href{\detokenize{http://dx.doi.org/10.1103/PhysRevD.72.085001}}{\textit{Iteration
  of planar amplitudes in maximally supersymmetric {Yang}-{Mills} theory at
  three loops and beyond}},
  \href{\detokenize{http://dx.doi.org/10.1103/PhysRevD.72.085001}}{\emph{Phys.
  Rev. D} \textbf{72} (Oct., 2005) }p.~085001
  [\href{http://xxx.lanl.gov/abs/hep-th/0505205}{\texttt{hep-th/0505205}}].

\bibitem{BierenbaumWeinzierl:TwoPoint}
I.~Bierenbaum and S.~Weinzierl,
  \href{\detokenize{http://dx.doi.org/10.1140/epjc/s2003-01389-7}}{\textit{The
  massless two-loop two-point function}},
  \href{\detokenize{http://dx.doi.org/10.1140/epjc/s2003-01389-7}}{\emph{Eur.
  Phys. J. C} \textbf{32} (Dec., 2003) }pp.~67--78
  [\href{http://xxx.lanl.gov/abs/hep-ph/0308311}{\texttt{hep-ph/0308311}}].

\bibitem{BinosiTheussl:JaxoDraw}
D.~Binosi and L.~Theu{\ss}l,
  \href{\detokenize{http://dx.doi.org/10.1016/j.cpc.2004.05.001}}{\textit{{JaxoDraw}:
  {A} graphical user interface for drawing {Feynman} diagrams}},
  \href{\detokenize{http://dx.doi.org/10.1016/j.cpc.2004.05.001}}{\emph{Comput.
  Phys. Commun.} \textbf{161} (aug, 2004) }pp.~76--86
  [\href{http://xxx.lanl.gov/abs/hep-ph/0309015}{\texttt{hep-ph/0309015}}].

\bibitem{BinothHeinrich:SectorDecomposition}
T.~Binoth and G.~Heinrich,
  \href{\detokenize{http://dx.doi.org/10.1016/S0550-3213(00)00429-6}}{\textit{An
  automatized algorithm to compute infrared divergent multiloop integrals}},
  \href{\detokenize{http://dx.doi.org/10.1016/S0550-3213(00)00429-6}}{\emph{Nucl.
  Phys. B} \textbf{585} (Oct., 2000) }pp.~741--759
  [\href{http://xxx.lanl.gov/abs/hep-ph/0004013}{\texttt{hep-ph/0004013}}].

\bibitem{BinothHeinrich:SectorDecomposition2}
T.~Binoth and G.~Heinrich,
  \href{\detokenize{http://dx.doi.org/10.1016/j.nuclphysb.2003.12.023}}{\textit{Numerical
  evaluation of multiloop integrals by sector decomposition}},
  \href{\detokenize{http://dx.doi.org/10.1016/j.nuclphysb.2003.12.023}}{\emph{Nucl.
  Phys. B} \textbf{680} (Mar., 2004) }pp.~375--388
  [\href{http://xxx.lanl.gov/abs/hep-ph/0305234}{\texttt{hep-ph/0305234}}].

\bibitem{BlackCrumDeVosYeats:ForbiddenMinorsSplit5}
S.~{Black}, I.~{Crump}, M.~{DeVos} and K.~A. {Yeats},
  \href{\detokenize{http://dx.doi.org/10.1016/j.disc.2014.10.001}}{\textit{Forbidden
  minors for graphs with no first obstruction to parametric {Feynman}
  integration}},
  \href{\detokenize{http://dx.doi.org/10.1016/j.disc.2014.10.001}}{\emph{Discrete
  Mathematics} \textbf{338} (Mar., 2015) }pp.~9--35
  [\href{http://xxx.lanl.gov/abs/1310.5788}{\texttt{arXiv:1310.5788}}].

\bibitem{BlochEsnaultKreimer:MotivesGraphPolynomials}
S.~Bloch, H.~Esnault and D.~Kreimer,
  \href{\detokenize{http://dx.doi.org/10.1007/s00220-006-0040-2}}{\textit{On
  motives associated to graph polynomials}},
  \href{\detokenize{http://dx.doi.org/10.1007/s00220-006-0040-2}}{\emph{Commun.
  Math. Phys.} \textbf{267} (2006), no.~1 }pp.~181--225
  [\href{http://xxx.lanl.gov/abs/math/0510011}{\texttt{math/0510011}}].

\bibitem{BlochKerrVanhove:3loopSunrise}
S.~Bloch, M.~Kerr and P.~Vanhove,
  \href{\detokenize{http://xxx.lanl.gov/abs/1406.2664}}{\textit{A {Feynman}
  integral via higher normal functions}},
  \href{\detokenize{http://xxx.lanl.gov/abs/1406.2664}}{\emph{preprint} }(June,
  2014) [\href{http://xxx.lanl.gov/abs/1406.2664}{\texttt{arXiv:1406.2664}}].

\bibitem{BlochKreimer:MixedHodge}
S.~Bloch and D.~Kreimer,
  \href{\detokenize{http://dx.doi.org/10.4310/CNTP.2008.v2.n4.a1}}{\textit{{Mixed
  Hodge Structures and Renormalization in Physics}}},
  \href{\detokenize{http://dx.doi.org/10.4310/CNTP.2008.v2.n4.a1}}{\emph{Commun.
  Number Theory Phys.} \textbf{2} (2008), no.~4 }pp.~637--718
  [\href{http://xxx.lanl.gov/abs/0804.4399}{\texttt{arXiv:0804.4399}}].

\bibitem{BlochKreimer:LandauOneLoop}
S.~Bloch and D.~Kreimer,
  \href{\detokenize{http://dx.doi.org/10.4310/CNTP.2010.v4.n4.a4}}{\textit{Feynman
  amplitudes and {Landau} singularities for 1-loop graphs}},
  \href{\detokenize{http://dx.doi.org/10.4310/CNTP.2010.v4.n4.a4}}{\emph{Commun.Num.Theor.Phys.}
  \textbf{4} (July, 2010) }pp.~709--753
  [\href{http://xxx.lanl.gov/abs/1007.0338}{\texttt{arXiv:1007.0338}}].

\bibitem{BlochVanhove:Sunset}
S.~Bloch and P.~Vanhove,
  \href{\detokenize{http://dx.doi.org/10.1016/j.jnt.2014.09.032}}{\textit{The
  elliptic dilogarithm for the sunset graph}},
  \href{\detokenize{http://dx.doi.org/10.1016/j.jnt.2014.09.032}}{\emph{Journal
  of Number Theory} \textbf{148} (Sept., 2015) }pp.~328--364
  [\href{http://xxx.lanl.gov/abs/1309.5865}{\texttt{arXiv:1309.5865}}].

\bibitem{BluemleinBroadhurstVermaseren:Datamine}
J.~Bl{\"u}mlein, D.~J. Broadhurst and J.~A.~M. Vermaseren,
  \href{\detokenize{http://dx.doi.org/10.1016/j.cpc.2009.11.007}}{\textit{The
  {Multiple} {Zeta} {Value} data mine}},
  \href{\detokenize{http://dx.doi.org/10.1016/j.cpc.2009.11.007}}{\emph{Comput.
  Phys. Commun.} \textbf{181} (Mar., 2010) }pp.~582--625
  [\href{http://xxx.lanl.gov/abs/0907.2557}{\texttt{arXiv:0907.2557}}].

\bibitem{Bogner:PhD}
C.~Bogner,
  \href{\detokenize{http://wwwthep.physik.uni-mainz.de/Publications/theses/dis-bogner.pdf}}{\emph{Mathematical
  aspects of {Feynman} {Integrals}}}.
\newblock PhD thesis, Johannes-Gutenberg Universit{\"a}t, Mainz, Germany, Aug.,
  2009.

\bibitem{BognerBrown:SymbolicIntegration}
C.~Bogner and F.~C.~S. Brown,
  \href{\detokenize{http://pos.sissa.it/cgi-bin/reader/conf.cgi?confid=151}}{\textit{Symbolic
  integration and multiple polylogarithms}},
  \href{\detokenize{http://pos.sissa.it/cgi-bin/reader/conf.cgi?confid=151}}{\emph{Proceedings
  of Science} \textbf{LL2012} (2012) }p.~053
  [\href{http://xxx.lanl.gov/abs/1209.6524}{\texttt{arXiv:1209.6524}}].

\bibitem{BognerBrown:GenusZero}
C.~Bogner and F.~C.~S. Brown,
  \href{\detokenize{http://dx.doi.org/10.4310/CNTP.2015.v9.n1.a3}}{\textit{Feynman
  integrals and iterated integrals on moduli spaces of curves of genus zero}},
  \href{\detokenize{http://dx.doi.org/10.4310/CNTP.2015.v9.n1.a3}}{\emph{Communications
  in Number Theory and Physics} \textbf{9} (2015), no.~1 }pp.~189--238
  [\href{http://xxx.lanl.gov/abs/1408.1862}{\texttt{arXiv:1408.1862}}].

\bibitem{BognerLueders:MasslessOnShell}
C.~Bogner and M.~L{\"u}ders,
  \href{\detokenize{http://xxx.lanl.gov/abs/1302.6215}}{``Multiple
  polylogarithms and linearly reducible {Feynman} graphs.''} preprint, Feb.,
  2013, [\href{http://xxx.lanl.gov/abs/1302.6215}{\texttt{arXiv:1302.6215}}].

\bibitem{BognerWeinzierl:ResolutionOfSingularities}
C.~Bogner and S.~Weinzierl,
  \href{\detokenize{http://dx.doi.org/10.1016/j.cpc.2007.11.012}}{\textit{Resolution
  of singularities for multi-loop integrals}},
  \href{\detokenize{http://dx.doi.org/10.1016/j.cpc.2007.11.012}}{\emph{Comput.Phys.Commun.}
  \textbf{178} (2008) }pp.~596--610
  [\href{http://xxx.lanl.gov/abs/0709.4092}{\texttt{arXiv:0709.4092}}].

\bibitem{BognerWeinzierl:Periods}
C.~Bogner and S.~Weinzierl,
  \href{\detokenize{http://dx.doi.org/10.1063/1.3106041}}{\textit{{Periods} and
  {Feynman} integrals}},
  \href{\detokenize{http://dx.doi.org/10.1063/1.3106041}}{\emph{Journal of
  Mathematical Physics} \textbf{50} (Apr., 2009) }p.~042302
  [\href{http://xxx.lanl.gov/abs/0711.4863}{\texttt{arXiv:0711.4863}}].

\bibitem{BognerWeinzierl:GraphPolynomials}
C.~Bogner and S.~Weinzierl,
  \href{\detokenize{http://dx.doi.org/10.1142/S0217751X10049438}}{\textit{{Feynman
  Graph Polynomials}}},
  \href{\detokenize{http://dx.doi.org/10.1142/S0217751X10049438}}{\emph{International
  Journal of Modern Physics A} \textbf{25} (2010) }pp.~2585--2618
  [\href{http://xxx.lanl.gov/abs/1002.3458}{\texttt{arXiv:1002.3458}}].

\bibitem{BorkKazakovVlasenko:N11D6SYM}
L.~V. Bork, D.~I. Kazakov and D.~E. Vlasenko,
  \href{\detokenize{http://dx.doi.org/10.1007/JHEP11(2013)065}}{\textit{On the
  amplitudes in {$\mathcal{N}=(1,1)$} {$D = 6$} {SYM}}},
  \href{\detokenize{http://dx.doi.org/10.1007/JHEP11(2013)065}}{\emph{JHEP}
  \textbf{2013} (Nov., 2013) }p.~65
  [\href{http://xxx.lanl.gov/abs/1308.0117}{\texttt{arXiv:1308.0117}}].

\bibitem{BorowkaCarterHeinrich:SecDec2}
S.~Borowka, J.~Carter and G.~Heinrich,
  \href{\detokenize{http://dx.doi.org/10.1016/j.cpc.2012.09.020}}{\textit{{Numerical
  Evaluation of Multi-Loop Integrals for Arbitrary Kinematics with SecDec
  2.0}}},
  \href{\detokenize{http://dx.doi.org/10.1016/j.cpc.2012.09.020}}{\emph{Comput.
  Phys. Commun.} \textbf{184} (2013) }pp.~396--408
  [\href{http://xxx.lanl.gov/abs/1204.4152}{\texttt{arXiv:1204.4152}}].

\bibitem{BreitenlohnerMaison:DimRenAction}
P.~Breitenlohner and D.~Maison,
  \href{\detokenize{http://dx.doi.org/10.1007/BF01609069}}{\textit{Dimensional
  renormalization and the action principle}},
  \href{\detokenize{http://dx.doi.org/10.1007/BF01609069}}{\emph{Communications
  in Mathematical Physics} \textbf{52} (1977), no.~1 }pp.~11--38.

\bibitem{BreitenlohnerMaison:DimRenMasslessI}
P.~Breitenlohner and D.~Maison,
  \href{\detokenize{http://dx.doi.org/10.1007/BF01609070}}{\textit{Dimensionally
  renormalized {Green}'s functions for theories with massless particles. {I}}},
   \href{\detokenize{http://dx.doi.org/10.1007/BF01609070}}{\emph{Communications
  in Mathematical Physics} \textbf{52} (1977), no.~1 }pp.~39--54.

\bibitem{BreitenlohnerMaison:DimRenMasslessII}
P.~Breitenlohner and D.~Maison,
  \href{\detokenize{http://dx.doi.org/10.1007/BF01609071}}{\textit{Dimensionally
  renormalized {Green}'s functions for theories with massless particles.
  {II}}},
  \href{\detokenize{http://dx.doi.org/10.1007/BF01609071}}{\emph{Communications
  in Mathematical Physics} \textbf{52} (1977), no.~1 }pp.~55--75.

\bibitem{Broadhurst:SummationLadderDiagrams}
D.~J. Broadhurst,
  \href{\detokenize{http://dx.doi.org/10.1016/0370-2693(93)90202-S}}{\textit{Summation
  of an infinite series of ladder diagrams}},
  \href{\detokenize{http://dx.doi.org/10.1016/0370-2693(93)90202-S}}{\emph{Phys.
  Lett. B} \textbf{307} (June, 1993) }pp.~132--139.

\bibitem{Broadhurst:SixthRoots}
D.~J. Broadhurst,
  \href{\detokenize{http://dx.doi.org/10.1007/s100529900935}}{\textit{Massive
  three - loop {Feynman} diagrams reducible to {SC$^{*}$} primitives of
  algebras of the sixth root of unity}},
  \href{\detokenize{http://dx.doi.org/10.1007/s100529900935}}{\emph{Eur.Phys.J.
  C} \textbf{8} (Apr., 1999) }pp.~311--333
  [\href{http://xxx.lanl.gov/abs/hep-th/9803091}{\texttt{hep-th/9803091}}].

\bibitem{Broadhurst:MZVModularQFT}
D.~J. Broadhurst,
  \href{\detokenize{http://dx.doi.org/10.1007/978-3-7091-1616-6_2}}{\textit{Multiple
  zeta values and modular forms in quantum field theory}},  in \emph{Computer
  Algebra in Quantum Field Theory} (C.~Schneider and J.~Bl{\"u}mlein, eds.),
  Texts \& Monographs in Symbolic Computation, pp.~33--73.
\newblock Springer, Aug., 2013.

\bibitem{Broadhurst:Radcor2013}
D.~J. Broadhurst,
  \href{\detokenize{https://conference.ippp.dur.ac.uk/getFile.py/access?contribId=36&sessionId=12&resId=0&materialId=slides&confId=341}}{\textit{The
  number theory of radiative corrections}},  in \emph{RADCOR 2013}, (Lumley
  Castle, UK), Sept., 2013.

\bibitem{Broadhurst:Aufbau}
D.~J. Broadhurst,
  \href{\detokenize{http://xxx.lanl.gov/abs/1409.7204}}{``Multiple {Deligne}
  values: a data mine with empirically tamed denominators.''} preprint, Sept.,
  2014, [\href{http://xxx.lanl.gov/abs/1409.7204}{\texttt{arXiv:1409.7204}}].

\bibitem{BroadhurstDelbourgoKreimer:Unknotting}
D.~J. Broadhurst, R.~Delbourgo and D.~Kreimer,
  \href{\detokenize{http://dx.doi.org/10.1016/0370-2693(95)01343-1}}{\textit{Unknotting
  the polarized vacuum of quenched {QED}}},
  \href{\detokenize{http://dx.doi.org/10.1016/0370-2693(95)01343-1}}{\emph{Physics
  Letters B} \textbf{366} (Jan., 1996) }pp.~421--428
  [\href{http://xxx.lanl.gov/abs/hep-ph/9509296}{\texttt{hep-ph/9509296}}].

\bibitem{BroadhurstKreimer:KnotsNumbers}
D.~J. Broadhurst and D.~Kreimer,
  \href{\detokenize{http://dx.doi.org/10.1142/S012918319500037X}}{\textit{Knots
  and numbers in {$\phi^4$} theory to 7 loops and beyond}},
  \href{\detokenize{http://dx.doi.org/10.1142/S012918319500037X}}{\emph{Int. J.
  Mod. Phys. C} \textbf{6} (Aug., 1995) }pp.~519--524
  [\href{http://xxx.lanl.gov/abs/hep-ph/9504352}{\texttt{hep-ph/9504352}}].

\bibitem{BroadhurstKreimer:MZVPositiveKnots}
D.~J. Broadhurst and D.~Kreimer,
  \href{\detokenize{http://dx.doi.org/10.1016/S0370-2693(96)01623-1}}{\textit{Association
  of multiple zeta values with positive knots via {Feynman} diagrams up to 9
  loops}},
  \href{\detokenize{http://dx.doi.org/10.1016/S0370-2693(96)01623-1}}{\emph{Phys.
  Lett. B} \textbf{393} (Feb., 1997) }pp.~403--412
  [\href{http://xxx.lanl.gov/abs/hep-th/9609128}{\texttt{hep-th/9609128}}].

\bibitem{BroedelSchlottererStieberger:PolylogsMZVSuperstringAmplitudes}
J.~Broedel, O.~Schlotterer and S.~Stieberger,
  \href{\detokenize{http://dx.doi.org/10.1002/prop.201300019}}{\textit{Polylogarithms,
  multiple zeta values and superstring amplitudes}},
  \href{\detokenize{http://dx.doi.org/10.1002/prop.201300019}}{\emph{Fortsch.
  Phys.} \textbf{61} (Sept., 2013) }pp.~812--870
  [\href{http://xxx.lanl.gov/abs/1304.7267}{\texttt{arXiv:1304.7267}}].

\bibitem{Brown:Uniformes}
F.~C.~S. Brown,
  \href{\detokenize{http://dx.doi.org/10.1016/j.crma.2004.02.001}}{\textit{Polylogarithmes
  multiples uniformes en une variable}},
  \href{\detokenize{http://dx.doi.org/10.1016/j.crma.2004.02.001}}{\emph{Comptes
  Rendus Mathematique} \textbf{338} (2004), no.~7 }pp.~527--532.

\bibitem{Brown:MZVPeriodsModuliSpaces}
F.~C.~S. Brown,
  \href{\detokenize{http://smf4.emath.fr/Publications/AnnalesENS/4_42/html/ens_ann-sc_42_371-489.php}}{\textit{Multiple
  zeta values and periods of moduli spaces
  {$\overline{\mathfrak{M}}_{0,n}(\mathbb{R})$}}},
  \href{\detokenize{http://smf4.emath.fr/Publications/AnnalesENS/4_42/html/ens_ann-sc_42_371-489.php}}{\emph{Annales
  scientifiques de l'{\'E}cole Normale Sup{\'e}rieure} \textbf{42} (June, 2009)
  }pp.~371--489
  [\href{http://xxx.lanl.gov/abs/math/0606419}{\texttt{math/0606419}}].

\bibitem{Brown:PeriodsFeynmanIntegrals}
F.~C.~S. Brown, \href{\detokenize{http://xxx.lanl.gov/abs/0910.0114}}{``On the
  periods of some {Feynman} integrals.''} preprint, Oct., 2009,
  [\href{http://xxx.lanl.gov/abs/0910.0114}{\texttt{arXiv:0910.0114}}].

\bibitem{Brown:TwoPoint}
F.~C.~S. Brown,
  \href{\detokenize{http://dx.doi.org/10.1007/s00220-009-0740-5}}{\textit{{The
  Massless Higher-Loop Two-Point Function}}},
  \href{\detokenize{http://dx.doi.org/10.1007/s00220-009-0740-5}}{\emph{Commun.
  Math. Phys.} \textbf{287} (May, 2009) }pp.~925--958
  [\href{http://xxx.lanl.gov/abs/0804.1660}{\texttt{arXiv:0804.1660}}].

\bibitem{Brown:ModuliSpacesFeynmanIntegrals}
F.~C.~S. Brown,
  \href{\detokenize{http://dx.doi.org/10.1090/conm/539}}{\textit{Multiple zeta
  values and periods: {From} moduli spaces to {Feynman} integrals}},  in
  \emph{Combinatorics and Physics} (K.~{Ebrahimi-Fard}, M.~Marcolli and W.~D.
  {van Suijlekom}, eds.), vol.~539 of \emph{Contemporary Mathematics},
  pp.~27--52.
\newblock American Mathematical Society, May, 2011.
\newblock Proceedings of the mini-workshop on Renormalization (December 15--16,
  2006) and the conference on Combinatorics and Physics (March 19--23, 2007),
  both at Max-Planck-Institut f{\"u}r Mathematik, Bonn, Germany.

\bibitem{Brown:MixedTateMotivesOverZ}
F.~C.~S. Brown,
  \href{\detokenize{http://dx.doi.org/10.4007/annals.2012.175.2.10}}{\textit{Mixed
  {Tate} motives over {$\mathbb{Z}$}.}},
  \href{\detokenize{http://dx.doi.org/10.4007/annals.2012.175.2.10}}{\emph{Ann.
  Math. (2)} \textbf{175} (2012), no.~2 }pp.~949--976
  [\href{http://xxx.lanl.gov/abs/1102.1312}{\texttt{arXiv:1102.1312}}].

\bibitem{Brown:DecompositionMotivicMZV}
F.~C.~S. Brown,
  \href{\detokenize{http://xxx.lanl.gov/abs/1102.1310}}{\textit{On the
  decomposition of motivic multiple zeta values}},  in
  \emph{{Galois-Teichm{\"u}ller Theory and Arithmetic Geometry}}, vol.~68 of
  \emph{Adv. Studies in Pure Math.}, (Tokyo), pp.~31--58, Math. Soc. Japan,
  2012.
\newblock [\href{http://xxx.lanl.gov/abs/1102.1310}{\texttt{arXiv:1102.1310}}].

\bibitem{Brown:DepthGraded}
F.~C.~S. {Brown},
  \href{\detokenize{http://xxx.lanl.gov/abs/1301.3053}}{``Depth-graded motivic
  multiple zeta values.''} preprint, Jan., 2013,
  [\href{http://xxx.lanl.gov/abs/1301.3053}{\texttt{arXiv:1301.3053}}].

\bibitem{Brown:IteratedIntegralsQFT}
F.~C.~S. Brown,
  \href{\detokenize{http://dx.doi.org/10.1017/CBO9781139208642.006}}{\textit{Iterated
  integrals in quantum field theory}},  in \emph{Geometric and Topological
  Methods for Quantum Field Theory} (A.~Cardona, I.~Contreras and A.~F.
  {Reyes-Lega}, eds.), ch.~5, pp.~188--240.
\newblock Cambridge University Press, May, 2013.
\newblock Proceedings of the 2009 Villa de Leyva Summer School.

\bibitem{Brown:Houches14Coaction}
F.~C.~S. Brown,
  \href{\detokenize{http://www2.mathematik.hu-berlin.de/~kreimer/wp-content/uploads/LesHouchesBrown2.pdf}}{\textit{Coaction
  for {Feynman} graphs}},  June, 2014.
\newblock Talk given during the Summer school on structures in local quantum
  field theory, Les Houches, June 9--20.

\bibitem{Brown:SingleValuedMZV}
F.~C.~S. Brown,
  \href{\detokenize{http://dx.doi.org/10.1017/fms.2014.18}}{\textit{Single-valued
  periods and multiple zeta values}},
  \href{\detokenize{http://dx.doi.org/10.1017/fms.2014.18}}{\emph{Forum of
  Mathematics, Sigma} \textbf{2} (2014), no.~e25 }p.~37
  [\href{http://xxx.lanl.gov/abs/1309.5309}{\texttt{arXiv:1309.5309}}].

\bibitem{BrownDoryn:FramingsForGraphHypersurfaces}
F.~C.~S. Brown and D.~Doryn,
  \href{\detokenize{http://xxx.lanl.gov/abs/1301.3056}}{``Framings for graph
  hypersurfaces.''} preprint, Jan., 2013,
  [\href{http://xxx.lanl.gov/abs/1301.3056}{\texttt{arXiv:1301.3056}}].

\bibitem{BrownKreimer:AnglesScales}
F.~C.~S. Brown and D.~Kreimer,
  \href{\detokenize{http://dx.doi.org/10.1007/s11005-013-0625-6}}{\textit{Angles,
  scales and parametric renormalization}},
  \href{\detokenize{http://dx.doi.org/10.1007/s11005-013-0625-6}}{\emph{Lett.
  Math. Phys.} \textbf{103} (2013), no.~9 }pp.~933--1007
  [\href{http://xxx.lanl.gov/abs/1112.1180}{\texttt{arXiv:1112.1180}}].

\bibitem{BrownSchnetz:K3Phi4}
F.~C.~S. Brown and O.~Schnetz,
  \href{\detokenize{http://dx.doi.org/10.1215/00127094-1644201}}{\textit{A {K3}
  in {$\phi^{4}$}}},
  \href{\detokenize{http://dx.doi.org/10.1215/00127094-1644201}}{\emph{Duke
  Math. J.} \textbf{161} (July, 2012) }pp.~1817--1862
  [\href{http://xxx.lanl.gov/abs/1006.4064}{\texttt{arXiv:1006.4064}}].

\bibitem{BrownSchnetz:ZigZag}
F.~C.~S. Brown and O.~Schnetz,
  \href{\detokenize{http://xxx.lanl.gov/abs/1208.1890}}{``Proof of the zig-zag
  conjecture.''} preprint, Aug., 2012,
  [\href{http://xxx.lanl.gov/abs/1208.1890}{\texttt{arXiv:1208.1890}}].

\bibitem{BrownSchnetzYeats:PropertiesC2}
F.~C.~S. Brown, O.~Schnetz and K.~A. Yeats,
  \href{\detokenize{http://dx.doi.org/10.4310/ATMP.2014.v18.n2.a2}}{\textit{Properties
  of {$c_2$} invariants of {Feynman} graphs}},
  \href{\detokenize{http://dx.doi.org/10.4310/ATMP.2014.v18.n2.a2}}{\emph{Adv.
  Theor. Math. Phys.} \textbf{18} (Feb., 2014) }pp.~323--362
  [\href{http://xxx.lanl.gov/abs/1203.0188}{\texttt{arXiv:1203.0188}}].

\bibitem{BrownYeats:SpanningForestPolynomials}
F.~C.~S. Brown and K.~A. Yeats,
  \href{\detokenize{http://dx.doi.org/10.1007/s00220-010-1145-1}}{\textit{{Spanning
  Forest Polynomials and the Transcendental Weight of Feynman Graphs}}},
  \href{\detokenize{http://dx.doi.org/10.1007/s00220-010-1145-1}}{\emph{Commun.
  Math. Phys.} \textbf{301} (Jan., 2011) }pp.~357--382
  [\href{http://xxx.lanl.gov/abs/0910.5429}{\texttt{arXiv:0910.5429}}].

\bibitem{CaronHuotLarsen:UniquenessTwoLoopMasterContours}
S.~{Caron-Huot} and K.~J. Larsen,
  \href{\detokenize{http://dx.doi.org/10.1007/JHEP10(2012)026}}{\textit{Uniqueness
  of two-loop master contours}},
  \href{\detokenize{http://dx.doi.org/10.1007/JHEP10(2012)026}}{\emph{JHEP}
  \textbf{2012} (Oct., 2012) }p.~26
  [\href{http://xxx.lanl.gov/abs/1205.0801}{\texttt{arXiv:1205.0801}}].

\bibitem{ChavezDuhr:Triangles}
F.~Chavez and C.~Duhr,
  \href{\detokenize{http://dx.doi.org/10.1007/JHEP11(2012)114}}{\textit{Three-mass
  triangle integrals and single-valued polylogarithms}},
  \href{\detokenize{http://dx.doi.org/10.1007/JHEP11(2012)114}}{\emph{JHEP}
  \textbf{11} (Nov., 2012) }p.~114
  [\href{http://xxx.lanl.gov/abs/1209.2722}{\texttt{arXiv:1209.2722}}].

\bibitem{Chen:II}
K.~T. Chen,
  \href{\detokenize{http://dx.doi.org/10.2307/1970846}}{\textit{Iterated
  integrals of differential forms and loop space homology}},
  \href{\detokenize{http://dx.doi.org/10.2307/1970846}}{\emph{Ann. of Math.}
  \textbf{97} (Mar., 1973) }pp.~217--246.

\bibitem{Chen:IteratedPathIntegrals}
K.~T. Chen,
  \href{\detokenize{http://dx.doi.org/10.1090/S0002-9904-1977-14320-6}}{\textit{Iterated
  path integrals}},
  \href{\detokenize{http://dx.doi.org/10.1090/S0002-9904-1977-14320-6}}{\emph{Bull.
  Amer. Math. Soc.} \textbf{83} (Sept., 1977) }pp.~831--879.

\bibitem{ChetyrkinTkachov:IBP}
K.~G. Chetyrkin and F.~V. Tkachov,
  \href{\detokenize{http://dx.doi.org/10.1016/0550-3213(81)90199-1}}{\textit{Integration
  by parts: {The} algorithm to calculate {$\beta$}-functions in 4 loops}},
  \href{\detokenize{http://dx.doi.org/10.1016/0550-3213(81)90199-1}}{\emph{Nucl.
  Phys. B} \textbf{192} (Nov., 1981) }pp.~159--204.

\bibitem{Collins}
J.~C. Collins, {\emph{Renormalization}}.
\newblock Cambridge Monographs on Mathematical Physics. Cambridge University
  Press, 1984.

\bibitem{CK:RH1}
A.~Connes and D.~Kreimer,
  \href{\detokenize{http://dx.doi.org/10.1007/s002200050779}}{\textit{Renormalization
  in {Quantum} {Field} {Theory} and the {Riemann}-{Hilbert} {Problem} {I}:
  {The} {Hopf} {Algebra} {Structure} of {Graphs} and the {Main} {Theorem}}},
  \href{\detokenize{http://dx.doi.org/10.1007/s002200050779}}{\emph{Commun.
  Math. Phys.} \textbf{210} (2000), no.~1 }pp.~249--273
  [\href{http://xxx.lanl.gov/abs/hep-th/9912092}{\texttt{hep-th/9912092}}].

\bibitem{Crump:ForbiddenMinors3Connected}
I.~Crump,
  \href{\detokenize{http://xxx.lanl.gov/abs/1312.1951}}{\textit{Forbidden
  minors for 3-connected graphs with no non-splitting 5-configurations}},
  Master's thesis, Simon Fraser University, Dec., 2013.
\newblock [\href{http://xxx.lanl.gov/abs/1312.1951}{\texttt{arXiv:1312.1951}}].

\bibitem{DavydychevUssyukina:AnApproachLadderDiagrams}
A.~I. Davydychev and N.~I. Ussyukina,
  \href{\detokenize{http://dx.doi.org/10.1016/0370-2693(93)91834-A}}{\textit{An
  approach to the evaluation of three- and four-point ladder diagrams}},
  \href{\detokenize{http://dx.doi.org/10.1016/0370-2693(93)91834-A}}{\emph{Phys.
  Lett. B} \textbf{298} (Jan., 1993) }pp.~363--370.

\bibitem{DavydychevUssyukina:LadderDiagramsArbitraryRungs}
A.~I. Davydychev and N.~I. Ussyukina,
  \href{\detokenize{http://dx.doi.org/10.1016/0370-2693(93)91118-7}}{\textit{Exact
  results for three- and four-point ladder diagrams with an arbitrary number of
  rungs}},
  \href{\detokenize{http://dx.doi.org/10.1016/0370-2693(93)91118-7}}{\emph{Phys.
  Lett. B} \textbf{305} (May, 1993) }pp.~136--143.

\bibitem{DelDucaDuhrSmirnov:OneLoopMasslessHexagon}
V.~Del~Duca, C.~Duhr and V.~A. Smirnov,
  \href{\detokenize{http://dx.doi.org/10.1016/j.physletb.2011.07.079}}{\textit{The
  massless hexagon integral in {$D = 6$} dimensions}},
  \href{\detokenize{http://dx.doi.org/10.1016/j.physletb.2011.07.079}}{\emph{Phys.
  Lett. B} \textbf{703} (Sept., 2011) }pp.~363--365
  [\href{http://xxx.lanl.gov/abs/1104.2781}{\texttt{arXiv:1104.2781}}].

\bibitem{Deligne:GroupeFondamentalMotiviqueN}
P.~Deligne,
  \href{\detokenize{http://dx.doi.org/10.1007/s10240-010-0027-6}}{\textit{Le
  groupe fondamental unipotent motivique de {$\mathbb{G}_{m}-\mu_{N}$}, pour
  {$N=2$}, $3$, $4$, $6$ ou $8$}},
  \href{\detokenize{http://dx.doi.org/10.1007/s10240-010-0027-6}}{\emph{Publications
  Math{\'e}matiques de l'IH{\'E}S} \textbf{112} (2010), no.~1 }pp.~101--141.

\bibitem{DiVitaMastroliaSchubertYundin:ThreeLoopLadderBox}
S.~{Di Vita}, P.~Mastrolia, U.~Schubert and V.~Yundin,
  \href{\detokenize{http://dx.doi.org/10.1007/JHEP09(2014)148}}{\textit{Three-loop
  master integrals for ladder-box diagrams with one massive leg}},
  \href{\detokenize{http://dx.doi.org/10.1007/JHEP09(2014)148}}{\emph{JHEP}
  \textbf{2014} (Aug., 2014) }p.~148
  [\href{http://xxx.lanl.gov/abs/1408.3107}{\texttt{arXiv:1408.3107}}].

\bibitem{DixonDrummondHenn:OneLoopMasslessHexagon}
L.~J. Dixon, J.~M. Drummond and J.~M. Henn,
  \href{\detokenize{http://dx.doi.org/10.1007/JHEP06(2011)100}}{\textit{The
  one-loop six-dimensional hexagon integral and its relation to {MHV}
  amplitudes in {$\mathcal{N}=4$} {SYM}}},
  \href{\detokenize{http://dx.doi.org/10.1007/JHEP06(2011)100}}{\emph{JHEP}
  \textbf{2011} (2011), no.~6 }p.~100
  [\href{http://xxx.lanl.gov/abs/1104.2787}{\texttt{arXiv:1104.2787}}].

\bibitem{Dodgson:CondensationOfDeterminants}
C.~L. Dodgson,
  \href{\detokenize{http://dx.doi.org/10.1098/rspl.1866.0037}}{\textit{Condensation
  of determinants, being a new and brief method for computing their
  arithmetical values}},
  \href{\detokenize{http://dx.doi.org/10.1098/rspl.1866.0037}}{\emph{Proceedings
  of the Royal Society of London A} \textbf{15} (May, 1866) }pp.~150--155.

\bibitem{Doryn:CounterKontsevich}
D.~Doryn,
  \href{\detokenize{http://dx.doi.org/10.1007/s11005-011-0501-1}}{\textit{On
  one example and one counterexample in counting rational points on graph
  hypersurfaces}},
  \href{\detokenize{http://dx.doi.org/10.1007/s11005-011-0501-1}}{\emph{Letters
  in Mathematical Physics} \textbf{97} (Sept., 2011) }pp.~303--315
  [\href{http://xxx.lanl.gov/abs/1006.3533}{\texttt{arXiv:1006.3533}}].

\bibitem{Doryn:InvariantInvariant}
D.~Doryn, \href{\detokenize{http://xxx.lanl.gov/abs/1312.7271}}{``The {$c_2$}
  invariant is invariant.''} preprint, Dec., 2013,
  [\href{http://xxx.lanl.gov/abs/1312.7271}{\texttt{arXiv:1312.7271}}].

\bibitem{DDEHPS:LeadingSingularitiesOffShellConformal}
J.~Drummond, C.~Duhr, B.~Eden, P.~Heslop, J.~Pennington and V.~A. Smirnov,
  \href{\detokenize{http://dx.doi.org/10.1007/JHEP08(2013)133}}{\textit{Leading
  singularities and off-shell conformal integrals}},
  \href{\detokenize{http://dx.doi.org/10.1007/JHEP08(2013)133}}{\emph{JHEP}
  \textbf{8} (Aug., 2013) }p.~133
  [\href{http://xxx.lanl.gov/abs/1303.6909}{\texttt{arXiv:1303.6909}}].

\bibitem{Drummond:Ladders}
J.~M. Drummond,
  \href{\detokenize{http://dx.doi.org/10.1007/JHEP02(2013)092}}{\textit{Generalised
  ladders and single-valued polylogs}},
  \href{\detokenize{http://dx.doi.org/10.1007/JHEP02(2013)092}}{\emph{Journal
  of High Energy Physics} \textbf{2013} (Feb., 2013) }p.~92
  [\href{http://xxx.lanl.gov/abs/1207.3824}{\texttt{arXiv:1207.3824}}].

\bibitem{Euler:MZV}
L.~Euler,
  \href{\detokenize{http://eulerarchive.maa.org/pages/E477.html}}{\textit{Meditationes
  circa singulare serierum genus}},
  \href{\detokenize{http://eulerarchive.maa.org/pages/E477.html}}{\emph{Novi
  Comm. Acad. Sci. Petropol.} \textbf{20} (1775) }pp.~140--186. reprinted in
  Opera Omnia, Ser. I, Vol. 16(2), B. G. Teubner, Leipzig, 1935, pp. 104--116.

\bibitem{FergusonBaileyArno:PSLQ}
H.~R.~P. Ferguson, D.~H. Bailey and S.~Arno,
  \href{\detokenize{http://dx.doi.org/10.1090/S0025-5718-99-00995-3}}{\textit{Analysis
  of {PSLQ}, an integer relation finding algorithm.}},
  \href{\detokenize{http://dx.doi.org/10.1090/S0025-5718-99-00995-3}}{\emph{Math.
  Comp.} \textbf{68} (1999), no.~225 }pp.~351--369.

\bibitem{GehrmannRemiddi:TwoLoopMasterIntegralsNonPlanar}
T.~Gehrmann and E.~Remiddi,
  \href{\detokenize{http://dx.doi.org/10.1016/S0550-3213(01)00074-8}}{\textit{Two-loop
  master integrals for {$\gamma^{\ast} \rightarrow 3$} jets: the non-planar
  topologies}},
  \href{\detokenize{http://dx.doi.org/10.1016/S0550-3213(01)00074-8}}{\emph{Nucl.
  Phys. B} \textbf{601} (May, 2001) }pp.~287--317
  [\href{http://xxx.lanl.gov/abs/hep-ph/0101124}{\texttt{hep-ph/0101124}}].

\bibitem{GehrmannRemiddi:TwoLoopMasterIntegralsPlanar}
T.~Gehrmann and E.~Remiddi,
  \href{\detokenize{http://dx.doi.org/10.1016/S0550-3213(01)00057-8}}{\textit{Two-loop
  master integrals for {$\gamma^{\ast} \rightarrow 3$} jets: the planar
  topologies}},
  \href{\detokenize{http://dx.doi.org/10.1016/S0550-3213(01)00057-8}}{\emph{Nucl.
  Phys. B} \textbf{601} (May, 2001) }pp.~248--286
  [\href{http://xxx.lanl.gov/abs/hep-ph/0008287}{\texttt{hep-ph/0008287}}].

\bibitem{GehrmannRemiddi:Numerical2dHpl}
T.~Gehrmann and E.~Remiddi,
  \href{\detokenize{http://dx.doi.org/10.1016/S0010-4655(02)00139-X}}{\textit{Numerical
  evaluation of two-dimensional harmonic polylogarithms}},
  \href{\detokenize{http://dx.doi.org/10.1016/S0010-4655(02)00139-X}}{\emph{Comput.
  Phys. Commun.} \textbf{144} (Apr., 2002) }pp.~200--223
  [\href{http://xxx.lanl.gov/abs/hep-ph/0111255}{\texttt{hep-ph/0111255}}].

\bibitem{GehrmannManteuffelTancrediWeihs:TwoLoopqqVV}
T.~{Gehrmann}, A.~{von Manteuffel}, L.~{Tancredi} and E.~{Weihs},
  \href{\detokenize{http://dx.doi.org/10.1007/JHEP06(2014)032}}{\textit{The
  two-loop master integrals for {$q\overline{q} \rightarrow VV$}}},
  \href{\detokenize{http://dx.doi.org/10.1007/JHEP06(2014)032}}{\emph{Journal
  of High Energy Physics} \textbf{2014} (June, 2014) }p.~32
  [\href{http://xxx.lanl.gov/abs/1404.4853}{\texttt{arXiv:1404.4853}}].

\bibitem{Golz:EvaluationTechniques}
M.~Golz,
  \href{\detokenize{http://www2.mathematik.hu-berlin.de/~kreimer/wp-content/uploads/BA_Golz_final.pdf}}{\textit{Evaluation
  techniques for {Feynman} diagrams}},  bachelor's thesis,
  Humboldt-Universit\"at zu Berlin, May, 2013.

\bibitem{Goncharov:MplCyclotomyModularComplexes}
A.~B. Goncharov,
  \href{\detokenize{http://dx.doi.org/10.4310/MRL.1998.v5.n4.a7}}{\textit{Multiple
  polylogarithms, cyclotomy and modular complexes}},
  \href{\detokenize{http://dx.doi.org/10.4310/MRL.1998.v5.n4.a7}}{\emph{Math.
  Res. Lett.} \textbf{5} (1998), no.~4 }pp.~497--516
  [\href{http://xxx.lanl.gov/abs/1105.2076}{\texttt{arXiv:1105.2076}}].

\bibitem{Goncharov:MplMixedTateMotives}
A.~B. Goncharov,
  \href{\detokenize{http://xxx.lanl.gov/abs/math/0103059}}{``Multiple
  polylogarithms and mixed {Tate} motives.''} preprint, Mar., 2001,
  [\href{http://xxx.lanl.gov/abs/math/0103059}{\texttt{math/0103059}}].

\bibitem{GoncharovSpradlinVerguVolovich:ClassicalPolylogarithmsAmplitudesWilsonLoops}
A.~B. Goncharov, M.~Spradlin, C.~Vergu and A.~Volovich,
  \href{\detokenize{http://dx.doi.org/10.1103/PhysRevLett.105.151605}}{\textit{{Classical
  Polylogarithms for Amplitudes and Wilson Loops}}},
  \href{\detokenize{http://dx.doi.org/10.1103/PhysRevLett.105.151605}}{\emph{Phys.
  Rev. Lett.} \textbf{105} (Oct., 2010) }p.~151605
  [\href{http://xxx.lanl.gov/abs/1006.5703}{\texttt{arXiv:1006.5703}}].

\bibitem{Grozin:TwoLoop}
A.~G. Grozin,
  \href{\detokenize{http://dx.doi.org/10.1142/S0217751X12300189}}{\textit{{Massless
  Two-Loop Self-Energy Diagram: Historical Review}}},
  \href{\detokenize{http://dx.doi.org/10.1142/S0217751X12300189}}{\emph{Int. J.
  Mod. Phys. A} \textbf{27} (July, 2012) }p.~30018
  [\href{http://xxx.lanl.gov/abs/1206.2572}{\texttt{arXiv:1206.2572}}].

\bibitem{Henn:MultiloopSimple}
J.~M. Henn,
  \href{\detokenize{http://dx.doi.org/10.1103/PhysRevLett.110.251601}}{\textit{Multiloop
  integrals in dimensional regularization made simple}},
  \href{\detokenize{http://dx.doi.org/10.1103/PhysRevLett.110.251601}}{\emph{Phys.
  Rev. Lett.} \textbf{110} (June, 2013) }p.~251601
  [\href{http://xxx.lanl.gov/abs/1304.1806}{\texttt{arXiv:1304.1806}}].

\bibitem{Henn:LecturesDE}
J.~M. {Henn},
  \href{\detokenize{http://dx.doi.org/10.1088/1751-8113/48/15/153001}}{\textit{Lectures
  on differential equations for {Feynman} integrals}},
  \href{\detokenize{http://dx.doi.org/10.1088/1751-8113/48/15/153001}}{\emph{Journal
  of Physics A: Mathematical and Theoretical} \textbf{48} (Apr., 2015)
  }p.~153001
  [\href{http://xxx.lanl.gov/abs/1412.2296}{\texttt{arXiv:1412.2296}}].

\bibitem{HennSmirnov:PlanarThreeLoop}
J.~M. Henn, A.~V. Smirnov and V.~A. Smirnov,
  \href{\detokenize{http://dx.doi.org/10.1007/JHEP07(2013)128}}{\textit{Analytic
  results for planar three-loop four-point integrals from a
  {Knizhnik}-{Zamolodchikov} equation}},
  \href{\detokenize{http://dx.doi.org/10.1007/JHEP07(2013)128}}{\emph{JHEP}
  \textbf{1307} (2013) }p.~128
  [\href{http://xxx.lanl.gov/abs/1306.2799}{\texttt{arXiv:1306.2799}}].

\bibitem{HennSmirnov:K4}
J.~M. Henn, A.~V. Smirnov and V.~A. Smirnov,
  \href{\detokenize{http://dx.doi.org/10.1007/JHEP03(2014)088}}{\textit{Evaluating
  single-scale and/or non-planar diagrams by differential equations}},
  \href{\detokenize{http://dx.doi.org/10.1007/JHEP03(2014)088}}{\emph{JHEP}
  \textbf{2014} (Mar., 2014) }p.~88 TTP13-046,
  [\href{http://xxx.lanl.gov/abs/1312.2588}{\texttt{arXiv:1312.2588}}].

\bibitem{HennSmirnov:Bhabha}
J.~M. Henn and V.~A. Smirnov,
  \href{\detokenize{http://dx.doi.org/10.1007/JHEP11(2013)041}}{\textit{Analytic
  results for two-loop master integrals for {Bhabha} scattering {I}}},
  \href{\detokenize{http://dx.doi.org/10.1007/JHEP11(2013)041}}{\emph{JHEP}
  \textbf{1311} (2013) }p.~041
  [\href{http://xxx.lanl.gov/abs/1307.4083}{\texttt{arXiv:1307.4083}}].

\bibitem{HuberMaitre:HypExp}
T.~Huber and D.~Ma{\^i}tre,
  \href{\detokenize{http://dx.doi.org/10.1016/j.cpc.2006.01.007}}{\textit{{\texttt{HypExp}}:
  {A} {Mathematica} package for expanding hypergeometric functions around
  integer-valued parameters}},
  \href{\detokenize{http://dx.doi.org/10.1016/j.cpc.2006.01.007}}{\emph{Comput.
  Phys. Commun.} \textbf{175} (2006) }pp.~122--144
  [\href{http://xxx.lanl.gov/abs/hep-ph/0507094}{\texttt{hep-ph/0507094}}].

\bibitem{HuberMaitre:HypExp2}
T.~Huber and D.~Ma{\^i}tre,
  \href{\detokenize{http://dx.doi.org/10.1016/j.cpc.2007.12.008}}{\textit{{\texttt{HypExp
  2}}, {Expanding} hypergeometric functions about half-integer parameters}},
  \href{\detokenize{http://dx.doi.org/10.1016/j.cpc.2007.12.008}}{\emph{Comput.
  Phys. Commun.} \textbf{178} (2008) }pp.~755--776
  [\href{http://xxx.lanl.gov/abs/0708.2443}{\texttt{arXiv:0708.2443}}].

\bibitem{IharaKanekoZagier:DerivationDoubleShuffle}
K.~Ihara, M.~Kaneko and D.~Zagier,
  \href{\detokenize{http://dx.doi.org/10.1112/S0010437X0500182X}}{\textit{Derivation
  and double shuffle relations for multiple zeta values}},
  \href{\detokenize{http://dx.doi.org/10.1112/S0010437X0500182X}}{\emph{Compositio
  Mathematica} \textbf{142} (Mar., 2006) }pp.~307--338.

\bibitem{Isaev:OperatorApproach}
A.~P. Isaev,
  \href{\detokenize{http://dx.doi.org/10.1134/S1063778808050219}}{\textit{Operator
  approach to analytical evaluation of {Feynman} diagrams}},
  \href{\detokenize{http://dx.doi.org/10.1134/S1063778808050219}}{\emph{Phys.
  At. Nucl.} \textbf{71} (May, 2008) }pp.~914--924
  [\href{http://xxx.lanl.gov/abs/0709.0419}{\texttt{arXiv:0709.0419}}].

\bibitem{ItzyksonZuber}
C.~Itzykson and J.-B. Zuber, {\emph{Quantum Field Theory}}.
\newblock Dover Publications, Inc., 2005.

\bibitem{JantzenSmirnov:PotentialAndGlauberRegions}
B.~Jantzen, A.~V. Smirnov and V.~A. Smirnov,
  \href{\detokenize{http://dx.doi.org/10.1140/epjc/s10052-012-2139-2}}{\textit{Expansion
  by regions: revealing potential and {Glauber} regions automatically}},
  \href{\detokenize{http://dx.doi.org/10.1140/epjc/s10052-012-2139-2}}{\emph{Eur.
  Phys. J. C} \textbf{72} (2012) }p.~2139
  [\href{http://xxx.lanl.gov/abs/1206.0546}{\texttt{arXiv:1206.0546}}].

\bibitem{Kazakov:MultiBoxSix}
D.~I. Kazakov,
  \href{\detokenize{http://dx.doi.org/10.1007/JHEP04(2014)121}}{\textit{{Evaluation
  of Multi-Box Diagrams in Six Dimensions}}},
  \href{\detokenize{http://dx.doi.org/10.1007/JHEP04(2014)121}}{\emph{JHEP}
  \textbf{2014} (Apr., 2014) }pp.~1--9
  [\href{http://xxx.lanl.gov/abs/1402.1024}{\texttt{arXiv:1402.1024}}].

\bibitem{Koelbig:Nielsen}
K.~S. K{\"o}lbig,
  \href{\detokenize{http://dx.doi.org/10.1137/0517086}}{\textit{Nielsen's
  generalized polylogarithms}},
  \href{\detokenize{http://dx.doi.org/10.1137/0517086}}{\emph{SIAM Journal on
  Mathematical Analysis} \textbf{17} (1986), no.~5 }pp.~1232--1258.

\bibitem{KoelbigMignacoRemiddi:NielsenNumerical}
K.~S. K{\"o}lbig, J.~A. Mignaco and E.~Remiddi,
  \href{\detokenize{http://dx.doi.org/10.1007/BF01940890}}{\textit{On
  {Nielsen}'s generalized polylogarithms and their numerical calculation}},
  \href{\detokenize{http://dx.doi.org/10.1007/BF01940890}}{\emph{BIT Numerical
  Mathematics} \textbf{10} (1970), no.~1 }pp.~38--73.

\bibitem{Periods}
M.~Kontsevich and D.~Zagier,
  \href{\detokenize{http://preprints.ihes.fr/M01/M01-22.ps.gz}}{\textit{Periods}},
  in \emph{{Mathematics} {Unlimited} - 2001 and {Beyond}} (B.~Engquist and
  W.~Schmid, eds.), pp.~771--808.
\newblock Springer, 2001.

\bibitem{Kreimer:WheelsInWheels}
D.~Kreimer,
  \href{\detokenize{http://www.mathematik.hu-berlin.de/~maphy/MadridAMS.pdf}}{\textit{Quantum
  fields, periods and algebraic geometry}},  in \emph{PM2012 - Periods and
  Motives (Madrid, July 2--6, 2012)}, Apr., 2013.
\newblock [\href{http://xxx.lanl.gov/abs/1405.4964}{\texttt{arXiv:1405.4964}}],
  to appear.

\bibitem{Kreimer:LL2014}
D.~Kreimer,
  \href{\detokenize{http://pos.sissa.it/archive/conferences/211/047/LL2014_047.pdf}}{\textit{What
  can we learn from {Knizhnik}-{Zamolodchikov} {Equations}?}},  in \emph{Loops
  and Legs in Quantum Field Theory - LL 2014, 27 April--2 May 2014, Weimar,
  Germany}, Proceedings of Science, July, 2014.
\newblock [\href{http://xxx.lanl.gov/abs/1407.5150}{\texttt{arXiv:1407.5150}}],
  PoS(LL2014)047.

\bibitem{Panzer:Mellin}
D.~Kreimer and E.~Panzer,
  \href{\detokenize{http://dx.doi.org/10.1007/978-3-7091-1616-6_8}}{\textit{{Renormalization}
  and {Mellin} transforms}},  in \emph{Computer Algebra in Quantum Field
  Theory} (C.~Schneider and J.~Bl{\"u}mlein, eds.), vol.~XII of \emph{Texts \&
  Monographs in Symbolic Computation}, pp.~195--223.
\newblock Springer Wien, Sept., 2013.
\newblock [\href{http://xxx.lanl.gov/abs/1207.6321}{\texttt{arXiv:1207.6321}}].

\bibitem{KreimerSarsSuijlekom}
D.~Kreimer, M.~Sars and W.~D. {van Suijlekom},
  \href{\detokenize{http://dx.doi.org/10.1016/j.aop.2013.04.019}}{\textit{Quantization
  of gauge fields, graph polynomials and graph cohomology}},
  \href{\detokenize{http://dx.doi.org/10.1016/j.aop.2013.04.019}}{\emph{Ann.
  Phys.} \textbf{336} (Sept., 2013) }pp.~180--222
  [\href{http://xxx.lanl.gov/abs/1208.6477}{\texttt{arXiv:1208.6477}}].

\bibitem{Kreimer:NonlinearDSE}
D.~Kreimer and K.~A. Yeats,
  \href{\detokenize{http://dx.doi.org/10.1016/j.nuclphysbps.2006.09.036}}{\textit{An
  {\'e}tude in non-linear {Dyson}-{Schwinger} equations}},
  \href{\detokenize{http://dx.doi.org/10.1016/j.nuclphysbps.2006.09.036}}{\emph{Nucl.
  Phys. B (Proc. Suppl.)} \textbf{160} (Oct., 2006) }pp.~116--121
  [\href{http://xxx.lanl.gov/abs/hep-th/0605096}{\texttt{hep-th/0605096}}].
  Proceedings of the 8th DESY Workshop on Elementary Particle Theory.

\bibitem{Kummer:IntegrationenRationalerFormeln}
E.~E. Kummer, {\textit{{{\"U}ber} die {Transcendenten}, welche aus wiederholten
  {Integrationen} rationaler {Formeln} entstehen.}},  {\emph{Journal f{\"u}r
  die reine und angewandte Mathematik} \textbf{21} (1840) }pp.~74--90;
  193--225; 328--371.

\bibitem{LappoDanilevsky:CorpsRiemann}
J.~A. {Lappo-Danilevsky},
  \href{\detokenize{http://mi.mathnet.ru/msb7440}}{\textit{Th{\'e}orie
  algorithmique des corps de {Riemann}.}},
  \href{\detokenize{http://mi.mathnet.ru/msb7440}}{\emph{Rec. Math. Moscou}
  \textbf{34} (1927), no.~6 }pp.~113--146.

\bibitem{LappoDanilevsky}
J.~A. {Lappo-Danilevsky}, {\emph{M{\'e}moires sur la th{\'e}orie des
  syst{\`e}mes des {\'e}quations diff{\'e}rentielles lin{\'e}aires}},
  vol.~I--III.
\newblock Chelsea, 1953.
\newblock Individual volumes first published in
  \href{http://mi.mathnet.ru/tm885}{Trav. Inst. Stekloff 6--8}, (1934--1936).

\bibitem{Lee:DimensionalRecurrenceAnalyticalProperties}
R.~N. Lee,
  \href{\detokenize{http://dx.doi.org/10.1016/j.nuclphysb.2009.12.025}}{\textit{Space-time
  dimensionality $\mathcal{D}$ as complex variable: Calculating loop integrals
  using dimensional recurrence relation and analytical properties with respect
  to $\mathcal{D}$}},
  \href{\detokenize{http://dx.doi.org/10.1016/j.nuclphysb.2009.12.025}}{\emph{Nuclear
  Physics B} \textbf{830} (2010), no.~3 }pp.~474--492
  [\href{http://xxx.lanl.gov/abs/0911.0252}{\texttt{arXiv:0911.0252}}].

\bibitem{LeeSmirnov:EasyWay}
R.~N. Lee, A.~V. Smirnov and V.~A. Smirnov,
  \href{\detokenize{http://dx.doi.org/10.1016/j.nuclphysbps.2010.09.011}}{\textit{Dimensional
  recurrence relations: an easy way to evaluate higher orders of expansion in
  {$\epsilon$}}},
  \href{\detokenize{http://dx.doi.org/10.1016/j.nuclphysbps.2010.09.011}}{\emph{Nucl.
  Phys. B (Proc. Suppl.)} \textbf{205--206} (Aug., 2010) }pp.~308--313
  [\href{http://xxx.lanl.gov/abs/1005.0362}{\texttt{arXiv:1005.0362}}].

\bibitem{LeeSmirnov:FourLoopNonPlanarPropagators}
R.~N. Lee, A.~V. Smirnov and V.~A. Smirnov,
  \href{\detokenize{http://dx.doi.org/10.1140/epjc/s10052-011-1708-0}}{\textit{On
  epsilon expansions of four-loop non-planar massless propagator diagrams}},
  \href{\detokenize{http://dx.doi.org/10.1140/epjc/s10052-011-1708-0}}{\emph{Eur.
  Phys. J. C} \textbf{71} (Aug., 2011) }p.~1708
  [\href{http://xxx.lanl.gov/abs/1103.3409}{\texttt{arXiv:1103.3409}}].

\bibitem{LeeSmirnov:FourLoopPropagatorsWeightTwelve}
R.~N. Lee, A.~V. Smirnov and V.~A. Smirnov,
  \href{\detokenize{http://dx.doi.org/10.1016/j.nuclphysb.2011.11.005}}{\textit{Master
  integrals for four-loop massless propagators up to weight twelve}},
  \href{\detokenize{http://dx.doi.org/10.1016/j.nuclphysb.2011.11.005}}{\emph{Nucl.
  Phys. B} \textbf{856} (Mar., 2012) }pp.~95--110
  [\href{http://xxx.lanl.gov/abs/1108.0732}{\texttt{arXiv:1108.0732}}].

\bibitem{Lewin:PolylogarithmsAssociatedFunctions}
L.~Lewin, {\emph{Polylogarithms and associated functions}}.
\newblock North Holland, July, 1981.

\bibitem{Lewin:StructuralPropertiesPolylogarithms}
L.~Lewin,
  \href{\detokenize{http://dx.doi.org/10.1090/surv/037}}{\emph{{Structural
  Properties of Polylogarithms}}}, vol.~37 of \emph{Mathematical Surveys and
  Monographs}.
\newblock American Mathematical Society, 1991.

\bibitem{LowensteinZimmermann:PowerCountingMassless}
J.~H. Lowenstein and W.~Zimmermann,
  \href{\detokenize{http://dx.doi.org/10.1007/BF01609059}}{\textit{The power
  counting theorem for {Feynman} integrals with massless propagators}},
  \href{\detokenize{http://dx.doi.org/10.1007/BF01609059}}{\emph{Commun. Math.
  Phys.} \textbf{44} (1975), no.~1 }pp.~73--86.

\bibitem{Lueders:LinearReduction}
M.~L{\"u}ders,
  \href{\detokenize{http://www2.mathematik.hu-berlin.de/~kreimer/wp-content/uploads/masterarbeit2.pdf}}{\textit{Linear
  polynomial reduction for {Feynman} integrals}},  master's thesis,
  Humboldt-Universit\"at zu Berlin, Sept., 2013.

\bibitem{Manchon}
D.~Manchon,
  \href{\detokenize{http://dx.doi.org/10.1016/S1570-7954(07)05007-3}}{\textit{Hopf
  algebras in renormalisation}},  in \emph{Handbook of Algebra} (M.~Hazewinkel,
  ed.), vol.~5 of \emph{Handbook of Algebra}, pp.~365--427.
\newblock Elsevier North-Holland, 2008.
\newblock [\href{http://xxx.lanl.gov/abs/math/0408405}{\texttt{math/0408405}}].

\bibitem{Maple}
{Maplesoft, a division of Waterloo Maple Inc.}, {``Maple 16.''}

\bibitem{MochUwer:XSummer}
S.~Moch and P.~Uwer,
  \href{\detokenize{http://dx.doi.org/10.1016/j.cpc.2005.12.014}}{\textit{-{{\textsc{XSummer}}}-
  {Transcendental} functions and symbolic summation in {{\textsc{Form}}}}},
  \href{\detokenize{http://dx.doi.org/10.1016/j.cpc.2005.12.014}}{\emph{Comput.
  Phys. Commun.} \textbf{174} (May, 2006) }pp.~759--770
  [\href{http://xxx.lanl.gov/abs/math-ph/0508008}{\texttt{math-ph/0508008}}].

\bibitem{MochUwerWeinzierl:NestedSums}
S.~Moch, P.~Uwer and S.~Weinzierl,
  \href{\detokenize{http://dx.doi.org/10.1063/1.1471366}}{\textit{Nested sums,
  expansion of transcendental functions and multiscale multiloop integrals}},
  \href{\detokenize{http://dx.doi.org/10.1063/1.1471366}}{\emph{J. Math. Phys.}
  \textbf{43} (2002) }pp.~3363--3386
  [\href{http://xxx.lanl.gov/abs/hep-ph/0110083}{\texttt{hep-ph/0110083}}].

\bibitem{Nakanishi:GraphTheoryFeynmanIntegrals}
N.~Nakanishi, {\emph{{Graph} theory and {Feynman} integrals}}, vol.~11 of
  \emph{Mathematics and its applications}.
\newblock Gordon and Breach, New York, 1971.

\bibitem{NandanPaulosSpradlinVolovich:StarIntegrals}
D.~Nandan, M.~F. Paulos, M.~Spradlin and A.~Volovich,
  \href{\detokenize{http://dx.doi.org/10.1007/JHEP05(2013)105}}{\textit{Star
  integrals, convolutions and simplices}},
  \href{\detokenize{http://dx.doi.org/10.1007/JHEP05(2013)105}}{\emph{JHEP}
  \textbf{2013} (May, 2013) }p.~105
  [\href{http://xxx.lanl.gov/abs/1301.2500}{\texttt{arXiv:1301.2500}}].

\bibitem{Nielsen:DilogarithmusVerallgemeinerungen}
N.~Nielsen, {\textit{Der {Eulersche} {Dilogarithmus} und seine
  {Verallgemeinerungen}}},  {\emph{Nova acta - Kaiserlich
  Leopoldinisch-Carolinische Deutsche Akademie der Naturforscher} \textbf{90}
  (1909), no.~3 }pp.~121--212.

\bibitem{PakSmirnov:GeometricApproachAsymptotic}
A.~Pak and A.~Smirnov,
  \href{\detokenize{http://dx.doi.org/10.1140/epjc/s10052-011-1626-1}}{\textit{Geometric
  approach to asymptotic expansion of {Feynman} integrals}},
  \href{\detokenize{http://dx.doi.org/10.1140/epjc/s10052-011-1626-1}}{\emph{Eur.
  Phys. J. C} \textbf{71} (Apr., 2011) }p.~1626
  [\href{http://xxx.lanl.gov/abs/1011.4863}{\texttt{arXiv:1011.4863}}].

\bibitem{Panzer:Master}
E.~Panzer,
  \href{\detokenize{http://xxx.lanl.gov/abs/1202.3552}}{\textit{{Hopf}-algebraic
  {Renormalization} of {Kreimer}'s toy model}},  Master's thesis,
  Humboldt-Universit{\"a}t zu Berlin, July, 2011.
\newblock [\href{http://xxx.lanl.gov/abs/1202.3552}{\texttt{arXiv:1202.3552}}].

\bibitem{Panzer:PM2012}
E.~Panzer,
  \href{\detokenize{http://xxx.lanl.gov/abs/1407.4943}}{\textit{Renormalisation,
  {Hopf} algebras and {Mellin} transforms}},  in \emph{Periods and {Motives} -
  {A} {New} {Perspective} on {Renormalization} {(ICMAT Madrid, July 2-6,
  2012)}}, 2012.
\newblock [\href{http://xxx.lanl.gov/abs/1407.4943}{\texttt{arXiv:1407.4943}}],
  to be published with the AMS.

\bibitem{Panzer:MasslessPropagators}
E.~Panzer,
  \href{\detokenize{http://dx.doi.org/10.1016/j.nuclphysb.2013.05.025}}{\textit{On
  the analytic computation of massless propagators in dimensional
  regularization}},
  \href{\detokenize{http://dx.doi.org/10.1016/j.nuclphysb.2013.05.025}}{\emph{Nucl.
  Phys. B} \textbf{874} (Sept., 2013) }pp.~567--593
  [\href{http://xxx.lanl.gov/abs/1305.2161}{\texttt{arXiv:1305.2161}}].

\bibitem{Panzer:MasslessPropagatorsData}
E.~Panzer,
  \href{\detokenize{http://www.mathematik.hu-berlin.de/~panzer/}}{\textit{result
  data for 3- and 4-loop massless propagators}},  May, 2013.
\newblock part of \cite{Panzer:MasslessPropagators} (available from arXiv
  version).

\bibitem{Panzer:LL2014}
E.~Panzer,
  \href{\detokenize{http://pos.sissa.it/archive/conferences/211/049/LL2014_049.pdf}}{\textit{Feynman
  integrals via hyperlogarithms}},  in \emph{Loops and Legs in Quantum Field
  Theory - LL 2014, 27 April--2 May 2014, Weimar, Germany}, Proceedings of
  Science, 2014.
\newblock [\href{http://xxx.lanl.gov/abs/1407.0074}{\texttt{arXiv:1407.0074}}],
  PoS(LL2014)049.

\bibitem{Panzer:DivergencesManyScales}
E.~Panzer,
  \href{\detokenize{http://dx.doi.org/10.1007/JHEP03(2014)071}}{\textit{On
  hyperlogarithms and {Feynman} integrals with divergences and many scales}},
  \href{\detokenize{http://dx.doi.org/10.1007/JHEP03(2014)071}}{\emph{JHEP}
  \textbf{2014} (Mar., 2014) }p.~71
  [\href{http://xxx.lanl.gov/abs/1401.4361}{\texttt{arXiv:1401.4361}}].

\bibitem{Panzer:HyperIntAlgorithms}
E.~Panzer,
  \href{\detokenize{http://dx.doi.org/10.1016/j.cpc.2014.10.019}}{\textit{Algorithms
  for the symbolic integration of hyperlogarithms with applications to
  {Feynman} integrals}},
  \href{\detokenize{http://dx.doi.org/10.1016/j.cpc.2014.10.019}}{\emph{Computer
  Physics Communications} \textbf{188} (Mar., 2015) }pp.~148--166
  [\href{http://xxx.lanl.gov/abs/1403.3385}{\texttt{arXiv:1403.3385}}].

\bibitem{Pham:Singularities}
F.~Pham,
  \href{\detokenize{http://dx.doi.org/10.1007/978-0-85729-603-0}}{\emph{Singularities
  of integrals}}.
\newblock Universitext. Springer London, 2011.
\newblock 217 p., beautifully translated by Francis Brown.

\bibitem{Poincare:GroupesEquationsLineaires}
H.~Poincar{\'e},
  \href{\detokenize{http://dx.doi.org/10.1007/BF02418420}}{\textit{Sur les
  groupes des {\'e}quations lin{\'e}aires}},
  \href{\detokenize{http://dx.doi.org/10.1007/BF02418420}}{\emph{Acta
  Mathematica} \textbf{4} (1884), no.~1 }pp.~201--312.

\bibitem{Radford:BasisShuffleAlgebra}
D.~E. Radford,
  \href{\detokenize{http://dx.doi.org/10.1016/0021-8693(79)90171-6}}{\textit{A
  natural ring basis for the shuffle algebra and an application to group
  schemes}},
  \href{\detokenize{http://dx.doi.org/10.1016/0021-8693(79)90171-6}}{\emph{J.
  Algebra} \textbf{58} (1979), no.~2 }pp.~432--454.

\bibitem{RemiddiVermaseren:HarmonicPolylogarithms}
E.~Remiddi and J.~A.~M. Vermaseren,
  \href{\detokenize{http://dx.doi.org/10.1142/S0217751X00000367}}{\textit{Harmonic
  polylogarithms}},
  \href{\detokenize{http://dx.doi.org/10.1142/S0217751X00000367}}{\emph{Int. J.
  Mod. Phys. A} \textbf{15} (2000), no.~5 }pp.~725--754
  [\href{http://xxx.lanl.gov/abs/hep-ph/9905237}{\texttt{hep-ph/9905237}}].

\bibitem{Reutenauer:FreeLieAlgebras}
C.~Reutenauer, {\emph{{Free Lie algebras}}}, vol.~7 of \emph{London
  Mathematical Society Monographs}.
\newblock Oxford: Clarendon Press, May, 1993.

\bibitem{RobertsonSeymour:XX}
N.~Robertson and P.~D. Seymour,
  \href{\detokenize{http://dx.doi.org/10.1016/j.jctb.2004.08.001}}{\textit{{Graph
  Minors. XX. Wagner's conjecture}}},
  \href{\detokenize{http://dx.doi.org/10.1016/j.jctb.2004.08.001}}{\emph{J.
  Combin. Theory Ser. B} \textbf{92} (Nov., 2004) }pp.~325--357. Special Issue
  Dedicated to Professor W.T. Tutte.

\bibitem{Sauvigny:PDE1}
F.~Sauvigny,
  \href{\detokenize{http://dx.doi.org/10.1007/978-1-4471-2981-3}}{\emph{Partial
  differential equations 1}}.
\newblock Springer, Berlin, second~ed., 2012.

\bibitem{Schnetz:K34}
O.~Schnetz,
  \href{\detokenize{http://xxx.lanl.gov/abs/hep-th/9912149}}{``Calculation of
  the {$\varphi^{4}$} 6-loop non-zeta transcendental.''} preprint, Dec., 1999,
  [\href{http://xxx.lanl.gov/abs/hep-th/9912149}{\texttt{hep-th/9912149}}].

\bibitem{Schnetz:Census}
O.~Schnetz,
  \href{\detokenize{http://dx.doi.org/10.4310/CNTP.2010.v4.n1.a1}}{\textit{Quantum
  periods: {A} {Census} of {$\phi^4$}-transcendentals}},
  \href{\detokenize{http://dx.doi.org/10.4310/CNTP.2010.v4.n1.a1}}{\emph{Commun.
  Number Theory Phys.} \textbf{4} (2010), no.~1 }pp.~1--47
  [\href{http://xxx.lanl.gov/abs/0801.2856}{\texttt{arXiv:0801.2856}}].

\bibitem{Schnetz:Fq}
O.~Schnetz,
  \href{\detokenize{http://xxx.lanl.gov/abs/0909.0905}}{\textit{Quantum field
  theory over {$\mathbb{F}_q$}}},
  \href{\detokenize{http://xxx.lanl.gov/abs/0909.0905}}{\emph{Electron. J.
  Combin.} \textbf{18} }(May, 2011)
  [\href{http://xxx.lanl.gov/abs/0909.0905}{\texttt{arXiv:0909.0905}}]. P102.

\bibitem{Schnetz:ZetaProcedures}
O.~Schnetz,
  \href{\detokenize{http://www.mathematik.hu-berlin.de/~kreimer/tools/}}{\textit{\texttt{zeta\_procedures}}},
  2012.
\newblock Maple programs implementing \cite{Schnetz:GraphicalFunctions},
  available from \url{http://www.mathematik.hu-berlin.de/~kreimer/tools/}.

\bibitem{Schnetz:GraphicalFunctions}
O.~Schnetz,
  \href{\detokenize{http://dx.doi.org/10.4310/CNTP.2014.v8.n4.a1}}{\textit{Graphical
  functions and single-valued multiple polylogarithms}},
  \href{\detokenize{http://dx.doi.org/10.4310/CNTP.2014.v8.n4.a1}}{\emph{Communications
  in Number Theory and Physics} \textbf{8} (2014), no.~4 }pp.~589--675
  [\href{http://xxx.lanl.gov/abs/1302.6445}{\texttt{arXiv:1302.6445}}].

\bibitem{Smirnov:FIESTA3}
A.~V. Smirnov,
  \href{\detokenize{http://dx.doi.org/10.1016/j.cpc.2014.03.015}}{\textit{{FIESTA}
  3: {Cluster}-parallelizable multiloop numerical calculations in physical
  regions}},
  \href{\detokenize{http://dx.doi.org/10.1016/j.cpc.2014.03.015}}{\emph{Computer
  Physics Communications} \textbf{185} (July, 2014) }pp.~2090--2100
  [\href{http://xxx.lanl.gov/abs/1312.3186}{\texttt{arXiv:1312.3186}}].

\bibitem{Smirnov:Fire4LiteRed}
A.~V. Smirnov and V.~A. Smirnov,
  \href{\detokenize{http://dx.doi.org/10.1016/j.cpc.2013.06.016}}{\textit{{FIRE}4,
  {LiteRed} and accompanying tools to solve integration by parts relations}},
  \href{\detokenize{http://dx.doi.org/10.1016/j.cpc.2013.06.016}}{\emph{Computer
  Physics Communications} \textbf{184} (Dec., 2013) }pp.~2820--2827
  [\href{http://xxx.lanl.gov/abs/1302.5885}{\texttt{arXiv:1302.5885}}].

\bibitem{SmirnovTentyukov:FourLoopPropagatorsNumeric}
A.~V. Smirnov and M.~Tentyukov,
  \href{\detokenize{http://dx.doi.org/10.1016/j.nuclphysb.2010.04.020}}{\textit{Four-loop
  massless propagators: {A} numerical evaluation of all master integrals}},
  \href{\detokenize{http://dx.doi.org/10.1016/j.nuclphysb.2010.04.020}}{\emph{Nucl.
  Phys. B} \textbf{837} (Sept., 2010) }pp.~40--49
  [\href{http://xxx.lanl.gov/abs/1004.1149}{\texttt{arXiv:1004.1149}}].

\bibitem{SmirnovTentyukov:FIESTA}
A.~V. Smirnov and M.~N. Tentyukov,
  \href{\detokenize{http://dx.doi.org/10.1016/j.cpc.2008.11.006}}{\textit{{Feynman
  Integral Evaluation by a Sector decomposiTion Approach (FIESTA)}}},
  \href{\detokenize{http://dx.doi.org/10.1016/j.cpc.2008.11.006}}{\emph{Comput.
  Phys. Commun.} \textbf{180} (2009) }pp.~735--746
  [\href{http://xxx.lanl.gov/abs/0807.4129}{\texttt{arXiv:0807.4129}}].

\bibitem{Smirnov:DoubleBoxOnShell}
V.~A. Smirnov,
  \href{\detokenize{http://dx.doi.org/10.1016/S0370-2693(99)00777-7}}{\textit{Analytical
  result for dimensionally regularized massless on-shell double box}},
  \href{\detokenize{http://dx.doi.org/10.1016/S0370-2693(99)00777-7}}{\emph{Phys.
  Lett. B} \textbf{460} (Aug., 1999) }pp.~397--404
  [\href{http://xxx.lanl.gov/abs/hep-ph/9905323}{\texttt{hep-ph/9905323}}].

\bibitem{Smirnov:ProblemsStrategyRegions}
V.~A. Smirnov,
  \href{\detokenize{http://dx.doi.org/10.1016/S0370-2693(99)01061-8}}{\textit{Problems
  of the strategy of regions}},
  \href{\detokenize{http://dx.doi.org/10.1016/S0370-2693(99)01061-8}}{\emph{Phys.
  Lett. B} \textbf{465} (Oct., 1999) }pp.~226--234
  [\href{http://xxx.lanl.gov/abs/hep-ph/9907471}{\texttt{hep-ph/9907471}}].

\bibitem{Smirnov:OnShellTripleBox}
V.~A. Smirnov,
  \href{\detokenize{http://dx.doi.org/10.1016/S0370-2693(03)00895-5}}{\textit{Analytical
  result for dimensionally regularized massless on shell planar triple box}},
  \href{\detokenize{http://dx.doi.org/10.1016/S0370-2693(03)00895-5}}{\emph{Phys.
  Lett. B} \textbf{567} (Aug., 2003) }pp.~193--199
  [\href{http://xxx.lanl.gov/abs/hep-ph/0305142}{\texttt{hep-ph/0305142}}].

\bibitem{Smirnov:EvaluatingFeynmanIntegrals}
V.~A. Smirnov, {\emph{{Evaluating Feynman Integrals}}}, vol.~211 of
  \emph{Springer Tracts in Modern Physics}.
\newblock Springer, 2004.

\bibitem{Smirnov:AnalyticToolsForFeynmanIntegrals}
V.~A. Smirnov,
  \href{\detokenize{http://dx.doi.org/10.1007/978-3-642-34886-0}}{\emph{{Analytic
  Tools for Feynman integrals}}}, vol.~250 of \emph{Springer Tracts in Modern
  Physics}.
\newblock Springer Berlin Heidelberg, 2012.

\bibitem{Sommerfield:MagneticMoment}
C.~M. Sommerfield,
  \href{\detokenize{http://dx.doi.org/10.1016/0003-4916(58)90003-4}}{\textit{The
  magnetic moment of the electron}},
  \href{\detokenize{http://dx.doi.org/10.1016/0003-4916(58)90003-4}}{\emph{Annals
  of Physics} \textbf{5} (Sept., 1958) }pp.~26--57.

\bibitem{Speer:GeneralizedAmplitudes}
E.~R. Speer, {\emph{Generalized Feynman Amplitudes}}, vol.~62 of \emph{Annals
  of Mathematics Studies}.
\newblock Princeton University Press, New Jersey, Apr., 1969.

\bibitem{Speer:SingularityStructureGenericFeynmanAmplitudes}
E.~R. Speer, \href{\detokenize{http://eudml.org/doc/75859}}{\textit{Ultraviolet
  and infrared singularity structure of generic {Feynman} amplitudes}},
  \href{\detokenize{http://eudml.org/doc/75859}}{\emph{Ann. Inst. H.
  Poincar{\'e} Sect. A} \textbf{23} (1975), no.~1 }pp.~1--21.

\bibitem{Stembridge:CountingPoints}
J.~R. Stembridge,
  \href{\detokenize{http://dx.doi.org/10.1007/BF01608531}}{\textit{Counting
  points on varieties over finite fields related to a conjecture of
  {Kontsevich}}},
  \href{\detokenize{http://dx.doi.org/10.1007/BF01608531}}{\emph{Ann. Comb.}
  \textbf{2} (1998), no.~4 }pp.~365--385.

\bibitem{Sweedler}
M.~E. Sweedler, {\emph{Hopf algebras}}.
\newblock Mathematics Lecture Note Series. W. A. Benjamin, Inc., New York,
  1969.

\bibitem{tHooftVeltman:RegularizationGaugeFields}
G.~{'t Hooft} and M.~Veltman,
  \href{\detokenize{http://dx.doi.org/10.1016/0550-3213(72)90279-9}}{\textit{Regularization
  and renormalization of gauge fields}},
  \href{\detokenize{http://dx.doi.org/10.1016/0550-3213(72)90279-9}}{\emph{Nuclear
  Physics B} \textbf{44} (July, 1972) }pp.~189--213.

\bibitem{tHooftVeltman:ScalarOneLoop}
G.~{'t Hooft} and M.~Veltman,
  \href{\detokenize{http://dx.doi.org/10.1016/0550-3213(79)90605-9}}{\textit{Scalar
  one-loop integrals}},
  \href{\detokenize{http://dx.doi.org/10.1016/0550-3213(79)90605-9}}{\emph{Nuclear
  Physics B} \textbf{153} (Jan., 1979) }pp.~365--401.

\bibitem{Tarasov:ConnectionBetweenFeynmanIntegrals}
O.~V. Tarasov,
  \href{\detokenize{http://dx.doi.org/10.1103/PhysRevD.54.6479}}{\textit{Connection
  between {Feynman} integrals having different values of the space-time
  dimension}},
  \href{\detokenize{http://dx.doi.org/10.1103/PhysRevD.54.6479}}{\emph{Phys.
  Rev. D} \textbf{54} (Nov., 1996) }pp.~6479--6490
  [\href{http://xxx.lanl.gov/abs/hep-th/9606018}{\texttt{hep-th/9606018}}].

\bibitem{Tsumura:CombinatorialEulerZagier}
H.~Tsumura,
  \href{\detokenize{http://dx.doi.org/10.4064/aa111-1-3}}{\textit{Combinatorial
  relations for {Euler}-{Zagier} sums}},
  \href{\detokenize{http://dx.doi.org/10.4064/aa111-1-3}}{\emph{Acta
  Arithmetica} \textbf{111} (2004), no.~1 }pp.~27--42.

\bibitem{Vermaseren:Axodraw}
J.~A.~M. Vermaseren,
  \href{\detokenize{http://dx.doi.org/10.1016/0010-4655(94)90034-5}}{\textit{Axodraw}},
  \href{\detokenize{http://dx.doi.org/10.1016/0010-4655(94)90034-5}}{\emph{Computer
  Physics Communications} \textbf{83} (Mar., 1994) }pp.~45--58.

\bibitem{VlasevYeats:FourVertexIdentity}
A.~Vlasev and K.~A. Yeats,
  \href{\detokenize{http://xxx.lanl.gov/abs/1106.2869}}{\textit{A four-vertex,
  quadratic, spanning forest polynomial identity}},
  \href{\detokenize{http://xxx.lanl.gov/abs/1106.2869}}{\emph{Electron. J.
  Linear Algebra} \textbf{23} (2012) }pp.~923--941
  [\href{http://xxx.lanl.gov/abs/1106.2869}{\texttt{arXiv:1106.2869}}].

\bibitem{VollingaWeinzierl:NumericalMpl}
J.~Vollinga and S.~Weinzierl,
  \href{\detokenize{http://dx.doi.org/10.1016/j.cpc.2004.12.009}}{\textit{Numerical
  evaluation of multiple polylogarithms}},
  \href{\detokenize{http://dx.doi.org/10.1016/j.cpc.2004.12.009}}{\emph{Comput.
  Phys. Commun.} \textbf{167} (May, 2005) }pp.~177--194
  [\href{http://xxx.lanl.gov/abs/hep-ph/0410259}{\texttt{hep-ph/0410259}}].

\bibitem{ManteuffelPanzerSchabinger:QuasiFinite}
A.~{von Manteuffel}, E.~{Panzer} and R.~M. {Schabinger},
  \href{\detokenize{http://dx.doi.org/10.1007/JHEP02(2015)120}}{\textit{A
  quasi-finite basis for multi-loop {Feynman} integrals}},
  \href{\detokenize{http://dx.doi.org/10.1007/JHEP02(2015)120}}{\emph{Journal
  of High Energy Physics} \textbf{2015} (Feb., 2015) }p.~120
  [\href{http://xxx.lanl.gov/abs/1411.7392}{\texttt{arXiv:1411.7392}}].

\bibitem{ManteuffelStuderus:Reduze2}
A.~{von Manteuffel} and C.~{Studerus},
  \href{\detokenize{http://xxx.lanl.gov/abs/1201.4330}}{``Reduze 2 -
  distributed {Feynman} integral reduction.''} preprint, Jan., 2012,
  [\href{http://xxx.lanl.gov/abs/1201.4330}{\texttt{arXiv:1201.4330}}].

\bibitem{Wagner:EbeneKomplexe}
K.~Wagner,
  \href{\detokenize{http://dx.doi.org/10.1007/BF01594196}}{\textit{{\"U}ber
  eine {Eigenschaft} der ebenen {Komplexe}}},
  \href{\detokenize{http://dx.doi.org/10.1007/BF01594196}}{\emph{Math. Ann.}
  \textbf{114} (1937), no.~1 }pp.~570--590.

\bibitem{Waldschmidt:LecturesMZV}
M.~Waldschmidt,
  \href{\detokenize{http://www.math.jussieu.fr/~miw/articles/pdf/MZV2011IMSc.pdf}}{\textit{Lectures
  on {Multiple} {Zeta} {Values}}},  Apr., 2011.
\newblock updated September 29, 2012, available from
  \url{http://www.math.jussieu.fr/~miw/articles/pdf/MZV2011IMSc.pdf}.

\bibitem{Weinberg:HighEnergy}
S.~Weinberg,
  \href{\detokenize{http://dx.doi.org/10.1103/PhysRev.118.838}}{\textit{High-energy
  behavior in quantum field theory}},
  \href{\detokenize{http://dx.doi.org/10.1103/PhysRev.118.838}}{\emph{Phys.
  Rev.} \textbf{118} (May, 1960) }pp.~838--849.

\bibitem{Weinzierl:SymbolicExpansion}
S.~Weinzierl,
  \href{\detokenize{http://dx.doi.org/10.1016/S0010-4655(02)00261-8}}{\textit{Symbolic
  expansion of transcendental functions}},
  \href{\detokenize{http://dx.doi.org/10.1016/S0010-4655(02)00261-8}}{\emph{Comput.
  Phys. Commun.} \textbf{145} (June, 2002) }pp.~357--370
  [\href{http://xxx.lanl.gov/abs/math-ph/0201011}{\texttt{math-ph/0201011}}].

\bibitem{Weinzierl:HalfIntegerExpansion}
S.~Weinzierl,
  \href{\detokenize{http://dx.doi.org/10.1063/1.1758319}}{\textit{Expansion
  around half integer values, binomial sums and inverse binomial sums}},
  \href{\detokenize{http://dx.doi.org/10.1063/1.1758319}}{\emph{J. Math. Phys.}
  \textbf{45} (2004) }pp.~2656--2673
  [\href{http://xxx.lanl.gov/abs/hep-ph/0402131}{\texttt{hep-ph/0402131}}].

\bibitem{West:GraphTheory}
D.~B. West, {\emph{Introduction to graph theory}}.
\newblock Prentice Hall, Upper Saddle River, NJ, second~ed., Sept., 2000.

\bibitem{Whitney:UniqueDual}
H.~Whitney,
  \href{\detokenize{http://dx.doi.org/10.2307/2371086}}{\textit{Congruent
  graphs and the connectivity of graphs}},
  \href{\detokenize{http://dx.doi.org/10.2307/2371086}}{\emph{American Journal
  of Mathematics} \textbf{54} (Jan., 1932) }pp.~150--168.

\bibitem{Yeats:SomeInterpretations}
K.~A. Yeats, \href{\detokenize{http://xxx.lanl.gov/abs/1302.0080}}{\textit{Some
  combinatorial interpretations in perturbative quantum field theory}},
  \href{\detokenize{http://xxx.lanl.gov/abs/1302.0080}}{\emph{Proceedings of
  the Periods and Motives workshop, Madrid 2012} }(Feb., 2013)
  [\href{http://xxx.lanl.gov/abs/1302.0080}{\texttt{arXiv:1302.0080}}]. to
  appear with the AMS.

\bibitem{Zagier:MZVApplications}
D.~Zagier,
  \href{\detokenize{http://dx.doi.org/10.1007/978-3-0348-9112-7_23}}{\textit{{Values
  of Zeta Functions and Their Applications}}},  in \emph{First European
  Congress of Mathematics Paris, July 6--10, 1992} (A.~Joseph, F.~Mignot,
  F.~Murat, B.~Prum and R.~Rentschler, eds.), vol.~120 of \emph{Progress in
  Mathematics}, pp.~497--512.
\newblock Birkh{\"a}user Basel, 1994.

\bibitem{Zagier:Dilogarithm}
D.~Zagier,
  \href{\detokenize{http://dx.doi.org/10.1007/978-3-540-30308-4_1}}{\textit{{The
  Dilogarithm Function}}},  in \emph{Frontiers in Number Theory, Physics, and
  Geometry II} (P.~Cartier, P.~Moussa, B.~Julia and P.~Vanhove, eds.),
  pp.~3--65.
\newblock Springer Berlin Heidelberg, 2007.

\end{thebibliography}\endgroup
\nomenclature[S]{$S^{\StarSymbol}, S^{\TriangleSymbol}$}{singularities of star and triangle functions, proposition~\ref{prop:GfunStarTriangle-reduction}, page~\pageref{prop:GfunStarTriangle-reduction}}%
\nomenclature[S C]{$S^{\ForestDeltaBoxSymbol}$}{singularities of ladder box forest functions, figure~\ref{fig:cg-box-ladder}, page~\pageref{fig:cg-box-ladder}}%
\nomenclature[Phi P]{$\forestpolynom[G]{P}, \forestpolynomDual[G]{P}$}{spanning forest polynomial and its dual, equation~\eqref{eq:def-forestpolynom}, page~\pageref{eq:def-forestpolynom}}%
\nomenclature[reg]{$\WordReg{A}{B}(w)$}{shuffle regularization of a word $w$, definition~\ref{def:shuffle-regularization}, page~\pageref{def:shuffle-regularization}}%
\nomenclature[reg]{$\WordReg{}{\infty}(w)$}{regularization of a word $w$ at infinity, definition~\ref{def:reginf-word}, page~\pageref{def:reginf-word}}%
\nomenclature[Li]{$\Li_{\vec{n}}(\vec{z})$}{multiple polylogarithms, equation~\eqref{eq:def-MPL-intro}, page~\pageref{eq:def-MPL-intro}}%
\nomenclature[zeta]{$\mzv{n_1,\ldots,n_r}$}{multiple zeta value, equation~\eqref{eq:def-MZV-intro}, page~\pageref{eq:def-MZV-intro}}%
\nomenclature[ZZ]{$\ZZ{n}$}{zigzag graph with $n$ loops, figure~\ref{fig:zigzags}, page~\pageref{fig:zigzags}}%
\nomenclature[WS]{$\WS{n}$}{wheel with $n$ spokes graph, figure~\ref{fig:ws3-ladder-series}, page~\pageref{fig:ws3-ladder-series} and figure~\ref{fig:4loop-vacuum}, page~\pageref{fig:4loop-vacuum}}%
\nomenclature[Z]{$\MZV[N]$}{$\Q$-algebra spanned by MPL at $N$'th roots of unity, definition~\ref{def:MZV-N}, page~\pageref{def:MZV-N}}%
\nomenclature[BbO(S)]{$\BarIntegralsRegulars[b](S)$}{rational linear combinations of iterated integrals with singularities in $S$, page~\pageref{inline:barregulars}}%
\nomenclature[S gamma w]{$\int_{\gamma} w$}{iterated integral of the word $w$ along the path $\gamma$, equation~\eqref{eq:def-II}, page~\pageref{eq:def-II}}%
\nomenclature[C]{$\RSphere$}{Riemann sphere $\RSphere = \C \cup \set{\infty}$}%
\nomenclature[Lw z]{$\Hyper{w}(z)$}{hyperlogarithm of $z$ (indexed by a word $w$), definition~\ref{def:Hlog}, page~\pageref{def:Hlog} or equation~\eqref{eq:def-Hlog-noshuffle}, page~\pageref{eq:def-Hlog-noshuffle}}%
\nomenclature[im]{$\im$}{image of a map}%
\nomenclature[Reg z t fz]{$\displaystyle\AnaReg{z}{\tau} f(z)$}{regularized limit of the function $f(z)$, definition~\ref{def:reglim}, page~\pageref{def:reglim}}%
\nomenclature[Phi f]{$\WordTransformation{f}$}{action of $f \in \Aut(\RSphere)$ on words, equation~\eqref{eq:Moebius-transformation}, page~\pageref{eq:Moebius-transformation}}%
\nomenclature[H +-]{$\Halfplane^{\pm}$}{upper and lower half-planes of $\C$}%
\nomenclature[Im]{$\Imaginaerteil$}{imaginary part}%
\nomenclature[Re]{$\Realteil$}{real part}%
\nomenclature[O]{$\regulars(\Sigma)$}{regular functions on $\C\setminus \Sigma$, definition~\ref{def:punctured-regulars}, page~\pageref{def:punctured-regulars}}%
\nomenclature[f star g]{$f\convolution g$}{convolution product of functions on $T(\Sigma)$, equation~\eqref{eq:def-convolution-product}, page~\pageref{eq:def-convolution-product}}%
\nomenclature[gamma star eta]{$\gamma \concat \eta$}{concatenation of paths from $\gamma(0)$ to $\eta(1)$}%
\nomenclature[Delta]{$\Delta(w)$}{deconcatenation coproduct of words, equation~\eqref{eq:def-deconcatenation-coproduct}, page~\pageref{eq:def-deconcatenation-coproduct}}%
\nomenclature[Delta]{$\cop(G)$}{Connes-Kreimer coproduct of Feynman graphs, equation~\eqref{eq:def-FeynCop}, page~\pageref{eq:def-FeynCop}}%
\nomenclature[Lyn Sigma]{$\Lyndons(\Sigma)$}{Lyndon words over the alphabet $\Sigma$, equation~\eqref{eq:def-lyndon-words}, page~\pageref{eq:def-lyndon-words}}%
\nomenclature[gamma eta]{$\gamma,\eta$}{piecewise differentiable paths $[0,1]\longrightarrow \C$}%
\nomenclature[P(G)]{$\period(G)$}{the period of a graph, definition~\ref{def:period}, page~\pageref{def:period}}%
\nomenclature[omega,omega(gamma)]{$\sdd, \sdd(\gamma)$}{superficial degree of divergence of $G$ or a subgraph $\gamma$, equation~\eqref{eq:def-sdd}, page~\pageref{eq:def-sdd}}%
\nomenclature[L(a1,a2)]{$\onemaster{\EP_1}{\EP_2}$}{one-loop master integral, equation~\eqref{eq:oneloop-master}, page~\pageref{eq:oneloop-master}}%
\nomenclature[Omega]{$\Omega$}{projective volume form, equation~\eqref{eq:projective-delta-form-integrand}, page~\pageref{eq:projective-delta-form-integrand}}%
\nomenclature[I G]{$I_{G}$}{parametric integrand of the graph $G$, equation~\eqref{eq:projective-delta-form-integrand}, page~\pageref{eq:projective-delta-form-integrand}}%
\nomenclature[psi,phi]{$\psipol,\phipol$}{Symanzik polynomials, equation~\eqref{eq:graph-polynomials-combinatorial}, page~\pageref{eq:graph-polynomials-combinatorial}}%
\nomenclature[vw(G)]{$\vw(G)$}{vertex-width of $G$, definition~\ref{def:vw}, page~\pageref{def:vw}}%
\nomenclature[f w,tau k (z)]{$f_{w,\tau}^{(k)}(z)$}{expansion coefficients of $\Hyper{w}(z)$ at $t$, equation~\eqref{eq:hlog-divergences}, page~\pageref{eq:hlog-divergences}}%
\nomenclature[h1G]{$\loops{G}$}{the number of independent loops of a graph $G$, equation~\eqref{eq:loop-number}, page~\pageref{eq:loop-number}}%

\printnomenclature[1.3cm]

\end{document}